\newcounter{counter}
\newcommand{\storecounter}{\setcounter{counter}{\value{enumi}}}
\newcommand{\resumecounter}{\setcounter{enumi}{\value{counter}}}
\newcommand{\indef}[1]{\deff{#1}} 
\newcommand{\emps}[1]{\emph{#1}} 
\newcommand{\deff}[1]{{\boldmath\textbf{#1}}} 
\newenvironment{exampleslist}
    {\begin{enumerate}[leftmargin=*,label=\arabic*., ref=\arabic*]}
    {\end{enumerate}}
\newenvironment{inlinelist}
   {\begin{enumerate*}[label=\arabic*), ref=\arabic*)]}
    {\end{enumerate*}}
\newcommand{\kleene}{\simeq}
\newcommand{\x}{i}
\newcommand{\y}{j}
\newcommand{\z}{k}
\newcommand{\w}{l}
\newcommand{\oi}{j} 
\newcommand{\X}{X}
\newcommand{\Y}{Y}
\newcommand{\effpullbone}{\text{(a)}}
\newcommand{\effpullbtwo}{\text{(b)}}
\theoremstyle{plain}
\newtheorem{assumption}{Axiom}
\newtheorem{theorem}{Theorem}[chapter]
\newtheorem{proposition}[theorem]{Proposition}
\newtheorem{corollary}[theorem]{Corollary}
\newtheorem{lemma}[theorem]{Lemma}
\theoremstyle{definition}
\newtheorem{definition}[theorem]{Definition}
\newtheorem{example}[theorem]{Example}
\newtheorem{examples}[theorem]{Examples}
\newtheorem{remark}[theorem]{Remark}
\newcommand{\notetoself}[1]{}
\newcommand{\omitfornow}[1]{}
\newcommand{\omitthis}[1]{}
\newcommand{\Set}{\cat{Set}}
\newcommand{\VecSp}{\cat{Vec}}
\newcommand{\VecC}{\VecSp_\mathbb{C}}
\newcommand{\Veck}{\VecSp_\fieldk}
\newcommand{\FVeck}{\cat{FVec}_\fieldk}
\newcommand{\FHilb}{\cat{FHilb}}
\newcommand{\Hilb}{\cat{Hilb}}
\newcommand{\Rel}{\cat{Rel}}
\newcommand{\Grp}{\cat{Grp}}
\newcommand{\PFun}{\cat{PFun}}
\newcommand{\KlDT}{\cat{Class}}
\newcommand{\KlSD}{\cat{Class_{\mathsf{p}}}}
\newcommand{\KlD}{\Kleisli{D}}
\newcommand{\KlMn}{\Kleisli{\multiset_\mathbb{N}}}
\newcommand{\Class}{\cat{FClass}}
\newcommand{\Kleisli}[1]{\mathsf{Kl}(#1)}
\newcommand{\CStarop}{\cat{CStar}^\op}
\newcommand{\CStarSUop}{\cat{CStar}^\op_{\mathsf{su}}}
\newcommand{\CStarUop}{\cat{CStar}^\op_{\mathsf{u}}}
\newcommand{\FCStar}{\cat{FCStar}}
\newcommand{\vNop}{\cat{vNA}^\op}
\newcommand{\vNSUop}{\cat{vNA}^\op_{\mathsf{su}}}
\newcommand{\vNUop}{\cat{vNA}^\op_{\mathsf{u}}}
\newcommand{\Quant}[1]{\cat{Quant}_{#1}}
\newcommand{\QuantSU}{\cat{Quant}_{\mathsf{sub}}}
\newcommand{\HilbP}{\Hilb_\sim}
\newcommand{\FHilbP}{\FHilb_\sim}
\newcommand{\Mat}{\cat{Mat}} 
\newcommand{\MatS}{\Mat_S}
\newcommand{\Rep}{\cat{Rep}} 
\newcommand{\Spek}{\cat{Spek}}
\newcommand{\MSpek}{\cat{MSpek}}
\newcommand{\SpekOrMSpek}{\cat{(M)Spek}}
\newcommand{\IV}{IV}
\newcommand{\GSet}{\cat{G}\text{-}\Set}
\newcommand{\VecP}{{\VecSp{}}_{\sim}}
\newcommand{\FVecP}{\cat{FVec}_{\sim}}
\newcommand{\VecProj}{\cat{Proj}_k} 
\newcommand{\catG}{\cat{G}}
\newcommand{\CMon}{\cat{CMon}} 
\newcommand{\DCM}{\cat{DCM}} 
\newcommand{\PCM}{\cat{PCM}}
\newcommand{\PCMD}{\cat{Par}} 
\newcommand{\CMonD}{\cat{Tot}}
\newcommand{\CPM}{\ensuremath{\mathsf{CPM}}\xspace}
\newcommand{\Split}[1]{\ensuremath{\mathrm{Split}}\xspace}
\newcommand{\cat}[1]{\ensuremath{\mathbf{#1}}\xspace}
\newcommand{\catJ}{\cat{J}}
\newcommand{\catD}{\cat{D}}
\newcommand{\catA}{\cat{A}}
\newcommand{\catC}{\cat{C}}
\newcommand{\catB}{\cat{B}}
\newcommand{\ClassDet}{\mathsf{ClassDet}}
\newcommand{\ClassProb}{\mathsf{ClassProb}}
\newcommand{\MatOT}{\mathsf{Mat}} 
\newcommand{\QuantOT}{\mathsf{Quant}}
\newcommand{\FinCStarOTC}{\mathsf{FCStar}} 
\newcommand{\CStarOTC}{\mathsf{CStar}} 
\newcommand{\CStarOT}{\mathsf{CStar}} 
\newcommand{\RelOT}{\mathsf{Rel}}
\newcommand{\Kl}{\cat{Kl}} 
\newcommand{\OTCat}{\cat{OT}}
\newcommand{\OTRep}{\OTCat^+}
\newcommand{\OTCatOrRep}{\OTCat^{(+)}}
\newcommand{\OTRepStrict}{\OTRep_{\mathsf{strict}}}
\newcommand{\OpCats}{\cat{OCat}} 
\newcommand{\OpCatsComp}{\OpCats_{\mathsf{comp}}} 
\newcommand{\TestCats}{\cat{TestCat}} 
\newcommand{\PTest}{\cat{PTest}} 
\newcommand{\Test}{\cat{Test}} 
\newcommand{\pos}{\mathrm{pos}} 
\newcommand{\causal}{\mathsf{caus}}
\newcommand{\subcausal}{\mathsf{sc}}
\newcommand{\pure}{\mathsf{pure}}
\newcommand{\prepure}{\mathsf{p}} 
\newcommand{\sa}{\mathsf{sa}} 
\newcommand{\Rplus}{\mathbb{R}^+} 
\newcommand{\Rpos}{\Rplus} 
\newcommand{\Qpos}{\mathbb{Q}^+} 
\newcommand{\dkernel}{kernel} 
\newcommand{\id}[1]{\ensuremath{\mathrm{id}_{#1}}}
\newcommand{\Aut}{\mathrm{Aut}} 
\newcommand{\op}{\ensuremath{\mathrm{\rm op}}}
\newcommand{\Multicat}{\mathsf{M}} 
\newcommand{\mcat}[1]{\cat{#1}} 
\newcommand{\partialcomp}{\hspace{3.5pt} \partialsymbolcomp \hspace{2.5pt}} 
\newcommand{\tikzcdpartialright}[1]{\arrow{r}[description, inner sep = 0]{\partialsymbollargearrow}[above]{{#1}}}
\newcommand{\partialsmallarrow}[1][1.2pt]{ 
\begin{tikzpicture}[baseline=-0.58ex]
\draw[black, fill=white,radius={#1}] (0,0) circle ; 
\draw[black, fill=white,radius=0.1pt] (0,0) circle ; 
\end{tikzpicture}
}
\newcommand{\circlearrow}[1]{\ensuremath{\mathrlap{\hspace{2.5pt}\to}{\hspace{6pt}{#1}\hspace{6pt}}}}
\newcommand{\partialto}{\circlearrow{\partialsmallarrow[2.5pt]}}
\newcommand{\partialsymbolcomp}{ 
\begin{tikzpicture}[baseline=-0.5ex]
\draw[black, fill=white,radius=3.6pt] (0,0) circle ; 
\draw[black, fill=white,radius=0.6pt] (0,0) circle ; 
\end{tikzpicture}
}
\newcommand{\partialsymbollargearrow}{ 
\begin{tikzpicture}[baseline=-0.6pt]
\draw[black, fill=white,radius=3pt] (0,0) circle ; 
\draw[black, fill=white,radius=0.2pt] (0,0) circle ; 
\end{tikzpicture}
}
\newcommand{\discard}[1]{\ensuremath{\tinygroundnew_{#1}}}
\newcommand{\discardflip}[1]{\ensuremath{\tinygroundflipnew_{#1}}}
\newcommand{\ts}[1]{(#1)} 
\DeclareFontFamily{U}{mathux}{\hyphenchar\font45}
\DeclareFontShape{U}{mathux}{m}{n}{
      <5> <6> <7> <8> <9> <10>
      <10.95> <12> <14.4> <17.28> <20.74> <24.88>
      mathux10
      }{}
\DeclareSymbolFont{mathux}{U}{mathux}{m}{n}
\DeclareMathSymbol{\bigovee}{1}{mathux}{"8F}
\DeclareMathSymbol{\bigperp}{1}{mathux}{"4E}
\DeclareFontFamily{U}{mathb}{\hyphenchar\font45}
\DeclareFontShape{U}{mathb}{m}{n}{
      <5> <6> <7> <8> <9> <10> gen * mathb
      <10.95> mathb10 <12> <14.4> <17.28> <20.74> <24.88> mathb12
      }{}
\DeclareSymbolFont{mathb}{U}{mathb}{m}{n}
\DeclareMathSymbol{\mylgroup}{\mathbin}{mathb}{'160}
\DeclareMathSymbol{\myrgroup}{\mathbin}{mathb}{'161}
\newcommand{\totaspar}[1]{ \mylgroup {#1} \myrgroup } 
\newcommand{\supp}{\mathsf{supp}}
\newcommand{\dring}[1]{D(#1)}
\newcommand{\sem}[1]{\ensuremath{\llbracket #1 \rrbracket}}
\newcommand{\pfun}{\relto} 
\newcommand{\fieldk}{k} 
\DeclareMathOperator{\Tr}{Tr}
\newcommand{\Dbl}[1]{\widehat{#1}}
\newcommand{\bra}[1]{\ensuremath{\langle #1 |}}
\newcommand{\ket}[1]{\ensuremath{| #1 \rangle}}
\newcommand{\Eff}{\text{Eff}}
\newcommand{\isomto}{\xrightarrow{\sim}}
\newcommand{\To}{\mathbb{T}}
\newcommand{\Ro}{R} 
\newcommand{\multiset}{\mathcal{M}}
\newcommand{\Ker}{\mathrm{Ker}}
\newcommand{\Coker}{\mathrm{Coker}}
\newcommand{\img}{\mathrm{im}}
\newcommand{\Img}{\mathrm{Im}}
\newcommand{\CoIm}{\mathrm{Coim}}
\newcommand{\coim}{\mathrm{coim}}
\newcommand{\coker}{\mathrm{coker}}
\newcommand{\DKer}{\mathrm{DKer}} 
\newcommand{\enc}{E}
\newcommand{\dec}{D}
\newcommand{\idob}{F}
\newcommand{\ext}[1]{\mdil{#1}}
\newcommand{\Ext}[2]{\MDil{#2}}
\newcommand{\MDil}[1]{M({#1})} 
\newcommand{\mdil}[1]{\mathrm{min}(#1)}
\newcommand{\med}{\nu} 
\newcommand{\coproj}{\kappa}
\newcommand{\biprod}{\oplus}
\newcommand{\pinit}{0}
\newcommand{\pcoprod}{\mathbin{\dot{+}}}
\newcommand{\pprod}{\mathbin{\dot{\times}}}
\newcommand{\pbiprod}{\mathbin{\dot{\oplus}}}
\newcommand{\pcoproj}{\coproj}
\newcommand{\pproj}{\pi}
\newcommand{\simP}{\sim} 
\newcommand{\quotP}{\mathbb{P}} 
\newcommand{\quot}[2]{{#1} \!/\! {#2}}
\newcommand{\Gen}{I} 
\newcommand{\TrI}{\mathbb{T}} 
\newcommand{\plusI}[1]{\mathsf{GP}({#1})}
\newcommand{\plusIdag}[1]{\mathsf{GP}^{\dagger}({#1})}
\newcommand{\tens}{\hat{\otimes}}
\newcommand{\dtens}{\tens} 
\newcommand{\obb}[1]{{\bm{#1}}}
\newcommand{\aalpha}{\obb{\alpha}}
\newcommand{\llambda}{\obb{\lambda}}
\newcommand{\ssigma}{\obb{\sigma}}
\newcommand{\rrho}{\obb{\rho}}
\newcommand{\Partial}{\mathsf{Par}}
\newcommand{\Events}{\cat{Event}} 
\newcommand{\otcplus}{+} 
\newcommand{\OT}{\mathsf{OT_B}} 
\newcommand{\OTC}{\mathsf{OT}} 
\newcommand{\hilbH}{\mathcal{H}} 
\newcommand{\hilbK}{\mathcal{K}} 
\newcommand{\hilbL}{\mathcal{L}} 
\newcommand{\tc}{\sim} 
\newcommand{\tp}{\mathbb{P}} 
\newcommand{\cta}{\catA} 
\newcommand{\ctb}{\catB} 
\newcommand{\evec}{f} 
\tikzstyle{cdot}=[circle, draw=black, fill=black!25, inner sep=.4ex] 
\tikzstyle{bigdot}=[dot, inner sep=0pt]
\tikzstyle{whitedot}=[circle, draw=black, fill=white, inner sep=.4ex]
\tikzstyle{greydot}=[circle, draw=black, fill=black!25, inner sep=.4ex] 
\tikzstyle{blackdot}=[circle, draw=black, fill=black, inner sep=.4ex]
\tikzset{arrow/.style={decoration={
    markings,
    mark=at position #1 with \arrow{>[length=2pt, width=3pt]}},
    postaction=decorate},
    reverse arrow/.style={decoration={
    markings,
    mark=at position #1 with {{\arrow{<[length=2pt, width=3pt]}}}},
    postaction=decorate}
}
\newcommand\relto[1][]{\mathbin{\smash{
\begin{tikzpicture}[baseline={([yshift=-1pt]
current bounding box.south)}]
    \node (A) at (0,0) [inner xsep=0pt, inner ysep=1pt, minimum width=0.15cm] {\ensuremath{\scriptstyle #1}};
    \draw [->, line width=0.4pt, line cap=round]
        ([xshift=-2.5pt] A.south west)
        to ([xshift=3pt] A.south east);
    \draw [line width=.4pt] ([xshift=-1pt,yshift=-2pt]A.south) to ([xshift=1pt,yshift=2pt]A.south);
\end{tikzpicture}}}}
\newcommand{\tinymultflip}[1][cdot]{
\smash{\raisebox{-2pt}{\hspace{-5pt}\ensuremath{\begin{pic}[scale=0.4,yscale=1]
    \node (0) at (0,0) {};
    \node[#1, inner sep=1.5pt] (1) at (0,0.55) {};
    \node (2) at (-0.5,1) {};
    \node (3) at (0.5,1) {};
    \draw (0.center) to (1.center);
    \draw (1.center) to [out=left, in=down, out looseness=1.5] (2.center);
    \draw (1.center) to [out=right, in=down, out looseness=1.5] (3.center);
    \node[#1, inner sep=1.5pt] (1) at (0,0.55) {};
\end{pic}
}\hspace{-3pt}}}}
\newcommand{\tinycomult}[1][cdot]{
\smash{\raisebox{-2pt}{\hspace{-5pt}\ensuremath{\begin{pic}[scale=0.4,yscale=1]
    \node (0) at (0,0) {};
    \node[#1, inner sep=1.5pt] (1) at (0,0.55) {};
    \node (2) at (-0.5,1) {};
    \node (3) at (0.5,1) {};
    \draw (0.center) to (1.center);
    \draw (1.center) to [out=left, in=down, out looseness=1.5] (2.center);
    \draw (1.center) to [out=right, in=down, out looseness=1.5] (3.center);
    \node[#1, inner sep=1.5pt] (1) at (0,0.55) {};
\end{pic}
}\hspace{-3pt}}}}
\newcommand{\tinyunit}[1][cdot]{
\smash{\raisebox{3pt}{\hspace{-3pt}\ensuremath{\begin{pic}[scale=0.4,yscale=-1]
    \node (0) at (0,0) {};
    \node[#1, inner sep=1.5pt] (1) at (0,0.55) {};
    \draw (0.center) to (1.north);
\end{pic}
}\hspace{-1pt}}}}
\newcommand{\tinycounit}[1][cdot]{
\smash{\raisebox{-5pt}{\hspace{-3pt}\ensuremath{\begin{pic}[scale=0.4,yscale=1]
    \node (0) at (0,0) {};
    \node[#1, inner sep=1.5pt] (1) at (0,0.55) {};
    \draw (0.center) to (1.north);
        \node[#1, inner sep=1.5pt] (1) at (0,0.55) {};
\end{pic}
}\hspace{-1pt}}}}
\newcommand{\tinycup}{\smash{\raisebox{2pt}{\hspace{-2pt}\ensuremath{\begin{pic}[scale=0.2]
   \pgftransformscale{1.5} \draw[arrow=.6, scale = 1] (0,0) to[out=-90,in=-90,looseness=1.5] (1.5,0);
\end{pic}}}}}
\newcommand{\tinycap}{\smash{\raisebox{-3pt}{\hspace{-2pt}\ensuremath{\begin{pic}[scale=0.2, yscale=-1]
   \pgftransformscale{1.5} \draw[arrow=.6, scale = 1] (0,0) to[out=-90,in=-90,looseness=1.5] (1.5,0);
\end{pic}}}}}
\newenvironment{pic}[1][] {\begin{aligned}\begin{tikzpicture}[scale=2.0, font=\tiny,#1]}{\end{tikzpicture}\end{aligned}} 
\newif\ifvflip\pgfkeys{/tikz/vflip/.is if=vflip}
\newif\ifhflip\pgfkeys{/tikz/hflip/.is if=hflip}
\newif\ifhvflip\pgfkeys{/tikz/hvflip/.is if=hvflip}
\newenvironment{picc}[1][]
{\begin{aligned}\begin{tikzpicture}[font=\tiny,#1]}
{\end{tikzpicture}\end{aligned}}
\newlength\minimummorphismwidth
\newlength\stateheight
\newlength\minimumstatewidth
\newlength\connectheight
\tikzset{colour/.initial=white}
\tikzstyle{pure}=[line width=.7pt]
\tikzstyle{none}=[inner sep=0mm]
\tikzstyle{every loop}=[]
\tikzstyle{mark coordinate}=[inner sep=0pt,outer sep=0pt,minimum size=3pt,fill=black,circle]
\tikzset{arrow/.style={decoration={
    markings,
    mark=at position #1 with \arrow{>[length=2pt, width=3pt]}},
    postaction=decorate},
    reverse arrow/.style={decoration={
    markings,
    mark=at position #1 with {{\arrow{<[length=2pt, width=3pt]}}}},
    postaction=decorate}
}
\tikzstyle{upground}=[circuit ee IEC,thick,ground,rotate=90,scale=1.5]
\tikzstyle{downground}=[circuit ee IEC,thick,ground,rotate=-90,scale=1.5]
\newcommand{\mapminh}{5mm} 
\newcommand{\stateminh}{5mm}
\newcommand{\maplw}{0.7pt} 
\newcommand{\stateshift}{-0.2pt}
\newcommand{\effectshift}{-0.2pt}
\tikzstyle{box}=[map]
\tikzstyle{medium box}=[medium map]
\tikzstyle{dot}=[inner sep=0mm,minimum width=2mm,minimum height=2mm,draw,shape=circle]  
\tikzstyle{black dot}=[dot,fill=black]
\tikzstyle{white dot}=[dot,fill=white,,text depth=-0.2mm]
\tikzstyle{grey dot}=[dot,fill=black!25] 
\tikzstyle{corner1}=[box,fill=white, font=\footnotesize] %
\tikzstyle{corner2}=[dot,fill=white, font=\footnotesize] %
\tikzstyle{corner3}=[dot,fill=black!25, font=\footnotesize] %
\tikzstyle{corner4}=[dot,fill=black, font=\footnotesize] %
\tikzstyle{scalar}=[circle,draw,inner sep=2pt, line width=\maplw] 
\tikzset{stateshape/.style={append after command={
   \pgfextra
        \draw[sharp corners, fill=white, line width = \maplw]%
    (\tikzlastnode.west)%
    [rounded corners=0pt] |- (\tikzlastnode.north)%
    [rounded corners=0pt] -| (\tikzlastnode.east)%
    [rounded corners=5pt] |- (\tikzlastnode.south)%
    [rounded corners=5pt] -| (\tikzlastnode.west);
   \endpgfextra}}}
\tikzset{effectshape/.style={append after command={
   \pgfextra
        \draw[sharp corners, fill=white, line width = \maplw]%
    (\tikzlastnode.west)%
    [rounded corners=0pt] |- (\tikzlastnode.south)%
    [rounded corners=0pt] -| (\tikzlastnode.east)%
    [rounded corners=5pt] |- (\tikzlastnode.north)%
    [rounded corners=5pt] -| (\tikzlastnode.west);
   \endpgfextra}}}
 \tikzstyle{map}=[draw,shape=rectangle, inner sep=2pt,minimum height=\mapminh, minimum width=7mm,fill=white]
\tikzstyle{point}=[stateshape,inner sep=2pt, minimum width=6mm, minimum height=\stateminh, yshift=\stateshift]
\tikzstyle{copoint}=[effectshape,inner sep=.2pt, minimum width=6mm, minimum height=\stateminh, yshift=-\effectshift]
\tikzstyle{wide point}=[point, minimum width=12mm]
\tikzstyle{wide copoint}=[copoint, minimum width=12mm]
 \tikzstyle{map}=[draw,shape=rectangle, inner sep=2pt,minimum height=\mapminh, minimum width=7mm,fill=white, line width = \maplw]
\tikzstyle{medium map} = [map, minimum width = 12mm] 
\tikzstyle{semilarge map} = [map, minimum width = 15mm] 
\tikzstyle{large map} = [map, minimum width = 18mm] 
\tikzstyle{kpoint} =[point]
\tikzstyle{kpointadj} =[copoint]
\tikzstyle{kpointconj}=[dagpointconj] 
\newcommand{\boxshape}[3]{%
\pgfdeclareshape{#1}{
\inheritsavedanchors[from=rectangle] 
\inheritanchorborder[from=rectangle]
\inheritanchor[from=rectangle]{center}
\inheritanchor[from=rectangle]{north}
\inheritanchor[from=rectangle]{south}
\inheritanchor[from=rectangle]{west}
\inheritanchor[from=rectangle]{east}
\backgroundpath{
\southwest \pgf@xa=\pgf@x \pgf@ya=\pgf@y
\northeast \pgf@xb=\pgf@x \pgf@yb=\pgf@y

\@tempdima=#2
\@tempdimb=#3

\pgfpathmoveto{\pgfpoint{\pgf@xa - 5pt + \@tempdima}{\pgf@ya}}
\pgfpathlineto{\pgfpoint{\pgf@xa - 5pt - \@tempdima}{\pgf@yb}}
\pgfpathlineto{\pgfpoint{\pgf@xb + 5pt + \@tempdimb}{\pgf@yb}}
\pgfpathlineto{\pgfpoint{\pgf@xb + 5pt - \@tempdimb}{\pgf@ya}}
\pgfpathlineto{\pgfpoint{\pgf@xa - 5pt + \@tempdima}{\pgf@ya}}
\pgfpathclose
}
}}
\tikzstyle{cloud}=[shape=cloud,draw,minimum width=1.5cm,minimum height=1.5cm]
\tikzstyle{dagmap}=[draw,shape=NEbox,inner sep=2pt,minimum height=\mapminh,fill=white, line width = \maplw] %
\tikzstyle{dashedmap}=[draw,dashed,shape=NEbox,inner sep=2pt,minimum height=\mapminh,fill=white, line width = \maplw]
\tikzstyle{mapdag}=[draw,shape=SEbox,inner sep=2pt,minimum height=\mapminh,fill=white, line width = \maplw]
\tikzstyle{mapadj}=[draw,shape=SEbox,inner sep=2pt,minimum height=\mapminh,fill=white, line width = \maplw]
\tikzstyle{maptrans}=[draw,shape=SWbox,inner sep=2pt,minimum height=\mapminh,fill=white, line width = \maplw]
\tikzstyle{mapconj}=[draw,shape=NWbox,inner sep=2pt,minimum height=\mapminh,fill=white, line width = \maplw]
\tikzstyle{medium dagmap}=[draw,shape=NEbox,inner sep=2pt,minimum height=\mapminh,fill=white,minimum width=7mm, line width = \maplw]
\tikzstyle{semilarge dagmap}=[draw,shape=NEbox,inner sep=2pt,minimum height=\mapminh,fill=white,minimum width=9.5mm, line width = \maplw]
\tikzstyle{large dagmap}=[draw,shape=NEbox,inner sep=2pt,minimum height=\mapminh,fill=white,minimum width=12mm, line width = \maplw]
\pgfmathsetmacro{\pgf@shorten@left}{\pgfkeysvalueof{/tikz/shorten left}}
\pgfmathsetmacro{\pgf@shorten@right}{\pgfkeysvalueof{/tikz/shorten right}}
\pgfmathsetmacro{\pgf@shorten@left}{\pgfkeysvalueof{/tikz/shorten left}}
\pgfmathsetmacro{\pgf@shorten@right}{\pgfkeysvalueof{/tikz/shorten right}}
\tikzstyle{dagpoint common}=[draw,fill=white,inner sep=1pt, line width = \maplw, minimum height = 4mm, yshift=1.2pt] 
\tikzstyle{dagpoint sc}=[shape=cornerpoint,dagpoint common]
\tikzstyle{dagpoint adjoint sc}=[shape=cornercopoint,dagpoint common]
\tikzstyle{dagpoint}=[shape=cornerpoint,shorten left=4pt,dagpoint common]
\tikzstyle{dagpointadj}=[shape=cornercopoint,shorten left=5pt,dagpoint common]
\tikzstyle{dagpointconj}=[shape=cornerpoint,shorten right=5pt,dagpoint common]
\tikzstyle{dagpointtrans}=[shape=cornercopoint,shorten right=5pt,dagpoint common]
\tikzstyle{dagpointsymm}=[shape=cornerpoint,shorten left=5pt,shorten right=5pt,dagpoint common]
\tikzstyle{widedagpoint}=[dagpoint, minimum width=1 cm, inner sep=2pt]
\tikzstyle{widedagpointadj}=[dagpointadj, minimum width=1 cm, inner sep=2pt]
\tikzstyle{every picture}=[baseline=-0.25em,scale=0.5]
\tikzstyle{label}=[font=\footnotesize,text height=1ex, text depth=0.15ex]
\tikzset{
sidetriangle/.style = {regular polygon, regular polygon sides = 3, aspect = 1, shape border rotate = 90, draw, inner sep = 0, minimum width = 1.2cm}
}
\tikzset{
isoc/.style = {shape=isosceles triangle, shape border rotate = 180, isosceles triangle stretches = true, minimum width = 1.2cm, minimum height= 1.5cm, inner sep = 0.3}}
\tikzset{
coarse/.style = {shape = circle, fill = white, draw, inner sep = 0, minimum width =0.125cm}
}
\tikzset{
coarsesymbol/.style = {shape = circle, fill = white, inner sep = -0.7, minimum width = 0.125cm}
}
\tikzstyle{sidetriangle2}=[sidetriangle, minimum width = 2cm, fill=white]
\tikzstyle{sideisocsmall}]=[style=isoc, minimum width = 1cm, minimum height = 0.8cm, draw, fill=white, font=\Large]
\tikzstyle{sideisoc}]=[style=isoc, minimum width = 2cm, draw, fill=white, font=\Large]
\tikzstyle{sideisocmid}]=[style=isoc, minimum width = 2.5cm, draw, fill=white, font=\Large]
\tikzstyle{sideisocmedium}]=[style=isoc, minimum width = 3cm, draw, fill=white, font=\Large]
\newcommand{\tinygroundnew}{
\smash{
{\hspace{-3pt}
\ensuremath{
\begin{picc}[scale=1.0] 
    \node[upground, xscale=0.8, yscale=0.7] (1) at (0,0.16) {};
    \draw (0,0.03) to (0,-0.25);
\end{picc}
}\hspace{-1pt}}}}
\newcommand{\tinygroundflipnew}{
\smash{
{\hspace{-3pt}
\ensuremath{
\begin{picc}[yscale=-1.0] 
    \node[upground, xscale=0.8, yscale=-0.7] (1) at (0,0.10) {};
    \draw (0,-0.03) to (0,-0.31);
\end{picc}
}\hspace{-1pt}}}}
\title{Categorical\\[1ex] Operational Physics}    
\author{Sean Tull}             
\begin{document}
\setcounter{secnumdepth}{3}
\setcounter{tocdepth}{1}

\maketitle  

\begin{dedication}
\Large
\emph{For Lion.}
\end{dedication}
\begin{acknowledgements}
Firstly, I wish to thank my supervisors Bob Coecke and Chris Heunen. 

To Bob I am grateful for first suggesting the goal of a categorical quantum reconstruction, which motivated this entire project, for his guidance in the academic world, and for fostering the kind of lively atmosphere which leads to invitations to Barbados with two days notice. 

Chris deserves equal thanks as my main teacher of the practice of research,  and for all the time and excellent advice he has given me, remaining a constantly available source of feedback and collaboration even after moving to the distant town of Edinburgh. 

Next I wish to thank my examiners Paolo Perinotti and Dan Marsden for an enjoyable viva and for helpful feedback which has improved the presentation of this thesis.

I thank Bart Jacobs and Aleks Kissinger for immediately making me feel at home during two visits to their group in Nijmegen which were very influential on this work, and everyone there for enlightening discussions, notably Bas and Bram Westerbaan and John van de Wetering. Extra thanks goes to Bart for suggesting the copowers in Chapter 2, and to Kenta Cho for the collaborative work underpinning Chapter 3.  

Next I extend my gratitude to Paolo, once again, and Mauro D'Ariano, who warmly welcomed me on an interesting visit to Pavia University and whose work with Giulio Chiribella motivates much of this thesis.

I thank Marino Gran for hosting several enjoyable and productive visits to UCLouvain discussing purer categorical topics. I also thank my other collaborators on work not in this thesis but which has enriched my studies; Bob, Chris, John Selby, Aleks, Bas, and Pau Enrique Moliner. 

In Oxford I thank all the colleagues who've made this time so enjoyable and influenced my thinking; including Stefano Gogioso, Sam Staton for suggesting connections with multicategories, Robin Lorenz and visitor Johannes Kleiner for distracting me with talks about consciousness, and Christoph Dorn for encouraging me to see a bit of college life. I am grateful also to the EPSRC for all of their financial support.

Lastly, I wish to thank my friends and family for all of their help through the years; particularly my parents, whose support made this possible, and Jon for his help in my final year. Special thanks goes to Jon and Carol Field for so often putting up with me in their home, where some of the main results of this thesis were reached. Most importantly, I thank Lottie for being with me every step of the way, and bringing so much joy to these years.
\end{acknowledgements}

\begin{abstract} 
Many insights into the quantum world can be found by studying it from amongst more general \emph{operational theories} of physics. In this thesis, we develop an approach to the study of such theories purely in terms of the behaviour of their \emph{processes}, as described mathematically through the language of \emph{category theory}. This extends a framework for quantum processes known as \emph{categorical quantum mechanics} (CQM) due to Abramsky and Coecke.

We first consider categorical frameworks for operational theories. We introduce a notion of such theory, based on those of Chiribella, D'Ariano and Perinotti (CDP), but more general than the probabilistic ones typically considered. We establish a correspondence between these and what we call \emph{operational categories}, using features introduced by Jacobs et al.~in \emph{effectus theory}, an area of categorical logic to which we provide an operational interpretation. We then see how to pass to a broader category of \emph{super-causal} processes, allowing for the powerful diagrammatic features of CQM.

Next we study operational theories themselves. We survey numerous principles that a theory may satisfy, treating them in a basic diagrammatic setting, and relating notions from probabilistic theories, CQM and effectus theory. Particular focus is paid to the quantum-like features of \emph{purifications} and \emph{superpositions}. We provide a new description of superpositions in the category of pure quantum processes, using this to give an abstract construction of the more well-behaved category of Hilbert spaces and linear maps.

Finally, we reconstruct finite-dimensional quantum theory itself. More broadly, we give a recipe for recovering a class of generalised quantum theories, before instantiating it with operational principles inspired by an earlier reconstruction due to CDP. This reconstruction is fully categorical, not requiring the usual technical assumptions of probabilistic theories. Specialising to such theories recovers both standard quantum theory and that over real Hilbert spaces. 
\end{abstract}

\begin{romanpages}          
\tableofcontents            
\end{romanpages}

\chapter*{Introduction} \addcontentsline{toc}{chapter}{Introduction} \markboth{Introduction}{Introduction}

The state of contemporary physics is one of contradiction. Our deepest insights into nature come from quantum mechanics, yet even a century after its conception the underlying reality this theory describes remains deeply mysterious, with debates over its proper interpretation continuing to this day. 

At the same time, quantum theory provides us with experimental predictions of unprecedented accuracy, and in more recent years it has emerged that quantum systems can be incredibly \emps{useful}, allowing one to quickly perform computations that may take vastly longer using classical computers.

Together, these facts have encouraged many to take an \emps{operationalist} perspective on physical theories. In this approach, one studies a theory in terms of the \emps{operations} it allows one to perform through physical experiments, rather than any underlying reality that it may describe. Though this could be seen as a denial that any such reality exists, the operational approach may simply be taken as a practical one, allowing physics to progress in the absence of any such clear underlying picture of the world. 

Central to the operational perspective is the notion of a \emps{process} between two physical systems. Examples include the preparation of a system into a particular state, the evolution of a system over time, and the performing of measurements. The mathematical language of such composable processes is \emps{category theory}, a powerful and very general one which can also be used to study connections between different fields and ideas, and even as a foundation for mathematics~\cite{mac1978categories}. 

 Over the past decade and a half, the categorical perspective has led to a new approach to the study of physical theories purely in terms of their process-theoretic properties. Categories provide an intuitive calculus for reasoning about these processes using diagrams~\cite{selinger2011survey}, and lie at the heart of new connections emerging between the foundations of physics, quantum information, mathematics and computer science~\cite{baez1995higher,abramskycoecke:categoricalsemantics,BaezRosettaStone,abramsky2011introduction,coecke2011categories}. 

The greatest successes of the categorical method in physics so far have been in the study of quantum theory itself, and particularly its `pure' processes as captured by the now well-understood category of \emps{Hilbert spaces}~\cite{heunen2009embedding}, including the development of a high-level diagrammatic formalisation of quantum computation~\cite{coecke2011interacting,CKbook}. However, in more recent years, categorical methods relevant to the study of more general theories, including classical physics, have begun to emerge~\cite{NewDirections2014aJacobs,EffectusIntro}. The goal of this thesis is to develop such a categorical approach to the study of operational theories of physics. 

\subsection*{Categories of processes}

Let us now be a bit more precise about the kinds of theories we will be considering. The basic ingredients are physical systems and processes between them. We depict a process $f$ which takes us from a system of type $A$ to one of type $B$ as a box
\[
\scalebox{0.8}{\begin{tikzpicture}
	\begin{pgfonlayer}{nodelayer}
		\node [style=none] (0) at (0, -1) {};
		\node [style=none] (1) at (0, 1) {};
		\node [style=label] (2) at (0, -1.5) {$A$};
		\node [style=label] (3) at (0, 1.5) {$B$};
		\node [style=map] (4) at (0, -0) {$f$};
	\end{pgfonlayer}
	\begin{pgfonlayer}{edgelayer}
		\draw [style=none] (0.center) to (1.center);
	\end{pgfonlayer}
\end{tikzpicture}}
\]
Operationally, we might wish to think of a process as a piece of experimental apparatus in our laboratory. Like these, processes can be plugged together and placed alongside each other, allowing us to form \emps{circuit diagrams}\index{circuit diagram} like: 
\[
\scalebox{0.8}{\input{./figures/circuit-SMC.tikz}} 
\]
It is well-known that such a specification of processes corresponds simply to a \emps{symmetric monoidal category}, whose \emps{objects} are systems and \emps{morphisms} are the processes. The use of these diagrammatic methods in physics was pioneered by Abramsky and Coecke~\cite{abramskycoecke:categoricalsemantics} in a field of research now known as \emps{categorical quantum mechanics} (CQM)\index{CQM|see {categorical quantum mechanics}}\index{categorical quantum mechanics}. 

Since categories are very general, more we will be required in order for us to view a given category as being of an `operational' nature. A particular characteristic of the operational perspective is that, given any system, we should always have some process which simply \emps{discards} it, which we may depict as 
\[
\scalebox{0.8}{\begin{tikzpicture}
	\begin{pgfonlayer}{nodelayer}
		\node [style=none] (0) at (0, -1) {};
		\node [style=label] (1) at (0, -1.5) {$A$};
		\node [style=upground] (2) at (0, 0.25) {};
	\end{pgfonlayer}
	\begin{pgfonlayer}{edgelayer}
		\draw [style=none] (0.center) to (2);
	\end{pgfonlayer}
\end{tikzpicture}}
\]
Such symmetric monoidal \emps{categories with discarding} provide a very general framework for reasoning about operational procedures, and will be the basic setting throughout this work. 

Examples include quantum theory, in which morphisms are given by so-called \emps{completely positive maps} between Hilbert spaces, as well as classical probabilistic or possibilistic physics, and even more exotic theories such as \emps{Spekkens toy model}~\cite{spekkens2007evidence,coecke2012spekkens}. In Chapter~\ref{chap:CatsWDiscarding} we introduce this categorical framework more formally and provide numerous such examples.

\subsection*{Tests and operational theories}

Along with the structure of processes, there are further features which are typically included in notions of operational theories. At a basic level, the only way in which we may actually interact with systems in such a theory is through experimental \emps{tests} or measurements. Such a procedure takes a given system and returns one of a range of possible \emps{outcomes}, which the experimenter then records, perhaps by reading the value of a pointer on some device:
\[
\scalebox{0.6}{\input{./figures/box2.tikz}}
\]
Each possible outcome corresponds to the occurrence of a particular physical process or \emps{event}, so that a test such as the above is given simply by an indexed collection of events from $A$ to $B$. Imagining that an experimenter should be free to choose which test to perform next based on outcomes of earlier experiments, however, quickly leads one to realise that tests should more generally take the form 
 \begin{equation*} 
\left(
\scalebox{0.8}{\begin{tikzpicture}
	\begin{pgfonlayer}{nodelayer}
		\node [style=none] (0) at (0, -1) {};
		\node [style=none] (1) at (0, 1) {};
		\node [style=label] (2) at (0, -1.5) {$A$};
		\node [style=label] (3) at (0, 1.5) {$B_i$};
		\node [style=map] (4) at (0, -0) {$f_i$};
	\end{pgfonlayer}
	\begin{pgfonlayer}{edgelayer}
		\draw [style=none] (0.center) to (1.center);
	\end{pgfonlayer}
\end{tikzpicture}}\!\!\right)^n_{i = 1}
\end{equation*}
allowing for varying output systems (though this is not always standard, see e.g.~\cite[p.12-13]{chiribella2010purification}). 

Tests should satisfy some basic rules reflecting our interpretation; for example that like processes we should be allowed to place them side-by-side to form new ones. Moreover, given any test, we may also imagine an experimenter choosing to not care which out of two (or more) of its events, say $f$ and $g$, occur, thus merging them into a new \emps{coarse-grained} event denoted
\[
\scalebox{0.8}{\input{./figures/coarse-grain-merge.tikz}}
\]  
One may then define an \emps{operational theory} to be a collection of events, given by a symmetric monoidal category with discarding, along with a specification of tests and such a partially defined addition $\ovee$, satisfying suitable axioms. Examples include quantum theory, in which tests are given by so-called \emps{quantum instruments}~\cite{nielsen2010quantum}, as well as classical and possibilistic theories.

Now, the typical approach in physics is to only consider \emps{probabilistic} such theories, which come with extra structure explicitly relating tests to probabilistic experiments, along with technical assumptions ensuring that the processes of any given type generate a finite-dimensional real vector space~\cite{chiribella2010purification,Barrett2007InfoGPTs}. In this thesis we will not use these assumptions, showing that operational theories may in fact be studied in a fully categorical manner, much in the spirit of CQM. 

As a first step, it is useful to know that the full structure of an operational theory may in fact be studied in terms of the properties of a single category. This may be done by considering its \emps{partial tests}, i.e.~subsets of tests, which form a category with discarding in a straightforward manner. 

In doing so we gain the ability to represent the features of tests, their outcomes and coarse-graining all using categorical features called \emps{coproducts} $A + B$. In particular, any (partial) test may now be represented as a single morphism of the form 
\[
\scalebox{0.8}{\input{./figures/process_test_extra.tikz}}
\]
Conversely, any suitable category with coproducts in fact defines a whole operational theory in this way. 

The use of these features comes from a categorical formalism for classical, probabilistic and quantum computation known as \emps{effectus theory}~\cite{EffectusIntro}, which gains a new operational interpretation from this perspective. The two categorical formalisms we have mentioned can be compared in terms of their main features as follows. 
\[
\begin{tabular}{c | c c c}
 & \multicolumn{3}{c}{Main Feature Description} \\ 
Formalism & Categorical & Logical & Operational  \\ 
\hline
CQM  & $\otimes$ & And & Parallel Processes \\
Effectus Theory & $+$ & Or & Tests 
\end{tabular}
\]
In Chapter~\ref{chap:OpCategories} we properly define operational theories and study their correspondence with certain categories with coproducts which we call \emps{operational categories}, along with connections to effectus theory.

\subsection*{Beyond sub-causal processes}

From first principles we have seen how a physical theory may be described by a category coming with a partial addition $\ovee$ on its morphisms. 
The fact that $\ovee$ is typically only partially defined relates to the assumption that every morphism $f$ belongs to a test, and so is \emps{sub-causal} meaning that 
\[
\scalebox{0.8}{\input{./figures/sub-causal.tikz}}
\]
for some process $e$. For example in quantum theory the only maps with a direct interpretation, satisfying the above, are those which are trace non-increasing.

However, it is often much easier to instead work with a \emps{total} addition operation $f + g$ on morphisms. To do so, we must consider more general \emps{super-causal} processes. In Chapter~\ref{chap:totalisation} we present a general construction, which given any category $\catC$ with a suitable partial addition operation, constructs a new one $\To(\catC)$ with a total addition, its \emps{totalisation}, within which $\catC$ sits as the sub-category of sub-causal morphisms. This construction can be seen to connect the effectus and CQM formalisms, which typically study sub-causal and super-causal processes respectively.

Working with super-causal processes also allows us to consider powerful extra diagrammatic features which are central to the CQM approach; most notably that our category is \emps{dagger-compact}~\cite{abramskycoecke:categoricalsemantics,Selinger2007139}. In diagrams, this means that we may `flip pictures upside-down', made visible through the use of pointed boxes, and also `bend wires' to exchange inputs and outputs of our morphisms, and so produce diagrams like
\[
\scalebox{0.8}{\input{./figures/compact-circuit.tikz}}
\]
In Chapter~\ref{chap:totalisation} we introduce and study the $\To(\catC)$ construction, before recalling these extra diagrammatic features.

\subsection*{Principles for operational theories}

A major benefit of the study of generalised physical theories is the ability they provide to isolate particular physical principles, and examine their consequences. Several surprising aspects of the quantum world, such as the famous \emps{no-cloning theorem}, have been found to in fact hold in all non-classical probabilistic theories~\cite{Barnum2007NoBroadcast}, while others such as \emps{quantum teleportation} have been found to be more special~\cite{barnum2012teleportation}.

For example, a principle which has been shown to lead to many quantum-like features in the setting of probabilistic theories is the ability to write every process in terms of those which are `maximally informative' in the following sense~\cite{chiribella2010purification}. We call a morphism $f$ \emps{pure} when any \emps{dilation} of it is trivial:
\[
\scalebox{0.8}{\input{./figures/wholei-intro.tikz}}
\text{ for some $\rho$ with }
\quad 
\scalebox{0.8}{\input{./figures/causal-rho.tikz}}
\]
and we say that \emps{purification} holds when every morphism has a dilation which is pure. 
Quantum theory has particularly well-behaved purifications given by the \emps{Stinespring dilation} of any completely positive map. 

In contrast, the following principle is much more general, holding in both the quantum and classical settings. Firstly, many categories come with \emps{zero morphisms}, special morphisms $0 \colon A \to B$ with which every morphism composes to give $0$. Such a category then has \emps{kernels} when every morphism $f$ comes with another $\ker(f)$, satisfying
 \[
\scalebox{0.8}{\input{./figures/kernel-intro1.tikz}}
\ 0 \ \ 
\implies
\
(\exists ! h) 
\
\scalebox{0.8}{\input{./figures/kernel-intro2.tikz}}
 \]
The existence of certain such kernels in fact captures the essential structure of subspaces found in classical and quantum theory, as historically treated in the field of \emps{quantum logic}~\cite{heunen2010quantum}. 

Many principles, such as purification, have typically only been studied in the context of probabilistic theories, while others such as kernels only appear in specific categorical settings. In Chapter~\ref{chap:principles} we study a range of principles for operational theories, seeing that they may in fact be treated in the very general setting of symmetric monoidal categories with discarding. In doing so we find close relations between features that have arisen in the frameworks of probabilistic theories, categorical quantum mechanics and effectus theory. 

\subsection*{Superpositions and phases}

In order to move our attention away from general theories and towards quantum theory itself, we will require an account of arguably its most characteristic feature; the ability to form \emps{superpositions} of pure processes. The most famous example is of course Schr\"odinger's cat, which exists in a superposition of the pure states 
\[
\scalebox{0.8}{\input{./figures/schro.tikz}}
\]

In fact there is already a well-known categorical description of superpositions; abstractly, they are given by an addition operation on morphisms in the category of Hilbert spaces and linear maps. In turn this arises from the existence of \emps{biproducts} $\hilbH \biprod \hilbK$ in this category, which are given concretely by the direct sum of Hilbert spaces~\cite{Selinger2007139}. Indeed states of such a direct sum are precisely superpositions of states of $\hilbH$ with those of $\hilbK$. 

However, there is a problem. Pure quantum processes are not simply given by linear maps between Hilbert spaces, since physically we must identify any two such maps whenever they are equal up to some \emps{global phase}  $e^{i \theta}$, for real-valued $\theta$.

In fact, in the category of pure quantum processes $\hilbH \biprod \hilbK$ is no longer a biproduct. Nonetheless, it has similar properties which we are able to capture using the new notion of a \emps{phased biproduct}, or more general \emps{phased coproduct} $A \pcoprod B$ in a category. These resemble coproducts, but come with extra isomorphisms called \emps{phases}. In quantum theory their presence reflects the fact that we may equally have replaced the state of Schr\"odinger's cat with any one of the form 
\[
\scalebox{0.8}{\input{./figures/schro-phase.tikz}}
\]

In Chapter~\ref{chap:superpositions} we introduce and study phased coproducts, showing that from any suitable category $\catC$ with them we may construct a new one $\plusI{\catC}$ with coproducts from which it arises by quotienting out some `global phases' as above. In particular this lets us recover the category of Hilbert spaces and linear maps from that of pure quantum processes. 

\subsection*{Reconstructing quantum theory}

The primary motivation for the study of operational theories has always been to find new understandings of the quantum world. Just a short time after giving the first precise formulation of quantum theory in the language of Hilbert spaces~\cite{von1955mathematical}, von Neumann himself expressed his dissatisfaction with this formalism~\cite{redei1996john}, and since then there have been many attempts to \emps{reconstruct} the full apparatus of the theory from instead more basic operational statements about experimental procedures. 

Early results were given in terms of quantum logic~\cite{birkhoff1975logic,piron1976foundations,soler1995characterization}, and various versions of the `convex probabilities' framework pursued by Mackey, Ludwig and many others~\cite{mackey1963mathematical,ludwig1985foundations,gudder1999convex,foulis1981empirical,davies1970operational}. Unfortunately, each of these results relied on some technicalities which could not be said to be fully operational. 

The birth of quantum information led to a renewed interest in these questions and, after a proposal by Fuchs~\cite{fuchs2002quantum}, a goal to understand quantum theory in terms of information-theoretic principles. The first form of such a reconstruction of finite-dimensional quantum theory was provided by Hardy~\cite{Hardy2001QTFrom5}, and the first entirely operational reconstruction by Chiribella, D'Ariano and Perinotti~\cite{PhysRevA.84.012311InfoDerivQT}, using purification as its primary principle. Along with these other such reconstructions have been presented in various frameworks~\cite{clifton2003characterizing,wilce2009four,d2010probabilistic,Hardy2011a,fuchs2011quantum,masanes2011derivation,wilceRoyal,hohn2017quantum,selby2018reconstructing,van2018reconstruction}.

However, these reconstructions all typically rely on the standard technical assumptions of probabilistic theories. We may wonder whether these features are integral to the process of recovering quantum theory, or whether instead a purely process-theoretic reconstruction is possible.

In Chapter~\ref{chap:recons} we provide such a categorical reconstruction of quantum theory. We show that any suitable category with discarding which is non-trivial and:
\begin{itemize}  \setlength\itemsep{0em}
\item is dagger-compact;
\item has essentially unique purifications;
\item has kernels;
\end{itemize}
and whose \emps{scalars} satisfy a basic \emps{boundedness} property is in fact equivalent to that of a generalised quantum theory $\Quant{S}$ over a certain ring $S$. When our scalars have an extra feature - the presence of square roots - we find that $S$ resembles either the real or complex numbers. Specialising to probabilistic theories we then immediately obtain either standard quantum theory or more unusually that over \emps{real Hilbert spaces}. 

Recovering quantum theory in this manner provides us with a new elementary axiomatization of the theory which will hopefully be of use in the formalisation of quantum computation, thanks to the many established uses of categories from across computer science~\cite{abramsky2011introduction}. More speculatively, it suggests that future theories of physics may be formulated in a manner which takes processes as their most fundamental ingredients.

\section*{Prerequisites}

Throughout we will assume a very basic knowledge of category theory, though we aim to introduce all key definitions for our purposes, including simple notions such as coproducts. For later reference, some standard ones we will use are as follows. 

In any category a morphism $f \colon A \to B$ is \indef{monic} \index{monomorphism|see {monic}} when $f \circ g =  f \circ h \implies g = h$, \indef{epic} \index{epic} \index{epimorphism|see {epic}} when $g \circ f = h \circ f \implies g = h$, and an \indef{isomorphism}\index{isomorphism} when there exists a morphism $f^{-1}$ with $f \circ f^{-1} = \id{B}$ and $f^{-1} \circ f = \id{A}$. The appropriate notion of mapping $F \colon \catC \to \catD$ between categories is that of a \indef{functor}\index{functor}, and between these is that of a \indef{natural transformation}\index{natural transformation}. 

A pair of functors $F \colon \catC \to \catD$ and $G \colon \catD \to \catC$ form an \indef{equivalence}\index{equivalence} of categories $\catC \simeq \catD$\label{not:equivalence} when there are natural isomorphisms  $G \circ F \simeq \id{\catC}$ and $F \circ G \simeq \id{\catD}$, and an \indef{isomorphism}\index{isomorphism!of categories} when these are strict equalities. Assuming choice, an equivalence may also be given simply by a functor $F \colon \catC \to \catD$ which is is  \indef{full}\index{full} (every $g \colon F(A) \to F(B)$ has $g = F(f)$ for some $f \colon A \to B$), \indef{faithful}\index{faithful} ($F(f) = F(g) \implies f = g$), and has that every object of $\catD$ is  isomorphic to one of the form $F(A)$. By an \indef{embedding}\index{embedding} we will simply mean a faithful functor. Occasionally we will also mention the concept of an \indef{adjunction}\index{adjunction} between categories. 

The standard text on category theory is~\cite{mac1978categories}, while friendlier introductions are given by~\cite{abramsky2011introduction,leinster2014basic} and the physicist-targeted~\cite{coecke2011categories}. 

\paragraph{Statement of originality}
All work here is my own, unless otherwise stated. The results of Section~\ref{sec:totalisation} are in collaboration with Kenta Cho. This thesis is based on the papers~\cite{mainpaper}, \cite{superpos}, \cite{catreconstruction} and new material. During my DPhil I also co-authored the articles~\cite{CatsOfRelations,ppqkd,11709,EPTCS266.25,coecke2017two,connectors}.


\chapter{Categories of Processes} \label{chap:CatsWDiscarding}

In the process-theoretic approach to physics, we imagine a physical theory simply as a specification of certain \emps{systems}\index{system} and \emps{processes}\index{process} that may occur between them. A general process may be depicted
\[
\scalebox{0.8}{\begin{tikzpicture}
	\begin{pgfonlayer}{nodelayer}
		\node [style=none] (0) at (4, 1) {};
		\node [style=map] (1) at (4, -0) {$f$};
		\node [style=label] (2) at (4, -1.5) {$A$};
		\node [style=label] (3) at (4, 1.5) {$B$};
		\node [style=none] (4) at (4, -1) {};
	\end{pgfonlayer}
	\begin{pgfonlayer}{edgelayer}
		\draw [style=none] (4.center) to (0.center);
	\end{pgfonlayer}
\end{tikzpicture}}
\] 
and thought of as a physical occurrence which transforms a system of type $A$ into one of type $B$. Given another process taking as input the system $B$ we should be able to compose them to form a new process
\[
\scalebox{0.8}{\input{./figures/process-composition.tikz}}
\]
which we typically interpret as `$f$ occurs, and then $g$ occurs'.The formal structure capturing this notion of composable processes is the following. Recall that a \indef{category} \index{category} $\catC$ \label{not:category} consists of:
\begin{itemize} 
\item \label{not:ob}
 a collection of \indef{objects} \index{object} $A, B, C \dots$;
\item \label{not:morphism}
for each pair of objects $A, B$
a collection $\catC(A,B)$ of  \index{morphism}\indef{morphisms} $f \colon A \to B$; 
\end{itemize}
along with a rule $\circ$ for composing any pair of morphisms $f \colon A \to B$, $g \colon B \to C$ to give a morphism $g \circ f \colon A \to C$. Some basic axioms are also satisfied; composition is associative, with $(h \circ g) \circ f = h \circ (g \circ f)$, and every object comes with an \indef{identity morphism} \index{identity morphism} $\id{A} \colon A \to A$ satisfying $\id{B} \circ f = f = f \circ \id{A}$ for all $f \colon A \to B$. 

Along with the notation $f \colon A \to B$, morphisms may be drawn just like our processes above, with identities and composition depicted
\[
\scalebox{0.8}{\input{./figures/identity-morphism.tikz}}
\qquad \qquad 
\scalebox{0.8}{\input{./figures/composition-morphism.tikz}}
\]
so that the identity and associativity rules become trivial diagrammatically, e.g.~for associativity we have
\[
\scalebox{0.8}{\input{./figures/associativity-morphism.tikz}}
\]
When interpreting a category physically, it is natural to assume we also have a `spatial' composition $A, B \mapsto A \otimes B$, $f, g \mapsto f \otimes g$ allowing us to place objects (systems) and morphisms (processes) `side-by-side' in diagrams:
\[
\scalebox{0.8}{\input{./figures/tensor-morphism.tikz}}
\]
We also often wish to consider processes with `no input'. This is expressed by having some object $I$ interpreted as `nothing', and depicted by the empty diagram:
\[
\scalebox{0.8}{\input{./figures/trivial-object-stuff.tikz}}
\]
As is well-known, these features are captured by the following extra structure on a category. Recall that a \indef{monoidal category} \index{monoidal category} $(\catC, \otimes)$\label{not:moncat} is a category $\catC$ together with
\begin{itemize}
\item 
 a functor $\otimes \colon \catC \times \catC \to \catC$;
 \item  \label{not:unitobject}
 a distinguished object $I$ called the \indef{unit object}\index{unit object};
 \item \label{not:coherenceiso}
 natural \indef{coherence isomorphisms}\index{coherence isomorphisms}
\[
\begin{tikzcd}
(A \otimes B) \otimes C 
\rar{\alpha_{A,B,C}}[swap]{\sim}
& A \otimes (B \otimes C)
\end{tikzcd}
\qquad
\begin{tikzcd}
I \otimes A 
\rar{\lambda_A}[swap]{\sim}
& 
A 
&
A \otimes I \lar[swap]{\rho_A}[swap]{\sim}
\end{tikzcd}
\]
\end{itemize}
satisfying some equations~\cite{coecke2011categories}. 

The diagrammatic notation above in fact forms a precise \emps{graphical calculus} \index{graphical calculus} for reasoning about monoidal categories~\cite{selinger2011survey}, allowing one in practice to avoid the technicalities of the coherence isomorphisms, and making many facts about monoidal categories immediately apparent. 

In any monoidal category, we call morphisms $\rho \colon I \to A$, $e \colon A \to I$ and $s \colon I \to I$ \indef{states, effects} \index{state}\index{effect}\index{scalar} and \indef{scalars} respectively. Since (the identity on) $I$ is given by an empty picture, these are respectively depicted as:
\[
\scalebox{0.8}{\input{./figures/state-effect-scalar.tikz}}
\]

The scalars $s \colon I \to I$ in any monoidal category form a commutative monoid under composition. This is surprising from the formal definition of a monoidal category, but immediate from the graphical calculus since we have:
\[
\scalebox{0.8}{\input{./figures/scalar-commutative.tikz}}
\]
They also allow us to define a scalar multiplication $f \mapsto s \cdot f$ on morphisms by 
\[
\scalebox{0.8}{\input{./figures/scalar-mult.tikz}}
\]
We may have alternatively chosen to multiply by scalars on the other side. However, in categories arising from physical theories the order in which we compose via $\otimes$ is typically unimportant, thanks to the following extra structure. 

Recall that a \indef{symmetric} monoidal category \index{symmetric monoidal category} is one coming with a natural `swap' isomorphism $\sigma_{A, B} \colon A \otimes B \simeq B \otimes A$ \label{not:swap}
satisfying $\sigma_{B,A} \circ \sigma_{A, B} = \id{A \otimes B}$, along with some coherence equations. We depict $\sigma$ by crossing wires, so that naturality and this equation become:
\[
\scalebox{0.8}{\input{./figures/swap-nat.tikz}}
\qquad
\qquad
\scalebox{0.8}{\input{./figures/swap-symmetry2.tikz}}
\]

\subsection*{Categories with discarding}

In this work our focus will be on categories with an interpretation as operational processes one may perform within some domain of physics; such categories have also been called \emps{process theories}~\cite{CKbook,selbythesis}. A distinguishing feature of this operational setting is the ability that any agent should have to simply \emps{discard} or `ignore' a sub-system which is no longer of interest. This leads to the following central notion of this thesis.

\begin{definition} \label{def:cat-with-discarding}
A \deff{category with discarding} \index{category!with discarding}is a category $\catC$ with a distinguished object $I$ and a chosen morphism $\discard{A} \colon A \to I$ for each object $A$, with $\discard{I} = \id{I}$. \label{not:discarding}
A \deff{monoidal category with discarding} is one for which $\catC$ is monoidal, with $I$ being the monoidal unit, and such that 
\begin{equation*} 
\scalebox{0.8}{\input{./figures/discard_axiom1.tikz}}
\end{equation*}
for all objects $A, B$.
\end{definition}

The presence of discarding reflects the perspective of an experimenter who may choose to only examine a smaller part of a larger process or system, as opposed to that of the underlying physics of the world which is typically taken to be reversible and so lack any such notion of discarding a system. We capture this idea of restricting to smaller parts of processes by saying that a morphism $f$ is a \indef{marginal}\index{marginal} of another morphism $g$ when 
\[
\scalebox{0.8}{\input{./figures/dilation.tikz}}
\]
and in this case we refer to $g$ as a \indef{dilation}\index{dilation} of $f$.

The existence of a unique way to discard a system has also been found to be closely related to notions of \emps{causality} in a physical theory~\cite[p.~10]{chiribella2010purification}~\cite{coecke2013causal,coecke2014terminality}, leading to the following definition. 

\begin{definition}~\cite{coecke2015categorical}
 \label{def:causal}
In any category with discarding, a morphism $f \colon A \to B$ is called \deff{causal}\index{causality} when it satisfies 
\[
\scalebox{0.8}{\input{./figures/causal.tikz}}
\]
\end{definition}

Intuitively, if $f$ is a causal process it should have no influence on earlier processes and so make no difference whether we first discard our system or first perform $f$ and then discard its output. 

\begin{lemma} \label{lem:coherence-isoms-causal}
Let $\catC$ be a (symmetric) monoidal category with discarding. Then all coherence isomorphisms $\alpha, \lambda, \rho$, $\sigma$ are causal, and the collection of causal morphisms forms a monoidal subcategory $\catC_{\causal}$. \label{not:causalsubcat}
\end{lemma}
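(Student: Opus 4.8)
The plan is to reduce everything to the defining equation $\discard{B} \circ f = \discard{A}$ of causality, used together with the two parts of the discarding axiom, namely $\discard{A \otimes B} = \discard{A} \otimes \discard{B}$ and $\discard{I} = \id{I}$.

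First I would treat the coherence isomorphisms. The cleanest route is via soundness of the graphical calculus: each of $\alpha_{A,B,C}$, $\lambda_A$, $\rho_A$ and $\sigma_{A,B}$ is drawn using wires only, merely rearranging them (and, for $\sigma$, crossing two of them), while the monoidal discarding axiom says that the discarding map on any tensor product of objects is drawn as the parallel composite of the discarding maps on its factors. Hence for $f$ any one of these isomorphisms, both sides of its causality equation are depicted as ``discard every wire of the source'', so they agree and $f$ is causal. If one prefers a non-diagrammatic argument, causality of $\lambda_A$, say, follows from naturality of $\lambda$ at $\discard{A} \colon A \to I$ together with $\lambda_I = \rho_I$ and the discarding axiom, and similarly for $\rho$, $\alpha$ and $\sigma$; the diagrammatic argument just makes all of these the same triviality.

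Next I would verify that the causal morphisms are closed under the categorical and monoidal structure. Identities are causal since $\discard{A} \circ \id{A} = \discard{A}$, and if $f \colon A \to B$ and $g \colon B \to C$ are causal then $\discard{C} \circ (g \circ f) = (\discard{C} \circ g) \circ f = \discard{B} \circ f = \discard{A}$, so $g \circ f$ is causal; thus $\catC_{\causal}$ is a subcategory. For monoidality, the unit object of $\catC_{\causal}$ is $I$, with $\id{I} = \discard{I}$ causal, the coherence isomorphisms are causal by the previous step, and if $f \colon A \to B$ and $g \colon C \to D$ are causal then, by functoriality of $\otimes$ and the discarding axiom,
\[
\begin{aligned}
\discard{B \otimes D} \circ (f \otimes g)
&= (\discard{B} \otimes \discard{D}) \circ (f \otimes g) \\
&= (\discard{B} \circ f) \otimes (\discard{D} \circ g)
= \discard{A} \otimes \discard{C}
= \discard{A \otimes C},
\end{aligned}
\]
so $f \otimes g$ is causal. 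Hence $\otimes$ restricts to $\catC_{\causal}$ and makes it a monoidal subcategory, which is symmetric whenever $\catC$ is. There is essentially no real obstacle here: the only point needing care is the handling of the coherence isomorphisms, where one must either invoke soundness of the graphical calculus or unwind the relevant naturality squares by hand; every other step is a one-line chase from the causality equation and the discarding axiom.
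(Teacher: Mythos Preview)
Your proposal is correct and follows essentially the same approach as the paper: closure under identities, composition, and $\otimes$ is handled by exactly the chase you give, and the coherence isomorphisms are dealt with via naturality (the paper spells this out for $\rho_A$ and then remarks that $\alpha$, $\lambda$ are similar). Your alternative appeal to soundness of the graphical calculus is a legitimate shortcut that the paper does not take explicitly, but the underlying content is the same.
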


\begin{proof}
Clearly all identities are causal, and if $f, g$ are then so is $g \circ f$. The coherence isomorphisms $\rho_A$ are all causal by naturality since 
\[
\scalebox{0.8}{\input{./figures/coher-causal.tikz}}
\]
Simple naturality argument show that the $\alpha_{A,B,C}$ and $\lambda_A$ are all causal also. 
Finally, whenever $f \colon A \to C$ and $g \colon B \to D$ are causal then so is $f \otimes g$, since: 
\[
\scalebox{0.8}{\input{./figures/causal-arg-elem.tikz}}
\]
\end{proof}

\section{Examples} \label{sec:examples}

Let's now meet our main examples of symmetric monoidal categories both with and without discarding.

\subsection*{Deterministic classical physics}
\begin{exampleslist}
\item \label{ex:SetAndPFun}
There is a category $\Set$\label{cat:set} whose objects are sets $A, B, C \dots$ and morphisms are functions $f \colon A \to B$. This forms the causal subcategory of the symmetric monoidal category with discarding $\PFun$\label{cat:pfun} whose morphisms are now partial functions $f \colon A \pfun B$ between sets. The monoidal structure is given by the Cartesian product $A \times B$ of sets and (partial) functions, with the unit object being the singleton set $I = 1 = \{ \star\}$. 

In this category the scalars may be seen as simply $0$ and $1$. Effects on an object $A$ are found to correspond to subsets $B \subseteq A$, while a state of $A$ is either empty or corresponds to a unique element $a \in A$. Discarding is given by the unique  function $\discard{A} \colon A \to \{\star\}$, so that a morphism is causal precisely when it is total, i.e.~belongs to $\Set$.
\storecounter
\end{exampleslist}

\subsection*{Algebraic examples}

\begin{exampleslist}
\resumecounter

\item \label{ex:com_monoid}
Any commutative monoid $(M, \cdot)$ forms a symmetric monoidal category with one object $\star$ in which morphisms are elements $m \in M$, with $\circ$ and $\otimes$ being multiplication in $M$. Here every morphism is a scalar.

\item \label{ex:MatS}
Let $S$ be a semi-ring (a `ring without subtraction') which is commutative. There is a symmetric monoidal category $\Mat_S$\label{cat:Mats} whose objects are natural numbers $n \in \mathbb{N}$ and morphisms $M \colon n \to m$ are $m \times n$ matrices $M_{i,j}$ with elements in $S$. Such a matrix composes with another $N \colon m \to k$ by standard matrix multiplication
\[
(M \circ N)_{i,k} = \sum^m_{j=1} N_{j,k} \cdot M_{i,j}
\]
using multiplication and addition in the semi-ring $S$. The identity morphism on $n$ is the $n \times n$ matrix with $1$ as each diagonal entry and $0$ elsewhere. The monoidal product $\otimes$ is given on objects by $n \otimes m = n \times m$ and on morphisms by the usual Kronecker product of matrices
\[
M \otimes N = 
\begin{pmatrix}
a_{11} \cdot N & \dots & a_{1m} \cdot N \\
\vdots & \ddots & \vdots \\ 
a_{n1} \cdot N & \dots & a_{nm} \cdot N
\end{pmatrix}
\]
with $I = 1$. The scalars in $\MatS$ correspond to elements $s \in S$, while states and effects on $n$ are $n$-tuples of elements of $S$, seen as column and row vectors respectively. $\MatS$ has a choice of discarding given by $\discard{n} = (1, \dots 1) \colon n \to 1$, so that a matrix $M$ is causal whenever each of its columns sum to $1$. 
\storecounter
\end{exampleslist}

\subsection*{Classical probability theory}

\begin{exampleslist} 
\resumecounter 
\item \label{ex:KLD}
In the category $\KlDT$\label{cat:KlDT} the objects are sets and morphisms $f \colon A \to B$ are functions sending each element $a \in A$ to a finite `distribution' over elements of $B$ with values in the positive real numbers $\Rplus := \{r \in \mathbb{R} \mid r \geq 0 \}$. That is, they are functions $f \colon A \times B \to \Rplus$ for which $f(a,b)$ is non-zero only for only finitely many values of $B$, for each $a \in A$. 

Alternatively, we may view such morphisms $A \to B$ as `$A \times B$ matrices', in which each `column' has finitely many non-zero entries. The composition of $f \colon A \to B$ and $g \colon B \to C$ is then that of matrices
\[
(g \circ f)(a)(c) = \sum_{b \in B} f(a)(b) \cdot g(b)(c)
\] 
This category is symmetric monoidal with $I = \{\star\}$, $A \otimes B = A \times B$ and $f \otimes g$ defined as for the Kronecker product of matrices. The scalars here are given by the `unnormalised probabilities' $\Rplus$. $\KlDT$ has discarding given by the unique map $\discard{A} \colon A \to \{\star\}$ with $\discard{A}(a)(\star) = 1$ for all $a \in A$. Then a morphism $f$ is causal precisely when it sends each element $a \in A$ to a probability distribution, i.e.~for all $a \in A$ we have
\[
\sum_{b \in B} f(a)(b) = 1
\]
In particular, causal states of an object $A$ are simply finite probability distributions over $A$. 
More broadly, at an operational level we are often interested in the sub-category $\KlSD$\label{cat:KlsD} of morphisms $f \colon A \to B$ which send each element to a finite \emps{sub-distribution}, i.e.~ for all $a \in A$
\[
\sum_{b \in B} f(a)(b) \leq 1
\]
In $\KlSD$ the scalars are then probabilities $p \in [0,1]$, and an effect on an object $A$ simply assigns a probability $e(a)$ to each element $a \in A$.
Abstractly we may describe $\KlDT$ and $\KlSD$ as \emps{Kleisli categories}, of the \emps{$\Rpos$-multiset} and \emps{sub-distribution} \emps{monad} respectively~\cite{EffectusIntro}. 
More generally, for continuous probability we can consider the Kleisli category $\Kl(G)$ of the Giry monad $G$ on measure spaces~\cite{MeasSpaces2013Jac,NewDirections2014aJacobs}.

\item \label{ex:Class}
Restricting the above example to finite sets is equivalent to considering the category $\Class := \Mat_{\mathbb{R}^+}$\label{cat:class}, a special case of Example~\ref{ex:MatS}. The scalars here are given by $\mathbb{R}^+$, and causal morphisms are precisely (transposed) Stochastic matrices. 
\storecounter
\end{exampleslist}

\subsection*{Quantum theory}

\begin{exampleslist}
\resumecounter 
\item \label{ex:Hilb}
In the symmetric monoidal category $\Hilb$\label{cat:hilb} objects are complex Hilbert spaces $\hilbH, \hilbK \dots$ and morphisms are bounded linear maps $f \colon \hilbH \to \hilbK$. The monoidal structure is given by the usual tensor product $\hilbH \otimes \hilbK$ of Hilbert spaces, with unit object $I = \mathbb{C}$. Then states $\omega$ of an object $\hilbH$ correspond to elements $\psi \in \hilbH$ by taking $\psi = \omega(1)$, and so by taking adjoints so do effects. In particular the scalars are given by $\mathbb{C}$. 

We write $\FHilb$\label{cat:fhilb} for the full subcategory given by restricting to finite-dimensional Hilbert spaces. Both categories can be seen to describe `pure' quantum theory, which thanks to the \emps{no-deleting} theorem~\cite{pati2000impossibility} comes with no canonical choice of discarding. 
\storecounter
\end{exampleslist}
We may extend this example to include discarding and so describe more general quantum operations as follows.
\begin{exampleslist}
\resumecounter
\item \label{ex:Quant}
In the symmetric monoidal category $\Quant{}$\label{cat:quant}, objects are finite-dimensional complex Hilbert spaces and morphisms $\hilbH \to \hilbK$ are \emps{completely positive} \index{completely positive map} linear maps $f \colon B(\hilbH) \to B(\hilbK)$ between their spaces of operators.
 The monoidal structure $\otimes$ is the usual one for such maps, inherited from that of Hilbert spaces, again with $I=\mathbb{C}$. Scalars $\mathbb{C} \to \mathbb{C}$ now correspond to elements $r \in \Rpos$. By Gleason's Theorem, states and effects on an object $\hilbH$ now correspond to unnormalised density matrices $\rho \in B(\hilbH)$. 

 This category has a canonical choice of discarding with $\discard{\hilbH}$ being the map sending each $a \in B(\hilbH)$ to its trace $\Tr(a) \in \mathbb{C}$. Then a morphism $f$ is causal whenever it is trace-preserving as a completely positive map, and causal states are simply density matrices in the usual sense.

From an operational perspective we are often interested in the subcategory $\QuantSU$\label{cat:quantsu} of trace \emps{non-increasing} completely positive maps, in which the scalars are probabilities $p \in [0,1]$.

There is a functor $\FHilb \to \Quant{}$ which sends each linear map $f \colon \hilbH \to \hilbK$ to the induced \emps{Kraus map} \index{Kraus map}
\[
\Dbl{f} := f \circ (-) \circ f^\dagger \colon B(\hilbH) \to B(\hilbK)
\]
Any two linear maps $f, g$ induce the same such map whenever they are equal up to \emps{global phase}, i.e.~when $f = e^{i \theta} \cdot g$ for some $\theta \in [0,2\pi)$. Hence the subcategory of all such Kraus maps is equivalent to the category $\FHilbP$\label{cat:fhilbP} of equivalence classes $[f]$ of morphisms in $\FHilb$ under equality up to global phase. More broadly we define $\HilbP$\label{cat:hilbP} to be the category of equivalence classes $[f]$ of maps in $\Hilb$ up to global phase, in the same way.

\item \label{ex:CStarVNalg}
Extending our previous example to infinite dimensions, and unifying it with our classical examples, we may consider the category $\CStarop$\label{cat:cstarop} of unital complex C*-algebras, where morphisms $A \to B$ are completely positive linear maps $f \colon B \to A$. Note that we work in the opposite category, with maps going the other way to morphisms. 

There are several different tensors available for (infinite-dimensional) operator algebras; we will take as $\otimes$ the so-called minimal tensor product of C*-algebras. Here $I=\mathbb{C}$, so that scalars are given by elements of $\Rpos$. States on an object $A$ correspond to those $\omega \colon A \to \mathbb{C}$ on the algebra in the usual sense, while effects are positive elements $e \in A$. Discarding $\discard{A}$ is given by the unique completely positive map $\mathbb{C} \to A$ sending $1$ to $1_A$. Then a morphism $A \to B$ is causal whenever its corresponding completely positive map $f \colon B \to A$ is \emps{unital}, with $f(1_B) = 1_A$. More generally the maps with a direct operational interpretation are those which are \emps{sub-unital}, with $f(1_B) \leq 1_A$, forming the subcategory $\CStarSUop$\label{cat:cstarsu}.

When working in finite dimensions one often simply takes morphisms to go in the same direction as maps; we write $\FCStar$\label{cat:FCStar} for the category of finite-dimensional C*-algebras with morphisms $A \to B$ being completely positive maps $f \colon A \to B$. This is symmetric monoidal just as for $\CStarop$. Every finite-dimensional C*-algebra comes with a trace, so that $\discard{A} \colon A \to \mathbb{C}$. here is given by $a \mapsto \Tr(a)$. There is an embedding $\FCStar \hookrightarrow \CStarop$ sending trace non-increasing maps to sub-unital ones.

$\CStarop$ contains a version of classical probabilistic theory given by restricting to the full subcategory of all commutative C*-algebras, with $\Class$ equivalent to the respective subcategory of $\FCStar$.

To model quantum theory we can alternatively restrict to those algebras given by the bounded operators $B(\hilbH)$ of some Hilbert space $\hilbH$. In particular this gives an embedding $\Quant{} \hookrightarrow \FCStar$.


\item 

A particularly well-behaved class of C*-algebras are those which are \emps{von Neumann algebras}\index{von Neumann algebra}. We write $\vNop$\label{cat:vnop} for the (opposite of) the subcategory of $\CStarop$ given by all von Neumann algebras and \emps{normal} completely positive maps between them, as studied in depth in~\cite{EffectusIntro}. We are also often interested in its subcategory $\vNSUop$\label{cat:vnsu} of sub-unital morphisms. 
\storecounter
\end{exampleslist}

Our main examples of categories with discarding so far are either \emps{deterministic}, with scalars $\{0,1\}$, or more generally \emps{probabilistic}, with scalars belonging to $\Rpos$. It is common in the foundations of physics to work only with such \emps{general probabilistic theories}\index{general probabilistic theory}, and to make some extra assumptions. The first, \emps{tomography}\index{tomography}, ensures that morphisms are determined entirely by the probabilities they produce:
\[
\left(
\scalebox{0.8}{\input{./figures/tomography-1.tikz}}
\forall \omega, e 
\right)
\implies
\scalebox{0.8}{\input{./figures/tomography-2.tikz}}
\]
This in turn ensures that maps of any given type generate a real vector space (up to some size issues)~\cite{EPTCS172.1}. Secondly, tomography is assumed to be \emps{finite}, meaning that  this space is finite-dimensional.

In this thesis we will not make any of these assumptions, aiming to work in a purely process-theoretic manner. In particular this allows us to consider more general theories whose scalars are not given by probabilities, such as the following.

\subsection*{Possibilistic examples}

\begin{exampleslist} 
\resumecounter 
\item \label{ex:Rel}
There is a category $\Rel$\label{cat:rel} whose objects are sets and whose morphisms $R \colon A \to B$ are relations $R \subseteq A \times B$. Composition of $R \colon A \to B$ and $S \colon B \to C$ is given by 
\[
S \circ R = \sem{(a,c) \mid \exists b \text{ such that } (a,b) \in R \text{ and } (b,c) \in S} \subseteq A \times C
\]
Here $\otimes$ is given by the Cartesian product, with $I$ being the singleton set $\{\star\}$. The scalars are the Booleans $\mathbb{B} := \{\bot,\top\}$, with states and effects on an object $A$ each corresponding to subsets of $A$. There is a canonical choice of discarding given by the relation $\discard{A} \colon A \to \{ \star\}$ relating every $a \in A$ with $\star$. Then a relation $R \colon A \to B$ is causal when it relates every element of $A$ to some element of $B$.  

\item \label{ex:RelC}
The previous example can be greatly generalised. For any category $\catC$ which is \emps{regular}~\cite{bourn2004regular} we may similarly define a symmetric monoidal category with discarding $\Rel(\catC)$\label{cat:relC} of \emps{internal relations} \index{relation} in $\catC$ in the same way. 

For some examples, $\Rel$ is the special case where $\catC = \Set{}$. Taking $\catC$ to be the category $\VecSp_k$ of vector spaces over a field $k$ gives the category $\Rel(\VecSp_k)$ of \emps{linear relations} over $k$, i.e.~subspaces $R \leq V \times W$. Setting $\catC$ instead to be the category $\Grp$ of groups leads to relations which are subgroups $R \leq G \times H$.

The author explored $\Rel(\catC)$ with Chris Heunen in~\cite{CatsOfRelations}, and with Marino Gran also in~\cite{connectors}, applying its diagrammatic features to topics in categorical algebra. 

More generally still, any such category $\Rel(\catC)$ is a special case of a \emps{bicategory of relations} in the sense of Carboni and Walters~\cite{carboni1987cartesian}.

\item  \label{ex:Spek}
A physically interesting possibilistic example somewhere in-between $\Rel$ and quantum theory is provided by \emps{Spekkens toy model}~\cite{spekkens2007evidence} \index{Spekkens toy model}. Spekkens originally presented the theory in terms of its states, which are subsets of sets of the form $\IV^n$, where $\IV= \{1, 2, 3, 4\}$, obeying the so-called `knowledge balance principle'. The theory was then given an inductive categorical definition in~\cite{coecke2012spekkens,edwards2009non}. 

We write $\Spek$\label{cat:spek} for the smallest symmetric monoidal subcategory of $\Rel$ closed under $\circ, \otimes$, identities, swap maps and relational converse, and containing the objects $I = \{ \star \}$ and $\IV= \{1, 2, 3, 4\}$, all permutations $\IV \to \IV$, and the relations
\[
\scalebox{0.8}{\begin{tikzpicture}
	\begin{pgfonlayer}{nodelayer}
		\node [style=white dot] (0) at (0, -0.5) {};
		\node [style=none] (1) at (0, 0.5) {};
		\node [style=none] (2) at (0, 0.5) {};
		\node [style=none] (3) at (0, 1) {$IV$};
	\end{pgfonlayer}
	\begin{pgfonlayer}{edgelayer}
		\draw [style=none] (0) to (1.center);
	\end{pgfonlayer}
\end{tikzpicture}} :: \star \mapsto 1, 3
\qquad 
\qquad
\scalebox{0.8}{\input{./figures/spekkens-copy.tikz}} :: 
\begin{array}{ccc}
1 & \mapsto & (1,1), (2,2)\\ 
2 & \mapsto & (1,2), (2,1) \\ 
3 & \mapsto & (3,3), (4,4) \\ 
4 & \mapsto & (3,4), (4,3)
\end{array}
\]

$\Spek$ contains many similar features to $\FHilb$, closely resembling \emps{stabilizer quantum mechanics}~\cite{pusey2012stabilizer,backens2016complete}. In the original paper~\cite{spekkens2007evidence} (which uses only functional relations as morphisms) quantum features such as steering and teleportation are studied in the theory. It may be extended to a category with discarding $\MSpek$\label{cat:mspek}~\cite{coecke2012spekkens}, defined to be the smallest monoidal subcategory of $\Rel$ closed under relational converse and containing $\Spek$ as well as the discarding morphisms from $\Rel$.

\storecounter
\end{exampleslist}

\subsection*{Morphisms of categories with discarding} 

At times we will also consider mappings between categories. By a \indef{morphism} \index{morphism!of categories with discarding} $F \colon (\catC, \discard{}) \to (\catD, \discard{})$ of categories with discarding we mean a functor $F \colon \catC \to \catD$ which \indef{preserves discarding} in that $\discard{F(I)}$ is an isomorphism 
and $F(\discard{A})$ is causal for all objects $A$. When $\catC$ and $\catD$ are (symmetric) monoidal with discarding we moreover require $F$ to be a strong (symmetric) monoidal functor and that its structure isomorphism $I \simeq F(I)$ is causal; from this it follows that those isomorphisms $F(A) \otimes F(B) \simeq F(A \otimes B)$ will be causal also, similarly to Lemma~\ref{lem:coherence-isoms-causal}. 

In either case a morphism $F$ is an \indef{equivalence} \index{equivalence!of categories with discarding} $\catC \simeq \catD$ when it is full and faithful, and every object of $\catD$ is causally isomorphic to one of the form $F(A)$.

\chapter{Operational Theories and Categories}  \label{chap:OpCategories}

Aside from the categorical structure of processes, there are other features which are typically included as basic components of an operational theory of physics. Most notably, such a theory should also describe multiple-outcome experimental procedures or \emps{tests} which we may perform on our systems, along with the outcome data obtained from these experiments.

A framework combining these features with the categorical approach is found in the notion of an `operational-probabilistic theory' due to Chiribella, D'Ariano and Perinotti~\cite{chiribella2010purification}. Such a theory is given by a (strict) symmetric monoidal category of processes,  along with additional structure specifying which processes form admissible tests, modelling the use of experimental outcome data, and allowing one to assign probabilities to these outcomes.

In this chapter, we introduce a similar general notion of such an \emps{operational theory} of physics. We then see how such theories may in fact be presented entirely categorically, simply through the properties a single category which we call an \emps{operational category}. This provides categorical descriptions of all of the main features of operational-probabilistic theories, such as the ability to form convex combinations of physical events, and allows us to extend these notions beyond the probabilistic setting.

In fact the categorical features we will use are not themselves new, being based on \emps{effectus theory}, an area of categorical logic developed by Jacobs and collaborators for the study of classical, probabilistic and quantum computation~\cite{NewDirections2014aJacobs,EffectusIntro}. We will see a correspondence between basic properties of a theory and its associated category, in particular providing effectus theory with an operational interpretation. 

\section{Operational Theories} \label{sec:optheories}

\subsection{Basic operational theories}

Let us begin by introducing a basic framework for what may be described as an operational theory of physics. As outlined in Chapter~\ref{chap:CatsWDiscarding}, we will start with a symmetric monoidal category, whose objects here we call \indef{systems}\index{system} and morphisms $f \colon A \to B$ we call \indef{events}\index{event}. As we have seen this means that events may be composed to form circuit diagrams like 
\[
\scalebox{0.8}{\input{./figures/circuit-SMC3.tikz}}
\]

\paragraph{Tests}

On top of this category, an operational theory concerns experimental procedures which we call \indef{tests}\index{test}. Formally, a test is given by a finite non-empty collection\label{not:test}
\begin{equation} \label{eq:test-display} 
\big(
\begin{tikzcd}
A \rar{{f_\x}} & B
\end{tikzcd}
\big)_{\x \in \X}
\end{equation}
of events of the same type. Such a test is to be thought of as an operation we may perform on a system of type $A$, leaving us with a system of type $B$, with finitely many possible \indef{outcomes}\index{test!outcome} indexed by the non-empty set $\X$. On any run of the test precisely one event $f_\x$ will occur, with the outcome $\x$ then recorded. 

Our theory will specify which finite collections $\ts{f_\x \colon A \to B}_{\x \in \X}$ form admissible tests. More generally we call a finite non-empty collection $\ts{f_\x}_{\x \in \X}$ a \indef{partial test}\index{partial test} when it forms a sub-collection of a test $\ts{f_\y}_{\y \in \Y}$, with $\X \subseteq \Y$. We require some basic properties of tests.

\begin{assumption} \label{assump:tests}
 Tests satisfy the following:
\begin{itemize}
\item 
every event belongs to some test;
\item 
tests are closed under relabellings of outcomes; 
\item 
whenever $\ts{f_\x}_{\x \in \X}$ and $\ts{g_\y}_{\y \in \Y}$ are tests, so is
\[
\left(
\scalebox{0.8}{\input{./figures/test-tensor.tikz}}
\right)_{\x \in \X, \y \in \Y}
\]
\end{itemize}
\end{assumption}
The latter assumption states that, like events, we may place tests `side-by-side' to form new ones. Another way we may expect to form new tests is by using outcome data from earlier ones as input, which we capture as follows. 

\begin{assumption}[\textbf{Basic Control}] \label{assump:control} \index{control!basic}
Let $\ts{f_\x \colon A \to B}_{\x \in \X}$ be a test and, for each of its outcomes $\x$, let $\ts{g({\x},{\y}) \colon B \to C}_{\y \in \Y_\x}$ be a test. Then the following is a test:
\[
\big(
\begin{tikzcd}
A \rar{f_\x} & B \rar{g(\x,\y)} & C
\end{tikzcd}
\big)_{\x \in \X, \y \in \Y_\x}
\]
\end{assumption}

We refer to the above as a \indef{controlled test}, interpreting it as performing the test $\ts{f_\x}_{\x \in \X}$ and then depending on the outcome $\x \in \X$ choosing which test $g(\x,-)$ to perform next. This axiom appears as an optional assumption in the framework of~\cite{PhysRevA.84.012311InfoDerivQT}, which allows for theories without any simple causal structure and hence any such straightforward notion of conditioning. 

\paragraph{Coarse-graining} \index{coarse-graining}
A second way in which an agent should be able to make use of the outcome data from a test is simply to discard it, thus `merging' several of its events. Call a collection of events of the same type $\ts{f_\x \colon A \to B}_{\x \in \X}$ \indef{compatible} \index{event!compatible events}when they form a partial test. An operational theory should come with a rule for merging any compatible pair of events $f, g \colon A \to B$ into a \indef{coarse-grained} event $f \ovee g \colon A \to B$, which we interpret as `either $f$ or $g$ occurs' \label{not:coarse-grain}. The partial operation $\ovee$ should fulfill some basic rules to match this interpretation.

\begin{assumption} \label{assump:cg}
The operation $\ovee$ satisfies the following. 
\begin{itemize}
\item if $\ts{f,g, h_1, \dots h_n}$ is a test, $f \ovee g$ is defined and $\ts{f \ovee g, h_1, \dots h_n}$ is a test;

\item $f \ovee g = g \ovee f$ for all compatible $\ts{f, g}$;

\item $(f \ovee g) \ovee h = f \ovee (g \ovee h)
$ for all compatible $\ts{f, g, h}$;
\item 
for all compatible $\ts{g,h}$ and events $f, k$ we have
\begin{align*}
f \circ (g \ovee h) &= (f \circ g) \ovee (f \circ h) \\
(g \ovee h) \circ k &= (g \circ k) \ovee (h \circ k)   \\ 
f \otimes (g \ovee h) &= (f \otimes g) \ovee (f \otimes h)
\end{align*}
\end{itemize}
\end{assumption}

Each of the above requirements has a straightforward operational interpretation. For example, the first of the final three equations above states that the events `either $g$ or $h$, then $f$' and `either $g$ then $f$, or $h$ then $f$' coincide. Note that both sides of the equations above are indeed well-defined thanks to our assumptions about tests. These properties allows us to define the coarse-graining of any non-empty compatible collection of events by 
\[
\bigovee^n_{i = 1} f_i := f_1 \ovee (f_2 \ovee (\ldots \ovee f_n))
\] 
It will also be helpful to assume the existence of units $0 \colon A \to B$ for coarse-graining, which we think of as the unique \indef{impossible event} \index{event!impossible} between any two systems. Recall that a category has \indef{zero morphisms} \index{zero morphism} \index{zero arrow|see {zero morphism}} \label{not:zeromorphism} when it has a (necessarily unique) family of morphisms $0 = 0_{A,B} \colon A \to B$ satisfying $0 \circ f = 0 = g \circ 0$ for all morphisms $f, g$, and in the monoidal setting we also similarly require  $f \otimes 0 = 0 = 0 \otimes g$.  

\begin{assumption} \label{assump:zeroes}
 The category of events has zero morphisms. Moreover a tuple
$\ts{f_1, \dots f_n}$ forms a test iff $\ts{f_1, \dots, f_n, 0}$ does also, and we have $f \ovee 0 = f$ for all events $f$.
\end{assumption}

Finally we will require the operational ability to discard systems as well as outcome data. The presence of such discarding maps will also allow us to specify tests in terms of partial tests. 

\begin{assumption}[\textbf{Causality}] \label{assump:causality} \index{causality}
The category of events has discarding, and a partial test $\ts{f_\x}_{\x \in \X}$ is a test precisely when it satisfies
\begin{equation} \label{eq:test-causal}
\bigovee_{\x \in \X} \discard{} \circ f_\x = \discard{}
\end{equation}
\end{assumption}

Intuitively, a test should be a partial test which always returns some outcome, as a whole being causal in our earlier sense. Note that in particular the above tells us that $\discard{A}$ is the unique effect on any system which forms a test on its own. As remarked in Chapter~\ref{chap:CatsWDiscarding}, this is indeed closely related to notions of causality in probabilistic theories~\cite{PhysRevA.84.012311InfoDerivQT}.

\begin{definition}
A \deff{basic operational theory} \index{operational theory!basic}$\Theta$ consists of a symmetric monoidal category $\Events_{\Theta}$ \label{not:catevents} with discarding, a choice of tests, and coarse-graining operations $\ovee$ satisfying Axioms 1-\ref{assump:causality}.  
\label{not:catEvents}
\end{definition}

\begin{remark}
Alternatively, one may instead define such a theory in terms of partial tests and coarse-graining, then defining tests as those satisfying~\eqref{eq:test-causal}. However we view tests as a more primitive notion so have used them as our starting point. 
\end{remark}

Many of our motivating examples of operational theories will be \emps{probabilistic}, \index{operational theory!probabilistic} here meaning that their scalars are given by probabilities $p \in [0,1]$, with $p \ovee q := p + q$ being defined whenever this value is $\leq 1$. This is assumed in frameworks such as~\cite{chiribella2010purification}. 

More generally scalars in a theory behave much like probabilities, forming a commutative monoid with a similar partial addition $\ovee$. For example, we may call a test consisting of scalars $\ts{p_i \colon I \to I}^n_{i=1}$ a \emps{distribution} \index{distribution}, by analogy with finite probability distributions. Given any collection of $n$ events $f_i \colon A \to B$ we may then consider their \emps{convex combination} \index{convex combination}
\[
\bigovee^n_{i=1}
\quad
\left(
\scalebox{0.8}{\input{./figures/convex-comb.tikz}}
\right)
\]
which is well-defined thanks to the control axiom. One may go on to define many typical notions from the study of probabilistic theories such as `completely mixed' states, reasoning much like in~\cite{chiribella2010purification}.

\subsection{Extending the notion of test}

So far we have taken the common approach of defining tests as collections of events of the same type $\ts{f_\x \colon A \to B}_{\x \in \X}$, as in e.g.~\cite{chiribella2010purification,gogioso2017categorical}. However, there are standard operational procedures which cannot immediately be described in this manner (typically requiring extra structure to do so \cite[Remark, p.12-13]{chiribella2010purification}). 

For example consider an agent who first performs such a test and then, depending on the outcome $\x$, chooses between performing one of several tests having different output systems $C_\x$. A simple case would be, conditioned on the outcome of a coin flip, preparing some state $\omega$ of a system $A$ or $\rho$ of another system $B$:
\[
\Big(
	\begin{tikzcd}
	I \rar{\text{`heads'}} & I \rar{\omega} & A
	\end{tikzcd}
	,
	\begin{tikzcd}
	I \rar{\text{`tails'}} & I \rar{\rho} & B
	\end{tikzcd}
\Big)
\]
To account for such procedures, we must allow tests to have the general form\label{not:test2}
\begin{equation} \label{eq:new-test-form}
\big(
\begin{tikzcd}
A \rar{{f_\x}} & B_\x
\end{tikzcd}
\big)_{\x \in \X}
\end{equation}
for finite sets $X$, now with varying output systems.

Operational theories of this new sort may be defined just as previously. As before, such a theory specifies a category of events, certain collections of which form tests or partial tests. We now include the empty collection as a partial test of any given type. 

\label{not:coarse-grain2}
Coarse-graining $f \ovee g$ should still only be defined on events of the same type $f, g \colon A \to B$ which belong to some test, whose other events may have different types. More generally a collection of events of the same type $\ts{f_i \colon A \to B}^n_{i=1}$ are again called \indef{compatible} \index{compatible events} when they form a partial test, and their coarse-graining will be definable as before, with that of the empty partial test now set to $0$. To include the procedures discussed above we now require a stronger control axiom. 

\begin{assumption}[Control] \label{assump:strong-control} \index{control}
Let $\ts{f_\x \colon A \to B_\x}_{\x \in \X}$ be a test and, for each of its outcomes $\x$, let $\ts{g(\x,\y) \colon B \to C_{\x,\y}}_{\y \in  \Y_\x}$ be a test. Then the following is a test:
\[
\big(
\begin{tikzcd}
A \rar{f_\x} & B_\x \rar{g({\x,\y})} & C_{\x,\y}
\end{tikzcd}
\big)_{\x \in \X, \y \in \Y_\x}
\]
\end{assumption}
The rest of our earlier axioms were carefully worded to apply immediately to theories of this new form, which we refer to simply as follows.

\begin{definition} \label{def:OTfull}
An \deff{operational theory} \index{operational theory}\index{operational theory!proper} $\Theta$ is given by a symmetric monoidal category with discarding $\Events_{\Theta}$ along with a specification of tests of the form of~\eqref{eq:new-test-form}, and operations $\ovee$ satisfying Axioms~\ref{assump:tests}, \ref{assump:cg}, \ref{assump:zeroes}, \ref{assump:causality} and \ref{assump:strong-control}.
\end{definition}

To distinguish these from basic theories, we sometimes call such theories \indef{proper} operational theories. Because of the common practice of taking tests the form~\ref{eq:test-display}, in this chapter we will consider both kinds of theories. Despite their name, the axioms of proper operational theories are in some sense weaker than those of basic ones, by the following. 

\begin{lemma} \label{lem:get-causal-states}
Let $f \colon A \to B$ be an event in a theory of either kind.
\begin{enumerate}
\item \label{enum:OpTh}
In an operational theory $f$ belongs to a test $\ts{f, e}$ for some $e\colon A \to I$.
\item \label{enum:bOpTh}
In a basic operational theory $f$ belongs to a test of the form $\ts{f, g \colon A \to B}$, and every object has a causal state.
\end{enumerate}
\end{lemma}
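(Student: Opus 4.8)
The plan is to prove the two claims separately, in each case using the control and coarse-graining axioms to collapse a test down to a small one.

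For part~\ref{enum:OpTh}, I would first invoke Axiom~\ref{assump:tests} to place $f \colon A \to B$ in some test $\ts{f_i \colon A \to B_i}_{i \in X}$ with $f = f_{i_0}$ and $B_{i_0} = B$. The key move is to feed this into the control axiom~\ref{assump:strong-control}: after the outcome $i_0$ apply the one-outcome test $\ts{\id{B}}$, and after every other outcome $i$ apply the one-outcome test $\ts{\discard{B_i} \colon B_i \to I}$; each is a test by Axiom~\ref{assump:causality}, since $\discard{B} \circ \id{B} = \discard{B}$ and $\discard{I} \circ \discard{B_i} = \discard{B_i}$. The resulting test consists of $f$ together with the effects $\discard{B_i} \circ f_i \colon A \to I$ for $i \neq i_0$. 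These are all of one type and form a sub-collection of a test, hence a partial test, so by Axiom~\ref{assump:cg} their coarse-graining $e \eqdef \bigovee_{i \neq i_0} \discard{B_i} \circ f_i$ is defined, and collapsing them one at a time inside the test shows $\ts{f, e}$ is a test. (When $X = \{i_0\}$ the coarse-graining is empty, $e = 0$, and $\ts{f, 0}$ is a test by Axiom~\ref{assump:zeroes}.)

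For part~\ref{enum:bOpTh}, the first claim follows from the same argument, only simpler: in a basic operational theory a test containing $f = f_{i_0}$ already consists of events $\ts{f_i \colon A \to B}_{i \in X}$ of a single type, so I would set $g \eqdef \bigovee_{i \neq i_0} f_i \colon A \to B$ (with $g = 0$ when $X$ is a singleton) and conclude $\ts{f, g}$ is a test by Axiom~\ref{assump:cg}. For the causal state, I would observe that $0_{I,A} \colon I \to A$ is an event, so by Axiom~\ref{assump:tests} it belongs to some test, which in a basic theory has the form $\ts{h_i \colon I \to A}_{i \in X}$ with every event a state of $A$. Setting $\sigma \eqdef \bigovee_{i \in X} h_i$, the fact that $\ts{h_i}_i$ is a test gives $\bigovee_i \discard{A} \circ h_i = \discard{I} = \id{I}$ by Axiom~\ref{assump:causality}, and distributing $\discard{A}$ over the coarse-graining by Axiom~\ref{assump:cg} yields $\discard{A} \circ \sigma = \id{I}$; hence $\sigma$ is a causal state of $A$.

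I expect the only genuine care to be the bookkeeping around degenerate cases — singleton tests and empty coarse-grainings, where Axiom~\ref{assump:zeroes} is what keeps everything well-defined — together with checking, at each step, that the sub-collections being coarse-grained really are partial tests so that Axiom~\ref{assump:cg} applies. The one mildly non-obvious idea is that in part~\ref{enum:bOpTh} the causal state is obtained not by an explicit construction but by coarse-graining a completely arbitrary test of states of $A$, whose existence is guaranteed purely because the zero morphism $0_{I,A}$ is an event.
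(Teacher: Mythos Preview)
Your proposal is correct and follows essentially the same approach as the paper: use control to discard the extraneous outcomes and then coarse-grain, with the causal state in part~\ref{enum:bOpTh} obtained by applying the argument to the zero state $0_{I,A}$. The paper's proof is terser (it simply says ``take $f$ to be the zero state'' and implicitly uses Axiom~\ref{assump:zeroes} to conclude the remaining event is causal), but your explicit verification via Axiom~\ref{assump:causality} amounts to the same thing.
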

\begin{proof}
\ref{enum:OpTh}
Any $f$ belongs to some test $\ts{f \colon A \to B, g_1 \colon A \to C_1, \dots, g_n \colon A \to C_n}$. Then using control $\ts{f, e}$ is a test where $e = \bigovee^n_{i=1} \discard{} \circ g_i$.  

\ref{enum:bOpTh}
Here by assumption $f$ belongs to some test $\ts{f, g_1, \dots, g_n}$ with each $g_i \colon A \to B$. Then $g = \bigovee^n_{i=1} g_i$ is well-defined and $\ts{f,g}$ is a test. For the second statement take $f$ to be the zero state.
\end{proof}

\subsection{Examples}  \label{examples:OpTheories}

Many of our examples of categories from Chapter~\ref{chap:CatsWDiscarding} extend to form operational theories. In each case these also form basic operational theories by restricting to tests of the form~\eqref{eq:test-display} and excluding objects such as $\emptyset$ or $0$ which lack causal states.

\begin{exampleslist}
\item The theory $\ClassDet$ of deterministic classical physics has category of events $\PFun$. Here a collection of partial functions $\ts{f_\x \colon A \to B_\x}_{\x \in \X}$ form a test when their domains are disjoint and partition $A$, with $\ovee$ being disjoint union.
\item 
The classical probabilistic theory $\ClassProb$ instead has category of events $\KlSD$. Tests are collections $\ts{f_\x \colon A \to B_\x}_{\x \in \X}$ satisfying
\[
\sum_{\x \in \X} \sum_{b \in B_\x} f_\x(a)(b) = 1
\]
for all $a \in A$, with $\ovee$ being element-wise addition.

\item 
Finite-dimensional quantum theory $\QuantOT$ has category of events $\QuantSU$ with events given by trace non-increasing completely positive maps. Tests are collections $\ts{f_\x \colon \hilbH \to \hilbK_\x}_{\x \in \X}$ whose sum is trace-preserving. When the $\hilbK_\x$ do not vary these are also known as \emps{quantum instruments}~\cite{nielsen2010quantum}. Here $\ovee$ is the usual addition of such maps. 
More broadly this extends to a theory $\CStarOTC$ with category of events $\CStarSUop$, with tests being collections of maps whose sum is unital.

\item 
The possibilistic classical theory $\RelOT$ has category of events $\Rel$. Here any collection of relations $\ts{R_\x \colon A \to B_\x}_{\x \in \X}$ form a partial test, making the coarse-graining operational total, and we set $R \ovee S = R \vee S$.
More generally, one may take unions of relations in any regular category $\catC$ which is \emps{coherent}~\cite{johnstone2002sketches}, and then $\Rel(\catC)$ extends to an operational theory $\RelOT(\catC)$ in the same way.

\item For any unital commutative semi-ring $S$, we define a theory $\MatOT_S$ whose category of events $\Mat_{S^{\leq 1}}$\label{cat:MatSless} is the subcategory of $\Mat_S$ consisting of those matrices with values in the set $S^{\leq 1} := \{a \in S \mid (\exists b \in S) \ \ \ a + b = 1 \}$. A collection of such matrices forms a test when their sum is causal in $\Mat_S$, with $\ovee$ given by such addition of matrices.
The scalars in this theory are $S^{\leq 1}$; for example in $\MatOT_\mathbb{Z}$ they are simply the integers $\mathbb{Z}$.
\end{exampleslist}

\section{Operational Categories} \label{sec:opcats-first}

The full definition of a (basic) operational theory can be quite unwieldy, requiring the extra specification of both tests and coarse-graining rules. In fact the essential structure of these kinds of theory can be captured internally to a single category. 

\begin{definition}\label{not:PtestTheta}
Let $\Theta$ be an operational theory. We define a symmetric monoidal category with discarding $\PTest(\Theta)$ as follows:
\begin{itemize}
\item 
objects are finite indexed collections $\ts{A_\x}_{\x \in \X}$ of systems of $\Theta$;
\item 
morphisms $M \colon \ts{A_\x}_{\x \in \X} \to \ts{B_\y}_{\y \in  \Y}$ are collections, indexed by $\x \in \X$, of partial tests $\ts{M(\x,\y) \colon A_\x \to B_\y}_{\y \in  \Y}$. 
\end{itemize}
Such morphisms may be thought of as matrices of events for which each column is a partial test. Composition is, via coarse-graining, that of matrices:
\begin{equation} \label{eq:partial-mat-comp}
(N \circ M )(\x,\z) = \bigovee_{\y \in  \Y} N(\y,\z) \circ M(\x,\y)
\end{equation}
We take as unit object $I := \ts{I}$ and define $\ts{A_\x}_{\x \in \X} \otimes \ts{B_\y}_{\y \in  \Y} := \ts{A_\x \otimes B_\y}_{(\x, \y) \in \X \times \Y}$, on morphisms being given by the Kronecker product
\begin{align} \label{eq:tensor-Kronecker}
[M \otimes N]((\x,\w),(\y,\z)) &:= M(\x,\y) \otimes N(\w,\z)
\end{align}
Finally on an object $A=\ts{A_\x}_{\x \in \X}$ we set $\discard{A} = \ts{\discard{A_\x}}_{\x \in \X}$.

For any basic operational theory $\Theta$ we define a category $\PTest(\Theta)$ in just the same way, but instead take objects to be only finite non-empty indexed copies $\ts{A}_{\x \in \X}$ of a fixed system $A$. One may instead denote such objects by a pair $(A,X)$, so that morphisms $M \colon (A,X) \to (B,Y)$ are again $\X$-indexed collections of partial tests, each now having the form $\ts{M(\x,\y) \colon A \to B }_{\y \in  \Y}$.
\end{definition}

\begin{lemma} Let $\Theta$ be a (basic or proper) operational theory. Then $\PTest(\Theta)$ is a well-defined symmetric monoidal category with discarding.
\end{lemma}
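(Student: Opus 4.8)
The plan is to verify each structural axiom in turn: that $\PTest(\Theta)$ is a category, that $\otimes$ is a well-defined symmetric monoidal structure with unit $I = \ts{I}$, and that the discarding maps behave as required in Definition~\ref{def:cat-with-discarding}. Throughout, the key point that makes every operation well-defined is the translation of the operational axioms on $\ovee$ (Axioms~\ref{assump:tests}--\ref{assump:causality}, \ref{assump:strong-control}) into the matrix language: each column of a morphism is a partial test, and coarse-graining combines columns.

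First I would check that composition in~\eqref{eq:partial-mat-comp} is well-defined, i.e.\ that $N \circ M$ really is a morphism: for each fixed $\x$ and $\z$, the entry $(N \circ M)(\x,\z) = \bigovee_{\y} N(\y,\z)\circ M(\x,\y)$ must exist, and for each fixed $\x$ the column $\ts{(N\circ M)(\x,\z)}_{\z}$ must be a partial test. Existence of each coarse-grained entry follows because, with $\x$ fixed, $\ts{M(\x,\y)}_{\y}$ is a partial test, so it extends to a test $\ts{M(\x,\y)}_{\y \in \Y'}$ with $\Y \subseteq \Y'$; composing each leg with the appropriate column of $N$ (extending those to tests too) and applying Control (Axiom~\ref{assump:strong-control}) yields a test whose events are indexed by pairs $(\y,\z)$, and then coarse-graining over $\y$ within each $\z$-block — legal by the distributivity and partial-test clauses of Axiom~\ref{assump:cg} — produces a test indexed by $\z \in$ (the big set), of which our column is a sub-collection, hence a partial test. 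Associativity of $\circ$ reduces to associativity and the two-sided distributivity of $\ovee$ over $\circ$ (Axiom~\ref{assump:cg}), rearranging a double coarse-grained sum; identities are the matrices $\id{A}(\x,\y) = \delta_{\x\y}\id{A_\x}$, using $f \ovee 0 = f$ and $0 \circ f = 0$ from Axioms~\ref{assump:zeroes} and the zero-morphism clause.

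Next I would treat the monoidal structure. Functoriality of $\otimes$ on morphisms via the Kronecker formula~\eqref{eq:tensor-Kronecker} uses the interchange $f \otimes (g \ovee h) = (f \otimes g)\ovee(f\otimes h)$ (Axiom~\ref{assump:cg}) together with bifunctoriality of $\otimes$ in $\Events_\Theta$; that a Kronecker product of two partial tests is again a partial test (column-wise) follows from the tensor clause of Axiom~\ref{assump:tests}. The coherence isomorphisms $\alpha,\lambda,\rho,\sigma$ are lifted entrywise from those of $\Events_\Theta$ (each is a permutation-type matrix with a single coherence iso in each row/column and $0$ elsewhere), and their naturality and the pentagon/triangle/hexagon identities follow directly from the corresponding facts in $\Events_\Theta$ plus the $\delta$-bookkeeping. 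Finally, $\discard{A} = \ts{\discard{A_\x}}_\x$ satisfies $\discard{I} = \id{I}$ since $I = \ts{I}$ is a singleton and $\discard{I_\Theta}=\id{I_\Theta}$; the axiom $\discard{A\otimes B} = \discard{A}\otimes\discard{B}$ is immediate from $\discard{A_\x \otimes B_\y} = \discard{A_\x}\otimes\discard{B_\y}$ in $\Events_\Theta$.

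The main obstacle — and the step deserving most care — is the well-definedness of composition: specifically the claim that after applying Control and then coarse-graining over the intermediate index $\y$, the resulting column is genuinely a partial test, including the bookkeeping that lets one always extend a partial test to a full test in order to invoke Control and Axiom~\ref{assump:cg} (using the clause that $\ts{f,g,h_1,\dots,h_n}$ a test implies $\ts{f\ovee g, h_1,\dots,h_n}$ a test, and the $0$-padding of Axiom~\ref{assump:zeroes}). Once this is in place, associativity, the monoidal axioms, and the discarding conditions are all routine consequences of the axioms for $\ovee$ and the monoidal structure of $\Events_\Theta$, so I would state those briefly and refer back to the relevant clauses rather than writing out the rearrangements in full. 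I would also note that the basic-theory version is handled identically, with objects $(A,X)$ in place of arbitrary families, since none of the arguments used anything about the $A_\x$ beyond their being objects of $\Events_\Theta$.
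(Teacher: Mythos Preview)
Your proposal is correct and follows the same approach as the paper, which gives only a three-sentence sketch: the paper simply notes that $\ts{N(\y,\z)\circ M(\x,\y)}_{\y\in\Y}$ is a partial test by (basic) control, that $\circ$ and $\otimes$ are then well-defined by Axioms~\ref{assump:tests} and~\ref{assump:cg}, and records the identity matrix. Your account is a faithful expansion of this, and in particular your emphasis on first extending partial tests to tests before invoking Control, and then using the first clause of Axiom~\ref{assump:cg} to coarse-grain block-by-block while remaining inside the class of (partial) tests, makes explicit the bookkeeping the paper leaves implicit.
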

\begin{proof}
For any composable morphisms $M, N$, the coarse-graining \eqref{eq:partial-mat-comp} is well-defined since $\ts{N(\y,\z) \circ M(\x,\y)}_{\y \in \Y}$ is a partial test by (basic) control. Then $\circ$ and $\otimes$ are well-defined by Axioms \ref{assump:tests} and \ref{assump:cg}. Each object $A=\ts{A_\x}_{\x \in \X}$ has an identity morphism with $\id{A}(\x,\y)$ given by $\id{A_\x}$ if $\x=\y$, and $0$ otherwise.
\end{proof}

\subsection{From theories to categories}

The main features of any (basic or proper) theory $\Theta$ may all be described within the category $\catC := \PTest(\Theta)$. Firstly, systems and events may be viewed as objects $A := \ts{A}$ and morphisms $f \colon A \to B$ of $\catC$, respectively.  

Next, the impossible events extend to a family of zero arrows $0 \colon A \to B$ in $\catC$. In the case of a proper theory, the empty collection $0 := \ts{}$ now forms a \indef{zero object}\index{zero object} in $\catC$\label{not:initialobject}. This means that it is \indef{initial}\index{initial object}, with every object having a unique morphism $! \colon 0 \to A$, and \indef{terminal}\index{terminal object}\label{not:terminal} meaning there is a unique morphism $! \colon A \to 0$. Any such object always provides zero morphisms via
\[
0_{A,B} = (A \to 0 \to B)
\]

Interestingly, tests may also be captured internally. Firstly, note that we may now represent each outcome set as an object $\X := \ts{I}_{\x \in \X}$ of $\catC$. For each outcome $\x$ there is a corresponding state and effect 
\[
\scalebox{0.8}{\input{./figures/ptest5.tikz}} 
\quad
\text{ with }
\quad 
\scalebox{0.8}{\input{./figures/ptest8.tikz}}
\begin{cases}
\id{I} & \x = \y\\
0 & \x \neq \y
\end{cases}
\]
Each object of the form $\ts{A}_{\x \in \X}$ is then isomorphic to $A \otimes X$. For each $\x \in \X$ it comes with a morphism
\begin{equation} \label{eq:coprojections}
\scalebox{0.8}{\input{./figures/ptest9.tikz}}
\end{equation}
More generally, in the case of a proper theory each object $A = \ts{A_\x}_{\x \in \X}$ comes with a morphism $\coproj_\x \colon A_\x \to A$ corresponding to the test 
\begin{equation} \label{eq:coprod-test}
\ts{0, \dots, 0, \id{A_\x}, 0, \dots, 0}
\end{equation} 
for each $\x \in \X$. 

\paragraph{Coproducts and Copowers}
We can use these maps to characterise each object $A = \ts{A_\x}_{\x \in \X}$, as follows. Thanks to control, they have the property that for any collection of morphisms
\[
\begin{tikzcd}
A_\x \rar{f_\x} & B
\end{tikzcd}
\]
for $\x \in \X$, there is a unique morphism $f \colon A \to B$ with $f \circ \coproj_\x = f_\x$ for all $\x \in \X$. In categorical language, this states that $A$ forms a \indef{coproduct} \index{coproduct} of the objects $A_\x$ with \indef{coprojections} $\pcoproj_\x$\index{coprojection}. \label{not:coproduct}\label{not:coprojection}

A coproduct of $\ts{A_i}^n_{i=1}$ is often denoted by $A_1 + \dots + A_n$. In fact to have coproducts of all finite collections of objects is equivalent to the presence of an initial object and binary coproducts $A + B$ of all objects $A, B$. Explicitly, binary coproducts have the property that for all $f, g$ as below there is a unique morphism $[f,g]$ making the following diagram commute:
\begin{equation*} \label{eq:coprod} 
\begin{tikzcd}[row sep = large]
A \rar{\coproj_A} \drar[swap]{f} & A + B
\arrow[d,dashed, "{[f,g]}" description]
& B \lar[swap]{\coproj_B} \dlar{g} \\ 
& C &
\end{tikzcd} 
\end{equation*}
\label{not:cotuple}
When considering these we write $f_1 + f_2 \colon A_1 + A_2 \to B_1 + B_2$\label{not:diagonal} for the unique morphism with $(f_1 + f_2) \circ \coproj_i = \coproj_i \circ f_i$ for $i=1,2$.

Now in particular, each $\ts{A}_{\x \in \X}$ in $\PTest(\Theta)$ forms a coproduct of the form 
\[
\X \cdot A := \overbrace{A + \dots + A}^{|\X|}
\]
which is called an \indef{$\X$-ary copower} \index{copower}\label{not:copower} of the object $A$. As a special case each object $\X$ forms a copower $\X \cdot I$. We will also write $n \cdot A := \X \cdot A$ where $|\X| = n$.


The coprojections $\coproj_i$ described above are given by indexed collections of (total) tests, of the form \ref{eq:coprod-test}, rather than merely partial ones. This makes these coproducts and copowers \indef{causal}\index{coproduct!causal}, meaning that each coprojection $\coproj_i$ is causal. 

By our definition of the tensor $\otimes$ in \eqref{eq:tensor-Kronecker}, it is also respected by these coproducts as follows.
In a symmetric monoidal category $\catC$ we say that coproducts are \indef{distributive}\index{distributive!coproducts} when each morphism
\[
\begin{tikzcd}[column sep=7em]
A \otimes B + A \otimes C
\rar{[\id{A} \otimes \pcoproj_A,\id{B} \otimes \pcoproj_C]}
&
A \otimes (B + C)
\end{tikzcd}
\]
is an isomorphism. 
Similarly finite copowers are called \indef{distributive} when each canonical morphism $X \cdot (A \otimes B) \to A \otimes (X \cdot B)$ is an isomorphism.

Usefully, thanks to the presence of zero arrows we may define, for any finite coproduct (or copower), `projection' morphisms
\[ \label{not:coprodproj}
\begin{tikzcd}
A_1 + \dots + A_n \rar{\triangleright_i} & A_i
\end{tikzcd}
\quad 
\text{ by }
\quad 
\triangleright_i \circ \coproj_j = 
\begin{cases}
\id{} & i = j  \\ 
0 & i \neq j
\end{cases}
\]
for $i=1,\dots, n$. Note that each morphism $\triangleright_i$ is not typically causal. Distributivity in $\PTest(\Theta)$ ensures that each object $n \cdot A \simeq A \otimes n$, where $n := n \cdot I$, has
\[
\scalebox{0.8}{\input{./figures/ptest10ii.tikz}}
\]
We'll see that for each coproduct or copower the set of morphisms $\triangleright_i$ can be used to pick out the events of corresponding partial tests, and so they are \indef{jointly monic}\index{joint monicity}, meaning that for all morphisms $f, g \colon B \to A_1 + \dots + A_n$ with $\triangleright_i \circ f = \triangleright_i \circ g$ for all $i$, we have $f = g$. 


\paragraph{Operational Categories}
For a (basic or proper) operational theory $\Theta$ we can summarise the properties of $\PTest(\Theta)$ as follows.

\begin{definition} \label{def:opcategory-first}\footnote{In the original pre-print~\cite{mainpaper} we instead used the term `operational category' for what here we later call a `test category'.}
A \deff{(basic) operational category}\index{operational category}\index{operational category!basic} is a symmetric monoidal category with discarding $(\catC, \otimes, \discard{})$, with zero morphisms and finite causal distributive coproducts (resp.~non-empty copowers) such that:
\begin{enumerate}[label=\arabic*., ref=\arabic*]
\item \label{enum:JM}
For each coproduct (resp. copower) the morphisms $\triangleright_i$ are jointly monic;
\item \label{enum:opCatexists}
For every $f \colon A \to B$ there is some causal morphism $g$ of type $A \to B + I$ (resp.~$A \to B + B$) with $f = \triangleright_1 \circ g$.  
\end{enumerate}
\end{definition}

For the first condition it in fact suffices to have causal coproducts $A + B$ for which $\triangleright_1, \triangleright_2 \colon A + A \to A$ are jointly monic~\cite[Lemma 5]{EffectusIntro}. As remarked above in the case of coproducts the initial object $0$ is then in fact a zero object.

\begin{lemma} \label{lem:op-cat}
Let $\Theta$ be a (basic) operational theory. Then $\catC = \PTest(\Theta)$ is a (basic) operational category.
\end{lemma}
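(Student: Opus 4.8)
The plan is to collect the observations already recorded about $\catC = \PTest(\Theta)$ in the discussion preceding the statement, verify them in slightly more detail, and then check the two numbered conditions of Definition~\ref{def:opcategory-first}. We already know $\catC$ is a symmetric monoidal category with discarding, so I would begin with the zero morphisms: in the proper case the empty collection $\ts{}$ is both initial and terminal, hence a zero object, which supplies zero morphisms; in the basic case I would take the matrix all of whose entries are the impossible events $0$ of $\Theta$ and check $0 \circ f = 0 = g \circ 0$ and $f \otimes 0 = 0 = 0 \otimes g$ directly from the corresponding equations in $\Theta$ together with $f \ovee 0 = f$ from Axiom~\ref{assump:zeroes}.

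Next I would treat the finite causal distributive coproducts (resp.\ non-empty copowers). On an object $A = \ts{A_\x}_{\x \in \X}$ the coprojections $\coproj_\x \colon A_\x \to A$ are the collections~\eqref{eq:coprod-test}, which are tests since $\ts{\id{A_\x}}$ is a test (identities being causal) and appending zeros to a test gives a test by Axiom~\ref{assump:zeroes}. The universal property — that every family $f_\x \colon A_\x \to B$ factors uniquely through the $\coproj_\x$ — I would read off directly from the composition formula~\eqref{eq:partial-mat-comp}, with the required composites existing by control (Axiom~\ref{assump:strong-control}, or Axiom~\ref{assump:control} for a basic theory); in the proper case the empty collection is moreover initial, so $\catC$ then has all finite coproducts. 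I would then observe that a morphism $M$ of $\catC$ is causal precisely when each of its columns satisfies~\eqref{eq:test-causal}, i.e.\ is a full test, by unpacking $\discard{A} = \ts{\discard{A_\x}}_{\x \in \X}$ and~\eqref{eq:partial-mat-comp} against Axiom~\ref{assump:causality}; since the defining collection of each $\coproj_\x$ is a test, the coproducts (resp.\ copowers) are causal. Distributivity I would get by unwinding the definitions of $+$ (indexed disjoint union of the underlying collections) and $\otimes$ (the Kronecker product~\eqref{eq:tensor-Kronecker}): the two sides have canonically identified underlying collections and the comparison morphism $[\id{A}\otimes\pcoproj_B,\id{A}\otimes\pcoproj_C]$ has the identity matrix, and likewise for copowers.

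Finally I would verify the two axioms of Definition~\ref{def:opcategory-first}. For joint monicity~\ref{enum:JM}, formula~\eqref{eq:partial-mat-comp} shows that $\triangleright_\x \circ M$ has single-column entry $M(-,\x)$, so $\triangleright_\x \circ M = \triangleright_\x \circ M'$ for all $\x$ forces $M = M'$. For~\ref{enum:opCatexists}, given $f \colon A \to B$ I would, for each input index $\x$, take an enveloping test of the partial test forming the $\x$-th column of $f$, coarse-grain its surplus events into a single effect $A_\x \to I$ (proper case, first composing with $\discard{}$ so the types agree, using control) or a single event $A_\x \to B_\x$ (basic case) via Axiom~\ref{assump:cg}, and package these columns into a causal morphism $g \colon A \to B + I$ (resp.\ $g \colon A \to B + B$), placing the new event in the extra summand and zeros in the remaining slots, so that $\triangleright_1 \circ g = f$; for single systems this is exactly Lemma~\ref{lem:get-causal-states}.

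The genuinely operational content is all built into the axioms used to construct $\PTest(\Theta)$, so no new idea is needed; the work is entirely in unravelling the matrix formulas. I expect the only mildly delicate step to be bookkeeping: matching index sets in the distributivity comparison morphism, and, in~\ref{enum:opCatexists} for general (non-singleton) objects $A$ and $B$, tracking how the column-by-column extension of $f$ to a family of full tests is assembled into one causal morphism of the correct shape. This shape-matching is the main obstacle, though a routine one.
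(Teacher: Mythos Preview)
Your proposal is correct and takes essentially the same approach as the paper: the paper's proof is the one-liner ``We have explained all but condition~\ref{enum:opCatexists}, which follow from Lemma~\ref{lem:get-causal-states}'', deferring entirely to the preceding discussion and to Lemma~\ref{lem:get-causal-states}, and you are simply spelling out that discussion and the column-by-column extension of Lemma~\ref{lem:get-causal-states} in more detail. One small notational slip: in the basic case the surplus event has type $A \to B$ (the systems do not vary with $\x$), not $A_\x \to B_\x$; otherwise the bookkeeping is exactly as you describe.
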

\begin{proof}
We have explained all but condition~\ref{enum:opCatexists}, which follow from Lemma~\ref{lem:get-causal-states}. 
\end{proof}


\omitthis{
The features of any basic theory may be fully described in terms of these objects. Partial tests $\ts{f_\x \colon A \to B}_{\x \in \X}$ now correspond in $\catC$ to morphisms
\[
\scalebox{0.8}{\input{./figures/ptest1.tikz}}
\]
or equivalently those of the form $f' \colon A \to X \cdot B$. Such a partial test is then a test precisely when this morphism is causal in $\catC$. The individual events are given by 
\[
\scalebox{0.8}{\begin{tikzpicture}
	\begin{pgfonlayer}{nodelayer}
		\node [style=label] (0) at (1.25, 2) {$B$};
		\node [style=map] (1) at (1.25, -0) {$f_x$};
		\node [style=label] (2) at (1.25, -2) {$A$};
		\node [style=none] (3) at (1.25, -1.5) {};
		\node [style=none] (4) at (1.25, 1.5) {};
	\end{pgfonlayer}
	\begin{pgfonlayer}{edgelayer}
		\draw (3.center) to (4.center);
	\end{pgfonlayer}
\end{tikzpicture}}
\
=
\
\scalebox{0.8}{\input{./figures/ptest2ii.tikz}}
\
=
\
\left(
\begin{tikzcd}
A \rar{f'} & \X \cdot B \rar{\triangleright_\x} & B
\end{tikzcd}
\right)
\]
where in any category with zero morphisms we define `projection' morphisms $\triangleright_\x \colon X \cdot A \to A$ by $\triangleright_\x \circ \coproj_\x = \id{A}$ and $\triangleright_\x \circ \coproj_\y = 0$ for $\x \neq \y$. In distributive settings such as $\catC$ we then have
\[
\scalebox{0.8}{\input{./figures/ptest10.tikz}}
\]
Since a (partial) test is determined by its collection of events the morphisms $\triangleright_\x$ are then \indef{jointly monic}, meaning that for all morphisms $f, g$ with $\triangleright_\x \circ f = \triangleright_\x \circ g$ for all $\x \in \X$, we have $f = g$.  Finally we may also describe coarse-graining in terms of copowers or diagrams as  
\[
\bigovee_{\x \in \X}
\ 
\scalebox{0.8}{\begin{tikzpicture}
	\begin{pgfonlayer}{nodelayer}
		\node [style=label] (0) at (1.25, 2) {$B$};
		\node [style=map] (1) at (1.25, -0) {$f_x$};
		\node [style=label] (2) at (1.25, -2) {$A$};
		\node [style=none] (3) at (1.25, -1.5) {};
		\node [style=none] (4) at (1.25, 1.5) {};
	\end{pgfonlayer}
	\begin{pgfonlayer}{edgelayer}
		\draw (3.center) to (4.center);
	\end{pgfonlayer}
\end{tikzpicture}}
=
\scalebox{0.8}{\input{./figures/ptest3new1.tikz}}
=
\left(
\begin{tikzcd}
A \rar{f'} & \X \cdot B \rar{\triangledown} & B
\end{tikzcd}
\right)
\]
where $\triangledown$ is the unique morphism with $\triangledown \circ \pcoproj_\x = \id{B}$ for all $\x \in \X$.

Just as we were earlier able to describe the features of a basic operational theory in terms of copowers, we may do the same for proper operational theories using these more general coproducts.

}
\subsection{From categories to theories} \label{sec:CatsToTheories}

Let us now see in detail how the categorical properties of $\catC = \PTest(\Theta)$ may be used to describe the theory $\Theta$. Firstly, general partial tests $\ts{f_i \colon A \to B_i}^n_{i=1}$ in our theory correspond to morphisms 
\begin{equation} \label{eq:ptest-coprod2}
\begin{tikzcd}
A \rar{f}
&
B_1 + \dots + B_n
\end{tikzcd}
\end{equation}
with individual events 
$f_i = \triangleright_i \circ f$. 
Such a collection is a test whenever $f$ is causal.

In particular partial tests of the kind $\ts{f_i \colon A \to B}^n_{i=1}$ appearing in a basic operational theory correspond to morphisms
\begin{equation} \label{eq:ptest-copower2}
\begin{tikzcd}
A \rar{f}
& B + \dots + B = n \cdot B 
\end{tikzcd}
\end{equation}
with $f_i := \triangleright_i \circ f$ for all $i$, or equivalently as morphisms 
\begin{equation*} 
\scalebox{0.8}{\input{./figures/ptest1i.tikz}} 
\quad \text{ with }
\quad 
\scalebox{0.8}{\input{./figures/ptest2icomb.tikz}}
\end{equation*}
for all $i$. The coarse-graining of such a partial test may then be described in terms of copowers by 
\begin{equation} \label{eq:c-g-coprods}
\bigovee^n_{i=1} f_i
=
\begin{tikzcd}
A \rar{f} & n \cdot B
 \rar{\triangledown} & B
\end{tikzcd}
\end{equation}
where we define $\triangledown$ by $\triangledown \circ \coproj_i = \id{B}$ for all $i$, or in diagrams by simply discarding the outcomes:
\begin{equation*} 
\bigovee^n_{i = 1}
\ 
\scalebox{0.8}{\input{./figures/ptest3new2comb.tikz}}
\end{equation*}

In fact, these ideas allow us to define the full structure of a theory from any operational category.

\begin{theorem} \label{thm:OpCatToOpTheory}
Let $\catC$ be a (basic) operational category. Then $\catC$ forms the category of events of a (resp.~basic) operational theory denoted $\OTC(\catC)$ (resp.~$\OT(\catC)$) defined as follows. 
\begin{itemize}
\item 
 A collection $\ts{f_i}^n_{i=1}$ forms a test iff there is a causal morphism $f$ as in~\eqref{eq:ptest-coprod2} (resp.~\eqref{eq:ptest-copower2}) with $\triangleright_i \circ f = f_i$ for all $i$.
\item 
Whenever $f, g \colon A \to B$ are compatible there is then a unique $h \colon A \to B + B$ with $\triangleright_1 \circ h = f$ and $\triangleright_2 \circ h = g$, and we define $f \ovee g = \triangledown \circ h$. 
\end{itemize}
\end{theorem}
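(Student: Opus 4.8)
The plan is to verify directly that the tests and coarse-graining placed on $\catC$ satisfy each axiom of an operational theory, treating the proper case (coproducts, giving $\OTC(\catC)$) and the basic case (non-empty copowers, giving $\OT(\catC)$) in parallel, since the arguments differ only by replacing $B_1 + \dots + B_n$ with $n \cdot B$ and adjusting condition~\ref{enum:opCatexists} accordingly. Before touching the axioms I would record a handful of routine facts about $\catC$. First, $\ovee$ is well defined: for compatible $f, g \colon A \to B$ the morphism $h$ with $\triangleright_1 \circ h = f$ and $\triangleright_2 \circ h = g$ is unique by joint monicity (condition~\ref{enum:JM}), so $f \ovee g := \triangledown \circ h$ makes sense. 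Second, taking $h = \coproj_1 \circ f$ gives $f \ovee 0 = f$. Third, from the universal property of coproducts one gets $\bigovee_i \coproj_i \circ \triangleright_i = \id{}$, and hence $\discard{B_1 + \dots + B_n} = \bigovee_i \discard{B_i} \circ \triangleright_i$ (both sides agree after precomposing with each $\coproj_k$, using that coprojections are causal). Fourth, a collection $\ts{f_i \colon A \to B_i}_{i=1}^n$ is a partial test, i.e.\ extends to a test, precisely when there is some (necessarily unique) morphism $f \colon A \to B_1 + \dots + B_n$ — not required causal — with $\triangleright_i \circ f = f_i$: one direction projects a test's causal witness onto a sub-block using zero morphisms, the other is immediate. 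Finally, the commutativity and associativity of $\ovee$ and the four bilinearity equations of Axiom~\ref{assump:cg} follow from naturality of $\triangledown$ and functoriality of $+$, $\otimes$, and crucially use no causality.

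With these in hand the axioms follow. For Axiom~\ref{assump:tests}: every event lies in a test by condition~\ref{enum:opCatexists}; closure under relabelling is a permutation of coproduct summands; closure under $\otimes$ uses distributivity of (co)products over $\otimes$ to identify $(\sum_i B_i) \otimes (\sum_j D_j) \cong \sum_{i,j} B_i \otimes D_j$, together with the fact (Lemma~\ref{lem:coherence-isoms-causal}) that a tensor of causal morphisms is causal. For Axiom~\ref{assump:strong-control} (and its weaker basic form, Axiom~\ref{assump:control}): given a test with causal witness $f \colon A \to \sum_x B_x$ and, for each $x$, a test with causal witness $g_x \colon B_x \to \sum_y C_{x,y}$, the morphism $(g_1 + \dots + g_n) \circ f \colon A \to \sum_{x,y} C_{x,y}$ is causal as a composite of a causal map with a coproduct of causal maps, and a short diagram chase with the identities $\triangleright_{(x,y)} \circ (g_1 + \dots + g_n) = \triangleright_y \circ g_x \circ \triangleright_x$ shows its components are exactly $g(x,y) \circ f_x$. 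The remaining clause of Axiom~\ref{assump:cg} — that if $\ts{f,g,h_1,\dots,h_n}$ is a test then so is $\ts{f \ovee g, h_1,\dots,h_n}$ — is obtained by post-composing the causal witness with the causal map $\triangledown + \id{\sum_i C_i} \colon (B + B) + \sum_i C_i \to B + \sum_i C_i$. For Axiom~\ref{assump:zeroes}: $\catC$ already has zero morphisms, and $\ts{f_1,\dots,f_n}$ is a test iff $\ts{f_1,\dots,f_n,0}$ is, since a causal witness for one yields one for the other by composing with the causal maps $\coproj \colon \sum_i B_i \to (\sum_i B_i) + I$ or, merging the trivial summand back in, $[\coproj_1,\dots,\coproj_n,\coproj_1]$, using $f \ovee 0 = f$.

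Finally, Axiom~\ref{assump:causality}: $\catC$ is a category with discarding by hypothesis, and a partial test $\ts{f_i}$, with its (not necessarily causal) witness $f \colon A \to \sum_i B_i$, is a test — i.e.\ $f$ may be taken causal — exactly when $\bigovee_i \discard{} \circ f_i = \discard{}$; this drops out of the computation $\discard{\sum_i B_i} \circ f = (\bigovee_i \discard{B_i} \circ \triangleright_i) \circ f = \bigovee_i \discard{} \circ f_i$, using the third lemma above together with distributivity of $\ovee$ over precomposition. I expect the main obstacle to be organisational rather than deep: the coarse-graining laws of Axiom~\ref{assump:cg} must be checked \emph{before} Axiom~\ref{assump:causality}, because the causality argument consumes the bilinearity of $\ovee$; and one must be careful that ``partial test'' is correctly reduced to ``morphism into a coproduct'', since not every tuple of morphisms out of $A$ assembles into such a map — this is exactly where joint monicity and the sub-block projection carry the argument. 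The parallel bookkeeping for the basic case (copowers, and condition~\ref{enum:opCatexists} providing a witness into $B + B$ rather than $B + I$) lengthens the write-up but introduces no new ideas.
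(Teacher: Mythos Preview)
Your proposal is correct and follows essentially the same approach as the paper's own proof: direct verification of each operational-theory axiom from the coproduct/copower structure, distributivity, and condition~\ref{enum:opCatexists}. The paper's proof is a very terse sketch (a few lines citing distributivity for $\otimes$-closure, the coproduct universal property for control, and $\coproj_1 \circ f$ for $f \ovee 0 = f$), whereas you have filled in the intermediate lemmas---the identity $\bigovee_i \coproj_i \circ \triangleright_i = \id{}$, the discard formula, and the explicit characterisation of partial tests as tuples admitting a (unique) witness into the coproduct---that the paper leaves implicit.
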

This definition of a partial addition comes from Jacobs et.~al~\cite{NewDirections2014aJacobs,Partial2015Cho}. 
\begin{proof}
The condition~\ref{enum:opCatexists} in Definition~\ref{def:opcategory-first} gives that every event belongs to a test. Distributivity ensures that tests are closed under $\otimes$, and control follows from the definition of a coproduct (resp.~copower) as above. Coarse-graining behaves as expected thanks to basic properties of these and distributivity. 

For zero morphisms, note that given any test $(f_i)^n_{i=1}$ corresponding to a morphism $f \colon A \to B$ where $B = B_1 + \dots + B_n$, we may compose it with the coprojection $B \to B + B_{n +1}$ to obtain the test $(f_1, \dots, f_n, 0)$, and the case of copowers is similar. Moreover we get $f \ovee 0 = f$ for all events $f \colon A \to B$ by considering $\coproj_1 \circ f \colon A \to B + B$. Causality is immediate from the definition. 
\end{proof}
 
\subsection{Representable theories}

The theories which arise from either kinds of operational category come with systems encoding the outcome types of tests, characterised as follows.

\begin{definition} \label{def:representable}
An operational theory $\Theta$ is \deff{representable} \index{operational theory!representable} when for every finite indexed collection of system $\ts{A_\x}_{\x \in \X}$ there is a system $A$ and test 
\begin{equation} \label{eq:rep}
\ts{\triangleright_\x \colon A \to A_\x}_{\x \in \X}
\end{equation}
such that for each partial test $\ts{f_\x \colon B \to A_\x}_{\x \in \X}$ there is a unique event $f \colon B \to A$ with $\triangleright_\x \circ f = f_\x$ for all $\x$. 

Similarly a basic operational theory is \indef{representable} when the same holds with respect to finite non-empty collections of the form $\ts{A}_{\x \in \X}$, now in terms of partial tests  $\ts{f_\x \colon B \to A}_{\x \in \X}$.
\end{definition}

\begin{lemma} \label{lem:characterising-rep}
A (basic) operational theory $\Theta$ is representable iff $\Events_\Theta$ has finite coproducts (resp.~non-empty copowers) for which the maps $\triangleright_\x $ are jointly monic and form a test. 
\end{lemma}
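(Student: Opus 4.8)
The plan is to prove both implications directly from the axioms, the common engine being the passage between a partial test $\ts{f_\x\colon B\to A_\x}_{\x\in\X}$ and the single morphism $\bigovee_\x\coproj_\x\circ f_\x$ into a coproduct $\coprod_\x A_\x$. I would first record one preliminary fact, immediate from the control axiom (Assumption~\ref{assump:strong-control}) together with the fact that every event lies in a test: \emph{if $\ts{g_z\colon A\to B_z}_{z\in Z}$ is a test and $h\colon A'\to A$, $k_z\colon B_z\to C_z$ are events, then $\ts{g_z\circ h}_z$ and $\ts{k_z\circ g_z}_z$ are partial tests} --- for the former, extend $h$ to a test $\ts{h,m_1,\dots,m_r}$, apply control with $\ts{g_z}_z$ on the $h$-branch and a singleton identity test on each remaining branch, and restrict; the latter is symmetric. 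I will also use Axiom~\ref{assump:zeroes} freely to pad with zeros, the rule $h\ovee 0=h$, and the distributivity of $\circ$ over $\ovee$ from Axiom~\ref{assump:cg}.

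For the ``if'' direction, suppose $\Events_\Theta$ has finite coproducts (resp.\ non-empty copowers) whose projections $\triangleright_\x$ are jointly monic and form a test, and fix a family $\ts{A_\x}_{\x\in\X}$. I would take $A:=\coprod_\x A_\x$ with coprojections $\coproj_\x$ and induced projections $\triangleright_\x$; by hypothesis $\ts{\triangleright_\x\colon A\to A_\x}_\x$ is a test, which is the test demanded by Definition~\ref{def:representable}. For a partial test $\ts{f_\x\colon B\to A_\x}_\x$, extend it to a test and post-compose by the coprojections (preliminary fact) to see that $\ts{\coproj_\x\circ f_\x}_\x$ is a partial test, whence $f:=\bigovee_\x\coproj_\x\circ f_\x\colon B\to A$ is defined; distributing $\triangleright_\w$ over the coarse-graining and using $\triangleright_\w\circ\coproj_\x=\id{A_\x}$ for $\w=\x$ and $0$ otherwise (with $h\ovee 0=h$) gives $\triangleright_\w\circ f=f_\w$ for all $\w$, and uniqueness of such an $f$ is precisely joint monicity of the $\triangleright_\x$. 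Thus $\Theta$ is representable.

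For the ``only if'' direction, suppose $\Theta$ is representable, fix $\ts{A_\x}_{\x\in\X}$, and let $A$ be a representing object with representing test $\ts{\triangleright_\x\colon A\to A_\x}_\x$ (for proper theories the empty family is allowed, its representing object being the zero object). I would first show the $\triangleright_\x$ are jointly monic: if $h,h'\colon B\to A$ satisfy $\triangleright_\x\circ h=\triangleright_\x\circ h'$ for all $\x$, then $\ts{\triangleright_\x\circ h}_\x$ is a partial test by the preliminary fact, so the uniqueness clause of representability forces $h=h'$. Next I would realise $A$ as a coproduct: coprojections $\coproj_\y\colon A_\y\to A$ arise by applying representability to the partial test with $\id{A_\y}$ in position $\y$ and $0$ elsewhere (a test by Axioms~\ref{assump:causality} and~\ref{assump:zeroes}), giving $\triangleright_\x\circ\coproj_\y=\id{A_\y}$ for $\x=\y$ and $0$ otherwise. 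For the universal property, given $\ts{g_\x\colon A_\x\to B}_\x$ the morphism $g:=\bigovee_\x g_\x\circ\triangleright_\x$ is defined (preliminary fact applied to the test $\ts{\triangleright_\x}_\x$) and satisfies $g\circ\coproj_\y=g_\y$; for uniqueness I would check $\id{A}=\bigovee_\x\coproj_\x\circ\triangleright_\x$ --- both sides agree after composing with each $\triangleright_\w$, so joint monicity applies --- and then $g=\bigovee_\x g\circ\coproj_\x\circ\triangleright_\x$ for every $g\colon A\to B$, an expression depending only on the maps $g\circ\coproj_\x$. Finally the projections canonically associated to this coproduct coincide, by uniqueness of their defining factorisations, with the original $\triangleright_\x$, which form a test; so $\Events_\Theta$ has the stated structure. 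The basic/copower case is identical, restricted throughout to non-empty families.

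The step I expect to be the real obstacle is the forward direction's claim that a representing object is an honest coproduct, and within it the \emph{uniqueness} of the co-pairing: the representing property only constrains morphisms \emph{into} the $A_\x$, so joint epi-ness of the coprojections is not given to us and must be extracted from the decomposition $\id{A}=\bigovee_\x\coproj_\x\circ\triangleright_\x$ together with joint monicity of the projections. The other pervasive friction is routine bookkeeping --- constantly certifying that composites and coarse-grainings of test events are still (partial) tests --- which is why I would isolate the preliminary fact above once and for all rather than re-derive it at each use.
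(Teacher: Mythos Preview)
Your proposal is correct and follows essentially the same approach as the paper's proof: both directions hinge on the identity $\id{A}=\bigovee_\x\coproj_\x\circ\triangleright_\x$ (established via the uniqueness clause of representability, which is exactly joint monicity) and the resulting co-pairing formula $g=\bigovee_\x g_\x\circ\triangleright_\x$. Your version is more explicit about the bookkeeping---isolating the ``preliminary fact'' about composites of tests and separately verifying joint monicity---whereas the paper compresses these into ``thanks to control'' and ``by uniqueness'', but the logical skeleton is identical.
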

\begin{proof}
We prove the result for operational theories, the basic case being similar. Fix a collection $\ts{A_\x}_{\x \in \X}$. Suppose that $\Theta$ is representable, and let $A$ be as in~\eqref{eq:rep}.
Define $\pcoproj_\x \colon A_\x \to A$ to be the unique event with $\triangleright_\y \circ \pcoproj_\x = 0$ for $\x \neq \y$ and $\triangleright_\x \circ \pcoproj_\x = \id{A_\x}$. Then thanks to control the event $\bigovee_{\x \in \X} \coproj_\x \circ \triangleright_\x$ is well-defined and
\[
\triangleright_\y \circ (\bigovee_{\x \in \X} \coproj_\x \circ \triangleright_\x) = 
\bigovee_{\x \in \X} (\triangleright_\y \circ \coproj_\x \circ \triangleright_\x )=
\triangleright_\y
\]
so that by uniqueness it is equal to $\id{A}$. 
Then for any collection of events ${g_\x \colon A_\x \to B}$, for $\x \in \X$, if $g \colon A \to B$ has $g \circ \coproj_\x = g_\x$ for all $\x$ we have 
\[
g = g \circ (\bigovee_{\x \in \X} \coproj_\x \circ \triangleright_\x) = \bigovee_{\x \in \X} g_\x \circ \triangleright_\x 
\] 
Hence this defines the unique such $g$, making $A$ a coproduct. 

Conversely, if $\Events_\Theta$ has such coproducts they satisfy the properties of~\eqref{eq:rep}. Indeed for any partial test $\ts{f_\x \colon B \to A_\x}_{\x \in \X}$ the event $f = \bigovee_{\x \in \X} (\coproj_\x \circ f_\x)$ is well-defined by control, and satisfies $\triangleright_\x \circ f = f_\x$ for all $\x \in \X$, being unique by joint monicity.
\end{proof}

\begin{theorem} \label{thm:corresp-immediate}
There is a one-to-one correspondence between:
\begin{itemize}
\item (basic) operational categories $\catC$;
\item representable (basic) operational theories $\Theta$;
\end{itemize}
via the assignments $\catC \mapsto \OTC(\catC)$ (resp.~$\OT(\catC)$) and $\Theta \mapsto \Events_{\Theta}$. 
\end{theorem}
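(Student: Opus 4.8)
The plan is to show that the two assignments $\catC\mapsto\OTC(\catC)$ (resp.\ $\OT(\catC)$) and $\Theta\mapsto\Events_\Theta$ are mutually inverse, using Theorem~\ref{thm:OpCatToOpTheory} and Lemma~\ref{lem:characterising-rep} as the main tools.

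First I would start from a (basic) operational category $\catC$. Theorem~\ref{thm:OpCatToOpTheory} already tells us that $\OTC(\catC)$ (resp.\ $\OT(\catC)$) is an operational theory whose category of events is literally $\catC$, so the composite sending $\catC$ back to an operational category is the identity. It then remains only to check that this theory is \emph{representable}, and here I would invoke Lemma~\ref{lem:characterising-rep}: the category $\catC=\Events_{\OTC(\catC)}$ has finite coproducts (resp.\ non-empty copowers) with the $\triangleright_\x$ jointly monic directly from Definition~\ref{def:opcategory-first}, and these $\triangleright_\x$ form a test in $\OTC(\catC)$ because $\id{A}\colon A\to A_1+\dots+A_n$ is a causal morphism witnessing the test condition~\eqref{eq:ptest-coprod2}.

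Conversely, given a representable theory $\Theta$, I would set $\catC:=\Events_\Theta$ and verify it is a (basic) operational category. Lemma~\ref{lem:characterising-rep} supplies finite coproducts (resp.\ non-empty copowers) whose $\triangleright_\x$ are jointly monic and form a test, so by~\eqref{eq:test-causal} one obtains the identity $\discard{}=\bigovee_\x\discard{}\circ\triangleright_\x$ for each such (co)product, which I would record at the outset since it is used repeatedly. Then I would check the remaining clauses of Definition~\ref{def:opcategory-first}: zero morphisms come from Axiom~\ref{assump:zeroes}; the (co)products are causal because the tuple $\ts{\triangleright_\y\circ\coproj_\x}_\y=\ts{0,\dots,\id{},\dots,0}$ is a test by Axioms~\ref{assump:zeroes} and~\ref{assump:tests}, which together with the displayed identity and Axiom~\ref{assump:cg} forces $\discard{}\circ\coproj_\x=\discard{}$; distributivity follows from closure of tests under $\otimes$ (Axiom~\ref{assump:tests}), by tensoring the representing test with the singleton test $\ts{\id{}}$ to produce an inverse to the canonical map, with two-sidedness following from joint monicity; and condition~\ref{enum:opCatexists} is exactly Lemma~\ref{lem:get-causal-states}, whose test $\ts{f,e}$ (resp.\ $\ts{f,g\colon A\to B}$) is the required causal morphism into $B+I$ (resp.\ $B+B$). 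Finally I would check that $\OTC(\Events_\Theta)=\Theta$ (resp.\ $\OT$): the categories of events agree by construction; a collection $\ts{f_\x}_\x$ is a test in $\OTC(\Events_\Theta)$ iff there is a causal $f$ into the relevant (co)product with $\triangleright_\x\circ f=f_\x$, and since $\discard{}\circ f=\bigovee_\x\discard{}\circ f_\x$ by the displayed identity, this matches being a test in $\Theta$ (the reverse direction using the control axiom applied to the test $\ts{f,e}$ of Lemma~\ref{lem:get-causal-states} and the representing test); and for compatible $f,g$ with associated $h\colon A\to B+B$, Axiom~\ref{assump:cg} gives $\triangledown\circ h=(\triangleright_1\ovee\triangleright_2)\circ h=(\triangleright_1\circ h)\ovee(\triangleright_2\circ h)=f\ovee g$, matching the coarse-graining~\eqref{eq:c-g-coprods}.

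Most of this is routine diagram-chasing, but the genuinely delicate point — and the one I expect to be the main obstacle — is this final coincidence: verifying that the tests and coarse-graining reconstructed from $\Events_\Theta$ by the operational-category recipe are precisely those of $\Theta$. It hinges on the careful interplay of representability, the control axiom, the discarding identity $\discard{}=\bigovee_\x\discard{}\circ\triangleright_\x$, and the distributivity of $\ovee$ over composition. One should also note that the correspondence is to be understood up to the usual choice of (co)products, which is harmless since these are unique up to canonical isomorphism and the $\OTC$ construction refers to them only existentially.
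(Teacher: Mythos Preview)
Your proposal is correct and follows essentially the same route as the paper: both directions rely on Lemma~\ref{lem:characterising-rep}, with the passage from a representable $\Theta$ to an operational category checking causality of coprojections via the identity $\discard{}=\bigovee_\x\discard{}\circ\triangleright_\x$, distributivity via an explicit $\ovee$-constructed inverse, and the coincidence of tests and coarse-graining exactly as you outline. The only cosmetic difference is that the paper derives condition~\ref{enum:opCatexists} directly from the universal property in Definition~\ref{def:representable} rather than via Lemma~\ref{lem:get-causal-states}, and handles the test correspondence by invoking representability directly rather than through control---but these amount to the same argument.
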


\begin{proof}
Again we give a proof for operational theories and the basic case is similar.

For any such $\catC$, the theory $\OTC(\catC)$ is representable by Lemma~\ref{lem:characterising-rep}. Conversely let $\Theta$ be a representable theory. By Lemma~\ref{lem:characterising-rep} again, $\Events_{\Theta}$ has finite coproducts with $\triangleright_i$ being jointly monic and forming a test. This ensures that the coprojections $\pcoproj_i$ are causal. Condition~\ref{enum:opCatexists} of an operational category follows since these coproducts have the property of Definition~\ref{def:representable}. 

We now check distributivity. Using control and that tests are closed under $\otimes$, one may verify that the event 
\[
\begin{tikzcd}[column sep =  8em]
A \otimes (B + C) \rar{(\id{A} \otimes \triangleright_B) \ovee (\id{A} \otimes \triangleright_C)} & A \otimes B + A \otimes C
\end{tikzcd}
\]
is well-defined, and thanks to the coarse-graining equations is inverse to the canonical morphism in the opposite direction. Hence $\Events_{\Theta}$ is an operational category. 

Finally we need to check that $\Theta = \OTC(\Events_{\Theta})$. By Lemma~\ref{lem:characterising-rep} the finite coproducts in $\Events_{\Theta}$ are such that partial tests $\ts{f_i}^n_{i=1}$ correspond to morphisms $f \colon A \to B_1 + \dots + B_n$. Moreover, for any compatible pair $f, g$ letting $h \colon A \to B + B$ with $\triangleright_1 \circ h = f $ and $\triangleright_2 \circ h = g$, we have 
\[
\triangledown \circ h 
=
\triangledown \circ (\coproj_1 \circ \triangleright_1 \ovee \coproj_2 \circ \triangleright_2) \circ h
=
f \ovee g
\]
and so coarse-graining in $\Theta$ also coincides with that in $\OTC(\Events_{\Theta})$.
\end{proof}

\noindent
In particular any (basic) theory $\Theta$ may thus be `completed' to a representable one
\[ \label{not:repcompletion}
\Theta^+ := \OTC(\PTest(\Theta))
\]
In fact if $\Theta$ is already representable, this leaves it unaltered, as we now show. 

By a \indef{morphism}\index{morphism!of operational theories} $\Theta \to \Theta'$ (resp.~\indef{equivalence}\index{equivalence!of operational theories} $\Theta \simeq \Theta'$) of theories we mean one $F \colon \Events_\Theta \to \Events_{\Theta'}$ of symmetric monoidal categories with discarding such that $\ts{F(f_\x)}_{\x \in \X}$ is a test if (resp.~if and only if) $\ts{f_\x}_{\x \in \X}$ is, and with $F(0) = 0$ and $F(f \ovee g) = F(f) \ovee F(g)$ for all events $f, g$.


\begin{lemma}
Let $\Theta$ be a (basic) operational theory. Then $\Theta$ is representable iff there is an equivalence of theories $\Theta \simeq \Theta^+$.
\end{lemma}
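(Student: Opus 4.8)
The plan is to use the correspondence of Theorem~\ref{thm:corresp-immediate}. First, note that $\Theta^+$ is \emph{always} representable: by Lemma~\ref{lem:op-cat} the category $\PTest(\Theta)$ is a (basic) operational category, and the proof of Theorem~\ref{thm:corresp-immediate} shows that $\OTC(\catC)$ is representable for any such $\catC$, whence $\Theta^+ = \OTC(\PTest(\Theta))$ is representable (its category of events $\Events_{\Theta^+} = \PTest(\Theta)$ carrying finite coproducts, resp.\ non-empty copowers, with jointly monic $\triangleright_\x$ forming tests, as in Lemma~\ref{lem:characterising-rep}). The ($\Leftarrow$) direction then reduces to the observation that representability is invariant under equivalence of theories. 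For this I would note that an equivalence of theories is in particular an equivalence of categories, hence transports the existence of finite coproducts (resp.\ copowers), joint monicity of the $\triangleright_\x$, and — being a morphism of theories preserving and reflecting $0$ and tests — the fact that the $\triangleright_\x$ form tests; by Lemma~\ref{lem:characterising-rep} this makes $\Theta$ representable.

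For the ($\Rightarrow$) direction, suppose $\Theta$ is representable. I would construct the equivalence explicitly as the functor $F \colon \Events_\Theta \to \Events_{\Theta^+} = \PTest(\Theta)$ sending a system $A$ to the one-element collection $\ts{A}$ and an event $f \colon A \to B$ to itself, regarded as the singleton-indexed partial test $\ts{f \colon A \to B}$ — legitimate since every event belongs to a test by Axiom~\ref{assump:tests}. By the definitions of $\otimes$ and discarding in $\PTest(\Theta)$ one has $\ts{A} \otimes \ts{B} = \ts{A \otimes B}$ and $\discard{\ts{A}} = \ts{\discard{A}}$, so $F$ is strictly symmetric monoidal and preserves discarding; and it is full and faithful because a morphism $\ts{A} \to \ts{B}$ in $\PTest(\Theta)$ is by definition just a single event $A \to B$. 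That $F$ preserves and reflects tests (and that $F(0)=0$, $F(f \ovee g) = F(f) \ovee F(g)$) should then follow directly from the construction of $\OTC$ in Theorem~\ref{thm:OpCatToOpTheory}: a family $\ts{F(f_\x)}_{\x \in \X}$ is a test in $\Theta^+$ iff there is a causal $g \colon \ts{A} \to \ts{B_\x}_{\x \in \X}$ into the standard coproduct with $\triangleright_\x \circ g = f_\x$ for all $\x$, and such a $g$ is precisely the partial test $\ts{f_\x}$ of $\Theta$, which is causal exactly when it is a test of $\Theta$ by Axiom~\ref{assump:causality}.

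The one substantial step — and the expected main obstacle — is essential surjectivity up to causal isomorphism. Given an object $\ts{A_\x}_{\x \in \X}$ of $\PTest(\Theta)$, representability supplies (via Lemma~\ref{lem:characterising-rep}) a coproduct $A := \sum_{\x \in \X} A_\x$ in $\Events_\Theta$ with causal coprojections $\coproj_\x$ and jointly monic $\triangleright_\x \colon A \to A_\x$ forming a test. I would then compare $F(A) = \ts{A}$ with the standard coproduct $\ts{A_\x}_{\x \in \X}$ in $\PTest(\Theta)$: the test $\ts{F(\triangleright_\x)}_{\x \in \X}$ corresponds to a causal morphism $\phi \colon \ts{A} \to \ts{A_\x}_{\x \in \X}$ with $\triangleright_\x \circ \phi = F(\triangleright_\x)$, while the cotuple $\psi := [F(\coproj_\x)]_{\x \in \X}$ goes the other way and is causal, being assembled from the causal maps $F(\coproj_\x)$ along the causal coprojections of the standard coproduct. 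Joint monicity of the $\triangleright_\x$ in $\PTest(\Theta)$ together with the universal properties of the two coproducts gives $\phi \circ \psi = \id{}$ directly, and $\psi \circ \phi = F(u)$ for some $u \colon A \to A$ with $u \circ \coproj_\x = \coproj_\x$ for all $\x$ (using fullness and faithfulness of $F$), so $u = \id{A}$ and $\psi \circ \phi = \id{}$; hence $\ts{A_\x}_{\x \in \X}$ is causally isomorphic to $F(A)$. The care required here is the bookkeeping: keeping the two coproduct structures apart, using the dictionary between partial tests and morphisms into a coproduct, and tracking causality of all the coprojections (and hence of the comparison isomorphism). The basic case is identical, with non-empty copowers in place of finite coproducts and non-empty copies of a fixed system in place of arbitrary finite collections.
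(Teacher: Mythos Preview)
Your proof is correct. The difference from the paper's argument is essentially one of direction and packaging. For ($\Rightarrow$), the paper builds the equivalence the other way round: it defines a functor $\PTest(\Theta) \to \Events_\Theta$ sending $\ts{A_i}^n_{i=1}$ to a chosen coproduct $A_1 + \dots + A_n$ and a matrix $M$ to the unique event $M'$ with $\triangleright_j \circ M' \circ \coproj_i = M(i,j)$, observes this is an equivalence of operational categories, and then invokes Theorem~\ref{thm:corresp-immediate} to conclude it is an equivalence of theories $\Theta^+ \simeq \Theta$. Your approach instead embeds $\Events_\Theta \hookrightarrow \PTest(\Theta)$ via $A \mapsto \ts{A}$ and verifies the definition of equivalence of theories by hand, with essential surjectivity supplied by the representing objects. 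The two functors are inverse to one another, so nothing substantive separates the arguments; the paper's version is shorter because the verification of test-preservation, $\ovee$-preservation, etc.\ is absorbed into Theorem~\ref{thm:corresp-immediate}, while yours is more self-contained and makes the role of the identity $\bigovee_\x \coproj_\x \circ \triangleright_\x = \id{A}$ from Lemma~\ref{lem:characterising-rep} explicit (this is exactly what underlies your check that $\psi \circ \phi = \id{}$). For ($\Leftarrow$) you give more detail than the paper, which simply asserts that representability transfers along equivalences.
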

\begin{proof}
We prove the case of a proper operational theory, the basic case being similar. If $\Theta \simeq \Theta^+$ then since $\Theta^+$ is representable so is $\Theta$. Conversely, suppose that $\Theta$ is representable, and consider the assignment
\begin{align*}
\PTest(\Theta) \ \ \ & \to \ \ \   \Events_\Theta \\ 
\ts{A_i}^n_{i=1} \ \ \ &\mapsto \ \ \ A_1 + \dots + A_n \\ 
(M \colon\ts{A_i}^n_{i=1} \to \ts{B_j}^m_{j=1})  \ \ \ & \mapsto  \ \ \ M'
\end{align*}
where $M'$ is the unique event with $\triangleright_j \circ M' \circ \coproj_i = M(i,j)$ for all $i, j$. It is straightforward to check that this defines an equivalence of symmetric monoidal categories with discarding, preserving coproducts. Hence these are equivalent operational categories, and so Theorem~\ref{thm:corresp-immediate} gives an equivalence of theories $\Theta^+ \simeq \Theta$.
\end{proof}

\subsection{Examples}  \label{sec:OPCATEXAMPLES}

Most of our examples of theories $\Theta$ are already representable as a theory of either kind, and hence determined entirely by their category $\Events_{\Theta} \simeq \PTest(\Theta)$ which forms an operational category, as well as a basic operational category after excluding zero objects. 
\begin{exampleslist}
\item 
The theories $\ClassDet$, $\ClassProb$ and $\RelOT$ are representable. Hence $\PFun$, $\KlSD$ and $\Rel$ are operational categories, with coproducts in each given by disjoint union of sets. Similarly so is $\Rel(\catC)$ whenever $\catC$ is coherent. 
\item 
For any unital semi-ring $S$, $\MatOT_S$ is representable. Then $\Mat_{S^{\leq 1}}$ has finite coproducts given by addition $n + m$ of natural numbers which make it a (basic) operational category. Here every object $n$ is an $n$-ary copower $n \cdot I$.
\item 
$\CStarOT$ is presentable, making $\CStarSUop$ an operational category. Here coproducts are given by the direct sum $A \biprod B$ of C*-algebras. In particular copowers arise from the presence of classical systems $X \cdot I  = \mathbb{C}^{|X|}$. 
\item \label{ex:QuantOP}
In contrast $\QuantOT$ is not representable as a theory of either kind, with $\QuantSU$ containing no such classical systems or coproducts.

Its completion to a representable basic theory is equivalent to the sub-theory of $\CStarOT$ given by restricting to algebras which may be written as a tensor $B(\hilbH) \otimes \mathbb{C}^n$ of a (finite-dimensional) quantum and classical algebra, for some $n \geq 1$, via the correspondence $(\hilbH, \X) \mapsto B(\hilbH) \otimes \mathbb{C}^{|X|}$. 

Its completion instead to a representable proper theory is precisely the full sub-theory $\FinCStarOTC$ of $\CStarOT$ given by the finite-dimensional C*-algebras, via the assignment
\[
\ts{\hilbH_\x}_{\x \in \X} \mapsto \bigoplus_{\x \in \X} B(\hilbH_\x)
\]
Indeed it is well-known that every finite dimensional C*-algebra is of this form (see~\cite{bratteli1972inductive} and~\cite[Example~3.4]{EPTCS171.7}).
\end{exampleslist}

\subsection{Functoriality}

The correspondence between operational theories and categories can itself be made categorical, by considering maps between such categories and theories.

Let us write $\OTCat$\label{cat:OTCat} for the category of operational theories and their morphisms. There is a full subcategory $\OTRep$\label{cat:OTrep} given by the representable theories. Next we write $\OpCats$\label{cat:OpCats} for the category whose objects are operational categories and morphisms $F \colon \catC \to \catD$ are those of symmetric monoidal categories with discarding which preserve finite coproducts $(+,0)$.

\begin{theorem}
Theorem~\ref{thm:corresp-immediate} extends to an isomorphism of categories
\[
\begin{tikzcd}[column sep = large]
\OpCats
\rar[shift left  = 4]{\OTC(-)}[swap]{\simeq}
& 
\OTRep
\arrow[shift left = 4]{l}[yshift=-0.5ex, xshift=0.5ex]{\Events_{(-)}}
\end{tikzcd}
\]
\end{theorem}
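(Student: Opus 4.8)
The plan is to build on Theorem~\ref{thm:corresp-immediate}, which already provides a bijection between the objects of $\OpCats$ and of $\OTRep$ with $\Events_{\OTC(\catC)} = \catC$ and $\OTC(\Events_\Theta) = \Theta$ holding strictly. So the only real work is to extend $\OTC(-)$ and $\Events_{(-)}$ to morphisms; once that is done, functoriality of both extensions and the fact that the composites are the identity functors are immediate, since in each direction the extension leaves the underlying functor of a morphism untouched. The proof therefore reduces to two well-definedness checks.

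First I would check that $\OTC(-)$ sends a morphism $F \colon \catC \to \catD$ of $\OpCats$ to a morphism of theories. Here $F$ is a strong symmetric monoidal functor preserving discarding and finite coproducts $(+,0)$. A short calculation using that $F(\discard{A})$ and $F(\discard{I})$ are causal shows that $F$ preserves causal morphisms; preservation of finite coproducts gives preservation of the coprojections $\coproj_i$, and together with preservation of zero morphisms this yields preservation of the derived maps $\triangleright_i$ and of the codiagonals $\triangledown$, all up to the canonical comparison isomorphisms $F(B_1 + \dots + B_n) \cong F(B_1) + \dots + F(B_n)$. Transporting the causal witness $f \colon A \to B_1 + \dots + B_n$ of a test along these isomorphisms then shows $\ts{F(f_i)}_i$ is a test in $\OTC(\catD)$; likewise $F(0) = 0$, and for compatible $f,g$ with witness $h$ as in Theorem~\ref{thm:OpCatToOpTheory} one gets $F(f \ovee g) = F(\triangledown \circ h) = \triangledown \circ F(h) = F(f) \ovee F(g)$. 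The copower/basic case is identical.

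The harder direction, and the step I expect to be the main obstacle, is checking that $\Events_{(-)}$ sends a morphism $F \colon \Theta \to \Theta'$ of representable theories to a morphism of $\OpCats$, i.e.~that the underlying functor (which preserves tests, $0$ and $\ovee$, and is symmetric monoidal with discarding) preserves finite coproducts. My plan is: given $A_1,\dots,A_n$, use Lemma~\ref{lem:characterising-rep} to realise $A = A_1 + \dots + A_n$ in $\Events_\Theta$ with coprojections $\coproj_i$, projections $\triangleright_i$ satisfying the standard relations, and $\bigovee_i \coproj_i \circ \triangleright_i = \id{A}$; since $\Theta'$ is also representable, let $C = F(A_1) + \dots + F(A_n)$ in $\Events_{\Theta'}$ with coprojections $\kappa_i$ and comparison map $\phi \colon C \to F(A)$, $\phi \circ \kappa_i = F(\coproj_i)$. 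Using that $F$ preserves tests and that every event lies in a test (Lemma~\ref{lem:get-causal-states}), the Control axiom makes $\psi := \bigovee_i \kappa_i \circ F(\triangleright_i) \colon F(A) \to C$ well-defined; then, since $F$ preserves $\circ$, $\ovee$, $0$ and identities, one computes $\phi \circ \psi = F(\bigovee_i \coproj_i \circ \triangleright_i) = \id{F(A)}$ and $\psi \circ \phi \circ \kappa_i = \kappa_i$ for each $i$, whence $\psi \circ \phi = \id{C}$ by the universal property of $C$. So $\phi$ is an isomorphism exhibiting $F(A)$ as $F(A_1) + \dots + F(A_n)$; the $n = 0$ instance shows the initial, hence zero, object is preserved, so $F$ preserves $(+,0)$. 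The delicate points here are verifying that $\ts{\kappa_i \circ F(\triangleright_i)}_i$ really is a partial test (so that $\psi$ exists at all) and bookkeeping the comparison isomorphisms — neither deep, but both requiring care; the basic/copower case again goes through verbatim.

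Finally I would observe that both $\OTC(-)$ and $\Events_{(-)}$ are functorial, fixing underlying functors and composites, that $\Events_{\OTC(F)} = F$ and $\OTC(\Events_F) = F$ for the same reason, and that combined with the object-level equalities from Theorem~\ref{thm:corresp-immediate} this exhibits the two functors as mutually inverse, i.e.~as an isomorphism of categories.
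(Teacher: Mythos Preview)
Your proposal is correct and follows the same underlying idea as the paper: since $\OTC(-)$ and $\Events_{(-)}$ are the identity on the underlying functors, the whole result reduces to checking that the two notions of morphism coincide. The paper's proof is extremely terse --- it simply observes that preserving the initial (hence zero) object is equivalent to preserving zero morphisms, and that in a representable theory tests and coproducts are interdefinable via Definition~\ref{def:representable} and Lemma~\ref{lem:characterising-rep}, so a functor preserves one structure iff it preserves the other.

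Your version spells out explicitly what the paper leaves implicit, particularly in the harder direction: you construct the inverse $\psi = \bigovee_i \kappa_i \circ F(\triangleright_i)$ to the comparison map and verify it is a two-sided inverse using that $F$ preserves $\ovee$, $0$, and the identity $\bigovee_i \coproj_i \circ \triangleright_i = \id{A}$ from the proof of Lemma~\ref{lem:characterising-rep}. This is exactly the content hidden in the paper's phrase ``both notions of morphism may be seen to be identical''; your argument is simply a more careful unpacking of it. The point you flag as delicate --- that $\ts{\kappa_i \circ F(\triangleright_i)}_i$ is a partial test --- is handled by noting $(\triangleright_i)_i$ is a test, hence so is $(F(\triangleright_i))_i$, and each $\kappa_i$ is causal, so control applies.
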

\begin{proof}
Since the initial object is a zero object, any functor preserving this preserves zero morphisms and vice versa. In a representable theory tests and coproducts may each be defined in terms of each other using Definition~\ref{def:representable} and Lemma~\ref{lem:characterising-rep}, and so both notions of morphism may be seen to be identical.
\end{proof}

Representability can also be made functorial. We define a category $\OTRepStrict$\label{cat:OTRepStrict} just like $\OTRep$, but now consider theories for which each collection $(A_\x)_{\x \in \X}$ comes with a \emps{specified} representing object $A$ and test $\ts{\triangleright_\x \colon A \to A_\x}_{\x \in \X}$, and require morphisms $F$ to preserve these strictly. 

\begin{theorem} 
The assignment $\Theta \mapsto \Theta^+$ extends to an adjunction 
\[
\scalebox{1.0}{\input{./figures/Rep-adj-mod.tikz}}
\]
where $U$ is the forgetful functor.
\end{theorem}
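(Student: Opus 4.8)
The plan is to establish the adjunction $(-)^+ \dashv U$ via the universal-arrow characterisation: I would exhibit a natural unit $\eta_\Theta \colon \Theta \to U(\Theta^+)$ and then show that for every $\Theta'$ in $\OTRepStrict$ and every morphism of theories $F \colon \Theta \to U(\Theta')$ there is a unique morphism $\hat F \colon \Theta^+ \to \Theta'$ of $\OTRepStrict$ with $U(\hat F) \circ \eta_\Theta = F$. A preliminary point is that $\Theta^+ = \OTC(\PTest(\Theta))$ genuinely lands in $\OTRepStrict$: the category $\PTest(\Theta)$ comes with \emph{canonical} coproducts, given on objects by disjoint union of index sets, and these supply the specified representing objects and projections $\triangleright_\x$ required of an object of $\OTRepStrict$.

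For the unit I would take $\eta_\Theta$ to be the canonical embedding of $\Events_\Theta$ into $\PTest(\Theta) = \Events_{\Theta^+}$, sending a system $A$ to the singleton collection $\ts{A}$ and an event $f \colon A \to B$ to itself, regarded now as a morphism $\ts{A} \to \ts{B}$. Using~\eqref{eq:tensor-Kronecker} and the definition of $\discard{}$ on $\PTest(\Theta)$ one checks this is strong symmetric monoidal and preserves discarding, $0$ and $\ovee$. It preserves tests: a test $\ts{f_\x \colon A \to B_\x}_{\x \in \X}$ of $\Theta$ is in particular a partial test, hence defines a morphism $g \colon \ts{A} \to \ts{B_\x}_{\x \in \X}$ in $\PTest(\Theta)$ which is causal (as the collection is a test) with $\triangleright_\x \circ g = \eta_\Theta(f_\x)$, so $\ts{\eta_\Theta(f_\x)}_{\x \in \X}$ is a test in $\Theta^+$ by Theorem~\ref{thm:OpCatToOpTheory}. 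Naturality in $\Theta$ is routine.

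Given $F \colon \Theta \to U(\Theta')$, I would define $\hat F \colon \Theta^+ \to \Theta'$ by sending a system $\ts{A_\x}_{\x \in \X}$ to the specified representing object in $\Theta'$ of the collection $\ts{F(A_\x)}_{\x \in \X}$ (so that $\hat F(\ts{A}) = F(A)$, whence $U(\hat F)\circ\eta_\Theta = F$), and a morphism $M \colon \ts{A_\x}_{\x \in \X} \to \ts{B_\y}_{\y \in \Y}$ to the unique event with $\triangleright_\y \circ \hat F(M) \circ \coproj_\x = F(M(\x,\y))$ for all $\x,\y$; this is well-defined because each $\ts{F(M(\x,\y))}_{\y \in \Y}$ is a partial test in $\Theta'$ (as $F$ preserves tests, hence partial tests, since any partial test extends to a test) and $\Theta'$ is representable with the $\triangleright_\y$ jointly monic, by Lemma~\ref{lem:characterising-rep}. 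I would then verify, largely by the diagram-chasing already used in Theorem~\ref{thm:corresp-immediate}, that $\hat F$ is functorial (from the composition rule~\eqref{eq:partial-mat-comp} and $F$ preserving $\circ$ and $\ovee$), strong monoidal with discarding (from~\eqref{eq:tensor-Kronecker}), and preserves $0$, $\ovee$, tests (a test in $\Theta^+$ is a causal morphism and $\hat F$ preserves causality since it preserves discarding) and the specified representing data strictly by construction — so that $\hat F$ is a morphism of $\OTRepStrict$. Uniqueness I would obtain by noting that any $\hat G$ with $U(\hat G)\circ\eta_\Theta = F$ agrees with $\hat F$ on singleton systems and the morphisms between them, hence on every system (each $\ts{A_\x}_{\x \in \X}$ being the specified representing object of the singletons $\ts{A_\x}$, which $\hat G$ preserves strictly) and on every morphism (since $\hat G$ preserves coproducts, so the jointly monic $\triangleright_\y$, and $\triangleright_\y \circ \hat G(M) \circ \coproj_\x = \hat G(M(\x,\y)) = F(M(\x,\y))$). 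The basic case is identical with non-empty copowers replacing coproducts.

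I expect the main obstacle to be bookkeeping rather than conceptual, since the content is essentially the functorial upgrade of the equivalence $\Theta \simeq \Theta^+$ established around Theorem~\ref{thm:corresp-immediate}. The genuinely delicate point is handling the \emph{specified} structure of $\OTRepStrict$ correctly — identifying precisely which object of $\OTRepStrict$ the theory $\Theta^+$ is (via the canonical coproducts of $\PTest(\Theta)$), and using strict preservation of that structure to pin down both the object-part of $\hat F$ and its uniqueness; without the strictness the left adjoint would only be determined up to isomorphism.
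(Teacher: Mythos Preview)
Your proposal is correct and follows essentially the same approach as the paper: identify the canonical specified representations on $\Theta^+$ via disjoint union of index sets, define $\hat F$ on objects as the specified representing object of $\ts{F(A_\x)}_{\x \in \X}$ and on morphisms by $\triangleright_\y \circ \hat F(M) \circ \coproj_\x = F(M(\x,\y))$, and obtain uniqueness from joint monicity of the $\triangleright_\y$ together with strict preservation of the specified data. Your write-up is in fact considerably more detailed than the paper's, which sketches only the definition of $\widehat{F}$ and leaves the verifications implicit; your observation that the strictness is precisely what pins down both the object part of $\hat F$ and its uniqueness is exactly the point.
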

\begin{proof}
For any theory $\Theta$, $\Theta^+$ has a specified representation of each indexed collection of objects $(\ts{A_\y}_{\y \in \Y_\x})_{\x \in \X}$ given by the object $\ts{A_\y}_{\x \in \X, \y \in \Y_\x}$. For any similar theory $\Theta'$, any morphism $F \colon \Theta \mapsto \Theta'$ may be seen to have a unique extension to one $\widehat{F} \colon \Theta^+ \to \Theta'$ in $\OTRepStrict$. 

In detail, we set $\widehat{F}(\ts{A_\x}_{\x \in \X})$ to be the representing system of the collection $\ts{F(A_\x)}_{\x \in \X}$ in $\Theta'$, and for each morphism $M \colon (A_\x)_{\x \in \X} \to (B_\y)_{\y \in \Y}$ define $\widehat{F}(M)$ to be unique with $\triangleright_\y \circ \widehat{F}(M) \circ \coproj_\x = M(\x,\y)$ for all $\x,\y$.
\end{proof}

A similar result can be given without requiring strictness, simply in terms of $\OTRep$ itself, using the language of \emps{2-categories}. However we will not pursue this here. The corresponding results for basic operational theories are functorial in just the same way.

\subsection{Interlude: theories as multicategories} \label{sec:multicats}

 There is another perspective on operational theories which sheds light on their relationship with categories. Let us draw a (partial) test $\ts{f_i \colon A \to B_i}^n_{i=1}$ as
\[
\scalebox{0.6}{\input{./figures/multicat1.tikz}}
\]
with its single input system $A$ and each of its $n$ outcomes corresponding to an output system $B_i$. Thanks to control we can  `plug in' any other (partial) test with input $B_k$, for some $k$,  to make a new (partial) test:
\[
\scalebox{0.6}{\input{./figures/multicat3.tikz}}
\quad
\mapsto
\quad
\scalebox{0.6}{\input{./figures/multicat2.tikz}}
\]
A general mathematical structure containing such composable `multi-arrows' is that of a \index{multicategory}\emps{multicategory}~\cite[Chapter~2]{leinster2004higher}. These are usually defined like categories, except with arrows allowing multiple inputs $\theta \colon A_1, \dots, A_n \to B$, with a common example being where the $\theta$ are the operations of a (multi-sorted) algebraic theory. To treat operational theories however it is natural to instead flip this picture and think of multi-arrows as having multiple outputs $f \colon A \to B_1, \dots, B_n$ as above. 

 Now our basic assumptions about (partial) tests mean that they form a special kind of multicategory. Firstly, we can always relabel our outcomes, making the multicategory \emps{symmetric}~\cite[p.~54]{leinster2004higher}, with swap maps
\[
\scalebox{0.6}{\input{./figures/multicat4.tikz}}
\mapsto
\scalebox{0.6}{\input{./figures/multicat4swap.tikz}}
\]
which allow us to perform any permutation on outputs. Next, by inserting impossible events $0 \colon A \to B_{n+1}$ we can always add extra redundant outputs: 
\[
\scalebox{0.6}{\input{./figures/multicat5.tikz}}
\mapsto
\scalebox{0.6}{\input{./figures/multicat5-insert.tikz}}
\]
and the operation of coarse-graining $\ovee$ allows us to merge any two outputs of the same type, which we may depict as:
\[
\scalebox{0.6}{\input{./figures/multicat6.tikz}}
\mapsto
\scalebox{0.6}{\input{./figures/multicat6-cg.tikz}}
\]
Together, these features make the multicategory of partial tests \emps{Cartesian}~\cite[4.1]{pisani2014sequential}. 
Hence an operational theory may be equivalently defined as a Cartesian multicategory with extra features, namely a `monoidal' structure $\otimes$ on multi-arrows, as well as discarding $\discard{A}$ and zero multi-arrows, satisfying certain properties.  

\paragraph{Representablility}

The correspondence between representable operational theories and operational categories can be readily understood in this context. 

 In general any monoidal category $(\catC, \boxtimes)$ defines a multicategory $\Multicat(\catC)$ whose multi-arrows $f \colon A \to B_1, \dots, B_n$ are morphisms $f \colon A \to B_1 \boxtimes \dots \boxtimes B_n$ in $\catC$~\cite[p.~36]{leinster2004higher}. Conversely, a multicategory $\mcat{M}$ arises in this way precisely when it is \emps{representable}, meaning that for every tuple $B_1, \dots, B_n$ it has an object $B$ and multi-arrow 
\[
\scalebox{0.6}{\input{./figures/multicat8.tikz}}
\]
such that for every multi-arrow $f \colon A \to B_1, \dots, B_n$ there is a unique $g \colon A \to B$ with 
\[
\scalebox{0.6}{\input{./figures/multicat7.tikz}}
=
\scalebox{0.6}{\input{./figures/multicat9.tikz}}
\]
and moreover that these multi-arrows $\star$ are closed under composition~\cite{hermida2000representable}. Then the category $\mcat{M}_0$ of multi-arrows in $\mcat{M}$ of the form $f \colon A \to B$ has a monoidal structure $\boxtimes$ and there is an equivalence $\mcat{M} \simeq \Multicat(\mcat{M}_0)$~\cite[Theorem 3.3.4]{leinster2004higher}. Moreover when $\mcat{M}$ is a Cartesian multicategory $\mcat{M}_0$ then has finite (co)products, and these provide its monoidal structure $A \boxtimes B =A + B$~\cite[4.9]{pisani2014sequential}. 

In fact by unravelling the definitions one sees that an operational theory is representable in our earlier sense precisely when its multicategory $\mcat{M}$ of partial tests is representable (in a way compatible with $\discard{A}$), with $\mcat{M}_0$ then being an operational category. 

\begin{remark} 
 Beyond multicategories, there has been much study of \emps{generalised multicategories} in which (co)domains $B_1, \dots, B_n$ of multi-arrows are replaced by more general structures~\cite[Chapter~4]{leinster2004higher}, and representability has been considered also in this setting~\cite{cruttwell2010unified}. 

These should allow one to include basic operational theories and their representability in the same picture, by taking multi-arrows to be of the form $A \to (B,n)$ for some object $B$ and $n \in \mathbb{N}$. More generally, one may hope to describe more complex notions of operational theory, for example those including tests with infinitely many outcomes $(f \colon A \to B_i)^\infty_{i=1}$, or outcomes as subsets of $\mathbb{R}$, modelling continuous measurements.
\end{remark}

\section{Further Axioms for Theories} \label{sec:FurtherAxioms}

There are several more basic assumptions which we may have expected to form a part of our definition of an operational theory, and which are often automatic in other frameworks such as~\cite{chiribella2010purification,EffectusIntro}. We first list several of these, before examining their categorical consequences.

\subsection{Positivity}

Our first new property reflects our interpretation of $\discard{}$ and coarse-graining.

\begin{definition}
We call a (basic) operational theory $\Theta$ \deff{positive}\index{operational theory!positive} when it satisfies
\begin{align*}
\discard{} \circ f = 0 & \implies f = 0\\ 
f \ovee g =0 & \implies f = g = 0
\end{align*}
for all events $f, g$.
\end{definition}

This is a natural assumption to make; intuitively, if either of `$f$ occurs and then the system is discarded' or `$f$ or $g$ occurs' are impossible, then so is $f$. 

\begin{lemma}
A (basic) operational theory $\Theta$ is positive iff in $\Theta^+$ we have that $\discard{} \circ f = 0 \implies f = 0$ for all events $f$.
\end{lemma}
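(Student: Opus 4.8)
The plan is to work concretely inside the category of events of $\Theta^+$, which by definition is $\PTest(\Theta)$: its morphisms $M\colon \ts{A_\x}_{\x\in\X}\to\ts{B_\y}_{\y\in\Y}$ are matrices of events of $\Theta$ whose columns are partial tests, discarding is $\discard{\ts{B_\y}_\y}=\ts{\discard{B_\y}}_\y$, composition is matrix composition via $\ovee$, and the partial addition $\ovee$ of $\Theta^+$ is built from the binary coproducts $B+B=\ts{B,B}$ and the codiagonal $\triangledown$ as in Theorem~\ref{thm:OpCatToOpTheory}. The key elementary fact I will use repeatedly is that, by Axiom~\ref{assump:cg}, composing with $\discard{}$ commutes with $\ovee$, so that $\discard{B}\circ(\bigovee_i f_i)=\bigovee_i(\discard{B}\circ f_i)$.

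For the direction $(\Leftarrow)$, I assume the displayed implication holds in $\Theta^+$ and derive the two positivity conditions for $\Theta$. The first is immediate: an event $f$ of $\Theta$ is also an event of $\Theta^+$ via the one-element collection $\ts{A}$, and this correspondence preserves $\discard{}$ and $0$, so $\discard{}\circ f=0$ in $\Theta$ forces $f=0$ in $\Theta^+$ and hence in $\Theta$. For the second, let $f,g\colon A\to B$ be compatible in $\Theta$ with $f\ovee g=0$. Then the column $h\colon\ts{A}\to\ts{B,B}$ with entries $f$ and $g$ is a bona fide morphism of $\PTest(\Theta)$ (since $\ts{f,g}$ is a partial test), and unwinding composition gives $\discard{\ts{B,B}}\circ h=(\discard{B}\circ f)\ovee(\discard{B}\circ g)=\discard{B}\circ(f\ovee g)=0$. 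The hypothesis then yields $h=0$, i.e.\ $f=g=0$.

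For the direction $(\Rightarrow)$, I assume $\Theta$ positive and take any $M\colon\ts{A_\x}_\x\to\ts{B_\y}_\y$ in $\PTest(\Theta)$ with $\discard{}\circ M=0$. Reading off the matrix formulas, this says $\bigovee_{\y\in\Y}\discard{B_\y}\circ M(\x,\y)=0$ in $\Theta$ for every $\x$. An easy induction on $|\Y|$ using the second positivity condition (with the empty case trivial, since then the codomain is the zero object and $M=0$ automatically) shows each $\discard{B_\y}\circ M(\x,\y)=0$; the first positivity condition then gives $M(\x,\y)=0$ for all $\x,\y$, i.e.\ $M=0$.

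Both the basic and proper cases run identically, since the objects $\ts{A}$, $\ts{B,B}$ and $\ts{I}=I$ and their discarding maps look the same in either flavour of $\PTest(\Theta)$. I do not expect a genuine obstacle; the only points requiring a little care are the correct bookkeeping of the matrix composition and discarding formulas in $\PTest(\Theta)$, and the observation (from Axiom~\ref{assump:cg}) that post-composition with $\discard{}$ distributes over $\ovee$ — this is exactly what collapses the two conditions defining positivity of $\Theta$ into the single condition on $\Theta^+$.
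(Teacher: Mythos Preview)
Your proof is correct and takes essentially the same approach as the paper: unwind events of $\Theta^+=\OTC(\PTest(\Theta))$ as matrices whose columns are partial tests in $\Theta$, so that the condition $\discard{}\circ f=0\implies f=0$ in $\Theta^+$ becomes ``any partial test $\ts{f_\x}_\x$ in $\Theta$ with $\bigovee_\x\discard{}\circ f_\x=0$ has all $f_\x=0$'', and then observe this is equivalent to the two clauses of positivity. The paper compresses this into two sentences; you have simply spelled out the bookkeeping (the explicit column $h=(f,g)$ for the second clause, and the induction on $|\Y|$ for the converse), which is fine.
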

\begin{proof}
From the definition of $\Theta^+$ this is equivalent to requiring that any partial test $\ts{f_\x}_{\x \in  \X}$ in $\Theta$ with $\bigovee_{\x \in  \X} \discard{} \circ f_\x = 0
$ 
has $f_\x = 0$ for all $\x \in \X$. Thanks to the properties of $\ovee$ this is equivalent to positivity of $\Theta$.
\end{proof}

\subsection{Complements} \label{subsec:complements}
The next property fits the interpretation of effects as outcomes of binary tests.

\begin{definition}\label{not:complements}
An operational theory is \deff{complemented}\index{operational theory!complemented} when for every effect $e$ there is a unique effect $e^\bot$ for which $\ts{e,e^\bot}$ is a test.\footnote{In~\cite{mainpaper} we originally only considered complemented operational theories, calling them `operational theories with control'.}
\end{definition}

We call the effect $e^\bot$ the \indef{complement}\index{complement} of $e$, thinking of it as simply stating that `$e$ did not occur'. In general such an effect $e^\bot$ exists but is not necessarily unique. Note that complementation in fact automatically ensures causality of a theory.

\begin{lemma}
Let $\Theta$ satisfy all the conditions of an operational theory aside from Axiom~\ref{assump:causality}, and be complemented in the above sense. Then $\Theta$ satisfies causality iff $\ts{\id{I}}$ and $\ts{\lambda_I}$ form tests.  
\end{lemma}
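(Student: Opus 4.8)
The plan is to prove the two implications separately; the forward one is immediate, and all the work is in the converse. If $\Theta$ satisfies Axiom~\ref{assump:causality}, then $\id{I}$ is causal since $\discard{} \circ \id{I} = \discard{I} = \id{I}$, and $\lambda_I$ is causal by Lemma~\ref{lem:coherence-isoms-causal}. As every event lies in a test (Axiom~\ref{assump:tests}), $\ts{\id{I}}$ and $\ts{\lambda_I}$ are single‑morphism partial tests, and being causal they are tests by the biconditional in Axiom~\ref{assump:causality}.

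For the converse, assume $\ts{\id{I}}$ and $\ts{\lambda_I}$ are tests. I would first unpack complementation: if $\ts{e,e^\bot}$ is a test then so is $\ts{e^\bot,e}$, so by uniqueness $(e^\bot)^\bot = e$, i.e.\ $(-)^\bot$ is an involution, hence a bijection, on the effects of each object. It follows that every object $A$ carries a \emph{unique} effect $u_A$ with $\ts{u_A}$ a test, namely the complement of $0 \colon A \to I$ (a test by Axiom~\ref{assump:zeroes}), since any singleton test $\ts{e}$ has $\ts{e,0}$ a test, forcing $e^\bot = 0$ and $e = u_A$. Because $\ts{\id{I}}$ is a test, $u_I = \id{I} = \discard{I}$.

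The crux is then to show $u_A = \discard{A}$ for all $A$, equivalently that $\ts{\discard{A}}$ and $\ts{\id{A}}$ are tests. The base case is $A = I$. The mechanism to propagate it is tensor‑closure of tests together with $\ts{\lambda_I}$ being a test: since $\discard{A\otimes B}$ is $\discard{A}\otimes\discard{B}$ followed by the coherence isomorphism $I \otimes I \cong I$, if $\ts{\discard{A}}$ and $\ts{\discard{B}}$ are tests then so is $\ts{\discard{A}\otimes\discard{B}}$, and post‑composing with $\ts{\lambda_I}$ by control (Axiom~\ref{assump:strong-control}) gives that $\ts{\discard{A\otimes B}}$ is a test; for a general object one takes a test containing $\discard{A}$ (resp.\ $\id{A}$), post‑composes every other outcome $g_j \colon A \to C_j$ with the already‑available singleton test $\ts{u_{C_j}}$, coarse‑grains, and uses complementation to force the leftover effect to be $\discard{A}^\bot = 0$. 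I expect this last argument to be the main obstacle: the control step presupposes the discards of the intermediate systems $C_j$ to be total already, so the induction must be organised carefully, and it is precisely here — not in any merely formal role — that the hypothesis ``$\ts{\lambda_I}$ is a test'' earns its place.

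Granting the key step, the biconditional follows. Given a test $\ts{f_i \colon A \to B_i}_{i \in X}$, controlling it with the singleton tests $\ts{\discard{B_i}}$ and coarse‑graining makes $\ts{\bigovee_{i\in X}\discard{}\circ f_i}$ a test, hence $\bigovee_{i\in X}\discard{}\circ f_i = u_A = \discard{A}$, so every test is causal. Conversely, a causal partial test $\ts{f_i}_{i\in X}$ sits inside a test $S = \ts{f_i}_{i\in X} \cup \ts{g_j}_{j\in J}$; applying the previous direction to $S$ and using causality of $\ts{f_i}_i$ gives $\discard{A}\,\ovee\,\bigovee_j\discard{}\circ g_j = \discard{A}$, while controlling $S$ with discards and coarse‑graining makes $\ts{\discard{A},\ \bigovee_j\discard{}\circ g_j}$ a test, so complementation forces $\bigovee_j\discard{}\circ g_j = \discard{A}^\bot = 0$. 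Controlling $S$ instead with $\ts{\id{B_i}}$ on the $f_i$‑outcomes and $\ts{\discard{C_j}}$ on the $g_j$‑outcomes, then coarse‑graining the latter block, exhibits $\ts{f_i}_{i\in X}\cup\ts{0}$ as a test, so $\ts{f_i}_{i\in X}$ is a test by Axiom~\ref{assump:zeroes}.
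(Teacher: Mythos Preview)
Your difficulty in the converse is self-inflicted. Axiom~\ref{assump:causality} is what \emph{supplies} the discarding maps; without it there is no pre-given $\discard{A}$ against which to match your $u_A$. The paper simply \emph{defines} $\discard{A} := (0 \colon A \to I)^\bot = u_A$, after which the only thing to check is that $(\Events_\Theta,\discard{})$ is a monoidal category with discarding: $\discard{I} = \id{I}$ you already have, and tensor-compatibility is exactly your ``mechanism to propagate'' --- $\ts{\discard{A}}\otimes\ts{\discard{B}}$ is a test by Axiom~\ref{assump:tests}, control with $\ts{\lambda_I}$ makes $\ts{\lambda_I\circ(\discard{A}\otimes\discard{B})}$ a test, and uniqueness of singleton-test effects gives $\discard{A\otimes B}=\lambda_I\circ(\discard{A}\otimes\discard{B})$. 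No induction and no circularity; the singleton tests $\ts{\discard{C_j}}$ you worry about are tests \emph{by construction}. (Under your reading, with $\discard{}$ given independently of the test structure, the lemma would in fact fail: nothing in the hypotheses forces the given $\discard{A}$ to coincide with $u_A$ on an object not built from $I$.)

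Your final paragraph is then close to the paper's, which argues more compactly: by Axiom~\ref{assump:zeroes} a partial test $\ts{f_i}$ is a test iff the (unique) effect $e$ with $\ts{f_1,\dots,f_n,e}$ a test is zero, and controlling with discards shows any such $e$ has $e^\bot = \bigovee_i \discard{}\circ f_i$, so $e = 0$ iff this sum equals $\discard{}$. Note that your version additionally invokes $\ts{\id{B_i}}$ being a test in the last control step, which you have not established at that point.
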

\begin{proof}
The conditions hold in any operational theory by Axiom~\ref{assump:causality} and Lemma~\ref{lem:coherence-isoms-causal}. Conversely, for any object $A$ define $\discard{A} = (0 \colon A \to I)^\bot$, so that $\discard{A}$ is the unique effect for which $\ts{\discard{A}}$ is a test. Since tests are closed under $\otimes$, by the above assumptions $(\Events_\Theta, \discard{})$ then forms a category with discarding.

Now by Axiom~\ref{assump:zeroes} any partial test $\ts{f_i}^{n}_{i=1}$ forms a test iff the unique effect $e$ for which $\ts{f_1, \dots, f_n, e}$ is a test has that $e=0$. But since
\[
e^\bot  = \big( \bigovee^n_{i=1} \discard{} \circ f_i \big)
\]
this holds iff the right-hand sum is equal to $\discard{}$, as in Axiom~\ref{assump:causality}. 
\end{proof}

\subsection{Algebraicity} \label{sec:algebraic}

We have seen two approaches to axiomatizing operational physical theories, based on allowing tests to have events with either varying or non-varying output systems. In fact in most examples the choice is inconsequential, thanks to the following properties which may hold in a theory of either form. 

\begin{definition}
A (basic) operational theory:
\begin{itemize}
\item has that \deff{observations determine tests}\index{observations determine tests} if any suitable collection of events $\ts{f_\x}_{\x \in \X}$ forms a partial test whenever $\ts{\discard{} \circ f_\x}_{\x \in \X}$ does;
\item 
is \deff{algebraic} if whenever $\ts{f \ovee g, h_1, \dots, h_n}$ is a partial test so is $\ts{f, g, h_1, \dots, h_n}$;
\item 
is \deff{strongly algebraic}\index{operational theory!(strongly) algebraic} when both hold.
\end{itemize}
\end{definition}

These may all be seen as `no restriction' properties, stating that any collection of events which might plausibly form a partial test in fact do. 

\begin{lemma}
A (basic) operational theory $\Theta$ is strongly algebraic precisely when observations determine tests in $\Theta^+$.
\end{lemma}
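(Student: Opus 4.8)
The plan is to translate the condition ``observations determine tests in $\Theta^{+}$'' into an explicit statement about partial tests of $\Theta$, and then check directly, in each direction, that this statement is equivalent to strong algebraicity. I give the argument for a proper operational theory; the basic case is entirely analogous, with copowers and the objects $(A,X)$ of $\PTest(\Theta)$ in place of general coproducts and the objects $\ts{A_\x}_{\x\in\X}$.

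First, the translation. The events of $\Theta^{+}=\OTC(\PTest(\Theta))$ are the morphisms of $\PTest(\Theta)$, and an event from a single-system object $\ts{A}$ to an object $\ts{B^{(a)}}_{a\in S}$ is precisely a partial test $\ts{g^{(a)}\colon A\to B^{(a)}}_{a\in S}$ of $\Theta$. Under this identification $\discard{}\circ g$ is the effect $\bigovee_{a\in S}\discard{}\circ g^{(a)}$, and a family of such events forms a partial test in $\Theta^{+}$ if and only if the ``merged'' family forms a partial test of $\Theta$: this uses the correspondence (recorded after Theorem~\ref{thm:OpCatToOpTheory}) between partial tests in an operational category and morphisms into coproducts, together with the facts that coproducts in $\PTest(\Theta)$ are disjoint unions of index sets and that discarding there coarse-grains over those sets. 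Since ``observations determine tests'' is applied separately in each domain index, it holds in $\Theta^{+}$ for all objects exactly when it holds for events out of single-system objects, and after the above translation this becomes the statement $(\star)$: \emph{for every family $\bigl(\ts{g_\x^{(a)}\colon A\to B_\x^{(a)}}_{a\in S_\x}\bigr)_{\x\in\X}$ of partial tests of $\Theta$ with common domain $A$, if $\ts{\bigovee_{a\in S_\x}\discard{}\circ g_\x^{(a)}}_{\x\in\X}$ is a partial test of $\Theta$, then so is $\ts{g_\x^{(a)}}_{(\x,a)}$.}

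For $(\star)\Rightarrow$ strongly algebraic: taking each $S_\x$ to be a singleton makes the standing hypothesis of $(\star)$ vacuous and $\bigovee_{a}\discard{}\circ g_\x^{(a)}=\discard{}\circ g_\x$, so $(\star)$ reduces to ``observations determine tests'' for $\Theta$. For algebraicity, suppose $\ts{f\ovee g,h_1,\dots,h_n}$ is a partial test --- so in particular $\ts{f,g}$ is --- and apply $(\star)$ with $\X=\{0,1,\dots,n\}$, $S_0=\{1,2\}$, $g_0^{(1)}=f$, $g_0^{(2)}=g$ and the other $S_\x$ singletons with $g_\x=h_\x$. By Axiom~\ref{assump:cg} (distributivity of $\circ$ over $\ovee$) the hypothesis family of $(\star)$ is $\ts{\discard{}\circ(f\ovee g),\discard{}\circ h_1,\dots,\discard{}\circ h_n}$, which is the componentwise post-composite of the partial test $\ts{f\ovee g,h_1,\dots,h_n}$ with $\discard{}$ and hence itself a partial test (post-composition by events preserves partial tests, via the control axiom). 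So $(\star)$ yields that $\ts{f,g,h_1,\dots,h_n}$ is a partial test, i.e.\ $\Theta$ is algebraic. Conversely, for strongly algebraic $\Rightarrow(\star)$: given the data of $(\star)$, post-composing each $\ts{g_\x^{(a)}}_{a\in S_\x}$ with $\discard{}$ shows each $\ts{\discard{}\circ g_\x^{(a)}}_{a\in S_\x}$ is a partial test, so the coarse-grainings are defined; starting from the hypothesis that $\ts{\bigovee_{a\in S_\x}\discard{}\circ g_\x^{(a)}}_{\x\in\X}$ is a partial test and repeatedly applying algebraicity to split one summand off some $\bigovee_{a\in S_\x}\discard{}\circ g_\x^{(a)}$ at a time --- each split legitimate since coarse-graining within a partial test again yields a partial test (Axiom~\ref{assump:cg}) --- we obtain after finitely many steps that $\ts{\discard{}\circ g_\x^{(a)}}_{(\x,a)}$ is a partial test, whence, since observations determine tests in $\Theta$, so is $\ts{g_\x^{(a)}}_{(\x,a)}$, which is the conclusion of $(\star)$.

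The main obstacle is the translation step: one must carefully push the definition of ``observations determine tests'' through the $\PTest(-)$ and $\OTC(-)$ constructions, keeping track of the varying output-index sets and of the fact that discarding in $\PTest(\Theta)$ coarse-grains over them, to arrive at the clean statement $(\star)$. The remaining ingredients --- post-composition by events preserves partial tests, coarse-graining within a partial test yields a partial test, and $\discard{}\circ(f\ovee g)=\discard{}\circ f\ovee\discard{}\circ g$ --- are all immediate from Axioms~\ref{assump:tests}--\ref{assump:strong-control}, and the two directions are then routine specialisation and finite induction.
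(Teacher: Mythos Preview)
Your proof is correct and follows essentially the same approach as the paper. The paper works more directly in $\Theta^{+}$---for the algebraicity half it introduces the morphism $k\colon A\to B+B$ with $\triangleright_1\circ k=f$, $\triangleright_2\circ k=g$, which is exactly your $S_0=\{1,2\}$ instance of $(\star)$---while you make the translation explicit by isolating the statement $(\star)$; but the underlying argument (split coarse-grainings via algebraicity, then apply observations-determine-tests, and conversely specialise to singletons and to a two-element slot) is the same.
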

\begin{proof}
Suppose first that $\Theta$ is strongly algebraic, and consider a collection of events $\ts{f^1, \dots, f^n}$ in $\Theta^+$ for which $\ts{\discard{} \circ f^\oi}^n_{\oi=1}$ is a partial test in $\Theta^+$. Without loss of generality we may suppose that each event $f^\oi$ is a partial test $\ts{f^\oi_\x}_{\x \in \X_\oi}$ in $\Theta$. Then so is the following 
\[
\ts{\bigovee_{\x_1 \in \X_1} \discard{} \circ f^1_{\x_1}, \dots, \bigovee_{{\x_n} \in \X_n} \discard{} \circ f^n_{\x_n}}
\]
and so by algebraicity $\ts{\discard{} \circ f^\oi_\x}^{\oi=1,\dots,n}_{\x \in \X_\oi}$ is also a partial test in $\Theta$. Since observations determine tests $\ts{f^\oi_\x}^{\oi=1,\dots,n}_{\x \in \X_\oi}$ is then a partial test in $\Theta$, making $\ts{f^1, \dots, f^n}$ one in $\Theta^+$ as required.

Conversely, if observations determine tests in $\Theta^+$ then clearly they also do in $\Theta$. Now suppose that $(f \ovee g, h_1, \dots, h_n)$ is a partial test in $\Theta$, for some $f, g \colon A \to B$. Then in $\Theta^+$ the following is a partial test
\[
(\discard{} \circ k, \discard{} \circ h_1, \dots, \discard{} \circ h_n)
\]
where $k \colon A \to B + B$ is the unique morphism with $\triangleright_1 \circ k = f$ and $\triangleright_2 \circ k = g$. Hence in $\Theta^+$ so is $\ts{k, h_1, \dots, h_n}$. Composing with the morphisms $\triangleright_1, \triangleright_2$,  it follows that $\ts{f, g, h_1, \dots, h_n}$ is a partial test also. 
\end{proof}

\paragraph{PCMs}
In an algebraic theory of either form, coarse-graining $\ovee$ provides each collection of events $\Events_\Theta(A,B)$ with the following well-behaved structure. For two expressions $e_1$, $e_2$ referring to a partial operation we write $e_1 \kleene e_2$ to mean that $e_1$ is defined precisely when $e_2$ is, and that in this case both are equal.

\begin{definition} \label{def:PCM} \label{not:PCM}
A \indef{partial commutative monoid} (PCM)\index{partial commutative monoid}\index{PCM|see {partial commutative monoid}}~\cite{foulis1994effect} is a set $M$ together with a partial binary operation $\ovee$ and element $0$ satisfying
\[
a \ovee (b \ovee c) \kleene (a \ovee b) \ovee c
\qquad
a \ovee b \kleene b \ovee a
\qquad
a \ovee 0 \kleene a
\]
  for all $a, b, c \in M$. We often write $\bigovee^n_{i=1} a_i$ for the expression $a_1 \ovee (a_2 \ovee (\dots a_n))$.
\end{definition}

Indeed in any theory coarse-graining automatically satisfies all but the first condition of a PCM, which now follows from algebraicity. Since coarse-graining is respected by composition thanks to Axiom~\ref{assump:cg}, this makes each category $\Events_{\Theta}$ \emps{enriched} in partial commutative monoids.

In fact in the presence of (strong) algebraicity this PCM structure suffices to determine the rest of the theory, removing the need for much distinction between proper and basic such operational theories. In a strongly algebraic theory of either form we simply have that a suitable collection $(f_i \colon A \to B)^n_{i=1}$ or $(f_i \colon A \to B_i)^n_{i=1}$ forms a partial test precisely when the sum
\[
\bigovee^n_{i=1} \discard{} \circ f_i
\]
is defined, and a test when this is equal to $\discard{A}$.
Hence we may equivalently define a strongly algebraic theory as symmetric monoidal category with discarding $(\catC, \otimes, \discard{})$ which is enriched in PCMs and satisfies some mild conditions; we return to this and make it precise in Section~\ref{sec:Subcausalcats} of the next chapter.

\begin{remark}[\textbf{D-Test Spaces}] \label{rem:D-Test}
In~\cite{dvurevcenskij1994d}, Dvure{\v{c}}enskij and Pulmannov{\'a} introduced the notion of a \emps{D-test space}, generalising a similar concept due to Foulis and Randall~\cite{foulis1972operational}. Such a structure consists of a collection $T$ of (here finite) indexed sets $t=(x_i)^n_{i=1}$ called \emps{D-tests}, whose elements are called \emps{outcomes}, such that whenever $s, t \in T$ and $t$ extends $s$ then $s = t$.

It is easy to see that any system $A$ of a complemented, positive operational theory $\Theta$ defines a D-test space
\begin{align*}
T &:= \{ \text{ Tests  } \ts{e_i \colon A \to I}^n_{i=1} \mid \text{each $e_i$ is non-zero} \}
\end{align*}
as well as a broader one 
\[
S := \{ \text{ Tests  } \ts{f_i \colon A \to B_i}^n_{i=1} \mid \text{each $f_i$ is non-zero} \}
\]
ignoring size issues from the fact that $S$ may not strictly be a set. Whenever $\Theta$ is algebraic, each of these are then \emps{D-algebraic} in the sense of~\cite[5.1]{dvurevcenskij1994d}, and in fact such special D-Test spaces correspond to \emps{effect algebras}\index{effect algebra}, well-known structures from quantum logic; see~\cite[6.1]{dvurevcenskij1994d}, ~\cite{foulis1994effect} and~\cite{paulinyova2014d}. We thank a referee of~\cite{mainpaper} for suggesting this connection.
\end{remark}

\subsection{Examples} \label{axioms:examples}

The theories $\ClassDet$, $\ClassProb$, $\QuantOT$ and $\CStarOTC$ are all positive and complemented, with their operation $\ovee$ being \emps{cancellative} in that $f \ovee g = f \ovee h \implies g = h$ for all events $f, g, h$. The same holds for any causal probabilistic theory in the sense of~\cite{PhysRevA.84.012311InfoDerivQT}. Moreover:
\begin{exampleslist} \label{examples:notcompl}
\item 
Each theory $\RelOT(\catC)$ is positive, and in particular so is $\RelOT$. However it is not complemented, since here any system comes with tests $\ts{\discard{}, \discard{}}$ and $\ts{\discard{}, 0}$.
\item 
Each theory $\MatOT_R$ is positive whenever $(a + b = 0 \implies a = b = 0)$ in $R$, and complemented whenever $(a + b = 1 = a + c \implies b  = c)$ in $R$.
\end{exampleslist}
All of these examples are strongly algebraic; we leave open the problem of finding a theory which is not algebraic.

\section{Categories of Tests} \label{sec:CatsOfTests}

We have seen that an operational theory may be described, up to representability, by its (category of) partial tests. In fact any complemented theory has yet another presentation in terms of its tests alone, and which fits well into more traditional approaches from categorical logic.

\begin{definition}
For any operational theory $\Theta$ we define the category
\[ \label{not:testcat}
\Test(\Theta) := \PTest(\Theta)_{\causal}
\]
so that morphisms $M \colon \ts{A_\x}_{\x \in \X} \to \ts{B_\y}_{\y \in  \Y}$ here are $\X$-indexed collections of tests in $\Theta$, under matrix composition. 
\end{definition}
Now $\catB = \Test(\Theta)$ is symmetric monoidal with finite coproducts in just the same way as $\PTest(\Theta)$. Moreover, since all morphisms are causal every object $A$ here has a unique morphism $! \colon A \to I$, making $I = \ts{I}$ a terminal object, denoted $1$. These features are related by the following rule. Consider a test 
\[
\ts{f_1 \colon B \to A_1, \dots, f_n \colon B \to A_n, e \colon B \to I}
\]
in $\Theta$ corresponding to an arrow $g \colon B \to A + 1$ in $\catB$, where $A = \ts{A_i}^n_{i=1}$. When $\ts{f_i}^n_{i=1}$ is already a test it corresponds to a unique arrow $f \colon B \to A$ in $\catB$, with $g$ then equal to $\coproj_1 \circ f$. When $\Theta$ has complements this holds iff $e = \discard{}^\bot = 0$, or equivalently when the morphisms $(! + !) \circ f = \ts{ \bigovee^n_{i=1} \discard{} \circ f_i, e }$ and $\coproj_1 \circ ! = \ts{\discard{}, 0_{B,I}}$ are equal:
\[
\begin{tikzcd}[font=\normalsize]
B
\arrow[bend left]{drr}{!}
\arrow[bend right]{ddr}[swap]{g=\ts{f_1,\dots,f_n,e}}
\arrow[dotted]{dr}[description]{\exists \ ! \ \ts{f_i}^n_{i=1}} & & \\
& A
\arrow[swap]{d}{\coproj_1}
\arrow{r}{!}
& I
\arrow{d}{\coproj_1} \\
& A + I
\arrow{r}[swap]{!{} + !{}}
& I + I 
\end{tikzcd}
\]
Categorically this states that the lower-right square is a \indef{pullback}\index{pullback} in $\catB$~\cite[p.71]{mac1978categories}. We can summarise the properties of $\catB$ as follows.

\begin{definition} \label{def:totalform}
A  \deff{(plain) test category}\index{test category}\index{test category!plain} is a category $\catB$ with finite coproducts $(+, 0)$ and a terminal object $1$ such that:
\begin{enumerate}[label=\arabic*., ref=\arabic*]
\item \label{testcatjointmonic} 
The following pair of morphisms are jointly monic:
\[
\begin{tikzcd}[column sep = large]
(A + A) + 1 \arrow[r,"{[\triangleright_1, \coproj_2]}", shift left = 2.5]
\arrow[r,"{[\triangleright_2, \coproj_2]}", swap,shift left = -2.5]
 & A + 1 
\end{tikzcd}
\]
 where we define $\triangleright_1 = [\coproj_1, \coproj_2 \circ !]$ and $\triangleright_2 = [\coproj_2 \circ !, \coproj_1]$ of type $A + A \to A + 1$;
\item \label{testcatcatpullweak} Diagrams of the following form are pullbacks:
\[
\begin{tikzcd}
A \rar{!} \dar[swap, "\coproj_1"] & 1 \dar{\coproj_1} \\
A + 1 \rar[swap]{! + !} & 1 + 1
\end{tikzcd}
\]
\end{enumerate}
A \deff{monoidal}\index{test category!monoidal} test category is one which is symmetric monoidal $(\catB, \otimes)$ with $I = 1$ and for which finite coproducts are distributive. Unless otherwise indicated by use of the word `plain', by `test category' we always mean a monoidal one. 
\end{definition}

\begin{theorem} \label{thm:totalform}
Let $\Theta$ be a complemented operational theory. Then $\Test(\Theta)$ is a test category.
\end{theorem}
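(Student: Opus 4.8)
The plan is to exploit the fact, from Lemma~\ref{lem:op-cat}, that $\catC := \PTest(\Theta)$ is already a (basic or proper) operational category, and to show that its causal part $\catB := \Test(\Theta) = \catC_{\causal}$ carries exactly the structure of a monoidal test category. First I would record the ambient structure of $\catB$. Coproducts in $\catC$ are causal, so all coprojections $\coproj_i$ lie in $\catB$; moreover the cotuple $[f_1,\dots,f_n]$ of causal maps is again causal, since it composes with each causal coprojection to give a causal map, and composing it with $\discard{}$ agrees with $\discard{}$ on every coprojection, hence equals $\discard{}$ by the universal property. So $\catB$ has finite coproducts $(+,0)$ with $0 = \ts{}$ initial. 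It has a terminal object $1 = I = \ts{I}$, because a causal morphism into $\ts{I}$ is forced componentwise to equal $\ts{\discard{A_\x}}_\x = \discard{A}$. Finally $\catB$ is symmetric monoidal with unit $I$ and with distributive coproducts: the coherence isomorphisms are causal by Lemma~\ref{lem:coherence-isoms-causal}, the tensor of causal maps is causal, the canonical distributivity map is a cotuple of causal maps hence causal, and the inverse of a causal isomorphism $\phi$ is causal (cancel $\phi$, which is epic, from $\discard{}\circ\phi^{-1}\circ\phi = \discard{} = \discard{}\circ\phi$). Thus the ``underlying'' part of Definition~\ref{def:totalform} is immediate, and it remains to verify Conditions~\ref{testcatjointmonic} and~\ref{testcatcatpullweak}.

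For Condition~\ref{testcatcatpullweak} I would make precise the argument sketched just before Definition~\ref{def:totalform}. Since maps out of a coproduct are prescribed summand-wise, it suffices to check the pullback property when the cone apex is a single system $B$. A causal $v \colon B \to A + 1$, with $A = \ts{A_i}^n_{i=1}$, corresponds to a test $\ts{f_1,\dots,f_n,e}$ with $f_i \colon B \to A_i$ and $e \colon B \to I$, and the unique map $u \colon B \to 1$ is $\discard{B}$. Unwinding the matrix description of $\catC$, the compatibility condition $\coproj_1 \circ u = (!+!)\circ v$ says exactly that $\bigovee_i \discard{}\circ f_i = \discard{B}$ and $e = 0$, i.e. that $\ts{f_1,\dots,f_n}$ is itself a test; the mediating map is then the unique causal $w \colon B \to A$ it determines, which satisfies $\coproj_1 \circ w = v$ and $!\circ w = u$, and it is unique because $\coproj_1 \colon A \to A+1$ is split monic in $\catC$. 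Here complementedness enters for the ``only if'' half: if $\ts{f_1,\dots,f_n}$ is a test then $\ts{\discard{B}, e}$ and $\ts{\discard{B}, 0}$ are both tests of the effect $\discard{B}$, forcing $e = \discard{B}^\bot = 0$.

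For Condition~\ref{testcatjointmonic}, arguing summand-wise in the domain I may take causal $f, g \colon B \to (A+A)+1$ with $B$ a single system and $A = \ts{A_i}^n_{i=1}$, and suppose $[\triangleright_1,\coproj_2]$ and $[\triangleright_2,\coproj_2]$ identify them. Writing $f$ and $g$ as tests $\ts{f^1_1,\dots,f^1_n, f^2_1,\dots,f^2_n, e}$ and $\ts{g^1_1,\dots,g^1_n, g^2_1,\dots,g^2_n, e'}$, the morphism $[\triangleright_1,\coproj_2]\circ f$ keeps the first block of events unchanged and replaces the second block by the single discarded-and-merged effect $(\bigovee_i \discard{}\circ f^2_i)\ovee e$, and symmetrically for $[\triangleright_2,\coproj_2]$. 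Comparing events, the hypotheses give $f^a_i = g^a_i$ for $a = 1,2$ and all $i$, together with $(\bigovee_i \discard{}\circ f^2_i)\ovee e = (\bigovee_i \discard{}\circ f^2_i)\ovee e'$. Setting $D := \bigovee_{a,i}\discard{}\circ f^a_i$, causality of $f$ and $g$ gives $D \ovee e = \discard{B} = D \ovee e'$, so $\ts{D, e}$ and $\ts{D, e'}$ are both tests of $D$, and complementedness yields $e = D^\bot = e'$; hence $f = g$.

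The genuine content is just the bookkeeping translating morphisms of $\PTest(\Theta)$ into tuples of events and tracking which summands the $\triangleright_i$ discard; the one hypothesis doing real work beyond the operational-category structure of $\PTest(\Theta)$ is complementedness, invoked in both conditions to cancel a redundant effect against a coarse-grained remainder and pin it down as a complement. (Conditions~\ref{testcatjointmonic}--\ref{testcatcatpullweak} are precisely the axioms making $\catB$ a ``total effectus'', so the statement could instead be deduced from the partial-to-total effectus correspondence of Jacobs et al., but the direct verification above is self-contained in the present setting.) I therefore expect no serious obstacle, only care in the combinatorics of coarse-graining.
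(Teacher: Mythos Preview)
Your proof is correct and takes a more hands-on route than the paper. The paper's own argument is terse: the ambient structure and Condition~\ref{testcatcatpullweak} are established in the discussion preceding Definition~\ref{def:totalform}, while Condition~\ref{testcatjointmonic} is deferred and then handled abstractly by observing that it is precisely joint monicity of $\triangleright_1,\triangleright_2 \colon A+A \partialto A$ in $\Partial(\catB)$, together with the identification $\Partial(\Test(\Theta)) \simeq \PTest(\Theta)$ for complemented $\Theta$, which reduces everything to Lemma~\ref{lem:op-cat}. That is essentially the ``partial-to-total'' route you mention parenthetically at the end; the paper chooses it, whereas you unpack morphisms into tuples of events and verify both conditions directly. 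Your approach is self-contained and makes the combinatorics of coarse-graining explicit; the paper's approach explains \emph{why} the test-category axioms look the way they do, as shadows in $\catB$ of the operational-category axioms on $\Partial(\catB)$.

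One small correction: complementedness is not needed for Condition~\ref{testcatcatpullweak}. The equation $(!+!)\circ v = \coproj_1 \circ {!}$ in $1+1$ already gives you \emph{both} components, namely $\bigovee_i \discard{}\circ f_i = \discard{B}$ and $e=0$, so the mediating map exists without any appeal to complements. Your ``only if'' remark concerns the implication ``$\ts{f_i}$ is a test $\Rightarrow e=0$'', which is true but not part of the pullback universal property. Complementedness is genuinely used only where you invoke it in Condition~\ref{testcatjointmonic}, to identify $e=e'$ as the unique complement of $D$; this matches the paper, where complements enter precisely in establishing $\Partial(\Test(\Theta)) \simeq \PTest(\Theta)$.
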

\begin{proof}
It only remains to verify condition~\ref{testcatjointmonic}, which we turn to shortly. 
\end{proof}

To complete this proof we will first need to see how the broader category $\catC=\PTest(\Theta)$ can be defined in terms of $\catB = \Test(\Theta)$. For this, note that in any complemented theory we may define events or more general partial tests $\ts{f_i \colon A \to B_i}^n_{i=1}$ as special kinds of tests. Indeed any such partial test corresponds uniquely to a test of the form 
\[
\ts{f_1 \colon A \to B_1, \dots, f_n \colon A \to B_n, e \colon A \to I}
\]
by taking $e = (\bigovee^n_{i=1} \discard{} \circ f_i)^{\bot}$. 

In this way arrows $A \to B$ in $\catC$ correspond to arrows $A \to B + 1$ in $\catB$. This situation of a `partial' category associated to a `total' one has been studied already by Cho~\cite{Partial2015Cho} and Jacobs et al.~\cite{EffectusIntro} and we borrow their approach here. 

\subsection{The category  $\Partial(\catB)$}
\label{not:partialarrow}
For any category $\catB$ with finite coproducts $(+,0)$ and a terminal object $1$, by a \indef{partial}\index{partial arrow} arrow $f \colon A \partialto B$ we mean an arrow $f \colon A \to B + 1$ in $\catB$. These partial arrows form a category $\Partial(\catB)$ under composition:

\begin{equation*} 
\big(
\begin{tikzcd}
A \tikzcdpartialright{f} & B \tikzcdpartialright{g} & C
\end{tikzcd}
\big)
=
\big(
\begin{tikzcd}
A \rar{f} & B + 1 \rar{[g,\coproj_2]} & C + 1
\end{tikzcd}
\big)
\end{equation*}
\noindent which we denote by $g \partialcomp f$, with $\id{A}$ given by the morphism $\coproj_1 \colon A \to A + 1$ in $\catB$. 
There is an identity-on-objects functor $\totaspar{-} \colon \catB \to \Partial(\catB)$ defined by
\begin{equation*} 
\big(
\begin{tikzcd}
A \tikzcdpartialright{ \totaspar{f} } & B 
\end{tikzcd}
\big)
:=
\big(
\begin{tikzcd}
 A \rar{f} & B \rar{\coproj_1} &  B + 1
\end{tikzcd}
\big)
\end{equation*}
Abstractly, $\Partial(\catB)$ is described as the Kleisli category of the \emps{lift monad} $(-) + 1$ on $\catB$~\cite{Partial2015Cho}. It inherits nice properties in general:

\begin{itemize}[leftmargin=*]
\item the initial object $0$ of $\catB$ forms a zero object in $\Partial(\catB)$, with $0_{A,B} \colon A \partialto B $ given by the arrow $\coproj_2 \circ ! \colon A \to B + 1$ of $\catB$;
\item  each coproduct $A + B$ in $\catB$ is again a coproduct in $\Partial(\catB)$, with coprojections $\totaspar{\coproj_1} \colon A \partialto A + B$ and $\totaspar{\coproj_2} \colon B \partialto A + B$, giving $\Partial(\catB)$ finite coproducts;

\item  when $\catB$ is symmetric monoidal with distributive coproducts so is $\Partial(\catB)$. Here $A \otimes B$ is the same as in $\catB$, satisfying $\totaspar{f} \otimes \totaspar{g} = \totaspar{f \otimes g}$ and with all coherence isomorphisms coming from $\catB$; 

\item when $I$ is also terminal in $\catB$, $\Partial(\catB)$ has discarding with $\discard{A} \colon A \partialto I$ given by $\coproj_1 \circ ! \colon A \to 1 + 1$ in $\catB$.
\end{itemize}  

We can now understand the property~\ref{testcatjointmonic} of a test category $\catB$: it simply asserts the joint monicity of the maps $\triangleright_1, \triangleright_2 \colon A + A \partialto A$ in the category $\catC = \Partial(\catB)$. When $\catB = \Test(\Theta)$ for a complemented theory we indeed have $\catC \simeq \PTest(\Theta)$ as expected, and we saw that this condition simply corresponded to partial tests being determined by their individual events. In fact, the other properties of an operational category also hold, along with the following. 

\begin{definition} \label{Def:comp-cat}
An operational category $\catC$ is \deff{complemented}\index{operational category!complemented} when every morphism $f \colon A \to B$ has $f = \triangleright_1 \circ g$ for a \emps{unique} causal morphism $g \colon A \to B + I$.
\end{definition}

\begin{theorem} \label{thm:Total-Partial}
There is a one-to-one correspondence between:
\begin{itemize}
\item 
test categories $\catB$;
\item 
complemented operational categories $\catC$;
\end{itemize}
given by $\catB \mapsto \Partial(\catB)$ and $\catC \mapsto \catC_{\causal}$.
\end{theorem}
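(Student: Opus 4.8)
The plan is to exhibit the two assignments as mutually inverse constructions, the single piece of real content being that the pullback axiom~\ref{testcatcatpullweak} of a test category is exactly what characterises the causal morphisms inside $\Partial(\catB)$. So the first step I would isolate as a lemma: if $\catB$ is any category with finite coproducts, a terminal object $1$, and satisfying axiom~\ref{testcatcatpullweak}, then a partial arrow $f \colon A \partialto B$ is causal in $\Partial(\catB)$ if and only if it is total, i.e.\ $f = \totaspar{g}$ for a unique $g \colon A \to B$ of $\catB$. Writing $\hat{f} \colon A \to B + 1$ for the $\catB$-arrow representing $f$ and using that $\discard{B} \colon B \partialto I$ is $\coproj_1 \circ {!} \colon B \to 1 + 1$, a short calculation shows $f$ is causal precisely when $(! + !) \circ \hat{f} = \coproj_1 \circ {!}$; but this is exactly the assertion that $\hat{f}$ together with $! \colon A \to 1$ forms a cone over the square of axiom~\ref{testcatcatpullweak}, so that square being a pullback yields the unique $g$ with $\coproj_1 \circ g = \hat{f}$, i.e.\ $f = \totaspar{g}$. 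Conversely every total arrow is causal by a direct computation with $\discard{}$. Since $\totaspar{(-)} \colon \catB \to \Partial(\catB)$ is faithful ($\coproj_1$ being split monic), it restricts to an isomorphism $\catB \cong \Partial(\catB)_{\causal}$.

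Granting this, the direction $\catB \mapsto \Partial(\catB)$ is mostly bookkeeping. The remarks preceding the theorem already give that $\Partial(\catB)$ is a symmetric monoidal category with discarding having a zero object and distributive finite coproducts, and the coprojections $\totaspar{\coproj_i}$ are readily seen to be causal, making the coproducts causal. Unravelling the definitions, the maps $\triangleright_1, \triangleright_2 \colon A + A \partialto A$ of the operational category $\Partial(\catB)$ correspond under $\totaspar{(-)}$ to the maps $[\coproj_1, \coproj_2 \circ {!}]$ and $[\coproj_2 \circ {!}, \coproj_1] \colon A + A \to A + 1$ appearing in axiom~\ref{testcatjointmonic}, so condition~\ref{enum:JM} of Definition~\ref{def:opcategory-first} for $\Partial(\catB)$ is literally axiom~\ref{testcatjointmonic} for $\catB$; and for $f \colon A \partialto B$ represented by $\hat{f}$, the total (hence causal) arrow $\totaspar{\hat{f}} \colon A \partialto B + I$ satisfies $\triangleright_1 \circ \totaspar{\hat{f}} = f$, witnessing condition~\ref{enum:opCatexists}, while the lemma forces this witness to be the unique causal one, so $\Partial(\catB)$ is in fact a complemented operational category in the sense of Definition~\ref{Def:comp-cat}.

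For the converse, given a complemented operational category $\catC$ I would first check that $\catC_{\causal}$ is a test category: $I$ is terminal there since $\discard{A}$ is the unique causal morphism $A \to I$; finite coproducts exist and are distributive in $\catC_{\causal}$ because coprojections, cotuples of causal maps, the distributivity isomorphism and its inverse are all causal; axiom~\ref{testcatjointmonic} is condition~\ref{enum:JM} of $\catC$ applied to the coproduct $A + A$; and axiom~\ref{testcatcatpullweak} follows by composing a putative cone with $\triangleright_2 \colon I + I \to I$ to force the relevant map $g \colon Z \to A + I$ to equal $\coproj_1 \circ \triangleright_1 \circ g$ (using joint monicity of $\triangleright_1, \triangleright_2$ on $A + I$), the mediating map being causal since $g$ is and $\discard{A+I} \circ \coproj_1 = \discard{A}$. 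Then $\Partial(\catC_{\causal}) \cong \catC$: complementedness says each $f \colon A \to B$ of $\catC$ has a unique causal $\bar{f} \colon A \to B + I$ with $\triangleright_1 \circ \bar{f} = f$, so $f \mapsto \bar{f}$ is a bijection onto the hom-sets of $\Partial(\catC_{\causal})$, and using the identity $[g, 0] = g \circ \triangleright_1$ one checks it respects composition, identities, the monoidal product, discarding and coproducts. The two round-trips then provide mutually inverse constructions up to isomorphism.

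I expect the genuinely non-formal step to be the lemma, where axiom~\ref{testcatcatpullweak} is converted into the statement ``causal $=$ total''; everything else is a matter of carefully matching up the various notions of projection map — the causal partial projections $A + A \to A + I$ of Definition~\ref{def:totalform}, the zero-using projections $A + A \to A$ of an operational category, and the projections out of $A + I$ — under the translations among $\catB$, $\catC_{\causal}$ and $\Partial(\catB)$, together with consistently tracking the identification $I = 1$.
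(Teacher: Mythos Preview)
Your proposal is correct and, for the direction $\catB \mapsto \Partial(\catB)$, essentially identical to the paper's argument: the heart is your lemma that the pullback axiom~\ref{testcatcatpullweak} is precisely ``causal $=$ total'' in $\Partial(\catB)$, and from there complementation and condition~\ref{enum:opCatexists} fall out exactly as you describe.

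For the converse direction $\catC \mapsto \catC_{\causal}$ you and the paper diverge. The paper does \emph{not} verify the test-category axioms for $\catC_{\causal}$ directly; instead it invokes the operational-theory machinery already built: since $\catC$ is complemented, $\Theta = \OTC(\catC)$ is a complemented theory with $\catC \simeq \PTest(\Theta)$, so $\catC_{\causal} \simeq \Test(\Theta)$, and Theorem~\ref{thm:totalform} already says $\Test(\Theta)$ is a test category. Your route is more self-contained --- one does not need the $\OTC$/$\PTest$/$\Test$ round-trip --- but your one-line dismissal of axiom~\ref{testcatjointmonic} (``it is condition~\ref{enum:JM} of $\catC$ applied to $A+A$'') hides exactly the translation you spelled out in the forward direction: joint monicity of the test-category projections $(A+A)+I \to A+I$ \emph{among causal maps} corresponds to joint monicity of $\triangleright_1,\triangleright_2 \colon A+A \to A$ in $\catC$ via the bijection between causal morphisms into $(A+A)+I$ and arbitrary morphisms into $A+A$ supplied by complementation. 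That bijection is precisely the hom-set level of your isomorphism $\Partial(\catC_{\causal}) \cong \catC$, so logically that isomorphism (or at least this instance of it) should be established \emph{before} axiom~\ref{testcatjointmonic}, not after. Reordering your two converse steps fixes this cleanly.
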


\begin{proof}
For any test category $\catB$, as outlined above $\Partial(\catB)$ is an operational category. In particular, for condition \ref{enum:opCatexists}, note that any morphism $f \colon A \partialto B$ in $\Partial(\catB)$, given by some $f \colon A \to B + 1$ in $\catB$, has that $\totaspar{f}$ is causal in $\Partial(\catB)$ with $f = \triangleright_1 \partialcomp \totaspar{f}$.

 Next we claim that causal morphisms $A \partialto B$ in $\Partial(\catB)$ correspond precisely to morphisms $f \colon A\to B$ in $\catB$ via $f \mapsto \totaspar{f}$.
 Indeed, by the definition of $\Partial(\catB)$, a morphism $g \colon A \partialto B$ here is causal precisely when in $\catB$ the morphism $g \colon A \to B + 1$ makes the outer rectangle below commute. But then $g = \coproj_1 \circ f$ for some unique $f \colon A \to B$, since the lower square is a pullback. Equivalently $g = \totaspar{f}$ in $\Partial(\catB)$.
\[
\begin{tikzcd}[font=\normalsize]
A
\arrow[bend left]{drr}{!}
\arrow[bend right]{ddr}[swap]{g}
\arrow[dotted]{dr}[description]{\exists \ ! f} & & \\
& B
\arrow[swap]{d}{\coproj_1}
\arrow{r}{!}
& I
\arrow{d}{\coproj_1} \\
& B + I
\arrow{r}[swap]{!{} + !{}}
& I + I 
\end{tikzcd}
\]

For complementation, note that the definition of $\Partial(\catB)$ gives thats for any $f \colon A \partialto B$ in $\Partial(\catB)$, given by some $f \colon A \to B + 1$ in $\catB$, a causal morphism $g = \totaspar{h}$ has $f = \triangleright_1 \partialcomp g$ iff $\totaspar{f} = \totaspar{h}$. Equivalently, $f = h$ in $\catB$. Hence $g = \totaspar{f}$ is the unique such morphism. 

Conversely, for any complemented operational category $\catC$, the theory $\Theta = \OTC(\catC)$ is complemented and we have $\catC \simeq \PTest(\Theta)$. Hence $\catC_{\causal} \simeq \Test(\Theta)$, which we've seen is a test category.

For the correspondence, we have just shown above that each symmetric monoidal functor $\totaspar{-} \colon \catB \to \Partial(\catB)_{\causal}$ is full and faithful, and so an isomorphism of categories. By complementation each symmetric monoidal functor $\Partial{(\catC_{\causal})} \to \catC$ sending $(f \colon A \to B + I)$ to $(\triangleright_1 \circ f \colon A \to B)$ is an isomorphism also.
\end{proof}

In fact this assignment can be made functorial. Define a category $\TestCats$\label{cat:testcats} whose objects are test categories and morphisms $F \colon \catB \to \catB'$ are strong symmetric monoidal functors preserving finite coproducts $(+,0)$. Let us write $\OpCatsComp$\label{cat:opcatscomp} for the full subcategory of $\OpCats$ given by the complemented operational categories. 

\begin{theorem}
The above assignments extend to an equivalence of categories
\[
\begin{tikzcd}[column sep = large]
\TestCats
\rar[shift left  = 4]{\Partial(-)}[swap]{\simeq}
& 
\OpCatsComp
\arrow[shift left = 4]{l}[yshift=-0.5ex, xshift=0.5ex]{(-)_\causal}
\end{tikzcd}
\]
\end{theorem}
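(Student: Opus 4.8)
The plan is to promote the object-level bijection of Theorem~\ref{thm:Total-Partial} to a pair of functors and then show they are pseudo-inverse. First I would check that $\Partial(-)$ is a well-defined functor $\TestCats \to \OpCatsComp$. Given a morphism $F \colon \catB \to \catB'$ of test categories, $F$ is strong symmetric monoidal with $F(I) \cong I$ and preserves finite coproducts $(+,0)$, so (up to coherent isomorphism) it commutes with the lift monad $(-) + 1$ and hence induces an identity-on-objects-compatible functor $\Partial(F) \colon \Partial(\catB) \to \Partial(\catB')$ on Kleisli categories, sending a partial arrow $f \colon A \to B + 1$ to $F(A) \xrightarrow{F(f)} F(B+1) \cong F(B) + 1$. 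I would then verify, one clause at a time, that $\Partial(F)$ meets the requirements of a morphism in $\OpCatsComp$: it commutes with $\totaspar{-}$; it is strong symmetric monoidal, using that the coherence isomorphisms of $\Partial(\catB)$ are inherited from $\catB$ and that $\totaspar{f} \otimes \totaspar{g} = \totaspar{f \otimes g}$; it preserves discarding, since $\discard{A}$ in $\Partial(\catB)$ is built from $!$ and a coprojection, both of which $F$ respects; and it preserves the zero object and binary coproducts. Functoriality in $F$ is immediate from the Kleisli description.

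Second, I would check that $(-)_{\causal}$ is a functor $\OpCatsComp \to \TestCats$. If $G \colon \catC \to \catD$ is a morphism of complemented operational categories then $G$ preserves discarding, so it sends causal morphisms to causal ones: from $\discard{} \circ f = \discard{}$, applying $G$ and using that $G(\discard{})$ is causal together with $G(I) \cong I$ gives $\discard{} \circ G(f) = \discard{}$. Thus $G$ restricts to $G_{\causal} \colon \catC_{\causal} \to \catD_{\causal}$, which is still strong symmetric monoidal — the coherence isomorphisms of $\catC_{\causal}$ are those of $\catC$, which are causal by Lemma~\ref{lem:coherence-isoms-causal} — has $I = 1$ preserved as the terminal object, and preserves finite coproducts because the coprojections $\coproj_i$ of an operational category are causal and $G$ preserves coproducts. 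Hence $G_{\causal}$ is a morphism in $\TestCats$, and functoriality is again immediate.

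Third, I would assemble the natural isomorphisms witnessing the equivalence. For each test category $\catB$, the isomorphism of categories $\eta_{\catB} := \totaspar{-} \colon \catB \to \Partial(\catB)_{\causal}$ from the proof of Theorem~\ref{thm:Total-Partial} is natural: for $F \colon \catB \to \catB'$ one has $\Partial(F)_{\causal} \circ \totaspar{-} = \totaspar{-} \circ F$ directly from the construction of $\Partial(F)$ (on a morphism $f$ of $\catB$ both sides are $\totaspar{F(f)} = \coproj_1 \circ F(f)$, using that $F$ preserves the coprojection). Thus $\eta \colon \id{\TestCats} \Rightarrow (-)_{\causal} \circ \Partial(-)$ is a natural isomorphism. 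Dually, for each complemented operational category $\catC$ the isomorphism $\epsilon_{\catC} \colon \Partial(\catC_{\causal}) \to \catC$ sending $f \colon A \to B + I$ to $\triangleright_1 \circ f$ is natural: for $G \colon \catC \to \catD$, both $G \circ \epsilon_{\catC}$ and $\epsilon_{\catD} \circ \Partial(G_{\causal})$ send $f$ to $\triangleright_1 \circ G(f) = G(\triangleright_1 \circ f)$, since $G$ preserves coproducts and hence the maps $\triangleright_i$. So $\epsilon \colon \Partial(-) \circ (-)_{\causal} \Rightarrow \id{\OpCatsComp}$ is a natural isomorphism, and $\Partial(-)$ and $(-)_{\causal}$ form the claimed equivalence; note it is an equivalence and not an isomorphism of $2$-categories because $\Partial(-)$ genuinely changes the morphism sets.

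I expect the main obstacle to be bookkeeping rather than anything conceptual: namely, verifying that $\Partial(F)$ and $G_{\causal}$ preserve every piece of structure demanded by $\OpCatsComp$ and $\TestCats$ (monoidal structure, discarding, coproducts, terminal object), each check being routine but needing the Kleisli presentation of $\Partial(\catB)$ and the coherence isomorphisms tracked carefully. The one genuinely delicate point is that a morphism of categories with discarding only requires $F(I) \cong I$ rather than on the nose, so several of these verifications hold only up to the structure isomorphism and must invoke the monoidal coherence conditions with care.
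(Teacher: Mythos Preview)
Your proposal is correct and follows exactly the same approach as the paper, which gives only a two-sentence sketch: extend $F$ to $\Partial(F)$ since $F$ preserves $+$ and $1$, restrict $G$ to $G_{\causal}$ since $G$ preserves discarding, and observe these form an equivalence ``just as in'' Theorem~\ref{thm:Total-Partial}. You have simply filled in the routine verifications that the paper omits.
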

\begin{proof}
Any morphism of test categories $F \colon \catB \to \catB'$ preserves $+$ and $1$ and so is easily seen to extend to a morphism $\Partial(F) \colon \Partial(\catB) \to \Partial(\catB')$. Conversely any morphism $G \colon \catC \to \catD$ in $\OpCatsComp$ preserves discarding and hence restricts to a morphism $G_\causal \colon \catC_\causal \to \catD_\causal$. These functors form an equivalence just as in Theorem~\ref{thm:Total-Partial}.
\end{proof}

Hence we may study any complemented operational theory $\Theta$ equivalently in terms of the test category $\catB=\Test(\Theta)$ or its `partial form' $\catC= \PTest(\Theta)$. This second perspective is useful when working with test categories, as in the following.

\begin{lemma} \label{opcatlemma}
In a test category all coprojections $\coproj_1 \colon A \to A + B$ are monic and diagrams of the following forms are pullbacks:
\[
\begin{tikzcd}
A \rar{f} \arrow[d, swap, "\coproj_1"] & B \dar{\coproj_1} \\
A + C \rar[swap]{f + \id{}} & B + C
\end{tikzcd}
\qquad
\qquad
\qquad
\begin{tikzcd}
0 \arrow[r, "!"] \arrow[d, "!", swap] & A \arrow[d, "\coproj_1"] \\
B \arrow[r, "\coproj_2", swap] & A + B
\end{tikzcd}
\]
\end{lemma}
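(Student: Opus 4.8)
The plan is to leverage the equivalence set up around Theorem~\ref{thm:Total-Partial}: as a bare category a (plain) test category $\catB$ is exactly an \emph{effectus in total form}, and the three assertions are standard facts about these (see~\cite{Partial2015Cho,EffectusIntro}). Rather than re-derive the effectus lemmas from the axioms by hand, I would run everything through $\catC:=\Partial(\catB)$, using that $\catC$ is an operational category and that, by the proof of Theorem~\ref{thm:Total-Partial}, $\totaspar{-}\colon\catB\to\catC$ is faithful with the causal morphisms of $\catC$ being precisely those of the form $\totaspar{f}$. Coproducts, the zero object $0$ and the partial projections $\triangleright_i$ in $\catC$ are all built on those of $\catB$, and the $\triangleright_i$ are jointly monic by condition~\ref{enum:JM}. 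For the monicity claim this already suffices: (i) $\coproj_1\colon 1\to 1+C$ is monic since its domain $1$ is terminal, so admits at most one morphism into it; (ii) taking $C=1$ in Definition~\ref{def:totalform}(\ref{testcatcatpullweak}), $\coproj_1\colon A\to A+1$ is a pullback of the monomorphism $\coproj_1\colon 1\to 1+1$, hence monic; (iii) $(\id{A}+{!_B})\circ\coproj_1^{A,B}=\coproj_1^{A,1}$ is monic, so $\coproj_1^{A,B}$ is too.

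For the first pullback square: it commutes by the coproduct laws, and since $\coproj_1\colon B\to B+C$ is monic a mediating morphism is automatically unique and automatically satisfies the triangle into $B$; the only thing to prove is that a cone leg $q\colon D\to A+C$ (with $\coproj_1 p=(f+\id{})q$ for the other leg $p\colon D\to B$) factors through $\coproj_1\colon A\to A+C$. In $\catC$ I would set $g:=\triangleright_1^{A,C}\circ\totaspar{q}\colon D\partialto A$. Using only the defining equations of the $\triangleright_i$ on $A+C$ and $B+C$ together with the translated cone equation $(\totaspar{f}+\id{})\circ\totaspar{q}=\coproj_1\circ\totaspar{p}$, one gets $\totaspar{f}\circ g=\totaspar{p}$ and $\triangleright_2^{A,C}\circ\totaspar{q}=0$; and since $\totaspar{f}$ and $\totaspar{p}$ are causal, $\discard{}\circ g=\discard{}\circ\totaspar{f}\circ g=\discard{}\circ\totaspar{p}=\discard{}$, so $g$ is causal. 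Hence $g=\totaspar{k}$ for a unique $k\colon D\to A$ in $\catB$, with $fk=p$. Finally $\coproj_1\circ g$ and $\totaspar{q}$ agree after post-composing with both $\triangleright_1^{A,C}$ and $\triangleright_2^{A,C}$, so joint monicity of the $\triangleright_i$ on $A+C$ forces $\coproj_1\circ g=\totaspar{q}$, i.e. $\coproj_1 k=q$ in $\catB$.

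For the disjointness square: given a cone $p\colon D\to A$, $q\colon D\to B$ with $\coproj_1 p=\coproj_2 q$, post-composing in $\catC$ with $\triangleright_1^{A,B}$, resp. $\triangleright_2^{A,B}$, yields $\totaspar{p}=0_{D,A}$ and $\totaspar{q}=0_{D,B}$; unwinding $\totaspar{-}$ and the description of zero morphisms in $\Partial(\catB)$, this says $\coproj_1 p=\coproj_2\,{!_D}$ in $\catB$, and post-composing with ${!_A}+\id{1}$ gives $\coproj_1\,{!_D}=\coproj_2\,{!_D}\colon D\to 1+1$. Feeding this into Definition~\ref{def:totalform}(\ref{testcatcatpullweak}) with $A$ replaced by the initial object $0$ — where $0+1\cong 1$, so that this square identifies $0$ with the ``intersection'' of $\coproj_1,\coproj_2\colon 1\rightrightarrows 1+1$ — produces a morphism $u\colon D\to 0$, which satisfies ${!_{0\to A}}\circ u=p$ and ${!_{0\to B}}\circ u=q$ since $\totaspar{-}$ sends ${!_{0\to A}}\circ u$ to $0_{D,A}=\totaspar{p}$ and is faithful, and which is unique since $\coproj_1^{0,A}\circ u=\coproj_2^{0,A}\circ p$ does not depend on $u$ while $\coproj_1^{0,A}$ is monic by the first part.

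The content is almost all bookkeeping, so the delicate points are (i) keeping the $\catB$/$\catC$ dictionary honest — in particular that ``causal in $\catC$'' is exactly what lets a morphism be descended to $\catB$, which is what makes the factorization $g=\totaspar{k}$ legitimate — and (ii) pinning down the correct instance of condition~\ref{testcatcatpullweak} at the initial object (with $0+1\cong 1$) and matching its pullback cone to the one at hand. Anyone preferring to avoid $\Partial(\catB)$ can instead run the standard effectus-theoretic diagram chases~\cite{EffectusIntro,Partial2015Cho}, at the cost of first deriving from conditions~\ref{testcatjointmonic} and~\ref{testcatcatpullweak} the ``partiality'' pullback exhibiting $A+B$ as the pullback of $A+1$ and $1+B$ over $1+1$.
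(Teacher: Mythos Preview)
Your proposal is correct and, like the paper, routes most of the work through $\catC=\Partial(\catB)$ and the joint monicity of the $\triangleright_i$. The main differences are organizational rather than substantive.

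For monicity of coprojections, the paper gives a one-line argument: $\coproj_i$ is again a coprojection in $\Partial(\catB)$, where it is \emph{split} monic via $\triangleright_i$ (since $\triangleright_i\circ\coproj_i=\id{}$), hence monic in $\catB$. Your three-step chain via terminal objects and pullbacks of monics is valid but unnecessary here.

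For the first pullback, the paper first projects the cone leg $h\colon D\to A+C$ to $A+1$ via $(\id{}+!)$ and invokes the test-category pullback axiom directly to obtain $r\colon D\to A$, then uses $\triangleright_1,\triangleright_2$ in $\Partial(\catB)$ to show $h=\coproj_1\circ r$. Your version stays in $\Partial(\catB)$ throughout, defining $g=\triangleright_1\circ\totaspar{q}$ and descending via the causality characterization; this is essentially the same content, since that characterization \emph{is} the pullback axiom.

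For the second pullback, the paper simply observes it is a special case of the first: compose the isomorphism $B\cong 0+B$ (via $\coproj_2$) with the instance of the first square at $f=!\colon 0\to A$, $C=B$. Your independent argument through $\totaspar{p}=0=\totaspar{q}$ and the pullback axiom at the initial object (using $0+1\cong 1$) is correct, but the paper's reduction avoids the extra unwinding.
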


\begin{proof}
Each coprojection in $\catB$ is again a coprojection in the broader category $\Partial(\catB)$. But $\Partial(\catB)$ has zero morphisms, and so each coprojection $\coproj_i$ here is split monic via the morphism $\triangleright_i$ since $\triangleright_i \circ \coproj_i = \id{}$. This makes them monic in $\catB$ also. 

For the left-hand pullback, suppose that we have a commuting square 
\[
\begin{tikzcd}
D \rar{g} \arrow[d, swap, "h"] & B \dar{\coproj_1} \\
A + C \rar[swap]{f + \id{}} & B + C
\end{tikzcd}
\]
Letting $k = (\id{} + !) \circ h \colon D \to A + 1$, it's routine to check that $(! + !) \circ k = \coproj_1 \circ ! \colon D \to 1 + 1$, and so by the pullback in the definition of an operational category,  there is a unique $r \colon  D \to A$ such that $k = \coproj_1 \circ r$. Working in $\Partial(\catB)$ we then have
\begin{align*}
&\triangleright_1 \partialcomp h = \triangleright_1 \partialcomp k = \triangleright_1 \partialcomp \coproj_1 \partialcomp r = r
\\ 
&\triangleright_2 \partialcomp h = \triangleright_2 \partialcomp (f + \id{}) \partialcomp h = \triangleright_2 \partialcomp \coproj_1 \partialcomp g = 0 = \triangleright_2 \partialcomp \coproj_1 \partialcomp r
\end{align*}
Hence by joint monicity of the $\triangleright_i$ we have $h = \coproj_1 \circ r$ in $\catB$. Now in $\catB$ we have 
\[
\coproj_1 \circ g = (f + \id{}) \circ \coproj_1 \circ r = \coproj_1 \circ f \circ r
\]
Since $\coproj_1$ is monic, we then have $g = f \circ r$, and this $r$ is unique, as required. Finally, the right-hand pullback is in fact a special case of the left-hand one:
\[
\begin{tikzpicture}[scale = 2.0]
\node (tl) at (0,0) {$0$};
\node (bl) at (0,-1.5) {$0 + B$};
\node (bbl) at (-1.3,-2.1){$B$};
\node (tr) at (2,0) {$A$}; 
\node (br) at (2,-1.5) {$A + B$};
\path[commutative diagrams/.cd, every arrow, every label]
(tl) edge node{$!$} (tr)
(tl) edge[bend right] node[swap]{$!$} (bbl)
(tl) edge node[swap]{$\coproj_1$} (bl)
(bl) edge node{$! + \id{}$} (br)
(tr) edge node{$\coproj_1$} (br)
(bbl) edge node[pos=1.0,rotate=25]{$\sim$} (bl)
(bbl) edge[bend right, swap] node{$\coproj_2$} (br);
\end{tikzpicture}
\]
\end{proof}

\begin{examples}
Considering our examples of representable complemented operational theories $\Theta$, with $(\Events_{\Theta})_{\causal} \simeq \Test(\Theta)$, we have that:
\begin{exampleslist} \label{examples:compl}
\item 
$\Set$ is a test category, with partial form $\Partial(\Set)\simeq \PFun$. More generally any \emps{extensive} category forms a plain test category~\cite{carboni1993introduction};
\item
$\KlD := \KlDT_\causal$,\label{cat:KlD} also known as the Kleisli category of the \emps{distribution monad}, is a test category with partial form $\KlSD$~\cite{jacobs2011probabilities};
\item 
The (opposite of) the category of C*-algebras and completely positive unital maps $\CStarUop$\label{cat:cstaru} is a test category with partial form $\CStarSUop$. Similarly the subcategory $\vNUop$\label{cat:vnU} of unital maps in $\vNop$ is a test category, with partial form consisting of sub-unital such maps.
\end{exampleslist}
\end{examples}

\section{Effectuses}  \label{sec:EffectusTheory}

The categorical structures we have made use of in this chapter were first considered by Jacobs et al.~\cite{NewDirections2014aJacobs,StatesConv2015JWW,Partial2015Cho}  in an approach to the study of quantum computation based on categorical logic called \emps{effectus theory}\index{effectus!theory}. An introduction to this area is found in~\cite{EffectusIntro}, the central notion being the following.

\begin{definition}
A \deff{(monoidal) effectus}\index{effectus} is a plain (resp.~monoidal) test category for which diagrams of the following form are pullbacks:
\[
\begin{tikzcd}[row sep = 0.3em, column sep = 0.3em]
A \arrow[rr,"!"] \arrow[dd, swap, "\coproj_1"] 
& & 1 \arrow[dd, "\coproj_1"] \\ & \effpullbone &  \\ 
A + B \arrow[rr,swap,"! + !"] & & 1 + 1
\end{tikzcd}
\qquad
\qquad
\begin{tikzcd}[row sep = 0.3em, column sep = 0.3em]
A + B \arrow[rr,"! + \id{}"] \arrow[dd, swap, "\id{} + !"] 
& & 1 + B \arrow[dd, "! + !"] \\ & \effpullbtwo &  \\ 
A + 1 \arrow[rr,swap,"! + !"] & & 1 + 1
\end{tikzcd}
\]
Note that the pullback in the definition of a test category is a special case of \effpullbone. 
\end{definition}

The approach of this chapter can now provide us with an operational interpretation of the effectus axioms; in fact they correspond to the earlier properties we considered for operational theories.

Let us call a test category $\catB$ \indef{positive}\index{test category!positive} when it has that diagrams of the form \effpullbone{} are pullbacks. 

\begin{proposition} \label{prop:positivity}
Let $\Theta$ be a complemented operational theory. Then $\Theta$ is positive iff $\Test(\Theta)$ is positive. 
\end{proposition}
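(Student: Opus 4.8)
The plan is to reduce the claim, via the results already established, to a purely categorical statement about the partial form of the test category $\catB := \Test(\Theta)$. By the Lemma above characterising positivity through the representable completion, $\Theta$ is positive precisely when $\Theta^+$ satisfies $\discard{} \circ f = 0 \implies f = 0$ for all events $f$. But $\Events_{\Theta^+} = \PTest(\Theta)$, and since $\Theta$ is complemented, $\PTest(\Theta) \simeq \Partial(\catB)$ as noted in the proof of Theorem~\ref{thm:Total-Partial}. Hence $\Theta$ is positive iff $\Partial(\catB)$ satisfies $\discard{} \partialcomp f = 0 \implies f = 0$ for all morphisms $f$, and it suffices to prove that this holds iff $\catB$ is a positive test category.

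First I would unwind both conditions inside $\catB$. Recall that a morphism $f \colon A \partialto B$ of $\Partial(\catB)$ is a morphism $f \colon A \to B + 1$ of $\catB$; that $\discard{A}$ is the morphism $\coproj_1 \circ {!}_A \colon A \to 1 + 1$ and $0_{A,I}$ is $\coproj_2 \circ {!}_A$; and a short computation shows $\discard{} \partialcomp f$ is the morphism $({!}_B + \id{1}) \circ f \colon A \to 1 + 1$. Thus $\Partial(\catB)$ has the stated property iff every $f \colon A \to B + 1$ in $\catB$ with $({!}_B + \id{1}) \circ f = \coproj_2 \circ {!}_A$ satisfies $f = \coproj_2 \circ {!}_A$. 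On the other hand, conjugating a square of the form \effpullbone{} by the symmetry isomorphisms $A + B \cong B + A$ and $\sigma_{1,1}$ shows that $\catB$ is positive iff, for all objects $A, B$, the square with top edge $B \xrightarrow{{!}} 1$, both vertical edges $\coproj_2$, and bottom edge $! + ! \colon A + B \to 1 + 1$ is a pullback.

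The forward implication is then immediate: given $f \colon A \to B + 1$ with $({!}_B + \id{1}) \circ f = \coproj_2 \circ {!}_A$, applying the pullback property of that square with $A, B$ replaced by $B, 1$ respectively exhibits $f$ as a cone, so $f$ factors uniquely through $\coproj_2 \colon 1 \to B + 1$, forcing $f = \coproj_2 \circ {!}_A$. For the converse I would work in $\catC := \Partial(\catB)$, which is an operational category. Given a cone $g \colon D \to A + B$ with $({!}_A + {!}_B) \circ g = \coproj_2 \circ {!}_D$, the identity $\discard{} \partialcomp \triangleright_1 = {!}_A + {!}_B$ (a routine computation from the definitions of $\triangleright_1$ and $\discard{}$) gives $\discard{} \partialcomp (\triangleright_1 \partialcomp \totaspar{g}) = ({!}_A + {!}_B) \circ g = 0$, so $\triangleright_1 \partialcomp \totaspar{g} = 0$ by assumption, while a parallel computation shows $\triangleright_2 \partialcomp \totaspar{g}$ is causal, hence equals $\totaspar{r}$ for a unique $r \colon D \to B$ in $\catB$ by the total/causal correspondence of Theorem~\ref{thm:Total-Partial}. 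Then $\totaspar{\coproj_2 \circ r}$ has the same composites with $\triangleright_1$ and $\triangleright_2$ as $\totaspar{g}$, so joint monicity of the $\triangleright_i$ in $\catC$ yields $g = \coproj_2 \circ r$; this $r$ is unique because $\coproj_2$ is monic by Lemma~\ref{opcatlemma}. This exhibits the required pullback, so $\catB$ is positive.

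I expect the converse direction to be the main obstacle: constructing the mediating map $r$ is not a direct manipulation in $\catB$ but requires passing to $\Partial(\catB)$ and using its full structure as an operational category — the zero morphisms, the projections $\triangleright_i$ and their joint monicity, and the identification of causal partial morphisms with total ones. The only real care needed is the bookkeeping of the identities $\discard{} \partialcomp \triangleright_i$ and the role of the swap $\sigma_{1,1}$ on $1 + 1$; everything else is routine once the two conditions have been put in the reformulated shape above.
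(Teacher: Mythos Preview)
Your argument is correct. It differs from the paper's proof mainly in \emph{where} the translation happens. The paper reads the pullback \effpullbone{} directly as a statement about tests in $\Theta$: unravelling the diagram for objects $(B_1,\dots,B_n)$ and $(C_1,\dots,C_m)$ in $\Test(\Theta)$ shows that \effpullbone{} holds iff every test $\ts{f_1,\dots,f_n,g_1,\dots,g_m}$ with $\bigovee_j \discard{}\circ g_j = 0$ has all $g_j = 0$, which is visibly equivalent to positivity of $\Theta$. This is shorter and more conceptual, since it exploits the explicit description of $\Test(\Theta)$ rather than axiomatic properties of test categories.

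Your route instead passes through $\Partial(\catB)$ and reduces positivity to the implication $\discard{}\partialcomp f = 0 \Rightarrow f = 0$ there, then proves this equivalent to the pullback \effpullbone{} using only the abstract structure of a test category (the special $A+1$ pullback, joint monicity of the $\triangleright_i$, and the causal/total correspondence). This is longer but has the advantage of being a purely categorical argument that applies to any test category $\catB$, not just one arising as $\Test(\Theta)$: you have essentially shown that a test category $\catB$ is positive iff $\Partial(\catB)$ satisfies $\discard{}\partialcomp f = 0 \Rightarrow f = 0$, independently of any operational theory. One small point worth making explicit in your write-up: the ``parallel computation'' showing $\triangleright_2 \partialcomp \totaspar{g}$ is causal uses the test-category pullback axiom (the special case of \effpullbone{} with one summand equal to $1$), via the observation $\discard{}\partialcomp\triangleright_2 = \sigma_{1,1}\circ({!}_A + {!}_B)$.
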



\begin{proof}
Interpreted in $\Test(\Theta)$, the pullback \effpullbone{} tells us that any test of the form $\ts{f_1, \dots, f_n, g_1, \dots ,g_m}$ making the outer rectangle below commute factors over $\coproj_1$. 
\[
\begin{tikzcd}[font=\normalsize]
(A)
\arrow[bend left=15]{drr}{!}
\arrow[bend right=15]{ddr}[swap]{\ts{f_1,\dots,f_n,g_1, \dots g_m}}
\arrow[dotted]{dr}[description]{\exists} & & \\
& (B_1, \dots, B_n)
\arrow[swap]{d}{\coproj_1}
\arrow{r}{!}
& (I)
\arrow{d}{\coproj_1} \\
& (B_1, \dots, B_n) + (C_1, \dots, C_m)
\arrow{r}[swap]{!{} + !{}}
& (I, I) 
\end{tikzcd}
\]
Explicitly this means that any such test for which 
\begin{equation} \label{eq:poscatproof}
\bigovee^m_{i=1} \discard{} \circ g_i = 0
\end{equation}
has $g_i = 0$ for all $i$. But this is equivalent to stating that any partial test $\ts{g_i}^m_{i=1}$ satisfying~\eqref{eq:poscatproof} has $g_i = 0$ for all $i$, which is equivalent to positivity of $\Theta$.
\end{proof}




\noindent
Combining positivity with complements gives some nice categorical features.

\begin{lemma} \label{lemma:PosOpCat} 
Let $\catB$ be a positive test category.  
\begin{enumerate} 
\item \label{anyisototal} Any isomorphism in $\Partial(\catB)$ is causal. 
\item \label{initobstrict} The initial object $0$ is strict in $\catB$. That is, any morphism $f \colon A \to 0$ is an isomorphism.
\item \label{posopcatpull} Diagrams of the following form are pullbacks in $\catB$: 
\begin{equation}
\label{eq:generalPosPullbacks}
\begin{tikzcd}
A \rar{f} \dar[swap]{\coproj_1} & B \dar{\coproj_1} \\
A + C \rar[swap]{f + g} & B + D
\end{tikzcd}
\end{equation}
\end{enumerate}
\end{lemma}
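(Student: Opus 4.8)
The plan is to prove \ref{posopcatpull} and \ref{initobstrict} by direct diagram chases, using the positivity pullbacks \effpullbone together with the facts already recorded in Lemma~\ref{opcatlemma} (coprojections are monic, and distinct coprojections into a coproduct are disjoint), and then to deduce \ref{anyisototal} from \ref{initobstrict}. For \ref{posopcatpull}, take a cone $u \colon X \to A + C$, $v \colon X \to B$ with $(f + g)\circ u = \coproj_1 \circ v$ and postcompose with $!_B + !_D \colon B + D \to 1 + 1$. Since $!_B \circ f = !_A$ and $!_D \circ g = !_C$ this gives $(!_A + !_C)\circ u = \coproj_1 \circ !_X$, so by the positivity pullback \effpullbone for the coproduct $A + C$ there is a unique $u' \colon X \to A$ with $\coproj_1 \circ u' = u$. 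Then $\coproj_1 \circ f \circ u' = (f+g)\circ \coproj_1 \circ u' = \coproj_1 \circ v$, and as $\coproj_1$ is monic (Lemma~\ref{opcatlemma}) we get $f \circ u' = v$; monicity of $\coproj_1$ also makes $u'$ the unique factorisation of the cone, so \eqref{eq:generalPosPullbacks} is a pullback.

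For \ref{initobstrict}, suppose $f \colon A \to 0$. Then $!_A \colon A \to 1$ factors through $0$, so the two coprojections $\coproj_1, \coproj_2 \colon 1 \to 1+1$ satisfy $\coproj_1 \circ !_A = \coproj_2 \circ !_A$ (both equal $!_{0 \to 1+1}\circ f$). Applying the positivity pullback \effpullbone for the coproduct $A + A$ to the map $\coproj_2 \colon A \to A + A$ — whose composite with $!_A + !_A$ is $\coproj_2 \circ !_A = \coproj_1 \circ !_A$, hence factors through $\coproj_1 \colon 1 \to 1+1$ — produces $w \colon A \to A$ with $\coproj_1 \circ w = \coproj_2 \colon A \to A + A$. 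Postcomposing with the codiagonal $\triangledown = [\id{A}, \id{A}] \colon A + A \to A$ yields $w = \id{A}$, so $\coproj_1 = \coproj_2 \colon A \to A + A$. But by disjointness of coproducts (the right-hand pullback of Lemma~\ref{opcatlemma}) the pullback of these two coprojections is $0$, whereas the pullback of the monic $\coproj_1$ against itself is $A$; hence $A \cong 0$, and then $f$ is an isomorphism (with inverse the unique arrow $0 \to A$), so $0$ is strict.

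Finally, for \ref{anyisototal}, let $\phi \colon A \partialto B$ be an isomorphism in $\Partial(\catB)$ with inverse $\psi$, and let $\bar\phi \colon A \to B + 1$ be its underlying arrow in $\catB$. Form the pullback $D$ of $\coproj_2 \colon 1 \to B+1$ along $\bar\phi$ (the ``undefined part'' of $\phi$), with legs $d \colon D \to A$ and $e \colon D \to 1$. From $\psi \partialcomp \phi = \id{A}$ one computes $\coproj_1 \circ d = \coproj_2 \circ e \colon D \to A + 1$, so by disjointness of the coproduct $A + I$ (Lemma~\ref{opcatlemma}) there is a morphism $D \to 0$, whence $D \cong 0$ by \ref{initobstrict}. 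Using the positivity pullbacks once more — to see that the coproduct $B + I$ is stable under pullback along $\bar\phi$, so that $A$ is a coproduct of $D$ with the pullback of $\coproj_1$ along $\bar\phi$ — it follows that $\bar\phi$ factors through $\coproj_1 \colon B \to B + 1$, i.e.\ $\phi = \totaspar{g}$ for some $g \colon A \to B$ in $\catB$; and every such $\totaspar{g}$ is causal since $(!_B + \id{1})\circ \coproj_1 \circ g = \coproj_1 \circ !_A$. Hence $\phi$ is causal. The main obstacle is precisely this last step, upgrading ``the undefined part of $\phi$ is trivial'' to ``$\phi$ is total'': this is the universality of finite coproducts in a positive test category, the one place where positivity is used beyond Lemma~\ref{opcatlemma}. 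An alternative route avoiding it is to observe that precomposition with $\phi$ is an order-isomorphism between the effect posets $\Partial(\catB)(B,I)$ and $\Partial(\catB)(A,I)$ (using that coarse-graining is respected by composition, Axiom~\ref{assump:cg}, and that $\Partial(\catB)$ is complemented by Theorem~\ref{thm:Total-Partial}), and hence sends the top element $\discard{B}$ to the top element $\discard{A}$.
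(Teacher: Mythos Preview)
Your proofs of \ref{posopcatpull} and \ref{initobstrict} are correct. For \ref{posopcatpull} your diagram chase is exactly what the Pullback Lemma packages, and the paper does the same. For \ref{initobstrict} you take a different route: the paper reasons in the theory $\OTC(\Partial(\catB))$, where any $f \colon A \to 0$ is causal, so $\discard{A} = \discard{0} \circ f = 0$ and then positivity applied to $\id{A}$ gives $\id{A} = 0$---a one-line argument. Your proof stays entirely in $\catB$ via disjointness, which is more elementary but considerably longer.

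For \ref{anyisototal}, your primary argument has real gaps. You ``form the pullback $D$ of $\coproj_2 \colon 1 \to B+1$ along $\bar\phi$'', but a positive test category is not assumed to have such pullbacks---only the specific squares \effpullbone{} and those of Lemma~\ref{opcatlemma}---and you do not construct $D$ from these. Even granting $D$ exists, the step from $D \simeq 0$ to ``$\bar\phi$ factors through $\coproj_1$'' invokes stability of the coproduct $B + 1$ under pullback along $\bar\phi$, a form of extensivity which is neither assumed nor established. You correctly flag this as ``the main obstacle'', and your alternative via effect preorders is sound and does the job: positivity plus complementation make $\discard{}$ the unique maximal effect, and precomposition with an isomorphism in $\Partial(\catB)$ is an order-isomorphism, so it must send $\discard{B}$ to $\discard{A}$. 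The paper's argument is a crisper direct computation in the operational theory: if $(\phi, e)$ and $(\phi^{-1}, d)$ are tests, then by control so is $(\phi^{-1} \circ \phi,\, d \circ \phi,\, e) = (\id{},\, d \circ \phi,\, e)$, whence $e \ovee (d \circ \phi) = \discard{}^\bot = 0$ and positivity gives $e = 0$.
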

\begin{proof}
For the first two parts, we reason in the theory $\OTC(\Partial(\catB))$.

\ref{anyisototal}
Let the event $f \colon A \to B$ be an isomorphism, and $\ts{f, e}$ and $\ts{f^{-1}, d}$ be tests for (unique) effects $d, e$. Then by control $\ts{f^{-1} \circ f, d \circ f, e}$ is also a test. But since $f^{-1} \circ f = \id{A}$ is causal we have $e \ovee (d \circ f) = \discard{}^\bot = 0$. Hence $e = 0$ by positivity, and so $f$ is causal. 

\ref{initobstrict}
 If $f \colon A \to 0$ is causal then $\discard{A} = \discard{0} \circ f = 0$ and so $\id{A} = 0$ by positivity. It follows that $A \simeq 0$ and, since both objects are initial, that $f$ is an isomorphism.

\ref{posopcatpull}
Both the right-hand and outer rectangles in the diagram below are pullbacks.
\[
\begin{tikzcd}
A \rar{f} \dar[swap]{\coproj_1} & B \dar{\coproj_1} \rar{!} & 1 \dar{\coproj_1} \\
A + C \rar[swap]{f + g} & B + D \rar[swap]{! + !} & 1 + 1
\end{tikzcd}
\]
By the well-known `Pullback Lemma' this means that the left-hand square is also~\cite[Lemma 5.10]{awodey2010category}. 
\end{proof}

In this setting discarding morphisms are in fact uniquely determined, rather than having to be stated as extra structure.

\begin{lemma}
Let $\catC$ be a symmetric monoidal category. Then there is at most one choice of discarding $\discard{}$ making $\catC$ a positive and complemented operational category.
\end{lemma}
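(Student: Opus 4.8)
The plan is to show that, once the underlying symmetric monoidal category is fixed, essentially everything other than the discarding morphisms is already determined, and then to pin down each $\discard{A}$ using complementedness together with positivity. First I would record the discarding-independent part. The finite coproducts of $\catC$, and the initial object (which in any operational category is a zero object), are unique up to canonical isomorphism, so they — and hence the morphisms $\coproj_i$, $\triangleright_i$, $\triangledown$, the zero morphisms, and the partial operation $\ovee$ (given by $f \ovee g = \triangledown \circ h$ for the unique $h$ with $\triangleright_1 h = f$, $\triangleright_2 h = g$) — do not depend on the choice of discarding; similarly, ``$\ts{f_i \colon A \to B_i}$ is a partial test'' just means ``there is a morphism $A \to \bigoplus_i B_i$ whose $\triangleright_i$-components are the $f_i$'', which makes no reference to discarding. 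Moreover, from $\discard{A + B} = [\discard{A}, \discard{B}]$ (true in any operational category, since coprojections are causal and cotuples are unique) and $\discard{I} = \id{I}$, the discarding of every finite copower $n \cdot I$ of $I$ is forced to be $\triangledown \colon n \cdot I \to I$, whatever admissible discarding one uses. Since $\discard{A}$ is, by definition, the unique $\discard{}$-causal effect $A \to I$, it then suffices to take two admissible discardings $\discard{}$, $\discard{}'$, fix an object $A$, and prove $\discard{A} = \discard{A}'$.

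The core step plays the two complementation structures against each other. Applying complementedness of $(\catC, \discard{})$ to the effect $\discard{A}' \colon A \to I$ yields the unique $\discard{}$-causal $q \colon A \to I + I$ with $\triangleright_1 q = \discard{A}'$; writing $e := \triangleright_2 q$ and using $\discard{I+I} = \triangledown$, causality of $q$ gives $\discard{A}' \ovee e = \discard{A}$. Next I would apply complementedness of $(\catC, \discard{}')$ to the morphism $q$ itself: there is a unique $\discard{}'$-causal $m \colon A \to (I+I)+I$ with $\triangleright_1 m = q$; it presents the partial test $\ts{\discard{A}', e, g}$ with $g := \triangleright_2 m$, and since $\discard{(I+I)+I}' = \triangledown$, its $\discard{}'$-causality forces $\discard{A}' \ovee e \ovee g = \discard{A}'$. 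Coarse-graining the last two outcomes, $\ts{\discard{A}', e \ovee g}$ is a partial test whose total coarse-graining is $\discard{A}'$, so the morphism $A \to I + I$ it defines is $\discard{}'$-causal; but $\coproj_1 \circ \discard{A}'$ is $\discard{}'$-causal too and has the same $\triangleright_1$-component, so the uniqueness clause of complementedness makes the two equal, whence $e \ovee g = \triangleright_2(\coproj_1 \circ \discard{A}') = 0$. Positivity of $(\catC, \discard{}')$ then gives $e = g = 0$, so $\discard{A} = \discard{A}' \ovee e = \discard{A}'$, which would complete the proof.

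The main obstacle is the second step: one would like simply to ``cancel $\discard{A}'$'' in $\discard{A}' \ovee e \ovee g = \discard{A}'$, but $\ovee$ is not assumed cancellative, and since the theory need not be algebraic one cannot freely regroup a three-outcome coarse-graining that is not already known to sit inside a partial test. The trick is to manufacture the required three-outcome partial test by refining the test $q$ — which comes from the first discarding — using the complementation of the \emph{second} discarding, so that its well-definedness is delivered by the complementedness axiom of $(\catC, \discard{}')$ rather than by algebraicity; one then extracts $e \ovee g = 0$ from the \emph{uniqueness} half of complementedness (the complement of a discarding map is $0$) before finishing with positivity. Arranging the two discarding structures to interlock in exactly this way is the real content of the argument.
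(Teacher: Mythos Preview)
Your approach is essentially the paper's, with the roles of the two discardings swapped: the paper writes $\discard{A}\ovee a=\discard{A}'$ via complementedness of $(\catC,\discard{}')$, extends $\ts{\discard{A},a}$ to a $\discard{}$-test $\ts{\discard{A},a,b}$, reads off $a\ovee b=\discard{A}^{\bot}=0$, and concludes $a=0$ by positivity. The structure of your second paragraph is identical to this, and your closing discussion of why one cannot simply ``cancel'' but must instead appeal to the uniqueness clause of complementedness is exactly the point.

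There is, however, one step you slide over which the paper handles explicitly. You assert that $\discard{n\cdot I}=\triangledown$ ``whatever admissible discarding one uses'', but the definition of an operational category only guarantees that \emph{some} coproduct has causal coprojections, not that every one does; so for a \emph{fixed} copower $n\cdot I$ it is not automatic that its coprojections are causal with respect to both $\discard{}$ and $\discard{}'$ simultaneously. You need exactly this in your step~5, where the inner $I{+}I$ was chosen with $\discard{}$-causal coprojections but you then compute $\discard{(I{+}I)+I}'$ as the three-fold codiagonal, which requires those same inner coprojections to be $\discard{}'$-causal as well. The paper fills this gap in its opening sentence: in a \emph{positive} complemented operational category every isomorphism is causal (Lemma~\ref{lemma:PosOpCat}), and since any two coproducts of the same objects are related by such an isomorphism, every coproduct has causal coprojections with respect to either discarding. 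Add that one observation and your argument goes through.
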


\begin{proof}
Let $\discard{}$ and $\discard{}'$ be two such choices of discarding on $\catC$. 
Since all isomorphisms in $\catC$ are causal by Lemma~\ref{lemma:PosOpCat}, and coproducts are always unique up to isomorphism, any coproduct $A + B$ has causal coprojections with respect to either choice. Hence each of the theories defined by $(\catC, \discard{})$ and $(\catC, \discard{}')$ have identical partial tests and coarse-graining. Consider the (unique) effect $a$ on $A$ such that 
\[
 \discard{A} \ovee a = \discard{A}'
\]
Then $\ts{\discard{A}, a}$ is a partial test and so in the theory defined by $(\catC, \discard{})$ extends to a test $\ts{\discard{A},a,b}$. Then $a \ovee b = \discard{A}^\bot = 0$ and so by positivity $a = 0$, giving $\discard{A} = \discard{A}'$. 
\end{proof}

Finally, the other effectus axiom corresponds to one of our earlier notions.

\begin{lemma} \label{lem:effectus-interp}
Let $\Theta$ be a complemented operational theory. Then in $\Test(\Theta)$ diagrams of the form \effpullbtwo{} are pullbacks iff $\Theta$ is strongly algebraic.
\end{lemma}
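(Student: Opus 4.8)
The plan is to follow the template of Proposition~\ref{prop:positivity}: first unravel what pullback \effpullbtwo{} asserts once interpreted inside $\Test(\Theta)$, and then match that assertion to strong algebraicity. A morphism of $\Test(\Theta)$ into a coproduct $A+B$ is a test whose outcomes are partitioned into an $A$-block and a $B$-block, and $!+\id{}$ (resp.\ $\id{}+!$) collapses the $A$-block (resp.\ $B$-block) by coarse-graining. Writing $A=\ts{A_i}^n_{i=1}$ and $B=\ts{B_j}^m_{j=1}$, and taking the apex of a cone to be a single system $D$ (enough, since every object of $\Test(\Theta)$ is a coproduct of systems), a cone over $1+B \xrightarrow{!+!} 1+1 \xleftarrow{!+!} A+1$ is a test $\ts{e\colon D\to I}\cup\ts{k_j\colon D\to B_j}_j$ together with a test $\ts{f_i\colon D\to A_i}_i\cup\ts{e'\colon D\to I}$, the agreement condition forcing $e=\bigovee_i \discard{}\circ f_i$ and $e'=\bigovee_j \discard{}\circ k_j$; a mediating morphism is then exactly a test $\ts{f_i}_i\cup\ts{k_j}_j$, automatically unique by joint monicity of the $\triangleright$'s. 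Hence \effpullbtwo{} is a pullback in $\Test(\Theta)$ iff the following holds in $\Theta$: whenever $\ts{f_1,\dots,f_n,e'}$ is a test and $\ts{k_1,\dots,k_m}$ is a partial test with $\bigovee_j \discard{}\circ k_j=e'$, then $\ts{f_1,\dots,f_n,k_1,\dots,k_m}$ is a test. Call this $(\star)$: an effect-outcome of a test may be refined by any partial test realising it.

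The second step is to show $(\star)$ is equivalent to $\Theta$ being strongly algebraic. The economical route is to recognise $(\star)$ as exactly the statement that observations determine tests in the representable completion $\Theta^+$: given $\Theta^+$-events $F_1,\dots,F_N$ (each a partial test of $\Theta$) whose observations $\ts{\discard{}\circ F_s}_s$ form a partial test, extend the latter to a test and apply $(\star)$ once for each $s$, at stage $s$ refining the effect-outcome $\discard{}\circ F_s$ by the partial test $F_s$; after $N$ stages $\ts{F_s}_s$ together with one leftover effect is a test, so $\ts{F_s}_s$ is a partial test of $\Theta^+$. Conversely, from the data of $(\star)$ one reads off two $\Theta^+$-events --- the block $\ts{f_i}_i$ viewed as single events, and the one event $\ts{k_j}_j$ landing in a coproduct --- whose observations form a test, and ``observations determine tests'' makes $\ts{f_i,k_j}$ a partial test, whose total discard is $\discard{}$, hence a test. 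Then $(\star)\Leftrightarrow$ strong algebraicity follows from the earlier lemma identifying strong algebraicity with the property that observations determine tests in $\Theta^+$. (One can instead argue directly without $\Theta^+$: from $(\star)$, refining an effect-outcome by a compatible pair $\ts{f,g}$ yields algebraicity, and inserting single events one at a time yields that observations determine tests; conversely, given the data of $(\star)$, algebraicity peels the collection of observations apart into a test and ``observations determine tests'' then upgrades it to a partial test.) Each auxiliary test used above is manufactured from Axioms~\ref{assump:cg}--\ref{assump:causality} and complementation, by adjoining a complement.

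The main obstacle is the bookkeeping in the first step: pinning down the orientations of the collapsing maps $!+\id{}$ and $\id{}+!$, and verifying that the cone-agreement condition genuinely forces $e$ and $e'$ to be the displayed coarse-grainings, so that $(\star)$ is precisely --- not merely a consequence of, nor merely implying --- the pullback property. Once that correspondence is set up, everything downstream is routine manipulation of the partial-commutative-monoid structure of $\ovee$.
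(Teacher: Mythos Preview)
Your proof is correct and follows essentially the same two-step approach as the paper: first unpack the pullback \effpullbtwo{} in $\Test(\Theta)$ as a test-refinement condition, then show this condition is equivalent to strong algebraicity. The paper states the unpacked condition symmetrically (two partial tests whose summed observations form a test) rather than your asymmetric $(\star)$, but complementation makes these equivalent, as you note. For the second step the paper argues directly rather than routing through the earlier lemma about $\Theta^+$, but your parenthetical direct argument matches the paper's, so the difference is purely presentational.
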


\begin{proof}
Interpreted in $\Test(\Theta)$, the pullback states that any pair of partial tests $\ts{f_i \colon A \to B_i}^n_{i=1}$ and $\ts{g_j \colon A \to C_j}^m_{j=1}$ for which
\[
\Big( 
\ 
\bigovee^n_{i=1} \discard{} \circ f_i \ , \ 
\bigovee^m_{j=1} \discard{} \circ g_j
\
\Big)
\] 
forms a test have that $\ts{f_1, \dots, f_n, g_1, \dots, g_m}$ does also. By appending an extra effect to the $\ts{g_j}^m_{j=1}$ and using complementation this implies the same when replacing `test' by `partial test'. We now show that the latter condition is equivalent to strong algebraicity. 

Firstly, for any such pair of partial tests, repeatedly applying algebraicity we see that 
$\ts{\discard{} \circ f_1, \dots, \discard{} \circ f_n, \discard{} \circ g_1, \dots, \discard{} \circ g_m}$ is a test, and so by strong algebraicity $\ts{f_1, \dots, f_n, g_1, \dots, g_m}$ is one also.

 Conversely, suppose that this property holds. To see that $\Theta$ is algebraic, suppose that $\ts{f \ovee g, h_1, \dots, h_n}$ is a partial test. Then so are $\ts{f, g}$ and $\ts{h_1, \dots, h_n}$, as well as 
\[
\Big( 
\ 
\discard{} \circ f \ovee \discard{} \circ g \ , \  
\bigovee^n_{j=1} \discard{} \circ h_i
\
\Big)
\] 
Hence by assumption $\ts{f, g, h_1, \dots, h_n}$ is also a partial test.
 
To verify strong algebraicity, now suppose that $\ts{\discard{} \circ f_i}^n_{i=1}$ is a partial test. Then so is $\ts{\discard{} \circ f_1, \discard{} \circ f_2}$ and hence by assumption so is $\ts{f_1, f_2}$. Similarly since then $\ts{\discard{} \circ (f_1 \ovee f_2), \discard{} \circ f_3}$ is a partial test so is $\ts{f_1, f_2, f_3}$. Continuing in this way we get that $\ts{f_i}^n_{i=1}$ is a partial test as required.
\end{proof}

\begin{corollary} \label{cor:Effiscetainotcs}
There is a correspondence (up to equivalence) between:
\begin{itemize}
\item monoidal effectuses $\catB$; 
\item operational theories $\Theta$ which are representable, complemented, positive and have observations determining tests;
\end{itemize}
given by $\catB \mapsto \OTC(\Partial(\catB))$ and $\Theta \mapsto (\Events_{\Theta})_\causal$.
\end{corollary}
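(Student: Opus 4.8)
The plan is to obtain Corollary~\ref{cor:Effiscetainotcs} by composing the correspondences set up earlier in the chapter and then matching, under them, the two pullback axioms defining a monoidal effectus against the properties of operational theories that appear in the statement. First I would chain Theorem~\ref{thm:Total-Partial} with Theorem~\ref{thm:corresp-immediate}. The former gives a one-to-one correspondence between test categories $\catB$ and complemented operational categories $\catC$, via $\catB \mapsto \Partial(\catB)$ and $\catC \mapsto \catC_{\causal}$; the latter gives a correspondence (up to equivalence) between operational categories $\catC$ and representable operational theories $\Theta$, via $\catC \mapsto \OTC(\catC)$ and $\Theta \mapsto \Events_\Theta$. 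As recorded inside the proof of Theorem~\ref{thm:Total-Partial}, complemented operational categories match complemented theories under this: for complemented $\catC$ the theory $\OTC(\catC)$ is complemented with $\catC \simeq \PTest(\OTC(\catC))$, and a representable complemented theory $\Theta$ has $\PTest(\Theta)$ a complemented operational category. Composing the two correspondences therefore identifies test categories $\catB$, up to equivalence, with representable complemented operational theories $\Theta$, the two directions being $\catB \mapsto \OTC(\Partial(\catB))$ and $\Theta \mapsto (\Events_\Theta)_{\causal} \simeq \PTest(\Theta)_{\causal} = \Test(\Theta)$; the round trips are equivalences rather than identities, whence the phrase ``up to equivalence'' in the statement, using that $\Theta \simeq \Theta^+$ for representable $\Theta$.

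Next I would restrict both sides. A monoidal effectus is exactly a monoidal test category in which the squares \effpullbone{} and \effpullbtwo{} are pullbacks. Imposing \effpullbone{} on $\Test(\Theta)$ is, by Proposition~\ref{prop:positivity}, the same as requiring $\Theta$ to be positive; imposing \effpullbtwo{} on $\Test(\Theta)$ is, by Lemma~\ref{lem:effectus-interp}, the same as requiring $\Theta$ to be strongly algebraic, that is, algebraic and with observations determining tests. Running these conditions through the composite correspondence of the previous paragraph restricts it to a bijection (up to equivalence) between monoidal effectuses and representable, complemented, positive, strongly algebraic operational theories. To finish, one checks that, in the presence of representability, complements and positivity, the extra clause ``algebraic'' in ``strongly algebraic'' is automatic, so that the theory-side class so obtained is precisely the representable, complemented, positive theories with observations determining tests named in the statement; finally, the assignments are the functors $\OTC(-)$, $\Partial(-)$ and $(-)_{\causal}$ already shown to be (part of) equivalences, so functoriality of the correspondence follows just as in the earlier theorems.

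I expect the point requiring the most care to be exactly this last reconciliation: showing that ``observations determine tests'', together with the structural hypotheses, forces algebraicity, i.e.~that whenever $\ts{f \ovee g, h_1, \dots, h_n}$ is a partial test then so is $\ts{f, g, h_1, \dots, h_n}$. The natural route is to transport the situation into the test category $\Test(\Theta)$ and its partial form $\PTest(\Theta)$, apply the pullback \effpullbone{} to the relevant square built from the discarded events, and then use joint monicity of the $\triangleright_i$ to split the coarse-grained component back into $f$ and $g$ inside a single morphism into a coproduct; everything else in the proof is bookkeeping that merely strings together Theorems~\ref{thm:corresp-immediate} and~\ref{thm:Total-Partial}, Proposition~\ref{prop:positivity} and Lemma~\ref{lem:effectus-interp}.
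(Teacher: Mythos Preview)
Your overall strategy is correct and is exactly the route the paper has in mind: chain Theorem~\ref{thm:corresp-immediate} with Theorem~\ref{thm:Total-Partial} to match test categories against representable complemented theories, then invoke Proposition~\ref{prop:positivity} and Lemma~\ref{lem:effectus-interp} to translate the two effectus pullbacks into positivity and strong algebraicity respectively. The paper gives no explicit proof of the corollary precisely because this composition is meant to be immediate.

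The one place where you add unnecessary work is the ``reconciliation'' you flag as the delicate step. You propose to derive algebraicity from observations determining tests (plus representability, complements, positivity) by a direct argument in $\Test(\Theta)$ using pullback \effpullbone{} and joint monicity of the $\triangleright_i$. This is not needed: the paper has already shown, in the unlabeled lemma of Section~\ref{sec:algebraic}, that a theory $\Theta$ is strongly algebraic precisely when observations determine tests in $\Theta^+$. Since the corollary concerns \emph{representable} theories, for which $\Theta \simeq \Theta^+$ by the lemma immediately following Theorem~\ref{thm:corresp-immediate}, the phrase ``observations determine tests in $\Theta$'' is literally equivalent to ``$\Theta$ is strongly algebraic''. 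So the apparent mismatch between the statement of the corollary and the output of Lemma~\ref{lem:effectus-interp} dissolves for free, and no separate derivation of algebraicity is required.
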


In this way we can equate monoidal effectuses with particularly well-behaved operational theories. More broadly, noting the independence of the $\otimes$ and coproducts throughout this chapter, we may think of a non-monoidal effectus as the causal part of an `operational theory without a tensor'; this is spelled out in~\cite{mainpaper}.

 The partial form $\catC := \Partial(\catB)$ of an effectus $\catB$ has been axiomatized by Cho~\cite{Partial2015Cho}, who already noted that $\ovee$ makes each homset $\catC(A,B)$ a PCM, as we discussed in Section~\ref{sec:algebraic}. Moreover, thanks to complementation each set of effects $\catC(A,I)$ in fact forms an effect algebra~\cite{foulis1994effect}, as suggested by Remark~\ref{rem:D-Test}, this being the original motivation for the effectus axioms~\cite[Prop.~4.4]{NewDirections2014aJacobs}. 
 
 Beyond their original purpose of capturing classical deterministic, probabilistic and quantum computation~\cite{NewDirections2014aJacobs}, these results show that effectus theory may be seen as a logic for computation in very general physical theories.

\begin{examples}
Since their induced theories satisfy the above properties, the test categories $\Set$, $\KlD$, $\CStarUop$, as well as $\vNUop$, are all monoidal effectuses, being the motivating examples in~\cite{EffectusIntro}. Any extensive category forms an effectus~\cite{NewDirections2014aJacobs}. $(\Mat_\mathbb{Z})_\causal$ is a test category which is not an effectus, failing to be positive.
\end{examples}

\chapter{From Sub-causal to Super-causal Processes} \label{chap:totalisation}

From basic assumptions, we have seen how any operational physical theory defines a category with a partially defined addition $\ovee$ on its morphisms, and that this often suffices to determine the theory entirely. 
In this category we saw that every morphism $f$ was \emps{sub-causal} in the sense that $\discard{} \circ f \ovee e = \discard{}$ for some effect $e$.

 In practice, however, it is more typical and simpler to instead work with a totally defined addition $f + g$ on morphisms, and thus consider more general ones which we may call \emps{super-causal}. 
For example this occurs whenever one uses positive real numbers $\Rpos$ as weightings in place of the probabilistic interval $[0,1]$, and indeed each of our main examples from Chapter~\ref{chap:OpCategories} were first introduced in Chapter~\ref{chap:CatsWDiscarding} as the sub-causal part of such a broader category. 

 In this chapter, we connect both perspectives, constructing for any suitable category $\catC$ with a partial addition a new one $\To(\catC)$ with a total addition, of which it forms the subcategory of sub-causal morphisms. By identifying the necessary conditions for such a broader category to exist, we thus provide a clear operational interpretation to the common usage of a total addition on processes. 

 Following this, we'll see that working in the broader super-causal setting allows us to consider some powerful well-known diagrammatic features on our category.

\section{Addition and Biproducts}

\begin{definition}\label{not:addition}
Let us say that a category $\catC$ has \deff{addition}\index{category!with addition} when it is enriched in commutative monoids. That is, it has zero morphisms and each homset $\catC(A,B)$ comes with a commutative operation $+$ satisfying
\begin{align*}
f \circ (g + h) &= f \circ g + f \circ h & 
(f + g) \circ h &= f \circ h + g \circ h
&
f + 0 = f
\end{align*}
for all morphisms $f, g, h$. When $\catC$ is symmetric monoidal we also require $f \otimes 0 = 0$ and $f \otimes (g + h) = f \otimes g + f \otimes h$ for all $f, g, h$. 
\end{definition}

In a category with discarding and addition $(\catC, \discard{}, +)$ we think of $f + g$ as the coarse-graining of the processes $f$ and $g$. Previously we have described this with a partial operation $\ovee$, which we will return to shortly, and which often arose from certain coproducts in our category.

 The corresponding way to add objects together in the presence of addition is as follows. Recall that in any category a \indef{product}\index{product} of objects $A, B$ is given by an object and morphisms $(A \leftarrow A \times B \rightarrow B)$ satisfying the dual conditions to those of a coproduct.

\begin{definition} \label{def:biproduct}
~\cite{mac1978categories}\label{not:biproduct}\label{not:projection}
In any category $\catC$ with zero morphisms, a \deff{biproduct}\index{biproduct} of a pair of objects $A, B$ is another object $A \biprod B$ together with morphisms
\begin{equation} \label{eq:biprod-maps}
\scalebox{1.0}{\input{./figures/biprod-mod.tikz}}
\end{equation}
for which $(\coproj_A, \coproj_B)$ and $(\pproj_A, \pproj_B)$ make $A \biprod B$ a coproduct and product, respectively, and which satisfy the equations  
\begin{align}
\pproj_A \circ \coproj_A &= \id{A} & \pproj_A \circ \coproj_B &= 0 \label{eq:biprod-eq-row1}
\\
\pproj_B \circ \coproj_A &= 0 & \pproj_B \circ \coproj_B  &= \id{B} \label{eq:biprod-eq-row2}
\end{align}
As for coproducts, in a category with discarding we call a biproduct \deff{causal}\index{biproduct!causal} when $\coproj_A$ and $\coproj_B$ are causal. 
\end{definition}
\noindent
Note that, like the morphisms $\triangleright_i$ earlier, $\pproj_A$ and $\pproj_B$ are typically not causal. 

More generally, we may define a (causal) biproduct $A_1 \biprod \dots \biprod A_n$ of any finite set of objects similarly. A category in fact has such finite biproducts precisely when it has binary ones and a zero object. It is well-known that in the presence of addition biproducts may also be described entirely equationally, as follows.

\begin{lemma} \label{lem:equational}
In any category with addition, a collection of morphisms as in~\eqref{eq:biprod-maps} forms a biproduct iff they satisfy~\eqref{eq:biprod-eq-row1}, \eqref{eq:biprod-eq-row2} and
\begin{equation} \label{eq:extra-biprod-eqn}
\coproj_A  \circ \pproj_A + \coproj_B  \circ \pproj_B = \id{A \biprod B}
\end{equation}
\end{lemma}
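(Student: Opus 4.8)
The plan is to prove both directions by direct diagrammatic computation, using only the commutative-monoid enrichment (bilinearity of $\circ$ over $+$ and $0 \circ f = 0 = g \circ 0$) together with the universal properties in one direction and explicit mediating morphisms in the other. This is the classical equational characterisation of biproducts, so the work is bookkeeping rather than insight; the one point worth flagging is that in the converse direction the product and coproduct structure must not be assumed but derived from \eqref{eq:extra-biprod-eqn}.

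For the forward direction, suppose the morphisms of~\eqref{eq:biprod-maps} form a biproduct. Equations~\eqref{eq:biprod-eq-row1} and~\eqref{eq:biprod-eq-row2} hold by the definition of a biproduct, so only~\eqref{eq:extra-biprod-eqn} remains. I would set $e := \coproj_A \circ \pproj_A + \coproj_B \circ \pproj_B \colon A \biprod B \to A \biprod B$ and compute $\pproj_A \circ e$ and $\pproj_B \circ e$: expanding by distributivity and applying~\eqref{eq:biprod-eq-row1},~\eqref{eq:biprod-eq-row2} gives $\pproj_A \circ e = \id{A} \circ \pproj_A + 0 = \pproj_A$ and likewise $\pproj_B \circ e = \pproj_B$. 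Since $(\pproj_A,\pproj_B)$ exhibit $A \biprod B$ as a product, any two morphisms into $A \biprod B$ that agree after composition with both projections are equal; as $\id{A\biprod B}$ also satisfies $\pproj_A \circ \id{} = \pproj_A$ and $\pproj_B \circ \id{} = \pproj_B$, I conclude $e = \id{A\biprod B}$, which is~\eqref{eq:extra-biprod-eqn}.

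For the converse, assume~\eqref{eq:biprod-eq-row1},~\eqref{eq:biprod-eq-row2} and~\eqref{eq:extra-biprod-eqn}. To see that $(\pproj_A,\pproj_B)$ form a product, given $f \colon C \to A$ and $g \colon C \to B$ I would put $h := \coproj_A \circ f + \coproj_B \circ g$; then~\eqref{eq:biprod-eq-row1},~\eqref{eq:biprod-eq-row2} give $\pproj_A \circ h = f$ and $\pproj_B \circ h = g$, and if $h'$ is another such morphism, pre-composing~\eqref{eq:extra-biprod-eqn} with $h'$ yields $h' = \coproj_A \circ (\pproj_A \circ h') + \coproj_B \circ (\pproj_B \circ h') = \coproj_A \circ f + \coproj_B \circ g = h$, establishing uniqueness. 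The coproduct statement is exactly dual: the three equations are invariant under reversing arrows and swapping $\coproj \leftrightarrow \pproj$, so the same argument with $k := f \circ \pproj_A + g \circ \pproj_B$ shows $(\coproj_A,\coproj_B)$ is a coproduct. Together with~\eqref{eq:biprod-eq-row1},~\eqref{eq:biprod-eq-row2} this is precisely the data defining a biproduct, completing the equivalence; the extension to $n$-ary biproducts follows either from the fact that finite biproducts are binary ones plus a zero object, or by an evident induction with $\sum_i \coproj_{A_i} \circ \pproj_{A_i} = \id{}$ in place of~\eqref{eq:extra-biprod-eqn}.
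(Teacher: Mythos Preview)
Your proof is correct and follows essentially the same approach as the paper: the forward direction uses that $e = \coproj_A \circ \pproj_A + \coproj_B \circ \pproj_B$ preserves the (co)projections and then appeals to the universal property, while the converse constructs the mediating morphism explicitly and uses~\eqref{eq:extra-biprod-eqn} for uniqueness. The paper's proof is terser (it leaves the uniqueness step implicit and appeals to duality for the product), but the content is the same.
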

\begin{proof}
For any biproduct the morphism $\coproj_A  \circ \pproj_A + \coproj_B  \circ \pproj_B$ preserves each of the (co)projections and so is indeed equal to $\id{A \biprod B}$. Conversely if this holds then for any $f \colon A \to C, g \colon B \to C$ the morphism $h = \pproj_A \circ f + \pproj_B \circ g$ has $h \circ \coproj_A = f$ and $h \circ \coproj_B = g$. This makes $(\coproj_A, \coproj_B)$ a coproduct, and $(\pproj_A, \pproj_B)$ is a product dually. 
\end{proof}

The presence of biproducts provides a way to describe addition and matrix-like features internally to a category. 
Indeed any category with finite biproducts has a unique enrichment in commutative monoids, given for morphisms $f, g \colon A \to B$ by
\begin{equation} \label{eq:additionfrombiprod}
f + g := (
\begin{tikzcd}
A \rar{\Delta} & A \biprod A \rar{[f,g]} & B
\end{tikzcd}
)
\end{equation}
where $\Delta$ is defined by $\pproj_1 \circ \Delta = \id{A} = \pproj_2 \circ \Delta$.

In a monoidal category $\catC$ we call biproducts \indef{distributive}\index{distributive!biproducts} when they are distributive as coproducts. In this case the addition moreover makes the scalars $S = \catC(I,I)$ into a commutative semi-ring, and there is a full monoidal embedding $\MatS \hookrightarrow \catC$ sending each object $n$ to 
\[\label{not:copowerOfI}
n \cdot I:= \overbrace{I \biprod \dots \biprod I}^n
\]
In any category with discarding and addition we call a morphism $f$ \index{sub-causal}\indef{sub-causal} when it satisfies
\[
\discard{} \circ f + e = \discard{}
\] 
for some effect $e$, writing $\catC_{\subcausal}$\label{not:subcausalsubcat} for the subcategory of sub-causal morphisms. 
The ability to add arbitrary morphisms $f + g$ means that categories with addition typically contain not only sub-causal morphisms, as in Chapter~\ref{chap:OpCategories}, but more general ones which we will call \indef{super-causal}\index{super-causal}. 

\subsection{Examples} \label{sec:Addition-Examples}
All of our examples of biproducts are distributive.
\begin{exampleslist}
\item 
Each category $\MatS$ has finite causal biproducts given on objects by $n \biprod m = n + m$, with the induced addition being simply point-wise addition of matrices. 
\item 
$\KlDT$ has finite causal biproducts with $A \biprod B$ given by disjoint union of the sets $A, B$, with $f + g$ being the point-wise addition of functions. In particular so does $\Class$.
\item 
$\Rel$ also has finite causal biproducts given by disjoint union of sets. Here these induce the addition $R + S := R \vee S$. More generally $\Rel(\catC)$ has finite biproducts whenever $\catC$ is regular and coherent.
\item 
$\CStarop$, $\vNop$ and $\FCStar$ all have causal finite biproducts given by the direct sum $A \biprod B$ of algebras, inducing the usual addition $f + g$ of completely positive maps.
\item 
$\Hilb$ has finite biproducts given by the direct sum $\hilbH \biprod \hilbK$ of Hilbert spaces. In contrast to the above examples whose biproducts encode coarse-graining, here the addition operation $f + g$ on linear maps describes quantum \emps{superpositions}. 
\item 
$\Quant{}$ has addition, given by the usual addition of completely positive maps, but does not have biproducts. 

For any pair of objects $\hilbH, \hilbK$ we may consider their biproduct $\hilbH \biprod \hilbK$ in $\Hilb$, which induces morphisms 
$
B(\hilbH) \leftrightarrows B(\hilbH \biprod \hilbK) \leftrightarrows B(\hilbK)
$
in $\Quant{}$. 
However, this is no longer a biproduct in $\Quant{}$, where addition is the coarse-graining of completely positive maps, rather than superposition. At the level of Kraus maps these morphisms have further properties which we study in Chapter~\ref{chap:superpositions}.
\item 
To define addition as in~\eqref{eq:additionfrombiprod} it in fact suffices to have $n$-ary \indef{bipowers}\index{biproduct!bipower}, which are biproducts of the form $n \cdot A := A \biprod \dots \biprod A$. In~\cite{gogioso2017categorical} Gogioso and Scandolo define a notion of an \emps{$R$-probabilistic theory}, for a given commutative semi-ring $R$. Equivalently this is just a symmetric monoidal category $\catC$ with discarding and finite distributive causal bipowers, with $R$ then given by the scalars $R=\catC(I,I)$. Hence super-causal processes and the mild physical assumptions which induce them, which we discuss shortly, are implicit in this approach. 
\item
Any category with addition $\catC$ embeds universally into one with biproducts $\catC^\oplus$\label{not:biproductcompletion}, its \indef{biproduct completion}, defined as follows~\cite[Ex.~VIII.2.6]{mac1978categories}:
\begin{itemize}
\item 
objects are finite lists $(A_1, \dots, A_n)$ of objects of $\catC$;
\item 
morphisms $M \colon (A_i)^n_{i=1} \to (B_j)^m_{j=1}$ are matrices $(M_{i,j} \colon A_i \to B_j)^{n,m}_{i=1,j=1}$ of morphisms from $\catC$, under matrix composition.
\end{itemize}
Biproducts here are given by concatenation of lists. Moreover when $\catC$ is symmetric monoidal with discarding so is $\catC^\oplus$, with $I = (I)$, $\otimes$ being the Kronecker product of matrices, and $\discard{(A_1, \dots, A_n)} = (\discard{A_i})^n_{i=1}$.

This construction is analogous to our earlier $\PTest(-)$ construction for those with a partial addition $\ovee$, in fact again being a special case of representability for multicategories~\cite[4.16]{pisani2014sequential}. 

In particular for any commutative semi-ring we have $\Mat_S \simeq S^\oplus$. Moreover 
\[
\Quant{}^{\oplus} \simeq \FCStar 
\]
since, as remarked in Example~\ref{sec:OPCATEXAMPLES}~\ref{ex:QuantOP}, finite-dimensional C*-algebras all have the form $\bigoplus^n_{i=1} B(\hilbH_i)$ for some finite-dimensional $\hilbH_i$. In fact one may recover $\FCStar$ from $\Quant{}$ without mentioning addition, using a construction on its `idempotents', as the author explored with Coecke and Selby in~\cite{coecke2017two}. 
\end{exampleslist}

\section{Totalisation} \label{sec:totalisation}

The results of this section are in collaboration with Kenta Cho.

\subsection{Sub-causal categories}  \label{sec:Subcausalcats}

We now wish to understand how such a total addition on morphisms arises from basic operational assumptions. Earlier, under the mild assumptions of Section~\ref{sec:algebraic}, we saw that a physical theory may be fully described by a category instead with a partial addition on morphisms, satisfying the following. 

Recall that a partial commutative monoid (PCM) is a set $M$ together with a suitably associative and commutative partial binary operation $\ovee$ with a unit element $0$. We write $x \bot y$ \label{not:compatPCM} whenever $x \ovee y$ is defined.

\begin{definition} \label{def:PCM-with-discarding}
A \deff{sub-causal category}\index{sub-causal category} $\catC$ is a category with discarding for which:
\begin{enumerate}[label=\arabic*., ref=\arabic*]
\item $\catC$ is \emps{enriched in PCMs}, meaning that it has zero morphisms, that 
each homset $\catC(A,B)$ forms a PCM with unit $0$, and that whenever $f \bot g$ we have 
\label{not:PCMenriched}
\begin{align*}
h \circ (f \ovee g) = h \circ f \ovee h \circ g
\\ 
(f \ovee g) \circ k = f \circ k \ovee g \circ k
\end{align*}
and also $(f \ovee g) \otimes k = f \otimes k \ovee g \otimes k$ when $\catC$ is symmetric monoidal;
\item 
Every morphism is sub-causal; 
\item \label{enum:disc-compat}
For all $f, g \colon A \to B$ we have $\discard{B} \circ f \ \bot \ \discard{B} \circ g \ \implies \ f \ \bot \  g$.
\end{enumerate}
\end{definition}

Here by \indef{sub-causality}\index{sub-causal} of a morphism $f$ in terms of a partial operation $\ovee$ we as expected mean that $\discard{} \circ f \ovee e = \discard{}$ for some effect $e$. All of the categories of events of the operational theories we met in Chapter~\ref{chap:OpCategories} are sub-causal, with our terminology justified by the following.

\begin{lemma} \label{lem:sub-causal}
Let $\catC$ be a (symmetric monoidal) category with discarding and addition. Then $\catC_{\subcausal}$ is a (symmetric monoidal) sub-causal category. 
\end{lemma}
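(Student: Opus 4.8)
The plan is to equip $\catC_{\subcausal}$ with the partial addition $\ovee$ inherited from the total one: declare $f \ovee g$ to be defined exactly when $f + g$ is again sub-causal, and in that case set $f \ovee g \kleene f + g$. Then check the clauses of Definition~\ref{def:PCM-with-discarding} in turn. First I would record that $\catC_{\subcausal}$ is a (symmetric monoidal) subcategory of $\catC$ with discarding. Since every causal morphism is sub-causal (take the witnessing effect to be $0$), $\catC_{\subcausal}$ contains the causal subcategory $\catC_{\causal}$ of Lemma~\ref{lem:coherence-isoms-causal}, hence all identities, all discarding maps $\discard{A}$, and all coherence isomorphisms; it also contains the zero morphisms, since $\discard{} \circ 0 = 0$ and $0 + \discard{} = \discard{}$. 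Closure under $\circ$ follows by chaining witnesses: from $\discard{} \circ g + e = \discard{}$ and $\discard{} \circ f + d = \discard{}$ one gets, by right-distributivity, $\discard{} \circ (g \circ f) + (e \circ f + d) = \discard{}$, and $e \circ f + d$ is again an effect. Closure under $\otimes$ is analogous, using $\discard{A \otimes B} = \discard{A} \otimes \discard{B}$ and the bilinearity of $\otimes$ over $+$ (right-linearity is part of the definition of a category with addition; left-linearity follows from it using the swap $\sigma$). The one small lemma worth isolating is \emph{downward closure}: if $a + b$ is sub-causal then so are $a$ and $b$, witnessed respectively by $\discard{} \circ b$ plus the witness for $a+b$, and $\discard{} \circ a$ plus that witness.

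Next I would verify Clause~1, that $\catC_{\subcausal}$ is enriched in PCMs. Commutativity $f \ovee g \kleene g \ovee f$ and the unit law $f \ovee 0 \kleene f$ are immediate from the corresponding facts for $+$ in $\catC$ together with the observation that $0$ is sub-causal. Associativity $(f \ovee g) \ovee h \kleene f \ovee (g \ovee h)$ is the one point needing care: if the left side is defined then $f + g$ and $(f + g) + h = f + g + h$ are sub-causal, so by downward closure $g + h$ is sub-causal, whence $g \ovee h$ is defined, and then $f + (g + h) = f + g + h$ is sub-causal, so $f \ovee (g \ovee h)$ is defined and equals $f + g + h$; the converse direction is symmetric. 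The three bilinearity-when-defined equations hold because $\circ$ and $\otimes$ preserve sub-causal morphisms: if $f \ovee g$ is defined then $h \circ (f \ovee g) = h \circ f + h \circ g = h \circ (f + g)$ is sub-causal, so $h \circ f \ovee h \circ g$ is defined and equals it, and likewise for $(f \ovee g) \circ k$ and $(f \ovee g) \otimes k$.

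Finally, Clause~2 is essentially the definition of $\catC_{\subcausal}$: if $f$ is sub-causal in $\catC$, with $\discard{} \circ f + e = \discard{}$, then $\discard{} \circ f \ovee e$ is defined (its value $\discard{}$ being causal, hence sub-causal) and equals $\discard{}$, so $f$ is sub-causal in the $\ovee$-sense of Definition~\ref{def:PCM-with-discarding}; moreover $e$ is itself sub-causal (witnessed by $\discard{} \circ f$), so it is a genuine effect of $\catC_{\subcausal}$. For Clause~3, if $\discard{B} \circ f \bot \discard{B} \circ g$ then $\discard{B} \circ f + \discard{B} \circ g + e = \discard{A}$ for some effect $e$; since $\discard{B} \circ f + \discard{B} \circ g = \discard{B} \circ (f + g)$, the same $e$ witnesses sub-causality of $f + g$, so $f \bot g$.

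I expect the only genuinely delicate point to be the associativity of $\ovee$, which is exactly where the downward-closure property is needed; everything else is bookkeeping. The thing to stay alert to throughout is that every witnessing effect produced along the way is itself sub-causal and hence lives in $\catC_{\subcausal}$, and, in the monoidal case, that one leans on $\discard{A \otimes B} = \discard{A} \otimes \discard{B}$ and the bilinearity of $\otimes$.
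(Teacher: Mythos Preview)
Your proposal is correct and follows essentially the same approach as the paper: define $f \ovee g$ as $f+g$ whenever the latter is sub-causal, verify that $\catC_{\subcausal}$ is a (symmetric monoidal) subcategory containing the causal morphisms and zeroes, and check the PCM axioms and Clauses~2--3 directly. Your treatment is in fact more explicit than the paper's in isolating the downward-closure step needed for associativity of $\ovee$, which the paper dispatches with the single sentence ``this is immediate by associativity''; otherwise the two arguments coincide.
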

\begin{proof}
We first check that $\catC_{\subcausal}$ is indeed a monoidal subcategory of $\catC$. 
If $f, g$ are sub-causal via the effects $d, e$ respectively, then 
\[
\discard{} \circ (g \circ f) + (d \circ f + e) = (\discard{} \circ g + d) \circ f + e = \discard{} \circ f + e = \discard{}
\]
and so $g \circ f$ is sub-causal. Similarly if $f \colon A \to C$ and $g \colon B \to D$ are sub-causal via effects $d, e$ then $f \otimes g$ is sub-causal since we have
\[
\scalebox{0.8}{\input{./figures/subcausal-tensor.tikz}}
\]
By Lemma~\ref{lem:coherence-isoms-causal} all coherence isomorphisms in $\catC$ are causal and so restrict to $\catC_{\subcausal}$. Each effect $\discard{A}$ and all zero morphisms are clearly sub-causal. 

 Next, in $\catC_{\subcausal}$ we set $f \ovee g$ to be defined and equal to $f + g$ whenever this morphism is sub-causal. To see that this makes each homset a PCM, we just need to check that if $f, g, h$, $f + g$ and $(f + g) + h$ are all sub-causal, then so is $g + h$. But this is immediate by associativity. Finally condition~\ref{enum:disc-compat} holds in $\catC_{\subcausal}$ since if $\discard{} \circ f + \discard{} \circ g$ is sub-causal then clearly so is $f + g$. 
 \end{proof}

\subsection{The $\To(\catC)$ construction}

We now wish to provide a converse result, showing that every sub-causal category arises from one with a total addition.  

Our approach is based on the following construction due to Jacobs and Mandemaker, allowing one to extend any PCM to a commutative monoid~\cite{Jacobs2012a}. For any set $A$ we write $\multiset(A)$ for the free commutative monoid on $A$. Its elements are finite formal sums $\sum^n_{i=1} a_i = a_1 + \dots + a_n$ of elements of $A$. The monoid operation $+$ is formal addition of sums and $0_{\multiset(A)}$ is the empty sum. 

\begin{definition}[Totalisation]~\cite{Jacobs2012a}
Let $(M, \ovee, 0_M)$ be a PCM. We define a commutative monoid 
\[
\To(M) := \multiset(M) / \! \sim
\]
where $\sim$ is the smallest monoid congruence such that $0_M \sim 0_{\To(M)}$ and for all $x, y \in M$ we have $x + y \sim x \ovee y$ whenever $x \bot y$ in $M$.  

Now $M$ embeds faithfully into $\To(M)$ as $\{ [x] \mid x \in M \}$. This makes $M$ a \indef{downset}\index{downset} of $\To(M)$, meaning that if $a, b \in \To(M)$ with $a + b \in M$ then $a, b \in M$. 
\end{definition}

Totalisation is characterised by a universal property. Recall that a \indef{coreflection}\index{coreflection} is an adjunction $F \dashv G$ for which the left adjoint $F$ is full and faithful, or equivalently the unit $\eta \colon \id{} \to G \circ F$ is an isomorphism. Write $\DCM$\label{cat:DCM} for the category of commutative monoids with a specified downset and $\PCM$\label{cat:PCM} for the category of PCMs, with suitable morphisms in each case. The assignment $M \mapsto \To(M)$ is left adjoint to the functor $\DCM \to \PCM$ which takes downsets, and moreover this adjunction is a coreflection~\cite[Theorem 4.1]{Jacobs2012a}. 

The following fact will be useful.

\begin{lemma}~\cite[p.~93]{ExpMon2012Ja} \label{lem:totalisation-fact}
Let $M$ be a PCM. If $[a_1 + \dots + a_n]= [b]$ in $\To(M)$, for $a_i, b \in M$, then $\bigovee^n_{i=1} a_i$ is defined in $M$ and equal to $b$. 
\end{lemma}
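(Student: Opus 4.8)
The plan is to argue directly at the level of the free commutative monoid $\multiset(M)$, via the partial ``summation'' function $\Sigma \colon \multiset(M) \rightharpoonup M$ which sends a finite formal sum $s_1 + \dots + s_m$ (that is, a finite multiset of elements of $M$) to the iterated partial sum $s_1 \ovee (s_2 \ovee (\cdots \ovee s_m))$ when this is defined, and sends the empty sum to $0_M$. By the associativity and commutativity axioms of a PCM this is well defined, independently of the chosen ordering and bracketing: all iterated $\ovee$'s of the elements of a fixed multiset are codefined and agree, and every ``sub-sum'' of a defined sum is again defined. (This generalized associativity of PCMs is standard; I would recall it in a sentence.) By construction $\Sigma(x_1 + \dots + x_n) = \bigovee^n_{i=1} x_i$ for $x_i \in M$, and $\Sigma$ sends the singleton multiset on $x$ to $x$ itself.

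The heart of the proof is the claim that \emph{$\Sigma$ is invariant along $\sim$}: if $S \sim S'$ in $\multiset(M)$ then $\Sigma(S)$ is defined iff $\Sigma(S')$ is, and they are equal when both are defined. Since $\sim$ is the smallest monoid congruence identifying $0_M$ with the empty sum and $x + y$ with $x \ovee y$ whenever $x \bot y$, it is the reflexive, transitive and symmetric closure of the $\uplus$-context closure of these two generating pairs; so it is enough to check that $\Sigma$ is invariant across a single elementary step $T \uplus \{x,y\} \leftrightarrow T \uplus \{x \ovee y\}$ (for $x \bot y$ and any $T$) and $T \uplus \{0_M\} \leftrightarrow T$, in \emph{both} directions. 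Each check is a short computation from the PCM axioms: for $T \uplus \{x,y\} \to T \uplus \{x \ovee y\}$, if the left sum is defined then generalized associativity gives that $x \ovee y$ and $\Sigma(T)$ are defined with total $(x \ovee y) \ovee \Sigma(T) = \Sigma(T \uplus \{x \ovee y\})$; conversely, definedness of the right side rebrackets via the associativity axiom as $x \ovee (y \ovee \Sigma(T))$, which is a legitimate bracketing of $\Sigma(T \uplus \{x,y\})$. The step $\{0_M\} \leftrightarrow \emptyset$ is handled by the unit law $a \ovee 0 \kleene a$. Invariance of $\Sigma$ along the whole congruence then follows, since ``defined on both sides and equal'' is reflexive, symmetric, transitive, and stable under $\uplus$-contexts.

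Granting the claim, the lemma is immediate: $[a_1 + \dots + a_n] = [b]$ in $\To(M) = \multiset(M)/\!\sim$ means precisely that $a_1 + \dots + a_n \sim b$ in $\multiset(M)$; since $\Sigma(b) = b$ is defined, invariance forces $\Sigma(a_1 + \dots + a_n) = \bigovee^n_{i=1} a_i$ to be defined and equal to $b$.

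The main obstacle is the second paragraph: because $\sim$ is generated by a \emph{symmetric} family of relations, one must verify that definedness of $\Sigma$ is preserved in both directions of each elementary rewrite, and this rests squarely on the generalized associativity of PCMs (codefinedness of all bracketings, and that sub-sums of a defined sum are defined) — a point worth stating cleanly up front. Everything else is routine. A slightly longer alternative would be to induct on $n$, using that the image of $M$ is a downset of $\To(M)$ together with faithfulness of $M \hookrightarrow \To(M)$ to peel off one summand at a time; but the inductive step still bottoms out at the two-summand case, whose proof again reopens the congruence, so the direct argument above is the more economical one.
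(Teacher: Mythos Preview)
The paper does not prove this lemma itself; it is stated with a bare citation to Jacobs~\cite[p.~93]{ExpMon2012Ja}. Your argument is correct and is the natural one: define the partial summation map $\Sigma \colon \multiset(M) \rightharpoonup M$, observe it is well defined by generalized PCM associativity, and verify that it is invariant across each generating rewrite of the congruence (in both directions, as you rightly stress, since the congruence is symmetric). The only point requiring care is exactly the one you flag: the generalized associativity statement that all bracketings of a finite multiset sum are codefined and that every sub-sum of a defined sum is defined. This does follow by induction from the binary axioms $a \ovee (b \ovee c) \kleene (a \ovee b) \ovee c$ and $a \ovee b \kleene b \ovee a$, but since the reverse direction of the step $T \uplus \{x \ovee y\} \to T \uplus \{x,y\}$ leans on it, it is worth stating precisely rather than in passing.
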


A motivating example is the passage from probabilities to `unnormalised' ones.

\begin{example}
Let $[0,1]$ be the unit interval, considered as a PCM with $p \ovee q$ defined and equal to $p + q$ whenever this is $\leq 1$. Then $\To([0,1]) \simeq \mathbb{R}^+$, the monoid of positive real numbers under addition.
\end{example}

Let us now extend totalisation to the level of categories. 

\begin{definition}[Totalisation of a category] \label{def:tot-cat}
Let $\catC$ be a category enriched in PCMs. We define the category $\To(\catC)$ to have the same objects as $\catC$, with  \label{not:totalisationcat}
\[
\To(\catC)(A,B) = \To(\catC(A,B))
\]
That is, morphisms $A \to B$ are $\sim$-equivalence classes $[ \sum^n_{i=1} f_i]$ for morphisms $f_i \colon A \to B$ in $\catC$. Composition is given by  
\[
 [\sum^n_{i=1} g_i] \circ [\sum^n_{i=1} f_j] = [\sum^n_{i=1} g_i \circ f_j] 
 \qquad
\]
and we set $\id{A} = [\id{A}]$. Then $\To(\catC)$ is enriched in commutative monoids, with $+$ defined in $\To(\catC(A,B))$ as before. When $\catC$ is symmetric monoidal, we define a symmetric monoidal structure on $\To(\catC)$ by setting $A \otimes B$ to be as in $\catC$ and
\begin{equation*}
 [\sum^n_{i=1} f_i] \otimes [\sum^m_{j=1} g_j] = [\sum^{n}_{i=1} \sum^m_{j=1} f_i \otimes g_j]
\end{equation*}
with unit object $I$ and coherence isomorphisms inherited from $\catC$, i.e.~$\alpha_{A,B,C}=[\alpha_{A,B,C}]$, $\lambda_A = [\lambda_A]$, $\rho_A = [\rho_A]$ and $\sigma_{A,B} = [\sigma_{A,B}]$. When $\catC$ has discarding so does $\To(\catC)$ via $\discard{A} := [\discard{A}]$.
\end{definition}

By a \indef{morphism}\index{morphism!of sub-causal categories} of sub-causal categories, we mean one $F \colon \catC \to \catD$ of categories with discarding such that $F(0) = 0$ and whenever $f \bot g$ we have $F(f) \bot F(g)$ with $F(f \ovee g) = F(f) \ovee F(g)$.

\begin{theorem} \label{thm:Total-subcausal}
Let $\catC$ be a (symmetric monoidal) sub-causal category. Then $\To(\catC)$ is a well-defined (symmetric monoidal) category with discarding and addition, and there is a (symmetric monoidal) isomorphism of sub-causal categories $\catC \simeq \To(\catC)_\subcausal$.
\end{theorem}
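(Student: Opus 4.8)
The plan is to verify each claimed structure on $\To(\catC)$ directly from its definition, then exhibit the isomorphism with $\To(\catC)_\subcausal$. First I would check that $\To(\catC)$ is a well-defined category: composition $[\sum_i g_i] \circ [\sum_j f_j] = [\sum_{i,j} g_i \circ f_j]$ must be shown independent of representatives. Since each homset $\To(\catC)(A,B)$ is by definition the monoid $\To(\catC(A,B))$, and composition $\catC(B,C) \times \catC(A,B) \to \catC(A,C)$ is PCM-bilinear (it preserves $\ovee$ in each variable, by the enrichment axioms of Definition \ref{def:PCM-with-discarding}), it extends uniquely to a bilinear map $\multiset(\catC(B,C)) \times \multiset(\catC(A,B)) \to \To(\catC(A,C))$ factoring through $\sim$; this gives well-definedness and associativity simultaneously, with $\id{A} = [\id{A}]$ a unit. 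That $\To(\catC)$ is enriched in commutative monoids is immediate since each $\To(\catC(A,B))$ is a commutative monoid and composition is bilinear by construction. For the symmetric monoidal structure, I would note that $\otimes$ on $\catC$ is likewise PCM-bilinear on homsets (again by Definition \ref{def:PCM-with-discarding}), so it extends to a bifunctor on $\To(\catC)$ in the same way; the coherence isomorphisms $[\alpha], [\lambda], [\rho], [\sigma]$ satisfy the coherence and naturality equations because they do in $\catC$ and the quotient map is a strict monoidal functor. Bilinearity of the extended $\otimes$ gives $f \otimes 0 = 0$ and $f \otimes (g+h) = f\otimes g + f \otimes h$, so $\To(\catC)$ has addition. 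Finally $\discard{A} := [\discard{A}]$ gives discarding, with $\discard{I} = [\discard{I}] = [\id{I}] = \id{I}$.

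Next I would identify the subcategory $\To(\catC)_\subcausal$ of sub-causal morphisms and compare it with $\catC$. There is an evident functor $J \colon \catC \to \To(\catC)$, identity on objects, sending $f \mapsto [f]$; it is strict (symmetric) monoidal and preserves discarding. By the faithfulness half of the coreflection property of totalisation on PCMs (each $M \hookrightarrow \To(M)$ is injective, Lemma-level fact cited from \cite{Jacobs2012a}), $J$ is faithful. The key claim is that the essential image of $J$ is exactly $\To(\catC)_\subcausal$: a morphism $[\sum_i f_i] \colon A \to B$ in $\To(\catC)$ is sub-causal iff it equals $[f]$ for some (necessarily unique) $f \colon A \to B$ in $\catC$. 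For the forward direction, suppose $\discard{} \circ [\sum_i f_i] + [e'] = \discard{}$ in $\To(\catC)$, where $e' \in \To(\catC(A,I))$ may be taken of the form $[\sum_k e_k]$; then $[\sum_i \discard{}\circ f_i + \sum_k e_k] = [\discard{A}]$ in $\To(\catC(A,I))$, so by Lemma \ref{lem:totalisation-fact} the coarse-graining $(\bigovee_i \discard{}\circ f_i) \ovee (\bigovee_k e_k)$ is defined in $\catC(A,I)$ and equals $\discard{A}$. In particular $\bigovee_i \discard{}\circ f_i$ is defined, so by the compatibility axiom \ref{enum:disc-compat} of a sub-causal category (applied iteratively) $\bigovee_i f_i$ is defined in $\catC(A,B)$; call it $f$. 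Then Lemma \ref{lem:totalisation-fact} again gives $[\sum_i f_i] = [f] = J(f)$, and $f$ is sub-causal in $\catC$ since $\discard{}\circ f \ovee (\bigovee_k e_k) = \discard{}$. Conversely if $f$ is sub-causal in $\catC$ via effect $e$, then $\discard{}\circ J(f) + J(e) = J(\discard{}\circ f \ovee e) = J(\discard{A}) = \discard{A}$, so $J(f)$ is sub-causal in $\To(\catC)$. Uniqueness of the preimage follows from faithfulness of $J$.

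To finish, I would verify that $J$ restricts to a sub-causal-category isomorphism $\catC \to \To(\catC)_\subcausal$. Fullness onto $\To(\catC)_\subcausal$ is exactly the forward direction just established; faithfulness is already known; so $J$ is an isomorphism of categories onto the (strictly) monoidal-with-discarding subcategory $\To(\catC)_\subcausal$ (one checks $\To(\catC)_\subcausal$ is closed under $\circ$, $\otimes$ and contains the coherence isomorphisms and $\discard{}$, as in Lemma \ref{lem:sub-causal}). It remains to see $J$ is a morphism of sub-causal categories, i.e.\ preserves $\ovee$: if $f \bot g$ in $\catC$ then $f \ovee g$ is sub-causal (being a coarse-graining of sub-causal morphisms), $f+g$ is sub-causal in $\To(\catC)$, and $J(f)\ovee J(g)$ — defined in $\To(\catC)_\subcausal$ as $J(f)+J(g) = J(f+g)$ — equals $J(f\ovee g)$ by Lemma \ref{lem:totalisation-fact}; conversely if $J(f) \ovee J(g)$ is defined in $\To(\catC)_\subcausal$ then $[f]+[g]$ is sub-causal so $[f+g] \in \To(\catC)_\subcausal$, whence by the image characterisation $f \ovee g$ is defined in $\catC$. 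The symmetric monoidal case adds nothing beyond checking $J$ respects $\otimes$, which is immediate from the definitions. The main obstacle I anticipate is the image characterisation — specifically making rigorous the passage from "$[\sum_i f_i]$ is sub-causal in $\To(\catC)$" to "$\bigovee_i f_i$ exists in $\catC$", which is precisely where Lemma \ref{lem:totalisation-fact} and the compatibility axiom \ref{enum:disc-compat} must be combined carefully (including handling the auxiliary effect $e'$, which itself may be a formal sum); everything else is routine bilinearity bookkeeping.
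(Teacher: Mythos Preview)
Your proposal is correct and follows essentially the same route as the paper: establish that $\To(\catC)$ is a (symmetric monoidal) category with addition via bilinearity of $\circ$ and $\otimes$, then use Lemma~\ref{lem:totalisation-fact} together with axiom~\ref{enum:disc-compat} to show that the identity-on-objects functor $f \mapsto [f]$ has image exactly $\To(\catC)_\subcausal$, and finally check that this inclusion reflects the relation $\bot$. The paper additionally remarks that well-definedness of $\To(\catC)$ can be obtained abstractly by change of base along the monoidal functor $\To \colon \PCM \to \CMon$, but otherwise the argument is the same (and you are in fact more explicit than the paper about where axiom~\ref{enum:disc-compat} is invoked).
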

\begin{proof}
One may verify directly that these definitions of $\circ, \otimes$ and $+$ make $\To(\catC)$ a well-defined (symmetric monoidal) category with addition. 

 Alternatively, this in fact holds for entirely abstract reasons. By results of Jacobs and Mandemaker~\cite[Theorems 3.2, 4.1]{Jacobs2012a} totalisation defines a strong monoidal functor $\To \colon \PCM \to \DCM$, which is easily seen to be symmetric monoidal also, and hence in particular defines such a functor from $\PCM$ to the category $\CMon$\label{cat:CMon} of commutative monoids. By the `change of base' for enriched categories~\cite{eilenberg1966closed}, this means that it sends categories (monoidally) enriched in $\PCM$ to categories (monoidally) enriched in $\CMon$~\cite[Theorem 5.7.1]{cruttwell2008normed}.

When $\catC$ has discarding it's immediate that this lifts to $\To(\catC)$ as above. From the definition we see that there is always a faithful (symmetric monoidal) identity-on-objects functor $\catC \hookrightarrow \To(\catC)$ given by $f \mapsto [f]$. By sub-causality in $\catC$, each morphism $[f]$ is sub-causal in $\To(\catC)$. 

Conversely, let $f = [f_1 + \dots + f_n] \colon A \to B$ be sub-causal in $\To(\catC)$, via some effect $e = [e_1 + \dots + e_m]$ on $A$. Then 
\[
[\discard{A}] = \discard{A} = \discard{B} \circ f + e = [\sum^n_{i=1} \discard{B} \circ f_i + \sum^m_{j=1} e_j]
\] 
and so by Lemma~\ref{lem:totalisation-fact} we have $(\bigovee^n_{i=1} f_i) \ovee (\bigovee^m_{j=1} e_j) = \discard{}$ in $\catC$. In particular $g = \bigovee^n_{i=1} f_i$ is defined in $\catC$ and so
\[
f = [\sum^n_{i=1} f_i] = [\bigovee^n_{i=1} f_i] = [g]
\]
so that $f \in \catC$. Hence the inclusion $\catC \hookrightarrow \To(\catC)_\subcausal$ is an isomorphism of categories, and it always preserves $\ovee$. Finally we need that if $f \bot g$ in $\To(\catC)_\subcausal$ then $f \bot g$ in $\catC$. But if $f + g$ is sub-causal in $\To(\catC)$, say $\discard{} \circ f + \discard{} \circ g + e = \discard{}$, then by Lemma~\ref{lem:totalisation-fact}  $\discard{} \circ f \bot \discard{} \circ g$ in $\catC$ and so $f \bot g$ in $\catC$ also.
\end{proof}

The universal property of $\To$ lifts to the level of categories.
Let $\PCMD$\label{cat:PCMD} be the category of sub-causal categories and morphisms between them. Let $\CMonD$\label{cat:CMonD} be the category of categories with addition and discarding, with morphisms being functors $F \colon \catC \to \catD$ which preserve discarding and satisfy $F(0)=0$ and $F(f+g) = F(f) + F(g)$ for all $f, g$. There is a functor $(-)_\subcausal \colon \CMonD \to \PCMD$ sending $\catC$ to $\catC_\subcausal$.

\begin{theorem} \label{thm:total-univ}
Totalisation defines a left adjoint to $(-)_\subcausal$, giving a coreflection
\[
\begin{tikzcd}[bend angle = 15, row sep = large]
 \PCMD \arrow[rr, bend left, "\To(-)"] & \bot &  \CMonD \arrow[ll, bend left, "(-)_{\subcausal}"]
 \end{tikzcd}
\]
\end{theorem}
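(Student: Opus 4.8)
The plan is to exhibit the adjunction $\To(-) \dashv (-)_\subcausal$ via its unit together with the characterisation of adjoints by universal arrows, leaning on Theorem~\ref{thm:Total-subcausal} and the hom-level universal property of totalisation of PCMs \cite[Theorem 4.1]{Jacobs2012a}. The unit will be the family of identity-on-objects inclusions $\eta_\catC \colon \catC \to \To(\catC)_\subcausal$, $f \mapsto [f]$; Theorem~\ref{thm:Total-subcausal} already tells us each $\eta_\catC$ is an isomorphism of sub-causal categories, so once the adjunction is in place it is automatically a coreflection. Thus the real work is to verify: for every $\catD$ in $\CMonD$ and every morphism $F \colon \catC \to \catD_\subcausal$ in $\PCMD$, there is a unique morphism $\widehat{F} \colon \To(\catC) \to \catD$ in $\CMonD$ with $\widehat{F}_\subcausal \circ \eta_\catC = F$.

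First I would define $\widehat{F}$ on objects by $\widehat{F}(A) := F(A)$ and on morphisms by $\widehat{F}([\sum_{i=1}^n f_i]) := \sum_{i=1}^n F(f_i)$, the latter sum now taken in $\catD$. Well-definedness is the one spot where content enters: for each pair $A,B$ the assignment $f \mapsto F(f)$ is a morphism of PCMs from $\catC(A,B)$ to the (total) monoid $\catD(FA,FB)$, since $F$ preserves $0$ and, for compatible $f,g$, sends $f\ovee g$ to $F(f)\ovee F(g)$, which in $\catD$ equals $F(f)+F(g)$. By the universal property of $\To$ on PCMs this extends uniquely to a monoid homomorphism out of $\To(\catC)(A,B)=\To(\catC(A,B))$, and that extension is precisely $\widehat F$ on this hom-set; hence $\widehat F$ respects the congruence $\sim$ and preserves $0$ and $+$ on homs.

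The remaining checks are routine bookkeeping. $\widehat F$ preserves identities since $\id{A}=[\id{A}]$ in $\To(\catC)$; it preserves composition because composition in $\To(\catC)$ distributes formal sums while $+$ distributes over $\circ$ in $\catD$ and $F$ is a functor; it preserves discarding since $\discard{A}=[\discard{A}]$ in $\To(\catC)$ and $F$ preserves discarding; and $\widehat F_\subcausal\circ\eta_\catC=F$ by construction. For uniqueness, any $G$ in $\CMonD$ with $G_\subcausal\circ\eta_\catC=F$ satisfies $G([f])=F(f)$ for all $f$, and since $[\sum_i f_i]=\sum_i[f_i]$ in $\To(\catC)$ and $G$ preserves $+$, this forces $G=\widehat F$ on morphisms, while $G(A)=F(A)$ on objects. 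Naturality of $\eta$ and functoriality of $\To(-)$ on the prescribed morphisms then come for free from the universal-arrow description of adjunctions, and coreflectivity follows since each $\eta_\catC$ is invertible.

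I do not anticipate a genuine obstacle, but the most delicate point is the well-definedness of $\widehat F$ on $\sim$-classes: one should invoke the full strength of \cite[Theorem 4.1]{Jacobs2012a} --- equivalently, note that change of base along the strong symmetric monoidal functor $\To \colon \PCM \to \CMon$ lifts the PCM-level adjunction to enriched categories, as in the alternative argument for Theorem~\ref{thm:Total-subcausal} --- rather than trying to manipulate the monoid congruence by hand, and be careful that ``$\ovee$ on compatible elements coincides with $+$'' is being used both in $\catD_\subcausal$ and (via Lemma~\ref{lem:totalisation-fact}) in identifying $\To(\catC)_\subcausal$ with $\catC$.
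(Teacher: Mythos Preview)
Your proposal is correct and follows essentially the same approach as the paper: both verify the universal-arrow characterisation by extending $F$ hom-wise via the PCM-level universal property of $\To$ from \cite[Theorem 4.1]{Jacobs2012a}, obtaining $\widehat{F}([\sum_i f_i]) = \sum_i F(f_i)$, and then observe that the unit $\eta_\catC$ coincides with the isomorphism of Theorem~\ref{thm:Total-subcausal}, yielding the coreflection. The paper's proof is terser but identical in structure; your additional remarks on functoriality, preservation of discarding, and uniqueness simply spell out what the paper leaves as ``straightforward to check''.
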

\omitthis{
\noindent
Explicitly, this means that any morphism $F \colon \catC \to \catD_{\subcausal}$ in $\PCMD$ extends uniquely to some $\widehat{F} \colon \To(\catC) \to \catD$ in $\CMonD$ such that the following commutes:
\begin{center}
\begin{tabular}{c}
 \end{tabular}
    \begin{tikzcd}
    \catC \rar{F} \dar[swap]{\simeq} & \catD_{\subcausal} \\
    \To(\catC)_{\subcausal} \arrow[ur,swap, dotted,"\hat{F}_{\subcausal}"]
    \end{tikzcd}
    \begin{tabular}{c}
    \end{tabular}
\end{center}
}
\begin{proof}
Let $\catC$ and $\catD$ be objects of $\PCMD$ and $\CMonD$ respectively, and $F \colon \catC \to \catD_{\subcausal}$ a morphism in $\PCMD$. We need to show that $F$ extends to a unique morphism $\widehat{F} \colon \To(\catC) \to \catD$ in $\CMonD$. Now $F$ defines a family of PCM-homomorphisms 
\[
F_{A,B} \colon \catC(A,B) \to \catD_{\subcausal}(F(A),F(B))
\] 
with each $\catD_{\subcausal}(F(A),F(B))$ forming a downset of $\catD(F(A),F(B))$. By the universal property of $\To(-)$, these each have a unique extension to a monoid homomorphism $\hat{F}_{A,B} \colon \To(\catC(A,B)) \to \catG(F(A),F(B))$ given by $\widehat{F}_{A,B}([\sum^n_{i=1} f_i]) = \sum^n_{i=1} F_{A,B}(f_i)$. It's straightforward to check that this makes $\hat{F}$ a morphism in $\CMonD$.
For each sub-causal category $\catC$, the unit $\eta_\catC \colon \catC \to \To(\catC)_\subcausal$ is precisely the isomorphism of Theorem \ref{thm:Total-subcausal}, making this a coreflection. 
\end{proof}

\subsection{Examples}

Let us now see how each of the categories we met in Chapter~\ref{chap:OpCategories} form the sub-causal part of a category with addition.
\begin{center}
\begin{tabular}{c|c}
Sub-causal category $\catC$ & Category with addition  $\To(\catC)$ \\ 
\hline
$\PFun$ & $\KlMn$ \\ 
$\KlSD$ & $\KlDT$ \\ 
$\QuantSU$ & $\Quant{}$ \\ 
$\CStarSUop$ & $\CStarop$ \\ 
$\Rel$ & $\Rel$ \\ 
\end{tabular}
\end{center}
\begin{exampleslist}
\item 
By definition the totalisation of $\PFun$ is the Kleisli category $\KlMn$\label{cat:KlMn} of the \emps{finite multiset} monad. More precisely objects are sets and morphisms $f \colon A \to B$ are functions sending each $a \in A$ to a finite multiset of elements of $b \in B$, with $I=\{\star\}$ and $\discard{A}$ being simply that of $\PFun$. 
\item 
Each of our probabilistic examples have totalisations given by extending their scalars from $[0,1]$ to $\mathbb{R}$, as we prove shortly in Section~\ref{sec:divisscalars}. 
\item 
$\Rel$ has that every morphism is sub-causal with respect to its total addition $R \vee S$, as does $\Rel(\catC)$ whenever $\catC$ is coherent. Hence $\To(\Rel(\catC)) \simeq \Rel(\catC)$.
\item 
Each category $\Mat_{S^{\leq 1}}$ arises as the sub-causal morphisms of $\Mat_S$; however in general $\To(S^{\leq 1}) \neq S$ and so $\To(\Mat_{S \leq 1}) \not \simeq \Mat_S$.
\end{exampleslist}

\subsection{Totalisation for effectuses}

Let us now make the connection between biproducts and the kinds of coproducts we met in Chapter~\ref{chap:OpCategories}, such as those of an effectus, more precise. In~\cite{Partial2015Cho}, Cho defines a \emps{finite partially additive category (FinPAC)} to be a category enriched in PCMs with finite coproducts $(+,0)$ for which the maps
\[
\begin{tikzcd}
A + A 
\arrow[r,shift left=2.5,"\triangleright_1"]
\arrow[r,shift left=-2.5,"\triangleright_2",swap]
& A
\end{tikzcd}
\]
are jointly monic, and which induce each operation $\ovee$ just as in Section~\ref{sec:CatsToTheories}.

\begin{lemma} \label{lem:butterfliesAndBiprod}
If $\catC$ is a FinPAC then $\To(\catC)$ has finite biproducts. Conversely, if $\catD$ is a category with discarding and causal biproducts then $\catD_\subcausal$ is a FinPAC with finite causal coproducts.
\end{lemma}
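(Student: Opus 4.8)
The plan is to treat the two implications separately, using the equational characterisation of biproducts (Lemma~\ref{lem:equational}) for the first and the restriction to sub-causal morphisms (Lemma~\ref{lem:sub-causal}) for the second, while keeping careful track of the passage between the partial operation $\ovee$ and the total addition $+$.

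\emph{From a FinPAC $\catC$ to biproducts in $\To(\catC)$.} Recall $\To(\catC)$ is a category with addition and homsets $\To(\catC(A,B))$. First I would observe that the initial object $0$ of $\catC$ is in fact a zero object (the standard consequence of joint monicity of $\triangleright_1,\triangleright_2 \colon A + A \to A$, cf.\ the remark after Definition~\ref{def:opcategory-first}), so $\catC(0,A)$ and $\catC(A,0)$ are trivial PCMs, hence so are $\To(\catC)(0,A)$ and $\To(\catC)(A,0)$; thus $0$ is a zero object in $\To(\catC)$ and it suffices to produce binary biproducts. For $A,B$, take the coproduct $A + B$ in $\catC$ with injections $\coproj_A,\coproj_B$ and the canonical maps $\triangleright_1 \colon A + B \to A$, $\triangleright_2 \colon A + B \to B$ built from zero morphisms, and in $\To(\catC)$ set $\coproj_A := [\coproj_A]$, $\coproj_B := [\coproj_B]$, $\pproj_A := [\triangleright_1]$, $\pproj_B := [\triangleright_2]$. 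Equations~\eqref{eq:biprod-eq-row1} and~\eqref{eq:biprod-eq-row2} hold since $\triangleright_i \circ \coproj_j$ equals an identity or $0$ in $\catC$. For~\eqref{eq:extra-biprod-eqn} the key step is the computation, performed inside $\catC$, that $\coproj_A \circ \triangleright_1$ and $\coproj_B \circ \triangleright_2$ are compatible — with witness the cotuple $h = [\iota_1 \circ \coproj_A, \iota_2 \circ \coproj_B] \colon A + B \to (A + B) + (A + B)$, where $\iota_1,\iota_2$ are the injections — and that $\coproj_A \circ \triangleright_1 \ovee \coproj_B \circ \triangleright_2 = \triangledown \circ h = \id{A + B}$, for $\triangledown$ the codiagonal. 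Passing to $\To(\catC)$, where $[x]+[y] = [x \ovee y]$ whenever $x \bot y$, this gives $\coproj_A \circ \pproj_A + \coproj_B \circ \pproj_B = \id{A+B}$, so Lemma~\ref{lem:equational} makes $A + B$ a biproduct.

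\emph{From causal biproducts in $\catD$ to a FinPAC $\catD_\subcausal$.} Since biproducts induce a commutative monoid enrichment via~\eqref{eq:additionfrombiprod}, $\catD$ is a category with discarding and addition, so Lemma~\ref{lem:sub-causal} makes $\catD_\subcausal$ a sub-causal category, in particular enriched in PCMs with zero morphisms. The content is to exhibit finite causal coproducts in $\catD_\subcausal$ with the remaining FinPAC properties. I would show each causal biproduct $A \biprod B$ of $\catD$ remains a coproduct in $\catD_\subcausal$: the injections are causal hence sub-causal, and for sub-causal $f \colon A \to C$, $g \colon B \to C$ witnessed by effects $e_A,e_B$, the cotuple $[f,g]$ of $\catD$ is sub-causal via the effect $e_A \circ \pproj_A + e_B \circ \pproj_B$ — one checks this after precomposing with $\coproj_A$ and $\coproj_B$, using causality of the injections, then invokes the coproduct universal property of $\catD$; uniqueness transfers since $\catD_\subcausal$ is a subcategory. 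Likewise $0$ is a zero object of $\catD_\subcausal$, giving finite causal coproducts $(+,0)$. Finally, $\triangleright_1,\triangleright_2 \colon A \biprod A \to A$ coincide with the biproduct projections $\pproj_1,\pproj_2$, which are sub-causal (via $\discard{}\circ\pproj_2$, resp.\ $\discard{}\circ\pproj_1$) and already jointly monic in $\catD$, hence in $\catD_\subcausal$; and the PCM operation $\ovee$ from Lemma~\ref{lem:sub-causal} agrees with the one induced by these coproducts, because for sub-causal $f,g \colon A \to B$ the morphism $h := \coproj_1 \circ f + \coproj_2 \circ g$ is sub-causal exactly when $f + g$ is, satisfies $\triangleright_1 \circ h = f$ and $\triangleright_2 \circ h = g$, and has $\triangledown \circ h = f + g$. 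This verifies the FinPAC axioms.

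\emph{Where the work is.} Modulo Lemmas~\ref{lem:equational} and~\ref{lem:sub-causal} both directions are bookkeeping; the only genuinely delicate point is the interplay between total $+$ and partial $\ovee$: in the first direction one must establish the compatibility $\coproj_A\circ\triangleright_1 \bot \coproj_B\circ\triangleright_2$ in $\catC$ before collapsing $+$ to $\ovee$ in $\To(\catC)$, and in the second one must check that cotuples and the relevant witnessing effects stay sub-causal, i.e.\ that causal coproducts really do restrict to the smaller category $\catD_\subcausal$.
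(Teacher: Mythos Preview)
Your proposal is correct and follows essentially the same approach as the paper: for the first direction both you and the paper establish $\coproj_A \circ \triangleright_A \ovee \coproj_B \circ \triangleright_B = \id{A+B}$ in $\catC$ via the witnessing morphism $\coproj_A + \coproj_B \colon A+B \to (A+B)+(A+B)$ (your $h$) and then invoke Lemma~\ref{lem:equational}; for the second direction both restrict the biproduct structure to $\catD_\subcausal$ and note that $\triangleright_i = \pproj_i$ are sub-causal and jointly monic. Your treatment is in fact more thorough than the paper's terse proof—you make explicit the sub-causality witness for $[f,g]$ and verify that the PCM operation from Lemma~\ref{lem:sub-causal} coincides with the coproduct-induced one, both of which the paper leaves implicit.
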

\begin{proof}
For the first statement, we claim that each object $A + B$ in $\catC$ forms a biproduct in $\To(\catC)$. Indeed, each morphism
 \[
\begin{tikzcd}[column sep = large]
A + B \rar{\coproj_A + \coproj_B} & (A + B) + (A + B)
\end{tikzcd}
\ \ \  
\text{has}
\ \ \ 
\begin{cases}
\triangleright_1 \circ (\coproj_A + \coproj_B) = \coproj_A \circ \triangleright_A \\ 
\triangleright_2 \circ (\coproj_A + \coproj_B) = \coproj_B \circ \triangleright_B
\end{cases}
\]
and so the definition of $\ovee$ in terms of coproducts gives that 
\[
\id{A + B} = (\coproj_A \circ \triangleright_A) \ovee (\coproj_B \circ \triangleright_B)
\]
\noindent
Hence since the inclusion $\catC \hookrightarrow \To(\catC)$ preserves $\ovee$, the morphisms $\coproj_A, \coproj_B$, $\pproj_A := \triangleright_A$ and $\pproj_B := \triangleright_B$ satisfy~\eqref{eq:extra-biprod-eqn} and so form a biproduct.

For the second statement, note that by~\eqref{eq:extra-biprod-eqn} any causal biproduct $A \biprod B$ in $\catD$ has that $\triangleright_A = \pproj_A$ and $\triangleright_B = \pproj_B$ are sub-causal, and they will remain jointly monic in $\catD_\subcausal$. Moreover the $\coproj_A, \coproj_B$ again form a coproduct in $\catD_\subcausal$, since $[f,g]$ is sub-causal whenever $f$ and $g$ are.
\end{proof}

Our main examples of such categories arise from the `partial form' of an effectus (see Section~\ref{sec:EffectusTheory}), which may be defined as follows~\cite{EffectusIntro}.

\begin{definition} \label{def:EfInParForm}
An \deff{effectus in partial form} or \deff{FinPAC with effects}\index{effectus!in partial form} \index{FinPAC with effects|see {effectus in partial form}} is a sub-causal category $(\catC, \discard{})$ which is a FinPAC, whose coproducts are causal and which satisfies:
\begin{enumerate}[label=\arabic*., ref=\arabic*]
\item \label{enum:subcausal-cancel} 
$a \ovee b = \discard{} = a \ovee c \implies b = c$ for all effects $a, b, c$;
\item \label{enum:discard}
$\discard{} \circ f = 0 \implies f = 0$ for all morphisms $f$. 
\end{enumerate}
\end{definition}

There is also a `totalised' version of an effectus. In~\cite{EffectusIntro} a \indef{grounded biproduct category} is defined to be a category $\catD$ with discarding and finite causal biproducts satisfying the analogous first condition
\[
a + b = \discard{} = a + c \implies b = c
\]
as well as \ref{enum:discard} above. 
Immediately we have a result from~\cite{EffectusIntro}.

\begin{lemma} \label{lem:gBCtoEff}
Let $\catD$ be a grounded biproduct category. Then $\catD_{\subcausal}$ is an effectus in partial form. Hence $\catD_\causal$ is an effectus. 
\end{lemma}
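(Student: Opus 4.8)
The plan is to chain the two preceding lemmas and then verify the two remaining axioms in Definition~\ref{def:EfInParForm}. First I would note that a grounded biproduct category $\catD$ carries an addition induced by its causal biproducts via~\eqref{eq:additionfrombiprod}, so Lemma~\ref{lem:sub-causal} immediately gives that $\catD_{\subcausal}$ is a (symmetric monoidal) sub-causal category, and Lemma~\ref{lem:butterfliesAndBiprod} gives that $\catD_{\subcausal}$ is a FinPAC with finite causal coproducts. That already accounts for everything in Definition~\ref{def:EfInParForm} except conditions~\ref{enum:subcausal-cancel} and~\ref{enum:discard}.

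For condition~\ref{enum:subcausal-cancel} I would use that $\ovee$ on $\catD_{\subcausal}$ is precisely the restriction of $+$ on $\catD$ to sub-causal sums: given effects $a, b, c$ with $a \ovee b = \discard{} = a \ovee c$ we then have $a + b = \discard{} = a + c$ in $\catD$, and the cancellation hypothesis built into a grounded biproduct category yields $b = c$. Condition~\ref{enum:discard} is inherited verbatim, since it is part of the definition of a grounded biproduct category and every morphism of $\catD_{\subcausal}$ is a morphism of $\catD$. Hence $\catD_{\subcausal}$ is an effectus in partial form.

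For the final sentence I would observe that any causal morphism $f$ of $\catD$ satisfies $\discard{} \circ f + 0 = \discard{}$ and so is sub-causal, while a sub-causal morphism that happens to be causal is causal in $\catD$; thus $(\catD_{\subcausal})_{\causal} = \catD_{\causal}$ as categories. I would then invoke the established correspondence between effectuses and effectuses in partial form (Cho~\cite{Partial2015Cho}; see also Corollary~\ref{cor:Effiscetainotcs}), under which the causal part of an effectus in partial form is an effectus, to conclude that $\catD_{\causal}$ is an effectus.

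There is no genuinely hard step: the whole argument is bookkeeping on top of Lemmas~\ref{lem:sub-causal} and~\ref{lem:butterfliesAndBiprod}. The only points needing a moment's care are confirming that $\ovee$ in $\catD_{\subcausal}$ really is the restriction of $+$ (so the cancellation axiom transfers cleanly) and that $(\catD_{\subcausal})_{\causal}$ coincides on the nose with $\catD_{\causal}$ (so the effectus/partial-form correspondence applies). If I preferred to avoid citing that correspondence for the last line, the alternative would be to check the effectus pullback squares $\effpullbone$ and $\effpullbtwo$ directly in $\catD_{\causal}$, but going through the partial form is the cleaner route.
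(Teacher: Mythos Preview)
Your proposal is correct and follows essentially the same approach as the paper: the paper's proof is the two-word citation ``Lemmas~\ref{lem:sub-causal} and~\ref{lem:butterfliesAndBiprod}'', and you have spelled out the details that are left implicit there, namely that the two remaining axioms of Definition~\ref{def:EfInParForm} transfer from the grounded biproduct category hypotheses because $\ovee$ in $\catD_{\subcausal}$ is the restriction of $+$ in $\catD$. Your handling of the final sentence via $(\catD_{\subcausal})_{\causal} = \catD_{\causal}$ and the partial/total effectus correspondence is also fine; the paper does not spell this out either, but the result is attributed to~\cite{EffectusIntro} in any case.
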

\begin{proof}
Lemmas~\ref{lem:sub-causal} and \ref{lem:butterfliesAndBiprod}.
\end{proof}

We can now show that every effectus arises in this way. This connects effectus theory, which studies sub-causal morphisms, with categorical quantum mechanics~\cite{abramskycoecke:categoricalsemantics}\index{categorical quantum mechanics}, which studies super-causal ones.

\begin{corollary} \label{cor:GBiprod}
Let $\catC$ be an effectus in partial form. Then $\To(\catC)$ is a grounded biproduct category with $\catC \simeq \To(\catC)_\subcausal$. 
\end{corollary}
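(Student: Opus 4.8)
The plan is to obtain everything from results already in hand --- mainly Theorem~\ref{thm:Total-subcausal}, Lemma~\ref{lem:butterfliesAndBiprod} and the defining axioms of an effectus in partial form. Since such a $\catC$ is in particular a sub-causal category which is a FinPAC, Theorem~\ref{thm:Total-subcausal} gives at once that $\To(\catC)$ is a (symmetric monoidal) category with discarding and addition, together with an isomorphism of sub-causal categories $\catC \simeq \To(\catC)_\subcausal$, while Lemma~\ref{lem:butterfliesAndBiprod} gives that $\To(\catC)$ has finite biproducts, realised by the objects $A + B$ of $\catC$ with the coprojections $\coproj_A, \coproj_B$ as biproduct injections. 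As the coproducts of $\catC$ are causal and the embedding $\catC \hookrightarrow \To(\catC)$ preserves discarding (so $\discard{} \circ [\coproj_A] = [\discard{} \circ \coproj_A] = \discard{}$), these biproducts are causal. It then remains only to check the two conditions in the definition of a grounded biproduct category.

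The key observation is that in both conditions every morphism involved is forced to be sub-causal, hence lies in $\To(\catC)_\subcausal \cong \catC$, where the required statement is one of the defining properties of an effectus in partial form. For condition~\ref{enum:discard}: if $\discard{} \circ f = 0$ in $\To(\catC)$ then $\discard{} \circ f + \discard{} = 0 + \discard{} = \discard{}$, so $f$ is sub-causal and thus of the form $[g]$ with $g$ a morphism of $\catC$; from $[\discard{} \circ g] = \discard{} \circ [g] = 0 = [0]$ and Lemma~\ref{lem:totalisation-fact} we get $\discard{} \circ g = 0$ in $\catC$, whence $g = 0$ by condition~\ref{enum:discard} for $\catC$, so $f = 0$. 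For condition~\ref{enum:subcausal-cancel}: given effects $a, b, c$ of $\To(\catC)$ with $a + b = \discard{} = a + c$, each of $a, b, c$ is sub-causal (for instance $a + b = \discard{}$ witnesses sub-causality of both $a$ and $b$), so they arise from effects $a', b', c'$ of $\catC$; since the isomorphism is one of sub-causal categories it carries these total sums (which are causal, hence sub-causal) to $a' \ovee b' = \discard{} = a' \ovee c'$, and condition~\ref{enum:subcausal-cancel} for $\catC$ yields $b' = c'$, i.e.\ $b = c$. This establishes $\To(\catC)$ as a grounded biproduct category, and the isomorphism $\catC \simeq \To(\catC)_\subcausal$ is the one already supplied by Theorem~\ref{thm:Total-subcausal}.

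The step I would treat with most care --- really the only subtlety --- is the identification, on sub-causal morphisms, of the total addition $+$ of $\To(\catC)$ with the partial addition $\ovee$ of $\catC$; this is precisely why Theorem~\ref{thm:Total-subcausal} phrases its conclusion as an isomorphism of \emph{sub-causal} categories, and at the level of representatives it rests on Lemma~\ref{lem:totalisation-fact}, which says a formal sum $[\sum_i f_i]$ landing back in $\catC$ must equal $[\bigovee_i f_i]$. Once this bookkeeping is in place the verifications above are entirely routine, so no separate hard step remains; indeed one could alternatively phrase the whole argument as: $\To(\catC)$ has finite causal biproducts by Lemma~\ref{lem:butterfliesAndBiprod}, its two grounded-biproduct axioms only quantify over sub-causal morphisms, and those are then inherited from $\catC \cong \To(\catC)_\subcausal$.
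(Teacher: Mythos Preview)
Your proof is correct and follows the same overall approach as the paper: Theorem~\ref{thm:Total-subcausal} and Lemma~\ref{lem:butterfliesAndBiprod} supply the structure, and the two grounded-biproduct axioms are pulled back to $\catC$ via Lemma~\ref{lem:totalisation-fact}. The paper's version differs only in that it works directly with formal sums and invokes positivity ($f \ovee g = 0 \implies f = g = 0$, from Proposition~\ref{prop:positivity}) to peel off summands, whereas your observation that the relevant morphisms are already sub-causal lets you reduce to a single representative and sidestep that step --- a slightly more direct route to the same end.
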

\begin{proof}
By Theorem~\ref{thm:Total-subcausal} and Lemma~\ref{lem:butterfliesAndBiprod}, $\To(\catC)$ has biproducts and the above isomorphism holds. Note that in any effectus in partial form we have 
\[
f \ovee g = 0 \implies f = g= 0
\] 
for all morphisms $f, g \colon A \to B$ as we saw in Proposition~\ref{prop:positivity}. Using this and Lemma~\ref{lem:totalisation-fact}, both properties~\ref{enum:subcausal-cancel} and \ref{enum:discard} immediately lift from $\catC$ to $\To(\catC)$.
\end{proof}

\begin{examples}
$\Set$, $\KlD$, and $\CStarUop$ are all effectuses, and we've seen that their partial forms have totalisations $\KlMn$, $\KlDT$, $\CStarop$ respectively. 
\end{examples}

\subsection{Totalisation with divisible scalars} \label{sec:divisscalars}

In settings such as $\Quant{}$ and $\KlDT$ it is more common to view a general morphism as a multiple $r \cdot f$ of a sub-causal one, rather than as a finite sum $\sum^n_{i=1} f_i$ of them. In these settings, there is no loss of information in working with either sub-causal morphisms or more general ones. 

These facts can be generalised to categories with the following feature.

\begin{definition} \label{def:divis}
We say that a sub-causal category $\catC$ has \deff{naturally divisible scalars}\index{naturally divisible scalars} when for every $n \in \mathbb{N}_{>0}$ there exists a scalar $\frac{1}{n}$ with $\bigovee^n_{i=1} \frac{1}{n} =  \id{I}$. 
\end{definition}

Let us call a category with discarding and addition $(\catD, \discard{}, +)$ \indef{causally generated}\index{causally generated} when every morphism may be written as a finite sum $f = \sum^n_{i=1} f_i$ for which each $f_i$ is sub-causal. By construction $\To(\catC)$ is causally generated when $\catC=\catC_\subcausal$. 

 We write $\PCMD_{\mathsf{n.d}}$ and $\CMonD^{\mathsf{c.g}}_{\mathsf{{n.d}}}$ for the full subcategories of $\PCMD$ and $\CMonD$, respectively, given by those categories with naturally divisible scalars in each case, and which in the latter case are causally generated.

\begin{theorem} \label{thm:Total-for-good-scalars}
Totalisation restricts to an equivalence of categories 
\[
\begin{tikzcd}[bend angle = 15, row sep = large]
 \PCMD_{\mathsf{n.d}} \arrow[rr, bend left, "\To(-)"] & \simeq &  \CMonD^{\mathsf{c.g}}_{\mathsf{{n.d}}} \arrow[ll, bend left, "(-)_{\subcausal}"]
 \end{tikzcd}
\]
Hence if $(\catD, \discard{}, +)$ has naturally divisible scalars and is causally generated then there is an isomorphism $\catD \simeq \To(\catD_\subcausal)$. 
\end{theorem}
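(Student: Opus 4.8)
The plan is to upgrade the coreflection $\To(-) \dashv (-)_\subcausal$ of Theorem~\ref{thm:total-univ} to an equivalence on the two indicated full subcategories, by showing that its counit becomes an isomorphism there. First I would check that the functors restrict appropriately. If $\catC \in \PCMD_{\mathsf{n.d}}$ then $\To(\catC)$ is causally generated, since every morphism $[\sum^n_{i=1} f_i]$ is the finite sum $\sum^n_{i=1}[f_i]$ of the sub-causal morphisms $[f_i]$ (sub-causal by Theorem~\ref{thm:Total-subcausal}), and $\To(\catC)_\subcausal \cong \catC$ still has naturally divisible scalars; hence $\To(\catC) \in \CMonD^{\mathsf{c.g}}_{\mathsf{n.d}}$. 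Conversely $(-)_\subcausal$ lands in $\PCMD_{\mathsf{n.d}}$ by Lemma~\ref{lem:sub-causal} together with the scalar hypothesis. Since the unit $\eta_\catC \colon \catC \xrightarrow{\sim} \To(\catC)_\subcausal$ is already an isomorphism by Theorem~\ref{thm:Total-subcausal}, it remains only to show that for $\catD \in \CMonD^{\mathsf{c.g}}_{\mathsf{n.d}}$ the counit $\epsilon_\catD \colon \To(\catD_\subcausal) \to \catD$ — which is the identity on objects and sends $[\sum^n_{i=1} f_i] \mapsto \sum^n_{i=1} f_i$ — is an isomorphism of categories; the final sentence of the theorem is precisely this fact.

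Surjectivity of $\epsilon_\catD$ on each homset is exactly the causally-generated hypothesis. For injectivity, the main point, I would argue as follows. Suppose $\sum^n_{i=1} f_i = \sum^m_{j=1} g_j =: h$ in $\catD(A,B)$ with all $f_i, g_j$ sub-causal; I must show $[\sum^n_{i=1} f_i] = [\sum^m_{j=1} g_j]$ in $\To(\catD_\subcausal(A,B))$. Put $d_i = \discard{} \circ f_i$ and choose effects $e_i$ with $d_i + e_i = \discard{A}$. Using a scalar $\tfrac1N$ with $\sum^N_{k=1}\tfrac1N = \id{I}$, for $N \geq n$ one scales the identities $d_i + e_i = \discard{A}$ by $\tfrac1N$ and sums to obtain $\sum^n_{i=1}(\tfrac1N \cdot d_i) = \tfrac1N \cdot (\discard{} \circ h) \leq \discard{A}$; this makes each $\tfrac1N f_i$ sub-causal, keeps all partial sums sub-causal, and makes $\tfrac1N h := \sum^n_{i=1}\tfrac1N f_i$ sub-causal. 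Hence $\bigovee^n_{i=1}\tfrac1N f_i$ is defined in the PCM $\catD_\subcausal(A,B)$, equal to $\tfrac1N h$, so by the generating relations of $\sim$ we get $[\sum^n_{i=1}\tfrac1N f_i] = [\tfrac1N h]$. Likewise $\sum^N_{k=1}\tfrac1N f_i = (\sum^N_{k=1}\tfrac1N)\cdot f_i = f_i$ with all partial sums sub-causal, giving $[f_i] = [\sum^N_{k=1}\tfrac1N f_i]$; combining, $[\sum^n_{i=1} f_i] = \sum^n_{i=1}[f_i] = \sum^n_{i=1}[\sum^N_{k=1}\tfrac1N f_i] = N \cdot [\tfrac1N h]$. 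Taking $N \geq \max(n,m)$ and running the identical computation for the $g_j$ yields $[\sum^m_{j=1} g_j] = N \cdot [\tfrac1N h]$ as well, so the two classes coincide. Thus $\epsilon_\catD$ is bijective on homsets, hence an isomorphism of categories, and with $\eta$ iso the adjunction restricts to the claimed equivalence.

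The routine part is the bookkeeping with scalar multiplication in a category with addition: that $(s+t)\cdot f = s\cdot f + t\cdot f$, $s\cdot(a+b) = s\cdot a + s\cdot b$, $\id{I}\cdot f = f$, $\discard{}\circ(s\cdot f) = s\cdot(\discard{}\circ f)$, and $(s\cdot g)\circ f = s\cdot(g\circ f)$, all of which follow from bilinearity of composition over $+$, distributivity of $\otimes$ over $+$, and naturality of the coherence and swap isomorphisms, together with transitivity of the natural order on effects in a PCM, which is immediate from associativity. The genuinely new ingredient, and the only place the hypotheses of $\PCMD_{\mathsf{n.d}}$ and $\CMonD^{\mathsf{c.g}}_{\mathsf{n.d}}$ are used, is this division trick: scaling an arbitrary finite sum $h = \sum f_i$ down by $\tfrac1N$ until it becomes sub-causal, which simultaneously turns the decomposition into a genuine coarse-graining $\bigovee$ inside $\catD_\subcausal$ and lets one absorb the choice of decomposition into a single multiple of $[\tfrac1N h]$. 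I expect this division step to be the main obstacle: one must check carefully that every intermediate partial sum stays sub-causal so that both the PCM operation $\bigovee$ and the congruence $\sim$ actually apply, and that the final comparison is independent of how large $N$ is taken.
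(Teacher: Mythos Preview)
Your proposal is correct and follows essentially the same route as the paper: reduce to showing the counit $\varepsilon_\catD$ is an isomorphism, get fullness from causal generation, and obtain faithfulness via the division trick of scaling both decompositions by $\tfrac{1}{N}$ so that the common value $\tfrac{1}{N}h$ lies in $\catD_\subcausal$ and both formal sums collapse to $N\cdot[\tfrac{1}{N}h]$. The paper simply takes $N=n+m$ and is terser about the intermediate steps you spell out (e.g.\ that partial sums remain sub-causal), but the argument is the same.
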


\begin{proof}

It is clear that the coreflection of Theorem~\ref{thm:total-univ} restricts as above, and so it suffices to show that the counit $\varepsilon_\catD \colon \To(\catD_{\subcausal}) \to \catD$ given by $[\sum^n_{i=1} f_i] \mapsto \sum^n_{i=1} f_i$ is an isomorphism of categories. By definition $\varepsilon$ is surjective on objects, and it is full since $\catD$ is causally generated. 

We now show $\varepsilon$ is faithful. Suppose that $\sum^n_{i=1} f_i = \sum^m_{i=1} g_j$ with the $f_i$ and $g_j$ all sub-causal. Then $\frac{1}{n+m}$ is sub-causal and hence 
\[
\sum^n_{i=1} \frac{1}{n+m} \cdot f_i  = \sum^m_{j=1} \frac{1}{n+m} \cdot g_j
\]
is sub-causal also. This gives that
\[
[\sum^n_{i=1} f_i] 
= 
\sum^{n + m}_{k=1}[ \sum^n_{i=1} \frac{1}{n+m} \cdot f_i] 
=  
\sum^{n + m}_{k=1}[ \sum^m_{j=1} \frac{1}{n+m}\cdot g_j] 
= 
[\sum^m_{j=1} g_j]
\]
as required.
\end{proof}

As a result in this setting we may work with either sub-causal or more general morphisms, at no extra cost. We also have an alternative description of the $\To$ construction. 

\begin{theorem} \label{thm:Total-Alternative}
Let $\catC$ be a symmetric monoidal sub-causal category with naturally divisible scalars $M = \catC(I,I)$, and set $R := \To(M)$. Then $\To(\catC)$ is isomorphic to the category $\Ro(\catC)$ whose objects are the same as $\catC$ and morphisms $A \to B$ are equivalence classes of pairs $(f, r)$ for $f \colon A \to B$ in $\catC$ and $r \in R$, under 
\[
(f ,r) \sim (g,s)
\] 
whenever $a \cdot f = b \cdot g$ for some $a, b \in M$ such that $n \cdot a = r$ and $n \cdot b = s$ in $R$ for some $n \in \mathbb{N}$. 
Here we set 
\[
[(g, s)] \circ [(f, r)] := [(g \circ f, s \cdot r)] 
\qquad
\id{A} := [(\id{A},\id{I})]
\]
\end{theorem}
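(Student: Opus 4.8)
The plan is to exhibit mutually inverse identity-on-objects functors $\Phi\colon\Ro(\catC)\to\To(\catC)$ and $\Psi\colon\To(\catC)\to\Ro(\catC)$. On morphisms $\Phi$ sends $[(f,r)]$ to the scalar multiple $r\cdot[f]$, where $[f]\in\To(\catC)$ is the image of $f$ under the canonical embedding $\catC\hookrightarrow\To(\catC)$ of Theorem~\ref{thm:Total-subcausal} and $r\in R=\To(M)=\To(\catC)(I,I)$. Conversely, writing a general morphism of $\To(\catC)$ as $[\sum_{i=1}^n f_i]$ with each $f_i$ sub-causal, I would first check that $\tfrac1n\cdot[\sum_i f_i]$ is sub-causal in $\To(\catC)$: if $\discard{}\circ f_i\ovee e_i=\discard{}$ in $\catC$, then $\discard{}\circ[\sum_i f_i]+\sum_i e_i=\sum_i\discard{}=\bar n\cdot\discard{}$ (writing $\bar n:=n\cdot\id{I}\in R$), and multiplying by $[\tfrac1n]$ and using the identity $\bar n\cdot[\tfrac1n]=\id{I}$ in $R$ — which holds since $\bigovee^n\tfrac1n=\id{I}$ in $\catC$ and the embedding preserves $\ovee$ — gives $\discard{}\circ(\tfrac1n\cdot[\sum_i f_i])+\tfrac1n\cdot\sum_i e_i=\discard{}$. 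So by Theorem~\ref{thm:Total-subcausal} this morphism equals $[g]$ for a unique $g\in\catC$, and I set $\Psi([\sum_i f_i]):=[(g,\bar n)]$ (and $\Psi(0):=[(0,0)]$).

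The technical backbone is a normalisation lemma: every $(f,r)$ with $r\neq 0$ is $\sim$-equivalent to some $(g,\bar p)$. Choosing a representative $r=[\sum_{k=1}^p a_k]$ with $a_k\in M$, the element $r':=[\tfrac1p]\cdot r=[\sum_k\tfrac1p\cdot a_k]$ is again \emph{sub-causal}, hence lies in $M$: writing $a_k\ovee e_k=\id{I}$, each $\tfrac1p\cdot a_k\ovee\tfrac1p\cdot e_k=\tfrac1p$, so since $\bigovee_k\tfrac1p=\id{I}$ is defined the sub-sum $\bigovee_k(\tfrac1p\cdot a_k)$ is defined and below $\id{I}$ — this is exactly where natural divisibility is used. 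Moreover $\bar p\cdot r'=r$, and then $(\id{I},r',p)$ witnesses $(f,r)\sim(r'f,\bar p)$. With this in hand I would verify: $\Phi$ is constant on $\sim$-classes (if $\bar\ell a=r$, $\bar\ell b=s$, $af=bg$ then $r\cdot[f]=\bar\ell\cdot[af]=\bar\ell\cdot[bg]=s\cdot[g]$) and functorial (scalars pull through composition in any monoidal category); $\Phi\circ\Psi=\id$ (immediate from $\bar n\cdot[\tfrac1n]=\id{I}$); and $\Psi\circ\Phi=\id$ (compute $\Psi\Phi([(f,r)])=[(r'f,\bar p)]$ and apply normalisation, with the $r=0$ case handled by $(f,0)\sim(0,0)$).

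For injectivity of $\Phi$ on hom-sets — equivalently, that $\sim$ is exactly equality of $\Phi$-values — I would reduce via normalisation to comparing two normalised pairs $(f_0,\bar p)$, $(g_0,\bar q)$ with $\bar p\cdot[f_0]=\bar q\cdot[g_0]$ in $\To(\catC)$; multiplying this equation by $[\tfrac1{pq}]$ collapses each side to a single-term class $[c\cdot f_0]=[d\cdot g_0]$ with $c,d\in M$, $\overline{pq}\cdot c=\bar p$ and $\overline{pq}\cdot d=\bar q$, so faithfulness of $\catC\hookrightarrow\To(\catC)$ gives $c\cdot f_0=d\cdot g_0$ in $\catC$ and $(c,d,pq)$ witnesses $(f_0,\bar p)\sim(g_0,\bar q)$. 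This also shows $\sim$ is an equivalence relation (being a kernel); alternatively transitivity can be checked directly by inflating witnesses $(a,b,\ell)$ and $(a',b',\ell')$ to a common denominator $L=\ell\ell'$ (replacing $(a,b,\ell)$ by $([\tfrac1{\ell'}]a,[\tfrac1{\ell'}]b,L)$, etc.), after which applying $[\tfrac1L]$ forces the two ``middle'' scalars to agree and the outer witnesses compose. Since $\Phi$ is then a bijection on objects and hom-sets preserving composition and identities, it is an isomorphism of categories (simultaneously confirming that $\Ro(\catC)$, as defined, is a category), and transporting the symmetric monoidal structure of $\To(\catC)$ across $\Phi$ endows $\Ro(\catC)$ with the evident such structure.

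The main obstacle is precisely this $\sim$-bookkeeping — showing the naively defined relation is an equivalence relation and coincides with equality in $\To(\catC)$, including the degenerate cases where $r=0$ or $s\cdot[g]=0$. Every delicate step (the normalisation lemma, the collapse of a length-$p$ sum after multiplying by $[\tfrac1p]$, the matching of denominators in transitivity) rests on natural divisibility of the scalars to ``clear denominators'', i.e.\ on the fact that dividing any element of $R$ by the length of a representing sum returns it to the sub-causal part $M$. Without this hypothesis the correspondence genuinely fails — for instance $\Rel$, whose scalars are not naturally divisible, has $\Ro(\Rel)\not\simeq\Rel=\To(\Rel)$ — so isolating and exploiting it is the crux of the argument.
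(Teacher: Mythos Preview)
Your approach is correct and essentially the same as the paper's: both define the functor $\Phi$ (called $F$ there) sending $[(f,r)]\mapsto r\cdot[f]$ and verify it is a well-defined bijection on hom-sets respecting composition, with the paper's argument being a compressed version of your normalisation-and-cancellation steps. Your explicit inverse $\Psi$ and careful handling of edge cases add detail but no new ideas beyond what the paper's terse iff-chain encodes.

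One correction to your closing remark: $\Rel$ \emph{does} have naturally divisible scalars, since its scalar monoid is $\{0,1\}$ under $\vee$ and one may take $\tfrac{1}{n}=1$ for every $n$ (so that $\bigovee^n 1 = 1$); correspondingly $\Ro(\Rel)\simeq\Rel\simeq\To(\Rel)$, so this is not a counterexample to the theorem's hypothesis being necessary.
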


\begin{proof}
Define $F \colon \Ro(\catC) \to \To(\catC)$ by $A \mapsto A$ and $[(f,r)] \mapsto [r] \cdot [f]$. This is well-defined and faithful since 
\begin{align*}
(f, r) \sim (g, s) 
 &\iff 
a \cdot f = b \cdot g \\ 
 &\iff 
[a] \cdot [f] = [b] \cdot [g]
\iff 
[r] \cdot [f] = [s] \cdot [g]
\end{align*}
where for some $n \in \mathbb{N}$ we have $n \cdot a =r$ and $n \cdot b= s$ in $R$. 

Now given any morphism $f=\sum^n_{i=1} f_i$ in $\To(\catC)$ with each $f_i$ sub-causal, the morphism $g := \frac{1}{n} \cdot f$ has  
$g=\sum^n_{i=1} (\frac{1}{n} \cdot f_i) = \bigovee^n_{i=1} (\frac{1}{n} \cdot f_i)$
and so is sub-causal with $n \cdot g = f$ in $\To(\catC)$. Hence $F([(g,n)]) = f$, making $F$ full. Finally, $F$ respects composition since
 \begin{align*}
F([(g,s)]) \circ F([(f,r)])
&=
([s] \cdot [g]) \circ ([r] \cdot [f])
\\
&=
([s \circ r]) \cdot [g \circ f]
=
F([(g,s)] \circ [(f,r)])
 \end{align*}
It follows that $\Ro(\catC)$ is a well-defined category and $F$ is an isomorphism.
\end{proof}

\begin{examples} 
$\KlSD$ and $\CStarSUop$ both have naturally divisible scalars $[0,1]$ with $\To([0,1]) = \mathbb{R}^+$. In their totalisations $\KlDT$ and $\CStarop$ morphisms may thus be viewed as a multiples $r \cdot f$ of sub-causal (i.e.~sub-unital) ones, for some $r \in \mathbb{R}$, as is standard.
\end{examples}

\section{Compact and Dagger Categories}

\subsection{Compact categories}

Working with a category whose morphisms are super-causal processes, rather than merely sub-causal ones, allows us to make use of some powerful extra categorical features. In particular, the field of categorical quantum mechanics\index{categorical quantum mechanics} has emphasised the study of categories with the following diagrammatic property~\cite{abramskycoecke:categoricalsemantics}.

\label{not:dualobject}
Let $A$ be any object in a monoidal category. We say that an object $A^*$ is (right) \indef{dual}\index{dual objects} to $A$ when there exists a state $\eta \colon I \to A^* \otimes A$ and effect $\varepsilon \colon  A \otimes A^* \to I$ satisfying the \indef{snake equations}\index{snake equations}:
\[
\scalebox{0.8}{\input{./figures/snake.tikz}}
\]
We may have similarly considered left duals for objects; however in a symmetric monoidal category any left dual is a right dual and vice versa, and from now one we will ignore either prefix. Dual objects are unique up to unique isomorphism when they exist, and so we speak of `the' dual $A^*$ of an object $A$.

\begin{definition}~\cite{kelly1972many,kelly1980coherence}
A symmetric monoidal category $(\catC, \otimes)$ is \deff{compact closed} or \deff{compact}\index{compact category} when every object in $\catC$ has a dual. 
\end{definition}

There is a helpful graphical notation for compact categories~\cite{selinger2011survey}. Firstly, we distinguish between an object $A$ and its dual $A^*$ by drawing their identity morphisms as upward and downward directed wires, respectively:
\[
\scalebox{0.8}{\input{./figures/id-A-and-dual.tikz}}
\]
and depict the $\eta$ as a `cup' and $\varepsilon$ as a `cap': \label{not:cupcap}
\[
\scalebox{0.8}{\input{./figures/cup.tikz}}
\]
Then the snake equations become simply `yanking wires':
\[
\scalebox{0.8}{\input{./figures/yank.tikz}}
\]
In this way compactness can be seen as a relaxation on our graphical rules, by allowing us to `bend wires' and so exchange inputs and outputs in our diagrams. 
The following shows that it may generally only be considered outside of the sub-causal setting. 



\begin{lemma}
Let $\catC$ be a monoidal effectus in partial form which is compact closed. Then $\catC$ is trivial, i.e.~satisfies $A \simeq 0$ for all objects $A$.
\end{lemma}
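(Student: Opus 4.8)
The plan is to show that in a compact closed monoidal effectus in partial form, the discarding map on each object collapses everything down to the zero object. The key leverage is the interplay between two facts: (1) in any effectus in partial form, the zero object is strict (any morphism into $0$ is an isomorphism, and $\discard{} \circ f = 0 \implies f = 0$), and (2) compactness lets us ``bend wires'', so that the existence of discarding on every object — including dual objects — forces a collapse.

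First I would recall that in a sub-causal category, the discarding maps $\discard{A} \colon A \to I$ exist for every object, and in particular we have $\discard{A^*} \colon A^* \to I$ on the dual of any object $A$. Using the cup $\eta \colon I \to A^* \otimes A$ together with the discarding map on $A^*$, I can form the state $(\discard{A^*} \otimes \id{A}) \circ \eta \colon I \to A$. The strategy is to use this together with the cap and the counit axioms to derive an equation forcing $\discard{A}$ (or $\id{A}$) to equal a zero morphism. Concretely, I would compute $\discard{A} \circ (\discard{A^*} \otimes \id{A}) \circ \eta$ in two ways: on one hand it is a scalar $I \to I$; on the other, since both $\discard{A}$ and $\discard{A^*}$ are causal, causality forces this composite to equal $\id{I}$ (the scalar $\discard{A \otimes A^*} \circ \eta$, which by causality of discarding must be $\id{I}$). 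But I also want to exploit that $\discard{}$ being the unique causal effect, combined with strictness of $0$, pins things down.

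The cleanest route: consider the effect $\discard{A^*} \circ \varepsilon' $ where I use compactness to turn $\varepsilon \colon A \otimes A^* \to I$ around. Actually, the sharpest contradiction comes from the fact that in an effectus in partial form, $\discard{}$ is the \emph{unique} effect $e$ with $\ts{e}$ a test (Lemma about complements/causality), and effects $A \to I$ correspond via the dagger-free compact structure to states $I \to A^*$. Using the snake equations, the scalar obtained by closing up $\eta$ with $\discard{A^*} \otimes \discard{A}$ must be both $\id{I}$ (by causality, since it is $\discard{A^* \otimes A} \circ \eta$ composed appropriately) and also expressible so that positivity/cancellativity forces $\id{A} = 0_{A,A}$. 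Once $\id{A} = 0$, we get $\discard{A} = \discard{A} \circ \id{A} = \discard{A} \circ 0 = 0$, so by condition \ref{enum:discard} (namely $\discard{} \circ f = 0 \implies f = 0$ applied to $f = \id{A}$, which we already have) and strictness of the initial object (Lemma~\ref{lemma:PosOpCat}\ref{initobstrict}), the unique map $A \to 0$ is an isomorphism, giving $A \simeq 0$.

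The main obstacle I anticipate is getting the scalar identity to actually bite: naively, ``$\eta$ closed up with two discards equals $\id{I}$'' is consistent and says nothing. The trick must be to produce two different expressions for the \emph{same} endomorphism of some object (not just a scalar) — e.g. showing $\id{A} = \discard{A}^* \circ \text{(something)}$ factors through $I$ in a way that, combined with the fact that $\discard{A^*}$ is causal while $\pproj$-type maps are not, forces $\id{A}$ to be $0$ via the cancellation axiom \ref{enum:subcausal-cancel}. Specifically I would look for a PCM-identity of the form $\id{A} \ovee (\text{something}) = \id{A}$ coming from the snake equation applied after discarding part of a cup/cap, which by the unit law of the PCM forces that ``something'' to behave like $0$, and then unwind. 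I would expect the actual proof to be a short diagrammatic calculation once the right combination of cup, cap, $\discard{A}$, $\discard{A^*}$ is chosen, with strictness of $0$ and the uniqueness/cancellation properties of effects doing the final collapse.
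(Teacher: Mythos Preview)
Your proposal has a genuine gap: you never find the ``right combination'' of cups, caps, and discarding, and in fact there is no such direct diagrammatic argument along the lines you sketch. As you yourself note, the scalar $\discard{A^* \otimes A} \circ \eta$ being some sub-causal scalar is perfectly consistent and says nothing; and your hope for a PCM identity $\id{A} \ovee (\text{something}) = \id{A}$ coming from the snake equations is not substantiated --- the snake equations are purely compositional identities and do not by themselves produce any $\ovee$-sums.

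The paper's proof uses a completely different and structural idea that you are missing: Houston's theorem that any compact closed category with finite coproducts automatically has finite \emph{biproducts}. Since an effectus in partial form is a FinPAC, it has finite coproducts, so Houston's result applies and the partial sum $\ovee$ becomes a \emph{total} operation. In particular $\id{I} \ovee \id{I}$ is defined. But in an effectus every morphism is sub-causal and effects satisfy the cancellation law $a \ovee b = \discard{} = a \ovee c \implies b = c$; together with positivity ($f \ovee g = 0 \implies f = g = 0$) this forces $\id{I} = 0$, whence $\id{A} = \id{A} \cdot \id{I} = 0$ for every $A$. The structural leverage is thus not the snake equations applied to discarding, but rather the incompatibility between the effect-algebra structure on scalars (which caps things at $\discard{I}$) and the totality of addition forced by Houston's theorem.
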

\begin{proof}
By a result of Houston, any compact closed category with finite coproducts has finite biproducts~\cite{houston2008finite}. It follows that every coproduct in $\catC$ is a biproduct, or equivalently that $\ovee$ is in fact total. But then $\id{I} \ovee \id{I}$ is defined, and so $\id{I} = {\id{I}}^\bot = 0$, giving $\id{A} = \id{A} \cdot \id{I} = 0$ for all objects $A$.
\end{proof}

To give an operational interpretation to compactness, we should relate it to a condition on sub-causal processes.

\begin{theorem} \label{thm:CompactInterp}
Let $\catC$ be a symmetric monoidal sub-causal category. The following are equivalent:
\begin{enumerate}[label=\arabic*., ref=\arabic*]
\item \label{enum:total-compact}
$\To(\catC)$ is compact;
\item \label{enum:some-compact}
$\catC \simeq \catD_\subcausal$ for some compact, causally generated category with discarding and addition $\catD$;
\item \label{enum:summed-snake}
For every object $A$ there exists an object $A^*$ and collections of states $(\eta_i)^n_{i=1}$ of $A^* \otimes A$ and effects $(\varepsilon_j)^m_{j=1}$ on $A^* \otimes A$ satisfying
\begin{equation} \label{eq:summed-snake}
\mathlarger{\bigovee}^{n,m}_{i,j=1}
\ 
\scalebox{0.8}{\input{./figures/snake-sum1.tikz}}
\qquad
\text{and}
\qquad
\mathlarger{\bigovee}^{n,m}_{i,j=1}
\ 
\scalebox{0.8}{\input{./figures/snake-sum2.tikz}}
\end{equation}
\end{enumerate}
Moreover, if $\catC$ has naturally divisible scalars these hold iff for every object $A$ there exists an object $A^*$,  a state $\eta$ of $A^* \otimes A$ and an effect $\varepsilon$ on $A \otimes A^*$ satisfying
\begin{equation} \label{eq:scaled-snake}
\scalebox{0.8}{\input{./figures/scaled-snake-1.tikz}}
\qquad
\text{and}
\qquad
\scalebox{0.8}{\input{./figures/scaled-snake-2.tikz}}
\end{equation}
for some $n \in \mathbb{N}$.
\end{theorem}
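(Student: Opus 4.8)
The plan is to prove the cyclic chain of implications $\ref{enum:total-compact} \Rightarrow \ref{enum:some-compact} \Rightarrow \ref{enum:summed-snake} \Rightarrow \ref{enum:total-compact}$, and then separately handle the naturally-divisible addendum. For $\ref{enum:total-compact} \Rightarrow \ref{enum:some-compact}$, take $\catD = \To(\catC)$: it is a symmetric monoidal category with discarding and addition by Theorem~\ref{thm:Total-subcausal}, it is causally generated by construction, compact by hypothesis, and $\catC \simeq \To(\catC)_\subcausal$ by the isomorphism of Theorem~\ref{thm:Total-subcausal}. The implication $\ref{enum:some-compact} \Rightarrow \ref{enum:summed-snake}$ is the heart of the argument: given such a $\catD$, each object $A$ has a dual $A^*$ with cup $\eta \colon I \to A^* \otimes A$ and cap $\varepsilon \colon A \otimes A^* \to I$ satisfying the snake equations in $\catD$. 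Since $\catD$ is causally generated, write $\eta = \sum_{i=1}^n \eta_i$ and $\varepsilon = \sum_{j=1}^m \varepsilon_j$ with each $\eta_i$, $\varepsilon_j$ sub-causal, hence lying in $\catC$ under the identification $\catC \simeq \catD_\subcausal$. Expanding the two snake equations using the bilinearity of $\otimes$ and $\circ$ over $+$ gives exactly $\sum_{i,j}$ (snake-composite${}_{i,j}$) $= \id{A}$ and similarly for $A^*$; these composites are sub-causal (identities are), so the sums are $\ovee$-defined in $\catC$ and we obtain precisely~\eqref{eq:summed-snake}. For $\ref{enum:summed-snake} \Rightarrow \ref{enum:total-compact}$, run this in reverse: set $\eta := [\sum_i \eta_i]$ and $\varepsilon := [\sum_j \varepsilon_j]$ in $\To(\catC)$; the PCM relation $\bigovee^{n,m}_{i,j}(\text{snake}_{i,j}) = \id{A}$ in $\catC$ becomes, via $\catC \hookrightarrow \To(\catC)$ preserving $\ovee$ and $+$, the equation $\sum_{i,j}(\text{snake}_{i,j}) = \id{A}$ in $\To(\catC)$, which by bilinearity is exactly the snake equation for $\eta$ and $\varepsilon$. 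Hence every object has a dual in $\To(\catC)$.

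For the addendum, assume naturally divisible scalars. For one direction, suppose the conditions hold, so $\To(\catC)$ is compact with cup $\eta$ and cap $\varepsilon$; these are morphisms of $\To(\catC)$, so by Theorem~\ref{thm:Total-Alternative} (or directly, since $\catC$ is causally generated after passing to $\To(\catC)$ and scalars are naturally divisible) we may write $\eta = \frac{1}{n} \cdot \eta'$ and $\varepsilon = \frac{1}{n'} \cdot \varepsilon'$ for sub-causal $\eta', \varepsilon'$ in $\catC$ — in fact, choosing a common denominator $n$, scaling down both the state and effect of a compact structure by $\frac{1}{n}$ rescales each snake composite by $\frac{1}{n^2}$, so after absorbing we get sub-causal $\eta \colon I \to A^* \otimes A$ and $\varepsilon \colon A \otimes A^* \to I$ in $\catC$ with $\frac{1}{n^2}$-scaled snake composites equal to $\id{A}$ and $\id{A^*}$; relabelling $n^2$ as $n$ (naturally divisible scalars are closed under the relevant products) yields~\eqref{eq:scaled-snake}. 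Conversely, if~\eqref{eq:scaled-snake} holds with sub-causal $\eta, \varepsilon$, then in $\To(\catC)$ the morphisms $[n] \cdot [\eta]$ and $[\varepsilon]$ (or a symmetric splitting $[\sqrt n]$-style rescaling using divisibility) satisfy the genuine snake equations, since $[n] \cdot (\text{snake composite of } [\eta],[\varepsilon]) = [\text{the } n\text{-scaled composite}] = [\id{A}]$; thus $\To(\catC)$ is compact, and the other conditions follow from the already-proved equivalence.

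The main obstacle I anticipate is bookkeeping in the scalar-rescaling step of the addendum: one must check that multiplying a candidate duality $(\eta,\varepsilon)$ by scalars $a$ on the cup and $b$ on the cap multiplies each snake composite by $a \cdot b$ (which is straightforward by naturality of the coherence isos and the scalar-multiplication conventions of Section~\ref{sec:examples}), and then that natural divisibility lets us always arrange $a \cdot b = \frac{1}{n}$ for the sub-causal direction and $a \cdot b = n$ (an integer copower) for the super-causal direction — here one uses that $\bigovee^n \frac{1}{n} = \id{I}$ means $\frac1n$ is sub-causal, so $\frac1n \cdot f$ is sub-causal whenever $f$ is, and that $[n] = [\sum^n \id{I}]$ is the integer scalar in $\To(\catC)$. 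Everything else is a routine unfolding of bilinearity of $\otimes$ and $\circ$ over addition together with Lemma~\ref{lem:totalisation-fact} to pass sums back and forth between $\catC$ and $\To(\catC)$.
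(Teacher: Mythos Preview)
Your proof of the main cycle $\ref{enum:total-compact}\Rightarrow\ref{enum:some-compact}\Rightarrow\ref{enum:summed-snake}\Rightarrow\ref{enum:total-compact}$ is correct and essentially identical to the paper's: same choice $\catD=\To(\catC)$, same expansion of $\eta=\sum_i\eta_i$ and $\varepsilon=\sum_j\varepsilon_j$ via causal generation, same observation that the individual snake terms are sub-causal because their sum $\id{A}$ is, and same reassembly in $\To(\catC)$.

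For the naturally-divisible addendum your argument is valid but takes a slightly more roundabout route than the paper. You pass through compactness of $\To(\catC)$, extract a cup and cap there, and then rescale to land back in $\catC$ (with some bookkeeping about common denominators and $n^2$ versus $nm$). The paper instead proves the equivalence of \eqref{eq:summed-snake} and \eqref{eq:scaled-snake} entirely inside $\catC$: given \eqref{eq:summed-snake} it sets $\eta:=\bigovee_i \tfrac{1}{n}\cdot\eta_i$ and $\varepsilon:=\bigovee_j \tfrac{1}{m}\cdot\varepsilon_j$, so each snake composite becomes $\tfrac{1}{nm}\cdot\id{}$; and conversely, given \eqref{eq:scaled-snake} it simply takes $\eta_1=\dots=\eta_n=\eta$ and $m=1$, $\varepsilon_1=\varepsilon$, so the summed snake is $\bigovee^n \tfrac{1}{n}\cdot\id{}=\id{}$. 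This avoids $\To(\catC)$ altogether in the addendum and sidesteps the rescaling bookkeeping you flagged as the main obstacle. Your approach works too; the paper's is just a bit cleaner.
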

\begin{proof}
\ref{enum:total-compact} $\implies$ \ref{enum:some-compact}: Always we have that $\catC \simeq \To(\catC)_\subcausal$ and $\To(\catC)$ is causally generated. 

\ref{enum:some-compact} $\implies$ \ref{enum:summed-snake}: 
Since $\catD$ is compact, every object has a dual $A^*$ via some state $\eta$ and effect $\varepsilon$ satisfying the snake equations. Now since $\catD$ is causally generated we have $\eta = \sum^n_{i=1} \eta_i$ and $\varepsilon = \sum^m_{j=1} \varepsilon_j$ for some collections $(\eta_i)^n_{i=1}$ and $(\varepsilon_j)^m_{j=1}$ as above. The snake equations then amounts to 
\[
\sum^{n,m}_{i,j=1}
\ 
\scalebox{0.8}{\input{./figures/snake-sum1.tikz}}
\qquad
\text{and}
\qquad
\sum^{n,m}_{i,j=1}
\ 
\scalebox{0.8}{\input{./figures/snake-sum2.tikz}}
\]
Since $\id{A}$ and $\id{A^*}$ are sub-causal, so are all of the terms in the above sums. Hence each sum restricts to one in terms of $\ovee$ in $\catD_\subcausal$, and so we are done since $\catC \simeq \catD_\subcausal$.

\ref{enum:summed-snake} $\implies$ \ref{enum:total-compact}: In $\To(\catC)$ for each object $A$ the state and effect
\[
\scalebox{0.8}{\input{./figures/cup-sum1.tikz}}
\
:=
\
\sum^n_{i=1} 
\
\scalebox{0.8}{\input{./figures/cup-sum2.tikz}}
\qquad
\qquad
\scalebox{0.8}{\input{./figures/cup-sum3.tikz}}
\
:=
\
\sum^m_{j=1} 
\
\scalebox{0.8}{\input{./figures/cup-sum4.tikz}}
\]
satisfy the snake equations thanks to~\eqref{eq:summed-snake}.

Now suppose that $\catC$ has naturally divisible scalars. If~\eqref{eq:summed-snake} holds define
\[
\scalebox{0.8}{\input{./figures/cup-sum1.tikz}}
:=
\bigovee^n_{i=1} 
\
\scalebox{0.8}{\input{./figures/cup-sum2i.tikz}}
\qquad
\qquad
\scalebox{0.8}{\input{./figures/cup-sum3.tikz}}
:=
\bigovee^m_{j=1} 
\
\scalebox{0.8}{\input{./figures/cup-sum4i.tikz}}
\]
Then $\eta$ and $\varepsilon$ satisfy~\eqref{eq:scaled-snake} after replacing $n$ by $n \cdot m$. Conversely if~\eqref{eq:scaled-snake} holds then~\eqref{eq:summed-snake} is satisfied by setting $\eta_i = \eta$ for $i = 1, \dots, n$ and $m=1$ with $\varepsilon_1 = \varepsilon$. 
\end{proof}

\begin{remark}
Each equation in~\eqref{eq:scaled-snake} can be seen as a \emps{probabilistic teleportation} protocol. For example, in the left-hand equation, Alice and Bob share an entangled state $\eta$. With probability $\frac{1}{n}$, Alice can measure its corresponding effect $\varepsilon$ and thus transmit her system to Bob. Similarly, as is well-known, the snake equations can be seen to describe \emps{superselected} teleportation~\cite{abramskycoecke:categoricalsemantics}. 
\end{remark}

\subsection{Dagger categories} \label{subsec:daggercats}

Working beyond merely sub-causal processes also allows us to consider the presence of an extra structure  which lets us `reverse' any morphism in our category.

\begin{definition}~\cite{Selinger2007139} \label{not:daggercat}
A \deff{dagger category}\index{dagger category} $(\catC, \dagger)$ is a category $\catC$ together with an identity-on-objects contravariant involutive endofunctor $(-)^\dagger$. Explicitly, for every morphism $f \colon A \to B$ in $\catC$ there is a morphism $f^\dagger \colon B \to A$ such that 
\[
f^{\dagger \dagger} = f
\qquad
(g \circ f)^\dagger = f^\dagger \circ g^\dagger
\qquad
(\id{A})^\dagger = \id{A}
\qquad
\]
for all morphisms $f, g$ and objects $A$.
\end{definition}

Dagger categories come with their own graphical calculus~\cite{selinger2011survey}. When working in a dagger category, we depict morphisms $f \colon A \to B$ with pointed boxes:
\[
\scalebox{0.8}{\input{./figures/dagger-morphism.tikz}}
\]
and the dagger is represented by turning pictures upside-down:
\begin{equation*} 
\scalebox{0.8}{\input{./figures/dagger-i.tikz}}
\end{equation*}
In this setting, monoidal or compact structure should respect the dagger as follows. In a dagger category, a \indef{unitary}\index{unitary} is an isomorphism $U$ with $U^\dagger = U^{-1}$.

\begin{definition}~\cite{Selinger2007139} \label{def:daggers-of-everything}
A \deff{dagger (symmetric) monoidal category}\index{dagger monoidal category} $(\catC, \otimes, \dagger)$ is a dagger category with a (symmetric) monoidal structure satisfying 
\[
(f \otimes g)^{\dagger} = f^\dagger \otimes g^\dagger
\] 
for all morphisms $f, g$, and for which all coherence isomorphisms are unitary.

A \indef{dagger compact}\index{dagger compact category} category is a dagger symmetric monoidal category for which every object $A$ has a \indef{dagger dual}\index{dagger dual objects}, i.e.~a dual object for which
\[
\scalebox{0.8}{\input{./figures/dagdual.tikz}}
\]
By a \deff{dagger compact category with discarding}~\cite{coecke2010environment}\index{dagger compact category!with discarding} $(\catC, \otimes, \dagger, \discard{})$ we mean one with a choice of discarding such that for all objects $A$ we have
\begin{equation} \label{eq:discard-compact}
\scalebox{0.8}{\input{./figures/discard-compact.tikz}}
\end{equation}
Explicitly, on each object $A$ the state $\discardflip{A}$\label{not:upsidediscard} above denotes $\discard{A}^\dagger$, as standard for dagger notation. The rule \eqref{eq:discard-compact} thus relates the discarding effect on $A$ with that on $A^*$.
\end{definition}
Dagger compactness further relaxes our approach to diagrams, allowing us to now both bend wires and flip pictures upside-down.
In particular any morphism $f \colon A \to B$ now induces a morphism
\[
\scalebox{0.8}{\input{./figures/conjugate.tikz}}
\]
\paragraph{Notions in dagger categories} When working in a dagger category we typically adapt all categorical notions to be compatible with the dagger. For example we are usually interested in unitaries rather than mere isomorphisms, and in the following kinds of monics or biproducts. In any dagger category:
\begin{itemize}
\item 
an \indef{isometry}\index{isometry} is a morphism $i \colon A \to B$ with $i^\dagger \circ i = \id{A}$;
\item 
a \indef{dagger biproduct}\index{biproduct!dagger} is a biproduct $A \biprod B$ with $\pproj_A = {\coproj_A}^\dagger$ and $\pproj_B = {\coproj_B}^\dagger$. 
\end{itemize}
Whenever we say a dagger category has addition we mean one satisfying 
$(f + g)^\dagger = f^\dagger + g^\dagger$
for all $f, g$. As we would expect such an addition is provided by finite dagger biproducts. In a dagger category zero morphisms automatically satisfy $0^\dagger = 0$. 

A \indef{dagger functor}\index{functor!dagger} $F \colon \catC \to \catD$ between dagger categories is one satisfying $F(f^\dagger) = F(f)^\dagger$ for all morphisms $f$. A \indef{dagger (symmetric) monoidal functor}\index{functor!dagger monoidal} also has that all of its structure isomorphisms are unitary. A dagger functor is an \indef{equivalence}\index{equivalence!dagger} $\catC \simeq \catD$ when it is full, faithful and for every object $B$ of $\catD$ there is a unitary $B \simeq F(A)$ for some object $A$ of $\catC$.

\subsection{Examples}

Let us now meet some examples of compact, dagger and dagger compact categories.

\begin{exampleslist}
\item 
For any field $\fieldk$, let $\Veck$\label{cat:veck} be the symmetric monoidal category whose objects are vector spaces $V$ over $\fieldk$ and morphisms are $k$-linear map $f \colon V \to W$, with $I=\fieldk$ and $\otimes$ being the usual tensor product of vector spaces. Here an object $V$ has a dual precisely when it is finite-dimensional as a vector space, in this case being given by its dual space
\[
V^* := \{f \colon V \to \fieldk \mid f \text{ is linear }\}
\]
Choosing any basis $\{\ket{i}\}^n_{i=1}$ for $V$, let $\bra{i} \in V^*$ be the unique functional sending each basic vector $\ket{j}$ to $\delta_{i,j}$. Then $V^*$ is indeed a dual object to $V$ via 
\begin{equation} \label{eq:VecSp-cups-caps}
\scalebox{0.8}{\input{./figures/veccup.tikz}} :: 1 \mapsto \sum^n_{i=1} \bra{i} \otimes \ket{i} 
\qquad
\qquad
\scalebox{0.8}{\input{./figures/veccap.tikz}} :: \bra{i} \otimes \ket{j} \mapsto 
\begin{cases}
1 & i = j \\ 
0 & i \neq j
\end{cases}
\end{equation}
In fact both maps are independent of our choice of basis. This makes the full subcategory $\FVeck$\label{cat:FVeck}, whose objects are the finite-dimensional spaces, compact. 
\item 
$\Hilb$ is dagger symmetric monoidal, with $f^{\dagger} \colon \hilbK \to \hilbH$ being the adjoint of the linear map $f \colon \hilbH \to \hilbK$, i.e.~the unique map satisfying 
$\langle f(v), w \rangle = \langle v, f^\dagger(w) \rangle$
for all $v \in \hilbH, w \in \hilbK$. 

A Hilbert space $\hilbH$ has a dual in $\Hilb$ again precisely when it is finite-dimensional, then being given by its dual space $\hilbH^*$ via the morphisms~\eqref{eq:VecSp-cups-caps}, which are often referred to as the (unnormalised) \emps{Bell state} and \emps{Bell effect} on $\hilbH$. In this way $\FHilb$ is dagger compact, and similarly so is $\FHilbP$. 
\item 
Each category $\MatS$ is compact closed. Here each object $n$ is self-dual with
\[
\scalebox{0.8}{\input{./figures/MatS-dual.tikz}}
\
=
\
\sum^n_{i=1} \ 
\scalebox{0.8}{\input{./figures/MatS-dual2.tikz}}
\qquad
\qquad
\scalebox{0.8}{\input{./figures/MatS-dual3.tikz}}
\
=
\
\sum^n_{i=1} \ 
\scalebox{0.8}{\input{./figures/MatS-dual4.tikz}}
\]
where above we label by $i$ the respective column and row vectors with a value $1$ at position $i$ and $0$ elsewhere.

Whenever $S$ is \indef{involutive}\index{involutive semi-ring} \label{not:involsring}, meaning that it comes with an automorphism $s \mapsto s^\dagger$ with $s^{\dagger \dagger} = s$ for all $s \in S$, this makes $\MatS$ dagger compact with 
\begin{equation} \label{eq:dagger-of-matrix}
(M^\dagger)_{i,j} := M_{j,i}^\dagger
\end{equation}
In particular $\Class \simeq \Mat_{\mathbb{R}^+}$ is dagger compact with discarding.
\item 
More generally when $\catC$ is a dagger category with addition then $\catC^\oplus$ has a dagger as in~\eqref{eq:dagger-of-matrix} giving it dagger biproducts. Similarly when $\catC$ is compact or dagger compact then so is $\catC^\oplus$; for any object $A = (A_i)^n_{i=1}$ we take $A^* := (A_i^*)^n_{i=1}$ with 
\[
(\scalebox{0.8}{\input{./figures/smallcup.tikz}})_{i,j}
= 
\begin{cases} 
\scalebox{0.8}{\input{./figures/smallercup.tikz}}
& \text{if $i = j$} 
\\  \ \ 0 & \text{ otherwise } 
\end{cases}
\qquad
\qquad
(\scalebox{0.8}{\input{./figures/smallcap.tikz}})_{i,j}
= 
\begin{cases} 
\scalebox{0.8}{\input{./figures/smallercap.tikz}}
& \text{if $i = j$} 
\\  \ \ 0 & \text{ otherwise } 
\end{cases}
\]
\item 
$\Quant{}$ and $\FCStar$ are both dagger compact categories with discarding. 

Indeed $\FCStar$ forms a dagger subcategory of $\FHilb$ since each finite-dimensional C*-algebra is in particular a Hilbert space, with $f^\dagger$ again given by the adjoint of each (completely positive) map $f$, and similarly so does $\Quant{}$. Then $\Quant{}$ inherits compactness from its subcategory $\FHilbP$, so that $\FCStar \simeq \Quant{}^\oplus$ is dagger compact also.

\item 
In contrast, the infinite dimensional settings $\CStarop$ and $\vNop$ lack daggers or compactness. 

\item 
$\Rel$ is a dagger compact category with discarding. For any relation we define $R^{\dagger} \colon B \to A$ by relational converse $R^\dagger(b,a) \iff R(a,b)$. Here every object $A$ is self-dual via the relations 
\begin{equation} \label{eq:cups-in-Rel}
\scalebox{0.8}{\input{./figures/relcup.tikz}} :: \star \mapsto (a,a) \ \ \forall a \in A 
\qquad
\qquad
\scalebox{0.8}{\input{./figures/relcap.tikz}} :: (a,b) \mapsto \star \ \text{     if $a = b$}
\end{equation}
More generally so is $\Rel(\catC)$ for any regular category, or indeed any bicategory of relations in the sense of Carboni and Walters~\cite{carboni1987cartesian}.
\item 
$\Spek$ and $\MSpek$ are both dagger compact subcategories of $\Rel$. Indeed by definition both are closed under the dagger, and $\Spek$ contains the cups from~\eqref{eq:cups-in-Rel} on each of its objects, which are built from its generators $\tinycomult[whitedot]$, $\tinyunit[whitedot]$ by tensors of the state
\[
\scalebox{0.8}{\input{./figures/spek-cup.tikz}}
\]
\item 
A \emps{groupoid} is a category in which every morphism is an isomorphism~\cite{mac1978categories}, so that a group is a one-object groupoid. Any groupoid forms a dagger category by setting $f^\dagger = f^{-1}$, so that every morphism is unitary. 
\item 
For any group $G$, we may define a category $\Rep(G)$\label{cat:repg} whose objects are (finite-dimensional) unitary representations $\phi \colon G \to \Aut(V)$ of $G$, and morphisms $f \colon (V,\phi) \to (W, \psi)$ are \emps{intertwiners}, i.e.~linear maps $f \colon V \to W$ satisfying 
$f(\phi(g(v)) = \psi(g)(f(v))$ for all $g \in G, v \in V$. One may verify that $\Rep(G)$ is dagger compact, inheriting this structure from $\FHilb$.

Verdon and Vicary have used $\Rep(G)$ to study reference frame-independent quantum protocols, by taking $G$ to be a group of transformations of such frames~\cite{verdon2016tight}. 
\end{exampleslist}

Note that, like compactness, the presence of a dagger indeed usually requires morphisms which are not sub-causal. For example, for any object $\hilbH$ with orthonormal basis $\{\ket{i}\}^n_{i=1}$ in $\Quant{}$ we have 
\[
\scalebox{0.8}{\begin{tikzpicture}
	\begin{pgfonlayer}{nodelayer}
		\node [style=upground] (0) at (-1.75, 0.75) {};
		\node [style=downground] (1) at (-1.75, -0.75) {};
		\node [style=none] (2) at (0, -0) {$=$};
		\node [style=label] (3) at (-2.25, -0) {$\hilbH$};
	\end{pgfonlayer}
	\begin{pgfonlayer}{edgelayer}
		\draw [style=none] (1) to (0);
	\end{pgfonlayer}
\end{tikzpicture}}
\sum^n_{i=1}
\scalebox{0.8}{\begin{tikzpicture}
	\begin{pgfonlayer}{nodelayer}
		\node [style=upground] (0) at (-1.75, 0.75) {};
		\node [style=dagpoint] (1) at (-1.75, -0.75) {$i$};
		\node [style=none] (2) at (0, -0) {$=$};
	\end{pgfonlayer}
	\begin{pgfonlayer}{edgelayer}
		\draw [style=none] (1) to (0);
	\end{pgfonlayer}
\end{tikzpicture}}
 \ n
\]
and so the state $\discardflip{}$ is not sub-causal.

\subsection{The $\CPM$ Construction} \label{sec:CPM}

The notion of dagger compactness provides a new way to generate examples of categories with discarding, first introduced by Selinger~\cite{Selinger2007139}, based on a description of $\Quant{}$ in terms of $\FHilb$.

\begin{definition} \label{def:CPM}
~\cite{Selinger2007139,coecke2008axiomatic}\index{CPM construction} \label{not:CPM}
Let $\catA$ be a dagger compact category. The category $\CPM(\catA)$ is defined as having the same objects as $\catA$, with morphisms $A \to B$ being those morphisms in $\catA$ of the form
\[
\scalebox{0.8}{\input{./figures/CPM-map.tikz}}
\]
for some $f \colon A \to C \otimes B$. Using the graphical rules for dagger compact categories, it is straightforward to check that this category is again dagger compact, and has discarding given by 
\[
\scalebox{0.8}{\input{./figures/CPM-discarding.tikz}}
\]
\end{definition}
\noindent
There is then a dagger monoidal functor $\Dbl{(-)} \colon \catA \to \CPM(\catA)$\label{not:doublingfunctor} given by `doubling':
\[
\scalebox{0.8}{\input{./figures/DBL.tikz}}
\]
which generalises sending any linear map in $\FHilb$ to its Kraus map. This construction also often comes with a notion of coarse-graining.

\begin{proposition}~\cite[Cor.~5.3]{Selinger2007139} \label{prop:addition-in-CPM}
Let $\catA$ be a dagger compact category with finite dagger biproducts. Then $\CPM(\catA)$ has addition defined by 
\begin{equation} \label{eq:addition-in-CPM}
\scalebox{0.8}{\input{./figures/CPM-map-smaller.tikz}}
+
\scalebox{0.8}{\input{./figures/CPM-map-smaller2.tikz}}
:=
\scalebox{0.8}{\input{./figures/CPM-map-smaller3.tikz}}
\end{equation}
where $\langle f, g \rangle$ is the unique morphism with 
\[
\scalebox{0.8}{\input{./figures/CPM-proj1.tikz}}
\]
Equivalently, this is just the addition in $\catA$ induced by its dagger biproducts. 
\end{proposition}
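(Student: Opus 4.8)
The plan is to prove the displayed formula and the ``equivalently'' clause at one stroke, by showing that on every homset of $\CPM(\catA)$ the operation it defines is exactly the restriction of the commutative-monoid addition that $\catA$ already carries because of its finite dagger biproducts. Once that is established, well-definedness (independence of the chosen representatives $f,g$), commutativity, associativity, the unit law, bilinearity of $\circ$ and $\otimes$, and compatibility with $\dagger$ all transfer automatically from $\catA$, since $\CPM(\catA)$ takes its composition, tensor, coherence isomorphisms and dagger directly from $\catA$, its discarding maps and all zero morphisms are CPM maps (the latter represented by $f=0$), and the value of the operation is visibly a CPM map (namely the one represented by $\langle f,g\rangle$), so we never leave $\CPM(\catA)$.

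First I would record the ambient structure. By Lemma~\ref{lem:equational} and \eqref{eq:additionfrombiprod}, the finite dagger biproducts make $\catA$ enriched in commutative monoids, with $(f+g)^\dagger = f^\dagger+g^\dagger$ and $0^\dagger=0$, and --- compactness forcing the biproducts to be distributive, since $A\otimes(-)$ has a right adjoint $A^*\otimes(-)$ --- with $f\otimes(g+h)=f\otimes g+f\otimes h$ and $f\otimes 0=0$. I would also note that conjugation $h\mapsto\overline h$, being built from $\dagger$ together with the cups and caps by pre/post-composition and tensoring, is additive and monoidal: $\overline{h+k}=\overline h+\overline k$, $\overline 0=0$, $\overline{\id{}}=\id{}$, $\overline{h\otimes k}=\overline h\otimes\overline k$; and that for a dagger biproduct $C\biprod D$ one has $\pproj_C=\coproj_C^\dagger$, $\pproj_D=\coproj_D^\dagger$, the $\pproj_i$ jointly monic (from $\coproj_C\pproj_C+\coproj_D\pproj_D=\id{}$), and --- as in the description of cups and caps on $\catA^\oplus$ above --- the cap on $C\biprod D$ is ``block diagonal'': writing $(C\biprod D)^*\cong C^*\biprod D^*$ and $\overline{\coproj_C}=\coproj_{C^*}$ etc., we have $\varepsilon_{C\biprod D}\circ(\coproj_C\otimes\coproj_{C^*})=\varepsilon_C$ and $\varepsilon_{C\biprod D}\circ(\coproj_C\otimes\coproj_{D^*})=0$ (the latter because $\pproj_D\circ\coproj_C=0$). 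The morphism $\langle f,g\rangle\colon A\to(C\biprod D)\otimes B$ of the statement exists, is unique, and equals $(\coproj_C\otimes\id{B})\circ f+(\coproj_D\otimes\id{B})\circ g$, by joint monicity of the $\pproj_i$.

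The heart of the argument is then the single identity
\[
\mathrm{cpm}(\langle f,g\rangle)\;=\;\mathrm{cpm}(f)+\mathrm{cpm}(g)
\]
in $\catA$, where $\mathrm{cpm}(h)$ denotes the morphism obtained from $h\colon A\to E\otimes B$ by the ``fold along $E$'' picture defining the CPM map of $h$. To prove it, substitute $\langle f,g\rangle=(\coproj_C\otimes\id{B})\circ f+(\coproj_D\otimes\id{B})\circ g$ and, using additivity and monoidality of conjugation, its conjugate $(\coproj_{C^*}\otimes\id{B^*})\circ\overline f+(\coproj_{D^*}\otimes\id{B^*})\circ\overline g$, into that picture; bilinearity of $\circ$ and $\otimes$ over $+$ produces four terms indexed by a choice of $C$ or $D$ for the $f$-leg and for the $\overline f$-leg. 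The block-diagonal cap kills the two mixed terms, the $(C,C)$ term is exactly $\mathrm{cpm}(f)$ and the $(D,D)$ term is $\mathrm{cpm}(g)$, giving the identity. With it, for CPM maps $\varphi,\psi\colon A\to B$ and any representatives $f,g$, the formula defines $\varphi+\psi$ as $\mathrm{cpm}(\langle f,g\rangle)$, which equals $\mathrm{cpm}(f)+\mathrm{cpm}(g)=\varphi+\psi$ computed in $\catA$; so the operation is well-defined and is the restriction of the addition of $\catA$ --- the second assertion --- and the enrichment axioms follow as in the first paragraph. The main obstacle is nothing conceptually deep but the diagrammatic bookkeeping here: correctly identifying $(C\biprod D)^*$ with $C^*\biprod D^*$, checking that conjugates of coprojections are coprojections of the duals, verifying the block-diagonal form of caps on dagger biproducts, and keeping straight throughout which strand of each diagram carries $C\biprod D$ and which carries its dual.
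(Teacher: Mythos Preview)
Your proof is correct, and in fact more detailed than what the paper supplies: the paper states this proposition as a citation to Selinger~\cite[Cor.~5.3]{Selinger2007139} and gives no proof of its own. Your strategy --- reduce everything to the single identity $\mathrm{cpm}(\langle f,g\rangle)=\mathrm{cpm}(f)+\mathrm{cpm}(g)$ in $\catA$, then inherit the enrichment axioms wholesale from $\catA$ --- is exactly the right one, and your expansion into four terms via bilinearity, with the mixed terms killed by the block-diagonality of the cap on $C\biprod D$, is the standard and correct computation. The bookkeeping you flag (identifying $(C\biprod D)^*\simeq C^*\biprod D^*$, $\overline{\coproj_C}=\coproj_{C^*}$, and the block-diagonal cap) is precisely what the paper records in its description of dagger compactness on $\catC^\oplus$, so you are on firm ground there.
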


\begin{example}   
Motivation for this construction comes from the fact that we have dagger monoidal equivalences
\[
\Quant{} \simeq \CPM(\FHilb) \simeq \CPM(\FHilbP)
\]
To see this, let us first expand the definition of $\CPM(\FHilb)$. Morphisms $\hilbH \to \hilbK$ all take the form 
\begin{equation} \label{eq:CPM-explicit}
\scalebox{0.8}{\input{./figures/CPM-kraus1i.tikz}}
=
\sum^n_{i=1} \ 
\scalebox{0.8}{\input{./figures/CPM-kraus1iii.tikz}}
\quad
\text{ where }
\quad
\scalebox{0.8}{\input{./figures/CPM-krausa.tikz}}
\end{equation}
for any orthonormal basis $\{\ket{i}\}^n_{i=1}$ of $\hilbL$. In particular, states of $\Hilb$ in this category may be identified with density matrices $\rho = \sum^n_{i=1} \ket{\psi_i} \bra{\psi_i}$ via the correspondence
\[
\sum^n_{i=1}
\scalebox{0.8}{\input{./figures/CPM-kraus2i.tikz}}
\ \ 
=
\ \ 
\sum^n_{i=1}
\scalebox{0.8}{\input{./figures/CPM-kraus2ii.tikz}}
\ \
=
\ \ 
\scalebox{0.8}{\input{./figures/CPM-kraus2iii.tikz}}
\]
It is well-known that completely positive maps $B(\hilbH) \to B(\hilbK)$ are then precisely maps of the form $\rho \mapsto \sum^n_{i=1} f_i \circ \rho \circ f_i^*$ as in~\eqref{eq:CPM-explicit}, with this being known as the \emps{Kraus decomposition} of a completely positive map. This provides the equivalence $\Quant{} \simeq \CPM(\FHilb)$. Since any such map is invariant under multiplying each $f_i$ by a global phase, it follows that $\CPM(\FHilb) \simeq \CPM(\FHilbP)$ also. The addition in $\Quant{}$ induced as in~\eqref{eq:addition-in-CPM} by the dagger biproducts in $\FHilb$ is precisely the usual one of completely positive maps.
 
Replacing $\FHilb$ by other dagger compact categories allows us to consider varied quantum theories; for example we may define a dagger theory $\Quant{}^G := \CPM(\Rep(G))$ modelling quantum processes up to some group of symmetries $G$. 
\end{example}

The $\CPM$ construction will be useful to us in the next chapter as an abstract treatment of the quantum setting, and we will use it to define and study further such generalised quantum theories in Chapter~\ref{chap:recons}.

\chapter{Principles for Operational Theories} \label{chap:principles}

A common topic of research in the foundations of physics lies in singling out the consequences of various physical or operational principles which a theory may satisfy (see for instance~\cite{Barrett2007InfoGPTs,Barnum2007NoBroadcast,pawlowski2009information,barnum2012teleportation}). For example, quantum theory is known to have major operational advantages over the classical world~\cite{deutsch1992rapid,shor1999polynomial}, and many have sought to characterise precisely which of its properties lie at the source of these benefits~\cite{d2010probabilistic,howard2014contextuality}. In the strongest case, combinations of such principles have been used to reconstruct quantum theory itself from among all finite-dimensional probabilistic theories (see Chapter~\ref{chap:recons}).

Particular principles have been introduced and studied in a variety of frameworks for general physical theories, along with probabilistic theories~\cite{PhysRevA.84.012311InfoDerivQT}, including categorical quantum mechanics~\cite{coecke2008axiomatic,selby2017leaks,cunningham2017purity} and effectus theory~\cite{QC2015ChoJaWW,EffectusIntro}. The categorical approach provides a new perspective on many of these principles, whilst also suggesting natural new ones to consider. 

In this chapter we survey a range of principles for operational theories, unifying features which have arisen independently in each of these frameworks, and showing that they may be studied in a very lightweight diagrammatic setting.

To address theories of a quantum-like nature, we will pay particular attention to the principle of \emps{purification}~\cite{chiribella2010purification} and associated notions of \emps{purity} of morphisms. Later in Chapter~\ref{chap:recons}, we will use principles from this chapter, such as purification, to provide our own categorical reconstruction of quantum theory.

\subsection*{Setup}

Through this chapter, we will typically work in a basic categorical framework, capable of accommodating either the sub-causal or super-causal setting. By a \indef{theory}\index{theory} we will simply a mean a symmetric monoidal category with discarding $(\catC, \otimes, \discard{})$ with zero morphisms, satisfying our earlier rule
\begin{equation} \label{eq:pos-eq}
\scalebox{0.8}{\input{./figures/zero-moni.tikz}} \ 0 
\quad
\implies 
\quad
\scalebox{0.8}{\input{./figures/zero-monii.tikz}} \ 0 
\end{equation}
 for all morphisms $f$. 

 At times we will also consider when our theory has extra structure. We call a theory \indef{ordered} \index{theory!ordered} when it is monoidally enriched in partially ordered sets, with each zero morphism as a bottom element. That is, each homset $\catC(A,B)$ has a partial order $\leq$\label{not:ordertheory} on morphisms which is respected by composition, with each function $f \circ (-)$, $(-) \circ f$, and $f \otimes (-)$ being monotone, and satisfying 
 \begin{equation} \label{eq:lesszero}
f \leq 0 \implies f = 0
 \end{equation}
 for all morphisms $f$.
 We will sometimes consider when $\catC$ has a partial $\ovee$ or total $+$ addition operation on morphisms as in Chapter~\ref{chap:totalisation}, which in many cases induces this ordering. We say that an ordered theory is \indef{ordered by} a partial (or total) addition $\ovee$ when its order is given by
\[
f \leq g \iff g = f \ovee h \text{ for some $h$}
\]
For example the addition in each of the theories $\PFun$, $\KlDT$, $\Quant{}$, $\vNop$, $\Rel$ induces an order on them in this way. Finally we will also at times consider when $\catC$ has compact, dagger or dagger compact structure. In the latter cases we require the rule $f \leq g \implies f^\dagger \leq g^\dagger$ for all morphisms $f,g$ and call $\catC$ a \index{dagger theory}\indef{dagger theory}.

\section{Minimal Dilations}

One of the most elementary notions in a theory is that of a dilation of a morphism. Our first principle requires each morphism to have a canonical `smallest' dilation.

\begin{definition}\label{def:min-dilation}
\label{not:mindil}
A \deff{minimal dilation}\index{minimal dilation} of a morphism $f$ is a dilation
\[
\scalebox{0.8}{\input{./figures/min-dila.tikz}}
\]
such that for all morphisms $g$ we have 
\[
\scalebox{0.8}{\input{./figures/min-dil.tikz}}
\]
for some $h \colon \MDil{f} \to C$, and for all morphisms $k, l$ we have
\begin{equation} \label{eq:min-dilunique}
\scalebox{0.8}{\input{./figures/min-dilb.tikz}}
\end{equation}
\end{definition}
In particular, a minimal dilation of an effect $e \colon A \to I$ is an epimorphism $\mdil{e} \colon A \to \MDil{e}$ with $\discard{} \circ \mdil{e} = e$ such that for all $f \colon A \to B$ with $\discard{} \circ f = e$ we have
\begin{equation} \label{eq:extensions}
\begin{tikzcd}
A \rar{\mdil{e}} \drar[swap]{f} & \MDil{e} \dar[dashed]{\exists ! g} \\
 & B
\end{tikzcd}
\end{equation}
In a compact theory, by bending wires, one may see that for minimal dilations to exist it suffices to have them for effects. 

\begin{lemma}
For any two minimal dilations $\mdil{f}$ and $\mdil{f}'$ of the same morphism $f$ there is a unique causal isomorphism $g$ with 
\begin{equation} \label{eq:min-dil-uniqueIsom}
\scalebox{0.8}{\input{./figures/min-dilc.tikz}}
\end{equation}
\end{lemma}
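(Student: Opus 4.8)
The plan is to run the standard argument that any two solutions of a universal mapping problem are uniquely isomorphic, here viewing a minimal dilation as the initial object among dilations of $f$. First I would invoke the existence clause of Definition~\ref{def:min-dilation}: since $\mdil{f}'$ is in particular a dilation of $f$, there is a morphism $g \colon \MDil{f} \to \MDil{f}'$ such that feeding the auxiliary output wire of $\mdil{f}$ through $g$ produces $\mdil{f}'$; this $g$ is the candidate isomorphism. Swapping the roles of the two minimal dilations yields $g' \colon \MDil{f}' \to \MDil{f}$ with the symmetric property. Composing the two identities shows that feeding the auxiliary wire of $\mdil{f}$ through $g' \circ g$ leaves $\mdil{f}$ unchanged, so the uniqueness clause~\eqref{eq:min-dilunique} (with $k = g' \circ g$ and $l = \id{\MDil{f}}$) forces $g' \circ g = \id{\MDil{f}}$; symmetrically $g \circ g' = \id{\MDil{f}'}$, so $g$ is an isomorphism with inverse $g'$.

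Next I would verify that $g$ is causal. Discarding the auxiliary output wire of $\mdil{f}'$ on both sides of the equation defining $g$ exhibits $f$ simultaneously as the marginal obtained from $\mdil{f}$ by post-composing the auxiliary wire with $\discard{\MDil{f}'} \circ g$ and, by definition of a dilation, as the marginal obtained from $\mdil{f}$ by post-composing with $\discard{\MDil{f}}$. A final application of~\eqref{eq:min-dilunique}, now with $k = \discard{\MDil{f}'} \circ g$ and $l = \discard{\MDil{f}}$, gives $\discard{\MDil{f}'} \circ g = \discard{\MDil{f}}$, i.e.\ $g$ is causal. Uniqueness of $g$ among morphisms satisfying~\eqref{eq:min-dil-uniqueIsom} is then immediate: any two such morphisms agree after being fed through the auxiliary wire of $\mdil{f}$, so~\eqref{eq:min-dilunique} makes them equal. (In the compact setting one could alternatively reduce to the case of effects, but the argument above works uniformly.)

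The argument is entirely formal, so I do not expect a genuine obstacle. The only points needing a little care are (i) matching the pictorial operation ``route the auxiliary wire through $g$'' with the abstract composite that appears in the uniqueness clause~\eqref{eq:min-dilunique}, and (ii) remembering to \emph{derive} causality of $g$ from~\eqref{eq:min-dilunique} rather than assuming it, since \emph{a priori} $g$ is merely an isomorphism between the dilating systems.
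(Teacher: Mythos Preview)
Your proposal is correct and follows essentially the same approach as the paper's proof: use the existence clause to get mediating morphisms in both directions, then apply the uniqueness clause~\eqref{eq:min-dilunique} to show they are mutually inverse and causal. The paper's version is simply terser, packaging the existence and uniqueness of $g$ together and stating in one sentence that~\eqref{eq:min-dilunique} yields both causality and invertibility, whereas you spell out each application separately.
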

\begin{proof} 
Since both morphisms have marginal $f$, there is a unique morphism $g$ as above, and dually we obtain a unique morphism $h \colon  \MDil{f} \to \MDil{f}'$. It follows from~\eqref{eq:min-dilunique} that both $g$ and $h$ are causal and inverse to each other.
\end{proof}

\subsection{Dilations in ordered theories}

In the setting of ordered theories minimal dilations typically take on an extra property. In such a theory let us call a minimal dilation $\mdil{f}$ an \indef{order dilation}\index{order dilation} when it satisfies 
\begin{equation} \label{eq:order-dilation} 
\scalebox{0.8}{\input{./figures/min-dilorder.tikz}}
\end{equation}
for some (unique) $h \colon \MDil{f} \to C$. 

Next, let us say that a theory has \indef{disjoint embeddings}\index{disjoint embedding} when for all objects $A, B$ there is an object $C$ and morphisms
\[
\begin{tikzcd}[row sep = large]
A \rar[shift left = 2.5]{\coproj_A}  & C  \lar[shift left = 2.5]{\pproj_A} \rar[shift right = 2.5,swap]{\pproj_B} & B \lar[shift right = 2.5,swap]{\coproj_B}  
\end{tikzcd}
\]
satisfying $\pproj_A \circ \coproj_A = \id{A}$, $\pproj_B \circ \coproj_B = \id{B}$, $\pproj_A \circ \coproj_B = 0$ and  $\pproj_B \circ \coproj_A = 0$, and with $\coproj_A$ and $\coproj_B$ causal.

\begin{proposition} \label{prop:derive-order-embed}
Let $\catC$ be a theory with disjoint embeddings which is ordered by either a partial addition $\ovee$ making it a sub-causal category, or a total addition $+$. Then in $\catC$ any minimal dilation is an order dilation.
\end{proposition}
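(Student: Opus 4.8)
The plan is to verify the order-dilation condition~\eqref{eq:order-dilation} directly: given a morphism $g \colon A \to B \otimes C$ whose marginal $g_B := (\id{B} \otimes \discard{C}) \circ g$ satisfies $g_B \leq f$, I want to produce a unique $h \colon \MDil{f} \to C$ with $g = (\id{B} \otimes h) \circ \mdil{f}$. The strategy is to enlarge $g$ into a genuine dilation of $f$ living on a single larger output system, apply the defining (equality-)property of the minimal dilation $\mdil{f}$, and then project the resulting factor back onto $C$.

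Concretely, I would first use that $\catC$ is ordered by $\ovee$ to write $f = g_B \ovee p$ for some $p \colon A \to B$ (in the total-addition case, $f = g_B + p$). Then I invoke disjoint embeddings for the pair $C, I$ to obtain an object $D$ with causal morphisms $\coproj_C \colon C \to D$, $\coproj_I \colon I \to D$ and a morphism $\pproj_C \colon D \to C$ satisfying $\pproj_C \circ \coproj_C = \id{C}$ and $\pproj_C \circ \coproj_I = 0$. Suppressing the coherence isomorphism $B \otimes I \cong B$, I form
\[
\tilde{g} := \big( (\id{B} \otimes \coproj_C) \circ g \big) \ \ovee \ \big( (\id{B} \otimes \coproj_I) \circ p \big) \colon A \to B \otimes D
\]
Causality of $\coproj_C$ and $\coproj_I$ makes the discard-marginals of the two summands equal to $g_B$ and $p$ respectively; since $f = g_B \ovee p$ these are $\ovee$-compatible, so axiom~\ref{enum:disc-compat} of a sub-causal category guarantees that $\tilde{g}$ is defined, and its marginal is $g_B \ovee p = f$. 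Thus $\tilde{g}$ is a dilation of $f$.

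By minimality there is then $H \colon \MDil{f} \to D$ with $\tilde{g} = (\id{B} \otimes H) \circ \mdil{f}$. Setting $h := \pproj_C \circ H$ and post-composing this equation with $\id{B} \otimes \pproj_C$, distributivity of $\circ$ over $\ovee$ splits the right side into the $\coproj_C$-summand, which collapses to $g$ since $\pproj_C \circ \coproj_C = \id{C}$, plus the $\coproj_I$-summand, which vanishes since $\pproj_C \circ \coproj_I = 0$ and $\id{B} \otimes 0 = 0$; hence $g = (\id{B} \otimes h) \circ \mdil{f}$. Uniqueness of $h$ follows immediately from the uniqueness clause~\eqref{eq:min-dilunique} in the definition of a minimal dilation, since any two such $h$ give the same composite with $\mdil{f}$. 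The total-addition case runs identically, with $\ovee$ replaced by $+$ and no compatibility check required.

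I expect the only delicate point to be the bookkeeping around $\tilde{g}$: checking that the discard-marginals of its two summands are genuinely $\ovee$-compatible so that axiom~\ref{enum:disc-compat} applies, and keeping the coherence isomorphism $B \otimes I \cong B$ under control. Everything else is routine PCM-enriched monoidal algebra. The conceptual role of disjoint embeddings is exactly to let the two "branches" $g$ and $p$ be stored in one extra system $D$, so that a single application of the (equality-)minimality of $\mdil{f}$ suffices.
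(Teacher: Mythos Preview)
Your proof is correct and follows essentially the same approach as the paper: form the combined dilation $\tilde{g}$ (the paper's $k$) via a disjoint embedding of $C$ and $I$, use condition~\ref{enum:disc-compat} of Definition~\ref{def:PCM-with-discarding} to ensure the sum is defined, factor through $\mdil{f}$, and then post-compose with $\pproj_C$ to recover the required $h$. The paper's version is terser but identical in substance.
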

\begin{proof}
We prove the result for a partial addition $\ovee$, the total case is simpler. Let $\mdil{f}$ be a minimal dilation of $f \colon A \to B$ and let $g \colon A \to B \otimes C$ satisfy the left-hand side of \eqref{eq:order-dilation}. Then letting $g_B \colon A \to B$ be its marginal we have $g_B \ovee h = f$ for some $h \colon A \to B$. Now consider a disjoint embedding
\[
\begin{tikzcd}[row sep = large]
C \rar[shift left = 2.5]{\pcoproj_C}  & D  \lar[shift left = 2.5]{\pproj_C} \rar[shift right = 2.5,swap]{\pproj_I} & I \lar[shift right = 2.5,swap]{\pcoproj_I}  
\end{tikzcd}
\]
and define 
\[
\scalebox{0.8}{\input{./figures/min-dilorderproof.tikz}}
\]
Note that $k$ is well-defined thanks to condition \ref{enum:disc-compat} of Definition \ref{def:PCM-with-discarding}.
Then $k$ is a dilation of $f$ and so factors over $\mdil{f}$ by some unique causal morphism $r \colon \MDil{f} \to D$. Finally then $g = (\id{B} \otimes (\pproj_C \circ r)) \circ \mdil{f}$. 
\end{proof}

In the non-monoidal setting with discarding we may define minimal and order dilations for effects $e \colon A \to I$ just as above, and the same proof holds.

\begin{remark}[Quotients]
Order dilations of effects were already defined under a different name in the context of effectus theory~\cite{QC2015ChoJaWW,EffectusIntro}, where they are called \emps{quotient} maps $\xi_e$ and defined via the complement $e^{\bot}$ of an effect $e$ by the property:
\[
\forall f \text{ s.t. } \discard{} \circ f \leq e^{\bot}
\qquad
\begin{tikzcd}
A \rar{\xi_e} \arrow[dr,"f",swap ]& A   /  e \dar[dashed]{\exists ! g} \\ 
& B 
\end{tikzcd}
\]
Hence quotients coincide with order dilations via the correspondence 
\[
\Ext{A}{e} = A / e^\bot
\qquad
\mdil{e} = \xi_{e^\bot} 
\]
Definition~\ref{def:min-dilation} allows us to extend this notion to settings where effects lack complements or any ordering. 
\end{remark}

\begin{corollary} \label{cor:effectusMinDil}
An effectus in partial form $\catC$ has minimal dilations for effects iff it has quotients.
\end{corollary}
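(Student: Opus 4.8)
The plan is to unwind both sides of the claimed equivalence into statements purely about effects in $\catC$, using the correspondence established in the preceding Remark between minimal (order) dilations of an effect $e$ and the quotient map $\xi_{e^\bot}$, together with Proposition~\ref{prop:derive-order-embed}. First I would note that an effectus in partial form is in particular a sub-causal category (Definition~\ref{def:EfInParForm}), and it has disjoint embeddings: given objects $A, B$, the biproduct-style coproduct $A + B$ together with the projections $\triangleright_A, \triangleright_B$ (available since $\catC$ is a FinPAC) supplies the required data, with $\triangleright_A \circ \coproj_A = \id{A}$, $\triangleright_A \circ \coproj_B = 0$, etc., and the $\coproj$'s causal since the coproducts are causal. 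Hence Proposition~\ref{prop:derive-order-embed} applies: in $\catC$ every minimal dilation is automatically an order dilation. In the non-monoidal or effect-only setting the same proof works, as remarked after Proposition~\ref{prop:derive-order-embed}, so in particular minimal dilations of effects coincide with order dilations of effects.

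Next I would invoke the Remark directly: order dilations of an effect $e$ are exactly quotient maps, via $\mdil{e} = \xi_{e^\bot}$ and $\MDil{e} = A/e^\bot$. Since $\catC$ is complemented in the relevant sense — each effect $a$ has a complement $a^\bot$ with $a \ovee a^\bot = \discard{}$, and by the cancellation axiom~\ref{enum:subcausal-cancel} this complement is unique — the assignment $e \mapsto e^\bot$ is an involution on effects. Therefore "$\catC$ has a minimal dilation for every effect $e$" is equivalent to "$\catC$ has an order dilation for every effect $e$", which is equivalent to "$\catC$ has a quotient map $\xi_{e^\bot}$ for every effect $e$", and since $e \mapsto e^\bot$ is a bijection on effects, this last is equivalent to "$\catC$ has a quotient $\xi_d$ for every effect $d$", i.e.~$\catC$ has quotients. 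This gives the "only if" direction and the "if" direction simultaneously.

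I would present the argument in two short movements: (1) check that an effectus in partial form has disjoint embeddings, so Proposition~\ref{prop:derive-order-embed} forces minimal dilations of effects to be order dilations, and conversely every order dilation is trivially a minimal dilation; (2) translate order dilations into quotients through the $e \leftrightarrow e^\bot$ correspondence of the Remark, using uniqueness of complements from axiom~\ref{enum:subcausal-cancel} to see that ranging over all $e$ on one side is the same as ranging over all $e$ on the other. The main obstacle, and the only place requiring genuine care, is step (1): verifying that the coproduct data $(\coproj_A, \coproj_B, \triangleright_A, \triangleright_B)$ of a FinPAC really does satisfy all four defining equations of a disjoint embedding \emph{with the coprojections causal} — the causality is exactly the hypothesis that the FinPAC's coproducts are causal (part of Definition~\ref{def:EfInParForm}), and the orthogonality relations $\triangleright_A \circ \coproj_B = 0 = \triangleright_B \circ \coproj_A$ are the definitional properties of the $\triangleright_i$ maps — so once one spells this out there is nothing deep left. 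Everything else is bookkeeping about the complement bijection.
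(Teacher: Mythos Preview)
Your proposal is correct and follows essentially the same route as the paper: observe that the causal coproducts of a FinPAC supply disjoint embeddings, invoke Proposition~\ref{prop:derive-order-embed} to identify minimal dilations of effects with order dilations, and then use the Remark's dictionary $\mdil{e} = \xi_{e^\bot}$ together with the complement bijection to pass to quotients. The paper's proof is a terse two-line version of exactly this; your additional care in spelling out the disjoint-embedding data and the role of the involution $e \mapsto e^\bot$ is accurate but not strictly needed.
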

\begin{proof}
The coproducts in $\catC$ give it disjoint embeddings. Hence by Proposition~\ref{prop:derive-order-embed} it has minimal dilations for effects iff it has order dilations, and these coincide with quotients by the above remark. 
\end{proof}

\subsection{Examples}  \label{examples:min-dilations}

\begin{exampleslist}
\item 
$\PFun$ has minimal dilations. For $f \colon A \to B$ we define
\[
\MDil{f} := \{a \mid f(a) \text{ is defined }\} \subseteq A
\]
and $\mdil{f} \colon A \to B \times \MDil{f}$ by $a \mapsto (f(a),a)$ for all $a \in A$.

Indeed any dilation $g$ of $f$ via some object $C$ factors over $\mdil{f}$ by the unique $h \colon \MDil{f} \to C$ with $g(a) = (b,h(a))$ for some $b \in B$.

In particular for each effect $e \colon A \to I$ defined on $E \subseteq A$ we may set $\Ext{A}{e} = E$ with $\mdil{e} \colon A \to E$ given by $a \mapsto a$ whenever $a \in E$. 

Hence since $\PFun$ is an effectus in partial form this map it has quotients, i.e.~order dilations, as shown in~\cite{QC2015ChoJaWW}.

\item 
$\KlDT$ and $\KlSD$ have minimal dilations. For each $f \colon A \to B$ we set 
\[
\MDil{f} := \{(a, b) \mid f(a)(b) > 0 \} \subseteq A \times B
\]
and define $\mdil{f} \colon A \to B \times \MDil{f}$ by 
\[
\mdil{f}(a)(b,(a',b')) 
= 
\begin{cases} 
f(a)(b) & a = a' \text{ and } b = b' \\ 
0 & \text{otherwise}
\end{cases}
\]
Indeed every other dilation $g \colon A \to B \otimes C$ factors over $\mdil{f}$ via the mediating map $h \colon \MDil{f} \to C$ with $h((a,b))(c) = g(a)(b,c)$.

In particular for an effect $e \colon A \to I$ we set $\Ext{A}{e} = \{a \in A \mid e(a) \neq 0\}$ and define $\ext{e} \colon A \to \Ext{A}{e}$ to be the map sending $a \mapsto p \cdot a$ where $p = e(a)$. 

Both theories are ordered by their respective (partial, total) coarse-graining operations, making each $\mdil{f}$ an order dilation by Proposition~\ref{prop:derive-order-embed}, as considered for $\KlSD$ in~\cite{QC2015ChoJaWW}. Similarly $\Class$ has order dilations.

\item 
$\vNSUop$ is an effectus in partial form with quotients, as shown in~\cite{QC2015ChoJaWW}, and so has order dilations for all effects. Here an effect $A \to I$ corresponds to a positive element $e \in A$ and we have
\[
\Ext{A}{e} 
= 
p \cdot A \cdot p
:=
\{ p \cdot a \cdot p \mid a \in A \}
\]
where $p = \lceil e \rceil$ is a projection in $A$, satisfying $p = p^* \cdot p$, and is the least such with $e = p \cdot e = e \cdot p$, often referred to as the \emps{support projection} of $e$. Then $\ext{e}$ is defined as the completely positive map (in the opposite direction) $\Ext{A}{e} \to A$ given by $a \mapsto \sqrt{e} \cdot a \cdot \sqrt{e}$.
The proof that this defines a minimal dilation is non-trivial, see~\cite{westerbaan2016universal},~\cite{QC2015ChoJaWW} and~\cite[Example 82.4]{EffectusIntro}. $\vNop$ has order dilations similarly.
\item \label{example:QuantMinDil}
$\Quant{}$ and $\FCStar$ each have order dilations. Indeed they have them for effects by restricting the previous example to finite dimensions, and by compactness these extend to arbitrary morphisms. The minimal dilation of a completely positive map $f \colon B(\hilbH) \to B(\hilbK)$ is given by a Kraus map, namely its minimal \emps{Stinespring dilation}~\cite{stinespring1955positive,Paschke2016}. $\QuantSU$ has order dilations in the same way.
\item
$\Rel$ has order dilations. For any $R \colon A \to B$ we set 
\[
\MDil{R} := R \subseteq A \times B
\] 
and $\mdil{R} \colon A \to B \times R$ to relate $a$ to $(b,(a,b))$ whenever $(a,b) \in R$. Then any dilation $S \colon A \to B \times C$ of $R$ is equal to $\mdil{R}$ up to the unique relation $h \colon R \to C$ with $(a,b) \sim c$ whenever $g$ relates $a$ with $(b,c)$, making this a minimal dilation. Since $\Rel$ satisfies the requirements of Proposition~\ref{prop:derive-order-embed} these are then order dilations.

$\Rel(\catC)$ lacks addition or zero morphisms for a general regular category $\catC$, unless $\catC$ is coherent (so is not strictly a theory in our sense). However it is still ordered under the usual ordering $R \leq S$ of subobjects and has order dilations in the same sense, defined just as above in $\Rel$. 

\item 
Let $\catC$ be an ordered theory satisfying the `sub-causality' rule $e \leq \discard{}$ for all effects $e$. Then $\catC$ may be extended to a new such theory with order dilations for effects $\Eff(\catC)$: 
\begin{itemize}
\item objects are pairs $(A, e)$ consisting of an object $A$ of $\catC$ and effect $e \colon A \to I$; 
\item morphisms $f \colon (A, e) \to (B, d)$ are those $f \colon A \to B$ in $\catC$ with $d \circ f \leq e$. 
\end{itemize}
We set $(A,e) \otimes (B,d) = (A \otimes B, e \otimes d)$ with unit $(I, \id{I})$ and define $f \otimes g$, $\leq$ and $\discard{}$ all as in $\catC$. Here any effect $d$ on an object $(A, e)$ must have $d \leq e$ and so has a minimal dilation given by $\id{A} \colon (A,e) \to (A,d)$.

There is a forgetful functor $U \colon \Eff(\catC) \to \catC$ which a has full and faithful left adjoint sending each object $A$ to $(A, \discard{A})$. Then $\catC$ has order dilations iff this functor in turn has a left adjoint. An alternative universal characterisation of quotients in effectuses was first given in~\cite{QC2015ChoJaWW}.
\end{exampleslist}

\section{Kernels}

 Our next principle has appeared explicitly in categorical studies of quantum and classical physics~\cite{heunen2010quantum,EffectusIntro}, and implicitly in the reconstruction~\cite{PhysRevA.84.012311InfoDerivQT}. 
Motivation comes from the fact that, for any effect $e$ in either theory, the collection of states $\rho$ for which $e$ never occurs, with $e \circ \rho = 0$, forms a new system. A standard categorical notion extends this idea to arbitrary morphisms~\cite[p.~191]{mac1978categories}. 

\begin{definition} \label{def:kerAndCoker}
\label{not:kernel}
In any category with zero morphisms, a \deff{kernel}\index{kernel} of a morphism $f \colon A \to B$ is a morphism $\ker(f)$ with $f \circ \ker(f) = 0$ such that every morphism $g \colon C \to A$ with $f \circ g = 0$ has $g = \ker(f) \circ h$ for a unique morphism $h$. 
\[
\begin{tikzcd}
\Ker(f) \rar{\ker(f)} & A \rar[shift left=2.5]{f} \rar[shift right = 2.5,swap]{0} & B \\ 
 C \arrow[ur,swap, "g"] \arrow[u, dashed, "\exists ! h"] 
\end{tikzcd}
\]
In other words, $\ker(f)$ is an \emps{equaliser} of the morphisms $f, 0\colon A \rightrightarrows B$. 

Dually, a \deff{cokernel}\index{cokernel}\label{not:cokernel} of $f$ is a \emps{coequaliser} of this pair of morphisms. That is, it is a morphism $\coker(f)$ with $\coker(f) \circ f = 0$ and for which every morphism $g$ with $g \circ f = 0$ has $g = h \circ \coker(f)$ for some unique $h$. 
\[
\begin{tikzcd}[column sep = large]
A \rar[shift left=2.5]{f} \rar[shift right = 2.5,swap]{0} & B  \rar{\coker(f)} &  \Coker(f)
\end{tikzcd}
\]
We say that a theory $(\catC, \discard{})$ \deff{has (co)kernels}\index{theory with (co)kernels} when every morphism has a causal kernel and a cokernel. 
\end{definition}

In categories with discarding we always consider kernels which are causal; however cokernels generally will not be. Any two (causal) kernels $k, k'$ of the same morphism have $k' = k \circ U$ for a unique (causal) isomorphism $U$, and so we may speak of `the' kernel of a morphism, and `the' cokernel dually.

The presence of kernels and cokernels introduces another very useful notion. We define the \indef{image}\index{image}\label{not:image} of a morphism $f \colon A \to B$ by 
\[
\img(f) := \ker(\coker(f))
\]
Then $f$ factors uniquely as
\[
\begin{tikzcd}[row sep = small]
A  \arrow[rr,"f"] \arrow[dr,"e",swap] & & B  \\
& \Img(f) \arrow[ur,"\img(f)",swap]
\end{tikzcd}
\]
where the morphism $e$ is \indef{zero-epic}\index{zero-epic}, meaning that $g \circ e = 0 \implies g = 0$. Dually we define the \indef{coimage}\index{coimage}\label{not:coimage} of $f$ by 
\[
\coim(f)
:=
\big(
\begin{tikzcd}[column sep = 5em]
A \rar{\coker(\ker(f))} & \CoIm(f)
\end{tikzcd}
\big)
\]
and have $f = m \circ \coim(f)$ where $m$ is a \indef{zero-monic}\index{zero-mono}, satisfying $m \circ x = 0 \implies x = 0$. Using these we can identify intrinsically when a morphism is a kernel.


\begin{lemma} \label{lem:kernels-image-rule}
In a category with kernels and cokernels, a morphism $k$ is a kernel iff it satisfies $k = \img(k)$.
\end{lemma}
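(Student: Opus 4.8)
The plan is to prove the two implications separately; the ``if'' direction is immediate and the ``only if'' direction carries the content.

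Suppose first that $k = \img(k)$. By definition $\img(k) = \ker(\coker(k))$, so $k$ is a kernel, namely the kernel of $\coker(k)$. Nothing more is needed here.

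Now suppose conversely that $k$ is a kernel, say $k = \ker(f)$ for some $f$. I would first record two factorisations coming from universal properties. Since $f \circ k = 0$ and $\coker(k)$ is the cokernel of $k$, there is a unique $g$ with $f = g \circ \coker(k)$. Since $\coker(k) \circ k = 0$ and $\img(k) = \ker(\coker(k))$, there is a unique $e$ with $k = \img(k) \circ e$; this is precisely the canonical factorisation of $k$ through its image. For the comparison map in the other direction, compute
\[
f \circ \img(k) = g \circ \coker(k) \circ \img(k) = g \circ 0 = 0,
\]
using that $\coker(k) \circ \img(k) = \coker(k) \circ \ker(\coker(k)) = 0$. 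As $k = \ker(f)$, this yields a unique $m$ with $\img(k) = k \circ m$.

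It then remains to see that $e$ and $m$ are mutually inverse. From $k = \img(k) \circ e = k \circ m \circ e$ and the fact that every kernel, being an equaliser of a parallel pair, is monic, we get $m \circ e = \id{}$. From $\img(k) = k \circ m = \img(k) \circ e \circ m$ and the fact that $\img(k)$ is likewise a kernel, hence monic, we get $e \circ m = \id{}$. Thus $e$ is an isomorphism with inverse $m$, and the equality $k = \img(k) \circ e$ identifies $k$ with $\img(k)$ as subobjects of the common codomain, i.e.\ $k = \img(k)$ under the usual identification of (co)kernels and images up to canonical isomorphism. The only real subtlety is precisely this last point: since kernels, cokernels and images are determined only up to unique isomorphism, ``$k = \img(k)$'' must be read up to that canonical iso, and the proof amounts to exhibiting the comparison map $e$ as an isomorphism; the categorical content is just two applications of universal properties together with cancellation of monics.
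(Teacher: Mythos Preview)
Your proof is correct and uses the same key idea as the paper: factor $f = g \circ \coker(k)$ via the universal property of the cokernel, then use this to relate the kernel property of $k$ (with respect to $f$) to that of $\img(k)$ (with respect to $\coker(k)$). The only difference is presentational: the paper verifies directly that $k$ satisfies the universal property of $\ker(\coker(k))$ (any $g$ with $\coker(k)\circ g = 0$ has $f\circ g = 0$, hence factors through $k$), whereas you construct the comparison maps $e$ and $m$ explicitly and show they are mutually inverse using monicity of kernels; both arguments are equivalent and equally valid.
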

\begin{proof}
Let $k = \ker(f)$ for some $f \colon A \to B$. Then since $f \circ k =0$ we have $f = h \circ \coker(k)$ for some morphism $h$. Now let $g \colon A \to C$ satisfy $\coker(k) \circ g = 0$. Then $f \circ g = h \circ \coker(k) \circ g = 0$ and so $g$ factors over $k$ as required.
\end{proof}

It is natural to require kernels to interact well with monoidal structure. In a monoidal category let us say that kernels are \indef{$\otimes$-compatible} \index{$\otimes$-compatible kernels} when they satisfy
\begin{equation} \label{eq:kernels}
\scalebox{0.8}{\input{./figures/kernel-compact-intro1.tikz}} 0 \ \implies \exists ! h  \text{ s.t. } \ 
\scalebox{0.8}{\input{./figures/kernel-compact-intro2.tikz}}
\end{equation}
In the compact setting this is in fact automatic.

\begin{proposition} \label{prop:kernels-in-compact}
Let $\catC$ be a compact category with zero morphisms and kernels. 
\begin{enumerate}
\item \label{enum:gives-cokernels}
$\catC$ has cokernels of all morphisms.

\item  \label{enum:kernels-new-rule}
Kernels are $\otimes$-compatible. 
\item \label{enum:images-under-tensor}
For all morphisms $f, g$ we have $\img(f \otimes g) = \img(f) \otimes \img(g)
$.

\item \label{enum:kernels-compatible}
If $k$ and $l$ are kernels then so is $k \otimes l$. 
\end{enumerate}
\end{proposition}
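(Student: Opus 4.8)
The plan is to prove the four claims in the order listed, since each builds on the previous ones. For part~\ref{enum:gives-cokernels} I would use that in a compact category the transpose $f \mapsto f^{*}$ (bending wires), together with the natural isomorphisms $A^{**} \cong A$, exhibits a contravariant self-equivalence $\catC \simeq \catC^{\op}$; an equivalence transports limits to limits, so since $\catC$ has all kernels, $\catC^{\op}$ does too, i.e.\ $\catC$ has all cokernels, with $\coker(f)$ obtained (up to canonical isomorphism) by transposing $\ker(f^{*})$. For part~\ref{enum:kernels-new-rule} the key point is that for every object $D$ the functor $(-) \otimes D$ is both a left and a right adjoint, in each case to $(-) \otimes D^{*}$, since the cups and caps supply natural bijections $\catC(A \otimes D, B) \cong \catC(A, B \otimes D^{*})$ and $\catC(A \otimes D^{*}, B) \cong \catC(A, B \otimes D)$. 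Being a right adjoint, $(-) \otimes D$ preserves all limits, hence kernels: as $\ker(f)$ is the equaliser of $f, 0 \colon A \rightrightarrows B$ and $0 \otimes \id{D} = 0$, we get $\ker(f) \otimes \id{D} = \ker(f \otimes \id{D})$, and symmetrically $\id{D} \otimes \ker(f) = \ker(\id{D} \otimes f)$; bending the $D$-wire shows this is exactly~\eqref{eq:kernels}. I would also record, for use below, that $(-) \otimes D$ being a left adjoint preserves cokernels, and that (being a right adjoint) it preserves monos.

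Part~\ref{enum:images-under-tensor} is the substantial step, and I would prove it directly rather than via part~\ref{enum:kernels-compatible}. Write $f = \img(f) \circ e_{f}$ and $g = \img(g) \circ e_{g}$ with $e_{f}, e_{g}$ zero-epic; first note $e_{f} \otimes e_{g}$ is again zero-epic, since by naturality of the adjunction bijection $e \otimes \id{D}$ is zero-epic whenever $e$ is, and composites of zero-epics are zero-epic. Put $c := \coker(f \otimes g)$, so $\img(f \otimes g) = \ker(c)$, and compare subobjects of $B \otimes B'$. From $c \circ (f \otimes g) = 0$, the factorisation $f \otimes g = (\img(f) \otimes \img(g)) \circ (e_{f} \otimes e_{g})$, and zero-epicness of $e_{f} \otimes e_{g}$, we get $c \circ (\img(f) \otimes \img(g)) = 0$, so $\img(f) \otimes \img(g)$ factors through $\ker(c)$. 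Conversely $(\coker(f) \otimes \id{B'}) \circ (f \otimes g) = (\coker(f) \circ f) \otimes g = 0$, so $\coker(f) \otimes \id{B'}$ factors through $c$, whence $\ker(c) \subseteq \ker(\coker(f) \otimes \id{B'}) = \img(f) \otimes \id{B'}$ by part~\ref{enum:kernels-new-rule}; symmetrically $\ker(c) \subseteq \id{B} \otimes \img(g)$. Thus $\ker(c)$ lies in the intersection of these two subobjects, and it remains to identify that intersection with $\img(f) \otimes \img(g)$: using that the pullback of $\ker(u)$ along any $h$ is $\ker(u \circ h)$, that $\ker(\phi \circ \psi) = \ker(\psi)$ when $\phi$ is monic, and that $\id{\Coker(f)} \otimes \img(g) = \ker(\id{\Coker(f)} \otimes \coker(g))$ is a monic kernel by part~\ref{enum:kernels-new-rule}, the pullback of $\img(f) \otimes \id{B'} = \ker(\coker(f) \otimes \id{B'})$ along $\id{B} \otimes \img(g)$ reduces to $\ker(\coker(f) \otimes \id{\Img(g)}) = \img(f) \otimes \id{\Img(g)}$ over $B \otimes \Img(g)$, which over $B \otimes B'$ is precisely $\img(f) \otimes \img(g)$. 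Squeezing $\img(f) \otimes \img(g) \subseteq \ker(c) \subseteq \img(f) \otimes \img(g)$ then forces $\img(f \otimes g) = \img(f) \otimes \img(g)$.

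Part~\ref{enum:kernels-compatible} is then immediate: if $k, l$ are kernels then $k = \img(k)$ and $l = \img(l)$ by Lemma~\ref{lem:kernels-image-rule}, so $k \otimes l = \img(k) \otimes \img(l) = \img(k \otimes l)$ by part~\ref{enum:images-under-tensor}, and hence $k \otimes l$ is a kernel by Lemma~\ref{lem:kernels-image-rule} again. The main obstacle I anticipate is precisely the threatened circularity between parts~\ref{enum:images-under-tensor} and~\ref{enum:kernels-compatible} — the naive argument for each invokes the other — which the squeezing argument above is designed to avoid; within that argument the one genuinely delicate computation is the identification of the intersection $(\img(f) \otimes \id{B'}) \cap (\id{B} \otimes \img(g))$ with $\img(f) \otimes \img(g)$, where one must track carefully which object each subobject lives over and apply $\otimes$-compatibility of kernels at exactly the right points.
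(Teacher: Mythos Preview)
Your proof is correct and follows essentially the same strategy as the paper's: self-duality for part~\ref{enum:gives-cokernels}, $\otimes$-compatibility of kernels for part~\ref{enum:kernels-new-rule}, a two-sided squeeze between $\img(f)\otimes\img(g)$ and $\img(f\otimes g)$ for part~\ref{enum:images-under-tensor}, and Lemma~\ref{lem:kernels-image-rule} for part~\ref{enum:kernels-compatible}. The only differences are packaging: you phrase part~\ref{enum:kernels-new-rule} via the adjunction $(-)\otimes D \dashv (-)\otimes D^{*}$ where the paper bends wires explicitly, and in part~\ref{enum:images-under-tensor} you compute the intersection $(\img(f)\otimes\id{})\cap(\id{}\otimes\img(g))$ as a pullback, whereas the paper factors $\img(f\otimes g)$ through $\img(f)\otimes\id{}$ and then through $\id{}\otimes\img(g)$ sequentially---these are the same computation, and your version spells out a step the paper leaves to the reader.
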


\begin{proof}
\ref{enum:gives-cokernels} Thanks to compactness, $\catC$ is equivalent to its opposite category $\catC^\op$, the category whose arrows $A \to B$ are given by arrows $B \to A$ in $\catC$. Hence $\catC^\op$ also has kernels, and so $\catC$ has cokernels.

\ref{enum:kernels-new-rule} Bending wires we obtain a unique morphism $h$ as below:
\[
\scalebox{0.8}{\input{./figures/kernel-compact-intro1a.tikz}} \ 0 
\iff 
\scalebox{0.8}{\input{./figures/kernel-compact-intro1b.tikz}}
\iff
\scalebox{0.8}{\input{./figures/kernel-compact-intro2.tikz}}
\]

\ref{enum:images-under-tensor}
Note that $\img(f) \otimes \img(g)$ is monic. Indeed, by~\eqref{eq:kernels} each morphism $\img(f) \otimes \id{}$ is monic as $\img(f)$ is a kernel, and similarly so is $\id{} \otimes \img(g)$. 

Write $f = \img(f) \circ d$ and $g = \img(g) \circ e$ where $d$ and $e$ are zero epic. Then by bending wires as in the previous part one sees that $d \otimes e$ is also zero epic. But then since $\coker(f \otimes g) \circ (f \otimes g) = 0$ this gives that 
\[
\coker(f \otimes g) \circ (\img(f) \otimes \img(g)) = 0
\]
Hence we have $\img(f) \otimes \img(g) = \img(f \otimes g) \circ u$ for some $u \colon \Img(f) \otimes \Img(g) \to \Img(f \otimes g)$. Conversely we have implications:
\[
\scalebox{0.8}{\input{./figures/kernel-elem-proof0.tikz}} \ 0
\implies
\scalebox{0.8}{\input{./figures/kernel-elem-proof1.tikz}} \ 0 
\implies
\scalebox{0.8}{\input{./figures/kernel-elem-proof2.tikz}}
\]
for some unique morphism $h$, using~\eqref{eq:kernels} in the last step. Similarly $\img(f \otimes g)$ also factors over $\id{} \otimes \img(g)$ and hence factors over $\img(f) \otimes \img(g)$. By uniqueness it follows that $u$ is an isomorphism, as required. 

\ref{enum:kernels-compatible}
A morphism $k$ is a kernel iff $k = \img(k)$ by Lemma~\ref{lem:kernels-image-rule}. But then by the previous part whenever $k$ and $l$ are kernels so is $k \otimes l$ since 
\[
k \otimes l = \img(k) \otimes \img(l) = \img(k \otimes l)
\]
\end{proof}

\subsection{Dagger kernels}

In dagger categories we expect kernels to interact well with the dagger as follows.



\begin{definition} \label{def:dag-kern}
In a dagger category, a \deff{dagger kernel}\index{kernel!dagger} $k$ is a kernel which is an isometry, i.e.~satisfies $k^{\dagger} \circ k = \id{}$.
\end{definition}

Such a compatible dagger structure makes kernels especially well behaved, and in the context of a dagger category by `kernel' we will always mean `dagger kernel'. Dagger (co)kernels are always unique up to unitary isomorphism. The presence of dagger kernels provides a canonical choice of cokernel 
\[
\coker(f) = \ker(f^{\dagger})^{\dagger}
\]
and a zero object given by $0 = \Ker(\id{A})$ for any object $A$.

Dagger kernels were first studied in detail by Heunen and Jacobs~\cite{heunen2010quantum}, where they were shown to have a surprisingly rich structure, resembling the subspaces of a Hilbert space and studied extensively in quantum logic~\cite{birkhoff1975logic,piron1976foundations}. 

Recall that a lattice $(L, \leq, 0, 1)$ is \indef{orthomodular}\index{orthomodular lattice} when for every element $a$ there is an element $a^{\bot}$, its \indef{(ortho)complement}, satisfying
\[
a \vee a^{\bot} = 1 
\qquad 
a \wedge a^{\bot} = 0
\qquad
a^{\bot\bot} = a
\qquad
a \leq b \implies b^{\bot} \leq a^{\bot}
\]
 as well as the \indef{orthomodular law}:
\[
a \leq b \implies b = a \vee (b \wedge a^\bot)
\]
In~\cite{heunen2010quantum}, it is shown that in any dagger category with dagger kernels, the collection  $\DKer(A)$ \label{not:dker} of (unitary isomorphism classes of) dagger kernels $k \colon K \to A$ on any object $A$ form an orthomodular lattice 
under the ordering
\[
k \leq l \iff 
\begin{tikzcd}
K \arrow[r,"k"] \arrow[d,dashed,swap,"\exists m"] & A \\ 
L \arrow[ur,"l",swap]
\end{tikzcd}
\]
where we define the complement of a kernel $k$ by \label{not:orthocomp} \index{complement}
\begin{equation} \label{eq:orthocomp}
k^{\bot} := \coker(k)^{\dagger}
\end{equation}
For any dagger kernel $l$ we will write $l = k^\bot$ whenever $l$ belongs to the unitary isomorphism class of \eqref{eq:orthocomp}. 

Now, in the dagger compact setting we also obtain the following. In a monoidal category with zero morphisms we say that \indef{zero-cancellativity}\index{zero-cancellative} holds when 
\[
f \otimes g = 0 
\quad
\implies 
\quad 
f = 0 
\
\text{   or   } 
\
g = 0
\]
for all morphisms $f, g$.
\begin{lemma}
 \label{lem:dagzero}
In any dagger compact category with dagger kernels:
\begin{enumerate}[label=\arabic*., ref=\arabic*]
\item \label{enum:dag-law} 
$f^{\dagger} \circ f = 0 \implies f = 0$ for all morphisms $f$.
\item \label{enum:joins}
For all dagger kernels $k, l$ on $A$ and $m$ on $B$ we have 
\begin{align*}
(k \wedge l) \otimes m &= k \otimes m \wedge l \otimes m 
\end{align*}
 in $\DKer(A \otimes B)$.
\item \label{enum:zero-mult}
Suppose that every non-zero object $A$ has a state $\psi \colon I \to A$ which is an isometry. Then zero-cancellativity holds.
\end{enumerate}
\end{lemma}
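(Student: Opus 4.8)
The three claims are essentially independent, so I would handle them one at a time.

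For \ref{enum:dag-law}, the plan is to exploit the image factorisation $f = \img(f) \circ e$, in which $\img(f) = \ker(\coker(f))$ is a dagger kernel and hence an isometry, together with the defining identity $\coker(f) = \ker(f^\dagger)^\dagger$. Suppose $f^\dagger \circ f = 0$. Then $f$ factors through the kernel of $f^\dagger$, say $f = \ker(f^\dagger) \circ u$. Taking daggers in $\coker(f) \circ \img(f) = 0$ gives $\img(f)^\dagger \circ \ker(f^\dagger) = 0$, hence $\img(f)^\dagger \circ f = \img(f)^\dagger \circ \ker(f^\dagger) \circ u = 0$; but $\img(f)^\dagger \circ f = \img(f)^\dagger \circ \img(f) \circ e = e$ since $\img(f)$ is an isometry, so $e = 0$ and therefore $f = \img(f) \circ e = 0$. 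This argument uses only the basic calculus of dagger kernels~\cite{heunen2010quantum}, and in particular no compactness.

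For \ref{enum:joins}, one inequality is free: tensoring a factorisation on the right by $m$ shows $(k \wedge l) \otimes m \leq k \otimes m$ and $(k \wedge l) \otimes m \leq l \otimes m$, so $(k \wedge l) \otimes m \leq (k \otimes m) \wedge (l \otimes m)$, and all of $k \otimes m$, $l \otimes m$, $(k \wedge l) \otimes m$ are dagger kernels by the part of Proposition~\ref{prop:kernels-in-compact} asserting that kernels are closed under $\otimes$. For the reverse inequality I would reduce to the case $m = \id{B}$: writing $k \otimes m = (\id{A} \otimes m) \circ (k \otimes \id{M})$ and likewise for $l$, the dagger kernel $q := (k \otimes m) \wedge (l \otimes m)$ lies below $\id{A} \otimes m$, so under the standard order-isomorphism between the principal down-set of $\id{A} \otimes m$ in $\DKer(A \otimes B)$ and $\DKer(A \otimes M)$ it corresponds to $(k \otimes \id{M}) \wedge (l \otimes \id{M})$. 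It then suffices to prove $(k \otimes \id{M}) \wedge (l \otimes \id{M}) = (k \wedge l) \otimes \id{M}$ in $\DKer(A \otimes M)$. Here I would use compactness to bend the $M$-wire: a morphism into $A \otimes M$ that factors through both $k \otimes \id{M}$ and $l \otimes \id{M}$ yields, after pairing with $\varepsilon_M$, a morphism into $A$ factoring through both $k$ and $l$, hence through the meet $k \wedge l$ (meets of dagger kernels being computed as intersections in $\catC$), and bending back recovers a factorisation through $(k \wedge l) \otimes \id{M}$. Equivalently one may note that $(-) \otimes M$ is a right adjoint, namely to $(-) \otimes M^{*}$ via the compact-closed hom-isomorphism, and hence preserves the pullback computing $k \wedge l$. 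I expect this step --- reconciling the order-theoretic meet in $\DKer$ with pullbacks in $\catC$ and tracking how $(-) \otimes M$ acts on them --- to be the main obstacle; the rest is bookkeeping.

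For \ref{enum:zero-mult}, I would first dispose of the degenerate case: if $\id{I} = 0$ then every morphism is zero and the statement is vacuous, so assume $\id{I} \neq 0$. Now suppose $f \otimes g = 0$ with $f \neq 0$ and $g \neq 0$. By the part of Proposition~\ref{prop:kernels-in-compact} giving $\img(f \otimes g) = \img(f) \otimes \img(g)$, and since the image $\Img(0)$ of a zero morphism is a zero object, we get $\Img(f) \otimes \Img(g) \cong 0$. Since $f \neq 0$ the object $\Img(f)$ cannot be zero --- otherwise $f = \img(f) \circ e$ factors through a zero object --- and likewise $\Img(g) \neq 0$, so by hypothesis there are isometric states $\psi_f \colon I \to \Img(f)$ and $\psi_g \colon I \to \Img(g)$. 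Then $\psi_f \otimes \psi_g \colon I \to \Img(f) \otimes \Img(g)$ has a zero object as codomain and so equals $0$, yet $(\psi_f \otimes \psi_g)^\dagger \circ (\psi_f \otimes \psi_g) = (\psi_f^\dagger \circ \psi_f) \otimes (\psi_g^\dagger \circ \psi_g) = \id{I} \otimes \id{I} = \id{I \otimes I}$, which is nonzero because $\id{I} \neq 0$ and $I \otimes I \cong I$ --- a contradiction. The hypothesis insists on \emph{isometric} states, rather than merely nonzero ones, precisely so that this last computation collapses to $\id{I \otimes I}$; with arbitrary nonzero states one would instead need zero-cancellativity itself in order to know that the tensor of the two nonzero scalars $\psi_f^\dagger \circ \psi_f$ and $\psi_g^\dagger \circ \psi_g$ is nonzero.
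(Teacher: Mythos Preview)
Your arguments for parts~\ref{enum:dag-law} and~\ref{enum:joins} are correct and match the paper's approach. For~\ref{enum:dag-law} both use the image factorisation $f = \img(f) \circ e$ with $\img(f)$ an isometry; the paper phrases the conclusion via zero-epicness of $e$ rather than your detour through $\ker(f^\dagger)$, but the content is the same. For~\ref{enum:joins} the paper likewise bends wires to show that any $f$ with $\img(f) \leq k \otimes m$ and $\img(f) \leq l \otimes m$ factors through $(k \wedge l) \otimes \id{B}$, and then separately argues the resulting factor lies below $\id{} \otimes m$; you handle $m$ first via the down-set order-isomorphism and then bend wires for $k, l$, which is a mild reorganisation of the same idea.

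For part~\ref{enum:zero-mult} your route is genuinely different. The paper bends $g$ into an effect $e$ and then argues a dichotomy: either $\Img(e) \simeq 0$, giving $g = 0$, or an isometric state $\psi$ on $\Img(e)$ makes $\img(e) \circ \psi \colon I \to I$ an isometric and hence unitary scalar, forcing $\coker(e) = 0$ so that the bent equation yields $f = 0$. Your argument instead invokes the formula $\img(f \otimes g) = \img(f) \otimes \img(g)$ from Proposition~\ref{prop:kernels-in-compact} to reduce immediately to showing that $\Img(f) \otimes \Img(g) \simeq 0$ is impossible when both factors are nonzero, which you dispatch with the tensor of two isometric states. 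This is more symmetric in $f$ and $g$ and packages all the compactness input into a lemma already on the shelf; the paper's version is more self-contained but less conceptual. Both ultimately rely on the isometric-state hypothesis to turn ``nonzero object'' into ``admits a state whose self-pairing is $\id{I}$''.
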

\begin{proof}
\ref{enum:dag-law}.~\cite[Lemma 2.4]{vicary2011categorical}
Write $f = \img(f) \circ m$ where $\coker(m) = 0$ and $\img(f)$ is an isometry. Then $f^{\dagger} \circ f = 0 \implies m^{\dagger} \circ m =0$ and so $m^{\dagger} = 0$ and then $m = 0$.

\ref{enum:joins}.
Note that $(k \wedge l) \otimes m$ is indeed an isometry, and is a kernel by Proposition \ref{prop:kernels-in-compact}. Clearly it factors over $ k \otimes m \wedge l \otimes m$. Conversely, let $f \colon C \to A \otimes B$ have $\img(f) \leq k \otimes m$ and $\img(f) \leq l \otimes m$. Then defining 
\[
\scalebox{0.8}{\input{./figures/fbend.tikz}}
\]
we have $\coker(k) \circ g = 0$ and $\coker(l) \circ g = 0$. Hence $\img(g) \leq \img(k) \wedge \img(l) = k \wedge l$, so that $g$ factors over $k \wedge l$. Hence $f$ factors over $(k \wedge l) \otimes \id{B}$, say by some morphism $h$. Again $\img(h) \leq \id{} \otimes m$, and so $h$ factors over $\id{} \otimes m$. Hence $f$ factors over $(k \wedge l) \otimes m$ as required. 

\ref{enum:zero-mult}. Using compactness we have
\[
\scalebox{0.8}{\input{./figures/kernel-bend1.tikz}} \ 0
\iff 
\scalebox{0.8}{\input{./figures/kernel-bend2.tikz}} \ 0
\]
Let the effect $e$ be given by bending $g$'s output to an input as above. If $\Img(e) \simeq 0$ then since $e$ factors over $\Img(e)$ we have $e = 0$. Otherwise, let $\psi$ be an isometric state of $\Img(e)$. Then $\img(e) \circ \psi$ is an isometry $I \to I$ and hence unitary since scalars are commutative. So $\coker(e) = 0$ and then by the above we have $f = 0$. 
\end{proof}

\subsection{Examples}  \label{examples:kernels}

Let us now meet some examples of kernels, cokernels and dagger kernels. Note that in any theory thanks to~\eqref{eq:pos-eq} we have
\[
\ker(f) = \ker(\discard{} \circ f)
\] 
for all morphisms $f$ and so it suffices to consider kernels of effects. 

\begin{exampleslist}
\item 
$\PFun$ has (co)kernels. Each partial function $f \colon A \to B$ has a causal kernel given by the inclusion 
\[
\{a \in A \mid f(a) \text{ is not defined }\}  \hookrightarrow A
\]
Indeed, any partial function $g \colon C \to A$ with $f \circ g = 0$ has that $f(g(c))$ is undefined for all $c \in C$ and so factors over this inclusion. The cokernel of $f$ is given by the partially defined projection to the subset 
\[
\{b \in B \mid \nexists a \in A  \ \  f(a) = b \} \subseteq B
\]
so that $\img(f)$ is the inclusion $\{f(a) \mid a \in A\} \hookrightarrow B$.
\item 
$\KlSD$ and $\KlDT$ have (co)kernels given for $f \colon A \to B$ by
\[
\Ker(f) = \{ a \in A \mid f(a) = 0 \}
\qquad
\Coker(f) = \{b \in B \mid f(a)(b) = 0 \ \forall a \in A  \}
\]
More precisely, the causal map $\ker(f)$ is given by $\ker(f)(a)(a') = 1$ if $a = a'$ and is $0$ otherwise. Similarly $\coker(f)(b)(b') = 1$ if $b = b'$ and is otherwise $0$. Hence $\img(f)$ is given by the inclusion 
\[
\{b \in B \mid \exists a \in A \text{ s.t. } f(a)(b) > 0\} \hookrightarrow B
\] 
The dagger-compact sub-theory $\Class$ has causal dagger kernels in the same way.

\item 
$\Hilb$ has dagger kernels, being the motivating example in~\cite{heunen2010quantum}. Here each map $f \colon \hilbH \to \hilbK$ has a dagger kernel given by the inclusion of the subspace
\[
\{\psi \in \hilbH \mid f(\psi) = 0 \} \hookrightarrow \hilbH
\] 
In particular $\img(f) \hookrightarrow \hilbK$ is then the inclusion of the closure of $f(\hilbH) \subseteq \hilbK$~\cite[Ex.~23]{heunen2010quantum}. $\FHilb$ has dagger kernels in the same way, as do $\HilbP$ and $\FHilbP$.

\item \label{example:QuantKernels}
$\Quant{}$ has causal dagger kernels inherited from $\FHilb$; we prove this fact abstractly in Example~\ref{ex:CPMD-kernels} ahead. As remarked above it suffices to consider kernels for effects. Here any effect on an object $\hilbH$ is of the form $\rho^{\dagger}$ for some unnormalised density matrix $\rho \in B(\hilbH)$. Let us write $\supp(\rho) \subseteq \hilbH$ for the support of $\rho$ as a linear map on $\hilbH$. Then its kernel is given by the orthogonal complement
\[
\Ker(\rho^\dagger)
=
\supp(\rho)^\bot 
 \subseteq \hilbH
\]
and $\ker(\rho^\dagger)$ is the Kraus map $\Dbl{i}$ induced by the inclusion $i \colon \supp(\rho)^\bot \hookrightarrow \hilbH$.

Dually each state $\rho$ has cokernel given by the projection from $\hilbH$ to $\supp(\rho)^\bot$, and we have 
\[
\Img(\rho) = \supp(\rho)
\]
with $\img(\rho)$ given by the (Kraus map of) the inclusion $\supp(\rho) \hookrightarrow \hilbH$. 
\item 
More broadly $\vNop$ and $\vNSUop$ have (co)kernels, as shown in detail in~\cite[77.4]{EffectusIntro}. Here we sketch the result briefly. Let $f \colon A \to B$ be a morphism in $\vNop$, corresponding to a map $g \colon B \to A$ between von Neumann algebras in the opposite direction. Then $\discard{} \circ f$ is given by a unique element $e = g(1) \in A$. 

As before let $\lceil e \rceil \in A$ be the support projection of $e$, and now let $p = \img(g) \in B$ be the least projection in $B$ with $g(p) = g(1)$. We then have 
\[
\Ker(f) = \lceil e \rceil^\bot \cdot A \cdot \lceil e \rceil^\bot
\qquad
\qquad
\Coker(f) =  p^\bot \cdot B \cdot p^\bot 
\]
Here $\ker(f) \colon \Ker(f) \to A$ is given by the completely positive map in the opposite direction sending $a \in A$ to $\lceil e^\bot \rceil \cdot a \cdot \lceil e^\bot \rceil$, and $\coker(f)$ is given by the completely positive inclusion $\lceil p^\bot \rceil B \lceil p^\bot \rceil \hookrightarrow B$. In particular we have 
\[
\Img(f) = p \cdot A \cdot p
\]
Similarly, extending $\Quant{}$, the sub-theory $\FCStar$ has causal dagger kernels in the same way as we show soon in Example~\ref{example:biprod}.

\item
$\Rel$ has causal dagger kernels. For any relation $R \colon A \to B$ its kernel is given by the inclusion
\begin{align*}
\Ker(R) &= \{a \in A \mid \nexists b \in B \ \  R(a,b)\} \hookrightarrow A
\end{align*}
It follows that $\img(R) = \{b \in B \mid \exists a \in A \ R(a,b)\} \hookrightarrow B$. More generally, for any regular category $\catC$ which is \emps{Boolean}, $\Rel(\catC)$ will have dagger kernels~\cite{johnstone2002sketches}.

\item \label{ex:CPMD-kernels}
Let $\catA$ be a dagger compact category with zero morphisms and dagger kernels. Then $\CPM(\catA)$ is a dagger theory with causal dagger kernels.
\begin{proof}
The zero object and morphisms are easily seen to lift from $\catA$ to $\CPM(\catA)$. In order for~\eqref{eq:pos-eq} to hold in $\CPM(\catA)$ we require in $\catA$ that 
\begin{equation} \label{eq:midproof}
\scalebox{0.8}{\input{./figures/CPM-map-smaller-k.tikz}} 
= \ 0 
\quad
\implies 
\quad
\scalebox{0.8}{\input{./figures/CPM-map-smaller-ki.tikz}}
= \ 
0 
\end{equation}
which after bending wires is precisely Lemma~\ref{lem:dagzero}~\ref{enum:dag-law}. We claim that a general morphism as on the left-hand side above has dagger kernel $\Dbl{\ker(f)}$ in $\CPM(\catA)$. Indeed using~\eqref{eq:midproof} along with~\eqref{eq:kernels} we obtain
\begin{align*}
\scalebox{0.8}{\input{./figures/CPM-map-smallerg.tikz}} = \ 0 
\ \ 
&\implies 
\scalebox{0.8}{\input{./figures/CPM-map-smallerg2.tikz}} = \ 0 
\ \ 
\implies
\scalebox{0.8}{\input{./figures/CPM-map-smallerg3i.tikz}}
=
\scalebox{0.8}{\input{./figures/CPM-map-smallerg3ii.tikz}}
\\ 
&\implies
\scalebox{0.8}{\input{./figures/CPM-map-smallerg4i.tikz}}
=
\scalebox{0.8}{\input{./figures/CPM-map-smallerg4ii.tikz}}
\end{align*}
for some morphism $h$, as required.
\end{proof}
In particular as we've seen $\Quant{} \simeq \CPM(\FHilb)$ has dagger kernels. 
\item \label{example:biprod}
If $\catD$ is a (dagger) theory with causal (dagger) kernels and addition satisfying $f + g = 0 \implies f = g = 0$ for all morphisms $f, g$ then so is $\catD^\oplus$. 

Hence $\FCStar \simeq \Quant{}^\oplus$ has causal dagger kernels.
\begin{proof}
By assumption $\discard{} \circ f + \discard{} \circ g = 0 \implies f = g = 0$ in $\catD$, which ensures~\eqref{eq:pos-eq} in $\catD^\oplus$. It suffices to show that each effect $e = (e_i)^n_{i=1}$ on some $A=(A_i)^n_{i=1}$ has a causal (dagger) kernel. But any such effect has kernel given by the block diagonal matrix $(\Ker(e_i))^n_{i=1} \to A$ whose $i$-th diagonal entry is $\ker(e_i)$. 
\end{proof}
\end{exampleslist}

\begin{remark}[Comprehensions]
Like order dilations, kernels appear under another name in effectus theory~\cite{QC2015ChoJaWW,EffectusIntro}. Here an effect $e \colon A \to I$ is said to have a \emps{comprehension} map when there is a morphism $\pi_e$ satisfying
\[
\forall f \text{ s.t. } e^\bot \circ f = 0 \ \ 
\begin{tikzcd}
B \arrow[dr,"f",swap] \rar[dashed, "\exists ! g"] & \{A \mid e\}  \dar["\pi_e"] \\ 
& A
\end{tikzcd}
\]
Hence an effectus in partial form has kernels iff it has comprehensions, via 
\[
\Ker(e) = \{A \mid e^\bot \}
\qquad
\ker(e) = \pi_{e^\bot}
\]
The fact that comprehensions are kernels is noted in~\cite[Lemma~79]{EffectusIntro}. Similarly, cokernels exist in effectuses with quotients and `image predicates'~\cite[Lemma~83]{EffectusIntro}.
\end{remark}

\section{Combining Minimal Dilations and Kernels}

\subsection{Compatible dilations and kernels}

In theories containing both minimal dilations and (co)kernels it is natural to expect these features to be related in some way. Indeed for any effect $e \colon A \to I$ in such a theory we have
\[
\discard{} \circ \ext{e} \circ \ker(e) = e \circ \ker(e)  = 0
\]
so that $\ext{e} \circ \ker(e) = 0$. Hence there is a unique morphism $\med_e$ making the following diagram commute:
\begin{equation} \label{eq:compatible}
\begin{tikzcd}[row sep = large, column sep = large]
A \arrow[r,"\mdil{e}"] \arrow[d,"\coim(e)",swap]& \MDil{e}\\ 
\CoIm(e) \arrow[dashed,ur,swap,"\med_e"] & 
\end{tikzcd}
\end{equation}

In theories of a finite-dimensional nature $\med_e$ can typically be inverted up to a scalar in the following sense. Let us call a morphism $f \colon A \to B$ a \indef{p-isomorphism} when there is a morphism $g \colon B \to A$ and non-zero scalar $p$ such that
\[
g \circ f = p \cdot \id{A} \qquad f \circ g = p \cdot \id{B}
\]
%
When all non-zero scalars are invertible a p-isomorphism is simply an isomorphism.

\begin{definition}
In any theory $\catC$ having (co)kernels and minimal dilations for effects, we say that they are \deff{compatible} \index{compatible kernels and dilations}  when for each effect $e$ the morphism $\med_e$ is a p-isomorphism and \deff{strongly compatible}\index{compatible kernels and dilations!strongly} when it is an isomorphism.
\end{definition}

Strong compatibility generally requires us to work outside of a sub-causal setting, with suitable non-zero scalars being invertible, by the following. We will often call an effect $e$ \indef{internal}\index{internal effect} when it is zero-monic, i.e.~has $\ker(e) = 0$.

\begin{lemma} \label{lem:internal-ext-isom}
In any theory $\catC$ with strongly compatible (co)kernels and minimal dilations:
\begin{enumerate}
\item \label{enum:Internal-If-Isomorphism}
An effect $e$ is internal iff $\mdil{e}$ is an isomorphism;
\item \label{enum:everyscalarr}
Suppose that every non-zero object has a causal state. Then every zero-monic scalar $r$ is an isomorphism.
\end{enumerate}
\end{lemma}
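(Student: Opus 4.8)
The plan is to derive both parts from the single factorisation $\mdil{e}=\med_e\circ\coim(e)$ recorded in~\eqref{eq:compatible}, together with strong compatibility (so $\med_e$ is always an isomorphism) and the fact that $\coim(e)=\coker(\ker(e))$ by definition. The key observation is that the cokernel of a zero morphism is an isomorphism: a cokernel of $\ker(e)$ is a coequaliser of the parallel pair $(\ker(e),0)$, so if $\ker(e)=0$ this is a coequaliser of $(0,0)$, which is just $\id{}$; conversely, if $\coker(\ker(e))$ is an isomorphism then from $\coker(\ker(e))\circ\ker(e)=0$ we get $\ker(e)=0$, and since any kernel is monic this forces $\Ker(e)\simeq 0$, i.e. $e$ is internal. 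Hence $\mdil{e}$ is an isomorphism $\iff\coim(e)$ is an isomorphism $\iff e$ is internal, which is part~\ref{enum:Internal-If-Isomorphism}; for the middle equivalence one uses that $\med_e$ is invertible to transfer invertibility between $\mdil{e}$ and $\coim(e)$.

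For part~\ref{enum:everyscalarr} I would apply part~\ref{enum:Internal-If-Isomorphism} to the scalar $r$ regarded as an effect $I\to I$: since $r$ is zero-monic it is internal, so $\mdil{r}\colon I\to\MDil{r}$ is an isomorphism, and in particular $\MDil{r}\simeq I$. If $I$ (equivalently $\MDil{r}$) is a zero object then $\catC$ is trivial and $r=\id{I}$ is its own inverse, so we may assume $\MDil{r}$ is non-zero. By hypothesis $\MDil{r}$ then carries a causal state $\psi\colon I\to\MDil{r}$; set $s:=\mdil{r}^{-1}\circ\psi$. Using that $\mdil{r}$ dilates $r$, i.e. $\discard{}\circ\mdil{r}=r$, and that $\psi$ is causal, i.e. $\discard{}\circ\psi=\id{I}$, one computes $r\circ s=\discard{}\circ\mdil{r}\circ\mdil{r}^{-1}\circ\psi=\discard{}\circ\psi=\id{I}$. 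Since scalars form a commutative monoid, $s\circ r=r\circ s=\id{I}$, so $s$ is a two-sided inverse and $r$ is an isomorphism.

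There is no deep obstacle; the care points are: (a) reconciling the two readings of ``$\ker(e)=0$'' (a zero morphism versus $\Ker(e)\simeq 0$) via monicity of kernels, together with the analogous statement for cokernels used in the equivalence above; (b) remembering that a minimal dilation of an effect is \emph{not} in general causal --- this is exactly why part~\ref{enum:everyscalarr} genuinely needs the ``every non-zero object has a causal state'' hypothesis rather than following formally from part~\ref{enum:Internal-If-Isomorphism}; and (c) disposing of the trivial-category edge case $I\simeq 0$ separately, since there $\MDil{r}$ carries no causal state to feed into the argument. Neither positivity~\eqref{eq:pos-eq} nor the monoidal or compact structure is needed.
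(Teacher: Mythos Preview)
Your proof is correct and follows essentially the same route as the paper: both parts hinge on the factorisation $\mdil{e}=\med_e\circ\coim(e)$ with $\med_e$ invertible, then for part~\ref{enum:everyscalarr} one composes $\mdil{r}^{-1}$ with a causal state on $\MDil{r}$ to produce an inverse for $r$. Your treatment is in fact slightly more careful than the paper's (handling the $I\simeq 0$ edge case and invoking commutativity of scalars explicitly); the only quibble is your closing remark that the monoidal structure is not needed --- it is what makes ``scalar'' meaningful and guarantees their commutativity, which you use.
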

\begin{proof}
\ref{enum:Internal-If-Isomorphism}
Since $\mdil{e} = \med_e \circ \coim(e)$ for an isomorphism $\med_e$, we have that $\coim(e)$ is an isomorphism iff $\mdil{e}$ is. But $\coim(e)$ is an isomorphism iff $\ker(e) = 0$.

\ref{enum:everyscalarr}
Since $r$ is zero-monic, we may take $\coim(r) = \id{I}$. Then by the first part $\med_r \colon I \to \MDil{r}$ is an isomorphism, and it dilates $r$. Now let $\sigma \colon I \to \MDil{r}$ be any causal state. Then $r$ is invertible since $
\id{I} 
= 
\discard{} \circ \sigma 
=
\discard{} \circ \med_r \circ {\med_r}^{-1} \circ \sigma 
= 
r \circ {\med_r}^{-1} \circ \sigma$.
\end{proof}

\subsection{The internal isomorphism property}

Compatibility of dilations and kernels can be derived from another principle studied by Alex Wilce~\cite{wilceRoyal} and found to be characteristic of quantum, classical and related theories. We state it in two forms, relevant to either `sub-causal' or `super-causal' theories.



\begin{definition}
A theory $\catC$ satisfies the \deff{internal (p-)isomorphism property}\index{internal isomorphism property} when every internal effect $e \colon A \to I$ has a dilation of the form $f \colon A \to A$ which is a {(p-)isomorphism}.
\end{definition}

In general this map $f \colon A \to A$ is not unique, since an object $A$ may have many causal isomorphisms $A \simeq A$. In a compact category by bending wires this property is equivalent to the similar principle for states $\rho$:
\begin{equation} \label{eq:int-states}
\left(
\scalebox{0.8}{\input{./figures/internal-isom-statesnew1.tikz}} \ 0 \implies
\scalebox{0.8}{\begin{tikzpicture}
	\begin{pgfonlayer}{nodelayer}
		\node [style=none] (0) at (0.75, 0.75) {};
		\node [style=kpointadj] (1) at (0.75, 1) {$e$};
		\node [style=none] (2) at (0.75, -0.5) {};
		\node [style=none] (3) at (1.75, -0) {$=$};
	\end{pgfonlayer}
	\begin{pgfonlayer}{edgelayer}
		\draw [style=none] (2.center) to (0.center);
	\end{pgfonlayer}
\end{tikzpicture}} \ 0
\right)
\implies
\exists \text{(p-)isomorphism }
\scalebox{0.8}{\input{./figures/internal-isom-states1.tikz}}
\text{ s.t. }
\scalebox{0.8}{\input{./figures/internal-isom-states2.tikz}}
\end{equation}

In quantum or classical theory a state $\rho$ of a system $A$ is zero-epic as above precisely when it lies in the interior of the positive cone of states of $A$, and indeed any two such states are related (up to a factor) by a reversible physical process. This fact is discussed in depth by Wilce in~\cite{wilceRoyal} who, drawing on a result due to Koecher~\cite{koecher1957positivitatsbereiche} and Vinberg~\cite{vinberg1960homogeneous}, uses it to reconstruct the Jordan algebra structure of quantum and classical physics. 

\begin{proposition} \label{prop:Int-Isom-Deduce}
Let $\catC$ have (co)kernels. The following are equivalent:
\begin{enumerate}[label=\arabic*., ref=\arabic*]
\item \label{enum:int-isom}
$\catC$ satisfies the internal isomorphism property;
\item \label{enum:compat-eff-ext}
$\catC$ has strongly compatible minimal dilations for effects and moreover every effect $e$ 
has a coimage coming with a causal isomorphism 
\begin{equation} \label{eq:non-canoi}
\Ext{A}{e} \simeq \CoIm(e)
\end{equation}
\end{enumerate}
\end{proposition}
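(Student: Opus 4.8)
The plan is to prove the equivalence of the two conditions by showing each implies the other, working throughout with the diagram~\eqref{eq:compatible} that relates $\coim(e)$, $\mdil{e}$ and the mediating morphism $\med_e$. The key observation tying everything together is that $\mdil{e} = \med_e \circ \coim(e)$, so strong compatibility ($\med_e$ an isomorphism) is exactly what lets us transfer information freely between the minimal dilation $\MDil{e}$ and the coimage $\CoIm(e)$, and in particular gives the causal isomorphism $\Ext{A}{e} \simeq \CoIm(e)$ of~\eqref{eq:non-canoi} (note $\med_e$ is causal by the uniqueness clause~\eqref{eq:min-dilunique}, since both $\mdil{e}$ and $\coim(e)$ compose with $\discard{}$ to give $e$ after precomposing appropriately — or more directly, any morphism factoring the causal-ish data of a minimal dilation in this way is forced causal, as in the proof of the lemma following Definition~\ref{def:min-dilation}).

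First I would prove \ref{enum:compat-eff-ext} $\implies$ \ref{enum:int-isom}. Let $e \colon A \to I$ be an internal effect, so $\ker(e) = 0$ and hence $\coim(e) = \coker(\ker(e))$ is an isomorphism; thus $\CoIm(e) \simeq A$. Composing this with the causal isomorphism~\eqref{eq:non-canoi} gives a causal isomorphism $u \colon \Ext{A}{e} \simeq A$. Now $\ext{e} \colon \Ext{A}{e} \to A$ is a dilation of $e$ in the degenerate sense (it is a morphism into $A$ with $\discard{} \circ \ext{e} = e$, using the minimal-dilation-for-effects version~\eqref{eq:extensions}), and I would set $f := \ext{e} \circ u^{-1} \colon A \to A$. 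Since $\med_e$ is an isomorphism and $\coim(e)$ is an isomorphism, $\mdil{e} = \med_e \circ \coim(e)$ is an isomorphism; combined with strong compatibility I can identify $\ext{e}$ with an isomorphism up to the identifications made, so that $f$ is an isomorphism dilating $e$, establishing the internal isomorphism property. (If one only wants the p-isomorphism version, the same argument with $\med_e$ a p-isomorphism goes through verbatim, replacing "isomorphism" by "p-isomorphism" and using that p-isomorphisms compose.)

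Next, \ref{enum:int-isom} $\implies$ \ref{enum:compat-eff-ext}. This is the harder direction and where I expect the main obstacle. Given an arbitrary effect $e \colon A \to I$, I would first pass to the internal part: the morphism $e$ factors as $e = e' \circ \coim(e)$ where $e' \colon \CoIm(e) \to I$ is zero-monic, i.e.\ internal (this is the defining property of the coimage, together with $\ker(e) = \ker(\coim(e))$ so that $e'$ has trivial kernel). Apply the internal isomorphism property to $e'$ to get an isomorphism $f' \colon \CoIm(e) \to \CoIm(e)$ with $\discard{} \circ f' = e'$. The point is that $f'$, being an isomorphism with $\discard{} \circ f' = e'$, must itself be (a representative of) the minimal dilation $\mdil{e'}$: indeed any dilation of $e'$ factors through $f'$ via $f'^{-1}$ composed with it, and the uniqueness condition~\eqref{eq:min-dilunique} is automatic since $f'$ is iso. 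Hence $\MDil{e'} \simeq \CoIm(e)$ with $\med_{e'}$ an isomorphism. Then I would argue that $\MDil{e}$ and $\MDil{e'}$ coincide — a minimal dilation of $e$ should be the "same" as a minimal dilation of its internal part $e'$, precomposed suitably — so that $\med_e$ is an isomorphism, giving strong compatibility, and $\Ext{A}{e} = \MDil{e} \simeq \MDil{e'} \simeq \CoIm(e)$ causally, giving~\eqref{eq:non-canoi}. The main obstacle is precisely this last identification: establishing cleanly that minimal dilations are insensitive to precomposition by the zero-epic $\coim(e)$-factor, i.e.\ that $\MDil{e} \simeq \MDil{e'}$ with $\mdil{e} = \mdil{e'} \circ \coim(e)$ (up to causal iso). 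I would prove this directly from Definition~\ref{def:min-dilation}: check that $\mdil{e'} \circ \coim(e)$ satisfies the universal property making it a minimal dilation of $e$, using that $\coim(e)$ is zero-epic to push the factorization and uniqueness clauses through, and then invoke uniqueness of minimal dilations (the lemma after Definition~\ref{def:min-dilation}) to conclude. I would also record that in the compact setting this all bends to the state-version~\eqref{eq:int-states}, but the non-monoidal/effect version suffices and is what the statement asks for.
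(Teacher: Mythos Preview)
Your approach is correct and matches the paper's proof: for \ref{enum:int-isom} $\Rightarrow$ \ref{enum:compat-eff-ext} the paper factors $e = d \circ \coim(e)$ with $d$ internal, applies the internal isomorphism property to $d$ to get an isomorphism $g$, and then verifies directly that $g \circ \coim(e)$ is a minimal dilation of $e$ with $\med_e = g$ --- exactly your proposed resolution of the ``main obstacle''; the converse is likewise as you describe. Two cosmetic fixes: the composite dilating $e$ should be $u \circ \ext{e}$ (not $\ext{e} \circ u^{-1}$, which does not typecheck since $\ext{e} \colon A \to \Ext{A}{e}$), and the factorization/uniqueness step uses that $\coim(e)$ is a coequaliser and hence genuinely epic, not merely zero-epic.
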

\begin{proof}
\ref{enum:int-isom} $\implies$ \ref{enum:compat-eff-ext}:
 Let $e \colon A \to I$ be any effect and $c=\coim(e) \colon A \to C$, so that $e = d \circ c$ for a unique zero-monic effect $d$. By assumption $d$ has a dilation $g \colon C \to C$ which is an isomorphism. 

 Then we claim that $f := g \circ c$ is a minimal dilation for $e$. Since $c$ is a coequaliser it is an epimorphism, as is the isomorphism $g$, and hence $f$ is epic also. By construction $f$ indeed dilates $e$. Moreover any other dilation $h \colon A \to B$ has that
 \[
\discard{} \circ h \circ \ker(e) = e \circ \ker(e) = 0
 \]
so that $h \circ \ker(e) = 0$, giving $h = k \circ c$ for some unique $k$. But then 
\[
h 
=
k \circ c
=
k \circ g^{-1} \circ f
\]
and so $h$ factors over $f$. By construction $\med_e = g$ is a p-isomorphism, and $\id{C}$ provides a causal isomorphism $\Ext{A}{e} \simeq C$. 

\ref{enum:compat-eff-ext} $\implies$ \ref{enum:int-isom}:
Let $e \colon A \to I$ be an internal effect. Then we may take $\CoIm(e) = A$ with $\coim(e) = \id{A}$. By strong compatibility $\med_e = \ext{e}$ is an isomorphism. Hence there is a causal isomorphism $k \colon A \to \Ext{A}{e}$. Then $l =k^{-1} \circ \ext{e}$ is an isomorphism with $\discard{} \circ l = e$ as desired. 
\end{proof}

Non-canonical isomorphisms of the form~\eqref{eq:non-canoi} have also been considered in the context of so-called `quotient-comprehension chains' in effectus theory~\cite{QC2015ChoJaWW}.

\begin{examples} \label{examples:InternalIsom-StrongCompat}
Our examples of a `finite-dimensional' character satisfy the internal isomorphism property, providing strongly compatible minimal dilations. 
\begin{exampleslist}
\item 
In $\PFun$ or $\Rel$ an effect $e$ is internal precisely when $e = \discard{}$, and hence the internal isomorphism property holds trivially.
\item 
In $\KlDT$ an effect $e \colon A \to I$ is internal precisely when $e(a) > 0 $ for all $a \in A$. In this case the dilation $f \colon A \to A$ with $f(a)(a')$ equal to $e(a)$ when $a = a'$ and $0$ otherwise is an isomorphism. Hence $\KlDT$ and $\Class$ satisfy the internal isomorphism property, and $\KlSD$ the p-isomorphism property similarly.
\item 
$\Quant{}$ satisfies the internal isomorphism property.
Indeed here any internal effect on an object $\hilbH$ is a $\mathbb{R}^+$-weighted combination
\[
e = \sum^n_{i=1} p_i \cdot \Dbl{\bra{i}}
\]
for some orthonormal basis $\{\ket{i}\}^n_{i=1}$ of $\hilbH$. Then any completely positive isomorphism $f$ on $B(\hilbH)$ with $\Dbl{\bra{i}} \circ f = p_i \cdot \Dbl{\bra{i}}$ for each $i$ is a dilation $e$.

$\QuantSU$ satisfies the internal p-isomorphism property in the same way, and in fact this property lifts to $\FCStar$ also. Indeed this follows from the fact that $\FCStar \simeq \Quant{}^\biprod$, and that any internal effect on a biproduct is simply given by an internal effect on each component. More generally~\eqref{eq:int-states} is studied for finite-dimensional Euclidean Jordan algebras in~\cite{wilceRoyal}.

\item 
In contrast compatibility of minimal dilations and kernels fails in the infinite-dimensional setting $\vNop$. 

\begin{proof}
Let $\hilbH = l^2(\mathbb{N})$. For each $i \in \mathbb{N}$ let $\ket{i} \in \hilbH$ be the sequence with value $1$ at entry $i$ and zero elsewhere. It induces a state $\psi_i$ on the algebra $A = B(\hilbH)$ via $a \mapsto \bra{i} a \ket{i} $. 
Now since the $\ket{i}$ span $\hilbH$, the effect
\[
e =
\sum^{\infty}
_{i=1} \frac{1}{2^i}\ket{i}\bra{i} \in A
\]
is internal. Suppose that $f \colon B \to A$ is a completely positive isomorphism that dilates $e$ as a morphism $A \to B$ in $\vNop$, so that $e=f(1)$. Then since any state $\omega$ of a C*-algebra satisfies $\|\omega\|=\omega(1)$, we have 
\[
\|f^{-1}\|
\geq
\frac{\|\psi_i \circ f \circ f^{-1}\|}{\|\psi_i \circ f\|}
=
\frac{\|\psi_i\|}{\|\psi_i \circ f\|}
=
\frac{\psi_i(1)}{(\psi_i \circ f)(1)}
=
\frac{1}{\psi_i(e)}
=
2^i
\]
making $f^{-1}$ unbounded, a contradiction.
\end{proof}
\item \label{ex:int-isom-in-MatS}
Generalising the case of $\mathbb{R}^+$ in $\Class$, let $S$ be a commutative semi-field satisfying $r + s = 0 \implies r = s = 0$ for all $r, s$. Then $\MatS$ is a theory with the internal isomorphism property. Indeed here an effect $e = (e_i)^n_{i=1} \colon n \to I$ is internal iff $e_i \neq 0$ for each $i$, and it has a dilation $n \to n$ given by the invertible diagonal matrix with entries $e_i$.
\end{exampleslist}
\end{examples}

\section{Perfect Distinguishability and Ideal Compressions}

The categorical principles we have met so far in fact closely relate to notions from the study of generalised probabilistic theories, and which appear as two major principles in the CDP quantum reconstruction~\cite{PhysRevA.84.012311InfoDerivQT}. In order to treat these, we will need to consider theories of a more probabilistic-like nature. 

We say that a theory has \indef{zero-cancellative scalars}\index{zero-cancellative!scalars} when for all scalars $r$ and morphisms $f$ we have $r \cdot f = 0 \implies r = 0$ or $f = 0$. In any ordered theory $(\catC, \discard{}, \leq)$ with such scalars we may define a pre-order on morphisms by 
\[  \label{not:facepreorder}
\scalebox{0.8}{\input{./figures/facepre.tikz}} \text{ for some non-zero $r$}
\]
which we call the \indef{face pre-order}\index{face pre-order} on each homset $\catC(A,B)$. In this section let us call a theory $(\catC, \discard{})$ \indef{suitable} \index{theory!suitable} when it is ordered, has zero-cancellative scalars, and has $e \preceq_F \discard{A}$ for all effects $e \colon A \to I$.

The above relation is often considered in probabilistic theories with (partial or total) addition, where states satisfy $\rho \preceq_F \sigma$ whenever $\sigma$ may be given by mixing $\rho$ with some other state. The face pre-order appears for example repeatedly (implicitly) in~\cite{PhysRevA.84.012311InfoDerivQT}. In fact in many settings it coincides with another naturally defined pre-order.
For any effects $d, e$ on the same object let us write  $d \preceq_K e$ when \label{not:ker-preorder}
\begin{equation} \label{eq:ker-preorder-effects}
\scalebox{0.8}{\input{./figures/eff-preorderiia.tikz}} \ 0 
\ \ 
\implies
\ \ 
\scalebox{0.8}{\input{./figures/eff-preorderiib.tikz}} \ 0 
\end{equation}
for all morphisms $f$.

\begin{lemma} \label{lem:internal-cmixed-preorderscollapse}
Let $\catC$ be a suitable theory with (co)kernels and order dilations.  
The following are equivalent:
\begin{enumerate}[label=\arabic*., ref=\arabic*]
\item \label{enum:compatt}
Minimal dilations and kernels are compatible;
\item \label{enum:coincidee}
For all effects $d, e$ we have $d \preceq_F e$ iff $d \preceq_K e$. 
\end{enumerate}
\end{lemma}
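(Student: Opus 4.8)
The plan is to reduce both implications to the single inclusion $\preceq_K\,\subseteq\,\preceq_F$ on effects, since $\preceq_F\,\subseteq\,\preceq_K$ holds in any suitable theory: if $r\cdot d\leq e$ with $r$ a non-zero scalar, then for every $f$ with $e\circ f=0$ monotonicity of $(-)\circ f$ gives $r\cdot(d\circ f)=(r\cdot d)\circ f\leq e\circ f=0$, so $r\cdot(d\circ f)=0$ by~\eqref{eq:lesszero}, and zero-cancellativity of scalars forces $d\circ f=0$; hence $d\preceq_K e$. So it remains to show that \ref{enum:compatt} is equivalent to the converse inclusion.

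For \ref{enum:compatt}$\implies$\ref{enum:coincidee}, assume compatibility and let $d\preceq_K e$. Applying this with $f=\ker(e)$ yields $d\circ\ker(e)=0$, so $d$ factors through $\coim(e)=\coker(\ker(e))$, say $d=d'\circ\coim(e)$; likewise $e=\bar e\circ\coim(e)$ where $\bar e$, the zero-monic part of $e$, is internal on $C:=\CoIm(e)$. As $\bar e$ is internal we may take $\coim(\bar e)=\id{C}$, so~\eqref{eq:compatible} reads $\mdil{\bar e}=\med_{\bar e}$, and compatibility makes $\mdil{\bar e}\colon C\to\MDil{\bar e}$ a p-isomorphism with inverse-up-to-scalar $g$, say $g\circ\mdil{\bar e}=p\cdot\id{C}$ for non-zero $p$. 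Now $\discard{C}\circ g$ is an effect, so $s\cdot(\discard{C}\circ g)\leq\discard{\MDil{\bar e}}$ for some non-zero $s$ by suitability; composing with $\mdil{\bar e}$ and using $\discard{\MDil{\bar e}}\circ\mdil{\bar e}=\bar e$ gives $sp\cdot\discard{C}\leq\bar e$, so $\discard{C}\preceq_F\bar e$. Since also $d'\preceq_F\discard{C}$ by suitability, transitivity of $\preceq_F$ (again routine from monotonicity and zero-cancellativity) gives $d'\preceq_F\bar e$, and precomposing the witnessing inequality with $\coim(e)$ yields $d\preceq_F e$.

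For \ref{enum:coincidee}$\implies$\ref{enum:compatt}, assume $\preceq_K\,\subseteq\,\preceq_F$ on effects and fix an effect $e$; we must show $\med_e$ is a p-isomorphism. It is automatically epic, since $\mdil{e}=\med_e\circ\coim(e)$ is epic (a minimal dilation of an effect being epic), and automatically zero-monic, being the zero-monic part of $\mdil{e}$ because $\ker(\mdil{e})=\ker(\discard{}\circ\mdil{e})=\ker(e)$ by~\eqref{eq:pos-eq}, so $\coim(\mdil{e})=\coim(e)$. Set $c:=\discard{C}\circ\coim(e)$. By~\eqref{eq:pos-eq} and Lemma~\ref{lem:kernels-image-rule} we have $\ker(c)=\ker(\coim(e))=\ker(\coker(\ker(e)))=\img(\ker(e))=\ker(e)$, so $c\preceq_K e$ and hence $c\preceq_F e$ by hypothesis: $r\cdot c\leq e$ for a non-zero $r$. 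Thus $r\cdot\coim(e)\colon A\to C$ satisfies $\discard{C}\circ(r\cdot\coim(e))\leq e$, so the \emph{order}-dilation property of $\mdil{e}$ supplies $g\colon\MDil{e}\to C$ with $r\cdot\coim(e)=g\circ\mdil{e}=g\circ\med_e\circ\coim(e)$; cancelling the epimorphism $\coim(e)$ gives $g\circ\med_e=r\cdot\id{C}$, and since $\med_e$ is epic this makes $g$ a two-sided inverse of $\med_e$ up to the non-zero scalar $r$. Hence $\med_e$ is a p-isomorphism and compatibility holds. I expect the main obstacle to lie in this converse direction: one cannot argue with $e$ directly but must route through the ``normalised'' effect $c=\discard{}\circ\coim(e)$, which requires the identification $\ker(c)=\ker(e)$ via Lemma~\ref{lem:kernels-image-rule}, and one must then genuinely use that $\mdil{e}$ is an \emph{order} dilation, not merely a minimal one, to turn the inequality $\discard{}\circ(r\cdot\coim(e))\leq e$ into a factorisation through $\mdil{e}$. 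The first direction is comparatively routine, the only subtlety being the reduction to the internal part $\bar e$ and the use of suitability to dominate $d'$ by $\bar e$ through the discarding effect.
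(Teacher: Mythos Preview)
Your proof is correct and follows essentially the same route as the paper: both directions hinge on reducing to the internal part $\bar e$ on $C=\CoIm(e)$, using the p-inverse of $\mdil{\bar e}$ with suitability for $1\Rightarrow 2$, and invoking the order-dilation property plus epicness of $\coim(e)$ and $\med_e$ for $2\Rightarrow 1$. The one minor simplification the paper makes in $2\Rightarrow 1$ is to apply the hypothesis on $C$ to the pair $(\discard{C},\bar e)$---where $\discard{C}\preceq_K\bar e$ is immediate from $\bar e$ being internal---rather than on $A$ to $(c,e)$, which avoids your detour through $\ker(c)=\ker(e)$ via Lemma~\ref{lem:kernels-image-rule}.
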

\begin{proof}
\ref{enum:compatt} $\implies$ \ref{enum:coincidee}:
For any effect $e$ we first show that $\discard{} \circ \coim(e) \preceq_F e$. 
Since $e$ factors over $\coim(e)$ by an internal effect, it suffices to assume that $e$ is internal and show that $\discard{A} \preceq_F e$. 

In this case $\coim(e) = \id{A}$ and so by compatibility $\ext{e}$ is a p-isomorphism. Let $f$ be a morphism and $r$ a non-zero scalar with $f \circ \ext{e} = r \cdot \id{A}$. Then 
\[
r \cdot \discard{A}
=
\discard{} \circ f \circ \ext{e}
\preceq_F
\discard{} \circ \ext{e}
=
e
\]
and so $\discard{A} \preceq_F e$ as desired.

Now for any effects $d, e$ we always have $d \preceq_F e \implies d \preceq_K e$, thanks to suitability and the rule \eqref{eq:lesszero}. Conversely if $d \preceq_K e$ then $d = c \circ \coim(e)$ for some effect $c$. But then
\[
d = c \circ \coim(e) \preceq_F \discard{} \circ \coim(e) \preceq_F e
\]
\ref{enum:coincidee} $\implies$ \ref{enum:compatt}:
Let $e$ be any effect, with $e = d \circ \coim(e)$ where $\ker(d) = 0$. Then $\discard{} \preceq_K d$ and so $\discard{} \preceq_F d$ giving $\discard{} \circ \coim(e) \preceq_F e$. Then there is some non-zero scalar $r$ and morphism $f$ with $r \cdot \coim(e) = f \circ \mdil{e}$. But then since $\coim(e)$ is an epimorphism and $\mdil{e} = \med_e \circ \coim(e)$ we have $r \cdot \id{} = f \circ \med_e$. Finally since $\med_e$ is epic we obtain $\med_e \circ f = r \cdot \id{}$ also, making $\med_e$ a p-isomorphism. 
\end{proof}

Let us now meet these two principles from the CDP quantum reconstruction.

\subsection{Perfect distinguishability} 

Say that a state $\omega$ is \indef{completely mixed}\index{completely mixed state} when every state $\rho$ of the same object has $\rho \preceq_F \omega$. The following is a slight adaptation of~\cite[Axiom 2]{PhysRevA.84.012311InfoDerivQT}.

\begin{definition}
A pair of states $\rho, \sigma \colon I \to A$ are said to be \deff{perfectly distinguishable} when there is a pair of effects $d, e $ on $A$ satisfying
\[
\scalebox{0.8}{\input{./figures/p-d.tikz}}
\qquad \ \ \ 
\scalebox{0.8}{\begin{tikzpicture}
	\begin{pgfonlayer}{nodelayer}
		\node [style=kpointadj] (0) at (11, 0.75) {$e$};
		\node [style=none] (1) at (12.5, 0) {$=$};
		\node [style=kpoint] (2) at (11, -0.75) {$\rho$};
	\end{pgfonlayer}
	\begin{pgfonlayer}{edgelayer}
		\draw [style=none] (2) to (0);
	\end{pgfonlayer}
\end{tikzpicture}
}
\ 0 \ \ 
\scalebox{0.8}{\begin{tikzpicture}
	\begin{pgfonlayer}{nodelayer}
		\node [style=none] (4) at (15, 0) {$=$};
		\node [style=kpoint] (5) at (16.25, -0.75) {$\sigma$};
		\node [style=kpointadj] (6) at (16.25, 0.75) {$d$};
	\end{pgfonlayer}
	\begin{pgfonlayer}{edgelayer}
		\draw [style=none] (5) to (6);
	\end{pgfonlayer}
\end{tikzpicture}
}
\]
A suitable theory $(\catC, \discard{})$ satisfies \deff{perfect distinguishability} \index{perfect distinguishability} when every state which is not completely mixed is perfectly distinguishable from some non-zero state. 
\end{definition}

We view the effects $d, e$ above as a procedure which determines with maximal certainty which of the states $\rho, \sigma$ the system has been prepared in. 

\begin{lemma} \label{lem:PDLem}
Let $\catC$ be a suitable dagger theory with causal dagger kernels and let $\rho$ be any state. Then any states $\sigma, \tau$ with $ \sigma \preceq_F \rho$ and $\tau^\dagger \circ \rho  = 0$ are perfectly distinguishable via 
\[
\scalebox{0.8}{\input{./figures/pd-lemma.tikz}}
\]
\end{lemma}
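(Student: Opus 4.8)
The plan is to unpack the two effects drawn in the figure, which are the discardings of the ``projections'' onto the support of $\rho$ and its orthocomplement. Write $k := \img(\rho) = \ker(\coker(\rho))$, which, since $\catC$ has causal dagger kernels, is an isometric causal dagger kernel with $k^\dagger \circ k = \id{}$, and let $k^\bot := \coker(k)^\dagger$ be its orthocomplement (also a dagger kernel). Put $d := \discard{} \circ k \circ k^\dagger$ and $e := \discard{} \circ k^\bot \circ (k^\bot)^\dagger$. The argument then rests on two factorisations.

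First I would show that $\sigma$ factors through $k$. From $\sigma \preceq_F \rho$ we have $r \cdot \sigma \leq \rho$ for some non-zero scalar $r$; applying the monotone operation $\coker(\rho) \circ (-)$, and using $\coker(\rho) \circ \rho = 0$ together with the rule $f \leq 0 \implies f = 0$, gives $\coker(\rho) \circ (r \cdot \sigma) = 0$, so $\coker(\rho) \circ \sigma = 0$ since $r \neq 0$ and scalars are zero-cancellative. The universal property of $k = \ker(\coker(\rho))$ then yields $\sigma = k \circ \bar\sigma$ for some $\bar\sigma$, whence $k \circ k^\dagger \circ \sigma = \sigma$. Second I would show $k^\dagger \circ \tau = 0$: in the canonical image factorisation $\rho = \img(\rho) \circ \bar\rho = k \circ \bar\rho$ with $\bar\rho$ zero-epic, the hypothesis $\tau^\dagger \circ \rho = 0$ gives $\tau^\dagger \circ k \circ \bar\rho = 0$, and since $\bar\rho$ is zero-epic, $\tau^\dagger \circ k = 0$; taking daggers, $k^\dagger \circ \tau = 0$ (so also $\tau$ factors through $\ker(k^\dagger) = k^\bot$).

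With these in hand the separation conditions follow directly: $d \circ \sigma = \discard{} \circ k \circ k^\dagger \circ \sigma = \discard{} \circ \sigma$, while $d \circ \tau = \discard{} \circ k \circ (k^\dagger \circ \tau) = 0$; dually, using $(k^\bot)^\dagger \circ k = (k^\dagger \circ \coker(k)^\dagger)^\dagger = (\coker(k) \circ k)^\dagger = 0$ one gets $e \circ \sigma = 0$, and the symmetric computation gives $e \circ \tau = \discard{} \circ \tau$. It then remains to check the ``test'' part of the definition of perfect distinguishability, namely $d \ovee e = \discard{A}$, which reduces to the identity $k \circ k^\dagger \ovee k^\bot \circ (k^\bot)^\dagger = \id{A}$ expressing that $k$ and $k^\bot$ are orthocomplementary dagger kernels, composed on the left with $\discard{}$.

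The step I expect to be the main obstacle is precisely this last one: matching the (partial) additive structure of the theory with the orthomodular-lattice structure of $\DKer(A)$, so that a dagger kernel and its orthocomplement have projections summing to the identity. One should verify that this needs no more than a bare dagger theory with causal dagger kernels (invoking the kernel/biproduct interplay of the earlier chapters); alternatively, if the ambient notion of perfect distinguishability requires only the separation identities above and not a complete binary test, this step is unnecessary and the lemma follows immediately from the two factorisations.
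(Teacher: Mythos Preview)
Your proof is correct and essentially identical to the paper's: both factor $\sigma$ through $\img(\rho)$ via the face pre-order and $\tau$ through $\img(\rho)^\bot$ via the zero-epic property of the image factorisation, then check the four separation identities using that dagger kernels are causal isometries. Your concern in the last paragraph is unnecessary: in this chapter the definition of \emph{perfectly distinguishable} only asks for the four identities $d \circ \sigma = \discard{} \circ \sigma$, $e \circ \tau = \discard{} \circ \tau$, $d \circ \tau = 0$, $e \circ \sigma = 0$, with no requirement that $d$ and $e$ sum to $\discard{}$ (indeed, a bare ``theory'' here need not carry any addition), so the lemma follows immediately from the two factorisations and the paper's proof stops there.
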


\begin{proof}
Since $\sigma \preceq_F \rho$ we have 
$\coker(\rho) \circ \sigma \preceq_F \coker(\rho) \circ \rho = 0$.
Hence by suitability $\sigma = \img(\rho) \circ a$ for some state $a$, and by assumption $\tau = \img(\rho)^{\bot} \circ b$ for some $b$. Then
\begin{align*}
d \circ \sigma 
&= 
\discard{} \circ \img(\rho)^{\dagger} \circ \img(\rho) \circ a
=
\discard{} \circ a
=
\discard{} \circ \img(\rho) \circ a
=
\discard{} \circ \sigma
\\
e \circ \sigma 
&= 
\discard{} \circ \coker(\rho) \circ \img(\rho) \circ a
=
0
\end{align*}
and in just the same way $d \circ \tau = 0$ and $e \circ \tau = \discard{} \circ \tau$.
\end{proof}

\subsection{Ideal compressions}

The next principle allows us to identify, for any state $\rho$, the collection of states $\sigma$ satisfying $\sigma \preceq_F \rho$ with a particular system in our theory. 

\begin{definition} \label{def:ideal-compression-schemes}~\cite[Axiom 3]{PhysRevA.84.012311InfoDerivQT}\label{not:idealcomp}
A suitable theory $\catC$ is said to have \deff{ideal compressions}\index{ideal compression} when for every state $\rho \colon I \to A$ there is an object $\idob_\rho$, and a causal morphism $\dec_\rho \colon \idob_\rho \to A$ with a left inverse $\enc_\rho$, i.e.~with 
\[
\enc_\rho \circ \dec_\rho = \id{\idob_\rho}
\]
and such that every $f \colon B \to A$ with $f \circ \sigma \preceq_F \rho$ for all states $\sigma$ factors over $\dec_\rho$:
\[
\begin{tikzcd}
B \arrow[dr,swap,"f"]  \arrow[r,dotted, "\exists ! g"]& \idob_\rho \dar[shift right = 2.5,swap]{\dec_\rho}
\\ & A \arrow[u,"\enc_\rho", swap,shift right = 2.5]
\end{tikzcd}
\]
\end{definition}

The morphisms $\dec_\rho$ and $\enc_\rho$ are called `decoding' and `encoding' maps for $\rho$, respectively; note that $\enc_\rho$ is not unique in general. The original formulation by CDP is given in a slightly different form shown to be equivalent to the above in their context~\cite[Lemma 2]{PhysRevA.84.012311InfoDerivQT}. We can now relate both principles to our earlier categorical features.

\begin{theorem} \label{thm:mainthmonPDIComp}
Let $\catC$ be a suitable dagger theory with causal dagger kernels and order dilations. 
The following are equivalent:
\begin{enumerate}[label=\arabic*., ref=\arabic*]
\item \label{enum:commpat}
Minimal dilations and kernels are compatible;
\item \label{enum:pdd}
Perfect distinguishability holds in $\catC$.
\end{enumerate}
Moreover in this case $\catC$ has ideal compressions.
\end{theorem}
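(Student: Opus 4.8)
The plan is to route both implications through Lemma~\ref{lem:internal-cmixed-preorderscollapse}: since $\catC$ is a suitable theory with cokernels and order dilations, compatibility of minimal dilations and kernels is equivalent to the assertion that $d \preceq_F e$ iff $d \preceq_K e$ for all effects $d,e$; and as $d \preceq_F e \implies d \preceq_K e$ always holds, it remains to show that perfect distinguishability is equivalent to the reverse implication $d \preceq_K e \implies d \preceq_F e$. Throughout I will use freely that $\catC$ is a dagger theory, so that $f \leq g \iff f^\dagger \leq g^\dagger$, the dagger is faithful, $0^\dagger = 0$, $\coker(g) = \ker(g^\dagger)^\dagger$, and every dagger kernel is an isometry and hence monic.

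For the implication \ref{enum:commpat}~$\implies$~\ref{enum:pdd}, let $\rho \colon I \to A$ be a state that is not completely mixed and fix a state $\sigma$ with $\sigma \not\preceq_F \rho$. Daggering gives $\sigma^\dagger \not\preceq_F \rho^\dagger$, so by compatibility $\sigma^\dagger \not\preceq_K \rho^\dagger$, i.e.\ there is $h$ with $\rho^\dagger \circ h = 0$ but $\sigma^\dagger \circ h \neq 0$. Setting $k := \ker(\rho^\dagger)$, the first equation makes $h$ factor through $k$, whence $\sigma^\dagger \circ k \neq 0$ and so $\tau_0 := k^\dagger \circ \sigma$ is a non-zero state of $\Ker(\rho^\dagger)$; then $\tau := k \circ \tau_0$ is a non-zero state of $A$ (as $k$ is monic) with $\tau^\dagger \circ \rho = \tau_0^\dagger \circ (k^\dagger \circ \rho) = 0$, using $\rho^\dagger \circ k = 0$. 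By Lemma~\ref{lem:PDLem} (applied with its state $\sigma$ taken to be $\rho$, so that $\sigma \preceq_F \rho$ is trivial), $\rho$ and the non-zero state $\tau$ are perfectly distinguishable, as required.

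For the implication \ref{enum:pdd}~$\implies$~\ref{enum:commpat}, suppose $d \preceq_K e$ for effects on $A$. From $e \circ \ker(e) = 0$ we get $d \circ \ker(e) = 0$, so both $d$ and $e$ factor through $\coim(e)$; since $\preceq_F$ is stable under precomposition, this reduces the claim to the case where $e$ is internal and $d$ arbitrary, and then, as $d \preceq_F \discard{A}$ by suitability and $\preceq_F$ is transitive, it suffices to show $\discard{A} \preceq_F e$. Here I would first argue that $e^\dagger$ is completely mixed: otherwise perfect distinguishability provides a non-zero state $\tau$ together with an effect, say $d_2$, that vanishes on $e^\dagger$ and agrees with $\discard{}$ on $\tau$; daggering $d_2 \circ e^\dagger = 0$ and using that $e$ is zero-monic forces $d_2^\dagger = 0$, hence $d_2 = 0$, hence $\discard{} \circ \tau = 0$, hence $\tau = 0$ by~\eqref{eq:pos-eq}, a contradiction. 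Consequently the maximally mixed state obeys $\discardflip{A} \preceq_F e^\dagger$, and daggering gives $\discard{A} \preceq_F e$.

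Finally, for the ``moreover'' clause, assuming these equivalent conditions hold I would take $\idob_\rho := \Img(\rho)$, $\dec_\rho := \img(\rho)$ --- which is causal, being one of the assumed causal dagger kernels --- and $\enc_\rho := \img(\rho)^\dagger$, a left inverse since $\img(\rho)$ is an isometry. If $f \colon B \to A$ satisfies $f \circ \sigma \preceq_F \rho$ for all states $\sigma$ of $B$, then taking $\sigma := \discardflip{B}$ and daggering yields $\discard{B} \circ f^\dagger \preceq_F \rho^\dagger$, hence $\discard{B} \circ f^\dagger \preceq_K \rho^\dagger$ by compatibility; applying this to $\ker(\rho^\dagger)$ and using~\eqref{eq:pos-eq} gives $f^\dagger \circ \ker(\rho^\dagger) = 0$, and daggering once more gives $\coker(\rho) \circ f = 0$, so $f$ factors uniquely through $\img(\rho) = \ker(\coker(\rho))$. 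The step I expect to be the main obstacle is the second implication: getting the reduction to an internal effect exactly right and, more delicately, the dagger bookkeeping that converts the state-level notion of a completely mixed state into the effect-level face pre-order on which Lemma~\ref{lem:internal-cmixed-preorderscollapse} is phrased.
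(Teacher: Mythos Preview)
Your proof is correct and follows the same overall structure as the paper's: both directions are mediated by the coincidence of $\preceq_F$ and $\preceq_K$, and the ideal compression is $\img(\rho)$ together with its adjoint. There is one organisational difference worth noting. For \ref{enum:commpat}$\implies$\ref{enum:pdd} you first invoke Lemma~\ref{lem:internal-cmixed-preorderscollapse} to obtain $\preceq_F=\preceq_K$ on effects, dagger this to states, and then appeal to Lemma~\ref{lem:PDLem} to manufacture the distinguishing effects for $\rho$ and $\tau$. The paper instead reformulates perfect distinguishability as ``every internal effect $e$ has $e^\dagger$ completely mixed'' and verifies this directly from the p-isomorphism $\med_e$ supplied by compatibility, bypassing Lemma~\ref{lem:internal-cmixed-preorderscollapse} entirely for this direction. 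Your route is a touch longer but has the virtue of actually using Lemma~\ref{lem:PDLem}, which the paper states just before the theorem yet never cites in the proof. Your \ref{enum:pdd}$\implies$\ref{enum:commpat} (reduce to an internal effect, contradict via the vanishing of $d_2$) is the paper's argument unpacked; the paper compresses it into one line by applying the dagger of perfect distinguishability to the internal effect $\discard{}\circ\med_e$. The dagger bookkeeping you flagged as the likely obstacle is fine: in a suitable dagger theory the order, zero morphisms and scalar action all commute with $\dagger$, so $\preceq_F$ transports along $\dagger$ exactly as you use it.
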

\begin{proof}
\ref{enum:compatt} $\implies$ \ref{enum:pdd}:
Applying the dagger we see that perfect distinguishability is equivalent to requiring that any effect $e \colon A \to I$ with $\ker(e) = 0$ has that $e^\dagger$ is completely mixed. But by suitability the latter holds iff $\discard{} \preceq_F e$. Now if $\ker(e) = 0$ then $\coim(e) = \id{A}$ and so by compatibility $e = \discard{} \circ \med_e$ for the p-isomorphism $\med_e$. Then letting $f \circ \med_e = r \cdot \id{A}$ for some morphism $f$, and non-zero scalar $r$, we have that 
\[
\discard{} \preceq_F r \cdot \discard{} = \discard{} \circ f \circ \med_e \preceq_F \discard{} \circ \med_e = e
\] 
\ref{enum:pdd} $\implies$ \ref{enum:commpat}:
For any effect $e$, by construction $\discard{} \circ \med_e$ is internal and so by the (dagger of) perfect distinguishability we have $\discard{} \preceq_F \discard{} \circ \med_e$, so that $\discard{} \circ \coim(e) \preceq_F e$. Then just as in Lemma~\ref{lem:internal-cmixed-preorderscollapse} this ensures that $\preceq_K$ and $\preceq_F$ coincide, ensuring compatibility.

Now when these hold we claim that any state $\rho$ has ideal compression scheme 
\begin{equation*} 
\idob_\rho := \Img(\rho) \quad \dec_\rho := \img(\rho) \quad \enc_\rho := \img(\rho)^\dagger
\end{equation*}
By Lemma~\ref{lem:internal-cmixed-preorderscollapse} it suffices to verify the definition of an ideal compression replacing $\preceq_F$ by $\preceq_K$. Note that each object has a completely mixed state $\discardflip{A} := \discard{A}^\dagger$.

Firstly, we have $\img(\rho) \circ \discardflip{} \preceq_K \rho$ since $f \circ \rho = 0 \implies f \circ \img(\rho) = 0$. Now suppose that $f \circ \discardflip{} \preceq_K \rho$. Then since $\coker(\rho) \circ \rho = 0$ we have $\coker(\rho) \circ f \circ \discardflip{} = 0$ and so $\coker(\rho) \circ f = 0$. Hence $f$ factors over $\img(\rho)$ as desired. 
\end{proof}

The fact that the ideal compressions arise from kernels, and the behaviour of the (dagger) idempotents $A \to A$ induced by $\img(\rho)$ and $\img(\rho)^\bot$ as picking out those states in the face of, and perfectly distinguishable from $\rho$, respectively, forms a major part of the CDP reconstruction~\cite[Section 11]{PhysRevA.84.012311InfoDerivQT}. Complementary projections of this form associated with effects are also prominent in Alfsen and Shultz's axiomatisation of state spaces of C*-algebras~\cite[Chapters 7,8]{alfsen2012geometry}.

We may see the use of the maps $\img(\rho)$ as a reformulation of ideal compression applicable to infinite dimensions where the conditions of Lemma~\ref{lem:internal-cmixed-preorderscollapse} typically fail. In Section~\ref{sec:Pure-excl} we will meet another related principle to perfect distinguishability.

\begin{remark} \label{remark:IdealCompForEffects}
Inspecting Theorem~\ref{thm:mainthmonPDIComp} we see that, without requiring daggers, any suitable theory with compatible (co)kernels and order dilations satisfies a dual form of ideal compression. That is, each effect $e \colon A \to I$ 
has a universal morphism $\coim(e) \colon A \to F_e$ over which all morphisms with $\discard{} \circ f \preceq_F e$ factor.
\end{remark}

\begin{examples} \label{examples:Ideal-Comp}
$\KlDT$, $\Rel$, $\Quant{}$ and $\FCStar$ are all suitable, and as we've seen satisfy (strong) compatibility, making $\preceq_F$ and $\preceq_K$ coincide, and have ideal compressions given by their (co)image maps as above.

$\PFun$ lacks a dagger or completely mixed states on arbitrary objects, with every state being empty or singleton. Nonetheless here $\preceq_F, \preceq_K$ and $\leq$ all coincide, and perfect distinguishability and ideal compression are satisfied trivially.
\end{examples}

\section{Purification}

The principles we have examined so far are equally true of quantum, classical and more general physical theories. The remainder of this chapter will focus on a principle characteristic of quantum theory itself.

 A major aspect of the quantum world is that every process may be seen to arise, due to ignorance of certain degrees of freedom, from one of maximal knowledge or sharpness. 
In our framework we can characterise such processes as follows. 

\begin{definition} \label{def:tensor-pure}
In any theory we say that a morphism $f \colon A \to B$ is \deff{pure}\index{pure}, or \deff{$\otimes$-pure}\index{pure!$\otimes$-pure|see {pure}}, when either $f=0$ or $f$ satisfies 
\begin{equation} \label{eq:wholedef}
\scalebox{0.8}{\input{./figures/whole-def.tikz}}
\ \ \ 
\text{  for some causal $\rho$}
\end{equation}
\end{definition}

This characterisation of purity was put forward by Giulio Chiribella~\cite{chiribella2014distinguishability}, and we discuss more standard accounts of purity shortly in Section~\ref{sec:alt-purity}. In quantum theory every process arises from such a pure one in the following manner.

\begin{definition} \label{def:purification} 
We say that a theory $(\catC, \discard{})$ \deff{has dilations with respect to} a class of morphisms $\catC_\prepure$ when every morphism has a dilation in $\catC_\prepure$\label{not:prepuresubcat}:
\[
(\forall f)
\ \ 
(\exists g \in \catC_\prepure)
\ \ 
\text{ s.t. }
\quad
\scalebox{0.8}{\input{./figures/purif-def.tikz}}
\]
and that these dilations are \deff{essentially unique} when for every pair of morphisms $f, g \colon A \to B \otimes C$ in $\catC_\prepure$ we have
\begin{equation} \label{eq:EU}
\scalebox{0.8}{\input{./figures/EU-purif.tikz}}
\end{equation}
for some causal isomorphism $U \colon C \to C$ with $U \in \catC_\prepure$. 

When $\catC$ has dilations with respect to the class $\catC_\pure$\label{not:puresubcat} of $\otimes$-pure morphisms we say that $\catC$ satisfies \deff{purification}\index{purification}\index{purification!essentially unique}. We say $\catC$ has \deff{essentially unique purification} when these dilations are essentially unique. In either case we call any $\otimes$-pure dilation of a morphism $f$ a \deff{purification} of $f$.
\end{definition}

Probabilistic theories with a similar form of essentially unique purification are studied by CDP in~\cite{chiribella2010purification}, being shown to share many features of quantum theory, and this forms the central principle of~\cite{PhysRevA.84.012311InfoDerivQT}. Purification is also the basis for numerous constructions in categorical quantum mechanics~\cite{coecke2008axiomatic,CKbook}. In this context we are usually interested 
in the case when $\catC_\pure$ is a monoidal subcategory of $\catC$, being closed under $\circ$, $\otimes$ and containing all identity morphisms, as holds in quantum theory and appears as an extra axiom in~\cite{PhysRevA.84.012311InfoDerivQT}.

\subsection{Reversible dilations}

Several known consequences of essential uniqueness for probabilistic theories can be immediately extended to our basic setting. Firstly, essential uniqueness extends to morphisms with different types, as in~\cite[Lemma 21]{chiribella2010purification}. 

\begin{lemma} \label{lemma:purif_env_connect} 
Let $\catC$ be a theory with essentially unique purification and $\otimes$-pure morphisms closed under $\otimes$. Let $B, C$ be objects each possessing a causal pure state. Then for all pure morphisms $f \colon A \to B \otimes C$ and $g \colon A \to D \otimes C$ we have
\begin{equation} \label{eq:EU-channels}
\scalebox{0.8}{\input{./figures/channel4i.tikz}}
\quad
\text{ where }
\quad
\scalebox{0.8}{\input{./figures/channel3.tikz}}
\end{equation}
for some isomorphism $U$ on $B \otimes C$ and state $\phi$ of $C$ which are causal and $\otimes$-pure.
\end{lemma}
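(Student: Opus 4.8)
The plan is to run the argument of \cite[Lemma~21]{chiribella2010purification} inside our diagrammatic setting, with essential uniqueness \eqref{eq:EU} as the only real engine and the two causal $\otimes$-pure states granted by the hypotheses used to make the types of the two purifications coincide.

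First I would fix causal $\otimes$-pure states of $B$ and $C$ (these exist by assumption) and read the side condition of the statement as saying that $f$ and $g$ are both purifications of one and the same morphism, namely their common marginal onto the system shared by their outputs. The next step is to pad: tensor $f$ and $g$, each with a suitable causal $\otimes$-pure state, so that both become $\otimes$-pure morphisms of a single common type while still restricting, once the pads are discarded, to $f$ and $g$ respectively. Each padded morphism is $\otimes$-pure because $\otimes$-pure morphisms are closed under $\otimes$; note we are not entitled to closure under composition, nor to the purity of identity or swap maps, so every reordering must be realised as a tensor of pieces already known to be $\otimes$-pure rather than by post-composition. Since discarding the pads recovers $f$ and $g$, the two padded morphisms share the relevant marginal, hence are purifications of the same morphism.

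Essential uniqueness \eqref{eq:EU}, in the sense of Definition~\ref{def:purification}, now applies to these two padded morphisms of equal type and yields a causal $\otimes$-pure isomorphism $W$ on the common environment such that the padded $g$ is obtained from the padded $f$ by applying $W$ to that environment. Finally I would discard the padding: discarding the factor added to $f$ but absent from $g$ turns $W$ into an isomorphism $U$ on $B \otimes C$, and the remaining discarded data collapses to a causal $\otimes$-pure state $\phi$ of $C$, leaving exactly the equation \eqref{eq:EU-channels}. The main obstacle is precisely this last bookkeeping step: one must check that the endomorphism of $B \otimes C$ extracted from $W$ after discarding the pads is genuinely invertible, not merely an endomorphism, and that it and $\phi$ remain causal and $\otimes$-pure with only $\otimes$-closure at hand — the key sub-fact being that a marginal of a causal $\otimes$-pure isomorphism taken along a causal $\otimes$-pure state is again a causal $\otimes$-pure isomorphism, which follows from purity together with invertibility. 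The remaining verifications are routine diagram manipulations.
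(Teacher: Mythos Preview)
Your approach is essentially the paper's: pick causal $\otimes$-pure states $\psi$ of $B$ and $\phi$ of $C$, pad $f$ and $g$ with them, invoke essential uniqueness \eqref{eq:EU} on the padded morphisms to obtain a causal $\otimes$-pure isomorphism on the environment, and then discard to reach the conclusion. The paper carries this out in three lines.

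Where you diverge is the final bookkeeping. You describe extracting $U$ on $B\otimes C$ by marginalizing a larger isomorphism $W$, and you justify invertibility of the result via the claim that a marginal of a causal $\otimes$-pure isomorphism taken along a causal $\otimes$-pure state is again a causal $\otimes$-pure isomorphism. That claim is false: already in $\Quant{}$, preparing a pure state on one input of the swap unitary and tracing out the corresponding output yields a constant channel, not an isomorphism. No amount of purity or causality forces a marginal of an isomorphism to be invertible.

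The paper avoids this issue entirely: the isomorphism delivered by essential uniqueness is \emph{kept intact} in the conclusion. Only a single wire is discarded, and on one side of the equation this merely undoes one of the two paddings (using that the padded state there is causal), while on the other side the surviving padded state is precisely the $\phi$ of $C$ appearing in the statement. The isomorphism $U$ is never marginalized; it is the same object produced by \eqref{eq:EU}. So the last step is simpler than you anticipate, and the problematic ``sub-fact'' is not needed.
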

\begin{proof}
Let $\psi, \phi$ be causal $\otimes$-pure states of $B, C$, respectively. Then 
\[
\scalebox{0.8}{\input{./figures/channel1.tikz}}
\quad
\text{ and so }
\quad 
\scalebox{0.8}{\input{./figures/channel2.tikz}}
\]
for some causal $\otimes$-pure isomorphism $U$ on $C \otimes D$. Applying $\discard{C}$ yields the result.
\end{proof}

Purification can be seen to encode the idea of an underlying (pure) physics which is ultimately reversible, in that any causal process arises via ignorance from some larger reversible one. More precisely, following~\cite{chiribella2010purification} let us say that a morphism $f \colon A \to B$ has a \indef{reversible dilation} when it has a dilation of the form 
\[
\scalebox{0.8}{\input{./figures/reverse-dilation2ii.tikz}}
\]
for some causal $\otimes$-pure state $\phi \colon I \to D$ and isomorphism $U \colon A \otimes D \to B \otimes C$. Then as in~\cite[Thm.~15]{chiribella2010purification} we have the following.

\begin{corollary} \label{cor:rev-dilation}
Let $\catC$ be a theory with essentially unique purifications such that every non-zero object has a causal pure state. Then every causal morphism has a reversible dilation.
\end{corollary}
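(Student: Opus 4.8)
The plan is to adapt the proof of~\cite[Thm.~15]{chiribella2010purification}, first reducing to a $\otimes$-pure morphism and then applying Lemma~\ref{lemma:purif_env_connect}; as there, I work in the setting where $\otimes$-pure morphisms are closed under $\otimes$ and $\circ$ and contain the identities. Let $f \colon A \to B$ be causal (the degenerate cases in which some object involved is a zero object being checked directly, so I assume the objects below are non-zero). Since $\catC$ satisfies purification, I first choose a $\otimes$-pure dilation $g \colon A \to B \otimes C$ of $f$, so $(\id{B} \otimes \discard{C}) \circ g = f$; note $g$ is causal, since $\discard{B \otimes C} \circ g = \discard{B} \circ f = \discard{A}$. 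If $g$ itself admits a reversible dilation — a causal $\otimes$-pure state $\phi \colon I \to D$ and an isomorphism $U \colon A \otimes D \to (B \otimes C) \otimes C'$ with $(\id{B \otimes C} \otimes \discard{C'}) \circ U \circ (\id{A} \otimes \phi) = g$ — then composing the two discarding maps gives $(\id{B} \otimes \discard{C \otimes C'}) \circ U \circ (\id{A} \otimes \phi) = f$, a reversible dilation of $f$ with ancilla $C \otimes C'$. Hence it suffices to treat the case where $f$ itself is $\otimes$-pure.

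So suppose $g \colon A \to E$ is causal and $\otimes$-pure, and use the hypothesis to fix causal $\otimes$-pure states $\chi \colon I \to A$ and $\phi \colon I \to E$. Then $g \otimes \chi \colon A \to E \otimes A$ is a $\otimes$-pure dilation of $g$ (discarding the last factor gives $g \otimes (\discard{A} \circ \chi) = g$, as $\chi$ is causal), while $\id{A} \otimes \phi \colon A \to A \otimes E$ is a $\otimes$-pure dilation of $\id{A}$. Feeding these pure morphisms into Lemma~\ref{lemma:purif_env_connect} — taking the common ``environment'' on the matching factor, and absorbing into $\chi$ the auxiliary pure causal isomorphism that lemma supplies on that environment — I expect to obtain a causal $\otimes$-pure isomorphism $U \colon A \otimes E \to E \otimes A$ with
\[
(\id{E} \otimes \discard{A}) \circ U \circ (\id{A} \otimes \phi) = g ,
\]
i.e.\ a reversible dilation of $g$ with $D = E$ and $C' = A$. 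Combined with the reduction above, this yields the required reversible dilation of the original causal $f$.

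The heart of the argument, and the main obstacle, is the extraction of the isomorphism $U$ from essential uniqueness of purification. The delicate point is that essential uniqueness in its bare form~\eqref{eq:EU} compares two $\otimes$-pure dilations of a \emph{single} morphism \emph{with a common codomain}, whereas here one must relate $\otimes$-pure morphisms whose ``kept'' and ``environment'' factors are interchanged, and arrange matters so that the two codomains $A \otimes E$ and $E \otimes A$ are linked by a genuine isomorphism rather than merely a pure channel between environments. This Uhlmann-type strengthening is exactly what is packaged in Lemma~\ref{lemma:purif_env_connect}, whose hypotheses — closure of $\otimes$-pure morphisms under $\otimes$, and the existence of causal pure states on the objects in play — are precisely what make the comparison morphisms $g \otimes \chi$ and $\id{A} \otimes \phi$ available. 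Once $U$ is in hand, the remaining checks (that $U$ is causal and $\otimes$-pure, and the two discarding identities) are routine diagrammatic manipulations of the kind already carried out in the proof of Lemma~\ref{lemma:purif_env_connect}.
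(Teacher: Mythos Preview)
Your argument is correct in substance and follows the paper's route: purify $f$ to a pure causal $g$, then invoke Lemma~\ref{lemma:purif_env_connect} to compare $g$ with $\id{A}$ (both having full marginal $\discard{A}$) and read off a reversible dilation. The paper does this more directly, applying the lemma to $g$ and $\id{A}$ themselves; your preliminary padding by $\chi$ and $\phi$ is unnecessary---the whole point of Lemma~\ref{lemma:purif_env_connect} is that it already handles mismatched codomains by producing such auxiliary states internally---and in fact makes your final extraction of $U$ slightly less transparent than the direct application.
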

\begin{proof}
Let $f \colon A \to B$ be a causal morphism with some purification $g \colon A \to B \otimes C$. Then since $\discard{} \circ g = \discard{A}$, by Lemma~\ref{lemma:purif_env_connect} we have 
\[
\scalebox{0.8}{\input{./figures/rev-EU-proof1.tikz}}
\]
for some causal pure state $\phi$ of $B \otimes C$ and causal isomorphism $U$ on $A \otimes B \otimes C$. But then 
\[
\scalebox{0.8}{\input{./figures/rev-EU-proof2.tikz}}
\]
providing $f$ with a reversible dilation. 
\end{proof}
Whenever $\catC_\pure$ is closed under composition, reversible dilations are indeed $\otimes$-pure and hence satisfy the essential uniqueness properties of~\eqref{eq:EU} and~\eqref{eq:EU-channels}.

\subsection{Alternative notions of purity} \label{sec:alt-purity}

This notion of $\otimes$-purity differs at first sight from the typical concept of purity in probabilistic theories, used for example in~\cite{chiribella2010purification}, based on coarse-graining. In theories with extra structure we may consider purity in this sense, as follows.

\begin{definition}
We call a morphism $f \colon A \to B$ in a theory with 
\begin{itemize}[leftmargin=2em]
\item 
addition \deff{$+$-pure} \index{pure!$+$-pure} when 
 $f = g + h \implies g = r \cdot f$ for some scalar $r$;
\item 
an order \deff{$\leq$-pure} \index{pure!$\leq$-pure} when $g \leq f \implies g = r \cdot f$ for some scalar $r$.
\end{itemize}
\end{definition}
In a theory ordered by addition both of these notions coincide. In fact they typically coincide with $\otimes$-purity, as the following suggests.
\begin{proposition} \label{prop:wholetoatomic}
Let $\catC$ be a theory with addition containing a pair of perfectly distinguishable causal states. Then in $\catC$ any $\otimes$-pure morphism is $+$-pure. 
\end{proposition}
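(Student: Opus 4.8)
The plan is to use the pair of perfectly distinguishable causal states to convert an additive splitting $f = g + h$ into a \emph{dilation} of $f$, and then read the conclusion off from $\otimes$-purity. Fix a system $Q$ carrying perfectly distinguishable causal states $\rho,\sigma\colon I\to Q$, witnessed by an effect $d\colon Q\to I$ with $d\circ\rho = \id{I}$ and $d\circ\sigma = 0$ (here $\discard{Q}\circ\rho = \id{I}$ by causality of $\rho$, so the usual condition ``$d$ detects $\rho$'' becomes $d\circ\rho=\id{I}$). Given $f\colon A\to B$ with $f = g + h$, I would form
\[
\tilde{f}\ :=\ (g\otimes\rho)\ +\ (h\otimes\sigma)\ \colon\ A\longrightarrow B\otimes Q,
\]
where tensoring a morphism with a state is understood via the coherence isomorphism $A\cong A\otimes I$, suppressed as in the graphical calculus. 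Using the discarding axiom, the enrichment in commutative monoids (so $\circ$ distributes over $+$ and $\otimes$ is bifunctorial), and causality of $\rho$ and $\sigma$, one checks that $(\id{B}\otimes\discard{Q})\circ\tilde{f} = g + h = f$, i.e.\ $\tilde{f}$ is a dilation of $f$ with environment $Q$.

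The argument then splits on whether $f = 0$. If $f\neq 0$, then $f$ being $\otimes$-pure means every dilation of it is of product form, so $\tilde{f} = f\otimes\tau$ for some causal state $\tau\colon I\to Q$. Post-composing both descriptions of $\tilde{f}$ with $\id{B}\otimes d$ and using bifunctoriality of $\otimes$ gives, on one side, $(\id{B}\otimes d)\circ\tilde{f} = \bigl(g\otimes(d\circ\rho)\bigr) + \bigl(h\otimes(d\circ\sigma)\bigr) = g + 0 = g$, and on the other, $(\id{B}\otimes d)\circ(f\otimes\tau) = f\otimes(d\circ\tau) = r\cdot f$ with $r := d\circ\tau$ a scalar; hence $g = r\cdot f$ as required. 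If instead $f = 0$, then $\discard{B\otimes Q}\circ\tilde{f} = \discard{B}\circ f = 0$, so $\tilde{f} = 0$ by the positivity rule \eqref{eq:pos-eq}, and therefore $g = (\id{B}\otimes d)\circ\tilde{f} = 0 = 0\cdot f$. Either way $g$ is a scalar multiple of $f$, establishing $+$-purity.

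The only computations involved are the two interchange-law identities above and the bookkeeping of coherence isomorphisms, both routine. The genuine idea is the construction of $\tilde{f}$ together with the observation that a perfectly distinguishing measurement effect $d$ lets one ``project out'' the summand $g$ from $\tilde{f}$; I do not anticipate a serious obstacle, and in particular note that only plain $\otimes$-purity is used — essential uniqueness of purifications is not needed here.
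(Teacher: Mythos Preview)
Your proof is correct and follows essentially the same route as the paper: form the dilation $\tilde{f} = g\otimes\ket{0} + h\otimes\ket{1}$, invoke $\otimes$-purity to write it as $f\otimes\rho$, and then apply the distinguishing effect to extract $g$ as a scalar multiple of $f$. Your explicit treatment of the $f=0$ case via the positivity rule \eqref{eq:pos-eq} is a nice touch that the paper leaves implicit.
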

\begin{proof}
Let $f \colon A \to B$ be $\otimes$-pure and suppose that $f = g + h$ for some $g, h \colon A \to B$. 
Let $C$ be any object with a pair of states $\ket{0}$, $\ket{1}$ which are perfectly distinguishable via some effects $e_0$ and $e_1$. Define
\[
\scalebox{0.8}{\input{./figures/whole_variation0.tikz}}
\]
Then
\[
\scalebox{0.8}{\input{./figures/whole_variation.tikz}}
\quad
\text{and so}
\quad
\scalebox{0.8}{\input{./figures/whole_variation2.tikz}}
\]
for some causal state $\rho$. But then
\[
\scalebox{0.8}{\input{./figures/whole_variation3.tikz}}
\]
and so $g$ is a scalar multiple of $f$ as required. 
\end{proof}

Aside from these, several further categorical definition of purity have appeared in the literature. Our earlier notion of purity coincides with that due to Coecke and Selby~\cite{selby2017leaks} whenever all identity morphisms are $\otimes$-pure, so that:
\begin{equation} \label{eq:noleaks}
\scalebox{0.8}{\input{./figures/noleaks.tikz}}
\end{equation}
This is called having `no leaks' in~\cite{selby2017leaks}. A categorical definition of purity has also been introduced by A.~and B.~Westerbaan in the context of effectus theory~\cite{Paschke2016,westerbaan2018dagger}.

Elsewhere, Cunningham and Heunen have introduced the following notion of purity which arises in a very general setting and is categorically well-behaved~\cite{cunningham2017purity}. A morphism  $f \in \catC_\prepure$ is called \indef{copure}\index{pure!copure} when it satisfies
\begin{align}
\scalebox{0.8}{\input{./figures/copure.tikz}} \label{eq:copure1} \\ 
 \text{ for some $k$ with } \quad   \scalebox{0.8}{\input{./figures/copure2.tikz}} \nonumber
\end{align}

\begin{lemma} \label{lem:copure}
Let $\catC$ be a theory with essentially unique dilations with respect to a monoidal subcategory $\catC_\prepure$, and suppose that every non-zero object has a causal state in $\catC_\prepure$. Then any morphism $f \in \catC_\prepure$ is copure.
\end{lemma}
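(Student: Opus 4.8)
The plan is to read off the copure condition \eqref{eq:copure1}--\eqref{eq:copure2} as an instance of essential uniqueness of dilations, and to manufacture the witnessing morphism $k$ from the causal isomorphism that essential uniqueness provides. First I would dispose of degenerate cases: if $f = 0$, or if any ancilla object occurring in the copure diagram is the zero object, the relevant composites collapse by the zero-morphism rules and the condition holds trivially; so assume everything in sight is non-zero, and hence carries a causal state lying in $\catC_\prepure$ by hypothesis.

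Given $f \colon A \to B$ in $\catC_\prepure$, the copure condition concerns a morphism $g$ which dilates $f$ (the datum displayed in \eqref{eq:copure1}) and asks for $k$ with the property \eqref{eq:copure2}. Both $f$ and $g$ lie in $\catC_\prepure$ --- $f$ by assumption, $g$ because it is the morphism appearing in the copure diagram, and padding ancilla systems by tensoring with the causal states of $\catC_\prepure$ keeps us inside $\catC_\prepure$ since it is a monoidal subcategory (closed under $\circ$, $\otimes$, identities). As $f$ and $g$ share the same marginal, I would apply Lemma~\ref{lemma:purif_env_connect} (the type-varying form of essential uniqueness \eqref{eq:EU}, applicable here because $\catC_\prepure$ is closed under $\otimes$ and the relevant objects have causal pure states) to obtain a causal $\otimes$-pure isomorphism $U$, together with a causal $\otimes$-pure ancilla state $\phi$, intertwining $g$ with $f$ as in \eqref{eq:EU-channels}; in particular $U \in \catC_\prepure$ and $\phi \in \catC_\prepure$.

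I would then set $k$ to be $U$, or $U$ composed with $\phi$ and the discarding maps needed to match the types in \eqref{eq:copure1}. This $k$ lies in $\catC_\prepure$ by the same closure properties, and it is causal by Lemma~\ref{lem:coherence-isoms-causal} together with causality of $U$ and $\phi$, which gives \eqref{eq:copure2}; meanwhile the intertwining equation supplied by Lemma~\ref{lemma:purif_env_connect}, after composing the ancilla leg with $\phi$ or discarding it as appropriate, becomes exactly \eqref{eq:copure1}.

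The main obstacle is purely a matter of bookkeeping: one must set up the two pure morphisms so that their common marginal is $f$ and so that the isomorphism returned by essential uniqueness, after the one remaining partial discard (or state composition), produces exactly the diagram \eqref{eq:copure1} rather than a cosmetic variant; the only non-formal ingredient is the hypothesis that ancilla objects are non-zero, which ensures the causal pure states in $\catC_\prepure$ needed to invoke Lemma~\ref{lemma:purif_env_connect} actually exist.
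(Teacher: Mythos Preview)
Your overall strategy---feed the copure data into essential uniqueness and extract $k$ from the resulting isomorphism---matches the paper's. But the argument as written rests on a misreading of the copure condition. The morphism $g$ in~\eqref{eq:copure1} is \emph{not} a dilation of $f$, nor do $f$ and $g$ share a marginal. Rather, $g$ and $h$ are \emph{arbitrary} morphisms of $\catC$ (not assumed to lie in $\catC_\prepure$) such that $h$ dilates the composite $g \circ f$; the task is to produce $k$ with $h = k \circ f$, where~\eqref{eq:copure2} asserts that $k$ dilates $g$ (not that $k$ is causal, as you suggest). Your claim that $g \in \catC_\prepure$ ``because it is the morphism appearing in the copure diagram'' is therefore unfounded, and tensoring with causal states does not carry an arbitrary morphism into $\catC_\prepure$.

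The paper handles this by first reducing to $h \in \catC_\prepure$ (replacing $h$ by a $\catC_\prepure$-dilation of itself) and separately choosing a dilation $l \in \catC_\prepure$ of $g$. The two morphisms to which essential uniqueness is applied are then $h$ and $l \circ f$, each padded by a causal $\catC_\prepure$-state on the other's ancilla so that the output types agree; these genuinely share the marginal $g \circ f$ and both lie in $\catC_\prepure$, since $\catC_\prepure$ is a monoidal subcategory. The causal isomorphism $U$ from essential uniqueness then exhibits $k$ as the portion of the resulting equation sitting above $f$, and discarding its extra ancilla recovers $g$, which is~\eqref{eq:copure2}. Your sketch is missing both the dilation-to-$\catC_\prepure$ step for $g$ and $h$, and the correct identification of which pair of morphisms in $\catC_\prepure$ actually shares a marginal.
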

\begin{proof}
Let $f \in \catC_\prepure$ and suppose the left hand of~\eqref{eq:copure1} is satisfied. To establish the right-hand side it suffices to consider the case when $h \in \catC_\prepure$.
Let $l \colon B \to D \otimes F$ be a dilation of $g$ with $l \in \catC_\prepure$.
If $E$ or $F$ are zero objects then $h = 0$ and $g=0$ making the result trivial. Otherwise let $\psi$ and $\phi$ be causal states of $E, F$ respectively belonging to $\catC_\prepure$. Then we have
\[
\scalebox{0.8}{\input{./figures/copurearg-new.tikz}}
\]
and so by essential uniqueness there is some causal isomorphism $U$ with  
\[
\scalebox{0.8}{\input{./figures/copurearg-new2.tikz}}
\]
Then the morphism above $f$ on the right-hand side dilates $g$ as required. 
\end{proof}

Morphisms satisfying~\eqref{eq:copure1} are automatically closed under composition and $\otimes$, and contain all isomorphisms, and they will be $\otimes$-pure whenever the `no leaks' condition~\eqref{eq:noleaks} is satisfied. 

Now in fact if we wish to assume a kind of essentially unique purification, for any notion of purity satisfying some basic features, then the notion of $\otimes$-purity is forced upon us, as we now show.
Let us say that a class of morphisms $\catC_\prepure$ is \indef{$\otimes$-complete}\index{$\otimes$-complete} when it contains all zero morphisms and for all morphisms $f$ and causal states $\sigma$ we have 
\[
\scalebox{0.8}{\input{./figures/closed-under-statesi.tikz}} \in \catC_\prepure 
\implies
\scalebox{0.8}{\begin{tikzpicture}
	\begin{pgfonlayer}{nodelayer}
		\node [style=map] (0) at (0, -0) {$f$};
		\node [style=none] (1) at (0, -1) {};
		\node [style=none] (2) at (0, 1) {};
		\node [style=label] (3) at (0, -1.5) {$A$};
		\node [style=label] (4) at (0, 1.5) {$B$};
	\end{pgfonlayer}
	\begin{pgfonlayer}{edgelayer}
		\draw [style=none] (1.center) to (2.center);
	\end{pgfonlayer}
\end{tikzpicture}} \in \catC_\prepure 
\]

\begin{proposition} \label{prop:pure-lem}
Let $\catC$ be a theory with essentially unique dilations with respect to a class of morphisms $\catC_\prepure$ which is closed under $\otimes$ and such that every non-zero object has a causal state in $\catC_\prepure$. Suppose further that $\catC_\prepure$ is $\otimes$-complete. Then a morphism  belongs to $\catC_\prepure$ iff it is $\otimes$-pure.
In particular $\catC$ has purification.
\end{proposition}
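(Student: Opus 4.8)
The plan is to establish the two inclusions $\catC_\prepure \subseteq \catC_\pure$ and $\catC_\pure \subseteq \catC_\prepure$ separately, the former being the substantial one, and then read off purification for free. Throughout I will silently suppress the coherence isomorphisms identifying $A \otimes I$ with $A$, so that expressions like $f \otimes \rho$ make sense as morphisms $A \to B \otimes C$ for $f \colon A \to B$ and $\rho \colon I \to C$.

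For the easy inclusion $\catC_\pure \subseteq \catC_\prepure$: let $f \colon A \to B$ be $\otimes$-pure. If $f = 0$ then $f \in \catC_\prepure$ since $\otimes$-completeness forces $\catC_\prepure$ to contain all zero morphisms, so assume $f \neq 0$. By hypothesis $f$ has a dilation $g \colon A \to B \otimes C$ with $g \in \catC_\prepure$; since $f$ is $\otimes$-pure and $g$ is a dilation of it, $g = f \otimes \rho$ for some causal state $\rho \colon I \to C$. Applying $\otimes$-completeness to $f$ and the causal state $\rho$ converts $g = f \otimes \rho \in \catC_\prepure$ into $f \in \catC_\prepure$, as wanted.

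For the forward inclusion, fix $f \colon A \to B$ in $\catC_\prepure$ with $f \neq 0$ (the case $f = 0$ being immediate from the definition of $\otimes$-purity), and let $g \colon A \to B \otimes C$ be an arbitrary dilation of $f$; I must produce a causal state $\rho \colon I \to C$ with $g = f \otimes \rho$. First I would dispose of degenerate cases: if any tensor factor appearing in the codomain of a dilation of $f$ were a zero object then that codomain would be a zero object (using $\id{} \otimes 0 = 0$), the dilation and hence $f$ would vanish, contradicting $f \neq 0$; so $B$, $C$, and the auxiliary system $D$ introduced next are all non-zero and therefore carry causal states lying in $\catC_\prepure$. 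Since $\catC$ has dilations with respect to $\catC_\prepure$, choose $h \in \catC_\prepure$ a dilation of $g$, say $h \colon A \to B \otimes C \otimes D$; then $h$ is also a dilation of $f$, now with extension system $C \otimes D$. The crux is to produce a second $\catC_\prepure$-dilation of $f$ over the \emph{same} extension system and invoke essential uniqueness: pick causal states $\phi \colon I \to C$ and $\chi \colon I \to D$ in $\catC_\prepure$, so that $f \otimes \phi \otimes \chi \colon A \to B \otimes C \otimes D$ lies in $\catC_\prepure$ (closure under $\otimes$) and is again a dilation of $f$. Essential uniqueness applied to $h$ and $f \otimes \phi \otimes \chi$ yields a causal isomorphism $U$ on $C \otimes D$ with $h = (\id{B} \otimes U) \circ (f \otimes \phi \otimes \chi)$. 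Discarding the $D$ factor recovers $g$ on the left, and functoriality of $\otimes$ then gives $g = f \otimes \rho$ with $\rho := W \circ \phi$, where $W := (\id{C} \otimes \discard{D}) \circ U \circ (\id{C} \otimes \chi) \colon C \to C$; a one‑line check using that $U$ and $\chi$ are causal together with $\discard{C \otimes D} = \discard{C} \otimes \discard{D}$ shows $W$, and hence $\rho$, is causal. This shows $f$ is $\otimes$-pure. Finally, purification is immediate: by hypothesis every morphism has a dilation in $\catC_\prepure = \catC_\pure$, i.e.\ a $\otimes$-pure dilation.

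The main obstacle is the bookkeeping forced by the fact that an arbitrary dilation $g$ of $f$ need not lie in $\catC_\prepure$, so one cannot compare $g$ with $f$ by essential uniqueness directly; the remedy is precisely to re-dilate $g$ to some $h \in \catC_\prepure$, notice that $h$ remains a dilation of $f$, and then fabricate the matched partner $f \otimes \phi \otimes \chi$ over the same extension $C \otimes D$ so that essential uniqueness becomes applicable and can be pushed back down to $g$ by discarding $D$. Secondary care is needed to rule out zero objects before invoking ``non-zero $\Rightarrow$ causal state in $\catC_\prepure$'', and to keep the coherence isomorphisms straight when manipulating tensors with $I$.
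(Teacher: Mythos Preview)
Your proof is correct and follows essentially the same approach as the paper. The only cosmetic difference is that the paper compresses your re-dilation step into a ``dilating $g$ if necessary, we may assume $g \in \catC_\prepure$'', after which a single auxiliary causal state $\psi \colon I \to C$ suffices to form $f \otimes \psi$ for comparison via essential uniqueness; you instead keep the extra system $D$ explicit and use two states $\phi, \chi$, but this amounts to the same argument once you marginalise over $D$.
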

\begin{proof}
Let $f \colon A \to B$ be non-zero and belong to $\catC_{\prepure}$, and suppose that $f$ has a dilation $g \colon A \to B \otimes C$. Dilating $g$ if necessary, we may assume that $g \in \catC_{\prepure}$. Let $\psi$ be any causal state of $C$ belonging to $\catC_\prepure$. 
Then 
\[
\scalebox{0.8}{\input{./figures/whole_arg3.tikz}}
\quad
\text{ and so }
\quad
\scalebox{0.8}{\input{./figures/whole_arg4.tikz}}
\]
for some causal isomorphism $U$, with $U \circ \psi$ then being causal as desired.

Conversely, suppose that $f \colon A \to B$ is $\otimes$-pure and non-zero. Let $g \colon A \to B \otimes C$ be a dilation of $f$ with $g \in \catC_{\prepure}$. Then we have
\[
\scalebox{0.8}{\input{./figures/whole_arg_new1.tikz}}
\]
for some causal state $\sigma$ of $C$. Hence by assumption $f$ belongs to $\catC_\prepure$.
\end{proof}

In particular asking for such essentially unique dilations with respect to any of the other classes of morphisms we've considered is equivalent to that in terms of $\otimes$-purity, as the next result shows.

\begin{lemma} \label{lem:caustensclosedexamples}
In any theory in which they may be defined, the classes of 
\begin{inlinelist}
\item \label{enum:wholecomp}
$\otimes$-pure 
\item \label{enum:leqcomp}
$\leq$-pure
\item \label{enum:pluscomp}
$+$-pure 
\item \label{enum:copure}
copure
\end{inlinelist} 
morphisms are all $\otimes$-complete.
\end{lemma}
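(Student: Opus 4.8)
The plan is to reduce all four cases to one elementary observation about causal states. Fix a morphism $f\colon A\to B$ and a causal state $\sigma\colon I\to C$, and write $\hat f:=(\id{B}\otimes\sigma)\circ f\colon A\to B\otimes C$ (suppressing the unitor $B\cong B\otimes I$). Since $\sigma$ is causal, $\discard{C}\circ\sigma=\id{I}$, so $(\id{B}\otimes\discard{C})\circ\hat f=f$; more generally, the assignment $h\mapsto(\id{B}\otimes\sigma)\circ h$ on morphisms with codomain $B$ is split monic (retracted by $(\id{B}\otimes\discard{C})\circ(-)$), and it commutes with precomposition, with scalar multiplication, with $+$, and is monotone for $\leq$, all because composition respects these. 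The recurring move will therefore be: given whatever ``witness to impurity'' of $f$ the relevant definition offers, push it through $(\id{B}\otimes\sigma)\circ(-)$ to get a witness for $\hat f$; use the assumed purity of $\hat f$; then cancel $\sigma$ by post-composing with $\id{B}\otimes\discard{C}$ to land the conclusion back on $f$. First I would record that each of the four classes contains every zero morphism --- immediate from Definition~\ref{def:tensor-pure} for $\otimes$-purity, from~\eqref{eq:lesszero} for $\leq$-purity, from positivity of addition (that is, $h_1+h_2=0\Rightarrow h_1=0$) for $+$-purity, and by taking $k=0$ in~\eqref{eq:copure1} for copurity.

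For $\otimes$-purity the only slightly substantial step is manufacturing a dilation of $\hat f$ from one of $f$. Assume $\hat f$ is $\otimes$-pure. If $\hat f=0$ then $f=(\id{B}\otimes\discard{C})\circ\hat f=0$ and we are done, so assume $\hat f\neq 0$. Given a dilation $k\colon A\to B\otimes D$ of $f$, put $\tilde k:=(\id{B}\otimes\sigma\otimes\id{D})\circ k$; one checks $(\id{B\otimes C}\otimes\discard{D})\circ\tilde k=(\id{B}\otimes\sigma)\circ(\id{B}\otimes\discard{D})\circ k=(\id{B}\otimes\sigma)\circ f=\hat f$, so $\tilde k$ dilates $\hat f$. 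By $\otimes$-purity of $\hat f$ there is a causal $\tau\colon I\to D$ with $\tilde k=(\id{B\otimes C}\otimes\tau)\circ\hat f=(\id{B}\otimes\sigma\otimes\tau)\circ f$, i.e.\ $(\id{B}\otimes\sigma\otimes\id{D})\circ k=(\id{B}\otimes\sigma\otimes\id{D})\circ(\id{B}\otimes\tau)\circ f$; cancelling $\sigma$ gives $k=(\id{B}\otimes\tau)\circ f$ with $\tau$ causal, so $f$ is $\otimes$-pure.

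The cases of $\leq$-purity and $+$-purity are immediate instances of the recurring move. If $\hat f$ is $\leq$-pure and $h\leq f$, then $(\id{B}\otimes\sigma)\circ h\leq\hat f$ by monotonicity, so $(\id{B}\otimes\sigma)\circ h=r\cdot\hat f=(\id{B}\otimes\sigma)\circ(r\cdot f)$ for some scalar $r$, and cancelling $\sigma$ yields $h=r\cdot f$. If $\hat f$ is $+$-pure and $f=h_1+h_2$, then $\hat f=(\id{B}\otimes\sigma)\circ h_1+(\id{B}\otimes\sigma)\circ h_2$, so $(\id{B}\otimes\sigma)\circ h_1=r\cdot\hat f=(\id{B}\otimes\sigma)\circ(r\cdot f)$, and cancelling $\sigma$ yields $h_1=r\cdot f$.

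Copurity is where I expect the genuine work, since its defining diagram~\eqref{eq:copure1} carries auxiliary morphisms and an environment object, so there is real bookkeeping in deciding onto which wire $\sigma$ should be appended. The strategy is unchanged: from an instance of the hypothesis of~\eqref{eq:copure1} for $f$, append $\sigma$ to the appropriate output wire to obtain an instance of the hypothesis for $\hat f$; invoke copurity of $\hat f$ to get the mediating morphism $k$ (in $\catC_\prepure$); and post-compose with $\id{}\otimes\discard{C}$ to descend the conclusion back to $f$. The two facts to verify are that the hypothesis of~\eqref{eq:copure1} is stable both under appending $\sigma$ and under discarding it, and that the produced $k$ re-enters $\catC_\prepure$ --- both of which hold because, for causal $\sigma$, appending and then discarding is the identity, and discarding slides through the rest of the diagram. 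I would carry this case out diagrammatically rather than with unitor-laden formulas, and it is the step I would check most carefully.
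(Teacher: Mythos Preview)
Your plan is correct and mirrors the paper's proof almost exactly: in each case one pushes the witness-to-impurity for $f$ through $(\id_B\otimes\sigma)\circ(-)$ to obtain one for $\hat f=f\otimes\sigma$, invokes the assumed purity of $\hat f$, and then applies $\id_B\otimes\discard{C}$ to cancel $\sigma$ and descend to $f$. One small correction for the copurity case: the mediating morphism $k$ in~\eqref{eq:copure1} is \emph{not} required to lie in $\catC_\prepure$ (it is only required to dilate $g$), so the worry in your final paragraph about ``$k$ re-entering $\catC_\prepure$'' disappears---the paper simply observes that the $k$ produced for $\hat f$ dilates $\id_C\otimes g$, and the morphism sitting above $f$ in that diagram then dilates $g$.
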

\begin{proof}
Let $f \colon A \to B$ be non-zero and $\sigma \colon I \to C$ any causal state and suppose that $f \otimes \sigma$ belongs to each class in question.

\ref{enum:wholecomp}
If $g \colon A \to B \otimes D$ dilates $f$ then $g \otimes \sigma$ dilates $f \otimes \sigma$ and so for some state $\rho$ 
\[
\scalebox{0.8}{\input{./figures/closed-under-statesarg.tikz}}
\quad
\text{ giving }
\quad
\scalebox{0.8}{\input{./figures/closed-under-statesargi.tikz}}
\]

\ref{enum:leqcomp}
Suppose that $g \leq f$. Then $g \otimes \sigma \leq f \otimes \sigma$ so that $g \otimes \sigma = r \cdot f \otimes \sigma$ for some scalar $r$. Then taking marginals gives $g = r \cdot f$.

\ref{enum:pluscomp}
Suppose that $f = g + h$. Then $f \otimes \sigma = g \otimes \sigma + h \otimes \sigma$, so that $g \otimes \sigma = r \cdot f \otimes \sigma$ for some scalar $r$, again giving $g = r \cdot f$.

\ref{enum:copure} 
Suppose the left hand side of~\eqref{eq:copure1} is satisfied, replacing the label $C$ there by $F$, for some $g \colon B \otimes F \to D$. Then placing the state $\sigma$ to the left of this equation yields that 
\[
\scalebox{0.8}{\input{./figures/copureextra1.tikz}}
\quad
\text{ so that }
\quad
\scalebox{0.8}{\input{./figures/copureextra2.tikz}}
\]
for some $k$ dilating $\id{C} \otimes g$. But then the morphism above $f$ on the right-hand side is a dilation of $g$ as required.
\end{proof}

\begin{remark}
To extend the notion outside settings without zero morphisms, we may instead define a morphism to be $\otimes$-pure whenever it satisfies~\eqref{eq:wholedef} with the state $\rho$ being only required to be \emps{locally causal} in that 
\[
\scalebox{0.8}{\input{./figures/locally_causal.tikz}}
\]
If $C$ has a causal state then any such $\rho$ is in fact causal. However we will not pursue this here. 
\end{remark}

\subsection{Examples} \label{examples:Purif}

\begin{exampleslist}
\item \label{ex:purif-Quant}
$\Quant{}$ has essentially unique purification. It is well known that a completely positive map $B(\hilbH) \to B(\hilbK)$ is $+$-pure here precisely when it is a Kraus map $\Dbl{f}$ for some linear map $f \colon \hilbH \to \hilbK$. Every completely positive map may be dilated to such a map via its Stinespring dilation~\cite{stinespring1955positive,Paschke2016}, and these are essentially unique as discussed in depth in~\cite{chiribella2010purification}. Hence by Proposition~\ref{prop:wholetoatomic} essentially unique purification indeed holds with $\otimes$-purity and $+$-purity coinciding, and in fact they also coincide with copurity~\cite{cunningham2017purity}. Such pure morphisms are closed under composition and form the dagger compact subcategory 
\[
(\Quant{})_\pure \simeq \FHilbP
\]
as remarked in Chapter~\ref{chap:CatsWDiscarding}.

\item \label{examples:purif-in-CPM}
Generalising the previous example, each theory of the form $\CPM(\catA)$ has dilations with respect to the dagger-compact subcategory of all morphisms of the form 
\[
\scalebox{0.8}{\input{./figures/CPDbli.tikz}}
\]
for some morphism $f \colon A \to B$ in $\catA$, or equivalently the image $\Dbl{\catA}$\label{not:doublesubcat} of the functor $\Dbl{(-)} \colon \catA \to \CPM(\catA)$. We meet some sufficient conditions for this to give $\CPM(\catA)$ essentially unique purification in Chapter~\ref{chap:recons}.
\item \label{example:pure-in-MSpek}
$\MSpek$ has essentially unique purification, with a morphism being $\otimes$-pure iff it is $\leq$-pure iff it belongs to $\Spek$.
\begin{proof}
First we show that a morphism is $\leq$-pure iff it belongs to $\Spek$. Firstly, suppose that $f \colon A \to B$ is $\leq$-pure. Then it has some dilation $g \colon A \to B \otimes C$ belonging to $\Spek$. Now from the inductive definition of $\Spek$ it follows that there is some effect $\psi$ for which 
\[
\scalebox{0.8}{\input{./figures/Spek-dil2.tikz}}
\]
is non-zero. But then 
\[
\scalebox{0.8}{\input{./figures/Spek-dil3.tikz}}
\quad
\text{ and so }
\quad 
\scalebox{0.8}{\input{./figures/Spek-dil4.tikz}}
\]
since $f$ is $\leq$-pure, giving $f \in \Spek$.
For the converse, by dagger compactness it suffices to check that each state of $\Spek$ is $\leq$-pure. But by~\cite[Theorem 5.14, 5.29]{coecke2012spekkens} every non-zero state $\rho$ of $\IV^n$ in $\MSpek$ has $|\rho| \geq 2^n$, while those in $\Spek$ have $|\rho| = 2^n$. Hence whenever $\psi \leq \rho$ we must have $\psi = 0$ or $\psi = \rho$. 

Next, let us turn to essential uniqueness. 
For this we use that states in $\MSpek$ can be equivalently represented by their \emps{stabilizer groups}~\cite{pusey2012stabilizer}. It is known that a state of $\IV^n$ belongs to $\Spek$ iff its stabilizer group is composed of the minimum possible number of independent generators $n$ (see~\cite{backens2016complete}, particularly $\S$4.3). Disilvestro and Markham have shown that every state in $\MSpek$ has an essentially unique dilation to a state with this property~\cite[Theorem~2]{disilvestro2017quantum}. Using compactness, the fact that $\Spek$ satisfies the conditions of Proposition~\ref{prop:pure-lem} now makes the class of such states coincide with the class of $\otimes$-pure ones, providing essentially unique purification.
\end{proof}

\item 
Purification in our sense fails in $\Rel$ and $\Class$, and in each theory $+$-purity, $\leq$-purity and $\otimes$-purity all coincide.

In $\Rel$ a morphism is $\otimes$-pure iff it it is a singleton relation $R= \{(a,b)\}$. Similarly in $\Class$ a morphism $f \colon A \to B$ is $\otimes$-pure iff there are unique $a \in A$ and $b \in B$ for which $f(a)(b)$ is non-zero.

In each of these theories copurity is more well behaved, providing them with an alternative notion of purification~\cite{cunningham2017purity}.
\end{exampleslist}

\subsection{Deriving purification}

Purification can in fact to be seen to arise from little more than the categorical principles from earlier in this chapter. Let us say that a theory $\catC$ is \indef{$\otimes$-pure} \index{theory!$\otimes$-pure} when all of its identity morphisms are, as in~\eqref{eq:noleaks}.
 
\begin{lemma} \label{lem:whatswhole} \label{lem:helpful-tensPureKer}
In any theory which is $\otimes$-pure so is any:
\begin{inlinelist} 
\item \label{enum:ext-whole}
minimal dilation 
\item  \label{enum:coker-whole}
cokernel 
\item \label{enum:ker-whole}
kernel $k$ satisfying~\eqref{eq:kernels}. Moreover, for any such $k$ and morphism $g$, if $k \circ g$ is $\otimes$-pure then so is $g$.
\end{inlinelist}
\end{lemma}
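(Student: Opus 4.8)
The plan is to run all three cases through the same device: a putative dilation of the morphism in question will be shown to factor, via some mediating morphism $h$, and $h$ will turn out to be a dilation of an identity arrow; since the theory is $\otimes$-pure, \eqref{eq:noleaks} then forces $h$ to split off a causal state, and substituting back shows the original dilation was trivial. Throughout one uses that zero morphisms are $\otimes$-pure by Definition~\ref{def:tensor-pure}, so in each case I may assume the morphism is nonzero. The only mildly delicate points are (a) locating, in each case, a universal property that lets the dilation factor, together with a uniqueness/monicity statement that then pins down $h$, and (b) the final \emph{moreover} clause, which runs the device slightly backwards.

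For~\ref{enum:ext-whole}, let $\mdil{f} \colon A \to B \otimes \MDil{f}$ be a minimal dilation of a nonzero $f$, and let $g \colon A \to B \otimes \MDil{f} \otimes C$ be any dilation of $\mdil{f}$. Discarding $\MDil{f} \otimes C$ exhibits $g$ as a dilation of $f$, so by the universal property of Definition~\ref{def:min-dilation} we have $g = (\id{B} \otimes h) \circ \mdil{f}$ for some $h \colon \MDil{f} \to \MDil{f} \otimes C$. Discarding $C$ gives $(\id{B} \otimes h') \circ \mdil{f} = \mdil{f}$ where $h' := (\id{\MDil{f}} \otimes \discard{C}) \circ h$, so the uniqueness clause~\eqref{eq:min-dilunique} forces $h' = \id{\MDil{f}}$; that is, $h$ is a dilation of $\id{\MDil{f}}$, hence of the form $\id{\MDil{f}} \otimes \rho$ for a causal state $\rho$ by~\eqref{eq:noleaks}. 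Then $g = (\id{B \otimes \MDil{f}} \otimes \rho) \circ \mdil{f}$, so $\mdil{f}$ is $\otimes$-pure.

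For~\ref{enum:coker-whole}, let $c = \coker(f) \colon B \to \Coker(f)$ and let $g \colon B \to \Coker(f) \otimes C$ be a dilation of a nonzero $c$. Discarding all outputs, $\discard{} \circ g \circ f = \discard{} \circ c \circ f = 0$, so $g \circ f = 0$ by~\eqref{eq:pos-eq}, and the cokernel property gives a unique $h$ with $g = h \circ c$. Discarding $C$ and cancelling the epimorphism $c$ shows $h$ is a dilation of $\id{\Coker(f)}$, hence $h = \id{} \otimes \rho$ for causal $\rho$ by~\eqref{eq:noleaks}, so $g = (\id{} \otimes \rho) \circ c$. Case~\ref{enum:ker-whole} is the dual: with $k = \ker(f')$ satisfying~\eqref{eq:kernels} and $g \colon K \to A \otimes C$ a dilation of a nonzero $k$, discarding all outputs gives $\discard{} \circ (f' \otimes \id{C}) \circ g = \discard{} \circ f' \circ k = 0$, so $(f' \otimes \id{C}) \circ g = 0$ by~\eqref{eq:pos-eq}; since~\eqref{eq:kernels} says precisely that $k \otimes \id{C}$ is again a kernel, namely of $f' \otimes \id{C}$, we get a unique $h$ with $g = (k \otimes \id{C}) \circ h$, and composing with $\id{A} \otimes \discard{C}$ and cancelling the monic $k$ shows $h$ is a dilation of $\id{K}$, so $h = \id{K} \otimes \rho$ and $g = (\id{A} \otimes \rho) \circ k$.

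Finally, for the \emph{moreover} statement, let $k$ be such a kernel and suppose $k \circ g$ is $\otimes$-pure for some $g \colon X \to K$; if $k \circ g = 0$ then $g = 0$ since $k$ is monic, so assume $k \circ g \neq 0$ and take a dilation $h \colon X \to K \otimes C$ of $g$. Then $(k \otimes \id{C}) \circ h$ is a dilation of $k \circ g$, so by $\otimes$-purity of $k \circ g$ it equals $(\id{A} \otimes \rho) \circ k \circ g = (k \otimes \id{C}) \circ (\id{K} \otimes \rho) \circ g$ for some causal $\rho$; as $k \otimes \id{C}$ is a kernel, and hence monic, this gives $h = (\id{K} \otimes \rho) \circ g$, so $g$ is $\otimes$-pure. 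I expect the main obstacle to be bookkeeping — in particular making sure~\eqref{eq:kernels} is invoked with the right object in the role of $C$, both when factoring dilations of $k$ and when deducing that $k \otimes \id{C}$ is monic — rather than anything conceptual; the \emph{moreover} clause is the one genuinely new manipulation, turning a dilation of $g$ into a dilation of $k \circ g$ and then cancelling.
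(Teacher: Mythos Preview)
Your proof is correct and follows essentially the same approach as the paper's: in each case you factor the putative dilation through the relevant universal property, identify the mediating morphism $h$ as a dilation of an identity, and invoke~\eqref{eq:noleaks}. The only cosmetic differences are that you dispose of the zero case upfront rather than inline, and in~\ref{enum:ker-whole} you cancel the monic $k$ where the paper invokes uniqueness of the kernel factorisation---these are equivalent moves.
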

\begin{proof}
\ref{enum:ext-whole}
Let $\mdil{f} \colon A \to B \otimes C$ be the minimal dilation of $f \colon A \to B$, and let $g$ be any dilation of $\mdil{f}$ via some object $D$. Then we have implications
\[
\scalebox{0.8}{\input{./figures/ext-whole.tikz}}
\]
for some causal morphism $h$. But then by the definition of $\mdil{f}$ we have
\[
\scalebox{0.8}{\input{./figures/ext-whole2.tikz}}
\quad 
\text{ and so }
\quad
\scalebox{0.8}{\input{./figures/ext-whole3.tikz}}
\]
for some causal state $\rho$, or $h = 0$. Hence $g$ splits as desired.

\ref{enum:coker-whole}
Let $g$ be a dilation of $c = \coker(f) \colon B \to C$ for some $f \colon A \to B$. Then 
\[
\scalebox{0.8}{\input{./figures/cokernelsarewholei.tikz}} \ 0 
\ \implies \ 
\scalebox{0.8}{\input{./figures/cokernelsarewholeii.tikz}} \ 0 
\ \implies \ 
\scalebox{0.8}{\input{./figures/cokernelsarewholeiii.tikz}}
\]
Then since $c$ is epic just as in the previous part $h$ is zero or a dilation of $\id{C}$ and so $g$ splits as desired.

\ref{enum:ker-whole}
Let $g$ be a dilation of $k = \ker(f) \colon K \to A$ for some $f \colon A \to B$. Then we have implications:
\[
\scalebox{0.8}{\input{./figures/kernelsarewholei.tikz}} \ 0 
\ \implies \ 
\scalebox{0.8}{\input{./figures/kernelsarewholeii.tikz}} \ 0 
\ \implies \ 
\scalebox{0.8}{\input{./figures/kernelsarewholeiii.tikz}}
\]
for some unique morphism $h$, 
since $\ker(f \otimes \id{C}) = \ker(f) \otimes \id{C}$. 
Then by uniqueness $h$ is a dilation of $\id{K}$ and as in the previous parts this yields a splitting for $g$.

For the final statement let $f = k \circ g \colon A \to B$ be $\otimes$-pure and let $h \colon A \to K \otimes C$ be a dilation of $g$. Then $(k \otimes \id{C}) \circ h$ is a dilation of $f$ and so for some causal state $\rho$ we have 
\[
\scalebox{0.8}{\input{./figures/tens-pureker.tikz}}
\]
By assumption $k \otimes \id{C}$ is monic, and so $h$ splits as desired.
\end{proof}

\begin{corollary} \label{cor:purifFromMinDil}
Let $\catC$ be any $\otimes$-pure theory with minimal dilations. Then $\catC$ has purification.
\end{corollary}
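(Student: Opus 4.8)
The plan is to take the minimal dilation itself as the required pure dilation, so that the corollary becomes an immediate consequence of Lemma~\ref{lem:whatswhole}. Concretely, I would fix an arbitrary morphism $f \colon A \to B$ in $\catC$. Since $\catC$ has minimal dilations by hypothesis, $f$ possesses a minimal dilation $\mdil{f} \colon A \to B \otimes \MDil{f}$, which is by definition a dilation of $f$. It then remains only to check that $\mdil{f}$ is $\otimes$-pure; but this is exactly the content of Lemma~\ref{lem:whatswhole}\ref{enum:ext-whole}, applied to the $\otimes$-pure theory $\catC$. Hence every morphism of $\catC$ admits a $\otimes$-pure dilation, i.e.\ $\catC$ has dilations with respect to the class $\catC_\pure$ of $\otimes$-pure morphisms, which is precisely the statement that $\catC$ satisfies purification.

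I do not anticipate any real obstacle here: the entire substance has already been packaged into Lemma~\ref{lem:whatswhole}, whose proof of part~\ref{enum:ext-whole} exploits the universal property of minimal dilations together with the no-leaks rule~\eqref{eq:noleaks}. In particular, since the corollary only asserts purification and not \emph{essentially unique} purification, no uniqueness analysis is needed beyond what Lemma~\ref{lem:whatswhole} already uses, and the uniqueness clause~\eqref{eq:min-dilunique} of minimal dilations plays no further role.
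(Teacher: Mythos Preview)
Your proposal is correct and matches the paper's approach exactly. The paper does not give a formal proof but simply remarks that ``each minimal dilation $\mdil{f}$ forms a purification for $f$,'' which is precisely your argument via Lemma~\ref{lem:whatswhole}\ref{enum:ext-whole}.
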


In such a theory each minimal dilation $\mdil{f}$ forms a purification for $f$. Such `minimal purifications' are considered for states in~\cite[Theorem 4]{PhysRevA.84.012311InfoDerivQT}. Hence we may then view the presence of minimal dilations as a generalisation of (minimal) purifications which holds classically. Another extension of purification to this setting is found in~\cite{selby2017leaks,selby2018reconstructing}. 

This result also gives another means of deriving a form of purification. Say that a theory has \indef{effect purification} when every effect has a $\otimes$-pure dilation.

\begin{corollary} \label{cor:Eff-pUrif}
Let $\catC$ be any $\otimes$-pure theory with (co)kernels and satisfying the internal isomorphism property. Then $\catC$ has effect purification.
\end{corollary}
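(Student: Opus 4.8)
The plan is to assemble two results established earlier in this chapter. First I would invoke Proposition~\ref{prop:Int-Isom-Deduce}: since $\catC$ has (co)kernels and satisfies the internal isomorphism property, it has strongly compatible minimal dilations for effects. In particular, every effect $e \colon A \to I$ possesses a minimal dilation $\mdil{e} \colon A \to \MDil{e}$ with $\discard{} \circ \mdil{e} = e$, so that $\mdil{e}$ is in particular a dilation of $e$ in the sense of Definition~\ref{def:min-dilation}.

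Next I would apply Lemma~\ref{lem:whatswhole}\ref{enum:ext-whole}, which asserts that in any $\otimes$-pure theory every minimal dilation is $\otimes$-pure. The lemma there is phrased for minimal dilations of arbitrary morphisms, but its proof only uses the universal property of $\mdil{f}$ recorded in Definition~\ref{def:min-dilation}, which for an effect $f = e \colon A \to I$ is exactly the universal property of $\mdil{e}$; hence the conclusion holds verbatim for the minimal dilation of an effect. Thus $\mdil{e}$ is a $\otimes$-pure dilation of $e$.

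Since every effect $e$ therefore admits a $\otimes$-pure dilation, namely $\mdil{e}$, the theory $\catC$ has effect purification by definition, completing the proof. This argument is simply the effect-level counterpart of Corollary~\ref{cor:purifFromMinDil}, and correspondingly presents no real obstacle: the only point requiring a moment's care is that the purity statement for minimal dilations applies in the setting where minimal dilations are available only for effects, which is immediate from the form of its proof.
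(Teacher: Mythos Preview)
Your proof is correct and follows essentially the same approach as the paper, which simply cites Proposition~\ref{prop:Int-Isom-Deduce} and Corollary~\ref{cor:purifFromMinDil}. You are in fact slightly more careful than the paper: since Proposition~\ref{prop:Int-Isom-Deduce} only yields minimal dilations for effects while Corollary~\ref{cor:purifFromMinDil} is stated for minimal dilations of all morphisms, you rightly unwind to Lemma~\ref{lem:whatswhole}\ref{enum:ext-whole} and observe that its proof applies verbatim to minimal dilations of effects.
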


\begin{proof}
By Proposition~\ref{prop:Int-Isom-Deduce} and Corollary~\ref{cor:purifFromMinDil}. 
\end{proof}

We can also consider when minimal dilations satisfy the other notions of purity from Section~\ref{sec:alt-purity}. Let us say that an ordered theory $\catC$ is \indef{$\leq$-pure}\index{theory!$\leq$-pure} when every identity morphism is $\leq$-pure. 

We call a kernel $k$ \indef{split}\index{kernel!split} when it is split monic, i.e.~there is some $f$ with $f \circ k = \id{}$. Dually a cokernel $c$ is split when $c \circ g = \id{}$ for some morphism $g$. Any dagger (co)kernel is split by definition.

\begin{lemma} \label{lem:whatsatomic}
Let $\catC$ be an ordered theory which is $\leq$-pure. Then so is any split kernel, split cokernel, or morphism  $\dec_\rho$ of an ideal compression scheme.
\end{lemma}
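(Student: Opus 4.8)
The plan is to handle all three cases with one common three‑move template: given $g \leq m$, where $m$ is the split kernel, split cokernel, or decoding map in question, I would first show that $g$ factors through $m$, then use the one‑sided inverse of $m$ to identify the mediating factor, observe that this factor lies below an identity morphism, and finally invoke $\leq$‑purity of identities to conclude that $g$ is a scalar multiple of $m$. The only thing that differs between the cases is how the factorization is obtained.

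For a split kernel $k = \ker(p) \colon K \to A$ with retraction $f$ (so $f \circ k = \id{K}$): if $g \leq k$ then monotonicity of $p \circ (-)$ gives $p \circ g \leq p \circ k = 0$, hence $p \circ g = 0$ by \eqref{eq:lesszero}, so $g = k \circ g'$ for a unique $g'$ by the kernel property. Composing with $f$ shows $g' = f \circ g$, and monotonicity of $f \circ (-)$ then gives $g' = f \circ g \leq f \circ k = \id{K}$. Since $\id{K}$ is $\leq$‑pure, $g' = r \cdot \id{K}$ for some scalar $r$, whence $g = k \circ g' = r \cdot k$ (scalars commuting past composition). The split cokernel case $c = \coker(p)$ with section $g_0$ is exactly dual: from $g \leq c$ one gets $g \circ p \leq c \circ p = 0$, hence $g \circ p = 0$ and $g = g' \circ c$ for a unique $g'$; composing with $g_0$ gives $g' = g \circ g_0 \leq c \circ g_0 = \id{}$, and $\leq$‑purity of the identity yields $g = g' \circ c = r \cdot c$.

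For the decoding map $\dec_\rho \colon \idob_\rho \to A$ of an ideal compression scheme, with encoding map $\enc_\rho$ (so $\enc_\rho \circ \dec_\rho = \id{\idob_\rho}$): given $g \leq \dec_\rho$, I would first verify the hypothesis of the universal property in Definition~\ref{def:ideal-compression-schemes}, namely that $g \circ \sigma \preceq_F \rho$ for every state $\sigma$ of $\idob_\rho$. This follows since $g \circ \sigma \leq \dec_\rho \circ \sigma$ by monotonicity, hence $g \circ \sigma \preceq_F \dec_\rho \circ \sigma$, while $\dec_\rho \circ \sigma \preceq_F \rho$ because the decoding map embeds the state space of $\idob_\rho$ into the face of $\rho$; transitivity of $\preceq_F$ (using zero‑cancellativity of scalars and monotonicity of scalar multiplication) then gives $g \circ \sigma \preceq_F \rho$. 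So $g$ factors as $g = \dec_\rho \circ g'$; composing with $\enc_\rho$ gives $g' = \enc_\rho \circ g \leq \enc_\rho \circ \dec_\rho = \id{\idob_\rho}$, so $g' = r \cdot \id{}$ by $\leq$‑purity of the identity, and finally $g = \dec_\rho \circ g' = r \cdot \dec_\rho$.

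The main obstacle is the one genuinely special input in the ideal‑compression case: that $\dec_\rho \circ \sigma \preceq_F \rho$ for every state $\sigma$ of $\idob_\rho$, i.e. that the image of $\dec_\rho$ sits inside the face generated by $\rho$. For kernels and cokernels the corresponding factorization is free, coming from the defining universal property together with \eqref{eq:lesszero}; here I would need to either extract the ``image‑in‑the‑face'' property from Definition~\ref{def:ideal-compression-schemes} (for instance by applying its universal property to suitably chosen morphisms and using suitability to bound scalars, noting first that $\rho$ itself factors over $\dec_\rho$ so $\rho = \dec_\rho \circ \mu$) or to record it as part of the intended content of ideal compression, matching the equivalent CDP formulation referenced after the definition. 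Beyond that, the proof is routine monotonicity bookkeeping plus the fact that scalar multiples slide past composition.
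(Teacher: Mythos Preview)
Your approach is essentially identical to the paper's: factor $g$ through $m$ via the relevant universal property, compose with the one-sided inverse to get a morphism $\leq \id{}$, and apply $\leq$-purity of identities. For the $\dec_\rho$ case the paper simply asserts ``$g \leq \dec_\rho$ then $g \circ \sigma \preceq_F \rho$ for all states $\sigma$'' without further justification, so the step you flag as an obstacle is exactly the one the paper leaves implicit (your reading of it as part of the intended content of ideal compression is the right resolution).
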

\begin{proof}
Let $k = \ker(f) \colon K \to A$ having a splitting $l$, for some $f \colon A \to B$. Then if $g \colon K \to A$ has $g \leq k$ then $f \circ g = 0$ and so $g = k \circ h$ for some $h$ as below.
\[
\begin{tikzcd}
K \rar[shift left = 2.5]{k} & A \lar[shift left = 2.5]{l} \rar{f} & B \\ 
K \arrow[ur,swap,"g"] \uar{h}
\end{tikzcd}
\]
But then $h = l \circ g \leq l \circ k = \id{K}$. 
Hence for some scalar $r$ we have $h = r \cdot \id{K}$ so that $g = r \cdot k$ as required. 
The result for cokernels follows dually. 

Finally if $\dec_\rho$ is as above and $g \leq \dec_\rho$ then $g \circ \sigma \preceq_F \rho$ for all states $\sigma$ and so $g$ factors over $\dec_\rho$. Since $\dec_\rho$ is split by definition it follows again that $g = r \cdot \rho$ for some scalar $r$.
\end{proof}

In many cases minimal dilations are pure in the other senses we have considered. 
\begin{theorem} \label{thm:deriveleqpure}
Let $\catC$ be an ordered theory which is $\leq$-pure. 
\begin{enumerate}
\item \label{enum:derive-atomic}
If $\catC$ is ordered by a total addition $+$ then any order dilation is $\leq$-pure. 
\item  
\label{enum:derive-atomic15}
If $\catC$ is ordered by some $\ovee$ making it a sub-causal category (Chapter~\ref{chap:totalisation}) and which is cancellative on effects, then any order dilation of an effect is $\leq$-pure.
\item \label{enum:derive-atomic2}
If $\catC$ has strongly compatible split (co)kernels and minimal dilations, any minimal effect dilation is $\leq$-pure.
\end{enumerate}
\end{theorem}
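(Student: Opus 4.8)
The plan is to handle the three parts separately, but in each case by the same device: present the dilation in question as built out of a morphism already known to be $\leq$-pure, and then transport $\leq$-purity along that presentation, each time finishing with the standing hypothesis that every identity morphism is $\leq$-pure. For part~\ref{enum:derive-atomic}, suppose $\mdil{f}\colon A\to B\otimes\MDil{f}$ is an order dilation of $f\colon A\to B$ and that $g\leq\mdil{f}$, say $\mdil{f}=g+h$. Taking marginals (post-composing with $\id{B}\otimes\discard{\MDil{f}}$) shows the marginal of $g$ lies below $f$, so the defining property~\eqref{eq:order-dilation} of an order dilation lets us factor $g=(\id{B}\otimes k)\circ\mdil{f}$ and $h=(\id{B}\otimes l)\circ\mdil{f}$. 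Then $(\id{B}\otimes(k+l))\circ\mdil{f}=g+h=\mdil{f}=(\id{B}\otimes\id{\MDil{f}})\circ\mdil{f}$, so the uniqueness clause~\eqref{eq:min-dilunique} of a minimal dilation forces $k+l=\id{\MDil{f}}$, i.e.\ $k\leq\id{\MDil{f}}$; hence $k=r\cdot\id{\MDil{f}}$ for a scalar $r$ and $g=r\cdot\mdil{f}$.

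Part~\ref{enum:derive-atomic2} is the shortest. For an effect $e$ we have $\mdil{e}=\med_e\circ\coim(e)$ by~\eqref{eq:compatible}, where $\coim(e)=\coker(\ker(e))$ is a cokernel, hence split by hypothesis, hence $\leq$-pure by Lemma~\ref{lem:whatsatomic}, while $\med_e$ is an isomorphism by strong compatibility. Given $g\leq\mdil{e}$, monotonicity of composition gives $\med_e^{-1}\circ g\leq\med_e^{-1}\circ\med_e\circ\coim(e)=\coim(e)$, so $\med_e^{-1}\circ g=r\cdot\coim(e)$ for some scalar $r$ by $\leq$-purity of $\coim(e)$, and composing with $\med_e$ yields $g=r\cdot\mdil{e}$.

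Part~\ref{enum:derive-atomic15} follows the shape of part~\ref{enum:derive-atomic}, but since the addition is only partial it needs extra care, and this is where I expect the real work to lie. Let $\mdil{e}\colon A\to\MDil{e}$ be an order dilation of an effect $e$ and suppose $g\leq\mdil{e}$, say $\mdil{e}=g\ovee h$; post-composing with $\discard{}$ gives $\discard{}\circ g\leq e$ and $\discard{}\circ h\leq e$, so~\eqref{eq:order-dilation} yields factorisations $g=k\circ\mdil{e}$ and $h=l\circ\mdil{e}$. To conclude as before I need $k\ovee l=\id{\MDil{e}}$, and the obstacle is establishing $k \bot l$ in the first place. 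The idea is: in a sub-causal category every effect lies below $\discard{}$, so $\discard{\MDil{e}}=(\discard{}\circ k)\ovee p$ for some effect $p$ on $\MDil{e}$; pre-composing this with the epimorphism $\mdil{e}$ and comparing with $\discard{\MDil{e}}\circ\mdil{e}=e=(\discard{}\circ g)\ovee(\discard{}\circ h)$, cancellativity on effects forces $p\circ\mdil{e}=\discard{}\circ h=(\discard{}\circ l)\circ\mdil{e}$, and then $p=\discard{}\circ l$ since $\mdil{e}$ is epic. Thus $\discard{}\circ k \bot \discard{}\circ l$, which by condition~\ref{enum:disc-compat} of Definition~\ref{def:PCM-with-discarding} gives $k \bot l$; then $(k\ovee l)\circ\mdil{e}=(k\circ\mdil{e})\ovee(l\circ\mdil{e})=g\ovee h=\mdil{e}=\id{\MDil{e}}\circ\mdil{e}$, so $k\ovee l=\id{\MDil{e}}$ by epic-ness, $k\leq\id{\MDil{e}}$, and $\leq$-purity of the identity finishes the argument exactly as in part~\ref{enum:derive-atomic}.
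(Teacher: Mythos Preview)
Your proposal is correct and follows essentially the same approach as the paper in all three parts. The only differences are cosmetic: in part~\ref{enum:derive-atomic15} the paper phrases your auxiliary effect $p$ as the unique complement $(\discard{}\circ k)^\bot$ (uniqueness coming from cancellativity), and in part~\ref{enum:derive-atomic2} the paper simply asserts that $\leq$-purity of $\coim(e)$ transfers to $\mdil{e}$ along the isomorphism $\med_e$, whereas you spell this step out explicitly.
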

\begin{proof}
\ref{enum:derive-atomic} Let $\mdil{f} \colon A \to B \otimes C$ be an order dilation of $f \colon A \to B$ and suppose that $g \leq \mdil{f}$. Then we have $g + h = \mdil{f}$ for some morphism $h$. 
But now 
\[
\scalebox{0.8}{\input{./figures/order-dil-proof1.tikz}}
\]
and so there are unique morphisms $l, m \colon C \to C$ with 
\[
\scalebox{0.8}{\input{./figures/order-dil-proof2.tikz}}
\]
But then by uniqueness property of minimal dilations $l + m = \id{C}$, 
so that $l \leq \id{C}$. Hence for some scalar $r$ we have $l = r \cdot \id{C}$ and then $g = r \cdot \mdil{f}$.

\ref{enum:derive-atomic15} 
Note that every effect $e$ now has a unique $e^\bot$ with $e \ovee e^\bot = \discard{}$.
The proof is similar to the previous part: for any effect $e$, if $g \leq \mdil{e}$ we have $g \ovee h = \mdil{e}$ for some  $h$, giving unique morphisms $l, m$ with 
$g = l \circ \mdil{e}$ and $h = m \circ \mdil{e}$. Now we have
\begin{align*}
(\discard{} \circ l \circ \mdil{e}) \ovee (\discard{} \circ m \circ \mdil{e})
&=
\discard{} \circ (g \ovee h)
\\
&=
\discard{} \circ \mdil{e}
\\
&=
(\discard{} \circ l \circ \mdil{e}) \ovee ((\discard{} \circ l)^\bot \circ \mdil{e})
\end{align*}
Hence by assumption and epicness of $\mdil{e}$ we have $(\discard{} \circ l)^\bot = \discard{} \circ m$ so that $l \ovee m$ is defined. Again by epicness we have $l \ovee m = \id{C}$ so that $l = r \cdot \id{C}$ and $g = r \cdot \mdil{e}$ for some non-zero scalar $r$.

\ref{enum:derive-atomic2}
By strong compatibility for any effect $e$ we have $\mdil{e} = \med_e \circ \coim(e)$ for some isomorphism $\med_e$. By Lemma~\ref{lem:whatsatomic} $\coim(e)$ is $\leq$-pure and then it easily follows that $\mdil{e}$ is also. 
\end{proof}

\begin{examples}
We've seen that $\Quant{}$ and $\QuantSU$ are both $\otimes$-pure and $\leq$-pure. Hence any kernel, cokernel or minimal dilations in either is pure. In fact in Example~\ref{examples:min-dilations}~\ref{example:QuantMinDil} we already saw that minimal dilations in $\Quant{}$ are given by minimal Stinespring dilations and in Example~\ref{examples:kernels}~\ref{example:QuantKernels} that kernels here are induced from $\FHilb$.
\end{examples}

\section{Pure Exclusion} \label{sec:Pure-excl}

In our main theories of interest the pure states of any suitable system always satisfy an extra property, namely that they may be \emps{excluded} by some experimental test. This provides us with a natural further principle to consider. 

Let us call an object $A$ \indef{trivial} \index{trivial object} when  $\discard{A} \colon A \to I$ is an isomorphism, or $A$ is a zero object, and a theory \indef{trivial} \index{theory!trivial} when every object is trivial. 

\begin{definition} \label{def:Pure-Exclusion}
A theory $(\catC, \discard{})$ satisfies \deff{pure exclusion} \index{pure exclusion} when for every $\otimes$-pure state $\psi$ of a non-trivial object $A$ there is a non-zero effect $e$ with 
\[
\scalebox{0.8}{\begin{tikzpicture}
	\begin{pgfonlayer}{nodelayer}
		\node [style=kpointadj] (0) at (0, 1) {$e$};
		\node [style=kpoint] (1) at (0, -0.75) {$\psi$};
		\node [style=none] (2) at (1.5, -0) {$=$};
	\end{pgfonlayer}
	\begin{pgfonlayer}{edgelayer}
		\draw [style=none] (1) to (0);
	\end{pgfonlayer}
\end{tikzpicture}} \ 0
\]
\end{definition}

Equivalently, no such state $\psi$ is zero-epic. For any such pure state $\psi$ we think of $e$ as a potentially observable effect which tells us that the system is not currently in state $\psi$. In probabilistic theories of the form of~\cite{PhysRevA.84.012311InfoDerivQT} it may be seen as a weaker form of perfect distinguishability. Indeed that principle tells us that any zero-epic pure state $\psi$ is completely mixed, in this context ensuring triviality of $A$.

Pure exclusion is particularly natural to consider in theories with kernels and cokernels, where it corresponds to yet another characterisation of purity for states.

\begin{definition} \label{def:kernelpure}
In a theory with (co)kernels we call a state $\psi$ \deff{kernel-pure}\index{pure!kernel-pure} when $\Img(\psi)$ is trivial. Equivalently if $\psi$ is non-zero we have 
\begin{equation} \label{eq:kpurepic}
\scalebox{0.8}{\input{./figures/k-pdefi.tikz}}
\end{equation}
for some zero-epic scalar $r$. In a compact theory we may more generally call a morphism $f \colon A \to B$ \deff{kernel-pure} when the state
\begin{equation} \label{eq:state-bend}
\scalebox{0.8}{\input{./figures/state-bend-noncompact.tikz}}
\end{equation}
is kernel-pure. 
\end{definition}

\begin{lemma} \label{lem:PureExclusion}
Let $\catC$ be a theory with $\otimes$-compatible (co)kernels. Then $\catC$ satisfies pure exclusion iff every $\otimes$-pure state is kernel-pure. 
\end{lemma}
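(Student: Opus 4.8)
The plan is to run both directions through the canonical factorisation $\psi = \img(\psi)\circ e_0$ of a state through its image, where $e_0\colon I\to\Img(\psi)$ is zero-epic. The point to keep in mind is that $\img(\psi) = \ker(\coker(\psi))$, so $\img(\psi)$ is a kernel, and by hypothesis an $\otimes$-compatible one; and that $\psi$ being kernel-pure means precisely that $\Img(\psi)$ is trivial (Definition~\ref{def:kernelpure}).

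For the forward implication, suppose pure exclusion holds and let $\psi$ be a $\otimes$-pure state of some object $A$; I must show $\Img(\psi)$ is trivial. The case $\psi=0$ is immediate: then $\coker(\psi)=\id{A}$, so $\img(\psi)=\ker(\id{A})$ is the map out of the zero object and $\Img(\psi)$ is a zero object. So assume $\psi\neq 0$. The first step is to show that the zero-epic factor $e_0\colon I\to\Img(\psi)$ is itself $\otimes$-pure: given any dilation $g\colon I\to\Img(\psi)\otimes C$ of $e_0$, the composite $(\img(\psi)\otimes\id{C})\circ g$ is a dilation of $\psi$, so $\otimes$-purity of $\psi$ gives $(\img(\psi)\otimes\id{C})\circ g = \psi\otimes\rho = (\img(\psi)\otimes\id{C})\circ(e_0\otimes\rho)$ for some causal state $\rho$; since $\img(\psi)$ is an $\otimes$-compatible kernel, $\img(\psi)\otimes\id{C}$ is again a kernel by~\eqref{eq:kernels} and hence monic, so $g=e_0\otimes\rho$. (This is exactly the argument behind the final clause of Lemma~\ref{lem:helpful-tensPureKer}.) Now if $\Img(\psi)$ were non-trivial, pure exclusion applied to the $\otimes$-pure state $e_0$ of $\Img(\psi)$ would yield a non-zero effect $d$ with $d\circ e_0=0$, contradicting zero-epicness of $e_0$. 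Hence $\Img(\psi)$ is trivial, i.e.\ $\psi$ is kernel-pure.

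For the converse, suppose every $\otimes$-pure state is kernel-pure and let $\psi$ be a $\otimes$-pure state of a non-trivial object $A$; by Definition~\ref{def:Pure-Exclusion} it suffices to produce a non-zero effect $e$ on $A$ with $e\circ\psi=0$. By hypothesis $\Img(\psi)$ is trivial. I claim $\Coker(\psi)$ is not a zero object: otherwise $\coker(\psi)$ would be a zero morphism, forcing $\img(\psi)=\ker(\coker(\psi))=\id{A}$ and hence $\Img(\psi)\cong A$ trivial, contradicting non-triviality of $A$. So $\Coker(\psi)$ is non-zero, and therefore $\discard{\Coker(\psi)}\neq 0$ — for if $\discard{B}=0$ then $\id{B}=0$ by the positivity rule~\eqref{eq:pos-eq}, making $B$ a zero object. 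Now put $e := \discard{\Coker(\psi)}\circ\coker(\psi)$. Then $e\circ\psi = \discard{}\circ\coker(\psi)\circ\psi = 0$, and $e\neq 0$ since $\coker(\psi)$ is epic (being a coequaliser), so $e=0$ would force $\discard{\Coker(\psi)} = 0$. This gives pure exclusion.

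The main obstacle is the very first step of the forward direction: seeing that the zero-epic part $e_0$ of $\psi$ is again $\otimes$-pure. This is precisely the place where the $\otimes$-compatibility hypothesis on kernels is needed (to conclude $\img(\psi)\otimes\id{C}$ is monic); the backward direction uses only the positivity axiom~\eqref{eq:pos-eq} together with elementary properties of images and cokernels, and is otherwise routine.
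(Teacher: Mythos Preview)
Your proof is correct and follows essentially the same route as the paper. The forward direction is identical --- factor through the image, observe the zero-epic factor is again $\otimes$-pure (you spell out the argument the paper cites as Lemma~\ref{lem:helpful-tensPureKer}), then apply pure exclusion to the trivialised image. For the converse the paper argues contrapositively (zero-epic $\otimes$-pure $\psi$ forces $\img(\psi)=\id{A}$, hence $A$ trivial), whereas you construct the excluding effect $e=\discard{}\circ\coker(\psi)$ directly; these are the same idea in different dress.
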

\begin{proof}
 Let $\psi \colon I \to A$ be any non-zero $\otimes$-pure state. Now we can write $\psi = \img(\psi) \circ \phi$ for some state $\phi$ with $\coker(\phi) = 0$. But by Lemma~\ref{lem:helpful-tensPureKer} $\phi$ is also $\otimes$-pure and so by pure exclusion $\Img(\psi)$ is trivial, making $\psi$ kernel-pure. 

 Conversely suppose the condition holds, and let $\psi \colon I \to A$ be a zero-epic $\otimes$-pure state. Then let $\psi = \img(\psi) \circ r$ as in~\eqref{eq:kpurepic}. Since $\psi$ is zero-epic we have a causal isomorphism of causal kernels $\img(\psi) = \ker(0) = \id{A}$. Hence $\img(\psi)$ is a causal isomorphism, making $A$ trivial. 
\end{proof}

In particular pure exclusion tells us that every causal $\otimes$-pure state is a kernel. 

We now collect some facts about pure exclusion and kernel-purity. Let us say that a theory has \indef{normalisation}\index{normalisation} when every non-zero state $\rho \colon I \to A$ is of the form $\rho = \sigma \circ r$ for some causal state $\sigma$ and scalar $r$. For example this certainly holds when the scalars are $\mathbb{R}^+$ or the Booleans $\mathbb{B}$.

\begin{proposition} \label{prop:kerpure}
Let $\catC$ be a theory with $\otimes$-compatible (co)kernels. 
\begin{enumerate}[label=\arabic*., ref=\arabic*]
\item \label{enum:kerpure}
If $\catC$ has normalisation, any kernel-pure state is $\otimes$-pure. 
\item \label{enum:kerpure2}
If $\catC$ has purification and pure exclusion it has normalisation. 
\item \label{enum:ker-pure-2-out-3}
If $\psi$ and $\phi$ are kernel-pure states then so is $\psi \otimes \phi$. Conversely if $\psi \otimes \phi$ is non-zero and kernel-pure then so are $\psi$ and $\phi$. 
\end{enumerate}
\end{proposition}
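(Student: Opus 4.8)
The plan is to prove the three items separately; all of them rely on the image factorisation $f=\img(f)\circ e$ with $e$ zero-epic, on the identity $\img(f)=\ker(\coker(f))$, on the positivity rule~\eqref{eq:pos-eq}, and on the $\otimes$-compatibility rule~\eqref{eq:kernels}, which I will use in the form ``$\ker(h)\otimes\id{C}=\ker(h\otimes\id{C})$'' (and, conjugating with the symmetry, ``$\id{C}\otimes\ker(h)=\ker(\id{C}\otimes h)$''). One small observation will be used several times: for a zero-epic scalar $r$ and any morphism $\chi$ one has $\Img(r\cdot\chi)=\Img(\chi)$, since $\coker(r\cdot\chi)\circ\chi=0$ (as $r$ is zero-epic) while $\coker(\chi)\circ(r\cdot\chi)=0$, so the two cokernels factor through one another and hence have the same kernel.

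For \ref{enum:kerpure}, take $\psi\colon I\to A$ kernel-pure; the case $\psi=0$ is immediate from Definition~\ref{def:tensor-pure}, so assume $\psi\neq0$. Writing $\psi=\img(\psi)\circ\phi$ with $\phi$ zero-epic, triviality of $\Img(\psi)$ together with $\psi\neq0$ forces $u:=\discard{\Img(\psi)}$ to be an isomorphism, and causality of $\img(\psi)$ gives $\discard{A}\circ\img(\psi)=u$. By normalisation, $\phi=\sigma\circ r$ with $\sigma$ causal, whence $\sigma=u^{-1}$ and $\psi=\psi_0\circ r$ for the causal state $\psi_0:=\img(\psi)\circ u^{-1}$. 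Now let $g$ be any dilation of $\psi$. From $\coker(\psi)\circ\psi=0$ and $(\id{A}\otimes\discard{C})\circ g=\psi$, the rule~\eqref{eq:pos-eq} gives $(\coker(\psi)\otimes\id{C})\circ g=0$, so by~\eqref{eq:kernels} $g$ factors as $(\img(\psi)\otimes\id{C})\circ h$; since $u$ is an isomorphism, $h$ amounts (up to coherence) to a plain state $h'$ of $C$ with $g=\psi_0\otimes h'$. Discarding $C$ and using $\discard{A}\circ\psi_0=\id{I}$ yields $\discard{C}\circ h'=r$; applying normalisation a second time, now to $h'$, gives $h'=\rho\circ s$ with $\rho$ causal, whence $s=r$, so $g=\psi_0\otimes(r\cdot\rho)=(r\cdot\psi_0)\otimes\rho=\psi\otimes\rho$ with $\rho$ causal. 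Thus $\psi$ is $\otimes$-pure.

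For \ref{enum:kerpure2}, let $\rho\colon I\to A$ be nonzero and let $g\colon I\to A\otimes C$ be a $\otimes$-pure dilation supplied by purification; then $g\neq0$. By pure exclusion and Lemma~\ref{lem:PureExclusion}, $g$ is kernel-pure, so $\Img(g)$ is trivial and $v:=\discard{\Img(g)}$ is an isomorphism. Factoring $g=\img(g)\circ\phi$ and putting $r:=\discard{A\otimes C}\circ g=v\circ\phi$ (a scalar, using causality of $\img(g)$) gives $\phi=v^{-1}\circ r$. Then $\sigma:=(\id{A}\otimes\discard{C})\circ\img(g)\circ v^{-1}$ is causal, since $\discard{A}\circ\sigma=\discard{A\otimes C}\circ\img(g)\circ v^{-1}=v\circ v^{-1}=\id{I}$, and $\rho=(\id{A}\otimes\discard{C})\circ g=\sigma\circ r$; so $\catC$ has normalisation.

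For \ref{enum:ker-pure-2-out-3}, the forward direction reduces, via the observation above and Lemma~\ref{lem:kernels-image-rule} (a causal state with trivial image equals its own image and is therefore a kernel), to the state-level instance of Proposition~\ref{prop:kernels-in-compact}\ref{enum:images-under-tensor} and \ref{enum:kernels-compatible}, namely that the tensor of two state-kernels is a kernel: I would run the argument of that proposition verbatim, replacing the single ``bending wires'' step (used only to see that zero-epic morphisms are stable under $-\otimes\id$) by the dual of~\eqref{eq:kernels} for cokernels, after which triviality of $\Img(\psi)$ and $\Img(\phi)$ gives triviality of $\Img(\psi\otimes\phi)=\Img(\psi)\otimes\Img(\phi)$. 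The converse is the step I expect to be the main obstacle. Assuming $\psi\otimes\phi$ nonzero and kernel-pure (so $\psi,\phi\neq0$ and $\Img(\psi),\Img(\phi)\neq0$), the same identity $\Img(\psi\otimes\phi)=\Img(\psi)\otimes\Img(\phi)$ reduces the claim to the purely monoidal statement that a nonzero tensor product of nonzero objects is trivial only if each factor is; I would attack this by transporting the nonzero zero-epic states $d_\psi\colon I\to\Img(\psi)$ and $d_\phi\colon I\to\Img(\phi)$ from the image factorisations of $\psi$ and $\phi$ across the isomorphism $\discard{\Img(\psi)\otimes\Img(\phi)}$, concluding first that $\discard{\Img(\psi)}\circ d_\psi$ and $\discard{\Img(\phi)}\circ d_\phi$ are zero-epic scalars and that $\discard{\Img(\psi)}$, $\discard{\Img(\phi)}$ are split epic, and then pushing this back through the factorisations. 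The delicate point throughout this last part is to stay inside the bare $\otimes$-compatible setting — in particular to obtain triviality of the factors without invoking compactness, zero-cancellativity, or the existence of causal states — and it is here that the proof will need the most care (and possibly a mild additional hypothesis, which holds in all the intended examples).
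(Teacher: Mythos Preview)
Your arguments for parts~\ref{enum:kerpure} and~\ref{enum:kerpure2} are correct and follow the same route as the paper; the forward direction of part~\ref{enum:ker-pure-2-out-3} likewise matches (both the paper and you invoke the identity $\Img(\psi\otimes\phi)\simeq\Img(\psi)\otimes\Img(\phi)$ from the argument of Proposition~\ref{prop:kernels-in-compact}).

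The gap is exactly where you flag it: the converse of part~\ref{enum:ker-pure-2-out-3}. You correctly reduce to showing that if $A\otimes B$ is trivial and nonzero (with $A=\Img(\psi)$, $B=\Img(\phi)$) then $A$ is trivial, but then offer only a vague sketch involving zero-epic scalars and split epics, and worry that an extra hypothesis may be needed. None is. The observation you are missing is that triviality of $A\otimes B$ \emph{automatically supplies} the causal state you need: the inverse $\rho:=\discard{A\otimes B}^{-1}$ is itself a causal state of $A\otimes B$ and an isomorphism, and its marginals furnish causal states of both factors. Concretely, setting $\sigma:=(\discard{A}\otimes\id{B})\circ\rho$ and $\tau:=(\id{A}\otimes\discard{B})\circ\rho$, one computes using only coherence and the equation $\rho\circ\discard{A\otimes B}=\id{A\otimes B}$ that
\[
\id{A}\;=\;\id{A}\otimes(\discard{B}\circ\sigma)\;=\;(\id{A}\otimes\discard{B})\circ(\id{A}\otimes\sigma)\;=\;(\id{A}\otimes\discard{B})\circ\rho\circ\discard{A}\;=\;\tau\circ\discard{A},
\]
the third equality holding because $\id{A}\otimes\sigma$ and $\rho\circ\discard{A}$ have the same composite $\discard{A}$ with the isomorphism $\discard{A\otimes B}$. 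Hence $\discard{A}$ is invertible and $A$ is trivial. This is purely monoidal and requires no compactness, no zero-cancellativity, and no externally assumed causal states.
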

\begin{proof}
\ref{enum:kerpure}. 
Let $\Psi \colon I \to A \otimes B$ be a dilation of a kernel-pure state $\psi \colon I \to A$. Then
\[
\scalebox{0.8}{\input{./figures/basic-kp-arg.tikz}} \ 0 
\quad
\implies 
\quad
\scalebox{0.8}{\input{./figures/basic-kp-arg2.tikz}} \ 0
\]
and so $\Psi$ factors over $\img(\psi) \otimes \id{B}$. But since $\psi$ is kernel-pure we have $\Img(\psi) = I$, so that for some state $\rho$ we have 
\[
\scalebox{0.8}{\input{./figures/basic-kp-arg3.tikz}}
\]
Now note that $\psi = \img(\psi) \circ r$ for the scalar $r = \discard{A} \circ \psi$. Applying $\discard{B}$ we see that $\discard{B} \circ \rho = r$ also. Then by normalisation $\rho = \sigma \circ r$ for some state $\sigma$. Finally then $\Psi$ is given by $\psi \otimes \sigma$ as desired. 

\ref{enum:kerpure2}. 
Thanks to purification, it suffices to be able to normalise any non-zero $\otimes$-pure state $\psi \colon I \to A$. But any such state $\psi$ is kernel-pure and so we may take $\Img(\psi) = I$. Then $\psi = \img(\psi) \circ r$ for the scalar $r$ and causal state $\img(\psi)$.

\ref{enum:ker-pure-2-out-3}. 
 Let $A = \Img(\psi)$ and $B = \Img(\phi)$. Then as in Proposition~\ref{prop:kernels-in-compact} we have a causal isomorphism $\Img(\psi \otimes \phi) \simeq A \otimes B$. Now suppose that $\psi$ and $\phi$ are kernel-pure. If $\psi$ or $\phi$ is the zero state then so is $\psi \otimes \phi$. Otherwise we have $\Img(\psi \otimes \phi) \simeq I \otimes I \simeq I$, making $\psi \otimes \phi$ kernel-pure.
Conversely if $\psi \otimes \phi$ is non-zero and kernel-pure then $A \otimes B$ is trivial, with some causal state $\rho$ which is an isomorphism, and we have
\[
\scalebox{0.8}{\input{./figures/kparg-1.tikz}}
\quad
\text{ and so }
\quad
\scalebox{0.8}{\input{./figures/kparg.tikz}}
\]
making $A$ trivial also, and hence $\psi$ is kernel-pure. 
\end{proof}

Hence in any compact theory with (co)kernels, kernel-pure morphisms are closed under $\otimes$ and are $\otimes$-complete. Finally, we note that in the presence of kernels and the purification we considered earlier, pure exclusion has another simple form.

\begin{lemma} \label{lem:Pure-Excl-Is-Easy}
Let $\catC$ be a compact theory with $\otimes$-compatible (co)kernels, zero-cancellative scalars, and purification satisfying the properties of Proposition~\ref{prop:pure-lem}. The following are equivalent for $\catC$:
\begin{enumerate}[label=\arabic*., ref=\arabic*]
\item  \label{enum:PE}
Pure exclusion holds;
\item  \label{enum:NormAndExtra}
Normalisation holds, and every causal $\otimes$-pure state is a kernel.
\end{enumerate}
\end{lemma}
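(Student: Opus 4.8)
The plan is to prove the two implications separately, using Lemma~\ref{lem:PureExclusion} to rephrase pure exclusion as the statement that every $\otimes$-pure state is kernel-pure, together with Proposition~\ref{prop:kerpure} and the fact that $\catC$ has purification by the properties of Proposition~\ref{prop:pure-lem}; throughout we may assume $\catC$ is non-trivial, as otherwise every object is trivial and both \ref{enum:PE} and \ref{enum:NormAndExtra} hold vacuously. For \ref{enum:PE} $\Rightarrow$ \ref{enum:NormAndExtra}, normalisation is immediate from Proposition~\ref{prop:kerpure}\ref{enum:kerpure2}, and that every causal $\otimes$-pure state is a kernel is exactly the consequence of pure exclusion recorded just after Lemma~\ref{lem:PureExclusion}: such a $\psi$ is kernel-pure, so $\psi = \img(\psi) \circ r$ with $r$ a zero-epic scalar and $\Img(\psi)$ trivial as in~\eqref{eq:kpurepic}; by causality of $\psi$ and of $\img(\psi)$ and non-triviality of $\catC$ this forces $\discard{\Img(\psi)} \circ r = \id{I}$ with $\discard{\Img(\psi)}$ invertible, so $r$ is split monic, hence invertible (scalars commute), and $\psi$ is a kernel precomposed with an isomorphism.

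For the converse, by Lemma~\ref{lem:PureExclusion} it suffices to show that every $\otimes$-pure state $\psi \colon I \to A$ is kernel-pure, the zero state being trivially so. I first record that $\Img(r \cdot f) = \Img(f)$ for every morphism $f$ and non-zero scalar $r$: applying zero-cancellativity to $\coker(r \cdot f) \circ (r \cdot f) = 0$ gives $\coker(r \cdot f) \circ f = 0$, and symmetrically $\coker(f) \circ (r \cdot f) = 0$, so $\Coker(r \cdot f)$ and $\Coker(f)$ are canonically isomorphic and hence so are their kernels. Using normalisation, write $\psi = \sigma \circ r$ with $\sigma$ causal and $r \neq 0$, so that $\Img(\psi) = \Img(\sigma)$ and it is enough to prove $\Img(\sigma)$ trivial.

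Now purify $\sigma$ to a $\otimes$-pure dilation $h \colon I \to A \otimes C$. Since $\sigma$ is causal, so is $h$, hence by the hypothesis of \ref{enum:NormAndExtra} $h$ is a kernel, so $h = \img(h)$ by Lemma~\ref{lem:kernels-image-rule} and $\Img(h) = I$; that is, $h$ is kernel-pure. As $r \cdot h$ is a dilation of $\psi$ and $\psi$ is $\otimes$-pure and non-zero, there is a causal state $\rho$ of $C$ with $r \cdot h = \psi \otimes \rho = r \cdot (\sigma \otimes \rho)$; since $r \neq 0$ and $r \cdot h \neq 0$, the observation above yields $\Img(\sigma \otimes \rho) = \Img(h) = I$, so $\sigma \otimes \rho$ is a non-zero kernel-pure state, and Proposition~\ref{prop:kerpure}\ref{enum:ker-pure-2-out-3} then gives that $\sigma$ is kernel-pure. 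Hence $\Img(\psi) = \Img(\sigma)$ is trivial, completing the argument.

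The main obstacle is precisely this last point. The obvious route — normalise $\psi = \sigma \circ r$ and apply ``causal $\otimes$-pure states are kernels'' directly to $\sigma$ — fails, because a scalar multiple of a $\otimes$-pure state need not be $\otimes$-pure, so purity does not descend from $\psi$ to its causal representative $\sigma$. The device above sidesteps this by transporting kernel-purity, rather than purity, from the purification $h$ (which \emph{is} causal, hence a kernel) to $\sigma$ along the equality $r \cdot h = r \cdot (\sigma \otimes \rho)$, using that images are insensitive to non-zero scalar factors and that kernel-purity passes to tensor factors. The remaining verifications — that $h$ is causal, that $r \cdot h \neq 0$, and the cokernel computation $\Img(r \cdot f) = \Img(f)$ — are routine given zero-cancellativity of scalars.
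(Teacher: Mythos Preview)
Your proof is correct, and the direction \ref{enum:PE} $\Rightarrow$ \ref{enum:NormAndExtra} matches the paper's. For \ref{enum:NormAndExtra} $\Rightarrow$ \ref{enum:PE}, however, you take a genuinely different route. Both arguments normalise $\psi = \sigma \circ r$ and purify the causal part to some $h$, but then diverge. The paper picks an auxiliary causal pure state $\mu$ of the ancilla, observes that $r \cdot h$ and $\psi \otimes \mu$ are \emph{both} pure (using that normalisation makes every scalar pure) with equal marginals, and invokes essential uniqueness to obtain $r \cdot h = \psi \otimes \eta$ with $\eta$ causal and pure; then, since $\eta$ is a kernel by hypothesis, it factors $h$ through $\id{}\otimes\eta$ to conclude $h = \sigma \otimes \eta$, whence $\sigma$ is $\otimes$-pure and thus a kernel. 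You instead apply hypothesis~\ref{enum:NormAndExtra} directly to $h$ (causal pure, so a kernel, so kernel-pure), split the dilation $r\cdot h$ using $\otimes$-purity of $\psi$ alone, and then transport kernel-purity to $\sigma$ via the scalar-invariance of images together with Proposition~\ref{prop:kerpure}\ref{enum:ker-pure-2-out-3}. Your approach is more economical: it never needs essential uniqueness, never needs the ancilla state $\rho$ to be pure, and never needs to show that $\sigma$ is $\otimes$-pure. The paper's route, on the other hand, yields the slightly stronger intermediate conclusion that the causal normalisation $\sigma$ is itself $\otimes$-pure.
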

\begin{proof}
\ref{enum:PE} $\implies$ \ref{enum:NormAndExtra}: Lemma~\ref{lem:PureExclusion} and Proposition~\ref{prop:kerpure}.

\ref{enum:NormAndExtra} $\implies$ \ref{enum:PE}: 
Let $\psi \colon I \to A$ be a non-zero $\otimes$-pure state, with $\psi = \phi \circ r$ for some causal state $\phi$ and scalar $r$. Now any purification $\sigma \colon I \to A \otimes  B$ of $\phi$ satisfies 
\[
\scalebox{0.8}{\input{./figures/imgarg-1.tikz}}
\]
for any causal $\otimes$-pure state $\mu$ of $B$. Normalisation implies that every scalar is $\otimes$-pure, and so by essential uniqueness we have
\[
\scalebox{0.8}{\input{./figures/imgarg-new.tikz}}
\]
for some causal $\otimes$-pure isomorphism $U$. Letting $\eta$ be the causal $\otimes$-pure state $U \circ \mu$ we have implications
\[
\scalebox{0.8}{\input{./figures/imgarg-2.tikz}} \ 0 \  \ 
\implies
\scalebox{0.8}{\input{./figures/imgarg-2a.tikz}} \ 0 \  \ 
\implies
\scalebox{0.8}{\input{./figures/imgarg-2b.tikz}}
\]
for some state $\rho$, since kernels are $\otimes$-compatible and by pure exclusion $\eta = \img(\eta)$ is a kernel. Then applying $\discard{B}$ gives $\rho = \phi$.  
Hence $\phi \otimes \eta$ is $\otimes$-pure, and then so is $\phi$ by Lemma~\ref{lem:caustensclosedexamples}. Then by assumption $\phi$ is a kernel and $r$ is zero-epic so that $\img(\psi)=\img(\phi)=\phi$, making $\psi$ kernel-pure.
\end{proof}

\begin{examples}
Pure exclusion is satisfied in the following theories.
\begin{exampleslist}
\item 
$\Quant{}$ satisfies pure exclusion. Here any non-trivial $\hilbH$ has dimension $\geq 2$. Then for any (causal) pure state $\Dbl{\psi}$ induced by some $\psi \in \hilbH$, any unit vector $\phi$ orthogonal to $\psi$ induces a causal pure state $\Dbl{\phi}$ with $\Dbl{\phi}^\dagger \circ \Dbl{\psi} = 0$. Similarly so does $\FCStar$, as may be seen thanks to its equivalence with $\Quant{}^\biprod$.

\item 
More generally let $\catA$ be dagger-compact with dagger kernels and such that the $\otimes$-pure morphisms in $\CPM(\catA)$ are precisely those belonging to $\Dbl{\catA}$. Suppose also that in $\catA$ every non-zero state $\psi$ is of the form $k \circ r$ for a dagger kernel state $k \colon I \to A$ and zero-epic scalar $r$. Then by Example~\ref{examples:kernels}~\ref{ex:CPMD-kernels}, for any such state, $\Dbl{\psi}$ is kernel-pure so that $\CPM(\catA)$ satisfies pure exclusion. 

\item 
$\MSpek$ satisfies pure exclusion. Here an object $\IV^n$ is non-trivial iff $n \geq 1$. By Example~\ref{examples:Purif}~\ref{example:pure-in-MSpek}, every $\otimes$-pure state in $\MSpek$ belongs to $\Spek$, with any non-zero pair of such states related by a unitary. Hence it suffices to show that for $n \geq 1$ some such state has non-trivial cokernel. But we always have 
\[
\scalebox{0.8}{\input{./figures/manydots2.tikz}} \ 0 
\]
where $\tinyunit[greydot] = \{2, 4\}$. 

\item 
$\KlDT$ and $\Rel$ are easily seen to satisfy pure exclusion. For instance in $\Rel$ any non-trivial non-zero object $A$ has $|A| \geq 2$. Any pure state is then given by a singleton $a \in A$, so that any effect given by $b \in A$ with $b \neq a$ has $b \circ a = 0$. 
\end{exampleslist}
\end{examples}

\subsection{Kernel-purity, daggers and orthomodular lattices}

In a dagger theory with dagger kernels, several facts relating to pure exclusion can surprisingly be re-stated in terms of the orthomodular lattices $\DKer(A)$ of dagger kernels on any fixed object $A$.

\paragraph{Atomicity}
Firstly, we've often required non-zero systems to have causal pure states, which thanks to pure exclusion we then expect to be \indef{kernel states}\index{kernel!state}, i.e.~dagger kernels $k \colon I \to A$. This translates to the following lattice-theoretic property.

In a lattice, an \indef{atom} is a minimal non-zero element, and an orthomodular lattice is \indef{atomistic}\index{atomistic lattice} if for all $a$ there is an atom $b$ with $b \leq a$.

\begin{proposition} \label{pop:kerstatesAtomistic}
Let $\catC$ be a dagger monoidal category with dagger kernels. Then every non-zero object has a kernel state iff every lattice $\DKer(A)$ is atomistic with atoms being precisely kernel states.
\end{proposition}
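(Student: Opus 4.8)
The plan is to prove the biconditional of Proposition~\ref{pop:kerstatesAtomistic} directly, relating kernel states $I \to A$ to atoms of the orthomodular lattice $\DKer(A)$. First I would establish the easy observation that, since $\discard{I} = \id{I}$, the unit $I$ has no non-zero proper dagger kernel subobjects, i.e.\ $\DKer(I) = \{0, \id{I}\}$; hence any isometry $k \colon I \to A$ which is a kernel has $\Img(k) \simeq I$, and conversely. This identifies kernel states precisely with those dagger kernels $k \colon K \to A$ for which $K$ is a one-dimensional-like object in the sense that $\discard{K}$ is an isomorphism. The key technical point I would then prove is that a non-zero dagger kernel $k \colon K \to A$ is an atom of $\DKer(A)$ if and only if $K$ has no non-trivial dagger kernels of its own, equivalently $\DKer(K) = \{0, \id{K}\}$. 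The forward direction uses that any dagger kernel $l \colon L \to K$ composes with $k$ to give a dagger kernel $k \circ l \colon L \to A$ (dagger kernels are closed under composition, being isometries that are kernels) with $k \circ l \leq k$ in $\DKer(A)$, so minimality of $k$ forces $k \circ l$ to be $0$ or (unitarily) $k$, whence $l$ is $0$ or an iso; the converse direction reverses this, using that any dagger kernel $m \leq k$ factors as $k \circ l$ for a dagger kernel $l$ on $K$.

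Next I would connect ``$K$ has a kernel state'' / ``every non-zero object has a kernel state'' with atomisticity. Assuming every non-zero object has a kernel state: given any non-zero $k \colon K \to A$ in $\DKer(A)$, pick a kernel state $\psi \colon I \to K$; then $k \circ \psi \colon I \to A$ is again a dagger kernel (composition of isometries which are kernels) and $k \circ \psi \leq k$, and by the previous paragraph $k \circ \psi$ is an atom since its domain is $I$ with $\DKer(I)$ trivial. Thus every non-zero element of $\DKer(A)$ dominates an atom, and the atoms arising this way are exactly the (unitary classes of) kernel states. For the reverse implication, assume each $\DKer(A)$ is atomistic with atoms precisely the kernel states; applied to a non-zero object $A$, atomisticity gives an atom $\leq \id{A}$, which by hypothesis is a kernel state of $A$. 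I would also need to check the ``atoms are precisely kernel states'' clause is consistent in both directions: an atom of $\DKer(A)$ has the form $k \colon K \to A$ with $\DKer(K)$ trivial, and I must argue $K$ then admits a kernel state, namely $\id{K}$ itself viewed appropriately — but more carefully, a kernel state means an isometry $I \to K$ that is a kernel, so I should show $\DKer(K)$ trivial together with the standing hypothesis forces $\discard{K}$ to be iso, hence $I \simeq K$ unitarily and the iso is the required kernel state; this is where the hypothesis ``every non-zero object has a kernel state'' is genuinely used on the small object $K$.

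The main obstacle I anticipate is the careful bookkeeping around the phrase ``atoms being precisely kernel states'': kernel states are morphisms $I \to A$ while atoms are (unitary isomorphism classes of) subobjects $K \to A$, so I must fix once and for all the identification — a kernel state $\psi \colon I \to A$ is itself a dagger kernel, hence represents an element of $\DKer(A)$, and I claim that element is an atom, while conversely every atom is represented by some kernel state. The delicate half is showing an arbitrary atom $k \colon K \to A$ is unitarily equal to a kernel state: from atomicity of $k$ I get $\DKer(K) = \{0,\id{K}\}$ as above, but to conclude $K \simeq I$ I need that $K$, being non-zero, has a kernel state by the standing assumption, and that this kernel state $\psi \colon I \to K$ must be an isomorphism because its image is a non-zero dagger kernel of $K$, hence all of $K$. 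Then $k \circ \psi \colon I \to A$ is the desired kernel state representing the class of $k$. I would present this as a short sequence of lemma-free paragraphs, invoking only: dagger kernels are isometries closed under composition, every morphism factors through its image $\img(f) = \ker(\coker(f))$ with $\coker(f) = \ker(f^\dagger)^\dagger$ (available since $\catC$ has dagger kernels), the lattice structure on $\DKer(A)$ from~\cite{heunen2010quantum}, and $\discard{I} = \id{I}$ which pins down $\DKer(I)$.
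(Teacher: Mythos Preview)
Your overall strategy is sound and close to the paper's, but there is a genuine gap in how you justify that kernel states are atoms. You claim $\DKer(I) = \{0, \id{I}\}$ follows from $\discard{I} = \id{I}$, but the hypothesis is only that $\catC$ is a \emph{dagger monoidal} category with dagger kernels --- there is no discarding structure assumed, so appeals to $\discard{I}$ or to $\discard{K}$ being an isomorphism are simply not available here. Even granting discarding, it is unclear how $\discard{I} = \id{I}$ alone would force $\DKer(I)$ to be two-element; you have not said what the argument would be.

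The paper closes this gap with a different idea: the commutativity of scalars in a monoidal category. To show a kernel state $\psi \colon I \to A$ is an atom, suppose $l \colon L \to A$ is a dagger kernel with $l \leq \psi$, so $l = \psi \circ i$ for an isometry $i \colon L \to I$. If $l \neq 0$ then $L$ is non-zero, and \emph{here the standing hypothesis is invoked}: $L$ has a kernel state $\eta \colon I \to L$. Then $i \circ \eta \colon I \to I$ is an isometric scalar, and since scalars commute, any isometric scalar is automatically unitary. This makes $i$ split epic and hence, being already an isometry, unitary; so $l$ and $\psi$ represent the same element of $\DKer(A)$. Note that the hypothesis is genuinely needed even for this direction --- it is not a hypothesis-free fact about $I$. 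If you replace your $\discard{I}$ step with this scalar-commutativity argument (equivalently, prove $\DKer(I) = \{0,\id{I}\}$ \emph{under the hypothesis} via the same reasoning), the rest of your outline, including your correct final paragraph on why atoms must be kernel states, matches the paper's proof.
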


Atomicity of $\DKer(A)$ is also studied in~\cite[Section~8]{heunen2010quantum}.   

\begin{proof}
Suppose that each $\DKer(A)$ is atomistic with atoms of this form. Then in particular whenever $A$ is a non-zero object, $\DKer(A)$ is non-zero and so contains a non-zero atom. Hence $A$ has a kernel state. 

We now establish the converse. For any non-zero kernel $k \colon K \to A$, by assumption $K$ possess a kernel state $\phi \colon I \to K$. Then $\psi = k \circ \phi$ is a kernel below $k$ in $\DKer(A)$. Hence any atom must be given by a kernel state. 

Finally we claim that any kernel $\psi \colon I \to A$ is indeed an atom in $\DKer(A)$. Suppose that $l \colon L \to A$ is a kernel with $l \leq \psi$, so that $l = \psi \circ i$ for some isometry $i \colon L \to I$. Then if $l=0$ we are done, otherwise let $\eta \colon I \to L$ be any kernel state. Then $i \circ \eta \colon I \to I$ is an isometry also, and since scalars are commutative is then unitary, so that $i$ is also. Hence $l$ and $\psi$ are equal as dagger kernels on $A$.
\end{proof}

Next let us turn to the notion of kernel-purity. It will be helpful to slightly abuse our earlier terminology, and in any dagger-compact category (without requiring discarding) call a state $\psi$ \indef{\dkernel-pure} \index{pure!kernel-pure} when there is a \emps{unitary} $\Img(\psi) \simeq I$, or $\psi=0$, and a morphism $f \colon A \to B$ \indef{\dkernel-pure} when its induced state on $A^* \otimes B$ as in~\eqref{eq:state-bend} is. This coincides with Definition~\ref{def:kernelpure} in theories of interest, and more generally when $\discard{}$ and $\dagger$ interact well; see Section~\ref{sec:CP}.

 Now, we've seen often that pure morphisms satisfy the natural requirement of being closed under composition, though this is not immediate from their definition. As such it is natural to ask when \dkernel-pure morphisms have this property. In fact this corresponds to the following feature of a lattice. 

\paragraph{The Covering Law} In any lattice we say that an element $b$ \indef{covers} an element $a$ if $a \leq b$ and $a \leq c \leq b \implies c = a$ or $c = b$. A lattice satisfies the \indef{covering law}\index{covering law} if for every atom $a$ and element $b$, either $a \leq b$ or $a \vee b$ covers $b$. It can be show that an atomistic orthomodular lattice satisfies the covering law iff for every atom $a$ and element $b$ we have that 
\[
b \wedge (a \vee b^\bot)
\] 
is either an atom or zero~\cite{piron1976foundations,sep-qt-quantlog}. To apply this fact the following will be useful. 

\begin{lemma} \label{lem:dag-ker-image-lemma}
Let $\catC$ be a dagger category with dagger kernels. Then for all dagger kernels $k, l$ on the same object we have
\[
\img(k \circ k^\dagger \circ l) = k \wedge (l \vee k^\bot)
\]
\end{lemma}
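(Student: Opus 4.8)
The plan is to identify the left-hand side $\img(k \circ k^\dagger \circ l)$ as a particular kernel, and then show directly that this kernel coincides with the meet $k \wedge (l \vee k^\bot)$ using the orthomodular lattice structure of $\DKer(A)$ together with the fact that $\img$ of a composite can be computed factor-by-factor. First I would write $p := k \circ k^\dagger$, the projection onto the image of $k$ (so $p$ is self-adjoint idempotent with $\img(p) = k$), and note that $\img(k \circ k^\dagger \circ l) = \img(p \circ l)$. The image of $p \circ l$ is the kernel representing, intuitively, the subspace $p(\Img(l))$; I would establish that $\img(p \circ l) = k \wedge (\img(l) \vee k^\bot)$ by a two-sided factorization argument.

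For the $\leq$ direction: since $p \circ l$ factors through $k$ (as $p = k \circ k^\dagger$), we get $\img(p \circ l) \leq k$ immediately. For $\img(p \circ l) \leq \img(l) \vee k^\bot$: write $\img(l) \vee k^\bot$ and use that $k^\bot = \coker(k)^\dagger$, so that $\coker(\img(l) \vee k^\bot)$ annihilates both $l$ and $k^\bot$; one checks that $\coker(\img(l)\vee k^\bot) \circ p \circ l = \coker(\img(l) \vee k^\bot)\circ (k \circ k^\dagger) \circ l$, and since $\id{} = p + p^\bot$ where $p^\bot = k^\bot \circ (k^\bot)^\dagger$ projects onto $k^\bot$, we can write $l = p\circ l + p^\bot \circ l$; the cokernel kills $p^\bot \circ l$ (it factors through $k^\bot$) and kills $l$ (it factors through $\img(l)$), hence kills $p \circ l$. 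So $\img(p\circ l) \leq \img(l)\vee k^\bot$, giving $\img(p \circ l) \leq k \wedge (\img(l)\vee k^\bot)$.

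For the reverse inequality I would use the orthomodular law, which is the key structural input. Set $m := k \wedge (\img(l) \vee k^\bot)$, so $m \leq k$. I want to show $m \leq \img(p \circ l)$. Since $m \leq \img(l)\vee k^\bot$, any state factoring through $m$ is a "sum" of something through $\img(l)$ and something through $k^\bot$; applying $p = k\circ k^\dagger$ and using $m \leq k$ (so $p$ acts as the identity on $m$, i.e. $p \circ m = m$) while $p$ kills the $k^\bot$-part, one sees that the $\img(l)$-component survives and lands in $\img(p \circ l)$. Making this precise requires the concrete description of joins of dagger kernels and the interaction $p \circ k^\bot = 0$, $p \circ m = m$; the orthomodular identity $\img(l)\vee k^\bot = k^\bot \vee (\img(l)\vee k^\bot)\wedge k$ repackaged appropriately lets one extract exactly the needed factorization. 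I would phrase the whole argument graphically in the dagger-compact calculus where projections $p$, $p^\bot$ and the decomposition $\id{} = p + p^\bot$ make these manipulations transparent.

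\textbf{Main obstacle.} The hard part will be the reverse inequality $k \wedge (\img(l)\vee k^\bot) \leq \img(k\circ k^\dagger \circ l)$, since this is where the orthomodular law genuinely enters and one cannot just push factorizations around formally — one must show that a state lying in $k$ and in $\img(l)\vee k^\bot$ necessarily has its image, after projecting by $p$, already contained in $p(\Img(l))$. This is essentially the statement that $p$ maps the subspace $\img(l) \vee k^\bot$ onto $p(\Img(l))$ faithfully enough, which is exactly an orthomodular-lattice fact about the projection $p$ and is false in non-orthomodular lattices. I expect to need Lemma~\ref{lem:dagzero}~\ref{enum:dag-law} (the $f^\dagger \circ f = 0 \implies f = 0$ rule) and the concrete join description from~\cite{heunen2010quantum} to nail down that the mediating morphism produced by the universal properties is itself the required factorization.
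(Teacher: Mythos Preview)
There is a genuine gap in your approach: you rely on the decomposition $\id{} = p + p^\bot$ with $p = k \circ k^\dagger$ and $p^\bot = k^\bot \circ (k^\bot)^\dagger$, but the lemma is stated for an arbitrary dagger category with dagger kernels, where no addition on morphisms is assumed. Likewise, your plan to ``phrase the whole argument graphically in the dagger-compact calculus'' invokes compactness, which is also not among the hypotheses. Both structures happen to be present in the paper's main examples, but the lemma is stated and proved in strictly greater generality, so an argument that leans on $+$ or on wire-bending does not establish it.

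The paper's proof takes a completely different and much shorter route. It uses the explicit formula $k \wedge l = k \circ \ker(\coker(l) \circ k)$ from~\cite{heunen2010quantum}, together with the definition $m^\bot = \coker(m)^\dagger$, and simply computes both sides. The key intermediate step is that
\[
\coker(k^\bot \vee l) = \coker(k^\dagger \circ l) \circ k^\dagger,
\]
obtained by unwinding $(k^\bot \vee l)^\bot = k \wedge l^\bot$ via the meet formula and taking daggers. Plugging this into the meet formula again yields
\[
k \wedge (k^\bot \vee l) = k \circ \ker\!\big(\coker(k^\bot \vee l) \circ k\big) = k \circ \ker\!\big(\coker(k^\dagger \circ l)\big) = k \circ \img(k^\dagger \circ l) = \img(k \circ k^\dagger \circ l),
\]
the penultimate equality using that $k$ is an isometry and the last that $k$ is a kernel. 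There is no two-sided inequality argument, no explicit appeal to the orthomodular law, and no reliance on projections summing to the identity: it is a few lines of rewriting using only the kernel calculus actually available.
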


\begin{proof}
In~\cite{heunen2010quantum} it is shown that each $\DKer(A)$ has intersections given by $k \wedge l := k \circ \ker(\coker(l) \circ k)$.
Hence for all such $k, l$ we have 
\begin{align*}
\coker(k^\bot \vee l) 
&=
((k^\bot \vee l)^\bot)^\dagger & (\text{def. }\bot)
\\ 
&= 
(k \wedge l^\bot)^\dagger & (\bot \text{ orthocomp.})
\\ 
&=
(k \circ \ker(\coker(l^\bot) \circ k))^\dagger  & (\text{def. }\wedge)
\\  
&=
(k \circ \ker(l^\dagger \circ k))^\dagger & (\bot \text{ orthocomp.})
\\ &=
\ker(l^\dagger \circ k)^\dagger \circ k^\dagger & (\dagger \text{ a functor})
\\ 
&=
\coker(k^\dagger \circ l) \circ k^\dagger & (\dagger \text{ a functor})
\\ 
\\
k \wedge (k^\bot \vee l) 
&=
k \circ \ker(\coker(k^\bot \vee l) \circ k)  &  (\text{def. }\wedge)
\\ 
&=
k \circ \ker(\coker(k^\dagger \circ l)) &  (k \text{ isometry})
\\ 
& =
k \circ \img(k^\dagger \circ l) 
=\img(k \circ k^\dagger \circ l) & (k \text{ kernel})
\end{align*}
\end{proof}

We can now characterise the covering law as follows.

\begin{theorem} \label{thm:closedUnderComp-Equiv}
Let $\catC$ be a dagger compact category with dagger kernels for which every non-zero object has a kernel state and all identity morphisms are \dkernel-pure. The following are equivalent:
\begin{enumerate}[label=\arabic*., ref=\arabic*]
\item \label{en:klnew}
For every kernel state $\psi$ and cokernel $c$, the state $c \circ \psi$ is \dkernel-pure;
\item \label{en:kl3}
For every \dkernel-pure state $\psi$ and cokernel $c$, $c \circ \psi$ is \dkernel-pure;
\item \label{en:kl4}
The collection of \dkernel-pure morphisms is closed under composition.
\item \label{en:kl5}
Each lattice $\DKer(A)$ satisfies the covering law. 
\end{enumerate}
\end{theorem}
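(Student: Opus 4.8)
The plan is to establish the cycle \ref{en:klnew} $\Rightarrow$ \ref{en:kl4} $\Rightarrow$ \ref{en:kl3} $\Rightarrow$ \ref{en:klnew} together with the equivalence \ref{en:klnew} $\Leftrightarrow$ \ref{en:kl5}, which between them connect all four conditions. First I would record some bookkeeping about \dkernel-purity, all immediate by bending wires and using that all identities are \dkernel-pure: a dagger kernel state $k\colon I\to A$ is \dkernel-pure, since $k=\img(k)$ by Lemma~\ref{lem:kernels-image-rule} so $\Img(k)\simeq I$; any isometry $j$ is a \dkernel-pure morphism, as its bent state is, up to a zero-epic scalar, a composite of an isometry with the dagger kernel $\img(\eta)$ and hence again a dagger kernel state (dagger kernels are closed under composition); consequently, since a cokernel $c$ has $c^{\dagger}$ a dagger kernel, every cokernel is a \dkernel-pure morphism; and a \dkernel-pure state is the same as a \dkernel-pure morphism with domain $I$. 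With these, \ref{en:kl4} $\Rightarrow$ \ref{en:kl3} is immediate ($c\circ\psi$ is a composite of \dkernel-pure morphisms, and it is a state), and \ref{en:kl3} $\Rightarrow$ \ref{en:klnew} is trivial since a kernel state is a \dkernel-pure state.

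The conceptual core is \ref{en:klnew} $\Leftrightarrow$ \ref{en:kl5}. By Proposition~\ref{pop:kerstatesAtomistic} each $\DKer(A)$ is an atomistic orthomodular lattice whose atoms are exactly the kernel states on $A$, and cokernels out of $A$ correspond bijectively up to unitary to elements $b\in\DKer(A)$ via $c\mapsto[c^{\dagger}]$ (the inverse sends $m$ to $\coker(m^{\bot})=m^{\dagger}$). Fix a kernel state $\psi\colon I\to A$, representing an atom $a$, and a cokernel $c\colon A\to C$ with $m:=c^{\dagger}$ a dagger kernel representing $b$. Since $m$ is an isometry, hence zero-monic, $\Img(c\circ\psi)\simeq\Img(m\circ m^{\dagger}\circ\psi)$, and Lemma~\ref{lem:dag-ker-image-lemma} applied to the dagger kernels $m$ and $\psi$ gives $[\img(m\circ m^{\dagger}\circ\psi)]=b\wedge(a\vee b^{\bot})$ in $\DKer(A)$. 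An element of $\DKer(A)$, represented by a dagger kernel $K\to A$, is an atom or zero precisely when $K$ is trivial, i.e.\ precisely when that kernel $-$ here $c\circ\psi$ $-$ is \dkernel-pure. Hence \ref{en:klnew} holds iff $b\wedge(a\vee b^{\bot})$ is an atom or zero for every atom $a$ and every element $b$ of every $\DKer(A)$, which by the stated characterisation of the covering law for atomistic orthomodular lattices is exactly \ref{en:kl5}.

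It remains to prove \ref{en:klnew} (equivalently \ref{en:kl5}) $\Rightarrow$ \ref{en:kl4}, and this is the step I expect to be the main obstacle. Given \dkernel-pure $f\colon A\to B$ and $g\colon B\to C$, I would bend both to states $\hat f$ on $A^{*}\otimes B$ and $\hat g$ on $B^{*}\otimes C$; each is, up to a zero-epic scalar, a dagger kernel state, so $\hat f\otimes\hat g$ is up to a scalar a dagger kernel state on $A^{*}\otimes B\otimes B^{*}\otimes C$ (tensors of kernels are kernels, Proposition~\ref{prop:kernels-in-compact}). The bent form $\widehat{g\circ f}$ on $A^{*}\otimes C$ is obtained from $\hat f\otimes\hat g$ by the contraction map $\id{A^{*}}\otimes\varepsilon_{B}\otimes\id{C}$, and the key point is that $\varepsilon_{B}$ factors as $\coim(\varepsilon_{B})$, which is a cokernel, followed by a zero-monic map with domain $\CoIm(\varepsilon_{B})$; moreover $\CoIm(\varepsilon_{B})\simeq\Img(\varepsilon_{B}^{\dagger})$ is trivial because $\varepsilon_{B}^{\dagger}$ is (a dagger-dual variant of) the bent identity $\eta_{B}$, which is \dkernel-pure by hypothesis. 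Tensoring with identities $-$ again using closure of cokernels and of zero-monics under $\otimes$ in the compact setting $-$ exhibits $\id{A^{*}}\otimes\varepsilon_{B}\otimes\id{C}$ as a zero-monic precomposed with a cokernel. Applying \ref{en:klnew} to that cokernel and the dagger kernel state $\hat f\otimes\hat g$ yields a \dkernel-pure state, and precomposition by a zero-monic leaves the (triviality of the) image unchanged; hence $\widehat{g\circ f}$, and so $g\circ f$, is \dkernel-pure.

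The delicate parts of this last step are the careful wire-bending identification of $\widehat{g\circ f}$ with the contracted state, the verification that $\coim(\varepsilon_{B})$ really is a cokernel and that $\CoIm(\varepsilon_{B})$ is trivial so that the residual factor is harmless, and the confirmation that the closure properties of dagger kernels, cokernels and zero-monics under $\otimes$ (and under composition, for the bookkeeping sublemmas) genuinely hold in a general dagger compact category with dagger kernels $-$ routine in $\FHilb$-like categories but needing the abstract compact-category arguments underlying Proposition~\ref{prop:kernels-in-compact}.
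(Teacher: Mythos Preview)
Your proposal is correct and follows essentially the same strategy as the paper: both connect the covering law to kernel-purity of $c\circ\psi$ via Lemma~\ref{lem:dag-ker-image-lemma}, both establish closure under composition by bending $f$ and $g$ to states, tensoring, and contracting with the cap (using that the cap is a scalar times a cokernel, since identities are kernel-pure), and both deduce \ref{en:kl4}$\Rightarrow$\ref{en:kl3} from the fact that cokernels are kernel-pure. The paper pivots through condition~\ref{en:kl3} rather than~\ref{en:klnew}, but this is cosmetic.

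There is one genuine imprecision in your bookkeeping. Your argument that an arbitrary isometry $j$ is kernel-pure claims that $(\id\otimes j)\circ\img(\eta_A)$ is a dagger kernel state ``since dagger kernels are closed under composition'', but $\id\otimes j$ is only a dagger kernel when $j$ is one, not for arbitrary isometries; so your argument as written only establishes that dagger \emph{kernels} are kernel-pure. More importantly, your step from ``$c^{\dagger}$ is a dagger kernel'' to ``$c$ is kernel-pure'' silently assumes that kernel-purity is invariant under the dagger, which is true (via the conjugate functor $(-)_*$, an equivalence preserving images) but needs to be said. The paper sidesteps both issues by writing the bent state of a cokernel $c\colon A\to C$ via the \emph{codomain} cup as $(c^{*}\otimes\id_C)\circ\eta_C$: the transpose $c^{*}\colon C^{*}\to A^{*}$ is itself a dagger kernel, so composing with a kernel preserves the image, giving $\Img(\widehat{c})\simeq\Img(\eta_C)\simeq I$ directly.
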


\begin{proof}
Throughout we use that scalars are zero-cancellative by Lemma~\ref{lem:dagzero}.

\ref{en:klnew} $\implies$ \ref{en:kl3}:
Let $\psi$ be a \dkernel-pure state, say with $\psi = k \circ r$ for some scalar $r$ and kernel state $k$, and let $\phi = c^\dagger \circ \psi$. Then since scalars are zero-cancellative we have $\img(\phi) = \img(c^\dagger \circ k \circ r) = \img(c^\dagger \circ k)$ which is either zero or trivial. Hence $\phi$ is kernel-pure. 

\ref{en:kl3} $\implies$ \ref{en:klnew}: Any kernel state is \dkernel-pure. 

\ref{en:kl3} $\implies$ \ref{en:kl4}:
Let $f \colon A \to B, g \colon B \to C$ be non-zero \dkernel-pure morphisms. Since $\id{B}$ is \dkernel-pure we have 
\[
\scalebox{0.8}{\input{./figures/kpurecomp1.tikz}}
\]
for some dagger cokernel $c \colon B^* \otimes B \to I$ and non-zero scalar $r$. 
Then since \dkernel-pure states and cokernels are closed under $\otimes$, and since scalars are cancellative we have $\img(\psi) = \img(r \cdot \psi)$ for all non-zero scalars $r$, the morphisms 
\[
\scalebox{0.8}{\input{./figures/kpurecomp3a.tikz}}
\qquad
\scalebox{0.8}{\input{./figures/kpurecomp2.tikz}}
\]
are a \dkernel-pure state and cokernel on $A^* \otimes B \otimes B^* \otimes C$, respectively. Then
\[
\scalebox{0.8}{\input{./figures/kpurecomp4.tikz}}
\]
Hence by assumption the right-hand state is \dkernel-pure, so that $g \circ f$ is also.

\ref{en:kl4} $\implies$ \ref{en:kl3}: 
We claim that all dagger cokernels are \dkernel-pure, so that this is a special case. Let $c \colon A \to C$ be a non-zero dagger cokernel. Then we have that 
\begin{equation} \label{eq:coker-bend}
\scalebox{0.8}{\input{./figures/cokernel-bend.tikz}}
\end{equation}
The morphism above $\tinycup_C$ on the right-hand side above is a dagger kernel. 
Now in general for any kernel $k$ and morphism $f$, letting $f = \img(f) \circ e$ for some zero-epi $e$, we have that 
\[
\Img(k \circ f) = \Img(k \circ \img(f) \circ e)
= \Img(k \circ \img(f))
= \Img(f)
\]
Hence in particular the state~\eqref{eq:coker-bend} has image given by $\Img(\tinycup_C)$. But this is simply $I$, since $\id{C}$ is \dkernel-pure by assumption. 

\ref{en:kl5} $\iff$ \ref{en:kl3}: By Proposition~\ref{pop:kerstatesAtomistic} each lattice $\DKer(A)$ is atomistic with atoms being the kernel states. Now for any kernel state $\psi$ and kernel $k$ on the same object we have 
\[
\img(k \circ k^{\dagger} \circ \psi) = k \wedge (\psi \vee k^\bot)
\]
by Lemma~\ref{lem:dag-ker-image-lemma}. The covering law then states precisely that this is either zero or an atom, i.e.~that $k \circ k^{\dagger} \circ \psi$ is \dkernel-pure. But since $\img(k^\dagger \circ \psi) = k^\dagger \circ \img(k \circ k^\dagger \circ \psi)$ and cokernels are precisely the morphisms $k^\dagger$, this is equivalent to~\ref{en:kl3}.
\end{proof}

The preservation of atoms by projections appears as one of the requirements in Alfsen and Schult'z reconstruction of Jordan algebra state spaces from among lattices, and in that context is shown to be equivalent to the covering law~\cite[Proposition~9.7]{alfsen2012geometry}.

\begin{examples}
All of the conditions of Theorem~\ref{thm:closedUnderComp-Equiv} are satisfied in the following categories, providing each $\DKer(A)$ with atomicity and the covering law.
\begin{exampleslist}
\item 
In $\FHilb$, every morphism is kernel-pure. Here $\DKer(\hilbH)$ is the lattice of subspaces of the Hilbert space $\hilbH$, which is indeed atomistic via the states $\psi \colon I \to \hilbH$ and satisfies the covering law~\cite{piron1976foundations}. 
\item 
The same goes for $\Quant{}$, where all kernels are induced from $\FHilb$. More broadly, in $\FCStar$ each lattice $\DKer(A)$ on $A=\bigoplus^n_{i=1} B(\hilbH_i)$ inherits these properties from each $B(\hilbH_i)$, as do those of $\Class$ similarly. 
\item 
In $\Rel$ each $\DKer(A)$ is the Boolean lattice of subsets of $A$, which satisfies these properties, and indeed atoms here are kernel states, i.e.~singletons $a \in A$. 
\end{exampleslist}
\end{examples}

\section{Purification and Daggers} \label{sec:CP}

In theories containing both purification and a dagger on their morphisms it is natural to expect these features to interact well. A notion of purification using the dagger which applies to both quantum and classical theory is considered in~\cite{selby2018reconstructing}. Here we will focus on the behaviour of purifications in $\Quant{}$ with respect to the dagger, which are captured by the following notion due to Coecke.

\begin{definition}~\cite{coecke2008axiomatic}
Let $\catC$ be a dagger compact category with discarding. An \deff{environment structure} \index{environment structure}\label{not:envstruc} on $\catC$ is a dagger compact subcategory $\catC_\prepure$ within which every morphism of $\catC$ has a dilation, and such that all morphisms $f \colon A \to B$,  $g \colon A \to C$ in $\catC_\prepure$ satisfy the \deff{CP axiom}\index{CP axiom}:
\begin{equation} \label{eq:CP axiom} \tag{CP}
\scalebox{0.8}{\input{./figures/CP-axiom.tikz}}
\end{equation}
\end{definition}

\begin{examples}
$\Quant{}$ has an environment structure given by its pure subcategory $\FHilbP$. More generally any category of the form $\CPM(\catA)$ has an environment structure given by its subcategory $\Dbl{\catA}$, as in Example~\ref{examples:Purif}~\ref{examples:purif-in-CPM}, see~\cite{coecke2008axiomatic}.
\end{examples}

Crucially, the converse of this example holds; this notion of purification in fact captures precisely those categories arising from the $\CPM$ construction~\cite[Theorem~5.1]{coecke2008axiomatic}.

\begin{proposition}
Let $\catC$ be a dagger compact category with discarding and an environment structure $\catC_\prepure$. Then there is an equivalence of dagger monoidal categories with discarding $\CPM(\catC_\prepure) \simeq \catC$ given by
\[
\scalebox{0.8}{\input{./figures/CPM-map-smalleri.tikz}}
\mapsto
\scalebox{0.8}{\input{./figures/env_struc_map.tikz}}
\]
\end{proposition}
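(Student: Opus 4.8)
The plan is to construct the dagger monoidal functor $F \colon \CPM(\catC_\prepure) \to \catC$ by the displayed assignment and its inverse, and check they form an equivalence. First I would observe that the displayed map is well-defined: a morphism in $\CPM(\catC_\prepure)$ is by definition one of the form $(f^* \otimes \id{B}) \circ (\id{A} \otimes \eta) \circ \dots$, concretely $\tikzfig{CPM-map}$ for some $f \colon A \to C \otimes B$ in $\catC_\prepure$, and the assignment sends this to the morphism of $\catC$ obtained by discarding the $C$-wire, i.e.~$(\id{B} \otimes \discard{C}) \circ \hat{f}$ where $\hat{f}$ is $f$ composed with its conjugate via the cup. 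The key subtlety is that the same $\CPM$-morphism may arise from different choices of $f$; this is exactly where the \eqref{eq:CP axiom} axiom is needed, since two such $f, g \in \catC_\prepure$ give the same $\CPM$-morphism iff $\tikzfig{CP-axiom}$ holds (after bending wires), which is precisely the condition that they have the same image under $F$. So well-definedness and injectivity on the relevant morphisms are two sides of the same coin.

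Next I would check $F$ is a dagger monoidal functor: preservation of identities and composition is a routine diagrammatic computation using the snake equations and the fact that $\catC_\prepure$ is dagger compact (so conjugation is functorial); preservation of $\otimes$ follows since cups and caps on $A \otimes B$ decompose as tensors of those on $A$ and $B$; and preservation of the dagger is immediate from the symmetry of the $\CPM$-form under flipping. Preservation of discarding uses the definition $\tikzfig{CPM-discarding}$ of discarding in $\CPM(\catC_\prepure)$ together with the rule \eqref{eq:discard-compact} relating $\discard{A}$ and $\discard{A^*}$ in $\catC$. That $F$ is the identity on objects is built into the definition of $\CPM$.

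For fullness and faithfulness I would argue as follows. Faithfulness: if $F$ sends two $\CPM$-morphisms built from $f$ and $g$ to the same morphism of $\catC$, then bending the $C$-output wires to inputs and using that $\catC_\prepure$ is an environment structure, the \eqref{eq:CP axiom} axiom applied to (dilations of) $f$ and $g$ forces the two $\CPM$-morphisms to be equal --- this is the core of the argument, essentially Theorem 5.1 of~\cite{coecke2008axiomatic}. Fullness: given any morphism $h \colon A \to B$ in $\catC$, by the environment structure it has a dilation $g \colon A \to B \otimes C$ with $g \in \catC_\prepure$, i.e.~$h = (\id{B} \otimes \discard{C}) \circ g$; bending the $A$-input of $g$ down via a cup produces a state, which one rewrites (using compactness in $\catC_\prepure$) into the $\CPM$-form $\tikzfig{CPM-map}$ for a suitable morphism in $\catC_\prepure$, and $F$ of this is $h$. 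Since $F$ is full, faithful and bijective (indeed identity) on objects, it is an isomorphism of categories, hence in particular an equivalence of dagger monoidal categories with discarding.

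The main obstacle I expect is the faithfulness step: showing that two different presentations $f, g \in \catC_\prepure$ of a single morphism of $\catC$ (obtained by discarding an environment) give literally the same morphism of $\CPM(\catC_\prepure)$. This is where the \eqref{eq:CP axiom} axiom does all the work, and the argument requires care in translating between "equal after discarding in $\catC$" and "equal as $\CPM$-morphisms", passing through bent-wire forms and invoking that $\catC_\prepure$ is closed under the relevant compact-closed operations. Everything else --- functoriality, monoidality, compatibility with $\dagger$ and $\discard{}$, and fullness --- is a sequence of diagrammatic manipulations with the graphical calculus for dagger compact categories with discarding, which I would carry out but not belabour here.
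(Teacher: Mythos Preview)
Your proposal is correct and follows essentially the standard argument; note that the paper itself does not supply a proof but simply cites \cite[Theorem~5.1]{coecke2008axiomatic}, which is precisely the argument you have outlined. Your identification of the two directions of the \eqref{eq:CP axiom} axiom as giving well-definedness and faithfulness respectively, and of the dilation property as giving fullness, is exactly right.
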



 Hence the CP axiom is a powerful and useful one for singling out quantum theory, with purification ensuring that a dagger theory has the quantum-like form $\CPM(\catA)$.



\subsection{Deriving the CP axiom}
At first glance the rule~\eqref{eq:CP axiom} appears rather ad hoc. Given its usefulness,  it would be desirable to understand how this axiom arises from more natural principles.

Firstly, note that it tells us that any causal isomorphism in $\catC_\prepure$ is unitary. In fact this ensures half of the axiom under some familiar conditions. Let us say that a subcategory $\catC_\prepure$ of $\catC$ \indef{has causal states} when every non-zero object of $\catC$ has a causal isometric state in $\catC_\prepure$.

\begin{proposition} \label{prop:halfOfCPM} 
Let $\catC$ be a compact dagger theory with essentially unique dilations with respect to a dagger compact subcategory $\catC_\prepure$ which has causal states, and suppose that all causal isomorphisms in $\catC_\prepure$ are unitary.
Then $\catC_\prepure$ satisfies the direction $\impliedby$ of the CP axiom.
\end{proposition}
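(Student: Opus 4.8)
The plan is to establish only the implication $\impliedby$ of the biconditional appearing in the CP axiom~\eqref{eq:CP axiom}: assuming the diagram-equation which says that the pure maps $f\colon A\to B$ and $g\colon A\to C$ of $\catC_\prepure$ have the same ``doubled-and-discarded'' form, I want to produce the morphism witnessing the other side of the axiom. Since $\catC$ is compact I may bend input wires and work instead with the corresponding states $\tilde f\colon I\to A^*\otimes B$ and $\tilde g\colon I\to A^*\otimes C$, which still lie in $\catC_\prepure$; their conjugates $\overline{\tilde f},\overline{\tilde g}$ do too, as $\catC_\prepure$ is a dagger compact subcategory, and hence so do the conjugate-doubled states $\tilde f\otimes\overline{\tilde f}$ and $\tilde g\otimes\overline{\tilde g}$, which by hypothesis are two $\catC_\prepure$-purifications of one and the same state on $A^*\otimes A$.

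First I would bring these two purifications to a common ancilla object. This is exactly the role of the assumption that $\catC_\prepure$ has causal states: picking a causal isometric pure state for each non-zero object, I tensor the $f$-double with states picking out $C\otimes C^*$ and the $g$-double with states picking out $B\otimes B^*$. Because these padding states are \emph{isometric} the discarded marginals are left unchanged, and because they are causal and lie in $\catC_\prepure$ the padded states remain $\catC_\prepure$-purifications of the same marginal, now with a common ancilla $E\cong(B\otimes C)\otimes(B\otimes C)^*$. (If $B$ or $C$ is a zero object then $f$ or $g$ is zero and both sides of the axiom hold trivially, so this case is dispatched separately.) Now essential uniqueness of purification, equation~\eqref{eq:EU}, applies and produces a causal isomorphism $U$ on $E$, lying in $\catC_\prepure$, intertwining the two padded doubles; by the standing hypothesis that every causal isomorphism of $\catC_\prepure$ is unitary, $U$ is unitary.

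It then remains to feed $U$ back into the graphical calculus to recover precisely the equation on the other side of the CP axiom: one unbends the padding states (using that an isometric causal state composed with its discarding is the identity scalar, together with~\eqref{eq:discard-compact} for the dualised legs), and rewrites the intertwining relation — cancelling $U^\dagger\circ U=\id{E}$ where the two doubled legs meet — into the required shape. The step I expect to be the main obstacle is this final manipulation together with the common-ancilla reduction: equation~\eqref{eq:EU} is only available for purifications sharing their ``system'' object, so the whole argument turns on the clean padding trick, and then on massaging the abstract unitary $U$ (which need not itself have the ``doubled'' form $V\otimes\overline V$) into whatever concrete shape the right-hand diagram of the CP axiom demands, with care taken over conjugates and daggers when the wires are finally unbent.
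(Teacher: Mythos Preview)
Your padding-to-a-common-ancilla idea and invocation of essential uniqueness are right, and you correctly flag the final manipulation as the obstacle. Unfortunately that obstacle is genuine, and it is caused by the doubling step, which is both unnecessary and harmful. After forming $\tilde f\otimes\overline{\tilde f}$ and padding, essential uniqueness gives you $g''=(\id{A^*\otimes A}\otimes U)\circ f''$ with $U$ unitary only on the ancilla $E$. To recover $f^\dagger\circ f$ from the doubled state you would have to \emph{cap} the $B,B^*$ legs rather than discard them, and a generic unitary $U$ on $E$ does not preserve the cap. If instead you compose the intertwining relation with its own dagger, you obtain (after unbending both $A^*$ and $A$ and stripping the isometric padding) only the doubled equation $(f^\dagger f)\otimes(f^\dagger f)_*=(g^\dagger g)\otimes(g^\dagger g)_*$, which in a general dagger compact category does not force $f^\dagger f=g^\dagger g$. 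Your own caveat that $U$ need not have the form $V\otimes\overline V$ is precisely the point at which the argument breaks; nothing in the hypotheses lets you repair it. (A minor slip along the way: the marginals are unchanged because the padding states are \emph{causal}; their being isometries is what you need later, at the dagger step.)

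The paper's argument avoids all of this by never bending or doubling. One simply observes that $f\colon A\to B$ and $g\colon A\to C$ are themselves $\catC_\prepure$-dilations of the common effect $\discard{B}\circ f=\discard{C}\circ g\colon A\to I$. Padding $f$ with a causal isometric pure state $\phi\colon I\to C$ and $g$ with $\psi\colon I\to B$ yields two pure morphisms $A\to B\otimes C$ with equal composite under $\discard{B\otimes C}$; essential uniqueness then gives a unitary $U$ on the \emph{whole} output $B\otimes C$ with $(\text{padded }g)=U\circ(\text{padded }f)$. Now taking daggers and composing cancels $U^\dagger U$ directly, and because $\phi,\psi$ are isometries one reads off $g^\dagger\circ g=f^\dagger\circ f$ as endomorphisms of $A$.
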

\begin{proof}
Let $f \colon A \to B$ and $g \colon A \to C$  belong to $\catC_\prepure$ with $\discard{B} \circ f = \discard{C} \circ g$. If $B$ or $C$ are zero objects then $\discard{B} \circ f = 0$ and so $f = 0$ and $g = 0$ similarly, yielding the result. Otherwise let $\psi \colon I \to B$ and $\phi \colon I \to C$ be causal isometries in $\catC_\prepure$. Then 
\[
\scalebox{0.8}{\input{./figures/CPMhalf1.tikz}}
\quad
\text{ and so }
\quad
\scalebox{0.8}{\input{./figures/CPMhalf2.tikz}}
\]
for some unitary $U \in \catC_\prepure$. Since $U$, $\psi$ and $\phi$ are all isometries,  composing each morphism above with its dagger gives $f^\dagger \circ f = g^\dagger \circ g$.
\end{proof}

\paragraph{Homogeneous Kernels}
In fact we can deduce the presence of both essentially unique dilations and the CP axiom from some of our earlier principles. Beyond these we will merely need the following weakening of essential uniqueness applying only to kernels and which holds even classically.

\begin{definition} \label{def:ker-homog}
We say that a theory $\catC$ has \deff{homogeneous} kernels \index{homogeneous kernels} when it has causal kernels and for any pair of causal kernels of the same type $k, l \colon A \to B$ there exists a causal isomorphism $U$ for which the following commutes:
\[
\begin{tikzcd}
A \rar{k} \arrow[dr, "l",swap] & B \dar{U}
\\ & B
\end{tikzcd}
\]
\end{definition}

\begin{examples} 
As the name suggests, homogeneity of kernels requires objects to be suitably `uniform'.
\begin{exampleslist}
\item 
$\Quant{}$ has homogeneous kernels. Here we've seen that kernels are $\otimes$-pure and so are homogeneous by essential uniqueness. 
\item 
$\KlDT$ and $\Rel$ have homogeneous kernels.
In either case, any pair of causal kernels $k, l \colon K \to A$ may be seen as injections of the set $K$ into the set $A$. By the axiom of choice there then is an isomorphism $U$ on $A$ exchanging $k$ and $l$, which induces such a causal isomorphism in either case. Similarly $\Class$ has homogeneous kernels in the same way. 
\item 
Kernels in the quantum-classical theory $\FCStar$ are not homogenous. For example, consider the biproduct $A = \mathbb{C} \biprod B(\mathbb{C}^2)$ and let $\psi$ be a causal pure state of $B(\mathbb{C}^2)$. The states $\coproj_1$ and $\pcoproj_2 \circ \psi$ are both kernels by pure exclusion. But any isomorphism $U$ with $U \circ \coproj_2 \circ \psi = \pcoproj_1$ must send elements of $B(\mathbb{C}^2)$ to a mixture of those from both sectors, which is easily seen to be a contradiction since $B(\mathbb{C}^2)$ is simple.
\end{exampleslist}
\end{examples}

\begin{theorem} \label{thm:DeduceCPM}
Let $\catC$ be a compact dagger theory with dagger kernels which are all causal, and a dagger compact subcategory $\catC_\prepure$ containing all isomorphisms and kernels. Suppose that the internal isomorphism property holds and that every causal morphism in $\catC_\prepure$ is an isometry. 
\begin{enumerate}
\item \label{enum:alt-pure-descrip}
A morphism $f$ belongs to $\catC_\prepure$ iff in the commutative diagram 
\begin{equation} \label{eq:capture-pure}
\begin{tikzcd}
A \dar[swap]{\coim(f)} \rar{f} & B \\
\CoIm(f) \rar[dashed,swap]{\bar{f}} & \Img(f) \uar[swap]{\img(f)} 
\end{tikzcd} 
\end{equation}
the unique morphism $\bar{f}$ is an isomorphism.
\item \label{enum:EUP}
Suppose that $\catC$ has homogeneous kernels and that $\catC_\prepure$ has causal states. Then $\catC_\prepure$ forms an environment structure on $\catC$ satisfying essential uniqueness.
\end{enumerate}
\end{theorem}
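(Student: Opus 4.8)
The plan is to establish the description of $\catC_\prepure$ in Part~1 first and then deploy it, together with minimal dilations and homogeneity, in Part~2.

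For Part~1 the direction $(\Leftarrow)$ is immediate: $\catC_\prepure$ is closed under the three factors of \eqref{eq:capture-pure}. Indeed $\img(f)$ is a dagger kernel, hence in $\catC_\prepure$ by hypothesis; $\coim(f)=\coker(\ker(f))=\ker(\ker(f)^\dagger)^\dagger$ lies in $\catC_\prepure$ since that subcategory contains all kernels and is closed under $\dagger$; and $\bar f$ is assumed an isomorphism, hence in $\catC_\prepure$. So $f=\img(f)\circ\bar f\circ\coim(f)\in\catC_\prepure$. For $(\Rightarrow)$, suppose $f\in\catC_\prepure$. Then $\img(f)$ is an isometry and $\coim(f)$ a co-isometry, both in $\catC_\prepure$, so $\bar f=\img(f)^\dagger\circ f\circ\coim(f)^\dagger\in\catC_\prepure$. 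From the image/coimage factorisations $\bar f$ is always zero-epic and zero-monic, so $e:=\discard{}\circ\bar f$ has $\ker(e)=\ker(\bar f)=0$, i.e.\ $e$ is internal. By Proposition~\ref{prop:Int-Isom-Deduce} the internal isomorphism property gives $\catC$ strongly compatible minimal dilations for effects, so $\mdil{e}=\med_e$ with $\med_e$ an isomorphism. As $\bar f$ dilates $e$ it factors $\bar f=x\circ\med_e$, and using $\discard{}\circ\med_e=e$ together with epicness of $\med_e$ one gets $\discard{}\circ x=\discard{}$, so $x$ is causal, while $x=\bar f\circ\med_e^{-1}\in\catC_\prepure$. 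A causal morphism of $\catC_\prepure$ is an isometry, and $x$ is zero-epic (since $\bar f$ is and $\med_e$ is invertible); a zero-epic isometry is unitary, because $\mathrm{id}-xx^\dagger$ is a projection annihilating $x$. Hence $\bar f=x\circ\med_e$ is an isomorphism.

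For Part~2 we must verify the three clauses of an environment structure together with essential uniqueness. Clause~(i), that $\catC_\prepure$ is a dagger compact subcategory, is a hypothesis. For clause~(ii), the internal isomorphism property yields (Proposition~\ref{prop:Int-Isom-Deduce}) strongly compatible minimal dilations of effects, and for an effect $e$ we have $\mdil{e}=\med_e\circ\coim(e)$ with $\med_e$ an isomorphism and $\coim(e)=\coker(\ker(e))\in\catC_\prepure$, so $\mdil{e}\in\catC_\prepure$; by compactness minimal dilations of arbitrary morphisms exist, obtained by bending wires, and these again lie in the dagger compact subcategory $\catC_\prepure$. Thus $\catC$ has minimal dilations with respect to $\catC_\prepure$. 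For clause~(iii), the CP axiom: one direction is exactly Proposition~\ref{prop:halfOfCPM}, whose hypotheses now all hold --- $\catC_\prepure$ has causal states by assumption, every causal morphism of $\catC_\prepure$ is an isometry so every causal isomorphism there is unitary, and essential uniqueness is established below --- while the other direction follows from essential uniqueness: if $f\colon A\to B$, $g\colon A\to C$ in $\catC_\prepure$ satisfy $\discard{B}\circ f=\discard{C}\circ g$, then both are pure dilations of this common effect (when $B$ or $C$ is a zero object both vanish by \eqref{eq:pos-eq}), and feeding in causal pure states of $B$ and $C$ and applying the argument of Lemma~\ref{lemma:purif_env_connect} relates $f$ and $g$ by a causal pure isomorphism, whence the axiom.

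The essential uniqueness of dilations is the main obstacle, and is precisely where homogeneity of kernels and the causal states in $\catC_\prepure$ are needed. I would first reduce, using compactness, to the case of pure states $\psi,\phi\colon I\to D\otimes C$ in $\catC_\prepure$ with the same $D$-marginal $\omega$, the goal being a causal isomorphism $U\colon C\to C$ in $\catC_\prepure$ with $\phi=(\id{D}\otimes U)\circ\psi$. Both states dilate $\omega$ and so factor through its minimal dilation $\mdil{\omega}\in\catC_\prepure$; moreover $\discard{D\otimes C}\circ\psi=\discard{D}\circ\omega$ forces $\ker(\psi)=\ker(\omega)=\ker(\phi)$, so $\coim(\psi)=\coim(\phi)$, and combining this with Part~1 (both $\bar\psi,\bar\phi$ are isomorphisms, and $\img$ multiplies over $\otimes$ by Proposition~\ref{prop:kernels-in-compact}) identifies $\psi$ and $\phi$, up to the isomorphisms supplied by Part~1, with dagger kernels of $D\otimes C$ having equal $\discard{C}$-marginal. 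Homogeneous kernels then provides a causal isomorphism matching these; the delicate remaining point --- the real content of the argument --- is to show this isomorphism may be taken of product form $\id{D}\otimes U$ with $U$ causal and in $\catC_\prepure$, which one extracts from the marginal condition together with the uniqueness clause of minimal dilations (cf.\ the Lemma following Definition~\ref{def:min-dilation}) and the criterion, from Part~1, that a morphism of $\catC$ lies in $\catC_\prepure$ exactly when its induced map $\bar{(-)}$ is an isomorphism.
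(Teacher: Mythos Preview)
Your Part~1 has a real gap at the final step. You argue that a zero-epic isometry $x$ is unitary ``because $\mathrm{id}-xx^\dagger$ is a projection annihilating $x$'', but the hypotheses give no addition, let alone negatives, so this expression is unavailable. The paper reaches the same point (a zero-monic, zero-epic $\bar f\in\catC_\prepure$) but proceeds by a self-dual argument: the internal isomorphism property applied \emph{directly} gives an automorphism $g$ of the domain with $\discard{}\circ\bar f=\discard{}\circ g$; then $\bar f\circ g^{-1}\in\catC_\prepure$ is causal, hence an isometry, so $\bar f^\dagger\bar f=g^\dagger g$ is invertible and $\bar f$ is split monic. Applying the same reasoning to $\bar f^\dagger$ (also zero-monic, zero-epic, and in $\catC_\prepure$) shows $\bar f$ is split epic, hence an isomorphism. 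Your route through Proposition~\ref{prop:Int-Isom-Deduce} and minimal dilations is not wrong in spirit, but it loses the self-duality that makes the conclusion go through without subtraction.

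In Part~2 your reduction for essential uniqueness stops short. By compactness one may bend \emph{all} wires except the environment $C$ to the input, reducing to $f,g\colon A\to B$ in $\catC_\prepure$ with $\discard{B}\circ f=\discard{B}\circ g$, seeking a unitary $V\colon B\to B$ with $g=V\circ f$. There is then no tensor factor at all, and the ``delicate remaining point'' about product form $\id_D\otimes U$ simply evaporates. From here the paper takes $c:=\coim(f)=\coim(g)$ (equal since the marginals agree), writes $f=\img(f)\circ\bar f\circ c$ and $g=\img(g)\circ\bar g\circ c$, notes $\bar f,\bar g$ are isomorphisms by Part~1, sets $U=\bar g\circ\bar f^{-1}$ (causal, hence unitary), and then applies homogeneity to the two dagger kernels $\img(f)$ and $\img(g)\circ U$ of type $\Img(f)\to B$ to produce $V$.

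Your CP-axiom paragraph is also off: you correctly attribute one direction to Proposition~\ref{prop:halfOfCPM}, but your ``other direction'' again starts from $\discard{B}\circ f=\discard{C}\circ g$, which is the same hypothesis. The missing direction is $f^\dagger f=g^\dagger g\Rightarrow\discard{}\circ f=\discard{}\circ g$. The paper handles it by observing (via Lemma~\ref{lem:dagzero}) that $f^\dagger f=g^\dagger g$ forces $\ker(f)=\ker(g)$, hence a common coimage $c$; then $\bar f^\dagger\bar f=\bar g^\dagger\bar g$, so $U=\bar g\circ\bar f^{-1}$ is unitary, hence a dagger kernel, hence causal, from which the marginal equality follows directly.
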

\begin{proof}
\ref{enum:alt-pure-descrip}
By assumption any such morphism belongs to $\catC_\prepure$. Conversely, a morphism $f$ belongs to $\catC_\prepure$ precisely when $\bar{f}$ does, since either morphism may be obtained from the other by composing with (co)kernels. Hence it suffices to show that any $f \in \catC_\prepure$ which is both a zero-epi and zero-mono is an isomorphism. 

Now in this case $\discard{} \circ f$ is again zero-mono and so by the internal isomorphism property we have $\discard{} \circ f = \discard{} \circ g$ for some automorphism $g$. Then $f \circ g^{-1} \in \catC_\prepure$  is causal, and hence an isometry. Then $f^\dagger \circ f = g^\dagger \circ g$, making $f$ split monic. Dually $f^{\dagger}$ is also split monic, making $f$ an isomorphism.   

\ref{enum:EUP}
By the internal isomorphism property every effect has a dilation $f \circ \coim(e)$, for some automorphism $f$, which belongs to $\catC_\prepure$. Hence by compactness every morphism has a dilation in $\catC_\prepure$. We now verify essential uniqueness.

By compactness it suffices to consider $f, g \colon A \to B$ in $\catC_\prepure$ with $\discard{} \circ f = \discard{} \circ g$. In this case $f \circ h = 0 \iff g \circ h =0$ for all morphisms $h$ and so we may take $c := \coim(f) = \coim(g) $. Then writing $f = \img(f) \circ \bar{f} \circ c$ as above, and $g = \img(g) \circ \bar{g} \circ c$ similarly, we have 
\[
\discard{} \circ \bar{f} \circ c
=
\discard{} \circ \img(f) \circ \bar{f} \circ c 
=
\discard{} \circ f
=
\discard{} \circ g
=
\discard{} \circ \bar{g} \circ c
\]
and so $\discard{} \circ \bar{f} = \discard{} \circ \bar{g}$ since $c$ is epic. By the first part $\bar{f}$ is an isomorphism. Then $U = \bar{g} \circ \bar{f}^{-1}\colon \Img(f) \to \Img(g)$ is a causal isomorphism, and so unitary. Then $\img(g) \circ U$ and $\img(f)$ are both dagger kernels of type $\Img(f) \to B$, so by homogeneity and our assumptions there is some unitary $V$ with $V \circ \img(f) = \img(g) \circ U$. Then as desired we have $V \circ f = g$ since the following diagram commutes: 
\[
\begin{tikzcd}[row sep = tiny]
& & \Img(f) \arrow[r,"\img(f)",swap] \arrow[dd,"U"]  & B \arrow[dd,"V"]\\ 
A \arrow[rrru,bend left=30,"f"] \arrow[rrrd,bend right=30,"g"]
\rar{c} & C \arrow[ur,"\bar{f}"]  \arrow[dr,swap,"\bar{g}"] & \\ 
& & \Img(g) \arrow[r,"\img(g)"] & B
\end{tikzcd}
\]
Next let us establish~\eqref{eq:CP axiom} for all $f \colon A \to B$ and $g \colon A \to C$ in $\catC_\prepure$. By assumption all causal isomorphisms are unitary and so if $\discard{} \circ f = \discard{} \circ g$ then $f^\dagger \circ f = g^\dagger \circ g$ by Proposition~\ref{prop:halfOfCPM}. Conversely if this holds then using Lemma~\ref{lem:dagzero} 
\begin{align*}
f \circ h = 0 &\iff h^{\dagger} \circ f^{\dagger} \circ f \circ h = 0
\\ &\iff h^{\dagger} \circ g^{\dagger} \circ g \circ h = 0
\iff g \circ h = 0
\end{align*}
and so again we may take $c := \coim(f) = \coim(g)$ and write $f$ and $g$ in terms of $\bar{f}$ and $\bar{g}$ as before. It follows immediately that  
$\bar{f}^{\dagger} \circ \bar{f} 
= 
\bar{g}^{\dagger} \circ \bar{g}$
. This makes $U: = \bar{g} \circ \bar{f}^{-1}$ unitary and so a dagger kernel and hence causal, giving 
\begin{align*}
\discard{} \circ g 
&=
\discard{} \circ \img(g) \circ \bar{g} \circ c
\\ &=
\discard{} \circ \bar{g} \circ c 
\\ &=
\discard{} \circ U \circ \bar{f} \circ c 
\\
&=
\discard{} \circ \bar{f} \circ c 
=
\discard{} \circ f
\end{align*}
\end{proof}

\begin{remark}
The first part of this result tells us that in such a theory there is essentially one notion of purity closed under composition and containing all kernels, provided by~\eqref{eq:capture-pure}. This bares similarities to Westerbaan's notion of purity in effectuses~\cite[\S 3.4]{westerbaan2018dagger}. 
\end{remark}

In particular if we consider when $\catC_\pure$ is the collection of $\otimes$-pure morphisms in a $\otimes$-pure theory $\catC$ the above result yields essentially unique purification in $\catC$. If we instead assume this principle then we may deduce~\eqref{eq:CP axiom} by simply requiring causal isomorphisms to respect the dagger, as follows.

\begin{corollary}
Let $\catC$ be a compact dagger theory with dagger kernels. Suppose that $\catC$ satisfies the internal isomorphism property, essentially unique purification, and that $\catC_\pure$ forms a monoidal subcategory. Then $\catC_\pure$ forms an environment structure on $\catC$ iff all causal isomorphisms in $\catC$ are unitary and all dagger kernels are causal. 
\end{corollary}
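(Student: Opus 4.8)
The plan is to derive the corollary from Theorem~\ref{thm:DeduceCPM}\ref{enum:EUP} by verifying that the hypotheses of that theorem follow from the stated assumptions, and conversely that the existence of an environment structure forces the two `only if' conditions. For the forward direction, suppose that all causal isomorphisms in $\catC$ are unitary and all dagger kernels are causal. We take $\catC_\prepure := \catC_\pure$, which by hypothesis is a monoidal subcategory of $\catC$, and we must check the remaining conditions of Theorem~\ref{thm:DeduceCPM}. First, $\catC_\pure$ contains all isomorphisms: any causal isomorphism is unitary hence a dagger kernel, and dagger kernels are $\otimes$-pure by Lemma~\ref{lem:whatswhole}\ref{enum:ker-whole} (since in a compact category with kernels, kernels are $\otimes$-compatible by Proposition~\ref{prop:kernels-in-compact}\ref{enum:kernels-new-rule}), so all causal isomorphisms lie in $\catC_\pure$; a general isomorphism $g$ need not be causal, but we can still place it in $\catC_\pure$ by noting that any isomorphism is in particular a split mono and a split epi, but more directly: we already know $\catC$ is $\otimes$-pure (identities are pure since they are dagger kernels of the zero map — or simply because this is part of the definition of $\catC_\pure$ being a monoidal subcategory containing identities), and then $\id{}$ pure plus closure under $\circ$ is not quite enough, so instead I will observe that $\catC_\pure$ contains all isomorphisms directly: if $g$ is an isomorphism then $\coim(g) = \id{}$ and $\img(g) = \id{}$, so writing $g$ as a dilation via the trivial object shows $g$ is $\otimes$-pure. (Alternatively this is immediate from Lemma~\ref{lem:copure}-type reasoning or from essential uniqueness once established; I will pick the cleanest available.) Second, $\catC_\pure$ contains all kernels, again because dagger kernels are $\otimes$-pure by Lemma~\ref{lem:whatswhole}\ref{enum:ker-whole}. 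Third, every causal morphism in $\catC_\pure$ is an isometry: this is where I need essentially unique purification together with the internal isomorphism property. A causal $\otimes$-pure morphism $f \colon A \to B$ has $\discard{} \circ f = \discard{A}$, and by Corollary~\ref{cor:rev-dilation} (using that non-zero objects have causal pure states, which follows from normalisation — but I should be careful whether that's available) $f$ has a reversible dilation; more cleanly, I will instead argue via Theorem~\ref{thm:DeduceCPM}\ref{enum:alt-pure-descrip} applied after the fact, or note that in the decomposition $f = \img(f)\circ\bar f\circ\coim(f)$ with $f$ causal pure, the map $\bar f$ is a causal pure isomorphism hence unitary, $\coim(f) = \coker(\ker(f))$ with $\ker(f)=0$ causal so $\coim(f)=\id{}$, and $\img(f)$ is a dagger kernel hence an isometry; thus $f$ is an isometry.

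Next I must supply the two hypotheses of Theorem~\ref{thm:DeduceCPM}\ref{enum:EUP} that are not yet obviously in hand: that $\catC$ has homogeneous kernels and that $\catC_\pure$ has causal states. Homogeneity of kernels: given causal dagger kernels $k, l \colon A \to B$, both are $\otimes$-pure, so $\discard{}\circ k = \discard{A} = \discard{}\circ l$ (both causal), and by essentially unique purification applied to the common marginal — here $k$ and $l$ are themselves pure dilations of the causal map $\id{A}$ up to the isometric identification — there is a causal isomorphism $U \colon B \to B$ with $U \circ k = l$; I must check the direction of~\eqref{eq:EU} matches Definition~\ref{def:ker-homog}, adjusting by applying $U^{-1}$ if needed. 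That $\catC_\pure$ has causal states: every non-zero object $A$ must have a causal isometric state in $\catC_\pure$. Since $\catC$ is a theory with discarding and $A$ is non-zero, $\id{A}$ is a non-zero causal pure morphism $A \to A$; purifying $\discard{A} \colon A \to I$ gives a pure dilation, i.e.\ a pure state of $A$ (bending is not needed as $\discard{A}$ is already an effect), and by the argument above any causal pure state is an isometry. Hence all hypotheses of Theorem~\ref{thm:DeduceCPM}\ref{enum:EUP} hold, and we conclude $\catC_\pure$ is an environment structure on $\catC$ (and it also satisfies essential uniqueness, which we already assumed).

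For the converse direction, suppose $\catC_\pure$ forms an environment structure on $\catC$. By the CP axiom~\eqref{eq:CP axiom}, any causal isomorphism $g \in \catC_\pure$ satisfies $\discard{}\circ g = \discard{}$ (causality) with the identity also causal, hence $g^\dagger \circ g = \id{}$, i.e.\ $g$ is an isometry, and being an isomorphism it is unitary. For a general causal isomorphism $g$ in $\catC$ (not a priori in $\catC_\pure$): $g$ is an isomorphism, so by Theorem~\ref{thm:DeduceCPM}\ref{enum:alt-pure-descrip} — whose hypotheses we must re-examine, or more simply by the observation that $\coim(g)=\img(g)=\id{}$ makes $g$ its own $\bar g$ — $g$ lies in $\catC_\pure$, so the previous sentence applies and $g$ is unitary. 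Finally, all dagger kernels are causal: this is part of the standing hypothesis that $\catC$ has dagger kernels "which are all causal"? Rereading, the corollary does not state this as a standing hypothesis, so I must derive it; but in fact an environment structure $\catC_\pure$ contains the subcategory structure, and since we are told $\catC$ has dagger kernels, a dagger kernel $k$ has $\discard{}\circ k$ equal to some effect, and $k$ being an isometry is $\otimes$-pure hence in $\catC_\pure$; to see $k$ is causal, note $k$ is a kernel of some $f$, and... here I suspect the intended reading is that "dagger kernels are causal" is precisely one of the two conditions being characterised, so the converse only needs to reproduce "all causal isomorphisms unitary" from the CP axiom and "all dagger kernels causal" from the requirement (built into Definition of environment structure via $\catC_\prepure$ being dagger compact with discarding and every morphism having a dilation) that kernels behave well — I will state it as following from the fact that in $\CPM(\catC_\pure) \simeq \catC$ the kernels are causal as computed in Example~\ref{ex:CPMD-kernels}.

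\medskip

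\textbf{Main obstacle.} The delicate point is bookkeeping around which conditions are genuine hypotheses and which must be derived, and in particular establishing that \emph{every} causal $\otimes$-pure morphism is an isometry \emph{before} we have essential uniqueness of purification in full generality — this risks circularity. I expect to resolve it by carefully staging the argument: first use the internal isomorphism property and dagger kernels to get the decomposition~\eqref{eq:capture-pure}, deduce isometry of causal pure morphisms from that decomposition alone, and only then feed everything into Theorem~\ref{thm:DeduceCPM}\ref{enum:EUP}. Matching the precise handedness of the isomorphism $U$ in Definitions~\ref{def:purification} and~\ref{def:ker-homog} is a minor but real irritant that I will handle by replacing $U$ with $U^{-1}$ where the diagram demands.
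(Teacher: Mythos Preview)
Your overall architecture matches the paper's: reduce to Theorem~\ref{thm:DeduceCPM} after checking its hypotheses, and handle the converse via the CP axiom. The converse direction is essentially fine (kernels are pure isometries, so $k^\dagger\circ k=\id{}$ together with the $\implies$ half of~\eqref{eq:CP axiom} forces $\discard{}\circ k=\discard{}$; causal isomorphisms are unitary for the same reason).

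The genuine gap is exactly the step you flag as the main obstacle, and your proposed resolution does not close it. You want to show that every causal $\otimes$-pure $f$ is an isometry by writing $f=\img(f)\circ\bar f\circ\coim(f)$ and arguing that $\bar f$ is a causal pure \emph{isomorphism}, hence unitary. But the only reason $\bar f$ is an isomorphism is Theorem~\ref{thm:DeduceCPM}\ref{enum:alt-pure-descrip}, and the proof of that part explicitly uses the hypothesis ``causal in $\catC_\prepure$ $\Rightarrow$ isometry'' (it writes $\discard{}\circ\bar f=\discard{}\circ g$ for an automorphism $g$, then needs $\bar f\circ g^{-1}$ to be an isometry). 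So invoking the decomposition to establish the isometry property is circular; ``zero-mono $+$ zero-epi'' alone does not make $\bar f$ invertible. Your fallback plan in the obstacle paragraph (``deduce isometry \dots\ from that decomposition alone'') fails for the same reason.

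The paper breaks the circle differently: it first applies Proposition~\ref{prop:halfOfCPM}. Its hypotheses are already available---essentially unique purification, $\catC_\pure$ a dagger compact subcategory with causal states, and causal isomorphisms unitary---and its conclusion is the $\Leftarrow$ direction of~\eqref{eq:CP axiom}. Specialising that to $g=\id{}$ gives $\discard{}\circ f=\discard{}\implies f^\dagger\circ f=\id{}$, i.e.\ every causal pure morphism is an isometry. \emph{Only then} does Theorem~\ref{thm:DeduceCPM} apply (both parts), yielding the remaining $\Rightarrow$ direction of the CP axiom and hence the environment structure. So the fix is a one-line appeal to Proposition~\ref{prop:halfOfCPM} rather than an attempt to bootstrap from the image/coimage factorisation.
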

\begin{proof}
By assumption the theory $\catC$ is $\otimes$-pure and hence so are all isomorphisms, kernels. Moreover $\catC_\pure$ is straightforwardly seen to be closed under the dagger and bending wires, making it a dagger-compact subcategory of $\catC$. 

Now the latter conditions are necessary by~\eqref{eq:CP axiom}, and homogeneity is implied by essential uniqueness. Conversely suppose that they are satisfied. Then by Proposition~\ref{prop:halfOfCPM} the direction $\impliedby$ of~\eqref{eq:CP axiom} is satisfied, making every causal $\otimes$-pure morphism an isometry. Hence the other direction is satisfied by Theorem~\ref{thm:DeduceCPM}.
\end{proof}

In closing we observe another quantum-like property of the theory $\MSpek$. 

\begin{example}
$\MSpek$ has $\Spek$ as an environment structure, so that
\[
\MSpek \simeq \CPM(\Spek)
\]
\begin{proof}
We saw in Example~\ref{examples:Purif}~\ref{example:pure-in-MSpek} that $\MSpek$ has essentially unique purification with $\otimes$-pure morphisms being those in $\Spek$. By construction in $\Spek$ every non-zero object has a non-zero state, which is then an isometry. In this theory, or $\Rel$ more broadly, any isomorphism is both causal and unitary. Hence Proposition~\ref{prop:halfOfCPM} ensures the direction $\impliedby$ of the CP axiom. 

Conversely, the direction $\implies$ in fact holds for arbitrary morphisms $R \colon A \to B$ in $\Rel$, since we have  
\begin{align*}
\scalebox{0.8}{\input{./figures/reldiscard.tikz}}
&=
\{a \in A \mid \exists b \text{ s.t. } R(a,b)\}
\\
&=
\{a \in A \mid \exists b, c \text{ s.t. } R(a,b) \wedge R(b,c)\} 
\\
&=
\scalebox{0.8}{\input{./figures/reldiscard3.tikz}}
\end{align*}
\end{proof}
\end{example}

\chapter{Superpositions and Phases} \label{chap:superpositions}

A central feature of the quantum world is the ability to form \emps{superpositions}\index{superposition} of pure states and processes. If we wish to characterise quantum theory itself from among more general operational theories, it will be useful to be able to describe these within our framework.

 In fact in Chapter~\ref{chap:totalisation} we already seemingly met a categorical description of superpositions; they are given by an addition operation $f + g$ on morphisms in the category $\Hilb$, coming from the presence of biproducts 
\[
\hilbH \biprod \hilbK
\]
More generally, we saw that biproducts always induce such an addition operation, and as a result they have long been used to describe superpositions~\cite{abramskycoecke:categoricalsemantics,Selinger2007139}. 

Operationally, however, there is a problem. While $\Hilb$ has biproducts, its quotient $\HilbP$ after identifying global phases does not, and only the latter category directly models pure quantum processes. As such, a characterisation of the object $\hilbH \oplus \hilbK$ in the new setting $\HilbP$ is needed. 

In this chapter we provide such an account of superpositions using our new notion of a \emps{phased biproduct} or more general \emps{phased coproduct} $A \pcoprod B$. Roughly, these are coproducts coming with extra structure-preserving \emps{phase} isomorphisms 
\[
\begin{tikzcd}
A \pcoprod B
\rar{U}
&
A \pcoprod B
\end{tikzcd}
\]
In fact these features arise in a very general setting. Suppose we have a category $\cta$ with finite coproducts and a collection of `trivial' isomorphisms on each object. Well-known examples arise from global phases in quantum theory, and from \emps{projective geometry}~\cite{coxeter2003projective}. Then its quotient $\cta \!  /  \! \! \sim$ after identifying such maps has phased coproducts. Conversely, for any suitable category $\ctb$ with phased coproducts we will construct a new one $\plusI{\ctb}$ with coproducts from which it arises as such a quotient.

In particular this allows us to describe the more well-behaved category $\Hilb$ in terms of the operationally motivated one $\HilbP$ via
\[
\Hilb \simeq \plusI{\Hilb_\sim}
\]
which will be central to reconstructing quantum theory in Chapter~\ref{chap:recons}.

\section{Phased Coproducts} \label{sec:ph-coprod}

Our central definition in this chapter will be the following.

\begin{definition} \label{def:ph_coprod} \label{not:phcoprod} \label{not:phcoproj}
In any category, a \deff{phased coproduct} \index{phased coproduct} of objects $A, B$ is an object $A \pcoprod B$ together with a pair of morphisms  $\pcoproj_A \colon A \to A \pcoprod B$ and $\pcoproj_B \colon {B \to A \pcoprod B}$ satisfying the following. Firstly, for any pair of morphisms $f \colon A \to C$, $g \colon B \to C$, there exists $h \colon {A \pcoprod B \to C}$ making the following commute:
\[
\begin{tikzcd}
A \rar{\coproj_A} \drar[swap]{f} & A \pcoprod B 
\dar[dashed]{h}
& B \lar[swap]{\coproj_B} \dlar{g} \\ 
& C &
\end{tikzcd} 
\]
Secondly, any pair of such morphisms $h,h'$ have that $h' = h \circ U$
\[
\begin{tikzcd}
 \arrow[loop left, "U",distance = 2em] A \pcoprod B \rar[shift right = 2, swap]{h'} \rar[shift left = 2]{h}   & C
\end{tikzcd}
\]
for some endomorphism $U$ of $A \pcoprod B$ which satisfies
\begin{align} \label{eq:phase-eq}
U \circ \pcoproj_A = \pcoproj_A
\qquad
U \circ \pcoproj_B = \pcoproj_B
\end{align}
We call any endomorphism $U$ of $A \pcoprod B$ satisfying \eqref{eq:phase-eq} a \deff{phase} for $A \pcoprod B$, and the morphisms $\pcoproj_A$, $\pcoproj_B$ \deff{coprojections}\index{coprojection}.
\end{definition}

A coproduct is then a phased coproduct whose only phase is the identity. Straightforwardly extending the above, a \indef{phased coproduct} of any collection of objects $(A_i)_{i \in I}$ is defined as an object $A$ together with morphisms 
$(\coproj_i \colon A_i \to A)_{i \in I}$ satisfying the following\index{phased coproduct}.
 Firstly, for any collection of morphisms 
\[
\begin{tikzcd}
A_i \rar{f_i} & B
\end{tikzcd}
\]
 there exists $f \colon A \to B$ with $f \circ \coproj_i = f_i$ for all $i$. Furthermore, any such $f, f' \colon A \to B$ have $f' = f \circ U$ for some $U \colon A \to A$ satisfying $U \circ \pcoproj_i = \pcoproj_i$ for all $i$, which we call a \indef{phase}\index{phase}. 
A phased coproduct of finitely many $A_1, \dots, A_n$ is denoted $A_1 \pcoprod \dots \pcoprod A_n$. \label{not:phcoprod2}

Despite their generality, phased coproducts are surprisingly well-behaved, in particular being unique up to (non-unique) isomorphism.

\begin{lemma} \label{lem:isoms}
Let $A$ and $B$ be phased coproducts of objects $(A_i)_{i \in I}$ with respective coprojections $\coproj_i \colon A_i \to A$ and $\mu_i \colon A_i \to B$ for $i\in I$. Then any morphism $f$ for which each diagram
\[
\begin{tikzcd}[row sep = small, column sep = small]
& A_i \arrow[dl, "\coproj_i", swap] \arrow[dr, "\mu_i"] & \\ 
A \arrow[rr,"f",swap] & & B 
\end{tikzcd}
\]
commutes is an isomorphism.
Conversely, any object $C$ with an isomorphism $g \colon {A \isomto C}$ forms a phased coproduct of the $A_i$ with coprojections $\nu_i := g \circ \coproj_i$.
\end{lemma}

\begin{proof}
For the first statement, let $g \colon B \to A$ with $g \circ \mu_i = \coproj_i$ for all $i$. Then $f \circ g$ preserves the $\mu_i$ and so there is some phase $U$ on $B$ with $f \circ g \circ U = \id{B}$. But then $g \circ U \circ f$ preserves the $\pcoproj_i$ and so there is a phase $V$ on $A$ with 
\[
g \circ U \circ f \circ V = \id{A}
\]
Hence $g \circ U$ has left and right inverses, making it and hence $f$ both isomorphisms.

For the second statement, given any tuple $(f_i \colon A_i \to D)^n_{i=1}$, let $f \colon A \to D$ satisfy $f \circ \pcoproj_i = f_i$ for all $i$. Then $f \circ g^{-1} \circ \nu_i = f_i$ for all $i$. Moreover, if $h \circ \nu_i = k \circ \pcoproj'_i$ for all $i$ then 
\[
h \circ g^{-1} \circ \pcoproj_i = k \circ g^{-1} \circ \pcoproj_i
\] 
for all $i$ and so for some phase $U$ on $A$ we have that $h = k \circ V$ where $V = g^{-1} \circ U \circ g$. Finally, $V$ is easily seen to preserve the $\nu_i$.
\end{proof}

\begin{corollary} \label{phase_iso}
Any phase of a phased coproduct is an isomorphism. 
\end{corollary}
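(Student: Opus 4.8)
The plan is to obtain this as an immediate consequence of Lemma~\ref{lem:isoms}. Suppose $A$ is a phased coproduct of a collection $(A_i)_{i \in I}$ with coprojections $\pcoproj_i \colon A_i \to A$, and let $U \colon A \to A$ be a phase. By the definition of a phase we have $U \circ \pcoproj_i = \pcoproj_i$ for all $i \in I$ (this is \eqref{eq:phase-eq} in the binary case, and the analogous condition stated for the general case just before the lemma). I would then apply the first statement of Lemma~\ref{lem:isoms} with the second phased coproduct taken to be $B := A$ itself, equipped with the same coprojections $\mu_i := \pcoproj_i$. The hypothesis of that statement is exactly that the morphism in question commutes with the two families of coprojections, i.e.\ $U \circ \pcoproj_i = \mu_i$ for all $i$, which holds here by the previous line. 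Hence Lemma~\ref{lem:isoms} yields that $U$ is an isomorphism.

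There is essentially no obstacle: all the genuine work lies in the proof of Lemma~\ref{lem:isoms}, where one uses the universal property of a phased coproduct twice (once in each direction), inserting a phase at each step, to produce a morphism admitting both a left and a right inverse and hence an isomorphism. The only point to verify for the corollary is the purely formal observation that a phase of $A \pcoprod B$ (or of $A_1 \pcoprod \dots \pcoprod A_n$) is precisely a morphism $A \to A$ compatible with the coprojections, so that it falls under the scope of the lemma with $B$ replaced by $A$ and $\mu_i$ by $\pcoproj_i$. No calculation is required.
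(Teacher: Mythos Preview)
Your proposal is correct and matches the paper's intended approach exactly: the corollary is stated immediately after Lemma~\ref{lem:isoms} with no separate proof, because it follows at once by taking $B := A$ and $\mu_i := \pcoproj_i$ in that lemma, precisely as you describe.
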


Next we observe that phased coproducts are associative in a suitable sense.

\begin{proposition}[Associativity]\label{prop:assoc} 
For any phased coproduct $A \pcoprod B$, any phased coproduct ${(A \pcoprod B) \pcoprod C}$ forms a phased coproduct of $A, B, C$ with coprojections:
\[
\begin{tikzcd}[row sep = tiny]
A \arrow[dr,"\pcoproj_A",pos=0.2]& & \\ 
B \arrow[r, "\pcoproj_B"] & A \pcoprod B \arrow[r,"\pcoproj_{A\pcoprod B}"]& (A \pcoprod B) \pcoprod C\\ 
C \arrow[urr,"\pcoproj_C",swap] & &
\end{tikzcd}
\]
More generally $((A_1 \pcoprod A_2) \pcoprod \dots  ) \pcoprod A_n$ forms a phased coproduct $A_1 \pcoprod \dots \pcoprod A_n$.
\end{proposition}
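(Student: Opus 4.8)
The plan is to first verify the binary case: given a phased coproduct $A \pcoprod B$ with coprojections $\pcoproj_A, \pcoproj_B$ and a further phased coproduct $(A \pcoprod B) \pcoprod C$ with coprojections $\pcoproj_{A \pcoprod B}, \pcoproj_C$, show that the object $(A \pcoprod B) \pcoprod C$ together with the three composite morphisms $\pcoproj_{A \pcoprod B} \circ \pcoproj_A$, $\pcoproj_{A \pcoprod B} \circ \pcoproj_B$, $\pcoproj_C$ satisfies the two clauses of Definition~\ref{def:ph_coprod} for a phased coproduct of the three objects $A$, $B$, $C$. For the \emph{existence} clause, given morphisms $f \colon A \to D$, $g \colon B \to D$, $k \colon C \to D$, first use the universal property of $A \pcoprod B$ to obtain $h \colon A \pcoprod B \to D$ with $h \circ \pcoproj_A = f$ and $h \circ \pcoproj_B = g$, then use the universal property of $(A \pcoprod B) \pcoprod C$ to obtain $m \colon (A \pcoprod B) \pcoprod C \to D$ with $m \circ \pcoproj_{A \pcoprod B} = h$ and $m \circ \pcoproj_C = k$; composing gives $m$ factoring all three given morphisms as required.

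For the \emph{uniqueness-up-to-phase} clause, suppose $m, m' \colon (A \pcoprod B) \pcoprod C \to D$ both factor the three composite coprojections. Then $m \circ \pcoproj_{A \pcoprod B}$ and $m' \circ \pcoproj_{A \pcoprod B}$ both factor $f$ and $g$ through $A \pcoprod B$, so they differ by a phase $V$ of $A \pcoprod B$: $m' \circ \pcoproj_{A \pcoprod B} = (m \circ \pcoproj_{A \pcoprod B}) \circ V = m \circ \pcoproj_{A\pcoprod B} \circ V$. Now $V$ being a phase of $A \pcoprod B$ means $V \circ \pcoproj_A = \pcoproj_A$ and $V \circ \pcoproj_B = \pcoproj_B$, so $\pcoproj_{A \pcoprod B} \circ V \colon A \pcoprod B \to (A \pcoprod B) \pcoprod C$ agrees with $\pcoproj_{A \pcoprod B}$ after precomposing with $\pcoproj_A$ and $\pcoproj_B$; hence $\pcoproj_{A \pcoprod B} \circ V$ and $\pcoproj_{A \pcoprod B}$ are two morphisms out of the phased coproduct $A \pcoprod B$ that are equalised by the coprojections, so they differ by a phase $W$ of $A \pcoprod B$: $\pcoproj_{A \pcoprod B} \circ V = \pcoproj_{A \pcoprod B} \circ W$ — but this is circular, so instead the cleaner route is: since $\pcoproj_{A\pcoprod B}\circ V$ and $\pcoproj_{A\pcoprod B}$ both factor $\pcoproj_{A\pcoprod B}\circ\pcoproj_A$ and $\pcoproj_{A\pcoprod B}\circ\pcoproj_B$ through $A\pcoprod B$, there is a phase $W$ of $A\pcoprod B$ with $\pcoproj_{A\pcoprod B}\circ V = \pcoproj_{A\pcoprod B}$ after adjusting — I will instead directly build the desired phase of $(A\pcoprod B)\pcoprod C$. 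Namely, $m'$ and $m \circ (\text{something})$ both factor $k$ and $h$ through $(A\pcoprod B)\pcoprod C$ once we absorb $V$; applying the uniqueness clause of the outer phased coproduct $(A\pcoprod B)\pcoprod C$ to $m$ and $m'$ directly against the morphisms $h' := m'\circ\pcoproj_{A\pcoprod B} = m\circ\pcoproj_{A\pcoprod B}\circ V$ and $k$ yields a phase $U$ of $(A\pcoprod B)\pcoprod C$ with $m' = m \circ U$ and $U \circ \pcoproj_{A\pcoprod B} = \pcoproj_{A\pcoprod B}\circ V$, $U \circ \pcoproj_C = \pcoproj_C$. It then remains to check $U$ is a phase for the three composite coprojections, i.e.\ $U \circ \pcoproj_{A\pcoprod B}\circ\pcoproj_A = \pcoproj_{A\pcoprod B}\circ\pcoproj_A$ and likewise for $B$: this follows since $U \circ \pcoproj_{A\pcoprod B}\circ\pcoproj_A = \pcoproj_{A\pcoprod B}\circ V \circ \pcoproj_A = \pcoproj_{A\pcoprod B}\circ\pcoproj_A$ using that $V$ is a phase of $A\pcoprod B$, and similarly for $B$; and $U \circ \pcoproj_C = \pcoproj_C$ is immediate.

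The general $n$-ary statement then follows by induction on $n$: the base cases $n = 1, 2$ are trivial, and the inductive step combines the induction hypothesis that $((A_1 \pcoprod A_2)\pcoprod \dots) \pcoprod A_{n-1}$ is a phased coproduct of $A_1, \dots, A_{n-1}$ with the binary associativity result just proved (applied with $A \pcoprod B$ replaced by $((A_1 \pcoprod A_2)\pcoprod\dots)\pcoprod A_{n-1}$, viewed as a phased coproduct, and $C := A_n$), together with Lemma~\ref{lem:isoms} to ensure the nested object genuinely carries a phased-coproduct structure over the $A_i$. The main obstacle I anticipate is bookkeeping the two layers of ``differ by a phase'': one must be careful that the phase $V$ of the inner $A \pcoprod B$ lifts, via $\pcoproj_{A\pcoprod B}$, to something controlled by a phase $U$ of the outer object, and that this $U$ indeed fixes all three composite coprojections rather than merely the two outer ones — the argument above shows it does, precisely because $V$ fixes $\pcoproj_A$ and $\pcoproj_B$.
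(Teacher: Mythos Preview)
Your existence argument and your final verification that the candidate endomorphism fixes the three composite coprojections are both correct and match the paper. However, there is a genuine gap in how you obtain that endomorphism. You claim that ``applying the uniqueness clause of the outer phased coproduct $(A\pcoprod B)\pcoprod C$ to $m$ and $m'$ \ldots yields a phase $U$'' with $m' = m \circ U$, $U \circ \pcoproj_{A\pcoprod B} = \pcoproj_{A\pcoprod B}\circ V$, and $U\circ\pcoproj_C=\pcoproj_C$. But the uniqueness clause of the outer phased coproduct only applies to a pair of morphisms that agree on \emph{both} outer coprojections $\pcoproj_{A\pcoprod B}$ and $\pcoproj_C$, and here $m\circ\pcoproj_{A\pcoprod B} = h \neq h' = m'\circ\pcoproj_{A\pcoprod B}$ in general. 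So you cannot invoke uniqueness directly on the pair $m, m'$; and the $U$ you describe is not even a phase of the outer phased coproduct, since it does not fix $\pcoproj_{A\pcoprod B}$.

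The missing step, which is exactly what the paper does, is to first use the \emph{existence} clause of the outer phased coproduct to build an endomorphism $\tilde V$ of $(A\pcoprod B)\pcoprod C$ with $\tilde V\circ\pcoproj_{A\pcoprod B} = \pcoproj_{A\pcoprod B}\circ V$ and $\tilde V\circ\pcoproj_C=\pcoproj_C$ (apply existence to the pair of morphisms $\pcoproj_{A\pcoprod B}\circ V$ and $\pcoproj_C$). Now $m\circ\tilde V$ and $m'$ \emph{do} agree on both outer coprojections, so the outer uniqueness clause gives a genuine outer phase $W$ with $m' = m\circ\tilde V\circ W$. The composite $\tilde V\circ W$ then fixes all three composite coprojections by exactly the check you wrote, and this is the required ternary phase. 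In short, the step needs both clauses of the outer phased coproduct, not just uniqueness. (Your induction sketch is fine once the ternary case is fixed; the reference to Lemma~\ref{lem:isoms} is not actually needed there.)
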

\begin{proof}
We prove the first case, with the $n$-ary case being similar. 

For any morphisms $f, g, h$ from $A, B, C$ to $D$ respectively, let $k \colon A \pcoprod B \to D$ satisfy $k \circ \pcoproj_A = f$ and $k \circ \pcoproj_B = g$. Then any morphism
\[
\begin{tikzcd}
(A \pcoprod B) \pcoprod C
\rar{t}
& D 
\end{tikzcd}
\]
with $t \circ \pcoproj_{A \pcoprod B} = k$ and  $t \circ \pcoproj_C = h$ composes with the morphisms above to give $f, g, h$ respectively. For uniqueness, suppose that $t'$ is another such morphism. Then there is a phase $U$ on $A \pcoprod B$ with $t' \circ \pcoproj_{A \pcoprod B} = t \circ \pcoproj_{A \pcoprod B} \circ U$. Now let $V$ be an endomorphism of $(A \pcoprod B) \pcoprod C$ with 
\[
V \circ \pcoproj_{A \pcoprod B} = \pcoproj_{A \pcoprod B} \circ U \qquad V \circ \pcoproj_C = \pcoproj_C
\]
Then immediately we have $t \circ V \circ \pcoproj_{A \pcoprod B} = t' \circ \pcoproj_{A \pcoprod B}$ and $t \circ V \circ \pcoproj_C = t' \circ \pcoproj_C$. So there is some $W$ preserving $\pcoproj_{A \pcoprod B}$ and $\pcoproj_C$ with $t = (t \circ V) \circ W$. Finally $V \circ W$ preserves each of the proposed coprojections as required.
\end{proof}

Let us now consider a phased coproduct of an empty collection of objects, which by definition is precisely the following. In any category, a \indef{phased initial object}\index{phased initial object} is an object $\pinit$\label{not:phinitial} for which every object $A$ has a morphism $\pinit \to A$, and such that for any pair of morphisms $a, b \colon \pinit \to A$ there is an endomorphism $U$ of $\pinit$ with $b = a \circ U$. In fact this notion typically coincides with a familiar one. 

\begin{proposition} \label{prop:phase_init_unit}
In a category with binary phased coproducts, any phased initial object $0$ is an initial object and each coprojection $\pcoproj_A \colon A \to A \pcoprod \pinit$ is an isomorphism.
\end{proposition}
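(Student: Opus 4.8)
The plan is to show first that $0$ is initial, and then that each coprojection $\pcoproj_A \colon A \to A \pcoprod \pinit$ is an isomorphism, using Lemma~\ref{lem:isoms}. For initiality, fix an object $A$; we already have at least one morphism $\pinit \to A$ from the definition of a phased initial object, so it remains to prove uniqueness. Suppose $a, b \colon \pinit \to A$. By the phased initial property there is an endomorphism $U$ of $\pinit$ with $b = a \circ U$, so it suffices to show $U = \id{\pinit}$, or more precisely that $a \circ U = a$ for every such $U$. The key leverage is that $\pinit$ participates in the binary phased coproduct $\pinit \pcoprod \pinit$: applying the defining universal property to the pair $(\id{\pinit}, \id{\pinit}) \colon \pinit \rightrightarrows \pinit$ forces any two mediating maps to differ by a phase, and one can also feed the pair $(\pcoproj_{\pinit}, \pcoproj_{\pinit})$ back in. Concretely, I would consider the two maps $\pcoproj_1, \pcoproj_2 \colon \pinit \to \pinit \pcoprod \pinit$; since $\pinit$ is phased initial there is an endomorphism $V$ of $\pinit$ with $\pcoproj_2 = \pcoproj_1 \circ V$, and similarly $\pcoproj_1 = \pcoproj_2 \circ V'$. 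Combining these with the fact that a mediating morphism for $(\id{\pinit},\id{\pinit})$ is a retraction of $\pcoproj_1$, I expect to deduce that $\pcoproj_1 = \pcoproj_2$ and then that the codiagonal exhibits $\pinit$ as a retract of $\pinit\pcoprod\pinit$ in a way that collapses $U$; pushing this through yields $a \circ U = a$, hence $a = b$, so $0$ is initial.

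Once $0$ is genuinely initial, the second claim is comparatively clean. Take the phased coproduct $A \pcoprod \pinit$ with coprojections $\pcoproj_A \colon A \to A \pcoprod \pinit$ and $\pcoproj_\pinit \colon \pinit \to A \pcoprod \pinit$. I claim $A$ itself (with coprojections $\id{A} \colon A \to A$ and the unique map $\mathord{!} \colon \pinit \to A$) is also a phased coproduct of the pair $(A, \pinit)$. Indeed, given any $f \colon A \to C$ and $g \colon \pinit \to C$, the morphism $f$ satisfies $f \circ \id{A} = f$ and $f \circ \mathord{!} = g$ automatically, since $g$ is the unique morphism $\pinit \to C$ by initiality; and if $h, h' \colon A \to C$ both mediate, then in particular $h \circ \id{A} = f = h' \circ \id{A}$, so $h = h'$ and the only required phase is $\id{A}$. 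Thus $A$ is a (strict) phased coproduct of $(A,\pinit)$.

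Now I invoke Lemma~\ref{lem:isoms}: both $A$ (with coprojections $\id{A}, \mathord{!}$) and $A \pcoprod \pinit$ (with coprojections $\pcoproj_A, \pcoproj_\pinit$) are phased coproducts of the same family $(A, \pinit)$, so any morphism $A \to A \pcoprod \pinit$ commuting with the coprojections on both sides is an isomorphism. But $\pcoproj_A$ is exactly such a morphism: $\pcoproj_A \circ \id{A} = \pcoproj_A$, and $\pcoproj_A \circ \mathord{!} = \pcoproj_\pinit$ because both sides are morphisms $\pinit \to A \pcoprod \pinit$ and $\pinit$ is initial. Hence $\pcoproj_A$ is an isomorphism, as required.

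I expect the main obstacle to be the first part --- proving that a phased initial object is initial --- since this must be squeezed out purely from the weak uniqueness-up-to-phase property together with the existence of binary phased coproducts, and getting the diagram chase with $\pinit \pcoprod \pinit$ to actually collapse the ambiguous endomorphism $U$ to something inert requires care. The second part is then essentially formal, riding entirely on Lemma~\ref{lem:isoms} and the strictness of the phased coproduct structure on $A$ itself.
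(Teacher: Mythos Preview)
Your approach reverses the paper's order: you aim to prove initiality of $0$ first and then deduce that $\pcoproj_A$ is an isomorphism, whereas the paper shows $\pcoproj_A$ is an isomorphism directly (building a retraction $f$ and then manipulating phases to show $\pcoproj_A \circ f \circ g$ is a phase, hence invertible), and only afterwards reads off initiality from $g = \pcoproj_A^{-1} = h$.

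Your second part is correct and clean: once $0$ is initial, $(A,\id{A},!)$ is genuinely a phased coproduct of $(A,0)$, and Lemma~\ref{lem:isoms} applied to $\pcoproj_A$ finishes. This is arguably tidier than the paper's hand-built inverse.

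Your first part, however, is only a sketch, and the phrase ``in a way that collapses $U$'' points in the wrong direction. The ingredients you name are enough, but the completion does not go via $U$ at all. From $\pcoproj_2 = \pcoproj_1 \circ V$ and a codiagonal $\nabla \colon 0 \pcoprod 0 \to 0$ with $\nabla \circ \pcoproj_1 = \id{0} = \nabla \circ \pcoproj_2$, one gets
\[
V \;=\; \nabla \circ \pcoproj_1 \circ V \;=\; \nabla \circ \pcoproj_2 \;=\; \id{0},
\]
so $\pcoproj_1 = \pcoproj_2$. Now for arbitrary $a,b \colon 0 \to A$, the \emph{existence} clause of the phased coproduct $0 \pcoprod 0$ yields some $h \colon 0 \pcoprod 0 \to A$ with $h \circ \pcoproj_1 = a$ and $h \circ \pcoproj_2 = b$; since $\pcoproj_1 = \pcoproj_2$ this forces $a = b$. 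There is no need to track the endomorphism $U$ produced by phased initiality of $0$ applied to $a,b$ --- that detour is what made your sketch feel uncertain.

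With this fix your route is complete and, overall, a bit shorter than the paper's; the paper's route trades the appeal to Lemma~\ref{lem:isoms} for an explicit construction of $\pcoproj_A^{-1}$.
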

\begin{proof}
We first show that $\pcoproj_A$ is an isomorphism. 
Let 
\[
\begin{tikzcd}
A \pcoprod \pinit \rar{f} & A
\end{tikzcd}
\]
with $f \circ \pcoproj_A = \id{A}$ and $f \circ \pcoproj_{\pinit}$ being any morphism $\pinit \to A$. Then $f$ makes $\pcoproj_A$ split monic. 
Because $\pinit$ is phased initial, it has an endomorphism $z$ with  
$\pcoproj_A \circ f \circ \pcoproj_0 = \pcoproj_0 \circ z$, which is an isomorphism by Lemma~\ref{lem:isoms}. 
Next let $g$ be an endomorphism of $A \pcoprod \pinit$ with $g \circ \pcoproj_A = \pcoproj_A$ and $g \circ \pcoproj_{\pinit} = \pcoproj_{\pinit} \circ z^{-1}$. Then it may be readily verified that, by construction, $U := \pcoproj_A \circ f \circ g$ preserves $\pcoproj_0$ and $\pcoproj_A$. Hence $U$ is a phase and so an isomorphism, making $\pcoproj_A$ split epic and hence an isomorphism also. 

We now show that $\pinit$ is initial. Given $a, b \colon \pinit \to A$ let $g, h \colon A \pcoprod \pinit \to A$ with
\[
g \circ \pcoproj_\pinit = a \qquad h \circ \pcoproj_\pinit = b \qquad g \circ \pcoproj_A = \id{A} = h \circ \pcoproj_A
\]
Then $g = \pcoproj_A^{-1} = h$ and so $a = g \circ \pcoproj_\pinit = h \circ \pcoproj_\pinit = b$. 
\end{proof}

\begin{corollary} \label{all_phase_coprod} 
A category has phased coproducts of all finite collections of objects iff it has binary phased coproducts and an initial object.
\end{corollary}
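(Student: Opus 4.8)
The plan is to read both directions straight off the two preceding results. For the forward implication, suppose $\catC$ has phased coproducts of every finite collection of objects. Taking a two-element collection gives binary phased coproducts, and taking the empty collection gives a phased coproduct of no objects; unwinding the family version of Definition~\ref{def:ph_coprod}, such an object is exactly a phased initial object. Since $\catC$ now also has binary phased coproducts, Proposition~\ref{prop:phase_init_unit} upgrades this phased initial object to a genuine initial object, giving what we want.

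For the converse, assume $\catC$ has binary phased coproducts and an initial object $0$. I would prove by induction on $n$ that every $n$-tuple $(A_1, \dots, A_n)$ of objects admits a phased coproduct. The case $n = 0$ reduces to checking that the initial object $0$ is a phased initial object: every object receives a unique morphism from $0$, and for any parallel pair $a, b \colon 0 \to A$ initiality forces $a = b$, so $\id{0}$ serves as the required endomorphism, the coprojection conditions of~\eqref{eq:phase-eq} being vacuous. For $n = 1$, the single object $A_1$ with coprojection $\id{A_1}$ is an ordinary coproduct of the one-element family, hence a phased coproduct. For $n \geq 2$, the inductive hypothesis supplies a phased coproduct $A_1 \pcoprod \dots \pcoprod A_{n-1}$, and forming its binary phased coproduct with $A_n$ yields, by the $n$-ary associativity statement of Proposition~\ref{prop:assoc}, a phased coproduct $A_1 \pcoprod \dots \pcoprod A_n$ with the evident coprojections. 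This completes the induction and hence the proof.

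The argument is essentially bookkeeping, so there is no real obstacle; the only points needing a little care are correctly identifying the nullary and unary cases of the family definition with the notions of initial object and ordinary coproduct, and invoking Proposition~\ref{prop:assoc} together with the induction hypothesis rather than re-deriving $n$-ary associativity from scratch.
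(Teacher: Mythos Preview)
Your proof is correct and matches the approach the paper intends: the corollary is stated without an explicit proof, but it follows immediately from Proposition~\ref{prop:assoc} (associativity giving $n$-ary phased coproducts from binary ones) and Proposition~\ref{prop:phase_init_unit} (a phased initial object in the presence of binary phased coproducts is initial), exactly as you have laid out. Your handling of the nullary and unary cases is clean and fills in the small details the paper leaves implicit.
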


Thanks to this we will often only need to refer to binary phased coproducts from now on.

\begin{remark}[Phased Limits] \label{rem:phasedproducts}
We may have defined phased products $A \pprod B$\label{not:phproduct} and phased terminal objects by dualising the above definitions, but coproducts will be more natural for our familiar monoidal setting. 

 Products and coproducts are special cases of the notion of a (co)limit of a diagram $D \colon \catJ \to \ctb$~\cite{mac1978categories}. More generally we may say that such a diagram $D$ has a \emps{phased (co)limit} if the category of (co)cones over $D$ has a phased terminal (resp.~initial) object. However we won't consider  general phased limits here. 
\end{remark}

\subsection{Examples}

Our motivating example is the following. 

\begin{example}
Recall that $\Hilb$ has finite coproducts given for a pair $\hilbH, \hilbK$ by the direct sum $\hilbH \oplus \hilbK$ of Hilbert spaces, along with the inclusions $\coproj_1 \colon \hilbH \to \hilbH \biprod \hilbK$ and $\coproj_2 \colon \hilbK \to \hilbH \biprod \hilbK$. 

Then $\HilbP$ has finite phased coproducts, given again by $\hilbH \biprod \hilbK$ along with the equivalence classes $[\coproj_1]$ and $[\coproj_2]$ of these maps. Phases on this object are precisely equivalence classes of unitary operators 
\begin{equation*} 
U = 
  \begin{pmatrix}
    \id{\hilbH} & 0 \\
    0 & e^{i \theta} \cdot \id{\hilbK} 
  \end{pmatrix}
\end{equation*}
for some $\theta \in [0, 2\pi]$.
Indeed, for any pair of morphisms $[f] \colon \hilbH \to \hilbL$ and $[g] \colon \hilbK \to \hilbL$ in $\HilbP$, another $[h] \colon \hilbH \biprod \hilbK \to \hilbL$ will satisfy $[h] \circ [\coproj_1] = [f]$ and $[h] \circ [\coproj_2] = [g]$ precisely when in $\Hilb$ we have $h \circ \coproj_1 = e^{i \theta} \cdot f$ and $h \circ \coproj_2 = e^{i \theta'} \cdot g$ for some such $\theta, \theta'$. It is simple to check that any such $h, h'$ have $[h] = [h'] \circ [U]$ for some $U$ as above. 

In particular, let us consider the qubit $\mathbb{C}^2$. Any pair of orthonormal states $\ket{0}, \ket{1}$ form coprojections making $\mathbb{C}^2$ a coproduct in $\Hilb$, or phased coproduct in $\HilbP$. Effects $\psi$ on $\mathbb{C}^2$ in the latter category correspond in $\Hilb$ to weighted superpositions 
\[
r \cdot \bra{0} + s \cdot e^{i \cdot \theta} \cdot \bra{1} 
\]
where $r, s$ are positive reals given by 
$r = \psi \circ \ket{0}$ and $s = \psi \circ \ket{1}$ in $\HilbP$. The term $e^{i \cdot \theta}$ makes such superpositions unique only up to unitaries $U$ as above.
\end{example}

Now we can extend this example considerably. 

\begin{definition}
By a choice of \deff{trivial isomorphisms}\index{trivial isomorphisms} on a category $\cta$ we mean a choice, for each object $A$, of a subgroup $\TrI_A$\label{not:trivisom} of the group of isomorphisms $A \isomto A$ such that for all $f \colon A \to B$ and $p_B \in \TrI_B$ there exists $p_A \in \TrI_A$ with 
\begin{equation} \label{eq:triv-isom}
p_B \circ f = f \circ p_A
\end{equation}
We call a choice of trivial isomorphisms \indef{transitive} \index{transitive!trivial isomorphisms} when, conversely, for all such morphisms $f$ and every $p_A \in \TrI_A$ we have $f \circ p_A = p_B \circ f$ for some $p_B \in \TrI_B$. With or without transitivity, such a choice defines a congruence $\sim$ on $\cta$ given on morphisms $f, g \colon A \to B$ by 
\[
f \simP g \quad \text{ if } \quad  f = g \circ p \text{ for some $p \in \TrI_A$}
\]
In fact this congruence suffices to recover $\TrI_A$ as $\{f \colon A \to A \mid f \sim \id{A}\}$, and so we often equate a choice of trivial isomorphisms with its congruence. 

We write $\quot{\cta}{\sim}$\label{not:quotcat} for the category whose morphisms are equivalence classes $[f]_\tc$\label{not:equivclass} of morphisms $f$ in $\cta$ under $\sim$. There is a wide full functor $[-]_\tc \colon \cta \to \quot{\cta}{\sim}$ given by taking equivalence classes.
\end{definition}

\begin{lemma} \label{lem:generalrecipeforphcoprod}
Let $\cta$ be a category with finite coproducts and a choice of trivial isomorphisms. Then $\quot{\cta}{\sim}$ has finite phased coproducts. Moreover $[-]_\tc$ sends coproducts in $\cta$ to phased coproducts in $\quot{\cta}{\sim}$.
\end{lemma}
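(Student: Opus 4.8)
The plan is to take a finite coproduct $A_1 + \dots + A_n$ in $\cta$, with coprojections $\coproj_i \colon A_i \to A_1 + \dots + A_n$, and show that its image $A_1 + \dots + A_n$ in $\quot{\cta}{\sim}$, together with the morphisms $[\coproj_i]_\tc$, forms a phased coproduct. By Corollary~\ref{all_phase_coprod} it suffices to check the binary case together with the initial object, but it is no harder to argue directly for a finite collection, so I would do that. Two things must be verified: the (weak) universal factorisation, and the characterisation of any two factorisations as differing by a phase.

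First I would establish factorisation. Given morphisms $[f_i]_\tc \colon A_i \to B$ in $\quot{\cta}{\sim}$, pick representatives $f_i \colon A_i \to B$ in $\cta$ and let $h \colon A_1 + \dots + A_n \to B$ be the unique cotuple with $h \circ \coproj_i = f_i$. Then $[h]_\tc \circ [\coproj_i]_\tc = [f_i]_\tc$ for all $i$, since $[-]_\tc$ is a functor. This gives the existence part. For uniqueness up to phase, suppose $[h]_\tc, [h']_\tc \colon A_1 + \dots + A_n \to B$ both satisfy $[h]_\tc \circ [\coproj_i]_\tc = [h']_\tc \circ [\coproj_i]_\tc$ for all $i$. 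Unwinding the definition of $\sim$, this means $h \circ \coproj_i = h' \circ \coproj_i \circ p_i$ for some $p_i \in \TrI_{A_i}$, for each $i$. Now form $U \colon A_1 + \dots + A_n \to A_1 + \dots + A_n$ as the cotuple of the morphisms $\coproj_i \circ p_i \colon A_i \to A_1 + \dots + A_n$, i.e.~$U = p_1 + \dots + p_n$ in the notation of the paper (using that each $p_i$ is an isomorphism, so is $U$, though we do not even need this). Then $U \circ \coproj_i = \coproj_i \circ p_i$, so $h' \circ U \circ \coproj_i = h' \circ \coproj_i \circ p_i = h \circ \coproj_i$ for all $i$, hence $h' \circ U = h$ by uniqueness of cotuples in $\cta$; applying $[-]_\tc$ gives $[h]_\tc = [h']_\tc \circ [U]_\tc$. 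It remains to see that $[U]_\tc$ is a phase, i.e.~that $[U]_\tc \circ [\coproj_i]_\tc = [\coproj_i]_\tc$ for all $i$; but $U \circ \coproj_i = \coproj_i \circ p_i$ with $p_i \in \TrI_{A_i}$, which says exactly $U \circ \coproj_i \sim \coproj_i$, so $[U]_\tc \circ [\coproj_i]_\tc = [\coproj_i]_\tc$. This establishes that $A_1 + \dots + A_n$ with the $[\coproj_i]_\tc$ is a phased coproduct.

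The second sentence of the lemma --- that $[-]_\tc$ sends coproducts to phased coproducts --- is then exactly what we have just shown, since we took an arbitrary coproduct in $\cta$ and exhibited its image as a phased coproduct with the images of the original coprojections. Finiteness of phased coproducts of arbitrary finite collections follows either from the direct argument above or, more cheaply, from Corollary~\ref{all_phase_coprod}: $\quot{\cta}{\sim}$ has binary phased coproducts by the $n=2$ case, and the image of an initial object of $\cta$ is a phased initial object of $\quot{\cta}{\sim}$ by the same reasoning with the empty collection (every object receives a morphism from it, and any two such morphisms differ on the nose by a trivial isomorphism of the initial object, hence by an endomorphism $U$ as required).

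I do not expect any genuine obstacle here: the only mild subtlety is bookkeeping with the definition of $\sim$ --- being careful that $\sim$ is defined by postcomposition with a trivial isomorphism on the \emph{domain}, so that when we lift the hypothesis $[h]_\tc \circ [\coproj_i]_\tc = [h']_\tc \circ [\coproj_i]_\tc$ to $\cta$ we correctly get $p_i \in \TrI_{A_i}$ on the source side and can assemble these via the coproduct structure into a single $U$. Transitivity of the trivial isomorphisms is not needed for this lemma; the defining property~\eqref{eq:triv-isom} is likewise not needed, since we never have to move a trivial isomorphism across the cotuple $h'$ --- we only push the $p_i$ through the coprojections, which is automatic.
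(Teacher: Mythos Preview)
Your proposal is correct and follows essentially the same argument as the paper: lift representatives, take the cotuple in $\cta$ for existence, and for uniqueness-up-to-phase assemble the trivial isomorphisms $p_i$ on each component into the diagonal $U = p_1 + \dots + p_n$ and check that $[U]_\tc$ is a phase. The paper states only the binary case (plus the initial object) while you phrase it for general $n$, and your observation that neither transitivity nor the property~\eqref{eq:triv-isom} is invoked in this proof is accurate and worth noting.
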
 

\begin{proof}
Any initial object in $\cta$ is initial in $\quot{\cta}{\sim}$. For any $[f]_\tc \colon A \to C, [g]_\tc \colon B \to C$, the morphism $h \colon A + B \to C$ with $h \circ \coproj_A = f$ and $h \circ \coproj_C =g$ certainly has $[h]_\tc \circ [\pcoproj_A]_\tc = [f]_\tc$ and $[h]_\tc \circ [\pcoproj_B]_\tc = [g]_\tc$. Given any other such $[h']_\tc$, we have $h' \circ \pcoproj_A = f \circ p$ and $h \circ \pcoproj_B = g \circ q$ for some $p \in \TrI_A$ and $q \in \TrI_B$. Then $h = h' \circ U$ where $U \circ \pcoproj_A = \pcoproj_A \circ p$ and $U \circ \pcoproj_B = \pcoproj_B \circ q$, with $[U]_\tc$ preserving the $[\pcoproj_A]_\tc$ and $[\pcoproj_B]_\tc$ in $\quot{\cta}{\sim}$. 
\end{proof}

\begin{examples} The following choices of trivial isomorphisms provide examples of categories with phased coproducts.
\begin{exampleslist}
\item 
In $\Hilb$ choose as trivial isomorphisms on $\hilbH$ all maps of the form $e^{i \cdot \theta} \cdot \id{\hilbH}$ for $\theta \in [0, 2\pi)$. Then the induced congruence is 
\begin{equation} \label{eq:HilbRelation}
f \sim g \quad \text{   if   } \quad  f = e^{i \theta} \cdot g
\end{equation}
and so the category $\HilbP \simeq \quot{\Hilb}{\sim}$ has finite phased coproducts as we have seen. Similarly so does $\FHilbP$.  
\item 
Extending the above, in $\VecC$\label{cat:vecc} take as trivial isomorphisms on $V$ all linear maps $e^{i \theta} \cdot \id{V}$ for $\theta \in [0,2\pi)$. Again $\VecC$ has coproducts given by the direct sum of vector spaces, and so these become phased coproducts in $\VecP := \quot{\VecC}{\sim}$\label{cat:vecP}.
\item 
For any field $\fieldk$, in $\FVeck$ choose as trivial isomorphisms on $V$ all maps $\lambda \cdot \id{V}$ for $\lambda \neq 0$, and let $\VecProj := \quot{\FVeck}{\sim}$\label{cat:vecproj}. Morphisms here are linear maps up to an overall scalar $\lambda$. Identifying vectors $\psi \colon \fieldk \to V$ with the same span in this way leads to \emps{projective geometry}~\cite{coxeter2003projective}. Note however that $\VecProj$ differs from usual projective geometry by including zeroes and non-injective maps. 
\item 
For an abelian group $G$, let $\GSet$\label{cat:GSet} be the category of sets $A$ equipped with a group action $a \mapsto g \cdot a$, with morphisms being maps $f \colon A \to B$ which are equivariant, i.e.~with $f(g \cdot a) = g \cdot f(a)$ $\forall g, a$. Choose as trivial isomorphisms on $A$ the maps $g \cdot (-) \colon A \to A$ for some $g \in G$. Then $\quot{\GSet}{\sim}$ identifies maps $f, f'$ whenever there is some $g \in G$ with $f(a) = f'(g \cdot a)$ for all $a \in A$. It has finite phased coproducts given by the coproducts in $\GSet$, i.e. disjoint union of sets.
\end{exampleslist}
\end{examples}

Each of these examples of trivial isomorphisms are transitive, giving their induced phased coproducts a property which will be useful in what follows. First, let us say that a morphism $f \colon A \pcoprod B \to C \pcoprod D$ is \indef{diagonal}\index{diagonal morphism} when $f \circ \pcoproj_A = \pcoproj_C \circ g$ and $f \circ \pcoproj_B = \pcoproj_D \circ h$ for some $g, h$. 

\begin{definition} \label{def:trans_phases}
A category with phased coproducts has \deff{transitive phases} \index{transitive!phases} when every diagonal morphism $f \colon A \pcoprod B \to C \pcoprod D$ and phase $U$ of $A \pcoprod B$ has 
\[
f \circ U = V \circ f
\]
for some phase $V$ of $C \pcoprod D$.
\end{definition}

\section{From Phased Coproducts to Coproducts} \label{sec:phToCoprod}

We now wish to find a converse construction to Lemma~\ref{lem:generalrecipeforphcoprod}, allowing us to exhibit any suitable category with phased coproducts as a quotient of one with coproducts.

\begin{definition} \label{def:justplusI}
Let $\ctb$ be a category with finite phased coproducts and a distinguished object $\Gen$. The category $\plusI{\ctb}$ is defined as follows:
\begin{itemize}
\item 
objects are phased coproducts of the form $\obb{A} = A \pcoprod \Gen$ in $\ctb$ (each including as data the objects $A, \Gen$ and morphisms $\coproj_A$, $\coproj_\Gen$);
\item 
morphisms $f \colon \obb{A} \to \obb{B}$ are diagonal morphisms in $\ctb$ with $f \circ \pcoproj_\Gen = \pcoproj_\Gen$.
\[
\begin{tikzcd}
\obb{A} \rar{f} & \obb{B} \\ 
A \uar{\pcoproj_A} \rar[dashed,swap]{\exists} & B \uar[swap]{\pcoproj_B}
\end{tikzcd}
\qquad
\begin{tikzcd}
\obb{A} \arrow[rr, "f"] & & \obb{B} \\ 
& I \arrow[ul,"\pcoproj_\Gen"] \arrow[ur,swap,"\pcoproj_\Gen"]& 
\end{tikzcd}
\]
\end{itemize}
\end{definition}
Such diagonal morphisms are straightforwardly checked to be closed under composition, making  $\mathsf{GP}(\ctb)$ a well-defined category with composition and identity morphisms being the same as in $\ctb$. Our notation $\mathsf{GP}$ stands for `global phases', based on our motivating example $\Hilb$ and which we consider more abstractly in the next section.  

Now a sufficient condition on $I$ for $\plusI{\ctb}$ to have coproducts is the following. Call a morphism $f \colon A \pcoprod B \to C$ \indef{phase monic}\index{phase monic} when $f \circ U = f \circ V \implies U = V$ for all phases $U, V$. Similarly a morphism $g \colon C \to A \pcoprod B$ is \indef{phase epic}\index{phase epic} when $U \circ g = V \circ g \implies U = V$ for phases $U, V$.

\begin{definition} \label{def:phase-gen} \label{not:GP}
Let $\ctb$ be a category with finite phased coproducts. We say an object $\Gen$ is a \deff{phase generator}\index{phase generator} when:
\begin{itemize}
\item 
any $\triangledown \colon I \pcoprod I \to I$ with $\triangledown \circ \pcoproj_1 = \id{I} = \triangledown \circ \pcoproj_2$ is phase monic;
\item 
any diagonal monomorphism $m \colon I \pcoprod I \to A \pcoprod B$ is phase epic.

\end{itemize}
\end{definition}

Let us say that phased coproducts or coproducts in a category are \indef{monic} \index{coproduct!monic} \index{phased coproduct!monic} whenever all coprojections are monic.
In this case we write $[-] \colon \plusI{\ctb} \to \ctb$ for the functor sending $\obb{A} \mapsto A$ and $f \colon \obb{A} \to \obb{B}$ to the unique $[f] \colon A \to B$ with $f \circ \pcoproj_A = \pcoproj_B \circ [f]$.

\begin{theorem} \label{thm:localToGlobal}
Let $\ctb$ be a category with finite monic phased coproducts with transitive phases and a phase generator $\Gen$. Then $\plusI{\ctb}$ has monic finite coproducts. Moreover, it has a choice of trivial isomorphisms 
\[
\TrI_\obb{A} := \{ U \colon \obb{A} \to \obb{A} \mid U \text{ is a phase}\}
\]
whose congruence $\sim$ induces an equivalence of categories
\[
\ctb \simeq \quot{\plusI{\ctb}}{\sim}
\]
\end{theorem}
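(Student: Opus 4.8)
The plan is to establish the two claims of Theorem~\ref{thm:localToGlobal} in order: first that $\plusI{\ctb}$ has monic finite coproducts, then that quotienting by phases recovers $\ctb$ up to equivalence. For the initial object of $\plusI{\ctb}$, I would take the object $\obb{0} = 0 \pcoprod I \cong I$ where $0$ is the initial object of $\ctb$ (using Corollary~\ref{all_phase_coprod} and Proposition~\ref{prop:phase_init_unit}, so $\pcoproj_I \colon I \to 0 \pcoprod I$ is an iso); a morphism $\obb{0} \to \obb{A}$ in $\plusI{\ctb}$ is forced to be the composite $\obb{0} \cong I \xrightarrow{\pcoproj_I} \obb{A}$, and uniqueness follows because any two such morphisms differ by a phase of $\obb{0}$, but $\obb{0}$'s only phase is the identity since $\pcoproj_I$ is iso and phases fix $\pcoproj_I$. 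For binary coproducts I would guess that the coproduct of $\obb{A} = A \pcoprod I$ and $\obb{B} = B \pcoprod I$ in $\plusI{\ctb}$ is the object $\obb{A \pcoprod B} = (A \pcoprod B) \pcoprod I$, built via associativity (Proposition~\ref{prop:assoc}), with coprojections the evident diagonal maps $\obb{A} \to \obb{A \pcoprod B}$ and $\obb{B} \to \obb{A \pcoprod B}$ fixing $\pcoproj_I$. The cotupling of $f \colon \obb{A} \to \obb{C}$ and $g \colon \obb{B} \to \obb{C}$ is obtained from the universal property of the phased coproduct $(A \pcoprod B) \pcoprod I$ applied to $f$, $g$ and $\pcoproj_I \colon I \to \obb{C}$; the resulting map is automatically diagonal and fixes $\pcoproj_I$, hence lives in $\plusI{\ctb}$. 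The delicate point is \emph{uniqueness} of the cotuple in $\plusI{\ctb}$: the phased coproduct only gives it up to a phase $U$ of $(A\pcoprod B)\pcoprod I$, so I must show that any phase of $\obb{A \pcoprod B}$ that fixes the two coprojections $\obb{A}, \obb{B} \to \obb{A \pcoprod B}$ (and $\pcoproj_I$) is the identity --- this is precisely where the phase-generator hypothesis on $I$ enters, via the ``phase monic'' and ``phase epic'' conditions of Definition~\ref{def:phase-gen}, together with transitivity of phases to move phases along the diagonal coprojections. Monicity of coprojections in $\plusI{\ctb}$ should follow from monicity of those in $\ctb$ together with a splitting argument analogous to Lemma~\ref{opcatlemma}, using that the relevant $\triangleright$-like maps fix $\pcoproj_I$.

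For the second claim I would first check that $\TrI_{\obb{A}} = \{U \colon \obb{A}\to\obb{A} \mid U \text{ is a phase}\}$ genuinely forms a choice of trivial isomorphisms on $\plusI{\ctb}$: each $\TrI_{\obb{A}}$ is a subgroup of $\Aut(\obb{A})$ since phases compose and invert (Corollary~\ref{phase_iso}), and condition~\eqref{eq:triv-isom} --- that for $f \colon \obb{A}\to\obb{B}$ and a phase $p_B$ of $\obb{B}$ there is a phase $p_A$ of $\obb{A}$ with $p_B \circ f = f \circ p_A$ --- is exactly transitivity of phases applied in the appropriate direction (noting $f$ is diagonal), while the reverse direction gives transitivity of the congruence. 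Then I would exhibit the equivalence $\ctb \simeq \quot{\plusI{\ctb}}{\sim}$ via the functor sending $A \mapsto A \pcoprod I = \obb{A}$ and $f \colon A \to B$ to $[\,\tilde f\,]_\tc$ where $\tilde f \colon \obb{A}\to\obb{B}$ is a diagonal lift of $f$ fixing $\pcoproj_I$ --- such a lift exists by the universal property of $A \pcoprod I$ applied to $\pcoproj_B \circ f$ and $\pcoproj_I$, and any two lifts differ by a phase of $\obb{A}$, hence agree in the quotient, so the functor is well-defined. I expect this functor to be faithful (two diagonal lifts that agree up to phase have, after projecting via $[-]$, the same underlying $\ctb$-morphism --- here monicity of $\pcoproj_B$ is used), full (every morphism $\obb{A}\to\obb{B}$ in $\plusI{\ctb}$ projects via $[-]$ to some $f \colon A\to B$ and is then a lift of it up to phase), and essentially surjective on objects by construction since every object of $\plusI{\ctb}$ is of the form $\obb{A}$. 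Assembling these three properties gives the equivalence by the criterion recalled in the Prerequisites section.

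The main obstacle I anticipate is the uniqueness-of-cotuple step for binary coproducts, i.e.\ proving that a phase of $\obb{A \pcoprod B} = (A \pcoprod B) \pcoprod I$ which fixes the diagonal coprojections from $\obb A$ and $\obb B$ must be the identity. Unwinding definitions, such a phase $U$ fixes $\pcoproj_{A\pcoprod B}$ and $\pcoproj_I$ of the outer phased coproduct, so $U$ is already a phase of $(A\pcoprod B)\pcoprod I$; one must then descend to show $U$ acts trivially. My plan here is to use the phase-generator axioms: pick the canonical $\triangledown \colon I \pcoprod I \to I$ and a diagonal monomorphism $I \pcoprod I \to \obb{A}$ (or into $\obb{A\pcoprod B}$) coming from the coprojections, and play the ``phase monic'' property of $\triangledown$ against the ``phase epic'' property of $m$ to sandwich $U$ between two maps it must commute with, forcing $U$ to restrict to the identity phase on the relevant $I \pcoprod I$, and then propagate this back up using transitivity and monicity of the phased coproducts. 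I would also double-check the distributivity/monoidal aspects are not needed here --- the theorem as stated only claims coproducts and the equivalence, so I would keep the argument purely in terms of (phased) coproducts and not invoke $\otimes$ at all.
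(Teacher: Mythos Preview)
Your overall structure is right, and you have correctly located the crux at uniqueness of the cotuple. But your plan for that step has a genuine gap. You write that two candidate cotuples $h,h'$ differ by a phase $U$ of $(A\pcoprod B)\pcoprod I$ and that ``such a phase $U$ fixes $\pcoproj_{A\pcoprod B}$ and $\pcoproj_I$''. This does not follow: from $h\circ\pcoproj_{A,I}=h'\circ\pcoproj_{A,I}$ and $h'=h\circ U$ you get only $h\circ U\circ\pcoproj_{A,I}=h\circ\pcoproj_{A,I}$, which says nothing about $U\circ\pcoproj_{A,I}$ unless $h$ is monic. And even as a ternary phase, $U$ fixes $\pcoproj_{A\pcoprod B}$ only up to a phase of $A\pcoprod B$. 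So the statement you propose to prove --- ``any phase fixing the two coprojections is the identity'' --- is not what uniqueness reduces to. Your intended tool, a diagonal mono $I\pcoprod I\to\obb{A\pcoprod B}$, is also not available: the obvious map sends both copies of $I$ to the single $\pcoproj_I$, and there is no evident reason this is monic.

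The paper's device is to introduce the auxiliary object $(A\pcoprod I)\pcoprod(B\pcoprod I)$ with a map $h$ to $(A\pcoprod B)\pcoprod I$ sending $\pcoproj_{A\pcoprod I}\mapsto\pcoproj_{A,I}$ and $\pcoproj_{B\pcoprod I}\mapsto\pcoproj_{B,I}$. Now the two candidate cotuples, precomposed with $h$, agree on \emph{both} coprojections of this binary phased coproduct, hence differ by a phase $U$ of $(A\pcoprod I)\pcoprod(B\pcoprod I)$; and $h$ is split epic, so it suffices to show $U=\id{}$. Here the phase-generator axioms finally engage: the map $j\colon I\pcoprod I\to(A\pcoprod I)\pcoprod(B\pcoprod I)$ hitting the two copies of $\pcoproj_I$ \emph{is} a coprojection by associativity, hence monic; one shows $U\circ j=j\circ V$ for a phase $V$, uses $h\circ j=\pcoproj_I\circ\triangledown$ and the phase-monic axiom to force $V=\id{}$, and then the phase-epic axiom on $j$ gives $U=\id{}$. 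This auxiliary object is the missing idea in your outline. (A minor separate point: condition~\eqref{eq:triv-isom} for the $\TrI_{\obb{A}}$ is not transitivity --- transitivity pushes phases forward along diagonals, while~\eqref{eq:triv-isom} pulls them back. The correct argument is that $p_B\circ f$ and $f$ agree on $\pcoproj_A$ and $\pcoproj_I$, hence differ by a phase of $\obb{A}$; this uses only the phased-coproduct uniqueness clause.)
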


\begin{proof}
Note that any initial object $0$ in $\ctb$ forms an initial object $\obb{0} = 0 \pcoprod \Gen$ in $\plusI{\ctb}$. Indeed any morphism $f \colon \obb{0} \to \obb{A}$ preserves the $\pcoproj_\Gen$, but by Proposition~\ref{prop:phase_init_unit} $\pcoproj_\Gen \colon \Gen \to \obb{0}$ is an isomorphism, making $f$ unique. 

Now for any pair of objects $\obb{A} = A \pcoprod \Gen$, $\obb{B} = B \pcoprod \Gen$ in $\plusI{\ctb}$ we claim that any phased coproduct $A \pcoprod B$ and object 
\[
\obb{A} + \obb{B} := (A \pcoprod B) \pcoprod \Gen
\]
and morphisms $\pcoproj_{A,\Gen} \colon \obb{A} \to \obb{A} + \obb{B}$ and $\pcoproj_{B,\Gen} \colon \obb{B} \to \obb{A} + \obb{B}$ with $[\pcoproj_{A,\Gen}] = \pcoproj_A$ and $[\pcoproj_{B,\Gen}] = \pcoproj_B$ forms their coproduct in $\plusI{\ctb}$. These morphisms are special kinds of coprojections by associativity (Proposition~\ref{prop:assoc}) and so in particular are monic. We need to show that for all morphisms $f, g$ belonging to $\plusI{\ctb}$ that in $\ctb$ there is a unique $h$ making the following commute:
\[
\begin{tikzcd}
& (A \pcoprod B) \pcoprod \Gen \arrow["h", d, dotted] & \\
 A \pcoprod \Gen \arrow[swap,"f", r] \arrow[ur, "\pcoproj_{A,\Gen}"]  & C \pcoprod \Gen & \Gen \pcoprod B \arrow[ul, swap,"\pcoproj_{B,\Gen}"] \arrow["g", l] 
\end{tikzcd}
\]
We start with the existence property. By Proposition~\ref{prop:assoc} $(A \pcoprod B) \pcoprod \Gen$ also forms a phased coproduct of $A \pcoprod \Gen$ and $B$ via $\pcoproj_{A,\Gen}$ and $\pcoproj_B = \pcoproj_{A, \Gen} \circ \pcoproj_B$. So there exists $k \colon (A \pcoprod B) \pcoprod \Gen \to C \pcoprod I$ with $k \circ \pcoproj_{A,\Gen} = f$ and $k \circ \pcoproj_B = g \circ \pcoproj_B$. Then
\[
k \circ \pcoproj_\Gen = k \circ \pcoproj_{A,\Gen} \circ \pcoproj_\Gen = f \circ \pcoproj_\Gen = g \circ \pcoproj_\Gen
\]
also, and so $k \circ \pcoproj_{B,\Gen} = g \circ U$ for some phase $U$ on $B \pcoprod \Gen$. By transitivity there then is a phase $V$ with respect to $\pcoproj_{A, \Gen}$, $\pcoproj_B$ for which $\pcoproj_{B,\Gen} \circ U = V \circ \pcoproj_{B,\Gen}$. Then $h= k \circ V^{-1}$ is easily seen to have the desired properties.

We next verify uniqueness. Suppose that there exists $f, g$ with $f \circ \pcoproj_{A,\Gen} = g \circ \pcoproj_{A,\Gen}$ and $f \circ \pcoproj_{B,\Gen} = g \circ \pcoproj_{B,\Gen}$. Consider morphisms $h, j$ as in the diagram 
\[
\begin{tikzcd}
\Gen \pcoprod \Gen \dar[swap]{\triangledown} \arrow[loop left, "V", distance=2em] \rar{j} & (A \pcoprod \Gen) \pcoprod (B \pcoprod \Gen) \arrow[loop right, "U", distance=4em] \dar{h} & \\ 
\Gen \rar[swap]{\pcoproj_\Gen} & (A \pcoprod B) \pcoprod \Gen \arrow[r, shift left=2.5, "f"] \arrow[r, shift right=2.5, "g", swap] & C \pcoprod \Gen
\end{tikzcd}
\]
with
\begin{align*}
&h \circ \pcoproj_{A \pcoprod \Gen} = \pcoproj_{A,\Gen}
&j \circ \pcoproj_1 = \pcoproj_{A,\Gen} \circ \pcoproj_\Gen
\\
&h \circ \pcoproj_{B \pcoprod \Gen} = \pcoproj_{B,\Gen}
&j \circ \pcoproj_2 = \pcoproj_{B,\Gen} \circ \pcoproj_\Gen
\end{align*}
Then $g \circ h = f \circ h \circ U$ for some phase $U$ on $(A \pcoprod \Gen) \pcoprod (\Gen \pcoprod B)$. Now $h$ is split epic, since $h \circ k$ is a phase whenever $k$ is a morphism in the opposite direction defined via any of the obvious inclusions of $A, B$ and $I$ into each object. Hence it suffices to prove that $U=\id{}$.

Since $j$ is diagonal we have $U \circ j = j \circ V$ for some phase $V$ as above. We first show that $V = \id{\Gen \pcoprod \Gen}$. Composing with coprojections shows that $h \circ j  = \coproj_\Gen \circ \triangledown$ for some $\triangledown$ with $\triangledown \circ \pcoproj_1 = \triangledown \circ \pcoproj_2 = \id{\Gen}$. Then we have
\begin{align*}
\pcoproj^{C \pcoprod \Gen}_\Gen \circ \triangledown 
&= 
g \circ \pcoproj_\Gen \circ \triangledown
\\
&=
g \circ h \circ j
\\ 
&=
f \circ h \circ U \circ j
\\
&=
f \circ h \circ j \circ V
=
\pcoproj^{C \pcoprod \Gen}_\Gen \circ \triangledown \circ V
\end{align*}
and so $\triangledown  = \triangledown \circ V$. Then since $I$ is a phase generator $V = \id{\Gen \pcoprod \Gen}$, so that $U \circ j = j$. Now again by associativity of phased coproducts $j$ is a coprojection and so is monic, and then since it is diagonal and $I$ is a phase generator we have $U = \id{}$.

For the second statement, note that these $\TrI_\obb{A}$ are a valid choice of trivial isomorphisms, satisfying \eqref{eq:triv-isom} since all morphisms in $\plusI{\ctb}$ are diagonal in $\ctb$. Moreover we indeed have $[f]_\tc = [g]_\tc$ whenever $[f] = [g]$ for the functor $[-] \colon \plusI{\ctb} \to \ctb$. Hence $[-]$ restricts along $[-]_\tc$ to an equivalence $\quot{\plusI{\ctb}}{\sim} \simeq \ctb$.
\end{proof}

\section{Phases in Monoidal Categories} \label{sec:monoidal}

Our treatment of phased coproducts so far has been more general than needed for our main examples, which additionally come with a compatible monoidal structure which we will see makes the $\mathsf{GP}$ construction a natural one.

First, say that a functor $F \colon \ctb \to \ctb'$ \indef{strongly preserves} phased coproducts if for every phased coproduct $A \pcoprod B$ with coprojections $\pcoproj_A, \pcoproj_B$ in $\ctb$, $F(A \pcoprod B)$ is a phased coproduct with coprojections $F(\pcoproj_A)$, $F(\pcoproj_B)$ and moreover has that every phase is of the form $F(U)$ for some phase $U$ of $A \pcoprod B$.

\begin{definition} \label{def:distirb}
We say that phased coproducts in a monoidal category are \deff{distributive} \index{distributive!phased coproducts} when they are strongly preserved by the functors $A \otimes (-)$ and $(-) \otimes A$, for all objects $A$.
\end{definition}

Thanks to Lemma~\ref{lem:isoms} the requirement on $A \otimes (-)$ is equivalent to requiring that some (and hence any) morphism 
\begin{equation} \label{eq:distrib-isomorphism}
 \begin{tikzcd}
A \otimes B \pcoprod A \otimes C \rar{f} & A \otimes (B \pcoprod C)
\end{tikzcd}
\qquad
\text{with}
\qquad
\begin{tabular}{cc}
$f \circ \pcoproj_{A \otimes B} = \id{A} \otimes \pcoproj_B$
\\ 
$f \circ \pcoproj_{A \otimes C} = \id{A} \otimes \pcoproj_C$
\end{tabular}
\end{equation}
 is an isomorphism, and moreover has that every phase on its domain is of the form $f^{-1} \circ (\id{A} \otimes U) \circ f$ for some phase $U$ of $B \pcoprod C$. In the case of actual coproducts, this specialises to the usual notion of distributivity we met in Chapter~\ref{chap:OpCategories}, with the phase condition redundant.  

\begin{remark}
Our definition of distributivity, requiring from strong preservation that \emps{every} phase of $A \otimes B \pcoprod A \otimes C$ arises from one of $B \pcoprod C$, may indeed appear rather strong. However we will find it to hold in very general quotient categories, and in Section \ref{sec:compact_cats} to be automatic in any compact category.  
\end{remark}

Now the trivial isomorphisms in our main examples may be defined naturally using their monoidal structure as follows. In any monoidal category let us call a scalar $s$ \indef{central} \index{central scalar} when we have 
\[
\scalebox{0.8}{\input{./figures/central-scalar-sup.tikz}}
\]
for all morphisms $f$. In a symmetric monoidal category every scalar is central.

\begin{definition}
 By a choice of \deff{global phases} \index{global phases} in a monoidal category $\cta$ we mean a collection $\mathbb{P}$ \label{not:globalphases} of invertible, central scalars closed under composition and inverses. 
\end{definition}

Any such global phase group $\mathbb{P}$ determines a choice of trivial isomorphisms on $\cta$ by setting $\TrI_A := \{ p \cdot \id{A} \mid p \in \mathbb{P}\}$. Then $\TrI_I = \mathbb{P}$, the induced congruence is   
\begin{equation} \label{eq:quotient_rule_general}
\scalebox{0.8}{\input{./figures/mon-congruence-1.tikz}}
\quad
\text{ if } 
\quad
\scalebox{0.8}{\input{./figures/mon-congruence-2.tikz}}
\quad
\text{ for some $u \in \mathbb{P}$}
\end{equation}
and we write $\cta_\quotP := \quot{\cta}{\sim}$\label{not:quotglobphase}.

\begin{lemma} \label{lem:globaltolocalmonoidal}
Let $\cta$ be a monoidal category with distributive finite coproducts and a choice of global phases $\mathbb{P}$. Then $\cta_\quotP$ is a monoidal category with distributive finite phased coproducts with transitive phases.
\end{lemma}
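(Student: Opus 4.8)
The plan is to verify the three assertions of the lemma in turn: that $\cta_\quotP$ is monoidal, that it has distributive finite phased coproducts, and that the phases are transitive. For the monoidal structure, I would note that since $\mathbb{P}$ consists of central scalars closed under composition and inverses, the congruence $\sim$ of \eqref{eq:quotient_rule_general} is compatible with $\otimes$: if $f \sim f'$ and $g \sim g'$ then $f \otimes g \sim f' \otimes g'$, using centrality to collect the scalars $u, v \in \mathbb{P}$ into a single $uv \in \mathbb{P}$ (and using that $p \cdot (f \otimes g) = (p \cdot f) \otimes g = f \otimes (p \cdot g)$). Hence $\otimes$ descends to $\cta_\quotP$, and since $[-]_\tc \colon \cta \to \cta_\quotP$ is an identity-on-objects functor, the coherence isomorphisms $\alpha, \lambda, \rho$ (and $\sigma$ if present) descend to natural isomorphisms satisfying the same coherence equations, so $\cta_\quotP$ is monoidal (symmetric if $\cta$ is).

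Next, phased coproducts: by Lemma~\ref{lem:generalrecipeforphcoprod} applied to $\cta$ with its trivial isomorphisms $\TrI_A = \{p \cdot \id{A} \mid p \in \mathbb{P}\}$, the quotient $\cta_\quotP = \quot{\cta}{\sim}$ already has finite phased coproducts, with $[-]_\tc$ sending coproducts in $\cta$ to phased coproducts in $\cta_\quotP$. So I only need distributivity and transitivity. For distributivity, fix objects $A, B, C$ and a distributivity isomorphism $\phi \colon A \otimes B + A \otimes C \isomto A \otimes (B + C)$ in $\cta$. Then $[\phi]_\tc$ is a morphism $A \otimes B \pcoprod A \otimes C \to A \otimes (B \pcoprod C)$ in $\cta_\quotP$ of the form \eqref{eq:distrib-isomorphism}, and it is an isomorphism since $\phi$ is. The remaining point is the phase condition: every phase $U$ of $A \otimes B \pcoprod A \otimes C$ in $\cta_\quotP$ must equal $[\phi]_\tc^{-1} \circ ([\id{A}]_\tc \otimes V) \circ [\phi]_\tc$ for some phase $V$ of $B \pcoprod C$. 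Here I would use the explicit description of phases coming from Lemma~\ref{lem:generalrecipeforphcoprod}: a phase of a phased coproduct $[\pcoproj_1]_\tc, [\pcoproj_2]_\tc$ obtained by quotienting a genuine coproduct is represented by a morphism $W$ in $\cta$ with $W \circ \coproj_1 = \coproj_1 \cdot p$ and $W \circ \coproj_2 = \coproj_2 \cdot q$ for $p, q \in \mathbb{P}$ — i.e. $W$ is the coproduct map $[\coproj_1 \cdot p, \coproj_2 \cdot q]$. For the coproduct $A \otimes B + A \otimes C$, such a $W$ is $\phi^{-1} \circ (\id{A} \otimes [\coproj_B \cdot p, \coproj_C \cdot q]) \circ \phi$ by naturality and the fact that $p, q$ are central scalars, and $[\coproj_B \cdot p, \coproj_C \cdot q]$ represents exactly a phase $V$ of $B \pcoprod C$; taking $[-]_\tc$ gives the required factorisation. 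The same argument works for $(-) \otimes A$.

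For transitive phases, I would invoke Definition~\ref{def:trans_phases}: given a diagonal morphism $f \colon A \pcoprod B \to C \pcoprod D$ in $\cta_\quotP$ and a phase $U$ of $A \pcoprod B$, I must produce a phase $V$ of $C \pcoprod D$ with $f \circ U = V \circ f$. Represent $f$ by a morphism in $\cta$ which, being diagonal, has the form $f_0 = [\coproj_C \circ g, \coproj_D \circ h]$ for some $g \colon A \to C$, $h \colon B \to D$ (up to $\sim$, but diagonality passes to a genuine diagonal representative since the coprojections are the quotiented coprojections of genuine coproducts); and represent $U$ by $[\coproj_A \cdot p, \coproj_B \cdot q]$ with $p, q \in \mathbb{P}$. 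Then in $\cta$, using centrality of $p, q$, one computes $f_0 \circ U_0 = [\coproj_C \circ g \cdot p, \coproj_D \circ h \cdot q] = [\coproj_C \cdot p, \coproj_D \cdot q] \circ f_0 = V_0 \circ f_0$ where $V_0 = [\coproj_C \cdot p, \coproj_D \cdot q]$ represents a phase $V$ of $C \pcoprod D$. Applying $[-]_\tc$ gives the result.

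The main obstacle I anticipate is the phase condition in distributivity: one must be careful that \emph{every} phase of $A \otimes B \pcoprod A \otimes C$ — a priori an arbitrary equivalence class of endomorphisms in $\cta_\quotP$ fixing the coprojections up to $\sim$ — does arise from a phase of $B \pcoprod C$, and not merely that the distributivity isomorphism conjugates phases to phases. This requires the explicit normal form for phases of a quotiented coproduct (any phase is represented by $[\coproj_1 \cdot p, \coproj_2 \cdot q]$ for $p, q \in \mathbb{P}$), which follows by unwinding the uniqueness clause in Lemma~\ref{lem:generalrecipeforphcoprod} together with monicity-type bookkeeping; once that normal form is in hand, centrality of the scalars in $\mathbb{P}$ makes all the required identities routine. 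I would state and prove this normal form as a short preliminary observation before addressing distributivity.
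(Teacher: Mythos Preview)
Your proposal is correct and follows essentially the same approach as the paper's proof: centrality of $\mathbb{P}$ makes $\otimes$ descend, Lemma~\ref{lem:generalrecipeforphcoprod} gives the phased coproducts, and both distributivity and transitivity reduce to the normal form $[\coproj_1 \cdot p, \coproj_2 \cdot q]$ for phases together with centrality of scalars. The paper's proof is a terse four-line version of what you wrote, in particular dispatching distributivity with the phrase ``inherited from $\cta$'' and transitivity with the single identity $(p \cdot \id{}) \circ f = p \cdot f = f \circ (p \cdot \id{})$; your explicit unpacking of the phase condition in distributivity (that \emph{every} phase of $A \otimes B \pcoprod A \otimes C$ comes from one of $B \pcoprod C$) is exactly the content being glossed over there, and your preliminary observation about the normal form is the right way to make it rigorous.
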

\begin{proof}
Since the $p \in \mathbb{P}$ are central we have $f \sim h$, $g \sim k$ $\implies$ $f \otimes g \sim h \otimes k$. Hence $\otimes$ restricts from $\cta$ to $\cta_\quotP$, making the latter category monoidal. By Lemma~\ref{lem:generalrecipeforphcoprod} coproducts in $\cta$ become phased coproducts in $\cta_\quotP$. Distributivity is inherited from $\cta$, and transitivity from the fact that $(p \cdot \id{}) \circ f = p \cdot f = f \circ (p \cdot \id{})$ for all morphisms $f$ and scalars $p$ in a monoidal category. 
\end{proof}

\begin{examples} \label{example:global-phases}
$\VecC$ and $\Hilb$ are monoidal with distributive finite coproducts, and our earlier choice of trivial isomorphisms correspond to the global phase group $\mathbb{P} = \{ e^{i \theta} \mid \theta \in [0,2\pi)\}$ in both cases. Similarly $\FVeck$ is monoidal and its choice of trivial isomorphisms comes from the global phase group $\mathbb{P} = \{ \lambda \in k \mid  \lambda \neq 0\}$. 
\end{examples}

We now wish to give a converse to this result, showing that $\plusI{\ctb}$ is a monoidal category with a canonical choice of global phases. When $\ctb$ is monoidal we'll always take as chosen object $\Gen$ its monoidal unit. To prove monoidality of $\plusI{\ctb}$ we will use the following general result from~\cite[Prop.~2.6, Lemma~2.7]{kock2008elementary}. 

\begin{lemma} \label{lem:mon_cat_helpful}
A monoidal structure on a category is equivalent to specifying:
\begin{itemize}
\item  a bifunctor $\otimes$ and natural isomorphism $\alpha$ satisfying the pentagon equation;
\item 
an object $I$ such that every morphism $A \otimes I \to B \otimes I$ and $I \otimes A \to I \otimes B$ is of the form $g \otimes \id{I}$, $\id{I} \otimes g$ respectively, for some unique $g \colon A \to B$;
\item 
an isomorphism $\beta \colon I \otimes I \isomto I$.
\end{itemize}
\end{lemma}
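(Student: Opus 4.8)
The plan is to prove the two directions of the claimed equivalence, the second being the substantial one and following \cite[Prop.~2.6, Lemma~2.7]{kock2008elementary}.

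For the forward direction, suppose $(\catC, \otimes, I, \alpha, \lambda, \rho)$ is a monoidal category. Then $\otimes$ is a bifunctor and $\alpha$ a natural isomorphism satisfying the pentagon by definition. The right unitor $\rho$ is a natural isomorphism from the functor $(-) \otimes I$ to the identity functor, so $(-) \otimes I$ is fully faithful: for any $h \colon A \otimes I \to B \otimes I$ the morphism $g := \rho_B \circ h \circ \rho_A^{-1}$ is the unique one with $g \otimes \id{I} = h$, since naturality of $\rho$ gives $g \otimes \id{I} = \rho_B^{-1} \circ g \circ \rho_A = h$. The analogous statement for $I \otimes (-)$ follows dually using $\lambda$. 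Finally one sets $\beta := \rho_I = \lambda_I \colon I \otimes I \isomto I$, these agreeing by the standard consequence of the coherence axioms.

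For the converse we are given $\otimes$, $\alpha$, the unit object $I$ and $\beta$, and must recover the unitors. Because $(-) \otimes I$ is fully faithful there is a unique morphism $\rho_A \colon A \otimes I \to A$ with
\[
\rho_A \otimes \id{I} \;=\; (\id{A} \otimes \beta) \circ \alpha_{A,I,I},
\]
and dually a unique $\lambda_A \colon I \otimes A \to A$ determined through $I \otimes (-)$ by $\alpha^{-1}_{I,I,A}$ and $\beta \otimes \id{A}$. Naturality of $\rho$ and $\lambda$ follows from naturality of $\alpha$ together with the uniqueness clause of full faithfulness; each $\rho_A$, $\lambda_A$ is seen to be an isomorphism by exhibiting an inverse built from $\alpha^{-1}$ and $\beta^{-1}$ in the same way. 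It then remains to verify the triangle identity and that $\lambda_I = \beta = \rho_I$; by Mac Lane's coherence theorem~\cite{mac1978categories}, pentagon plus triangle guarantee that $(\otimes, I, \alpha, \lambda, \rho)$ is a genuine monoidal structure, which moreover induces the given data and property.

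The main obstacle is the triangle identity $(\id{A} \otimes \lambda_B) \circ \alpha_{A,I,B} = \rho_A \otimes \id{B}$. Following Kock, one proves it by a diagram chase combining the instance of the pentagon for the quadruple $(A, I, I, B)$, naturality of $\alpha$ applied to $\beta$ in the appropriate slot, and the defining equations of $\rho_A$ and $\lambda_B$, then cancelling the fully faithful functor $(-) \otimes I$ or $I \otimes (-)$ from both sides. Care is needed to invoke exactly the right pentagon instance and to track these cancellations; the subtle point, already implicit in Kelly's work, is that no separate triangle hypothesis is assumed — it is forced by the pentagon and the full faithfulness of tensoring with $I$ on either side. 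As all of this is carried out in detail in \cite{kock2008elementary}, the proof proper consists of specialising that argument to the present formulation.
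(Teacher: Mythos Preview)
The paper does not prove this lemma at all: it is stated as a quotation of \cite[Prop.~2.6, Lemma~2.7]{kock2008elementary} and used as a black box in the proof of Theorem~\ref{thm:constr_is_monoidal}. Your proposal correctly identifies this source and sketches Kock's argument, so you are in fact supplying more detail than the paper itself; the content of your sketch is sound and matches the cited reference.
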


We will also repeatedly use the following elementary observation.

\begin{lemma} \label{lem:useful}
Suppose that we have morphisms
\[
\begin{tikzcd}
A \pcoprod B \rar{f} & E & C \pcoprod D \lar[swap]{g}
\end{tikzcd}
\]
with $f \circ \pcoproj_A = g \circ \pcoproj_C \circ h$ and $f \circ \pcoproj_B = g \circ \pcoproj_D \circ k$ for some $h, k$. Then $f = g \circ l$ for some diagonal morphism $l$.
\end{lemma}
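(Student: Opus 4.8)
The plan is to apply the defining property of the phased coproduct $A \pcoprod B$ to a carefully chosen pair of morphisms whose target involves the phased coproduct $C \pcoprod D$, and then to upgrade the resulting morphism into a diagonal one by absorbing any phase discrepancy. First I would set $l_0 \colon A \pcoprod B \to C \pcoprod D$ to be a morphism obtained from the universal property of $A \pcoprod B$ applied to the two morphisms $\pcoproj_C \circ h \colon A \to C \pcoprod D$ and $\pcoproj_D \circ k \colon B \to C \pcoprod D$, so that $l_0 \circ \pcoproj_A = \pcoproj_C \circ h$ and $l_0 \circ \pcoproj_B = \pcoproj_D \circ k$; note $l_0$ is diagonal by construction. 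Then both $g \circ l_0$ and $f$ are morphisms $A \pcoprod B \to E$ agreeing after precomposition with $\pcoproj_A$ and $\pcoproj_B$ (using the hypotheses $f \circ \pcoproj_A = g \circ \pcoproj_C \circ h = g \circ l_0 \circ \pcoproj_A$ and similarly for $\pcoproj_B$). Hence by the uniqueness-up-to-phase part of the definition of a phased coproduct, $f = (g \circ l_0) \circ U$ for some phase $U$ of $A \pcoprod B$.

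The remaining step is to set $l := l_0 \circ U$ and check it is diagonal. Since $U$ is a phase, $U \circ \pcoproj_A = \pcoproj_A$ and $U \circ \pcoproj_B = \pcoproj_B$, so $l \circ \pcoproj_A = l_0 \circ U \circ \pcoproj_A = l_0 \circ \pcoproj_A = \pcoproj_C \circ h$ and $l \circ \pcoproj_B = l_0 \circ \pcoproj_B = \pcoproj_D \circ k$; in particular $l$ factors through $\pcoproj_C$ on the $A$-component and through $\pcoproj_D$ on the $B$-component, so $l$ is diagonal. And $g \circ l = g \circ l_0 \circ U = f$ as required.

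I do not anticipate a serious obstacle here; the only subtlety worth being careful about is the direction in which the phase $U$ appears. The uniqueness clause of Definition~\ref{def:ph_coprod} says that any two morphisms $h, h'$ out of the phased coproduct agreeing on the coprojections differ by $h' = h \circ U$ for a phase $U$, so one must be sure to absorb $U$ on the correct side (into $l_0$ from the right, i.e.\ $l = l_0 \circ U$, rather than trying to post-compose). Once that is handled correctly, the argument is a direct unwinding of definitions and requires no computation beyond the identities displayed above.
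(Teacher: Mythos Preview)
Your proof is correct and essentially identical to the paper's: the paper also constructs the diagonal morphism $m$ (your $l_0$) via the existence property of $A \pcoprod B$, observes that $f$ and $g \circ m$ agree on coprojections so differ by a phase $U$, and sets $l = m \circ U$. Your write-up simply makes explicit the verification that $l$ remains diagonal, which the paper leaves to the reader.
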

\begin{proof}
Let $m \colon A \pcoprod B \to C \pcoprod D$ have $m \circ \coproj_A = \coproj_C \circ h$ and $m \circ \coproj_B = \coproj_D \circ k$. Then $f = g \circ m \circ U$ for some phase $U$, giving $l = m \circ U$ as the desired morphism. 
\end{proof}

\begin{theorem} \label{thm:constr_is_monoidal}
Let $\ctb$ be a monoidal category with distributive monic finite phased coproducts. 
Then $\plusI{\ctb}$ is a monoidal category, and $[-] \colon \plusI{\ctb} \to \ctb$ is a strict monoidal functor. 
\end{theorem}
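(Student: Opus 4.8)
The plan is to verify the hypotheses of Lemma~\ref{lem:mon_cat_helpful} for $\plusI{\ctb}$ and then deduce strict monoidality of $[-]$. Throughout we write $\obb{A} = A \pcoprod \Gen$ for objects and recall that morphisms $f \colon \obb{A} \to \obb{B}$ in $\plusI{\ctb}$ are exactly the diagonal morphisms of $\ctb$ fixing $\pcoproj_\Gen$.

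First I would define the tensor on objects by $\obb{A} \otimes \obb{B} := (A \otimes B) \pcoprod \Gen$, noting that $\Gen = \Gen \otimes \Gen$ up to the isomorphism $\beta$ coming from the monoidal unit of $\ctb$, and on morphisms by a distributivity argument: given $f \colon \obb{A} \to \obb{B}$ and $g \colon \obb{C} \to \obb{D}$, one uses the distributivity isomorphisms $\obb{A} \otimes \obb{C} \simeq (A \otimes C) \pcoprod (A \otimes \Gen) \pcoprod (\Gen \otimes C) \pcoprod (\Gen \otimes \Gen)$ (applying Proposition~\ref{prop:assoc} and Definition~\ref{def:distirb}) to build $f \otimes g$ from $[f] \otimes [g]$ in $\ctb$ together with the data of how $f,g$ treat the $\Gen$-summands; the key point is that $f$ and $g$ fix $\pcoproj_\Gen$, so the resulting morphism is diagonal with respect to the decomposition $(A\otimes B)\pcoprod \Gen$ and fixes $\pcoproj_\Gen$, hence lands in $\plusI{\ctb}$. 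Lemma~\ref{lem:useful} is the workhorse here for checking that the various candidate morphisms agree up to a phase and so descend. Functoriality of $\otimes$ and naturality and the pentagon equation for $\alpha$ (defined via the associators of $\ctb$ composed with the distributivity isomorphisms) then follow by transporting the corresponding facts in $\ctb$ across these isomorphisms, using monicity of phased coproducts to pin down morphisms by their restrictions along coprojections.

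Next I would check the unit condition of Lemma~\ref{lem:mon_cat_helpful}: the object $\obb{\Gen} := \Gen \pcoprod \Gen$ should be such that every morphism $\obb{A} \otimes \obb{\Gen} \to \obb{B} \otimes \obb{\Gen}$ is $g \otimes \id{\obb{\Gen}}$ for a unique $g$, and symmetrically on the other side. Unwinding the definition, $\obb{A} \otimes \obb{\Gen} = (A \otimes \Gen) \pcoprod \Gen \simeq \obb{A}$ via the right unitor of $\ctb$, and under this identification the claim reduces to: every diagonal endomorphism-shaped morphism $\obb{A} \to \obb{B}$ fixing $\pcoproj_\Gen$ arises uniquely in this way — which is immediate since tensoring with $\id{\obb{\Gen}}$ corresponds, after the identification, to the identity up to the coherence isomorphisms. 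Uniqueness uses that $\obb{A}$ has monic phased coproducts so a morphism out of it is determined by its components, together with the fact that the $\Gen$-component is forced to fix $\pcoproj_\Gen$. Finally $\beta \colon \obb{\Gen} \otimes \obb{\Gen} \to \obb{\Gen}$ is supplied by $\Gen \pcoprod \Gen \simeq \Gen$ again by Proposition~\ref{prop:phase_init_unit} applied in $\ctb$ (or directly by a distributivity computation collapsing the $\Gen$-summands), and is an isomorphism by Lemma~\ref{lem:isoms}. Assembling these three pieces, Lemma~\ref{lem:mon_cat_helpful} yields a monoidal structure on $\plusI{\ctb}$.

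For the last clause, that $[-] \colon \plusI{\ctb} \to \ctb$ is strict monoidal, I would observe that by construction $[\obb{A} \otimes \obb{B}] = A \otimes B = [\obb{A}] \otimes [\obb{B}]$ on objects, $[\obb{\Gen}] = \Gen = I$ is the unit of $\ctb$, and $[f \otimes g] = [f] \otimes [g]$ on morphisms — this last equality is exactly how $f \otimes g$ was defined, so it holds on the nose rather than merely up to isomorphism, and the coherence isomorphisms $\alpha$ of $\plusI{\ctb}$ are sent to those of $\ctb$ since they were defined by transporting them. Hence $[-]$ preserves $\otimes$, $I$ and all coherence data strictly. The main obstacle I anticipate is the bookkeeping in the object-and-morphism definition of $\otimes$: one must carefully use distributivity \emph{together with its phase clause} (every phase of a distributed phased coproduct comes from a phase of the original) to ensure that $f \otimes g$ is well-defined independent of the various choices of distributivity isomorphisms and representatives, and that it remains diagonal and $\pcoproj_\Gen$-fixing; Lemma~\ref{lem:useful} and monicity of the phased coproducts are what make this manageable, but it is the step requiring the most care.
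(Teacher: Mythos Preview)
Your overall strategy matches the paper's: both invoke Lemma~\ref{lem:mon_cat_helpful}, take $\obb{A}\tens\obb{B}=(A\otimes B)\pcoprod I$ on objects, and verify the three conditions. The paper makes the morphism-level definition more concrete by fixing monic ``corner'' maps $c_{\obb{A},\obb{B}}\colon\obb{A}\tens\obb{B}\to\obb{A}\otimes\obb{B}$ (with $c\circ\pcoproj_{A\otimes B}=\pcoproj_A\otimes\pcoproj_B$ and $c\circ\pcoproj_I=(\pcoproj_I\otimes\pcoproj_I)\circ\rho_I^{-1}$) and then \emph{defining} $f\tens g$ by $c_{\obb{C},\obb{D}}\circ(f\tens g)=(f\otimes g)\circ c_{\obb{A},\obb{B}}$; this gives sharper control than your four-fold decomposition, but the spirit is the same.

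There is however a genuine gap in your treatment of the unit condition, and this is exactly where the paper does the real work. You argue that via $\rho_A$ one identifies $\obb{A}\tens\obb{I}\simeq\obb{A}$, so the condition ``reduces'' to something immediate. But under that identification the map $g\mapsto g\tens\id{\obb{I}}$ does \emph{not} correspond to the identity: if $r_A,r_B$ lift $\rho_A,\rho_B$ to $\plusI{\ctb}$, then $r_B\circ(g\tens\id{\obb{I}})\circ r_A^{-1}$ agrees with $g$ on both coprojections $\pcoproj_A,\pcoproj_I$ --- and in a \emph{phased} coproduct that only pins a morphism down up to a phase, not on the nose. Your ``a morphism out of $\obb{A}$ is determined by its components'' intuition is precisely what fails for phased coproducts. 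The paper resolves this with an actual phase-chase: it introduces a codiagonal $\obb{I}\to I$, shows the putative $g$ is unique because it is forced to equal $f$ composed with a specific (invertible) phase, and then separately constructs $g$ as $f$ composed with further phases. This argument is the substantive content of the proof and cannot be waved away.

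A smaller error: your $\beta$ is not supplied by Proposition~\ref{prop:phase_init_unit}, since $I$ is the monoidal unit, not a phased initial object, and $I\pcoprod I$ is certainly not isomorphic to $I$ (in $\HilbP$ it is the qubit). Rather $\obb{I}\tens\obb{I}=(I\otimes I)\pcoprod I$, and $\beta$ is any diagonal morphism with components $\rho_I$ and $\id{I}$, which is an isomorphism by Lemma~\ref{lem:isoms}.
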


\begin{proof}
We define a monoidal product $\tens$ on $\plusI{\ctb}$ as follows. For each pair of objects $\obb{A}, \obb{B}$ choose some object $\obb{A} \tens \obb{B} = A \otimes B \pcoprod I$ and $c_{\obb{A},\obb{B}} \colon \obb{A} \tens \obb{B} \to \obb{A} \otimes \obb{B}$ satisfying
\begin{equation} \label{eq:corner}
c_{\obb{A},\obb{B}} \circ \pcoproj_{A \otimes B} = \pcoproj_A \otimes \pcoproj_B 
\qquad 
c_{\obb{A},\obb{B}} \circ \pcoproj_I = (\pcoproj_I \otimes \pcoproj_I) \circ \rho_{I}^{-1}
\end{equation}
which we depict as 
\[
\scalebox{0.8}{\input{./figures/sup3.tikz}}
\]
Using distributivity, associativity (Proposition~\ref{prop:assoc}), and $\rho_I$, we have isomorphisms
\begin{align*}
\obb{A} \otimes \obb{B} 
&\simeq
(A \otimes B \pcoprod A \otimes I) \pcoprod (I \otimes B \pcoprod I \otimes I) \\ 
&\simeq 
(\obb{A} \tens \obb{B})\pcoprod (A \otimes I \pcoprod I \otimes B)
\end{align*}
making any such morphism $c_{\obb{A},\obb{B}}$ a coprojection, and hence monic. Then for morphisms $f \colon \obb{A} \to \obb{C}$ and $g \colon \obb{B} \to \obb{D}$ in $\plusI{\ctb}$ we define $f \tens g$ to be the unique morphism in $\ctb$ such that 
\[
\scalebox{0.8}{\input{./figures/sup2.tikz}}
\]
Indeed such a map exists and belongs to $\plusI{\ctb}$ by Lemma~\ref{lem:useful} since we have
\begin{align*}
(f \otimes g) \circ c_{\obb{A},\obb{B}} \circ \pcoproj_{A \otimes B} &= c_{\obb{C}, \obb{D}} \circ \pcoproj_{C \otimes D} \circ ([f] \otimes [g]) \\ 
(f \otimes g) \circ c_{\obb{A},\obb{B}} \circ \pcoproj_{I \otimes I} &= c_{\obb{C},\obb{D}} \circ \pcoproj_I
\end{align*}
Uniqueness follows from monicity of $c_{\obb{C},\obb{D}}$ and ensures that $\tens$ is functorial. We define $\aalpha_{\obb{A},\obb{B},\obb{C}} \colon (\obb{A} \tens \obb{B}) \tens \obb{C} \to \obb{A} \tens (\obb{B} \tens \obb{C})$ to be the unique morphism such that 
\begin{equation} \label{eq:defaaalpha}
\scalebox{0.8}{\input{./figures/sup5.tikz}}
\end{equation}
Existence again follows from Lemma~\ref{lem:useful}. For uniqueness, distributivity tells us that each morphism $\id{A} \otimes c_{\obb{B}, \obb{C}}$ is again a coprojection since $c_{\obb{B}, \obb{C}}$ is, and hence is monic. By symmetry there is some $\aalpha'_{\obb{A}, \obb{B}, \obb{C}}$ satisfying the horizontally reflected version of~\eqref{eq:defaaalpha}, and then thanks to uniqueness this is an inverse to $\aalpha_{\obb{A}, \obb{B}, \obb{C}}$. 

Again using monicity of the $\id{\obb{A}} \otimes c_{\obb{B},\obb{C}}$ we verify that $\aalpha$ is natural:
\[
\scalebox{0.8}{\input{./figures/sup5il.tikz}}
\]
and that it satisfies the pentagon law:
\begingroup
\allowdisplaybreaks
\begin{align*}
\scalebox{0.8}{\input{./figures/sup6il-1.tikz}} \\ 
\scalebox{0.8}{\input{./figures/sup6il-2.tikz}} 
\end{align*}
\endgroup
For the unit object in $\plusI{\ctb}$ choose any $\obb{I} = I \pcoprod I$. Then any morphism $\beta \colon \obb{I} \tens \obb{I} \to \obb{I}$ with $\beta \circ \pcoproj_1 = \pcoproj_1 \circ \rho_I$ and $\beta \circ \pcoproj_2 = \pcoproj_2$ is an isomorphism belonging to $\plusI{\ctb}$. 

We now show that in $\plusI{\ctb}$ every morphism $f \colon \obb{A} \tens \obb{I} \to \obb{B} \tens \obb{I}$ is of the form $g \tens \id{\obb{I}}$ for a unique $g \colon \obb{A} \to \obb{B}$. Choose any $r_A$, $r_B$ in $\plusI{\ctb}$ with $[r_A] = \rho_A$ and $[r_B] = \rho_B$ in $\ctb$, setting
\[
\scalebox{0.8}{\input{./figures/sup7.tikz}}
\]
Then the statement is equivalent to requiring that for every diagonal $f \colon \obb{A} \to \obb{B}$ there is a unique diagonal $g \colon \obb{A} \to \obb{B}$ with
\[
\scalebox{0.8}{\input{./figures/sup8.tikz}}
\]
Now let $\tinycounit \colon \obb{I} \to I$ in $\ctb$ with $\tinycounit \circ \pcoproj_1 = \id{I} = \tinycounit \circ \pcoproj_2$. Applying coprojections we have 
\[
\scalebox{0.8}{\input{./figures/sup9i.tikz}} 
\]
for some phase $U$, which is in particular invertible. This makes $g$ unique. We now show that $g$ exists. 
Applying coprojections again one may see that
\[
\scalebox{0.8}{\input{./figures/sup13ii.tikz}}
\]
for some phases $V$ and $W$. But then
\[
\scalebox{0.8}{\input{./figures/sup12i.tikz}}
\]
yielding the result with $g = f \circ V \circ U$. The statement about morphisms $\obb{I} \tens \obb{A} \to \obb{I} \tens \obb{B}$ follows similarly. Hence by Lemma~\ref{lem:mon_cat_helpful}, $(\tens,\aalpha, \obb{I}, \beta)$ extends to a monoidal structure on $\plusI{\ctb}$. Finally from their definitions we quickly see that $[f \tens g]=[f] \otimes [g]$, $[\aalpha]= \alpha$, and $[\beta]=\rho_I$, and hence by~\cite[Proposition~3.5]{kock2008elementary} the functor $[-]$ is strict monoidal.
\end{proof}

\begin{lemma} 
In the situation of Theorem~\ref{thm:constr_is_monoidal}, if $\ctb$ is symmetric monoidal then so are $\plusI{\ctb}$ and the functor $[-]$.
\end{lemma}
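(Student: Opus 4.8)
The statement to prove is that if $\ctb$ is symmetric monoidal (in addition to the hypotheses of Theorem~\ref{thm:constr_is_monoidal}), then $\plusI{\ctb}$ is symmetric monoidal and the strict monoidal functor $[-] \colon \plusI{\ctb} \to \ctb$ is a symmetric monoidal functor. The plan is to construct the braiding on $\plusI{\ctb}$ directly, using the same ``define-by-universal-property-then-check-monicity-forces-uniqueness'' technique already used for $\aalpha$ and $\beta$ in the proof of Theorem~\ref{thm:constr_is_monoidal}, and then to verify that the hexagon and symmetry axioms follow by pushing them down through the faithful-enough data along $[-]$.

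First I would define, for each pair of objects $\obb{A}, \obb{B}$ of $\plusI{\ctb}$, a morphism $\ssigma_{\obb{A},\obb{B}} \colon \obb{A} \tens \obb{B} \to \obb{B} \tens \obb{A}$ as the unique morphism in $\ctb$ making the square
\[
\begin{tikzcd}[column sep = large]
\obb{A} \tens \obb{B} \rar{\ssigma_{\obb{A},\obb{B}}} \dar[swap]{c_{\obb{A},\obb{B}}} & \obb{B} \tens \obb{A} \dar{c_{\obb{B},\obb{A}}} \\
\obb{A} \otimes \obb{B} \rar[swap]{\sigma_{\obb{A},\obb{B}}} & \obb{B} \otimes \obb{A}
\end{tikzcd}
\]
commute, where $\sigma$ is the braiding of $\ctb$. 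Existence of such a diagonal morphism follows from Lemma~\ref{lem:useful}, exactly as for $\aalpha$: one checks $\sigma_{\obb{A},\obb{B}} \circ c_{\obb{A},\obb{B}}$ sends $\pcoproj_{A \otimes B}$ to $(\pcoproj_B \otimes \pcoproj_A) \circ \sigma_{A,B}$ and $\pcoproj_I$ to $(\pcoproj_I \otimes \pcoproj_I) \circ \rho_I^{-1}$ (using symmetry of $\ctb$ to relate $\sigma \circ \lambda^{-1}$ and $\rho^{-1}$), and these factor through $c_{\obb{B},\obb{A}}$. Uniqueness follows from monicity of $c_{\obb{B},\obb{A}}$, which was established in that proof. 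Since $[c_{\obb{A},\obb{B}}]$ is an isomorphism (it is a coprojection onto a summand and, after composing with the distributivity isomorphism, identity on the relevant component), we get $[\ssigma_{\obb{A},\obb{B}}] = \sigma_{\obb{A},\obb{B}}$; naturality of $\ssigma$ then follows from naturality of $\sigma$ together with monicity of the $c_{\obb{B},\obb{A}}$, again mirroring the naturality argument for $\aalpha$.

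Next I would verify that $\ssigma$ satisfies the hexagon coherence equations and the symmetry condition $\ssigma_{\obb{B},\obb{A}} \circ \ssigma_{\obb{A},\obb{B}} = \id{}$. For the symmetry equation: composing the defining squares for $\ssigma_{\obb{A},\obb{B}}$ and $\ssigma_{\obb{B},\obb{A}}$ shows $c_{\obb{A},\obb{B}} \circ \ssigma_{\obb{B},\obb{A}} \circ \ssigma_{\obb{A},\obb{B}} = \sigma_{\obb{B},\obb{A}} \circ \sigma_{\obb{A},\obb{B}} \circ c_{\obb{A},\obb{B}} = c_{\obb{A},\obb{B}}$, and monicity of $c_{\obb{A},\obb{B}}$ gives the result. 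For each hexagon equation the strategy is the same: both sides are morphisms out of a threefold tensor $(\obb{A} \tens \obb{B}) \tens \obb{C}$ or similar; I would precompose both sides with the appropriate composite of $c$'s (e.g.\ $(c \tens \id{}) $ followed by $c$, which after distributivity is again a coprojection, hence monic) to land in $\ctb$, where the identity holds because $\ctb$ is symmetric monoidal and $[-]$ sends $c \mapsto$ coprojection, $\aalpha \mapsto \alpha$, $\ssigma \mapsto \sigma$. Finally, because $[-]$ is strict monoidal with $[\ssigma] = \sigma$, it is automatically a (strict) symmetric monoidal functor.

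\textbf{Main obstacle.} The routine-looking part — producing the defining commuting squares for $\ssigma$ and the hexagons — is genuinely straightforward; the one delicate point is ensuring that in each coherence diagram the composite of structure corner-maps $c$ one precomposes with really is monic, so that equality in $\ctb$ upgrades to equality in $\plusI{\ctb}$. This requires invoking distributivity (Definition~\ref{def:distirb}) to see that $\id{A} \otimes c_{\obb{B},\obb{C}}$, $c \tens \id{}$, etc.\ are coprojections of phased coproducts and hence monic, exactly as was done for $\aalpha$'s uniqueness in Theorem~\ref{thm:constr_is_monoidal}. Beyond that bookkeeping, there is no real difficulty: the argument is a faithful transport of the symmetric structure of $\ctb$ along the strict monoidal $[-]$, reflected back using monicity.
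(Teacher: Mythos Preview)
Your proposal is correct and follows essentially the same approach as the paper: define $\ssigma_{\obb{A},\obb{B}}$ as the unique morphism satisfying $c_{\obb{B},\obb{A}} \circ \ssigma_{\obb{A},\obb{B}} = \sigma_{\obb{A},\obb{B}} \circ c_{\obb{A},\obb{B}}$ via Lemma~\ref{lem:useful}, deduce symmetry and naturality from monicity of the $c$'s, and verify the hexagon by reducing (through the $c$'s and distributivity) to the hexagon in $\ctb$. One small slip: $c_{\obb{A},\obb{B}}$ is a morphism in $\ctb$ whose codomain $\obb{A} \otimes \obb{B}$ is not an object of $\plusI{\ctb}$, so speaking of ``$[c_{\obb{A},\obb{B}}]$'' does not quite type-check; the identity $[\ssigma_{\obb{A},\obb{B}}] = \sigma_{A,B}$ instead follows directly by precomposing the defining square with $\pcoproj_{A \otimes B}$ and using monicity of $c_{\obb{B},\obb{A}}$.
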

\begin{proof}
Define $\ssigma_{\obb{A}, \obb{B}} \colon \obb{A} \tens \obb{B} \to \obb{B} \tens \obb{A}$ to be the unique map such that 
\[
\sigma_{\obb{A}, \obb{B}} \circ c_{\obb{A},\obb{B}} = c_{\obb{B}, \obb{A}} \circ \ssigma_{\obb{A}, \obb{B}}
\]
again establishing existence with Lemma~\ref{lem:useful}. Since $\sigma_{\obb{A}, \obb{B}}$ is an isomorphism with inverse $\sigma_{\obb{B}, \obb{A}}$, uniqueness forces $\ssigma_{\obb{A}, \obb{B}}$ to be the same. Naturality of $\ssigma$ is easily verified using monicity of the $c_{\obb{A}, \obb{B}}$ and the definition of $\tens$.
We now check the first hexagon equation, with the second being shown dually. 
\[
\scalebox{0.8}{\input{./figures/sup29l.tikz}}
\]
\end{proof}

To show next that $\plusI{\ctb}$ has coproducts, we use the following. 

\begin{lemma} \label{lem:deleter_cancellation_modified}
Let $\ctb$ be monoidal with distributive finite phased coproducts. Then $I$ is a phase generator. 
\end{lemma}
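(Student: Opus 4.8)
The plan is to verify the two defining conditions of a phase generator (Definition~\ref{def:phase-gen}) for the monoidal unit $I$, exploiting the distributivity of phased coproducts.

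\textbf{First condition: any $\triangledown \colon I \pcoprod I \to I$ with $\triangledown \circ \pcoproj_1 = \id{I} = \triangledown \circ \pcoproj_2$ is phase monic.} Suppose $U, V$ are phases of $I \pcoprod I$ with $\triangledown \circ U = \triangledown \circ V$. I would use distributivity of phased coproducts under $(-) \otimes (I \pcoprod I)$ (equivalently, Lemma~\ref{lem:useful} together with the isomorphism of~\eqref{eq:distrib-isomorphism}): the object $(I \pcoprod I) \otimes (I \pcoprod I)$ is a phased coproduct in a way that, via $\rho_I$ and associativity, lets us represent phases on $I \pcoprod I$ by the structure of scalars. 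More concretely, since $\triangledown$ has $\triangledown \circ \pcoproj_i = \id{I}$, precomposing the equation $\triangledown \circ U = \triangledown \circ V$ with $\pcoproj_1$ and $\pcoproj_2$ and using~\eqref{eq:phase-eq} gives trivial identities, so the content must come from how $U$ and $V$ act `between' the two copies. The key move is to tensor: consider $(\id{I \pcoprod I}) \otimes \triangledown$ and the distributivity isomorphism $f \colon (I \pcoprod I) \otimes I \pcoprod (I \pcoprod I) \otimes I \isomto (I \pcoprod I) \otimes (I \pcoprod I)$; any phase on the left-hand domain has the form $f^{-1} \circ (\id{} \otimes W) \circ f$ for a phase $W$ of $I \pcoprod I$, and by the uniqueness built into strong preservation this forces $U = V$ once we track the $\triangledown$-identification. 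I expect this to reduce, after using $\rho_I \colon I \otimes I \simeq I$, to the statement that the scalar monoid detects equality of phases here.

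\textbf{Second condition: any diagonal monomorphism $m \colon I \pcoprod I \to A \pcoprod B$ is phase epic.} Suppose $U \circ m = V \circ m$ for phases $U, V$ of $I \pcoprod I$. Since $m$ is diagonal, $m \circ \pcoproj_1 = \pcoproj_A \circ a$ and $m \circ \pcoproj_2 = \pcoproj_B \circ b$ for some $a \colon I \to A$, $b \colon I \to B$. I would compose $U \circ m = V \circ m$ with $\pcoproj_1, \pcoproj_2$ on the right: using~\eqref{eq:phase-eq} both sides collapse to $m \circ \pcoproj_i$, giving no information directly; so again the argument must go through monicity of $m$. Here $m$ monic means $m \circ g = m \circ g' \implies g = g'$; applying this to $g = U$, $g' = V$ is exactly what we want, but only if $U, V$ are morphisms \emph{out of} the domain of $m$ — which they are, being endomorphisms of $I \pcoprod I$. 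Wait: monicity of $m$ gives that $m \circ (-)$ is injective on morphisms \emph{into} $I \pcoprod I$, which is precisely $U, V \colon I \pcoprod I \to I \pcoprod I$. So $U \circ m = V \circ m$ is the wrong direction — it is $m$ being \emph{epic} that we'd need, not monic. The resolution, which I would carry out carefully, is that for the diagonal $m$ one builds a retraction-like witness: using distributivity and the fact that $A \pcoprod B \simeq (\obb{A}\tens\obb{B})\pcoprod\cdots$-type decompositions, one produces a morphism $n \colon A \pcoprod B \to I \pcoprod I$ diagonal with $n \circ m$ a phase; then $U \circ m = V \circ m$ implies $U \circ m \circ (\text{suitable section})$, and chasing phases yields $U = V$. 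Alternatively, and more cleanly, one notes the scalars $I \to I$ suffice to coordinatise phases of $I\pcoprod I$ via $\rho_I$, and a diagonal monomorphism $m$ restricts to monomorphisms $a, b$ on each summand, which on the unit forces $a, b$ to be (split) monic scalars, hence the phase data is faithfully transported.

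\textbf{Main obstacle.} The genuine difficulty is pinning down the precise structure of phases of $I \pcoprod I$ and showing the tensor/distributivity machinery transports them faithfully in both directions; this is where the strong-preservation clause of distributivity (that \emph{every} phase downstairs lifts) does the real work, and where one must be careful not to conflate `phase epic' with `epic' or `phase monic' with `monic'. I would structure the write-up so that both bullets are reduced to a single lemma: that $\TrI$-data on $I \pcoprod I$ is faithfully represented by central scalars via $\rho_I$ and the distributivity isomorphisms, after which both phase-monicity of $\triangledown$ and phase-epicity of diagonal monos $m$ follow by elementary diagram chases using monicity of the relevant coprojections (which are monic by Proposition~\ref{prop:assoc} and the hypothesis that phased coproducts are monic).
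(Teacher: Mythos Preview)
Your proposal has a fundamental misreading of the second condition and lacks the key construction needed for both.

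\textbf{Second condition.} You write ``Suppose $U \circ m = V \circ m$ for phases $U, V$ of $I \pcoprod I$'', but phase epic means the phases live on the \emph{codomain} $A \pcoprod B$, not the domain. The condition to verify is: for every phase $U$ of $A \pcoprod B$ with $U \circ m = m$, we have $U = \id{A \pcoprod B}$. Your subsequent confusion about which direction monicity of $m$ helps stems from this misreading. Once corrected, monicity of $m$ alone is still not enough, because $U$ is an endomorphism of the \emph{target}; you need a mechanism to transport $U$ back to a phase on $I \pcoprod I$ where the first part applies.

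\textbf{Missing idea (both parts).} The paper's argument hinges on a concrete ``copying'' map. For the first part, one constructs $\delta \colon I \pcoprod I \to (I \pcoprod I) \otimes (I \pcoprod I)$ with $\delta \circ \pcoproj_i = (\pcoproj_i \otimes \pcoproj_i) \circ \rho_I^{-1}$. Composing $\delta$ with $\triangledown \otimes \id{}$ and $\id{} \otimes \triangledown$ (up to $\rho$) both give phases, and distributivity forces all phases of the tensor to have the form $\id{} \otimes W$; comparing these relations with the hypothesis $\triangledown \circ U = \triangledown$ yields $U = \id{}$. For the second part, one similarly builds $A \pcoprod B \to (A \pcoprod B) \otimes (I \pcoprod I)$ sending $\pcoproj_A \mapsto (\pcoproj_A \otimes \pcoproj_1) \circ \rho_A^{-1}$ and $\pcoproj_B \mapsto (\pcoproj_B \otimes \pcoproj_2) \circ \rho_B^{-1}$. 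Distributivity then rewrites the phase $U$ of $A \pcoprod B$ as $\id{} \otimes V$ for a phase $V$ of $I \pcoprod I$; precomposing with the diagonal mono $m$ and using monicity of $m$ reduces to $\triangledown \circ V = \triangledown$, whence $V = \id{}$ by the first part, and then $U = \id{}$.

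Your sketch gestures at distributivity but never builds these copying maps, which are what actually convert the strong-preservation clause into a usable equation. The ``retraction-like witness $n \colon A \pcoprod B \to I \pcoprod I$'' you propose does not obviously exist (there are no given maps $A \to I$, $B \to I$), and the alternative ``scalars coordinatise phases'' route presupposes exactly what the lemma is meant to establish. Finally, note the lemma does not assume phased coproducts are monic; that hypothesis is not available here.
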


\begin{proof}
Let $\obb{I} = I \pcoprod I$, $\tinycounit \colon \obb{I} \to I$ with $\tinycounit \circ \pcoproj_1 = \tinycounit \circ \pcoproj_2 = \id{I}$ and $U$ be a phase on $\obb{I}$ with $\tinycounit \circ U = \tinycounit$. We need to show that $U = \id{\obb{I}}$. Let $\tinymultflip \colon \obb{I} \to \obb{I} \otimes \obb{I}$ with $\tinymultflip \circ \pcoproj_i =  (\pcoproj_i \otimes \pcoproj_i) \circ \rho_I^{-1}$ for $i = 1, 2$. Applying the $\pcoproj_i$ we see that there are phases $Q, V$ and $W$ with 
\[
\scalebox{0.8}{\input{./figures/sup15.tikz}}
\]
But then
\[
\scalebox{0.8}{\input{./figures/sup16.tikz}}
\]
and so $W = \id{}$. Hence $U \circ Q = Q \circ W = Q$ and so $U = \id{}$. 

For the next property, let $m \colon \obb{I} \to A \pcoprod B$ be a diagonal monomorphism and $U$ a phase on $A \pcoprod B$ with $U \circ m = m$. We need to show that $U = \id{A \pcoprod B}$. Let $\tinymultflip[whitedot] \colon A \pcoprod B \to (A \pcoprod B) \otimes \obb{I}$ with $\tinymultflip[whitedot] \circ \pcoproj_A = (\pcoproj_A \circ \pcoproj_1) \circ {\rho_A}^{-1}$ and $\tinymultflip[whitedot] \circ \pcoproj_B = (\pcoproj_B \circ \pcoproj_2) \circ {\rho_B}^{-1}$. Applying coprojections and using distributivity we see that there are phases $V$ and $W$ on $\obb{I}$ with
\[
\scalebox{0.8}{\input{./figures/sup30.tikz}}
\]
Then we have
\[
\scalebox{0.8}{\input{./figures/sup31.tikz}}
\]
and so composing with $\tinycounit$ and using monicity of $m$ we obtain
\[
\scalebox{0.8}{\input{./figures/sup32.tikz}}
\]
But now $(\tinycounit \otimes \id{\obb{I}}) \circ \tinymultflip$ is a phase and so is epic. Hence by the first part we have $V = \id{\obb{I}}$. Similarly $(\id{A \pcoprod B} \otimes \tinycounit) \circ \tinymultflip[whitedot] = Q$ for some phase $Q$ on $A \pcoprod B$, giving $Q \circ U = Q$ and so $U = \id{}$. 
\end{proof}

\begin{theorem} \label{thm:getmoncoprod}
Let $\ctb$ be a monoidal category with distributive monic finite phased coproducts with transitive phases. Then $\plusI{\ctb}$ has distributive, monic finite coproducts.  
\end{theorem}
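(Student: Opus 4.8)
The plan is to combine the two main structural theorems already proved in this section, namely Theorem~\ref{thm:localToGlobal} and Theorem~\ref{thm:constr_is_monoidal}, and check that the coproducts produced by the former interact distributively with the monoidal structure produced by the latter. First I would invoke Lemma~\ref{lem:deleter_cancellation_modified} to note that the monoidal unit $I$ is automatically a phase generator, so the hypotheses of Theorem~\ref{thm:localToGlobal} are met (the remaining hypotheses---monic phased coproducts with transitive phases---are assumed directly). That theorem then gives that $\plusI{\ctb}$ has monic finite coproducts, constructed explicitly as $\obb{A} + \obb{B} = (A \pcoprod B) \pcoprod \Gen$ with coprojections $\pcoproj_{A,\Gen}$, $\pcoproj_{B,\Gen}$. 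Simultaneously, Theorem~\ref{thm:constr_is_monoidal} gives that $\plusI{\ctb}$ is monoidal with $\obb{A} \tens \obb{B} = A \otimes B \pcoprod \Gen$, and that $[-] \colon \plusI{\ctb} \to \ctb$ is strict monoidal. So the only thing genuinely left to prove is distributivity of these coproducts in $\plusI{\ctb}$.

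For distributivity, I would exhibit an explicit candidate inverse to the canonical comparison morphism $\obb{A} \tens \obb{B} + \obb{A} \tens \obb{C} \to \obb{A} \tens (\obb{B} + \obb{C})$ and verify it is indeed inverse. The key observation is that applying the strict monoidal functor $[-]$ sends this comparison morphism to the canonical distributivity morphism $A \otimes B \pcoprod A \otimes C \to A \otimes (B \pcoprod C)$ in $\ctb$, which is an isomorphism since $\ctb$ has distributive phased coproducts. I would use this together with monicity of the coprojections (so that morphisms out of a coproduct in $\plusI{\ctb}$ are determined, up to phase, by their components) and the fact that $[-]$ reflects isomorphisms on the relevant objects---because $\obb{A} + \obb{B}$ is itself built as a phased coproduct $(A \pcoprod B) \pcoprod \Gen$, a diagonal morphism between such objects restricting to an isomorphism on the $A \pcoprod B$ component and to the identity on the $\Gen$ component is an isomorphism by Lemma~\ref{lem:isoms}. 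Concretely: let $f$ be the comparison morphism; build a morphism $g$ in the other direction whose $[-]$-image is the inverse distributivity map and which preserves $\pcoproj_\Gen$; then $[g \circ f]$ and $[f \circ g]$ are identities, and since both composites are diagonal morphisms fixing $\pcoproj_\Gen$, Lemma~\ref{lem:isoms} (or directly the monicity of coprojections plus the phase-generator property of $\Gen$ from Lemma~\ref{lem:deleter_cancellation_modified}) forces $g \circ f$ and $f \circ g$ to be identities on the nose. The phase clause in the definition of distributivity (Definition~\ref{def:distirb})---that every phase on the domain arises from a phase of $\obb{B} + \obb{C}$---I would check by a parallel argument, transporting phases back and forth along $f$ and using that phases in $\plusI{\ctb}$ are exactly the trivial isomorphisms $\TrI$ identified in Theorem~\ref{thm:localToGlobal}, together with transitivity of phases in $\ctb$ to move a phase across the distributivity isomorphism.

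Monicity of the coproducts is immediate: it is part of the conclusion of Theorem~\ref{thm:localToGlobal} that the coprojections $\pcoproj_{A,\Gen}$ are (special kinds of) coprojections of phased coproducts and hence monic, so nothing new is required there. I would also remark that the symmetric case follows the same way, since $\plusI{\ctb}$ is symmetric monoidal when $\ctb$ is by the lemma following Theorem~\ref{thm:constr_is_monoidal}, and distributivity against $(-) \tens \obb{A}$ then follows from distributivity against $\obb{A} \tens (-)$ by conjugating with the symmetry.

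The main obstacle I anticipate is bookkeeping rather than conceptual: keeping straight which morphisms are diagonal, ensuring that the candidate inverse $g$ genuinely lands in $\plusI{\ctb}$ (i.e.\ is diagonal and fixes $\pcoproj_\Gen$)---this is where Lemma~\ref{lem:useful} will be doing the work, just as it did repeatedly in the proof of Theorem~\ref{thm:constr_is_monoidal}---and handling the phase clause carefully, since a naive argument only shows the comparison map is an isomorphism of objects and not that it respects the full phased-coproduct structure. The transitivity hypothesis on phases in $\ctb$ is precisely what rescues the phase clause, so I would make sure to flag where it is used.
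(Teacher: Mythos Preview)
Your overall structure matches the paper's: invoke Lemma~\ref{lem:deleter_cancellation_modified} so that $I$ is a phase generator, apply Theorem~\ref{thm:localToGlobal} to obtain monic finite coproducts in $\plusI{\ctb}$, and then verify distributivity by pushing the comparison morphism down along the strict monoidal functor $[-]$. However, you overcomplicate the distributivity step in two places.

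First, there is no need to build an explicit candidate inverse $g$ and verify both composites via Lemma~\ref{lem:isoms}. The paper's argument is simply that $[-]$ reflects isomorphisms: if $f$ is a morphism in $\plusI{\ctb}$ with $[f]$ invertible in $\ctb$, then taking any $g$ in $\plusI{\ctb}$ with $[g]=[f]^{-1}$, the composites $g\circ f$ and $f\circ g$ have $[-]$-image the identity, hence are phases, hence invertible by Corollary~\ref{phase_iso}; so $f$ is invertible. One then checks that $[f]$ is precisely the distributivity isomorphism in $\ctb$ (using that $[-]$ is strict monoidal and sends coprojections to coprojections), and concludes.

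Second, your concern about the ``phase clause'' in Definition~\ref{def:distirb} is misplaced. That clause is part of what distributivity of \emph{phased} coproducts means in $\ctb$; but the claim here is distributivity of the \emph{actual} coproducts in $\plusI{\ctb}$, for which the only phase is the identity and the clause is vacuous (as the paper remarks immediately after Definition~\ref{def:distirb}). So there is nothing to check beyond the comparison map being an isomorphism. Relatedly, your plan to deduce one-sided distributivity from the other via symmetry is not quite right, since $\ctb$ is not assumed symmetric in this theorem; but the same reflection argument applies verbatim to the other side.
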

\begin{proof}
The monoidal unit $I$ is a phase generator by Lemma~\ref{lem:deleter_cancellation_modified}. Hence by Theorem~\ref{thm:localToGlobal} $\plusI{\ctb}$ has finite coproducts $\obb{A} + \obb{B}$ and these are sent by $[-]$ to phased coproducts in $\ctb$. 
For distributivity consider the unique $f \colon \obb{A} \tens \obb{C} + \obb{B} \tens \obb{C} \to (\obb{A} + \obb{B}) \tens \obb{C}$ in $\plusI{\ctb}$ with $f \circ \pcoproj_1 = \coproj_{\obb{A}} \tens \id{\obb{C}}$ and $f \circ \pcoproj_2 =  \coproj_{\obb{B}} \tens \id{\obb{C}}$. We have
\[
[f] \circ \pcoproj_{A \otimes C} 
=
[f] \circ [\coproj_{\obb{A} \tens \obb{C}}]
=
[\coproj_{\obb{A}} \tens \id{\obb{C}}]
=
[\coproj_{\obb{A}}] \otimes [\id{\obb{C}}]
=
\pcoproj_A \otimes \id{C}
\]
and $[f] \circ \pcoproj_{B \otimes C} = \pcoproj_B \otimes \id{C}$ also. By distributivity in $\ctb$, $[f]$ is then an isomorphism. But since phases are invertible, $[-]$ reflects isomorphisms, so $f$ is invertible. 
\end{proof}
 
To equip $\plusI{\ctb}$ with a choice of global phases we will use the following.

\begin{lemma} \label{lem:central-helpful}
In any monoidal category with distributive monic finite coproducts a scalar $s$ is central iff for every object $A$ there is a scalar $t$ with $s \cdot \id{A} = \id{A} \cdot t$.
\end{lemma}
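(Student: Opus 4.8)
The plan is to prove both directions of the biconditional. The statement is: in a monoidal category with distributive monic finite coproducts, a scalar $s$ is central iff for every object $A$ there is a scalar $t$ with $s \cdot \id{A} = \id{A} \cdot t$. Recall that $s \cdot f$ denotes the scalar multiplication $f \mapsto s \cdot f$ defined in the graphical calculus by inserting $s$ into the diagram of $f$, and $\id{A} \cdot t$ means multiplying $\id{A}$ by the scalar $t$ on the (other) side. The forward direction is essentially immediate: if $s$ is central, then by the defining centrality equation applied to $f = \id{A}$ we get that $s$ slides through $\id{A}$, and unwinding the diagram this says exactly $s \cdot \id{A} = \id{A} \cdot t$ for $t = s$ (or a suitably related scalar). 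So the content is really in the converse.

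For the converse, suppose that for every object $A$ there is a scalar $t_A$ with $s \cdot \id{A} = \id{A} \cdot t_A$; I would first argue that we may in fact take $t_A = t$ independent of $A$. The key observation is that for any morphism $h \colon A \to B$, naturality of scalar multiplication (from the monoidal structure) forces $t_A$ and $t_B$ to agree once there is a suitable morphism relating them — and here is where distributivity and monic coproducts come in. I would use the coprojections $\coproj_A \colon A \to A + B$ and $\coproj_B \colon B \to A + B$: composing $s \cdot \id{A+B} = \id{A+B} \cdot t_{A+B}$ with these and using that scalar multiplication commutes with composition, we get $\coproj_A \cdot t_{A+B} = \coproj_A \cdot t_A$ and similarly for $B$; since $\coproj_A, \coproj_B$ are monic (so in particular $\coproj_A \cdot t_{A+B} = \coproj_A \cdot t_A$ forces equality at the level where it matters, using also that scalars form a commutative monoid), we deduce $t_{A+B} = t_A = t_B$. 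Hence a single scalar $t$ works for all objects; call it $t$.

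Now to show $s$ is central, I need to show that for an arbitrary morphism $f \colon A \to B$, inserting $s$ on one side of $f$ equals inserting it on the other. Write this diagrammatically: $s \cdot f = (s \cdot \id{B}) \circ f = (\id{B} \cdot t) \circ f = t \cdot f$ on one hand, and $s \cdot f = f \circ (s \cdot \id{A}) = f \circ (\id{A} \cdot t) = t \cdot f$ — wait, this needs care about which side the scalar is being multiplied and the direction of the centrality equation. The correct manipulation: the centrality condition is the equation $\tikzfig{central-scalar-sup}$, which says $s$ can be slid from the input wire of $f$ to the output wire. Using $s \cdot \id{A} = \id{A} \cdot t$ at the object $A$ and $s \cdot \id{B} = \id{B} \cdot t$ at the object $B$, together with the basic fact that scalar multiplication is compatible with composition (i.e.\ $s \cdot (g \circ f) = (s \cdot g) \circ f = g \circ (s \cdot f)$) and that the scalars form a commutative monoid, one checks that both sides of the centrality equation equal $t \cdot f$ expressed in the two possible ways, hence are equal.

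The main obstacle I anticipate is the first step of the converse — showing $t$ can be chosen uniformly — since this is where the hypothesis on coproducts is genuinely used and requires being careful that joint monicity of coprojections, or rather ordinary monicity combined with the structure of scalar multiplication, really does pin down the scalar. One must be attentive that $\coproj_A \cdot t = \coproj_A \cdot t'$ implies $t = t'$: this follows because $\coproj_A \cdot t = t \cdot \coproj_A$ (scalars are central on the nose when multiplied this way? no — one must instead use that $\coproj_A$ monic and $t \cdot \coproj_A = \coproj_A \circ (t \cdot \id{A})$ — actually $t \cdot \coproj_A = \coproj_A \circ (t \cdot \id{A})$ by compatibility with composition, and monicity of $\coproj_A$ then gives $t \cdot \id{A} = t' \cdot \id{A}$, whence $t = t'$ by composing with a state or simply recognizing $t = t \cdot \id{I} = \lambda_I^{-1} \circ (t \cdot \id{A}) \circ \ldots$, more simply: $t = \id{I} \cdot t$ and $t \cdot \id{A}$ determines $t$ since the scalars embed). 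I would handle this by noting that $r \cdot \id{A} = r' \cdot \id{A}$ always implies $r = r'$ whenever $A$ admits \emph{any} morphism from $I$, or more robustly by pre/post-composing appropriately; in the worst case one localizes the argument to $A = I$ where $t \cdot \id{I} = t$ literally, and transports along coprojections $\coproj_I \colon I \to I + I$ which are monic, to conclude all $t_A$ equal $t_I$. Once uniformity is established, the remaining verification is a routine diagram chase in the graphical calculus.
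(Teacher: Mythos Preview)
Your strategy --- show the scalar $t$ can be taken uniformly and equal to $s$, then deduce $s \cdot \id{A} = \id{A} \cdot s$ --- is the right one, but the execution has a gap you do not close. The issue you correctly flag is that $\id{A} \cdot t = \id{A} \cdot t'$ need not imply $t = t'$ for general $A$; however, your proposed repairs do not work. The claim that ``$r \cdot \id{A} = r' \cdot \id{A}$ implies $r = r'$ whenever $A$ admits a morphism from $I$'' is unjustified (pre- or post-composing with such a morphism still leaves an equation of the form $r \cdot f = r' \cdot f$, not $r = r'$), and ``transporting along $\coproj_I \colon I \to I + I$'' only relates the scalars at $I$ and $I+I$ and never connects back to a general object $A$.

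The paper's proof is the clean version of what you are reaching for, and it sidesteps the uniformity detour entirely. Rather than using $A + B$ for arbitrary $B$, apply the hypothesis directly to $A + I$: let $t$ satisfy $s \cdot \id{A+I} = \id{A+I} \cdot t$. Precomposing with $\coproj_I \colon I \to A + I$ gives $\coproj_I \circ s = \coproj_I \circ t$ (since $s \cdot \id{I} = s$ and $\id{I} \cdot t = t$), so monicity of $\coproj_I$ forces $t = s$. Then precomposing instead with $\coproj_A$ gives $\coproj_A \circ (s \cdot \id{A}) = \coproj_A \circ (\id{A} \cdot s)$, and monicity of $\coproj_A$ yields $s \cdot \id{A} = \id{A} \cdot s$. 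That is the whole converse; no separate uniformity step is needed. The move you were missing is simply to take the second summand to be $I$ itself, so that monicity of $\coproj_I$ pins down $t = s$ directly rather than leaving you with the uninformative equation $\id{A} \cdot t_A = \id{A} \cdot t$.
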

\begin{proof}
Let $A$ be any object. Suppose that $s \cdot \id{A + I} = \id{A + I} \cdot t$ for some scalar $t$. Then $\coproj_I \circ s = \coproj_I \circ t$ and so by monicity of $\pcoproj_I$ we have $s = t$. But then $\coproj_A \circ (s \cdot \id{A}) = \coproj_A \circ (\id{A} \cdot s)$ and so by monicity again $s \cdot \id{A} = \id{A} \cdot s$. 
\end{proof}

\begin{lemma} \label{lem:phasesAreScalars}
Let $\ctb$ be a monoidal category with distributive monic phased coproducts with transitive phases. Then $\plusI{\ctb}$ has a canonical choice of global phases
\[
\mathbb{P} := \{u \colon \obb{I} \to \obb{I} \mid u \text{ is a phase on $\obb{I}$ in $\ctb$} \}
\]
where $\obb{I} = I \pcoprod I$ is its monoidal unit.
Moreover, phases $U$ on $\obb{A} = A \pcoprod I$ in $\ctb$ are precisely morphisms in $\plusI{\ctb}$ of the form $u \cdot \id{\obb{A}}$ for some $u \in \mathbb{P}$.
\end{lemma}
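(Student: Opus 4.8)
The plan is to first check that $\mathbb{P}$ is a legitimate choice of global phases in $\plusI{\ctb}$, and then to identify the phases on a general object $\obb{A}$ with scalar multiples of the identity. First I would observe that a phase $u$ on $\obb{I}=I\pcoprod I$ (in $\ctb$) is automatically a morphism of $\plusI{\ctb}$, since by definition $u\circ\pcoproj_I=\pcoproj_I$ (where here the `second' copy $\pcoproj_\Gen=\pcoproj_I$ is the structural inclusion), and $u$ is diagonal because it preserves both coprojections; and $u$ is invertible by Corollary~\ref{phase_iso}. Closure of $\mathbb{P}$ under composition and inverses is immediate since phases of a fixed object form a group (again using Corollary~\ref{phase_iso}). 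The key structural point to verify is \emph{centrality} of each $u\in\mathbb{P}$ as a scalar of $\plusI{\ctb}$: by Lemma~\ref{lem:central-helpful} it suffices, for each object $\obb{A}$, to produce a scalar $t$ (i.e.\ some $t\in\mathbb{P}$) with $u\cdot\id{\obb{A}}=\id{\obb{A}}\cdot t$. I would build $u\cdot\id{\obb{A}}$ explicitly as $\rho$-conjugates: using the monoidal structure of $\plusI{\ctb}$ from Theorem~\ref{thm:constr_is_monoidal}, $u\cdot\id{\obb{A}}$ is (up to the coherence isos $r_A$) the morphism $\id{\obb{A}}\tens u\colon \obb{A}\tens\obb{I}\to\obb{A}\tens\obb{I}$; applying $c_{\obb{A},\obb{I}}$ and the explicit formulas~\eqref{eq:corner}, and then using transitivity of phases in $\ctb$ to slide $u$ past the diagonal structure maps, one obtains a phase on $\obb{A}$ in $\ctb$, which witnesses the required scalar identity. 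This is the step I expect to be the main obstacle: it is essentially the same bookkeeping as in the proof of Theorem~\ref{thm:constr_is_monoidal} (showing every morphism $\obb{A}\tens\obb{I}\to\obb{B}\tens\obb{I}$ is $g\tens\id{\obb{I}}$) combined with transitivity, and care is needed to track which phase conjugates which.

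Next I would prove the second assertion: the phases $U$ on $\obb{A}=A\pcoprod I$ (in $\ctb$) are exactly the $u\cdot\id{\obb{A}}$ for $u\in\mathbb{P}$. The inclusion ``$\supseteq$'': a morphism of the form $u\cdot\id{\obb{A}}$ in $\plusI{\ctb}$ is, unwinding the definition of scalar multiplication and $\tens$, a diagonal endomorphism of $A\pcoprod I$ in $\ctb$ that fixes $\pcoproj_A$ and acts on $\pcoproj_I$ by $u$-restricted-to-$I$, which (since $u$ is a phase on $I\pcoprod I$) fixes $\pcoproj_I$; hence it preserves both coprojections and is a phase on $\obb{A}$. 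The inclusion ``$\subseteq$'': given a phase $U$ on $\obb{A}$ in $\ctb$, it is by definition a diagonal endomorphism fixing $\pcoproj_A$ and $\pcoproj_I$, hence is a morphism of $\plusI{\ctb}$; applying the functor $[-]\colon\plusI{\ctb}\to\ctb$ gives $[U]\circ\pcoproj_A=\pcoproj_A$, and since coproducts in $\plusI{\ctb}$ are monic (Theorem~\ref{thm:getmoncoprod}) with $A$ a coproduct component of... more precisely, $\obb{A}=A\pcoprod I$ carries a coprojection $\pcoproj_A$ which is monic, so $[U]=\id{A}$-up-to-phase; I would then apply Lemma~\ref{lem:useful} or the uniqueness clause from Theorem~\ref{thm:constr_is_monoidal}'s ``$A\otimes I\to B\otimes I$'' argument to conclude $U=u\cdot\id{\obb{A}}$ for the phase $u$ obtained by restricting $U$ to the $I$-component (formally $u=$ the phase on $\obb{I}$ induced via the inclusion of $\obb{I}\cong I\pcoprod I$ coming from the two structural maps into $\obb{A}$).

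A cleaner route for ``$\subseteq$'', which I would likely use instead: since $\obb{A}\cong\obb{A}\tens\obb{I}$ via $r_A$, a phase $U$ on $\obb{A}$ transports to a phase on $\obb{A}\tens\obb{I}$; by the argument in Theorem~\ref{thm:constr_is_monoidal} (every morphism $\obb{A}\tens\obb{I}\to\obb{A}\tens\obb{I}$ has the form $g\tens\id{\obb{I}}$, $g$ unique) it equals $g\tens\id{\obb{I}}$ for some $g\colon\obb{A}\to\obb{A}$, and applying $[-]$ together with monicity of $\pcoproj_A$ forces $[g]=\id{A}$, so $g$ is a phase on $\obb{A}$ which, being sent by $[-]$ to the identity, must itself be of the form $u\cdot\id{\obb{A}}$ — and then one checks $g\tens\id{\obb{I}}=u\cdot\id{\obb{A}}$, whence $U=u\cdot\id{\obb{A}}$. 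The potential circularity here (using ``phases of $\obb{A}$ are $u\cdot\id{\obb{A}}$'' to prove itself) is avoided because the induction is on the \emph{structure}: $g$ lives over $[g]=\id{A}$, so $g$ itself is literally a phase on $\obb{A}=A\pcoprod I$ fixing $\pcoproj_A$, and the base case is handled by Lemma~\ref{lem:deleter_cancellation_modified}'s phase-generator property applied through the diagonal-monic $c_{\obb{A},\obb{I}}$. I would present whichever of these two versions is shortest once the details are filled in, but either way the crux is the centrality computation in the first paragraph.
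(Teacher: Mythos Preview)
Your ``cleaner route'' for the inclusion $\subseteq$ is genuinely circular, and the attempted escape via ``induction on the structure'' does not work: transporting a phase $U$ on $\obb{A}$ to $V$ on $\obb{A}\tens\obb{I}$ and writing $V=g\tens\id{\obb{I}}$ simply returns $g=U$ (this is precisely how $\rrho$ is constructed in Theorem~\ref{thm:constr_is_monoidal}), so you learn nothing new. The phase-generator property of Lemma~\ref{lem:deleter_cancellation_modified} does not help here either: it tells you that certain morphisms are phase-monic or phase-epic, not that a given phase has the form $\id{\obb{A}}\otimes u$.

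The missing ingredient is \emph{distributivity in its strong form} (Definition~\ref{def:distirb}): the functor $\obb{A}\otimes(-)$ strongly preserves phased coproducts, so $\obb{A}\otimes\obb{I}$ is a phased coproduct (with coprojections $\id{\obb{A}}\otimes\pcoproj_i$) whose \emph{every} phase is of the form $\id{\obb{A}}\otimes u$ for some phase $u$ of $\obb{I}$. Given a phase $U$ on $\obb{A}$, its transport $V$ is a phase on $\obb{A}\tens\obb{I}$; since $c_{\obb{A},\obb{I}}$ is diagonal into this phased coproduct, transitivity gives $c\circ V=(\id{\obb{A}}\otimes u)\circ c$, which by definition of $\tens$ says $V=\id{\obb{A}}\tens u$, hence $U=u\cdot\id{\obb{A}}$ (or $\id{\obb{A}}\cdot u$, depending on convention). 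This is the non-circular step you are missing. The paper then runs the \emph{dual} argument (using $(-)\otimes\obb{A}$) to get that every phase is also $\id{\obb{A}}\cdot v$ for some $v\in\mathbb{P}$; combining the two, every $u\cdot\id{\obb{A}}$ equals some $\id{\obb{A}}\cdot v$, and Lemma~\ref{lem:central-helpful} delivers centrality. So the paper proves the second statement first and deduces centrality from it, rather than the other way round as you attempt.
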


\begin{proof}
We begin with the second statement. 
An endomorphism $U$ of $\obb{A}$ in $\plusI{\ctb}$ is a phase on $\obb{A}$ in $\ctb$ iff $[U] = \id{A}$.
For any $u$ as above, since $[-]$ is strict monoidal we indeed have $[u \cdot \id{\obb{A}}] = [u] \cdot [\id{\obb{A}}] = \id{A}$, and so $u \cdot \id{\obb{A}}$ is a phase. 

Conversely, for any phase $U$ on $\obb{A}$, consider it instead as an automorphism $V$ of $\obb{A} \tens \obb{I}$ in $\plusI{\ctb}$. Then $[V] = \id{A}$, and so $V$ is a phase on $\obb{A} \tens \obb{I}$. 
Now in $\ctb$, by distributivity, $\obb{A} \otimes \obb{I}$ forms $A \otimes \obb{I} \pcoprod I \otimes \obb{I}$ with every phase being of the form $\id{\obb{A}} \otimes u$ for some phase $u$ on $\obb{I}$. Moreover, $c:= c_{\obb{A}, \obb{I}} \colon \obb{A} \tens \obb{I} \to \obb{A} \otimes \obb{I}$ is then diagonal as a morphism from $\obb{A} \tens \obb{I}$ into this phased coproduct. Hence by transitivity $c \circ V = (\id{\obb{A}} \otimes u) \circ c$ for some phase $u$ of $\obb{I}$. But this states precisely that in $\plusI{\ctb}$ we have $V = \id{\obb{A}} \tens u$ or equivalently $U = u \cdot \id{\obb{A}}$.

Dually, every phase is of the form $\id{\obb{A}} \cdot v$ for some $v \in \mathbb{P}$. In particular for each $u \in \mathbb{P}$ so is $u \cdot \id{\obb{A}}$. Hence by Lemma~\ref{lem:central-helpful} every $u \in \mathbb{P}$ is central, making $\mathbb{P}$ a valid choice of global phases.
\end{proof} 

\begin{corollary} \label{cor:phcoprodcorrespon}
There is a one-to-one correspondence, up to monoidal equivalence, between monoidal categories 
\begin{itemize}
\item $\cta$ with distributive, monic finite coproducts and choice of global phases $\mathbb{P}$;
\item $\ctb$ with distributive, monic finite phased coproducts with transitive phases;
\end{itemize}
given by $\cta \mapsto \cta_\quotP$ and $\ctb \mapsto \plusI{\ctb}$.
\end{corollary}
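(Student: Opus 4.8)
The plan is to assemble this correspondence from the constructions and theorems already established in this section, checking that the two assignments $\cta \mapsto \cta_\quotP$ and $\ctb \mapsto \plusI{\ctb}$ are mutually inverse up to monoidal equivalence. First I would verify that both assignments are well-defined at the level of objects of the correspondence: Lemma~\ref{lem:globaltolocalmonoidal} already shows that $\cta_\quotP$ is a monoidal category with distributive finite phased coproducts with transitive phases (monicity of the phased coproducts follows since $[-]_\tc$ sends monic coprojections to monic ones, using that the $p \cdot \id{}$ are invertible), and conversely Theorems~\ref{thm:constr_is_monoidal}, \ref{thm:getmoncoprod} together with Lemma~\ref{lem:phasesAreScalars} show that $\plusI{\ctb}$ is a monoidal category with distributive, monic finite coproducts and a canonical choice of global phases $\mathbb{P}$.

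Next I would establish the two natural equivalences. For the direction $\ctb \simeq \plusI{(\ctb)}_\quotP$: by Lemma~\ref{lem:phasesAreScalars} the global phases of $\plusI{\ctb}$ are exactly the phases of $\obb{I}$, and the induced trivial isomorphisms on $\obb{A}$ are exactly the phases of $\obb{A}$ in $\ctb$; this is precisely the choice of trivial isomorphisms in Theorem~\ref{thm:localToGlobal}, whose conclusion already gives $\ctb \simeq \quot{\plusI{\ctb}}{\sim} = \plusI{(\ctb)}_\quotP$. I would then check this equivalence is monoidal, which follows since the functor $[-] \colon \plusI{\ctb} \to \ctb$ is strict monoidal (Theorem~\ref{thm:constr_is_monoidal}) and descends along $[-]_\tc$. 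For the direction $\cta \simeq \plusI{(\cta_\quotP)}$: an object of $\plusI{(\cta_\quotP)}$ is a phased coproduct $A \pcoprod I$ in $\cta_\quotP$, which by Lemma~\ref{lem:generalrecipeforphcoprod} may be taken to be the image of a genuine coproduct $A + I$ in $\cta$; I would define a functor $\cta \to \plusI{(\cta_\quotP)}$ sending $A \mapsto A + I$ and a morphism $f \colon A \to B$ to the unique diagonal morphism $[f + \id{I}]_\tc$ (which preserves $[\coproj_I]_\tc$), and check it is full, faithful and essentially surjective, and strict (or strong) monoidal using that $\otimes$ on $\plusI{(\cta_\quotP)}$ was built from $\otimes$ on $\cta_\quotP$ which is the descent of $\otimes$ on $\cta$. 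Faithfulness and fullness here amount to the statement that morphisms $A + I \to B + I$ in $\cta$ that are diagonal and fix $\coproj_I$, taken modulo global phases, biject with morphisms $A \to B$ modulo global phases — which one reads off from the definition of $\plusI{-}$ together with monicity of the coprojections.

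The main obstacle I expect is the monoidal coherence bookkeeping: showing that the equivalences respect the monoidal structure on the nose enough to be monoidal functors, and in particular tracking how the chosen structure morphisms $c_{\obb{A},\obb{B}}$, $\aalpha$, $\beta$ of $\plusI{\ctb}$ interact with the quotient $[-]_\tc$ and with the $\tens$ built on $\cta_\quotP$. Since all the relevant coherence data on $\plusI{\ctb}$ was defined so that $[-]$ is \emph{strict} monoidal, the cleanest route is to phrase everything through $[-]$ and Lemma~\ref{lem:mon_cat_helpful}, rather than comparing $\aalpha$'s directly; the verifications then reduce to the uniqueness clauses already proved, using monicity of the various coprojections. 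A secondary subtlety is checking that the correspondence is ``up to monoidal equivalence'' in the appropriate bicategorical sense — i.e.\ that it is compatible with the chosen $\mathbb{P}$ — but this is immediate from the explicit description of $\mathbb{P}$ in Lemma~\ref{lem:phasesAreScalars} as the phases of $\obb{I}$, matching the global phases we started with under the equivalence. I would present the proof by simply invoking Theorems~\ref{thm:localToGlobal}, \ref{thm:constr_is_monoidal}, \ref{thm:getmoncoprod} and Lemmas~\ref{lem:globaltolocalmonoidal}, \ref{lem:phasesAreScalars}, and noting that the remaining monoidal compatibility checks are routine given strictness of $[-]$.
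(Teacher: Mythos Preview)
Your proposal is correct and follows essentially the same approach as the paper: invoke Lemmas~\ref{lem:globaltolocalmonoidal} and~\ref{lem:phasesAreScalars} and Theorems~\ref{thm:constr_is_monoidal}, \ref{thm:getmoncoprod}, \ref{thm:localToGlobal} for well-definedness and the $\ctb \simeq \plusI{\ctb}_\quotP$ direction, then for $\cta \simeq \plusI{\cta_\quotP}$ build the functor $A \mapsto A+I$, $f \mapsto [f+\id{I}]_\tc$ and verify it is a monoidal equivalence. The paper's proof is slightly more explicit in the full-and-faithful calculation (writing $g = h + u$ and solving $f = u^{-1}\cdot h$), but your outline captures exactly this argument.
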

\begin{proof}
The assignments are well-defined by Theorems~\ref{thm:constr_is_monoidal} and~\ref{thm:getmoncoprod} and Lemmas~\ref{lem:globaltolocalmonoidal} and~\ref{lem:phasesAreScalars}. Now by Theorem~\ref{thm:localToGlobal}, $[-]$ induces an equivalence $\ctb \simeq \quot{\plusI{\ctb}}{\sim}$ where $f \sim g$ when $f = g \circ U$ for some phase $U$ in $\ctb$. But now this is strict monoidal since $[-]$ is, and by Lemma~\ref{lem:phasesAreScalars} in $\plusI{\ctb}$ every such $U$ is of the form $\id{} \cdot u$ for some $u \in \mathbb{P}$. Hence $\quot{\plusI{\ctb}}{\sim} = \plusI{\ctb}_\quotP$.

Conversely, we must check that $\cta \simeq \plusI{\cta_\quotP}$ for such a category $\cta$. Define a functor $F \colon \cta \to \plusI{\cta_\quotP}$ on objects by $F(A) = A + I$ and for $f \colon A \to B$ by setting $F(f) = [f + \id{I}]_\tp \colon A + I \to B + I$, where $[-]_\tp$ denotes equivalence classes under~\eqref{eq:quotient_rule_general}. By Lemma~\ref{lem:globaltolocalmonoidal} the phased coproducts in $\cta_\quotP$ are precisely the coproducts in $\cta$, making $F$ well-defined. Now every $[g]_\tp \colon F(A) \to F(B)$ in $\plusI{\cta_\quotP}$ has $g = h + u$ for a unique $h \colon A \to B$ and $(u \colon I \to I) \in \mathbb{P}$. Then $[g]_\tp = F(f)$ iff 
\[
(f + \id{I}) = v \cdot (h + u) = (v \cdot h + v \cdot u)
\]
for some $v \in \mathbb{P}$. So $[g]_\tp = F(f)$ for the unique morphism $f = u^{-1} \cdot h$, making $F$ full and faithful. It is essentially surjective on objects by Lemma~\ref{lem:isoms}, and distributivity in $\cta$ ensures that $F$ is strong monoidal. Clearly $F$ also restricts to an isomorphism of global phase groups. 
\end{proof}

\begin{examples} \label{ex:ofconstruction}
We've seen that $\VecC$, $\Hilb$ and $\FVeck$ satisfy the above properties of $\cta$ and so they may be reconstructed from their quotients as
\[
\VecC \simeq \plusI{\VecP} 
\qquad
\Hilb \simeq \plusI{\HilbP}
\qquad
\FVeck \simeq \plusI{\VecProj}
\]
\end{examples}

\section{Phased Biproducts} \label{sec:phbiprod}

The phased coproducts in $\HilbP$ come with extra properties which we capture as follows. As in Remark~\ref{rem:phasedproducts} we define a \indef{phased product} \index{phased product} to be an object $A \pprod B$ with projections $\pproj_A \colon A \pprod B \to A$ and $\pproj_B \colon A \pprod B \to B$ satisfying the dual conditions to those of a phased coproduct.

\begin{definition} \label{def:phased_biprod} \label{not:phbiprod}
In a category with zero morphisms, a \deff{phased biproduct} \index{phased biproduct} of objects $A, B$ is an object $A \pbiprod B$ together with morphisms
\[
\scalebox{1.0}{\input{./figures/pbiprod-mod.tikz}}
\]
satisfying the equations 
\begin{align*}
\pproj_A \circ \coproj_A &= \id{A} & \pproj_A \circ \coproj_B &= 0 
\\
\pproj_B \circ \coproj_A &= 0 & \pproj_B \circ \coproj_B  &= \id{B} 
\end{align*} 
and for which $(\pcoproj_A, \pcoproj_B)$ and $(\pproj_A, \pproj_B)$ make $A \pbiprod B$ a phased coproduct and product, respectively, such that each have the same phases $U \colon A \pbiprod B \to A \pbiprod B$.
\end{definition}

We may straightforwardly define a phased biproduct $A_1 \pbiprod \dots \pbiprod A_n$ of any finite collection of objects similarly, with an empty phased biproduct being simply a zero object $0$. A biproduct is then a phased biproduct whose only phase is the identity. 

\begin{lemma} \label{lem:ph_biprod}
Let $\ctb$ be a category with a zero object and binary phased biproducts. 
\begin{enumerate}
\item \label{enum:hasfinite}
$\ctb$ has finite phased biproducts.
\item \label{enum:unique}
Any phased coproduct $A_1 \pcoprod \dots \pcoprod A_n$ has a unique phased biproduct structure.
\item \label{enum:transitive}
All phases are transitive.
\end{enumerate}
\end{lemma}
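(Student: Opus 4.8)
The strategy is to reduce everything to the corresponding facts about phased coproducts and phased products, using the self-duality of the phased biproduct axioms, and then to exploit the presence of a zero object to control the mediating morphisms. For part \ref{enum:hasfinite}, I would use Corollary~\ref{all_phase_coprod} (and its dual) together with the assumption of a zero object: binary phased biproducts give binary phased coproducts and binary phased products, which by that corollary yield all finite phased (co)products; then for any finite list $A_1, \dots, A_n$ I would show the object $((A_1 \pbiprod A_2) \pbiprod \dots) \pbiprod A_n$ simultaneously carries the iterated phased coproduct structure (Proposition~\ref{prop:assoc}) and, dually, the iterated phased product structure, and check by composing coprojections with projections that the required equations $\pproj_i \circ \pcoproj_j = \id{}$ or $0$ hold — these follow from the binary case by an easy induction using that a zero object makes composites through it vanish. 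The point needing a little care is that the phases of the iterated phased coproduct must coincide with those of the iterated phased product; this will follow by induction from the binary hypothesis that the two phase groups agree, combined with the description of phases of an iterated phased coproduct coming out of the proof of Proposition~\ref{prop:assoc}.

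For part \ref{enum:unique}, given a phased coproduct $A = A_1 \pcoprod \dots \pcoprod A_n$, I would construct projections $\pproj_i \colon A \to A_i$ as the unique-up-to-phase morphisms with $\pproj_i \circ \pcoproj_j = \id{A_i}$ for $i = j$ and $0$ otherwise (these exist by the universal property of the phased coproduct applied to the tuples $(\delta_{ij})_j$). The work is to show $(\pproj_1, \dots, \pproj_n)$ makes $A$ a phased product with exactly the same phases. To see it is a phased product: for any tuple $f_i \colon C \to A_i$, one forms $h := \bigl[\text{suitable}\bigr]$ — more precisely, since $\ctb$ has a phased biproduct $A_1 \pbiprod \dots \pbiprod A_n$ (by part \ref{enum:hasfinite}) which is in particular a phased coproduct with the same coprojections, Lemma~\ref{lem:isoms} gives an isomorphism $A \simeq A_1 \pbiprod \dots \pbiprod A_n$ respecting coprojections, and I would transport the phased product structure across this isomorphism, checking via Lemma~\ref{lem:isoms} that the transported projections agree with the $\pproj_i$ defined above up to a phase, and hence define the same structure. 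Uniqueness of the phased product structure then comes from the fact that projections of a phased product are themselves determined up to phase by the equations $\pproj_i \circ \pcoproj_j = \delta_{ij}$, so any two phased biproduct structures on $A$ have projections differing by phases and therefore coincide as structures.

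For part \ref{enum:transitive}, I would take a diagonal morphism $f \colon A \pcoprod B \to C \pcoprod D$ between phased biproducts and a phase $U$ of $A \pcoprod B$, and produce a phase $V$ of $C \pcoprod D$ with $f \circ U = V \circ f$. Write $f \circ \pcoproj_A = \pcoproj_C \circ g$ and $f \circ \pcoproj_B = \pcoproj_D \circ h$. Since $U$ is a phase of the phased biproduct, it is also a phase of the \emph{product} structure, so $\pproj_A \circ U = \pproj_A$ and $\pproj_B \circ U = \pproj_B$. Then I would define $V$ to be the unique morphism (up to phase — but I want a specific one) with $V \circ \pcoproj_C = \pcoproj_C$, $V \circ \pcoproj_D = \pcoproj_D$, and compute $\pproj_C \circ f \circ U$ and $\pproj_D \circ f \circ U$: using the biproduct equations $\pproj_C \circ \pcoproj_C = \id{}$, $\pproj_C \circ \pcoproj_D = 0$ etc., one gets $\pproj_C \circ f \circ U = g \circ \pproj_A \circ (\text{something})$ — here the key identity is $\mathord{[}\pproj_A, \pproj_B\mathord{]}$-type decompositions forcing $f \circ U \circ \pcoproj_A = f \circ \pcoproj_A$ (since $U$ fixes $\pcoproj_A$) and similarly for $B$, so $f \circ U$ and $f$ have the same composites with both coprojections, whence $f \circ U = W \circ f$ for some phase $W$; that $W$ is the desired $V$. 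The main obstacle here, and in the whole lemma, is bookkeeping: being careful that "phase of the phased coproduct" and "phase of the phased product" refer to literally the same set of endomorphisms (guaranteed by Definition~\ref{def:phased_biprod}), so that properties proved using one universal property can be fed into arguments using the other. Once that identification is used freely, each step is a short diagram chase with the biproduct equations and the zero object.
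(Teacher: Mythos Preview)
Your plan for parts \ref{enum:hasfinite} and \ref{enum:unique} matches the paper's approach; the paper is just more explicit on the phase-coincidence step in part \ref{enum:hasfinite}, checking directly that any $U$ on $(A \pbiprod B) \pbiprod C$ preserving the three coprojections also preserves the projections.

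Your argument for part \ref{enum:transitive} has a gap. You correctly note $\pproj_A \circ U = \pproj_A$ and set out to compute $\pproj_C \circ f \circ U$, but then pivot to: ``$f \circ U$ and $f$ have the same composites with both coprojections, whence $f \circ U = W \circ f$ for some phase $W$.'' That inference is wrong. Agreement on the coprojections of the \emph{domain} only gives, via the phased coproduct property of $A \pcoprod B$, that $f \circ U = f \circ W'$ for a phase $W'$ on $A \pcoprod B$ --- trivially true with $W' = U$. To land a phase on the \emph{codomain} you must instead use the phased \emph{product} structure of $C \pcoprod D$: show $\pproj_C \circ (f \circ U) = \pproj_C \circ f$ and $\pproj_D \circ (f \circ U) = \pproj_D \circ f$, and then apply the dual universal property.

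The step you are missing --- and where your ``$(\text{something})$'' got stuck --- is the identity $\pproj_C \circ f = g \circ \pproj_A$. Both sides compose with $\pcoproj_A$ to give $g$ and with $\pcoproj_B$ to give $0$, so by the phased coproduct property $\pproj_C \circ f = g \circ \pproj_A \circ W$ for some phase $W$ on $A \pcoprod B$; but $W$ is also a product-phase, so $\pproj_A \circ W = \pproj_A$ and the identity follows. With this in hand the computation $\pproj_C \circ f \circ U = g \circ \pproj_A \circ U = g \circ \pproj_A = \pproj_C \circ f$ goes through cleanly, and the phased product property on $C \pcoprod D$ then yields the required $V$. This is exactly the paper's route.
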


\begin{proof}

\ref{enum:hasfinite}
We will show that any object $(A \pbiprod B) \pbiprod C$ forms a phased biproduct of $A$, $B$ and $C$, with the general case of $((A_1 \pbiprod A_2) \pbiprod \cdots ) A_n$ being similar. By Proposition~\ref{prop:assoc} and its dual any such object forms a phased coproduct and product with coprojections $\pcoproj_{A \biprod B} \circ \pcoproj_A$,  
$\pcoproj_{A \biprod B} \circ \pcoproj_B$, and $\pcoproj_C$, and 
projections $\pproj_A \circ \pproj_{A \pbiprod B}$, $\pproj_B \circ \pproj_{A \pbiprod B}$ and $\pproj_C$. It's routine to check that these satisfy the necessary equations.

It remains to check that any endomorphism $U$ of $(A \pbiprod B) \pbiprod C$ preserving these coprojections then preserves the projections, with the converse statement then being dual. In this case we have 
\[
\pproj_{A \pbiprod B} \circ U \circ \pcoproj_{C} = 0 \qquad U \circ \pcoproj_{A \pbiprod B} = \pcoproj_{A \pbiprod B} \circ V\] 
for some phase $V$ on $A \pbiprod B$. But then $\pi_{A \pbiprod B} \circ U$ and $V \circ \pproj_{A \pbiprod B}$ have equal composites with $\pcoproj_{A \pbiprod B}$ and $\pcoproj_C$ and so
\[
 \pi_{A \pbiprod B} \circ U = V \circ \pproj_{A \pbiprod B} \circ W 
 \] 
 for some phase $W$. But then $\pi_{A \pbiprod B} \circ U = V \circ \pproj_{A \pbiprod B}$, ensuring that $U$ preserves the above projections. 

\ref{enum:unique}
We show the result for binary phased coproducts $A \pcoprod B$, with the $n$-ary case being similar. By Lemma~\ref{lem:isoms} any coprojection preserving morphism
\[
\begin{tikzcd}
A \pcoprod B \rar{g} & A \pbiprod B
\end{tikzcd}
\]
  is an isomorphism, and one may then check that $\pproj_A \circ g$ and $\pproj_B \circ g$ form projections making $A \pcoprod B$ a phased biproduct.

 For uniqueness note that for any phased biproduct, any $p_A \colon A \pbiprod B \to A$ with $p_A \circ \coproj_A = \id{A}$ and $p_A \circ \pcoproj_B = 0$ has $p_A = \pi_A \circ U$ for some phase $U$. But $\pi_A \circ U = \pi_A$ and so $p_A = \pi_A$ is unique.

 \ref{enum:transitive}
For any diagonal morphism 
 \[
\begin{tikzcd}
 A \pcoprod B \rar{f} &  C \pcoprod D
\end{tikzcd}
 \]
with $f \circ \pcoproj_A = \pcoproj_C \circ g$, by composing with the coprojections, we see that the unique projections $\pproj_C$ and $\pproj_A$ have $\pproj_C \circ f = g \circ \pproj_A$. Then for any phase $U$ we have
\[
\pproj_C \circ f \circ U 
= 
g \circ \pproj_A \circ U 
=
g \circ \pproj_A
= 
\pproj_C \circ f 
\] 
and $\pproj_D \circ f \circ U = \pproj_D \circ f$ also. Hence $f \circ U = V \circ f$ for some phase $V$ on $C \pcoprod D$. 
\end{proof}

In a category with phased biproducts, by a \indef{phase generator} \index{phase generator} let us now mean an object satisfying the properties of Definition~\ref{def:phase-gen} along with the dual statements about phased products. 

\begin{lemma} \label{lem:getBiprod}
Let $\ctb$ be a category with finite phased biproducts with a phase generator $I$. Then $\plusI{\ctb}$ has finite biproducts. Conversely, if $\cta$ is a category with finite biproducts and a transitive choice of trivial isomorphisms then $\quot{\cta}{\sim}$ has finite phased biproducts. 
\end{lemma}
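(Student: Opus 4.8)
The plan is to exploit the correspondence between phased biproducts and phased coproducts (Lemma~\ref{lem:ph_biprod}) to reduce everything to the already-established coproduct statements. Recall that by Lemma~\ref{lem:ph_biprod}\ref{enum:unique}, in a category with binary phased biproducts any phased coproduct carries a \emph{unique} compatible phased biproduct structure, and by part~\ref{enum:transitive} all phases are automatically transitive. So the hypotheses of $\ctb$ here are exactly the hypotheses of Theorem~\ref{thm:localToGlobal} (finite monic phased coproducts with transitive phases and a phase generator $I$) \emph{once we also know the phased coproducts are monic} --- but in the presence of a zero object all coprojections of a phased biproduct are split monic, via $\pproj_A \circ \pcoproj_A = \id{A}$, so this is free. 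Hence Theorem~\ref{thm:localToGlobal} already gives that $\plusI{\ctb}$ has monic finite coproducts; the only new content for the first direction is that these coproducts are in fact \emph{biproducts}.

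For that first direction, I would argue as follows. Fix objects $\obb{A}, \obb{B}$ of $\plusI{\ctb}$, say $\obb{A}= A \pcoprod I$ and $\obb{B}=B \pcoprod I$, and let $\obb{A}+\obb{B} = (A\pcoprod B)\pcoprod I$ as in the proof of Theorem~\ref{thm:localToGlobal}. Since $\ctb$ has finite phased biproducts, $A \pcoprod B$ acquires its unique phased biproduct structure, with projections $\pproj_A, \pproj_B$. One must produce projections $\pproj_{\obb{A}}\colon \obb{A}+\obb{B}\to \obb{A}$ and $\pproj_{\obb{B}}\colon \obb{A}+\obb{B}\to\obb{B}$ \emph{inside} $\plusI{\ctb}$, i.e.\ diagonal morphisms fixing $\pcoproj_I$, satisfying the biproduct equations. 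The natural candidate is the morphism with $[\pproj_{\obb{A}}] = \pproj_A$ (and its action on the $I$-summand forced by diagonality and the constraint on $\pcoproj_I$); one checks such a morphism exists in $\plusI{\ctb}$ using Lemma~\ref{lem:useful} and associativity (Proposition~\ref{prop:assoc}), and that $\pproj_{\obb{A}}\circ\pcoproj_{\obb{A}} = \id{}$, $\pproj_{\obb{A}}\circ\pcoproj_{\obb{B}} = 0$ etc.\ because $[-]$ is faithful on the relevant homsets up to phases and the corresponding identities hold for $\pproj_A,\pproj_B$ in $\ctb$, together with the fact that $[-]$ preserves $0$ (zero morphisms come from the zero object $\obb{0}=0\pcoprod I$). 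Finally, by Lemma~\ref{lem:equational}-style reasoning --- or more directly, since $\plusI{\ctb}$ has a zero object and both a product and coproduct structure on $\obb{A}+\obb{B}$ with these compatible (co)projections --- one concludes $\obb{A}+\obb{B}$ is a biproduct. I expect the fiddly point here to be verifying that $\pproj_{\obb{A}}$ really lies in $\plusI{\ctb}$ (is diagonal and fixes $\pcoproj_I$) and that the product universal property transfers; this is where the phased-product half of the ``phase generator'' hypothesis and transitivity of phases get used, mirroring the uniqueness arguments in the proof of Theorem~\ref{thm:getmoncoprod}.

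For the converse, let $\cta$ have finite biproducts and a transitive choice of trivial isomorphisms $\sim$. By Lemma~\ref{lem:generalrecipeforphcoprod}, $\quot{\cta}{\sim}$ has finite phased coproducts, and dually (applying the same lemma in $\cta^{\op}$, noting $\sim$ on $\cta^{\op}$ is again a transitive choice of trivial isomorphisms since the defining condition~\eqref{eq:triv-isom} plus transitivity is self-dual) $\quot{\cta}{\sim}$ has finite phased products, with $[-]_\tc$ sending products to phased products. Since a biproduct $A \biprod B$ in $\cta$ is simultaneously a product and coproduct with coprojections/projections satisfying the biproduct equations, its image $[A\biprod B]_\tc$ in $\quot{\cta}{\sim}$ is simultaneously a phased coproduct and phased product, and the equations $\pproj_A\circ\pcoproj_A=\id{}$ etc.\ descend because $[-]_\tc$ is a functor preserving $0$ (zero morphisms survive the quotient as $\cta$'s zero object remains a zero object, as noted in the proof of Lemma~\ref{lem:generalrecipeforphcoprod}). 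The one remaining thing to check is that the phased coproduct and phased product structures on $[A\biprod B]_\tc$ have the \emph{same} phases: but a phase for the phased coproduct is precisely a $[U]_\tc$ with $U$ preserving $\pcoproj_A,\pcoproj_B$ up to trivial isomorphisms, and using transitivity one shows (as in Lemma~\ref{lem:ph_biprod}\ref{enum:unique}) that any such $U$ also preserves $\pproj_A, \pproj_B$ up to trivial isomorphisms, and conversely; hence the two phase groups coincide and $[A\biprod B]_\tc$ is a phased biproduct. I expect this converse to be essentially routine once the self-duality of the trivial-isomorphism data is observed; the main obstacle overall is the first direction's bookkeeping around which candidate morphisms actually live in $\plusI{\ctb}$.
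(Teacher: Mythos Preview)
Your overall strategy matches the paper's: reduce to Theorem~\ref{thm:localToGlobal} (and its dual) for the (co)product halves, then assemble the biproduct. The converse direction is essentially the same too; the paper phrases the ``same phases'' check more concretely by noting that a phase for the phased coproduct in $\quot{\cta}{\sim}$ is the class of some $U = s + t$ with $s \in \TrI_A$, $t \in \TrI_B$, and in a biproduct $s + t = s \times t$, so $U$ also preserves the projections up to trivial isomorphisms (the forward inclusion uses the defining property of trivial isomorphisms, the reverse uses transitivity).

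There is one genuine slip in your first direction. You assert that $\pproj_{\obb{A}} \circ \pcoproj_{\obb{A}} = \id{}$ because ``$[-]$ is faithful on the relevant homsets up to phases''. But faithfulness up to phases only gives $\pproj_{\obb{A}} \circ \pcoproj_{\obb{A}} = U$ for some phase $U$ on $\obb{A}$, not the identity. The paper handles this explicitly: after lifting arbitrary projections $\pproj_{\obb{A}}, \pproj_{\obb{B}}$ with $[\pproj_{\obb{A}}] = \pproj_A$, $[\pproj_{\obb{B}}] = \pproj_B$, one finds $\pproj_{\obb{A}} \circ \pcoproj_{\obb{A}} = U$ and $\pproj_{\obb{B}} \circ \pcoproj_{\obb{B}} = V$ for phases $U,V$, and then \emph{replaces} the projections by $U^{-1} \circ \pproj_{\obb{A}}$ and $V^{-1} \circ \pproj_{\obb{B}}$ to get the biproduct equations on the nose. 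Similarly, for the off-diagonal zeros you need that $[-]$ \emph{reflects} zeros (not merely preserves them): the paper obtains this by first introducing explicit zero morphisms $0_{\obb{A},\obb{B}} := \pcoproj_I \circ \pproj_I$ in $\plusI{\ctb}$ using the unique phased biproduct structure on each $\obb{A} = A \pbiprod I$, and then checking that $[f] = 0$ forces $f = 0_{\obb{A},\obb{B}}$. With these two adjustments your argument goes through.
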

\begin{proof}
Since $\ctb$ has phased biproducts, any phased coproduct $\obb{A} = A \pcoprod I$ has a unique phased biproduct structure $\pproj_A \colon \obb{A} \to A, \pproj_I \colon \obb{A} \to I$ in $\ctb$, and so we may equivalently view the objects of $\plusI{\ctb}$ as such phased biproducts.  Then $\plusI{\ctb}$ has zero morphisms
\[
0_{\obb{A}, \obb{B}} :=
\begin{tikzcd}
\obb{A} \rar{\pproj_I} & I \rar{\pcoproj_I} & \obb{B}
\end{tikzcd}
\]
and in particular the initial object $\obb{0} = 0 \pcoprod I$ has $\id{\obb{0}} = 0$ and so is a zero object.

Now by Theorem~\ref{thm:localToGlobal} for any objects $\obb{A}, \obb{B} \in \plusI{\ctb}$, any object and morphisms 
\[
\begin{tikzcd}[row sep = large]
\obb{A} 
\rar[shift left = 2.5]{\coproj_{\obb{A}}}  
& 
\obb{A \biprod B}  
\lar[shift left = 2.5]{\pi_{\obb{A}}} 
\rar[shift right = 2.5,swap]{\pi_\obb{B}} 
& \obb{B} 
\lar[shift right = 2.5,swap]{\coproj_\obb{B}}  
\end{tikzcd}
\]
which are sent by $[-]$ to a phased biproduct structure on $A, B$ in $\ctb$ have that $\coproj_{\obb{A}}$ and $\coproj_{\obb{B}}$ form a coproduct of $\obb{A}, \obb{B}$ in $\plusI{\ctb}$, and dually $\pproj_{\obb{A}}$ and $\pproj_{\obb{B}}$ form a product. 
Then since $[-]$ reflects zeroes and $[\pi_{\obb{B}} \circ \coproj_{\obb{A}}] = 0$ we have $\pi_{\obb{B}} \circ \coproj_{\obb{A}} = 0_{\obb{A}, \obb{B}}$, and $\pi_{\obb{A}} \circ \coproj_{\obb{B}} = 0_{\obb{B}, \obb{A}}$ similarly. By applying $[-]$ we also see that $\pi_{\obb{A}} \circ \coproj_{\obb{A}} = U$ and $\pproj_{\obb{B}} \circ \coproj_{\obb{B}} = V$ for some phases $U$ on $\obb{A}$ and $V$ on $\obb{B}$. Then finally $\coproj_{\obb{A}}$, $\coproj_{\obb{B}}$, $U^{-1} \circ \pi_\obb{A}$ and $V^{-1} \circ \pi_{\obb{B}}$ make $\obb{A \biprod B}$ a biproduct in $\plusI{\ctb}$.

For the converse statement, we know that biproducts in $\cta$ form distributive phased coproducts in $\quot{\cta}{\sim}$, and dually they form phased products also. The zero arrows in $\cta$ form zero arrows in $\quot{\cta}{\sim}$ with $[f]_\tc = 0 \implies f = 0$. Hence $[-]_\tc$ preserves the phased biproduct equations. Now, endomorphisms on $A \pbiprod B$ in $\quot{\cta}{\sim}$ preserving the coprojections are (equivalence classes) of endomorphisms $U$ of $A \biprod B$ in $\cta$ of the form $U = s + t$ for some $s \in \mathbb{T}_A$ and $t \in \mathbb{T}_B$. But equivalently $U = s \times t$ and so they preserve the projections in $\quot{\cta}{\sim}$.
\end{proof}

In a monoidal category we say that phased biproducts are \indef{distributive} \index{distributive!phased biproducts} when they are distributive as phased coproducts. 

\begin{corollary} \label{cor:biproducts}
The assignments $\cta \mapsto \cta_\quotP$ and $\ctb \mapsto \plusI{\ctb}$ give a one-to-one correspondence, up to monoidal equivalence, between monoidal categories 
\begin{itemize}
\item $\cta$ with finite distributive biproducts and a choice of global phases $\mathbb{P}$;
\item $\ctb$ with finite distributive phased biproducts;
\end{itemize}
\end{corollary}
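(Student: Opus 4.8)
The plan is to deduce Corollary~\ref{cor:biproducts} from the already-established Corollary~\ref{cor:phcoprodcorrespon} together with the two bridging lemmas about biproducts, namely Lemma~\ref{lem:equational} (on one side) and Lemma~\ref{lem:getBiprod} (on the other). The point is that a biproduct is precisely a coproduct which \emph{also} carries a compatible product structure, and similarly a phased biproduct is precisely a phased coproduct carrying a compatible phased-product structure with matching phases. So the strategy is: start from the correspondence of Corollary~\ref{cor:phcoprodcorrespon}, which already gives mutually inverse (up to monoidal equivalence) assignments $\cta \mapsto \cta_\quotP$ and $\ctb \mapsto \plusI{\ctb}$ between monoidal categories with distributive monic finite coproducts and a choice of global phases, and monoidal categories with distributive monic finite phased coproducts with transitive phases; then check that under these assignments the extra structure ``coproducts upgrade to biproducts'' on one side matches ``phased coproducts upgrade to phased biproducts'' on the other.

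First I would record the easy direction: if $\cta$ has finite distributive biproducts and a choice of global phases $\mathbb{P}$, then in particular it has finite distributive monic coproducts — monicity of coprojections in a category with biproducts is immediate since $\pproj_A \circ \coproj_A = \id{A}$ makes each $\coproj_A$ split monic — so Lemma~\ref{lem:globaltolocalmonoidal} applies and $\cta_\quotP$ is a monoidal category with distributive monic finite phased coproducts with transitive phases. By Lemma~\ref{lem:getBiprod} (converse statement, using that $\mathbb{P}$ being central makes the induced trivial isomorphisms transitive, as already observed in the proof of Lemma~\ref{lem:globaltolocalmonoidal}), $\cta_\quotP = \quot{\cta}{\sim}$ in fact has finite phased biproducts, and these are distributive since the underlying phased coproducts are. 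This shows $\cta \mapsto \cta_\quotP$ lands in the claimed target class.

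Conversely, suppose $\ctb$ has finite distributive phased biproducts. By Lemma~\ref{lem:ph_biprod}\ref{enum:transitive} all phases in $\ctb$ are transitive, and of course the phased biproducts are in particular distributive monic finite phased coproducts (monic by Proposition~\ref{prop:assoc} / the split-monic argument, just as above). Hence Theorems~\ref{thm:constr_is_monoidal}, \ref{thm:getmoncoprod} and Lemma~\ref{lem:phasesAreScalars} apply, so $\plusI{\ctb}$ is a monoidal category with distributive monic finite coproducts and a canonical choice of global phases $\mathbb{P}$. To see these coproducts are biproducts: by Lemma~\ref{lem:deleter_cancellation_modified} the monoidal unit $I$ is a phase generator, and it is moreover a phase generator in the phased-biproduct sense (the dual statements about phased products following by applying the already-proven properties in the opposite category $\ctb^{\op}$, which again has distributive monic finite phased biproducts). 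Thus Lemma~\ref{lem:getBiprod} (first statement) gives that $\plusI{\ctb}$ has finite biproducts, necessarily distributive since the coproducts are. So $\ctb \mapsto \plusI{\ctb}$ lands in the claimed target class.

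Finally, the correspondence itself: Corollary~\ref{cor:phcoprodcorrespon} already establishes that $\cta \simeq \plusI{\cta_\quotP}$ and $\ctb \simeq \quot{\plusI{\ctb}}{\sim} = \plusI{\ctb}_\quotP$ as monoidal categories, compatibly with global phases, whenever the relevant structures are present; since the biproduct refinements we have just verified are preserved under these equivalences (a monoidal equivalence preserving coproducts and zero morphisms preserves biproducts, and likewise for phased biproducts, by Lemma~\ref{lem:equational} and Lemma~\ref{lem:ph_biprod}\ref{enum:unique} respectively), the same equivalences witness the one-to-one correspondence in the biproduct setting. I expect the only genuinely delicate point to be the bookkeeping around the phased-product half of the ``phase generator'' condition needed for Lemma~\ref{lem:getBiprod}: one must be slightly careful that passing to $\ctb^{\op}$ really does interchange the coproduct-side and product-side clauses of Definition~\ref{def:phase-gen} and preserve distributivity and monicity of phased biproducts, so that Lemma~\ref{lem:deleter_cancellation_modified} can be invoked on both halves. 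Everything else is a routine assembly of results already proved.
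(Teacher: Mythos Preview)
Your proposal is correct and follows essentially the same route as the paper: establish that $I$ is a phase generator for phased biproducts by invoking Lemma~\ref{lem:deleter_cancellation_modified} and its dual, apply Lemma~\ref{lem:getBiprod} to obtain biproducts in $\plusI{\ctb}$, and then inherit the correspondence from Corollary~\ref{cor:phcoprodcorrespon} (using Lemma~\ref{lem:ph_biprod}\ref{enum:transitive} to supply the transitive-phase hypothesis). The paper's proof compresses all of this into three sentences and glosses over exactly the point you flag as delicate --- the dual phase-generator condition --- with the phrase ``hence also one for phased biproducts dually''; your explicit discussion of passing to $\ctb^{\op}$ is a reasonable unpacking of that step.
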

\begin{proof}
For such a category $\ctb$, $I$ is a phase generator for phased coproducts by Lemma~\ref{lem:deleter_cancellation_modified} and hence also one for phased biproducts dually. Hence by Lemma~\ref{lem:getBiprod} $\plusI{\ctb}$ has finite biproducts. The assignment is then well-defined by Lemma~\ref{lem:getBiprod} and Corollary~\ref{cor:phcoprodcorrespon}.
\end{proof}

\begin{examples}
Since $\VecC$, $\Hilb$ and $\FVeck$ all have biproducts these become phased biproducts in $\VecP$, $\HilbP$ and $\VecProj$. 
\end{examples}

\section{Phases in Compact Categories} \label{sec:compact_cats}

In the setting of a compact category, such as our examples $\FVecP$ and $\FHilbP$, phased coproducts get several nice properties for free.

\begin{lemma} \label{lem:comp-closed}
Let $\ctb$ be a compact category with finite phased coproducts.  
\begin{enumerate}[label=\arabic*., ref=\arabic*]
\item \label{enum:pinit-pterm}
Any initial object in $\ctb$ is a zero object.
\item \label{enum:distmonic}
Phased coproducts are distributive and monic in $\ctb$.
\item \label{enum:phccompclosed}
$\plusI{\ctb}$ is compact closed.
\item \label{enum:pharescalars} 
Every phase $U$ on $\obb{A}$ is of the form $U = u \cdot \id{\obb{A}}$ in $\plusI{\ctb}$, for some global phase $u$.
\end{enumerate}
\end{lemma}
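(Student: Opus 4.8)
The plan is to prove the four parts of Lemma~\ref{lem:comp-closed} in order, each feeding into the next; I will outline the argument for all four since they form a single package, but the target statement is really the conjunction.

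\textbf{Part \ref{enum:pinit-pterm}.} I would show that any initial object $0$ in a compact category is also terminal, hence a zero object. Given any object $A$, a morphism $A \to 0$ is obtained by bending wires: the initial-ness of $0$ gives a state $I \to 0 \otimes A^*$ (a morphism from the initial object), and composing with a cap $A \otimes A^* \to I$ (after suitable reassociation) produces $A \to 0$; uniqueness follows because $\ctb(A,0) \cong \ctb(I, 0 \otimes A^*)$ via compactness, and the latter is a singleton since $0$ is initial. So $0$ is terminal, and combined with initiality it is a zero object, giving zero morphisms $0_{A,B} = (A \to 0 \to B)$.

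\textbf{Part \ref{enum:distmonic}.} For distributivity I would use that $A \otimes (-)$ has a right adjoint $A^* \otimes (-)$ in a compact category (indeed $A \otimes (-) \dashv A^* \otimes (-)$ via the cups and caps), and right-adjoint-having functors that are themselves left adjoints — equivalently, since $A \otimes (-)$ is an equivalence of categories with pseudo-inverse $A^* \otimes (-)$ — preserve all colimits and all limits. The subtlety is the \emph{strong} preservation clause: I must check that every phase of $A \otimes B \pcoprod A \otimes C$ arises from a phase of $B \pcoprod C$. Here I use that $A \otimes (-)$ is full and faithful (being an equivalence), so any endomorphism $V$ of $A \otimes (B \pcoprod C)$ is $\id{A} \otimes U$ for a unique $U$ on $B \pcoprod C$, and $V$ fixes $\id{A} \otimes \pcoproj_B, \id{A} \otimes \pcoproj_C$ iff $U$ fixes $\pcoproj_B, \pcoproj_C$ (again by faithfulness). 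Monicity of coprojections $\pcoproj_A \colon A \to A \pcoprod B$: transport along the equivalence $A \pcoprod B \simeq$ some canonical model, or directly note that in a compact category with a zero object one builds a retraction — actually the cleanest route is that compactness forces phased coproducts to have transitive phases (by the same faithfulness argument, every diagonal $f$ on phased coproducts commutes with phases up to a phase), and then one produces the ``projection'' $\triangleright_A$ with $\triangleright_A \circ \pcoproj_A = \id{A}$ exactly as for ordinary coproducts using zero morphisms, which splits $\pcoproj_A$.

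\textbf{Parts \ref{enum:phccompclosed} and \ref{enum:pharescalars}.} Having established in \ref{enum:distmonic} that $\ctb$ has distributive monic phased coproducts with transitive phases, Theorem~\ref{thm:getmoncoprod} gives that $\plusI{\ctb}$ has distributive monic finite coproducts, and Lemma~\ref{lem:phasesAreScalars} gives the canonical global phase group $\mathbb{P}$ on $\plusI{\ctb}$ together with the statement that phases on $\obb{A}$ in $\ctb$ are precisely the morphisms $u \cdot \id{\obb{A}}$ with $u \in \mathbb{P}$ — which is exactly \ref{enum:pharescalars}. For \ref{enum:phccompclosed}, I would show $\plusI{\ctb}$ inherits compact closure: for $\obb{A} = A \pcoprod I$ in $\plusI{\ctb}$, set $\obb{A}^* := A^* \pcoprod I$ and define cup $\obb{I} \to \obb{A}^* \tens \obb{A}$ and cap $\obb{A} \tens \obb{A}^* \to \obb{I}$ from the cup $I \to A^* \otimes A$ and cap $A \otimes A^* \to I$ of $\ctb$, using the coprojections and the corner morphisms $c_{\obb{A},\obb{B}}$ from the proof of Theorem~\ref{thm:constr_is_monoidal} to land in the right objects; the snake equations then follow from those in $\ctb$ because $[-] \colon \plusI{\ctb} \to \ctb$ is strict monoidal and \emph{faithful up to phase}, and any two parallel morphisms in $\plusI{\ctb}$ agreeing after $[-]$ differ by a phase which here is a global scalar $u$, so one pins down $u = 1$ using the monicity of the relevant coprojections exactly as in the monoidal-unit argument of Theorem~\ref{thm:constr_is_monoidal}.

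\textbf{Main obstacle.} The genuinely delicate point is \ref{enum:distmonic}, specifically verifying the \emph{strong} preservation / transitivity clause rather than mere distributivity. Plain colimit-preservation by the equivalence $A \otimes (-)$ is immediate, but the phase bookkeeping — that every phase downstairs is hit, and that coprojections are monic — requires combining fullness and faithfulness of $A \otimes (-)$ with the zero-object structure from \ref{enum:pinit-pterm} to manufacture the splitting maps $\triangleright_A$. Once \ref{enum:distmonic} is in hand, parts \ref{enum:phccompclosed} and \ref{enum:pharescalars} are essentially bookkeeping on top of the already-proved Theorem~\ref{thm:getmoncoprod} and Lemma~\ref{lem:phasesAreScalars}.
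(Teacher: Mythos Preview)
Your main error is the claim in Part~\ref{enum:distmonic} that $A \otimes (-)$ is an equivalence of categories with pseudo-inverse $A^* \otimes (-)$. In a compact category $A^*$ is an \emph{adjoint} to $A$, not an inverse: the unit $\id{} \Rightarrow A^* \otimes A \otimes (-)$ is $\eta_A \otimes \id{}$, which is an isomorphism only when $A$ is invertible. In particular $A \otimes (-)$ is \emph{not} full in general (already in $\FHilb$ the map $\Hom(B,C) \to \Hom(A \otimes B, A \otimes C)$ is not surjective for $\dim A \geq 2$). So your argument that ``any endomorphism $V$ of $A \otimes (B \pcoprod C)$ is $\id{A} \otimes U$'' fails, and with it your derivation of the strong-preservation clause and of transitive phases. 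The paper instead uses the hom-set bijection $\Hom(A \otimes X, D) \cong \Hom(X, A^* \otimes D)$ from wire-bending: transposing along this bijection turns a phase $V$ of $A \otimes (B \pcoprod C)$ into a map out of $B \pcoprod C$ agreeing on coprojections with the transpose of the identity, and the phased-coproduct property then forces $V = \id{A} \otimes U$ for a phase $U$ of $B \pcoprod C$. No fullness of $A \otimes (-)$ is invoked.

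This error propagates. Your route to Parts~\ref{enum:phccompclosed} and~\ref{enum:pharescalars} goes through Theorem~\ref{thm:getmoncoprod} and Lemma~\ref{lem:phasesAreScalars}, both of which \emph{require transitive phases} as a hypothesis --- and whether compactness alone yields transitivity is explicitly left open in the paper. The paper avoids this entirely: for Part~\ref{enum:phccompclosed} it uses only Theorem~\ref{thm:constr_is_monoidal} (which needs distributivity and monicity but \emph{not} transitivity) to get $\plusI{\ctb}$ monoidal, then lifts $\eta, \epsilon$ from $\ctb$ and finds the snake equations hold up to phases $U, V$; rather than trying to pin these down to $1$ as you suggest, it simply absorbs them into a redefined cap $\epsilon'$. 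Part~\ref{enum:pharescalars} is then proved directly using the compactness of $\plusI{\ctb}$ just established, via a short string-diagram argument bending the phase $U$ against the cup and cap. Your proposed ordering (Part~\ref{enum:pharescalars} before Part~\ref{enum:phccompclosed}, via Lemma~\ref{lem:phasesAreScalars}) is therefore circular in the absence of a separate transitivity proof.
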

\begin{proof}
\ref{enum:pinit-pterm}. This is well-known; since $\ctb$ is self-dual it has a terminal object $1$, but since $0 \otimes (-)$ preserves products $1 \simeq 0 \otimes 1$, and also $0 \otimes 1 \simeq 0$ dually.

\ref{enum:distmonic}. The presence of zero arrows makes all coprojections split monic. Now for any phased coproduct $B \pcoprod C$, one may use the bijection on morphisms
\[
\scalebox{0.8}{\input{./figures/sup17i.tikz}}
\leftrightarrow
\
\scalebox{0.8}{\input{./figures/sup17ii.tikz}}
\]
to see that $A \otimes (B \pcoprod C)$ forms a phased coproduct of $A \otimes B$ and $A \otimes C$ with every phase of the form $\id{A} \otimes U$ for a phase $U$ of $B \pcoprod C$, as required. 

\ref{enum:phccompclosed}. By Theorem~\ref{thm:constr_is_monoidal} $\plusI{\ctb}$ is now a monoidal category and the functor $[-]$ is strict monoidal. 
Let $\obb{A} = A \pcoprod I$ be an object of $\plusI{\ctb}$, and $A^*$ be dual to $A$ in $\ctb$ via the state $\tinycup$ and effect $\tinycap$. For any object $\obb{A^*} = A^* \pcoprod I$ and morphisms $\mathbf{\eta}$, $\mathbf{\epsilon}$ in $\plusI{\ctb}$ with $[\mathbf{\eta}] = \tinycup$ and $[\mathbf{\epsilon}] = \tinycap$ we have 
\[
\left[
\scalebox{0.8}{\input{./figures/sup18.tikz}}
\right]
=
\scalebox{0.8}{\input{./figures/sup19.tikz}}
=
\scalebox{0.8}{\begin{tikzpicture}
	\begin{pgfonlayer}{nodelayer}
		\node [style=none] (0) at (-0.75, -0.75) {};
		\node [style=label] (1) at (-0.75, -1.25) {$A$};
		\node [style=none] (2) at (-0.75, 1) {};
		\node [style=label] (3) at (-0.75, 1.5) {$A$};
	\end{pgfonlayer}
	\begin{pgfonlayer}{edgelayer}
		\draw [style=none] (0.center) to (2.center);
	\end{pgfonlayer}
\end{tikzpicture}}
\]
where the diagram inside $[-]$ is in $\plusI{\ctb}$. Similarly the other snake equation also holds. Then in $\plusI{\ctb}$ we have 
\[
\scalebox{0.8}{\input{./figures/sup21.tikz}}
\]
for some phases $U,V$ in $\ctb$. Since $U$ and $V$ are invertible, setting  
\[
\scalebox{0.8}{\input{./figures/sup22.tikz}}
\]
one may check that $\eta$ and $\epsilon'$ form a dual pair in $\plusI{\ctb}$. 

\ref{enum:pharescalars}. Let $U \colon \obb{A} \to \obb{A}$ be a phase in $\ctb$. In $\plusI{\ctb}$ we have 
\[
\left[
\scalebox{0.8}{\input{./figures/sup23ia.tikz}}
\right]
=
\left[
\scalebox{0.8}{\input{./figures/sup23ib.tikz}}
\right]
\qquad
\text{ so that }
\qquad
\scalebox{0.8}{\input{./figures/sup23ii.tikz}}
\]
for some global phase $u$. But then $U = \id{\obb{A}} \cdot u$ since 
\[
\scalebox{0.8}{\input{./figures/sup24.tikz}}
\]
\end{proof}

\begin{corollary}
Let $\ctb$ be a compact closed category with finite phased coproducts with transitive phases. Then $\ctb$ has finite phased biproducts.
\end{corollary}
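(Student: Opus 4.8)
The plan is to show that a compact closed category $\ctb$ with finite phased coproducts with transitive phases already has all the structure needed for phased biproducts, and that the required extra compatibility is automatic. First I would note that by Lemma~\ref{lem:comp-closed}\ref{enum:pinit-pterm} any initial object in $\ctb$ is a zero object; since $\ctb$ has a phased initial object (the empty phased coproduct), Corollary~\ref{all_phase_coprod} gives that $\ctb$ actually has an initial object, hence a zero object $0$. By Lemma~\ref{lem:comp-closed}\ref{enum:distmonic} phased coproducts in $\ctb$ are monic and distributive, so for any pair of objects $A, B$ we have a monic phased coproduct $A \pcoprod B$ with coprojections $\pcoproj_A, \pcoproj_B$. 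Since $\ctb$ is compact closed it is self-dual, $\ctb \simeq \ctb^\op$, so the phased coproduct $A \pcoprod B$ also exhibits $A^* \pcoprod B^*$ — equivalently, dualising gives a phased product structure on (a suitable object isomorphic to) $A \pcoprod B$ with projections $\pproj_A, \pproj_B$.

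Next I would verify the biproduct equations. Using the zero morphisms and the universal property of the phased coproduct, one defines the unique (up to a phase) comparison morphism and checks $\pproj_A \circ \pcoproj_A = \id{A}$, $\pproj_A \circ \pcoproj_B = 0$, $\pproj_B \circ \pcoproj_A = 0$, $\pproj_B \circ \pcoproj_B = \id{B}$. The subtle point — and the main obstacle — is Definition~\ref{def:phased_biprod}'s requirement that the phased coproduct and the phased product have \emph{the same} phases $U \colon A \pcoprod B \to A \pcoprod B$. Here is where transitivity of phases is used: a phase $U$ for the coproduct is a diagonal endomorphism of $A \pcoprod B$ fixing both coprojections, so by the argument in Lemma~\ref{lem:ph_biprod}\ref{enum:transitive} (applied with $f = \id{A \pcoprod B}$), composing $\pproj_A$ and $\pproj_B$ with $U$ and using $\pproj_A \circ U = \pproj_A$, $\pproj_B \circ U = \pproj_B$ — which follow because $U$ fixes the coprojections and the projections are determined by those equations up to a phase — shows $U$ also fixes the projections, hence is a product-phase; and conversely by the dual argument. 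This matching of the two phase groups is exactly the content needed.

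Finally, having produced binary phased biproducts and a zero object, Lemma~\ref{lem:ph_biprod}\ref{enum:hasfinite} upgrades this to finite phased biproducts, completing the proof. The only real work is the phase-matching step of the previous paragraph; everything else is a routine transport of the coproduct/product universal properties across the compact self-duality together with bookkeeping of zero morphisms. I would present the phase-matching carefully and wave through the biproduct equations, citing Lemma~\ref{lem:comp-closed} and Lemma~\ref{lem:ph_biprod} for the structural inputs.
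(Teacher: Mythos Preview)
Your proposal has a genuine gap at the crucial step. Self-duality of a compact closed category tells you that $(A \pcoprod B)^*$ is a phased \emph{product} of $A^*$ and $B^*$, not that $A \pcoprod B$ itself carries a phased product structure for $A$ and $B$; your parenthetical ``(a suitable object isomorphic to) $A \pcoprod B$'' is precisely the statement to be proved, and for ordinary coproducts this is the non-trivial content of Houston's theorem. The phase-matching step is also circular: you assert $\pproj_A \circ U = \pproj_A$ because ``the projections are determined by those equations up to a phase'', but the phased coproduct property only gives $\pproj_A \circ U = \pproj_A \circ V$ for some phase $V$, not equality. Uniqueness of projections in Lemma~\ref{lem:ph_biprod}~\ref{enum:unique} is proved \emph{using} the assumption that coproduct-phases and product-phases already coincide, so it cannot be invoked to establish that very coincidence.

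The paper avoids both issues by passing to $\plusI{\ctb}$, which by Theorem~\ref{thm:getmoncoprod} and Lemma~\ref{lem:comp-closed} is compact closed with genuine distributive coproducts; Houston's theorem then gives biproducts in $\plusI{\ctb}$, and Corollary~\ref{cor:biproducts} transfers these back to phased biproducts in $\ctb \simeq \plusI{\ctb}_\quotP$. The $\plusI$ machinery is deployed precisely so that Houston's result can be applied off-the-shelf rather than reproved in a phased version.
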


\begin{proof}
By Theorem~\ref{thm:getmoncoprod} and Lemma~\ref{lem:comp-closed}, $\plusI{\ctb}$ is compact closed with distributive coproducts. But any compact closed category with finite coproducts has biproducts~\cite{houston2008finite}. Hence so does $\ctb \simeq \plusI{\ctb}_\quotP$ by Corollary~\ref{cor:biproducts}. 
\end{proof}

We leave open the question of whether compact closure automatically ensures that phases are transitive.

\section{Phases in Dagger Categories}

Our motivating examples $\Hilb$ and $\HilbP$ come with the extra structure of a dagger (see  Section~\ref{subsec:daggercats}). The dagger in $\Hilb$ usefully allows us to identify global phases intrinsically, as those scalars $z \in \mathbb{C}$ with $z^\dagger \cdot z = 1$. The resulting phased biproducts in $\HilbP$ interact with the dagger as follows.

\begin{definition} \label{def:daggerphbiprod}
In any dagger category with zero morphisms, a \deff{phased dagger biproduct} \index{phased dagger biproduct} is a phased biproduct $A_1 \pbiprod \dots \pbiprod A_n$ with $\pproj_i = \coproj_i^{\dagger}$ for all $i$.
\end{definition}

A dagger biproduct (see Section \ref{subsec:daggercats}) is then simply a phased dagger biproduct whose only phase is the identity. More general ones are equivalently captured as follows. In a dagger category, we call morphisms $f \colon A \to B$ and $g \colon C \to B$ \indef{orthogonal} \index{orthogonal} when $g^\dagger \circ f = 0$~\cite{heunen2010quantum}. 

\begin{lemma} \label{lem:ph-d-biprod-help}
In any dagger category with zero morphisms, a phased dagger biproduct $A \pbiprod B$ is equivalently a phased coproduct for which:
\begin{itemize}
\item 
$\pcoproj_A$ and $\pcoproj_B$ are orthogonal isometries;
\item 
whenever $U$ is a phase so is $U^{\dagger}$.
\end{itemize}
\end{lemma}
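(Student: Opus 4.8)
The statement to prove is Lemma~\ref{lem:ph-d-biprod-help}: that a phased dagger biproduct $A \pbiprod B$ is the same thing as a phased coproduct for which $\pcoproj_A, \pcoproj_B$ are orthogonal isometries and the phases are closed under $(-)^\dagger$. Since we have already established (Lemma~\ref{lem:ph_biprod}) that any phased coproduct in a category with zero morphisms carries a \emph{unique} phased biproduct structure, the content here is purely about matching the dagger condition $\pproj_i = \coproj_i^\dagger$ with the two listed conditions. So the plan is to prove the two directions of the equivalence, in each case invoking uniqueness of the biproduct structure to pin down what $\pproj_A, \pproj_B$ must be.

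\textbf{First direction.} Suppose $A \pbiprod B$ is a phased dagger biproduct, so it is a phased biproduct with $\pproj_A = \pcoproj_A^\dagger$ and $\pproj_B = \pcoproj_B^\dagger$. Then from the defining biproduct equations $\pproj_A \circ \pcoproj_A = \id{A}$ and $\pproj_B \circ \pcoproj_A = 0$ we read off $\pcoproj_A^\dagger \circ \pcoproj_A = \id{A}$ (so $\pcoproj_A$ is an isometry) and $\pcoproj_B^\dagger \circ \pcoproj_A = 0$ (so $\pcoproj_A, \pcoproj_B$ are orthogonal), and symmetrically for $B$. For the phase condition: if $U$ is a phase, then $U$ preserves both coprojections \emph{and} both projections (by definition of phased biproduct); taking daggers of $U \circ \pcoproj_A = \pcoproj_A$ gives $\pcoproj_A^\dagger \circ U^\dagger = \pcoproj_A^\dagger$, i.e.\ $\pproj_A \circ U^\dagger = \pproj_A$, and likewise $\pproj_B \circ U^\dagger = \pproj_B$; since projections are jointly monic (they exhibit a product), $U^\dagger$ is the identity on $A \pbiprod B$ precomposed... more carefully, $U^\dagger$ preserves the projections and hence, by the product property combined with Lemma~\ref{lem:ph_biprod}~\ref{enum:unique} (uniqueness forcing it to also preserve the coprojections up to the phase structure), $U^\dagger$ is again a phase. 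I will present this as: $U^\dagger$ preserves both projections, so it is a phase of $A\pbiprod B$ viewed as a phased product, and by the shared-phases clause in Definition~\ref{def:phased_biprod} it is a phase of the phased biproduct.

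\textbf{Second direction.} Suppose we have a phased coproduct $A \pcoprod B$ with $\pcoproj_A, \pcoproj_B$ orthogonal isometries and phases closed under $(-)^\dagger$. By Lemma~\ref{lem:ph_biprod}~\ref{enum:unique} there is a unique phased biproduct structure on it; I must show its projections are the daggers of the coprojections. The candidate projections $\pproj_A := \pcoproj_A^\dagger$, $\pproj_B := \pcoproj_B^\dagger$ satisfy $\pproj_A \circ \pcoproj_A = \pcoproj_A^\dagger \circ \pcoproj_A = \id{A}$, $\pproj_A \circ \pcoproj_B = \pcoproj_A^\dagger \circ \pcoproj_B = 0$ (orthogonality), and symmetrically; so these satisfy the biproduct equations. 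It remains to check that $(\pcoproj_A^\dagger, \pcoproj_B^\dagger)$ exhibit $A\pcoprod B$ as a \emph{phased product} with the \emph{same} phases as the phased coproduct — then by uniqueness this must be \emph{the} phased biproduct structure, and it is by construction a dagger one. To get the phased product property one dualises: given $f\colon C\to A$, $g\colon C\to B$, the morphism $h := \pcoproj_A\circ f + \pcoproj_B\circ g$... wait, we have no addition; instead one uses the dagger of the phased coproduct universal property. Namely, applying the phased coproduct property to $f^\dagger, g^\dagger$ gives some $k\colon A\pcoprod B\to C$ with $k\circ\pcoproj_A = f^\dagger$, $k\circ\pcoproj_B=g^\dagger$; then $k^\dagger$ mediates for the product, and the phases of the product are exactly the daggers of the coproduct phases, which by hypothesis coincide with the coproduct phases. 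The main obstacle is bookkeeping the ``same phases'' clause: I must verify that a morphism preserves the projections $\pcoproj_i^\dagger$ if and only if its dagger preserves the coprojections $\pcoproj_i$, which follows by taking daggers of the defining equations, and then use closure of phases under $(-)^\dagger$ to conclude the phase sets agree. I expect this matching of phase sets to be the one genuinely non-formal point, and I would spell it out carefully while leaving the equational verifications as routine.
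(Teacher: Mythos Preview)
Your approach is correct and matches the paper's: the key observation is that the dagger turns a phased coproduct into a phased product and vice versa, so that the biproduct equations with $\pproj_i = \coproj_i^\dagger$ are exactly the orthogonal-isometry conditions, and the ``same phases'' clause in Definition~\ref{def:phased_biprod} translates to closure of phases under $(-)^\dagger$. The paper's proof says precisely this in two sentences.

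One correction: your appeals to Lemma~\ref{lem:ph_biprod}~\ref{enum:unique} are misplaced. That lemma assumes the ambient category has a zero object and \emph{all} binary phased biproducts, neither of which is given here (Lemma~\ref{lem:ph-d-biprod-help} works in an arbitrary dagger category with zero morphisms). Fortunately you don't actually need it. In the second direction, once you have directly verified that $(\coproj_A^\dagger, \coproj_B^\dagger)$ exhibit a phased product whose phases coincide with those of the coproduct, you have exhibited a phased dagger biproduct by definition --- no uniqueness statement is required to finish. Likewise in the first direction, your final formulation (``$U^\dagger$ preserves both projections, so by the shared-phases clause it is a phase'') is already complete without Lemma~\ref{lem:ph_biprod}; and note that your passing remark that projections are ``jointly monic'' is false for \emph{phased} products, which is exactly why you correctly abandoned that line.
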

\begin{proof}
The dagger sends phased coproducts to phased products and vice versa. 
The first point is a restatement of the equations of a biproduct, while the second is equivalent to the projections and coprojections then having the same phases.
\end{proof}

\begin{lemma} 
A dagger category has finite phased dagger biproducts iff it has a zero object and binary phased dagger biproducts.
\end{lemma}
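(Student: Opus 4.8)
The statement to prove is the finite‑vs‑binary reduction for phased dagger biproducts, parallel to Lemma~\ref{lem:ph_biprod}\ref{enum:hasfinite} for ordinary phased biproducts. The forward direction is trivial: a zero object is an empty phased dagger biproduct and a binary one is, in particular, a binary phased dagger biproduct, so both are instances of finite phased dagger biproducts. The substance is the converse: given a zero object $0$ and binary phased dagger biproducts, produce $n$‑ary ones for all finite $n$. The plan is to induct on $n$, the cases $n=0$ and $n=1$ being $0$ and $A$ itself (with identity phases only, matching the dagger condition vacuously), and the case $n=2$ being the hypothesis. For the inductive step I would take $A_1 \pbiprod \dots \pbiprod A_n := (A_1 \pbiprod \dots \pbiprod A_{n-1}) \pbiprod A_n$, using the binary phased dagger biproduct of the $(n-1)$‑ary object with $A_n$.

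The key point is that this object genuinely is a \emph{phased dagger} biproduct, not merely a phased biproduct. First I would invoke Lemma~\ref{lem:ph_biprod}\ref{enum:hasfinite}, whose proof already shows that $((A_1 \pbiprod A_2) \pbiprod \cdots) \pbiprod A_n$ forms an $n$‑ary phased biproduct with coprojections $\pcoproj_n$ and $\pcoproj_i' := \pcoproj_{A_1 \pbiprod \dots \pbiprod A_{n-1}} \circ \pcoproj_i$ for $i < n$, and projections obtained dually; that argument applies verbatim here. It then remains to check $\pproj_i = \coproj_i^\dagger$ for the resulting structure. For $i = n$ this is immediate since the outermost biproduct is a phased dagger biproduct, so $\pproj_n = \coproj_n^\dagger$. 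For $i < n$, I would compute $\coproj_i'^\dagger = (\pcoproj_{A_1 \pbiprod \dots \pbiprod A_{n-1}} \circ \pcoproj_i)^\dagger = \pcoproj_i^\dagger \circ \pcoproj_{A_1 \pbiprod \dots \pbiprod A_{n-1}}^\dagger = \pproj_i \circ \pproj_{A_1 \pbiprod \dots \pbiprod A_{n-1}}$, using that $\dagger$ is a contravariant functor, that the inductive hypothesis gives $\pcoproj_i^\dagger = \pproj_i$ inside $A_1 \pbiprod \dots \pbiprod A_{n-1}$, and that the outer biproduct is a dagger one so $\pcoproj_{A_1 \pbiprod \dots \pbiprod A_{n-1}}^\dagger = \pproj_{A_1 \pbiprod \dots \pbiprod A_{n-1}}$; and $\pproj_i \circ \pproj_{A_1 \pbiprod \dots \pbiprod A_{n-1}}$ is exactly the $i$‑th projection of the composite $n$‑ary structure from the proof of Lemma~\ref{lem:ph_biprod}\ref{enum:hasfinite}.

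Alternatively, and perhaps more cleanly, I would package this via Lemma~\ref{lem:ph-d-biprod-help}: a phased dagger biproduct is precisely a phased coproduct whose coprojections are pairwise orthogonal isometries and whose phases are closed under $\dagger$. Associativity of phased coproducts (Proposition~\ref{prop:assoc}) gives the $n$‑ary phased coproduct; each $\pcoproj_i'$ is a composite of isometries hence an isometry, and orthogonality $\pcoproj_j'^\dagger \circ \pcoproj_i' = 0$ for $i \neq j$ follows by expanding and using orthogonality at each level together with $\pcoproj_{A_1\pbiprod\dots\pbiprod A_{n-1}}^\dagger \circ \pcoproj_n = \pproj_{A_1\pbiprod\dots\pbiprod A_{n-1}} \circ \pcoproj_n = 0$; and closure of phases under $\dagger$ follows because, as the proof of Lemma~\ref{lem:ph_biprod}\ref{enum:hasfinite} shows, a phase of the composite restricts to phases on the inner biproduct and on $A_n$, each of which is $\dagger$‑closed by the binary hypothesis, and daggering reverses the composite in a compatible way.

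\textbf{Main obstacle.} There is no deep obstacle; the only thing requiring care is bookkeeping the dagger through the nested coprojections and confirming that the phase condition of Definition~\ref{def:phased_biprod} (same phases for the product and coproduct structures) is preserved when one adds the dagger constraint — i.e. that the induced projections really are the daggers of the induced coprojections rather than merely differing from them by a phase. The cleanest route is to first fix the $n$‑ary \emph{phased biproduct} structure exactly as in Lemma~\ref{lem:ph_biprod}, then observe that \emph{with that specific choice} the projection equals the dagger of the coprojection by the displayed computation above, so no re‑choice is needed.
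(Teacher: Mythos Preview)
Your proposal is correct and follows essentially the same route as the paper: invoke the associativity result (Lemma~\ref{lem:ph_biprod}\ref{enum:hasfinite}) to obtain the $n$-ary phased biproduct structure on the nested object, and then verify the dagger condition by noting that the composite coprojections are isometries whose daggers are exactly the composite projections. The paper's proof is a single sentence to this effect, while you spell out the computation $\coproj_i'^\dagger = \pproj_i \circ \pproj_{A_1\pbiprod\dots\pbiprod A_{n-1}}$ and also sketch the alternative via Lemma~\ref{lem:ph-d-biprod-help}; both of your routes are sound and simply make explicit what the paper leaves implicit.
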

\begin{proof}
We have seen that $(A \pbiprod B) \pbiprod C$ forms a phased biproduct of $A, B, C$ with coprojections $\pcoproj_{A \pbiprod B} \circ \pcoproj_A$, $\pcoproj_{A \pbiprod B} \circ \pcoproj_B$ and $\pcoproj_C$. But these are isometries whenever all of the $\pcoproj$ are. 
Similarly, we obtain phased dagger biproducts $A_1 \pbiprod \dots \pbiprod A_n$.
\end{proof}

Our motivating source of examples is the following.

\begin{lemma} \label{lem:dagger_ph_biprod_main}
Let $\cta$ be a dagger category with dagger biproducts and a choice of trivial isomorphisms which is transitive and closed under the dagger. Then $\quot{\cta}{\sim}$ is a dagger category with finite phased dagger biproducts.
\end{lemma}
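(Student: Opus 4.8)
The plan is to package together facts already established in this chapter. First I would recall that by Lemma~\ref{lem:dagger_ph_biprod_main}'s hypotheses, $\cta$ is in particular a category with biproducts (hence a zero object and binary biproducts) equipped with a transitive choice of trivial isomorphisms $\sim$. By Lemma~\ref{lem:getBiprod} this already gives that $\quot{\cta}{\sim}$ has finite phased biproducts, and by Lemma~\ref{lem:generalrecipeforphcoprod} (and its dual) the quotient functor $[-]_\tc \colon \cta \to \quot{\cta}{\sim}$ sends (co)products to phased (co)products. So the only genuinely new content to verify is that the dagger descends to $\quot{\cta}{\sim}$, and that the resulting phased biproducts are \emph{dagger} phased biproducts in the sense of Definition~\ref{def:daggerphbiprod}, i.e.\ satisfy $\pproj_i = \coproj_i^\dagger$.

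The first step is to check that $(-)^\dagger$ is well-defined on $\quot{\cta}{\sim}$. Given $f \sim g$, say $f = g \circ p$ with $p \in \TrI_A$, we have $f^\dagger = p^\dagger \circ g^\dagger$; since the choice of trivial isomorphisms is closed under the dagger, $p^\dagger \in \TrI_A$, and applying \eqref{eq:triv-isom} (using that $\sim$ satisfies the defining commutation property) we may rewrite $p^\dagger \circ g^\dagger = g^\dagger \circ q$ for some $q \in \TrI_B$, hence $f^\dagger \sim g^\dagger$. Thus $[f]_\tc^\dagger := [f^\dagger]_\tc$ is a well-defined assignment, and it inherits from $\cta$ the identities $[f]_\tc^{\dagger\dagger} = [f]_\tc$, $([g]_\tc \circ [f]_\tc)^\dagger = [f]_\tc^\dagger \circ [g]_\tc^\dagger$, and $\id{A}^\dagger = \id{A}$, making $\quot{\cta}{\sim}$ a dagger category with $[-]_\tc$ a dagger functor.

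Next I would identify the phased biproduct structure on $[A \biprod B]_\tc$ in $\quot{\cta}{\sim}$: by the proof of Lemma~\ref{lem:getBiprod}, the coprojections and projections are $[\coproj_A]_\tc, [\coproj_B]_\tc$ and $[\pproj_A]_\tc, [\pproj_B]_\tc$ respectively. Since $A \biprod B$ is a \emph{dagger} biproduct in $\cta$ we have $\pproj_A = \coproj_A^\dagger$ and $\pproj_B = \coproj_B^\dagger$ in $\cta$, and applying the (now well-defined) dagger on the quotient gives $[\pproj_A]_\tc = [\coproj_A^\dagger]_\tc = [\coproj_A]_\tc^\dagger$ and likewise for $B$. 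This is exactly the condition in Definition~\ref{def:daggerphbiprod}, so $[A\biprod B]_\tc$ is a phased dagger biproduct; together with the zero object coming from the zero object of $\cta$, this yields finite phased dagger biproducts. I do not expect any serious obstacle here — the one point requiring a moment's care is the well-definedness of the dagger on equivalence classes, which is where the hypothesis that $\sim$ is closed under the dagger is used; everything else is a transparent combination of the preceding lemmas with $[-]_\tc$ applied to the biproduct identities.
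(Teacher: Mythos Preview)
Your proposal is correct and follows essentially the same approach as the paper: invoke Lemma~\ref{lem:getBiprod} for the phased biproduct structure, then verify that the dagger descends to the quotient because $f \sim g \implies f^\dagger \sim g^\dagger$ (using closure of trivial isomorphisms under $\dagger$ and the commutation property~\eqref{eq:triv-isom}). The paper's proof is a one-liner stating exactly this, while you spell out the well-definedness of the dagger and the verification of $[\pproj_i]_\tc = [\coproj_i]_\tc^\dagger$ explicitly; your extra citation of Lemma~\ref{lem:generalrecipeforphcoprod} is redundant given Lemma~\ref{lem:getBiprod}, but harmless.
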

\begin{proof}
This follows easily from Lemma~\ref{lem:getBiprod}, noting that thanks to our assumptions whenever $f \sim g$ then $f^\dagger \sim g^\dagger$ also, so that $\quot{\cta}{\sim}$ is indeed a dagger category.
\end{proof}

\begin{example}
$\Hilb$ has finite dagger biproducts, and so by the above $\HilbP$ has finite phased dagger biproducts. 
\end{example}

We now desire versions of our results on the $\mathsf{GP}$ construction for dagger categories. However, a problem arises from the fact that the canonical (non-unique) isomorphisms from Lemma~\ref{lem:isoms} or distributivity~\eqref{eq:distrib-isomorphism} need not be unitary as canonical isomorphisms in a dagger category should be.

\begin{example} \label{ex:wocoherentdagphases}
For each commutative involutive semi-ring $S$ we've seen that $\MatS$ has distributive dagger biproducts $n \biprod m := n + m$. Choose as global phases $\mathbb{P}$ all scalars $u \in S$ which are unitary, with $u^{\dagger} \cdot u = 1$, and suppose that $S$ has a unitary element of the form $s^{\dagger} \cdot s \neq 1$ for some $s \in S$; for example we may take $S = \mathbb{C}$ but with trivial involution $z^{\dagger} := z$ for all $z \in \mathbb{C}$, and choose $s = -1 = i^{\dagger} \cdot i$.

 Then the morphism $(1,0) \colon 1 \to 2$ in $\MatS$, together with either $(0,1)$ or $(0,s)$ makes the object $2$ a phased dagger biproduct $1 \pbiprod 1$ in $(\MatS)_\quotP$. But the endomorphism of $2$ in $\MatS$ with matrix 
\[
\begin{pmatrix}
1 & 0 \\ 0 & s
\end{pmatrix}
\]
which relates these is not unitary, and nor is its induced morphism in $(\MatS)_\quotP$.
\end{example}

We can remedy this with an extra assumption about phased dagger biproducts. In a dagger category a morphism $f \colon A \to A$ is called \indef{positive} \index{positive morphism} when $f = g^{\dagger} \circ g$ for some $g \colon A \to B$.

\begin{definition} \label{def:posfree}
We say that a dagger category with finite phased dagger biproducts has \deff{positive-free phases} \index{positive-free!phases} when any phase $U$ on $A \pbiprod B$ which is positive has $U= \id{A \pbiprod B}$. 
\end{definition}

Equivalently, any morphism $f \colon A \pbiprod B \to C$ for which $f \circ \pcoproj_A$ and $f \circ \pcoproj_B$ are orthogonal isometries is itself an isometry. In particular this makes all phases and canonical and distributivity isomorphisms between (finite) phased dagger biproducts unitary. It also follows that positive phases of any finite phased dagger biproduct $A_1 \pbiprod \dots \pbiprod A_n$ are trivial. 

\begin{definition}
Let $\ctb$ be a dagger category with phased dagger biproducts and a distinguished object $I$. We define the category $\plusIdag{\ctb}$ \label{not:GPdag} just like $\plusI{\ctb}$ but with objects being phased dagger biproducts $\obb{A} = A \pbiprod I$. 
\end{definition}

\begin{lemma} \label{lem:DagBiprodPlusI}
Let $\ctb$ be a category with finite phased dagger biproducts with positive-free phases and a phase generator $I$. Then $\plusIdag{\ctb}$ is a dagger category with finite dagger biproducts, and $[-] \colon \plusIdag{\ctb} \to \ctb$ preserves daggers. 
\end{lemma}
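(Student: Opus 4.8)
The statement is the dagger analogue of the combination of Theorem~\ref{thm:constr_is_monoidal}, Theorem~\ref{thm:getmoncoprod} and Lemma~\ref{lem:getBiprod}, so the plan is to reuse those results and then install the dagger. First I would observe that, because $\ctb$ has finite phased dagger biproducts, every phased coproduct $\obb{A}=A\pcoprod I$ has, by Lemma~\ref{lem:ph_biprod}\ref{enum:unique}, a unique phased biproduct structure, and by Lemma~\ref{lem:ph-d-biprod-help} this structure makes $\pcoproj_A,\pcoproj_I$ orthogonal isometries with phases closed under $\dagger$. Hence the objects of $\plusIdag{\ctb}$ are exactly the objects of $\plusI{\ctb}$ carrying this extra data, and the morphisms of $\plusIdag{\ctb}$ are the same diagonal phase-fixing morphisms as in $\plusI{\ctb}$; so as a plain category $\plusIdag{\ctb}=\plusI{\ctb}$. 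Since $\ctb$ has finite phased dagger biproducts it certainly has finite phased biproducts, and positive-free phases make all phases transitive (a phase $V$ with $f\circ V=f$ for a diagonal monic $f$ is, by the argument of Lemma~\ref{lem:ph_biprod}, controlled by the projections, so transitivity holds; more directly Lemma~\ref{lem:ph_biprod}\ref{enum:transitive} already gives transitivity once we have binary phased biproducts and a zero object, both of which we have). With $I$ a phase generator by hypothesis, Lemma~\ref{lem:getBiprod} gives that $\plusI{\ctb}$, hence $\plusIdag{\ctb}$, has finite biproducts.

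\textbf{Installing the dagger.} The main work is to define $(-)^\dagger$ on $\plusIdag{\ctb}$ and check it is an involutive identity-on-objects contravariant functor compatible with the biproducts, and that $[-]$ preserves it. For a morphism $f\colon\obb{A}\to\obb{B}$ in $\plusIdag{\ctb}$, I would define $f^\dagger$ to be the $\dagger$ of $f$ taken in $\ctb$. I must check this lands in $\plusIdag{\ctb}$: $f^\dagger\colon\obb{B}\to\obb{A}$ must be diagonal and fix $\pcoproj_I$. Diagonality of $f^\dagger$ follows from diagonality of $f$ together with $\pproj=\pcoproj^\dagger$: if $f\circ\pcoproj_A=\pcoproj_C\circ g$ then applying $\dagger$ gives $\pcoproj_A^\dagger\circ f^\dagger=g^\dagger\circ\pcoproj_C^\dagger$, i.e.\ $\pproj_A\circ f^\dagger=g^\dagger\circ\pproj_C$; precomposing the whole morphism $f^\dagger$ with $\pcoproj_C$ and using the biproduct equations ($\pproj_A\circ\pcoproj_C=0$ when $A\neq C$ components, $=\id{}$ on matching ones) isolates the block structure and shows $f^\dagger$ is diagonal in the phased-biproduct sense. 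The condition $f^\dagger\circ\pcoproj_I=\pcoproj_I$ is equivalent, after applying $\dagger$ and using $\pproj_I=\pcoproj_I^\dagger$, to $\pproj_I\circ f=\pproj_I$, which in turn follows from $f$ being diagonal and fixing $\pcoproj_I$: indeed $f$ diagonal gives $\pproj_I\circ f=h\circ\pproj_I$ for some $h\colon I\to I$, and composing with $\pcoproj_I$ yields $h=\pproj_I\circ f\circ\pcoproj_I=\pproj_I\circ\pcoproj_I=\id{I}$. (This is the one place positive-freeness of phases is genuinely needed, to pin down the projection structure uniquely and force $h=\id{I}$ rather than a mere phase; without it the isomorphism identifying the two biproduct structures need not be unitary, as Example~\ref{ex:wocoherentdagphases} shows.) Involutivity $f^{\dagger\dagger}=f$, contravariant functoriality $(g\circ f)^\dagger=f^\dagger\circ g^\dagger$, and $(\id{\obb{A}})^\dagger=\id{\obb{A}}$ are then inherited verbatim from the dagger on $\ctb$. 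For compatibility with biproducts, for the coproduct $\obb{A}+\obb{B}=(A\pcoprod B)\pcoprod I$ constructed in Theorem~\ref{thm:getmoncoprod} I would check $\pproj_{\obb{A}}=\coproj_{\obb{A}}^\dagger$; since both biproduct structures on $A\pbiprod B$ in $\ctb$ are unique up to unitary and the phased dagger biproduct structure satisfies $\pproj=\pcoproj^\dagger$, applying $[-]$ and positive-freeness shows the canonical projections in $\plusIdag{\ctb}$ are indeed the daggers of the coprojections, making these dagger biproducts.

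\textbf{Preservation by $[-]$ and the likely obstacle.} Finally $[-]$ preserves daggers: $[f]$ is the unique morphism with $f\circ\pcoproj_A=\pcoproj_B\circ[f]$, and applying $\dagger$ gives $\pcoproj_A^\dagger\circ f^\dagger=[f]^\dagger\circ\pcoproj_B^\dagger$, i.e.\ $\pproj_A\circ f^\dagger=[f]^\dagger\circ\pproj_B$; composing with $\pcoproj_B$ and using $\pproj_B\circ\pcoproj_B=\id{}$ (on matching components) yields $[f^\dagger]=[f]^\dagger$. I expect the main obstacle to be the bookkeeping in verifying that $f^\dagger$ is again \emph{diagonal} in the precise phased-biproduct sense of Definition~\ref{def:phased_biprod}/\ref{def:daggerphbiprod} — the naive argument via $\dagger$ only directly gives statements about projections, so one must carefully use the biproduct equations and the uniqueness of projections (Lemma~\ref{lem:ph_biprod}\ref{enum:unique}) to convert these back into statements about coprojections, and this is exactly where positive-freeness of phases is essential to rule out the pathology of Example~\ref{ex:wocoherentdagphases}. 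Everything else is a routine transport of the dagger along the already-established (non-dagger) structure.
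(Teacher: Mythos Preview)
Your approach is essentially the same as the paper's: inherit the dagger from $\ctb$, reuse Lemma~\ref{lem:getBiprod} for biproducts, and then check these are \emph{dagger} biproducts. However, you have misidentified where positive-freeness enters.

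Your own computation showing $h=\id{I}$ is complete and does \emph{not} use positive-freeness: once $f$ is diagonal with $f\circ\pcoproj_I=\pcoproj_I$, the phased-biproduct argument of Lemma~\ref{lem:ph_biprod}~\ref{enum:transitive} gives $\pproj_I\circ f=\pproj_I$ directly (the phase $U$ appearing there satisfies $\pproj_I\circ U=\pproj_I$ because phases of a phased \emph{biproduct} preserve projections as well as coprojections). Taking daggers then yields $f^\dagger\circ\pcoproj_I=\pcoproj_I$ outright. More generally the paper observes that $f\circ\pcoproj_i=\pcoproj_i\circ f_i$ implies $\pproj_i\circ f=f_i\circ\pproj_i$, whence $f^\dagger\circ\pcoproj_i=\pcoproj_i\circ f_i^\dagger$; this simultaneously gives diagonality of $f^\dagger$, the condition on $\pcoproj_I$, and $[f^\dagger]=[f]^\dagger$, all without positive-freeness.

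Positive-freeness is used only for the dagger-biproduct step, and the paper's argument is cleaner than your sketch: for a coprojection $\pcoproj_{\obb{A}}$ in $\plusIdag{\ctb}$ one has $[\pcoproj_{\obb{A}}^\dagger\circ\pcoproj_{\obb{A}}]=[\pcoproj_{\obb{A}}]^\dagger\circ[\pcoproj_{\obb{A}}]=\id{}$, so $\pcoproj_{\obb{A}}^\dagger\circ\pcoproj_{\obb{A}}$ is a positive phase and hence the identity. Two minor points: the lemma is not monoidal, so the relevant references are Theorem~\ref{thm:localToGlobal} and Lemma~\ref{lem:getBiprod} rather than Theorems~\ref{thm:constr_is_monoidal} and~\ref{thm:getmoncoprod}; and $\plusIdag{\ctb}$ is only an equivalent full subcategory of $\plusI{\ctb}$, not literally equal, since an arbitrary phased coproduct $A\pcoprod I$ need not have its coprojections be isometries (cf.~Example~\ref{ex:wocoherentdagphases}).
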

\begin{proof}
Let $\ctb$ be as above. One may check that any diagonal morphism $f \colon \obb{A} \to \obb{B}$ between phased dagger biproducts with $f \circ \pcoproj_i = \pcoproj_i \circ f_i$ has that $f^{\dagger} \colon \obb{B} \to \obb{A}$ is also diagonal with $f^{\dagger} \circ \pcoproj_i = \pcoproj_i \circ {f_i}^{\dagger}$. Hence $\plusIdag{\ctb}$ is a dagger category with the same dagger as $\ctb$, and $[-] \colon \ctb \to \plusIdag{\ctb}$ preserves daggers. 

Now any lifting $(\obb{A \pbiprod B}, \pcoproj_{\obb{A}}, \pcoproj_{\obb{B}})$ of a phased dagger biproduct in $\ctb$ is a biproduct in $\plusIdag{\ctb}$, just as in Lemma~\ref{lem:getBiprod}. Moreover each coprojection has that $[\pcoproj_{\obb{A}}^{\dagger} \circ \pcoproj_{\obb{A}}] = [\pcoproj_{\obb{A}}]^{\dagger} \circ [\pcoproj_{\obb{A}}] = \id{}$ and so  
$\pcoproj_{\obb{A}}^{\dagger} \circ \pcoproj_{\obb{A}}$ is a phase in $\ctb$, and hence by positive-freeness is the identity, making this a dagger biproduct.
\end{proof}

When $\ctb$ is a dagger monoidal category, in $\plusIdag{\ctb}$ we again set $\mathbb{P}$ to be the morphisms $\obb{I} \to \obb{I}$ in $\plusIdag{\ctb}$ which are phases in $\ctb$. We call a choice of global phases $\mathbb{P}$ on a dagger monoidal category \indef{positive-free} \index{positive-free!global phases} if whenever $p \cdot \id{A}$ is positive then it is equal to $\id{A}$, for any $p \in \mathbb{P}$ and object $A$.

\begin{corollary} \label{cor:daggerbiproducts}
There is a one-to-one correspondence, up to dagger monoidal equivalence, between dagger monoidal categories
\begin{itemize}
\item $\cta$ with distributive finite dagger biproducts and a positive-free choice of unitary global phases $\mathbb{P}$;
\item $\ctb$ with distributive finite phased dagger biproducts with positive-free phases;
\end{itemize}
given by $\cta \mapsto \cta_{\quotP}$ and  $\ctb \mapsto \plusIdag{\ctb}$.
\end{corollary}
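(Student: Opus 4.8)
The strategy is to assemble this correspondence from the non-dagger version, Corollary~\ref{cor:biproducts}, by carrying the dagger structure along at each stage. The two assignments to verify are $\cta \mapsto \cta_\quotP$ and $\ctb \mapsto \plusIdag{\ctb}$, and we must check that they are well-defined (land in the claimed kind of category), that they are mutually inverse up to dagger monoidal equivalence, and that they respect the extra dagger-related properties (positive-freeness, unitarity of global phases).

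First I would check that $\cta \mapsto \cta_\quotP$ is well-defined. Given $\cta$ with distributive finite dagger biproducts and a positive-free choice of unitary global phases $\mathbb{P}$, the trivial isomorphisms $\TrI_A = \{p \cdot \id{A} \mid p \in \mathbb{P}\}$ are closed under the dagger (since each $p$ is unitary, $(p \cdot \id{A})^\dagger = p^\dagger \cdot \id{A} \in \TrI_A$) and transitive (centrality of scalars). So by Lemma~\ref{lem:dagger_ph_biprod_main}, $\cta_\quotP = \quot{\cta}{\sim}$ is a dagger category with finite phased dagger biproducts; distributivity is inherited as in Lemma~\ref{lem:globaltolocalmonoidal}, and $\cta_\quotP$ is dagger monoidal since $\otimes$ on $\cta$ restricts (the $p \in \mathbb{P}$ being central) and $f \sim g \implies f^\dagger \sim g^\dagger$. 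It remains to see that phases are positive-free: a phase $U$ on $A \pbiprod B$ in $\cta_\quotP$ lifts to an endomorphism $p \cdot \id{A} + q \cdot \id{B}$ of $A \biprod B$ in $\cta$ for some $p, q \in \mathbb{P}$; if $U$ is positive then a preimage of this form may be chosen positive, forcing $p \cdot \id{A}$ and $q \cdot \id{B}$ to be positive by restricting along the dagger coprojections, hence $=\id{}$ by positive-freeness of $\mathbb{P}$, so $U = \id{}$.

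Next I would check $\ctb \mapsto \plusIdag{\ctb}$ is well-defined. By Lemma~\ref{lem:deleter_cancellation_modified}, $I$ is a phase generator for phased coproducts in such a $\ctb$, hence for phased biproducts (dually), so Lemma~\ref{lem:DagBiprodPlusI} applies: $\plusIdag{\ctb}$ is a dagger category with finite dagger biproducts, and $[-]$ preserves daggers. Monoidality of $\plusIdag{\ctb}$ follows from Theorem~\ref{thm:constr_is_monoidal} — one checks the structure isomorphisms constructed there (which are built from canonical and distributivity maps between phased dagger biproducts) are unitary, using positive-freeness of phases, so that $\plusIdag{\ctb}$ is dagger monoidal and $[-]$ is a strict dagger monoidal functor. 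Distributivity of the biproducts follows as in Theorem~\ref{thm:getmoncoprod}. The global phases $\mathbb{P} = \{u \colon \obb{I} \to \obb{I} \mid u \text{ a phase in } \ctb\}$ are unitary in $\plusIdag{\ctb}$ (since $[-]$ preserves daggers and a phase $u$ has $[u] = \id{I}$, so $[u^\dagger \circ u] = \id{I}$, forcing $u^\dagger \circ u$ to be a positive phase on $\obb{I}$, hence $\id{}$ by positive-freeness) and positive-free by the same argument applied to $p \cdot \id{A}$.

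For the two round-trips, the non-dagger Corollary~\ref{cor:biproducts} already gives monoidal equivalences $\ctb \simeq \quot{\plusI{\ctb}}{\sim} = \plusI{\ctb}_\quotP$ and $\cta \simeq \plusI{\cta_\quotP}$; here one checks the relevant functors ($[-]$ restricted along $[-]_\tc$, and the functor $F$ of Corollary~\ref{cor:phcoprodcorrespon}) are dagger functors. Since $[-] \colon \plusIdag{\ctb} \to \ctb$ preserves daggers and factors through $\quot{\plusIdag{\ctb}}{\sim}$, that equivalence is a dagger equivalence; and $F$ sends $f^\dagger$ to $[f^\dagger + \id{I}]_\tp = [(f + \id{I})^\dagger]_\tp = F(f)^\dagger$, so it too is a dagger functor. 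One also notes $\plusIdag{\ctb}$ and $\plusI{\ctb}$ agree up to the fact that in the dagger setting the objects are phased \emph{dagger} biproducts, which by Lemma~\ref{lem:ph_biprod}\ref{enum:unique} is no loss since the phased biproduct structure is unique anyway. I expect the main obstacle to be the bookkeeping around positive-freeness: ensuring that \emph{every} canonical isomorphism produced by the constructions of Section~\ref{sec:monoidal} (the $c_{\obb{A},\obb{B}}$, the associators, the distributivity maps) is unitary and not merely an isomorphism, which is exactly the content that Example~\ref{ex:wocoherentdagphases} shows can fail without the positive-free hypothesis. Concretely this means re-examining each place in the proof of Theorem~\ref{thm:constr_is_monoidal} where ``some phase $U$'' appears and observing that, because $\plusIdag{\ctb}$ has dagger biproducts and positive-free phases, these phases are forced to be unitary, so the structure maps one constructs land in the unitaries.
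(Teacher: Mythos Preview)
Your proposal is correct and follows essentially the same approach as the paper: assemble the correspondence from the non-dagger Corollary~\ref{cor:biproducts}, use Lemma~\ref{lem:dagger_ph_biprod_main} and Lemma~\ref{lem:DagBiprodPlusI} for the two directions, and verify that positive-freeness forces the coherence data to be unitary. The one concrete technique you have not pinned down, which the paper supplies, is how to actually check dagger monoidality of $\plusIdag{\ctb}$: rather than revisiting each phase in Theorem~\ref{thm:constr_is_monoidal}, the paper observes that positive-freeness makes each $c_{\obb{A},\obb{B}}$ an isometry, and then uses the elementary fact that if $i,j$ are isometries with $j \circ f = g \circ i$ and $j \circ h = g^\dagger \circ i$ then $h = f^\dagger$ (and $f$ is unitary whenever $g$ is) to transport $(f \otimes g)^\dagger = f^\dagger \otimes g^\dagger$ and unitarity of $\alpha$ across to $\tens$ and $\aalpha$; unitarity of $\rrho,\llambda$ is then deduced from the equation $\rrho_{\obb{A}} \tens \id{\obb{I}} = (\id{\obb{A}} \tens \beta) \circ \aalpha$.
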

\begin{proof}
$\cta_\quotP$ has phased dagger biproducts by Lemma~\ref{lem:dagger_ph_biprod_main}, and from the description of phases in this category we see that they are positive-free iff $\mathbb{P}$ is positive-free in $\cta$. Conversely, for $\ctb$ as above apply Lemma~\ref{lem:DagBiprodPlusI} and Corollary~\ref{cor:biproducts}. Thanks to positive-freeness, every phase is a unitary and hence so are all elements of $\mathbb{P}$.

We define the monoidal structure on $\plusIdag{\ctb}$ just as on $\plusI{\ctb}$. By positive-freeness the morphisms $c_{\obb{A}, \obb{B}}$ are isometries, and this in turn ensures that $\plusIdag{\ctb}$ is dagger monoidal. To show this, we will use the observation that in any dagger category, if $i$ and $j$ are isometries and  the following commutes
\[
\begin{tikzcd}
A \rar{f} \dar[swap]{i} & B \dar{j} \rar{h} & A \dar{i} \\ 
C \rar[swap]{g} & D \rar[swap]{g^\dagger} & C  
\end{tikzcd}
\]
then $h = f^{\dagger}$, and whenever $g$ is unitary so is $f$. Applying this to the situation 
\[
\begin{tikzcd}
\obb{A} \tens \obb{B} \rar{f \tens g} \dar[swap]{c_{\obb{A}, \obb{B}}} & 
\obb{C} \tens \obb{D} \rar{f^{\dagger} \tens g^{\dagger}} \dar[swap]{c_{\obb{C}, \obb{D}}} & \obb{A} \tens \obb{B} \dar{c_{\obb{A}, \obb{B}}} \\ 
\obb{A} \otimes \obb{B} \rar[swap]{f \otimes g} & \obb{C} \otimes \obb{D} \rar[swap]{f^{\dagger} \otimes g^{\dagger}} & \obb{A} \otimes \obb{B} 
\end{tikzcd}
\]
using that $ f^{\dagger} \otimes g^{\dagger} = (f \otimes g)^{\dagger}$ we see that $ f^{\dagger} \tens g^{\dagger} = (f \tens g)^{\dagger}$ also. Similarly, applying this observation to the definition of $\aalpha$ shows that it is unitary. 

Now any morphism $\beta \colon \obb{I} \tens \obb{I} \to \obb{I}$ as in the proof of Theorem~\ref{thm:constr_is_monoidal} is unitary thanks to positive-freeness. 
The natural isomorphisms $\rrho$ in $\plusIdag{\ctb}$ satisfy $\rrho_{\obb{A}} \tens \id{\obb{I}} = (\id{\obb{A}} \tens \beta) \circ \aalpha_{\obb{A}, \obb{I}, \obb{I}}$. Since the latter is unitary, the dagger respects $\tens$, and the assignment $f \mapsto f \tens \id{\obb{I}}$ is injective, it follows that $\rrho_{\obb{A}}$ is unitary. Similarly, so is $\llambda_{\obb{A}}$.

Now since $[-]$ is dagger monoidal so is the equivalence $\ctb \simeq \plusIdag{\ctb}_\quotP$. Conversely, the equivalence $F \colon \cta \to \plusIdag{\cta_\quotP}$ preserves daggers by definition and is such that every object in $\plusIdag{\cta_\quotP}$ is unitarily isomorphic to $F(A)$, for some $A$, making it a dagger equivalence.
\end{proof}

It is also easy to see that whenever either of $\cta$ or $\ctb$ is symmetric dagger monoidal, so is the other and each of the above functors.

\begin{example} \label{example:Hilb-dag-construction}
The global phases $e^{i \theta}$ in $\Hilb$ are positive-free, and so we have dagger monoidal equivalences 
\[
\Hilb \simeq \plusIdag{\HilbP}
\qquad
\FHilb \simeq \plusIdag{\FHilbP}
\]
\end{example}

It follows from our next result that the phased biproducts in $\HilbP$ satisfy the following condition, strengthening positive-freeness, which will be useful to us later. Let us say that phased dagger biproducts have \indef{positive cancellation} \index{positive cancellation} when any positive diagonal endomorphisms $p, q$ of $A \pbiprod B$ with $p = q \circ U$ for some phase $U$ have $p = q$. 

\begin{lemma} \label{lem:pos-cancellation}
Let $\ctb$ be a dagger monoidal category with distributive finite phased dagger biproducts with positive-free phases. Then  positive cancellation holds in $\ctb$ iff in $\plusIdag{\ctb}$ we have
\begin{equation} \label{eq:pos-condition}
[p] = [q] \implies p = q
\end{equation}
for all positive morphisms $p, q$.
\end{lemma}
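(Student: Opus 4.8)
The statement to prove relates a condition in $\ctb$ (positive cancellation for phased dagger biproducts) to a condition in $\plusIdag{\ctb}$ (the functor $[-]$ is injective on positive morphisms). The natural strategy is to unpack what positive morphisms look like on each side of the equivalence, using the structural description of $\plusIdag{\ctb}$ from Lemma~\ref{lem:DagBiprodPlusI} together with the characterisation of phases as global phases from Lemma~\ref{lem:phasesAreScalars}. First I would recall that, since phases in $\ctb$ are positive-free, the functor $[-] \colon \plusIdag{\ctb} \to \ctb$ preserves daggers (Lemma~\ref{lem:DagBiprodPlusI}) and hence preserves positive morphisms: if $p = g^\dagger \circ g$ in $\plusIdag{\ctb}$ then $[p] = [g]^\dagger \circ [g]$ is positive in $\ctb$. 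The key observation in the other direction is that a diagonal morphism $\obb{A} \to \obb{A}$ in $\ctb$ which is positive and lies in the image of $[-]$ (i.e.\ restricts to a morphism of $\plusIdag{\ctb}$) is automatically positive \emph{as a morphism of} $\plusIdag{\ctb}$; this is where I would need to use that $\plusIdag{\ctb}$ inherits the dagger of $\ctb$ verbatim, so positivity witnesses transfer, modulo checking the witness $g$ can be taken to land in an object of $\plusIdag{\ctb}$ — which follows since any object of $\ctb$ is a phased dagger biproduct and we may pad with $I$.

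\textbf{Forward direction.} Assuming positive cancellation holds in $\ctb$, suppose $p, q$ are positive morphisms in $\plusIdag{\ctb}$ with $[p] = [q]$. By definition of $\plusIdag{\ctb}$ these are diagonal endomorphisms of some $\obb{A} = A \pbiprod I$, and by Theorem~\ref{thm:localToGlobal} the condition $[p] = [q]$ means precisely that $p = q \circ U$ for some phase $U$ of $\obb{A}$ in $\ctb$. Since $[-]$ preserves daggers, $p$ and $q$ are positive diagonal endomorphisms of the phased dagger biproduct $A \pbiprod I$ in $\ctb$. Positive cancellation in $\ctb$ then gives $p = q$ directly.

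\textbf{Converse direction.} Assume \eqref{eq:pos-condition} holds in $\plusIdag{\ctb}$, and let $p, q$ be positive diagonal endomorphisms of a phased dagger biproduct $A \pbiprod B$ in $\ctb$ with $p = q \circ U$ for some phase $U$. I would first reduce to the case where the biproduct has $I$ as a summand: by Lemma~\ref{lem:ph_biprod}\ref{enum:unique} and associativity (Lemma~\ref{lem:ph_biprod}\ref{enum:hasfinite} together with Proposition~\ref{prop:assoc}) we may regard $(A \pbiprod B) \pbiprod I$ as an object $\obb{C}$ of $\plusIdag{\ctb}$, and $p \pbiprod \id{I}$, $q \pbiprod \id{I}$ as positive diagonal endomorphisms of $\obb{C}$; here padding with $\id{I}$ preserves positivity and the relation via a phase lifts to a phase of $\obb{C}$ (again by associativity of phased coproducts). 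These padded morphisms then lie in $\plusIdag{\ctb}$ and are positive there, with $[p \pbiprod \id{I}] = [q \pbiprod \id{I}]$ following from $p = q \circ U$ since $[U] = \id{}$ for a phase $U$. Applying \eqref{eq:pos-condition} gives $p \pbiprod \id{I} = q \pbiprod \id{I}$, whence $p = q$ by uniqueness of the cotupling / faithfulness of the padding operation on the first summand.

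\textbf{Main obstacle.} The delicate point is the bookkeeping around padding with $I$ and lifting along the equivalence: I need to be sure that (i) a positive diagonal endomorphism of $A \pbiprod B$ genuinely \emph{is} positive when viewed inside $\plusIdag{\ctb}$ — not just diagonal — which requires exhibiting the positivity witness $g$ with domain again an object of the form $X \pbiprod I$, and (ii) that the relation "$p = q \circ U$ for a phase" is exactly what $[p]=[q]$ amounts to, which is Theorem~\ref{thm:localToGlobal} but needs the phase to be on the correct padded object. Both are routine given the earlier results, but they are the steps most likely to hide a subtlety, so I would spell out the padding argument carefully rather than wave at it.
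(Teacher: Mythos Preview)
Your forward direction is correct and matches the paper's argument: morphisms in $\plusIdag{\ctb}$ are automatically diagonal endomorphisms of some $A \pbiprod I$, so positive cancellation applies directly.

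The converse direction has a genuine gap, precisely at the step you flagged as the main obstacle but then dismissed as routine. The claim ``$[p \pbiprod \id{I}] = [q \pbiprod \id{I}]$ following from $p = q \circ U$ since $[U] = \id{}$'' conflates two different phase structures. The phase $U$ is a phase of $A \pbiprod B$, but the functor $[-] \colon \plusIdag{\ctb} \to \ctb$ identifies morphisms that differ by a phase of the $\pbiprod I$ structure on $\obb{C} = (A \pbiprod B) \pbiprod I$. A lift $\hat{U}$ with $[\hat{U}] = U$ satisfies $\hat{U} \circ \pcoproj_{A \pbiprod B} = \pcoproj_{A \pbiprod B} \circ U \neq \pcoproj_{A \pbiprod B}$, so $\hat{U}$ is \emph{not} a phase of $\obb{C}$, and hence $[\hat{p}] = p \neq q = [\hat{q}]$ in general. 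Your lifts simply do not satisfy the hypothesis of \eqref{eq:pos-condition}.

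The paper instead passes to diagonal components. Working via the equivalence $\ctb \simeq \quot{\plusIdag{\ctb}}{\sim}$, one lifts $p, q$ to positive diagonal endomorphisms of an \emph{actual} biproduct $A \oplus B$ in $\plusIdag{\ctb}$, with $[p] = [q] \circ [U]$ for a phase $[U]$ in $\ctb$. Since $[U]$ fixes each $[\pcoproj_A]$, one gets $[\pproj_A \circ p \circ \pcoproj_A] = [\pproj_A \circ q \circ \pcoproj_A]$; these components are positive in $\plusIdag{\ctb}$ (sandwiching by the isometry $\pcoproj_A$), so \eqref{eq:pos-condition} gives $\pproj_A \circ p \circ \pcoproj_A = \pproj_A \circ q \circ \pcoproj_A$ and similarly for $B$. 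In $\plusIdag{\ctb}$ with genuine biproducts, diagonal morphisms are determined by their diagonal entries, so $p = q$. The trick is to apply the hypothesis componentwise rather than globally.
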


\begin{proof}
Let $p, q$ be positive in $\plusIdag{\ctb}$ with $[p] = [q]$. Then $p = q \circ U$ for some phase $U$, and so when positive cancellation holds we have $p = q$.

 Conversely, suppose $\plusIdag{\ctb}$ satisfies the above and that $p, q$ are positive diagonal endomorphisms of $A \pbiprod B$ in $\plusIdag{\ctb}$ with $[p] = [q] \circ [U]$ for some phase $[U]$ in $\ctb$. Then in $\plusIdag{\ctb}$ we have $[\pproj_A \circ p \circ \pcoproj_A] = [\pproj_A \circ q \circ \pcoproj_A]$ and so $\pproj_A \circ p \circ \pcoproj_A = \pproj_A \circ q \circ \pcoproj_A$, and similarly for $B$, giving $p = q$. Hence $\ctb$ has positive cancellation.
\end{proof}

\begin{example}
$\Hilb$ satisfies the condition \eqref{eq:pos-condition}. Indeed let $p, q$ be positive linear maps with $p = e^{i \cdot \theta} \circ q$. Then since $p = p^\dagger$, subtracting $p$ gives that either $p=q=0$ or $e^{i \cdot \theta} = \pm 1$. But any positive maps with $p + q = 0$ have $p = q = 0$ also. 
\end{example}

\subsection{Phases in dagger compact categories}

Let us now consider the case where $\ctb$ is dagger compact. Although we've seen that compactness of $\ctb$ ensures that $\plusIdag{\ctb}$ is compact, to establish dagger compactness we make an extra assumption; it is an open question whether this is necessary. In any dagger monoidal category, let us call a state $\psi \colon I \to A$ a \indef{local isometry} \index{local isometry} when 
\begin{equation} \label{eq:local-isometric}
\scalebox{0.8}{\input{./figures/sup25.tikz}}
\end{equation}
For example, any isometric state is a local isometry, as is the state of a zero object. 

\begin{proposition} \label{prop:constr-compact}
Let $\ctb$ be dagger compact with phased dagger biproducts with positive-free phases. Suppose that in $\ctb$ every object $A$ has a state $\psi$ which is a local isometry. Then $\plusIdag{\ctb}$ is dagger compact.
\end{proposition}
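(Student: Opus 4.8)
The plan is to leverage the earlier structural results rather than compute directly. By Lemma~\ref{lem:DagBiprodPlusI} (applicable since the monoidal unit $I$ is a phase generator by Lemma~\ref{lem:deleter_cancellation_modified}, and $\ctb$ has positive-free phases), $\plusIdag{\ctb}$ is already a dagger category with finite dagger biproducts, and by the dagger-monoidal part of the proof of Corollary~\ref{cor:daggerbiproducts} it is in fact a dagger (symmetric) monoidal category. By Lemma~\ref{lem:comp-closed}\ref{enum:phccompclosed} compactness of $\ctb$ gives that $\plusI{\ctb}$, hence $\plusIdag{\ctb}$, is compact closed, with dual objects $\obb{A^*} = A^* \pbiprod I$ and cup/counit lifted (up to phase) from those of $\ctb$. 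So the only thing left to establish is that these duals can be chosen to be \emph{dagger} duals, i.e.~that for each object $\obb{A}$ there is a choice of $\eta_{\obb{A}} \colon \obb{I} \to \obb{A^*} \tens \obb{A}$ with the dagger-compactness equation $\tikzfig{dagdual}$ holding.

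First I would fix a dual pair $\tinycup_A, \tinycap_A$ in $\ctb$ satisfying the dagger-compactness equation there (available since $\ctb$ is dagger compact), and lift them to morphisms $\mathbf{\eta}$, $\mathbf{\epsilon}$ in $\plusIdag{\ctb}$ with $[\mathbf{\eta}] = \tinycup_A$, $[\mathbf{\epsilon}] = \tinycap_A$, as in the proof of Lemma~\ref{lem:comp-closed}\ref{enum:phccompclosed}. As there, the snake equations hold only up to phases $U, V$ on the relevant objects, and one corrects $\mathbf{\epsilon}$ to some $\epsilon'$ so that $(\mathbf{\eta}, \epsilon')$ is a genuine dual pair in $\plusIdag{\ctb}$. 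The new point is that one wants in addition $\epsilon' = \mathbf{\eta}^{\dagger} \circ \sigma$ (the dagger-dual condition). I would compute $\mathbf{\eta}^{\dagger}$: since $\mathbf{\eta}$ is diagonal with $\mathbf{\eta} \circ \pcoproj_I$ and $\mathbf{\eta}$ restricted to the $A^* \pbiprod A$ component given by (a phase-corrected) $\tinycup_A$, and since $[-]$ preserves daggers (Lemma~\ref{lem:DagBiprodPlusI}), we get $[\mathbf{\eta}^{\dagger}] = \tinycup_A^{\dagger}$. Using that $\ctb$ is dagger compact, $\tinycup_A^{\dagger}$ differs from $\tinycap_A$ (with a swap) only by composition with phases in $\ctb$. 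Hence $\mathbf{\eta}^{\dagger}$ (suitably composed with the symmetry) and $\epsilon'$ agree after postcomposition with some phase; by Lemma~\ref{lem:comp-closed}\ref{enum:pharescalars} every phase on $\obb{I}$ is of the form $u \cdot \id{\obb{I}}$ for a global phase $u$, so this discrepancy is a single global scalar $u$.

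The main obstacle — and the reason the local-isometry hypothesis is needed — is disposing of this residual global phase $u$: we must rescale $\eta := \mathbf{\eta}$ by an appropriate scalar so that $\eta^{\dagger}$ genuinely equals the counit up to swap, i.e.~we need a global phase $w$ with $w^{\dagger} \cdot w = u$ (a "square root" of $u$), which need not exist abstractly. Here is where the assumption enters: given a state $\psi \colon I \to A$ which is a local isometry \eqref{eq:local-isometric} on every object, one can produce, for each object, enough isometric-like states in $\plusIdag{\ctb}$ to pin down the scalar $u$ and show it is itself of the form $w^{\dagger} \cdot w$, or more directly to normalise $\eta$ using $\psi$ so that the dagger-compactness square closes. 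Concretely I would bend $\psi$-wires through the candidate cup/cap, use \eqref{eq:local-isometric} together with $(\mathbf{\eta}, \epsilon')$ being a dual pair and positive-free phases (Definition~\ref{def:posfree}, so positive phases are trivial) to conclude that the offending scalar is unitary and can be absorbed, yielding $\eta^{\dagger} = (\sigma \text{ applied to }) \epsilon'$. Once the cup and counit are dagger-related on each object, and since $\plusIdag{\ctb}$ is already dagger symmetric monoidal with these duals, it is dagger compact. The remaining verifications — that the chosen $\eta$ still satisfies both snake equations after the $\psi$-normalisation, and naturality/coherence — are routine diagram chases of the same flavour as Lemma~\ref{lem:comp-closed} and I would not grind through them.
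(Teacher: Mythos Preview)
Your overall strategy is sound and close to the paper's, but the crucial step has a genuine gap. You correctly reduce the problem to eliminating a residual global phase $u$, and you are right that the local isometry hypothesis and positive-freeness are the tools. However, your proposed mechanism does not work as stated. You frame the problem as needing $w$ with $w^{\dagger} \cdot w = u$ in order to rescale $\eta$, and then say you would ``conclude that the offending scalar is unitary and can be absorbed''. But every global phase is already unitary (positive-freeness on $\obb{I}$ forces this), and a unitary scalar cannot simply be ``absorbed'': rescaling $\eta \mapsto w \cdot \eta$ changes both snake equations by $w^{\dagger} w$, so you genuinely need $u$ to be positive, and you never explain how the local isometry delivers that.

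The paper's argument avoids this square-root detour entirely. It does not first construct a dual pair $(\eta, \epsilon')$ via Lemma~\ref{lem:comp-closed} and then correct; instead it lifts $\tinycup$ directly to $\eta$ and checks the dagger-snake equation for the pair $(\eta, \eta^\dagger)$ itself. Applying $[-]$ shows this snake equals $u \cdot \id{\obb{A}}$ for some $u \in \mathbb{P}$. The local isometry then enters as follows: lift $\psi$ to $\phi \colon \obb{I} \to \obb{A}$; since $[\phi^{\dagger}\phi \cdot \id{\obb{A}}] = \psi^{\dagger}\psi \cdot \id{A} = \id{A}$ by~\eqref{eq:local-isometric}, the morphism $\phi^{\dagger}\phi \cdot \id{\obb{A}}$ is a \emph{positive} phase on $\obb{A}$, hence equals $\id{\obb{A}}$ by positive-freeness. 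Bending $\phi$ through $\eta$ and using this identity then exhibits $u \cdot \id{\obb{A}}$ as a positive phase as well, so again $u \cdot \id{\obb{A}} = \id{\obb{A}}$, and $\eta$ already satisfies the first dagger-dual equation with no rescaling needed; the second is identical. The point you are missing is that one does not show $u = 1$ or find a square root, but rather shows the phase acts trivially \emph{on $\obb{A}$} by recognising it as positive there.
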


\begin{proof}
In $\ctb$, let $A$ and $A^*$ be dagger dual objects via the state $\tinycup$.  Let $\psi \colon I \to A$ be as above, and let $\phi \colon \obb{I} \to \obb{A}$
and $\eta \colon \obb{I} \to \obb{A^*} \tens \obb{A}$ in $\plusIdag{\ctb}$ with $[\phi] = \psi$ and $[\eta] = \tinycup$. Then applying $[-]$ we see that in $\plusIdag{\ctb}$ we have 
\[
\scalebox{0.8}{\input{./figures/sup27.tikz}}
\]
for some $u \in \mathbb{P}$ and 
\[
\left[
\scalebox{0.8}{\input{./figures/sup26ia.tikz}}
\right]
=
\left[
\scalebox{0.8}{\begin{tikzpicture}
	\begin{pgfonlayer}{nodelayer}
		\node [style=none] (0) at (2.25, 1.25) {};
		\node [style=none] (1) at (2.25, -1.25) {};
		\node [style=label] (2) at (2.25, 1.75) {$\obb{A}$};
		\node [style=label] (3) at (2.25, -1.75) {$\obb{A}$};
	\end{pgfonlayer}
	\begin{pgfonlayer}{edgelayer}
		\draw (1.center) to (0.center);
	\end{pgfonlayer}
\end{tikzpicture}}
\right]
\quad 
\text{    so that     }
\quad
\scalebox{0.8}{\input{./figures/sup26ii.tikz}}
\]
by positive-freeness.
But then 
\[
\scalebox{0.8}{\input{./figures/sup28.tikz}}
\]
Then by positive-freeness in $\plusIdag{\ctb}$ we have $\id{A} \cdot u = \id{A}$ , so that $\eta$ satisfies the first equation of a dagger dual. The second equation is shown identically. 
\end{proof}

\begin{example}
$\FHilbP$ satisfies the above conditions, and indeed $\FHilb$ is dagger compact also. 
\end{example}

\subsection{Phases and kernels}

In Chapter~\ref{chap:principles} we met another major feature of $\Hilb$ and $\HilbP$, the existence of dagger kernels. In the presence of these to have phased dagger biproducts it  suffices to have those of a special form, a fact that will be useful to us later.

\begin{lemma} \label{lem:dag-ker-coproj}
Let $\ctb$ be a dagger category with a phased dagger biproduct $A \pbiprod B$. Then $\pcoproj_A$ and $\pcoproj_B$ are dagger kernels with $\pcoproj_A = {\pcoproj_B}^{\bot}$.
\end{lemma}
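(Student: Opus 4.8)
\textbf{Proof plan for Lemma~\ref{lem:dag-ker-coproj}.}
The plan is to show that $\pcoproj_A \colon A \to A \pbiprod B$ is a dagger kernel of the cokernel candidate $\pproj_B = \pcoproj_B^\dagger$, and dually, and then identify the orthocomplement. First I would record the basic equations that come for free: since $A \pbiprod B$ is a phased dagger biproduct we have $\pproj_B = \pcoproj_B^\dagger$ and $\pproj_A = \pcoproj_A^\dagger$, together with $\pproj_A \circ \pcoproj_A = \id{A}$, $\pproj_B \circ \pcoproj_B = \id{B}$, and the mixed relations $\pproj_B \circ \pcoproj_A = 0$, $\pproj_A \circ \pcoproj_B = 0$. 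In particular $\pcoproj_A$ is an isometry, since $\pcoproj_A^\dagger \circ \pcoproj_A = \pproj_A \circ \pcoproj_A = \id{A}$; so it only remains to check the universal property of a kernel.

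Next I would verify that $\pcoproj_A$ is the kernel of $\pproj_B = \pcoproj_B^\dagger \colon A \pbiprod B \to B$. We have $\pproj_B \circ \pcoproj_A = 0$ already. Suppose $g \colon C \to A \pbiprod B$ satisfies $\pproj_B \circ g = 0$. The candidate factorisation is $h := \pproj_A \circ g \colon C \to A$; I must check $\pcoproj_A \circ h = g$ and that $h$ is the unique such morphism. Uniqueness is immediate because $\pcoproj_A$ is (split) monic via $\pproj_A$. For $\pcoproj_A \circ h = g$, the key input is the phased-coproduct structure: the morphisms $\pcoproj_A \circ \pproj_A \circ g$ and $g$ both send... — more precisely, I would use that $(\pproj_A,\pproj_B)$ makes $A \pbiprod B$ a phased \emph{product}, so a morphism into $A \pbiprod B$ is determined up to a phase by its composites with $\pproj_A$ and $\pproj_B$. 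Here $\pproj_A \circ (\pcoproj_A \circ h) = h = \pproj_A \circ g$ and $\pproj_B \circ (\pcoproj_A \circ h) = 0 = \pproj_B \circ g$, so $\pcoproj_A \circ h = g \circ U$ for some phase $U$ of $A \pbiprod B$ — and then I would eliminate $U$ by composing both sides with $\pproj_A$ and $\pproj_B$ and using that $U$ fixes the (co)projections, or more directly by noting $\pcoproj_A$ is split monic so $g = \pcoproj_A \circ (\pproj_A \circ g)$ follows once we know the composite $\pcoproj_A \circ \pproj_A$ acts as the identity on the image of any $g$ killed by $\pproj_B$. The cleanest route: show $(\id{A\pbiprod B} - \text{nothing})$... — actually, lacking subtraction, I would argue that $\pcoproj_A \circ \pproj_A$ together with the data $\pproj_B$ exhibits $g$'s factorisation directly from the product universal property applied to the pair $(h, 0\colon C \to B)$, whose unique-up-to-phase mediating morphism is $\pcoproj_A \circ h$, while $g$ is another such mediating morphism; the phase is then killed because a phase composed with a morphism agreeing with $\pcoproj_A \circ h$ on both projections must be trivial on that morphism. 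This is the one genuinely fiddly point.

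Having established that $\pcoproj_A = \ker(\pcoproj_B^\dagger)$, the dual argument (swapping the roles of $A$ and $B$, and of projections and coprojections, which is legitimate since a phased dagger biproduct is self-dual under $\dagger$) gives $\pcoproj_B = \ker(\pcoproj_A^\dagger)$. Finally, for the orthocomplement claim, recall from the excerpt that $k^\bot := \coker(k)^\dagger = \ker(k^\dagger)^\dagger{}^\dagger = \ker(k^\dagger)$ — wait, more carefully, $\coker(k) = \ker(k^\dagger)^\dagger$ for dagger kernels, so $k^\bot = \coker(k)^\dagger = \ker(k^\dagger)$. Applying this with $k = \pcoproj_B$: $\pcoproj_B^\bot = \ker(\pcoproj_B^\dagger) = \pcoproj_A$, which is exactly $\pcoproj_A = \pcoproj_B^\bot$ as desired (up to the usual unitary-isomorphism-class identification under which $\DKer$ is taken). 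I expect the main obstacle to be the careful phase-elimination in the middle step — making the factorisation $\pcoproj_A \circ h = g$ hold on the nose rather than only up to a phase — which I would handle by exploiting that $\pcoproj_A$ is split monic with retraction $\pproj_A$, so applying $\pproj_A$ to $\pcoproj_A \circ h = g \circ U$ is not quite enough, but applying it to the pair of equations defining the phased-product mediator and using $U \circ \pcoproj_A = \pcoproj_A$, $U \circ \pcoproj_B = \pcoproj_B$ forces $g \circ U = g$ once $g$ itself factors through $\pcoproj_A$ — which we are in the middle of proving, so instead I would set up the argument so that the phased-\emph{coproduct} property of $(\pcoproj_A,\pcoproj_B)$ is what produces the factorisation, avoiding circularity.
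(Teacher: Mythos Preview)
Your overall approach is exactly the paper's: show $\pcoproj_A = \ker(\pproj_B)$ using the phased product structure, then read off the orthocomplement. The orthocomplement computation at the end is fine. But the phase-elimination step, which you yourself flag as ``the one genuinely fiddly point,'' is not actually fiddly --- you have simply placed the phase on the wrong side and then cannot close the argument.

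For the phased \emph{product} $(\pproj_A,\pproj_B)$, the uniqueness clause says: if two morphisms $f_1,f_2 \colon C \to A \pbiprod B$ satisfy $\pproj_A \circ f_1 = \pproj_A \circ f_2$ and $\pproj_B \circ f_1 = \pproj_B \circ f_2$, then $f_1 = U \circ f_2$ for some phase $U$ on $A \pbiprod B$. The phase is an endomorphism of the \emph{target} $A \pbiprod B$, so your expression ``$\pcoproj_A \circ h = g \circ U$'' does not type-check. The correct relation is
\[
g \;=\; U \circ (\pcoproj_A \circ h)
\]
for some phase $U$. Now the elimination is a single line: since $U$ is a phase for the phased biproduct it satisfies $U \circ \pcoproj_A = \pcoproj_A$, hence
\[
g \;=\; U \circ \pcoproj_A \circ h \;=\; \pcoproj_A \circ h,
\]
and the factorisation holds on the nose. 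There is no circularity and no need to invoke split monicity or subtraction --- the whole point of the phased dagger biproduct definition (that the coproduct and product structures share the same phases) is precisely what makes this step immediate. This is exactly how the paper proceeds.
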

\begin{proof}
By definition both coprojections are isometries. Let us show that $\pcoproj_A = \ker(\pproj_B)$ for $\pproj_B = \coproj_B^\dagger$. Suppose that $f \colon C \to A \pbiprod B$ has $\pproj_B \circ f = 0$, and let $g = \coproj_A \circ \pproj_A \circ f$. Then $\pproj_A \circ g = \pproj_A \circ f$ and $\pproj_B \circ f = 0 = \pproj_B \circ f$. Hence for some phase $U$ we have 
\[
f 
= 
U \circ g 
=
U \circ \coproj_A \circ \pproj_A \circ f
=
 \coproj_A \circ \pproj_A \circ f
\]
and so $f$ factors over $\coproj_A$, as required. 
\end{proof}

\begin{proposition} \label{prop:kernelsGiveBiproducts}
Let $\ctb$ be a dagger category with dagger kernels and phased dagger biproducts $A \pbiprod A$ for all objects $A$. Then $\ctb$ has finite phased dagger biproducts iff for every pair of objects $A, B$ there is an object $C$ and orthogonal kernels $k \colon A \to C$ and $l \colon B \to C$. 
\end{proposition}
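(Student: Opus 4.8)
The plan is to prove the two directions of the equivalence, with the forward direction (``phased dagger biproducts $\Rightarrow$ orthogonal kernels'') being essentially immediate from Lemma~\ref{lem:dag-ker-coproj}, and the reverse direction (``orthogonal kernels $\Rightarrow$ finite phased dagger biproducts'') requiring a construction.

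For the forward direction, suppose $\ctb$ has finite phased dagger biproducts. Then for any objects $A, B$ the phased dagger biproduct $A \pbiprod B$ exists, and by Lemma~\ref{lem:ph-d-biprod-help} its coprojections $\pcoproj_A \colon A \to A \pbiprod B$ and $\pcoproj_B \colon B \to A \pbiprod B$ are orthogonal isometries. By Lemma~\ref{lem:dag-ker-coproj} they are in fact dagger kernels (with $\pcoproj_A = \pcoproj_B^\bot$). So taking $C = A \pbiprod B$, $k = \pcoproj_A$, $l = \pcoproj_B$ gives the required orthogonal kernels. (We already have phased dagger biproducts $A \pbiprod A$ as a special case, so that hypothesis is automatically met.)

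For the reverse direction, assume $\ctb$ has dagger kernels, has phased dagger biproducts $A \pbiprod A$ for every $A$, and for every pair $A, B$ has an object $C$ with orthogonal kernels $k \colon A \to C$, $l \colon B \to C$. I would construct a binary phased dagger biproduct of $A$ and $B$ and then invoke the earlier lemma (the statement that a zero object plus binary phased dagger biproducts yields all finite ones) to conclude. The natural candidate is to set $A \pbiprod B := k \vee l$, i.e.\ the join in $\DKer(C)$ of the two kernels, viewed as its own object via its representing kernel $m \colon (k \vee l) \to C$; the coprojections should be the factorisations $\pcoproj_A, \pcoproj_B$ of $k, l$ through $m$ (which exist since $k \le k \vee l$ and $l \le k \vee l$). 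These factorisations are isometries because $k$, $l$ and $m$ all are (and $m$ is monic, so the factorisation composed with $m^\dagger$-data behaves well), and they are orthogonal because $\pcoproj_B^\dagger \circ \pcoproj_A = m^\dagger \circ l \circ \cdots$ — more carefully, $l^\dagger \circ k = 0$ pulls back to $\pcoproj_B^\dagger \circ \pcoproj_A = 0$ using $m^\dagger m = \id{}$. It then remains to check the universal property: given $f \colon A \to D$, $g \colon B \to D$, one builds a mediating map on $A \pbiprod B$ and verifies uniqueness up to a phase with $\pproj_A = \pcoproj_A^\dagger$, $\pproj_B = \pcoproj_B^\dagger$ having the same phases — i.e.\ the conditions of Lemma~\ref{lem:ph-d-biprod-help}. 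For this I expect to use the orthomodular lattice structure on $\DKer(C)$ from~\cite{heunen2010quantum}: since $k, l$ are orthogonal, $l \le k^\bot$, and one should show $k \vee l$ behaves like a genuine (phased) biproduct — concretely that $k \circ k^\dagger + l \circ l^\dagger$ restricted to $k \vee l$ is (up to a phase) the identity, which is where the phased biproducts $A \pbiprod A$ hypothesis and the existence of the relevant addition-like structure come in. An alternative, possibly cleaner route: use the assumed $A \pbiprod A$ and $B \pbiprod B$ together with the orthogonal kernels to directly name coprojections into $C$ and show $C$ (or the image $\img$ of $[k, l]$-type map) carries the phased dagger biproduct structure.

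The main obstacle I anticipate is verifying the universal property of the phased coproduct/product — specifically, producing the mediating morphism $h$ with $h \circ \pcoproj_A = f$, $h \circ \pcoproj_B = g$ and controlling the ambiguity in $h$ to be exactly a phase. Without a ready-made biproduct this requires extracting a ``cotupling'' operation from the lattice-theoretic data (orthogonality of $k$ and $l$, the join $k\vee l$, and the orthomodular law), and showing that two different mediating maps differ by an endomorphism of $k \vee l$ fixing both coprojections. I would handle this by first establishing that the canonical map $A \pbiprod A \to C$ (using $k$ twice, or $k$ and an isometry) identifies $\img$ of the relevant kernels correctly, then transporting the biproduct structure of $A \pbiprod A$, $B \pbiprod B$ along; the orthogonality condition is precisely what guarantees the cross-terms $\pproj_B \circ \pcoproj_A$ vanish, which is the heart of the biproduct equations in Definition~\ref{def:phased_biprod}. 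Once binary phased dagger biproducts are in hand, the zero object exists (e.g.\ as $\Ker(\id{A})$, available since $\ctb$ has dagger kernels), and the earlier lemma upgrades this to all finite phased dagger biproducts.
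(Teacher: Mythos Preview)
Your forward direction is correct and matches the paper via Lemma~\ref{lem:dag-ker-coproj}.

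For the reverse direction there is a genuine gap. You appeal to ``$k \circ k^\dagger + l \circ l^\dagger$ restricted to $k \vee l$'' and an ``addition-like structure'', but phased dagger biproducts do \emph{not} give an addition on morphisms; that requires genuine biproducts. There is no sum $k \circ k^\dagger + l \circ l^\dagger$ available in $\ctb$, and the orthomodular lattice $\DKer(C)$ alone does not supply a cotupling operation --- it records subobjects, not how to build maps out of them. Your alternative route invoking $A \pbiprod A$ and $B \pbiprod B$, or the image of a ``$[k,l]$-type map'', presupposes exactly the cotupling you are trying to construct.

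The paper's key move, which you do not identify, is to apply the $(-) \pbiprod (-)$ hypothesis to $C$ rather than to $A$ or $B$. It picks any endomorphism $f$ of $C \pbiprod C$ with $f \circ \pcoproj_1 = \pcoproj_1 \circ k \circ k^\dagger$ and $f \circ \pcoproj_2 = \pcoproj_2 \circ l \circ l^\dagger$ (such $f$ exists by the phased coproduct property of $C \pbiprod C$) and sets $A \pbiprod B := \Img(f)$, with coprojections the factorisations of $\pcoproj_1 \circ k$ and $\pcoproj_2 \circ l$ through $i = \img(f)$. The mediator for given $g \colon A \to D$, $h \colon B \to D$ is then $j \circ i$, where $j \colon C \pbiprod C \to D$ is obtained from the phased coproduct property of $C \pbiprod C$ with $j \circ \pcoproj_1 = g \circ k^\dagger$ and $j \circ \pcoproj_2 = h \circ l^\dagger$. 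Uniqueness up to phase, and closure of phases under the dagger, are then inherited from $C \pbiprod C$ via $u = i^\dagger \circ U \circ i$. This is the concrete mechanism by which the $A \pbiprod A$ hypothesis enters, and it is what your sketch is missing.
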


\begin{proof}
The condition is necessary by Lemma~\ref{lem:dag-ker-coproj}. Conversely, let $k \colon A \to C$ and $l \colon B \to C$ be orthogonal kernels. Let $f$ be any endomorphism of $C \pbiprod C$ with $f \circ \pcoproj_1 = \pcoproj_1 \circ k \circ k^{\dagger}$ and $f \circ \pcoproj_2 = \pcoproj_2 \circ l \circ l^{\dagger}$ and let $i := \img(f)$. Then since 
\[
\coker(f) \circ \pcoproj_1 \circ k
=
\coker(f) \circ f \circ \pcoproj_1 \circ k 
= 
0
\]
and similarly for $\pcoproj_2$ and $l$, there are unique $\coproj_A, \coproj_B$ making the following commute:
\[
\begin{tikzcd}
A \rar{k} \arrow[drr,swap, dashed, "\pcoproj_A"]& C \rar{\pcoproj_1} & C \pbiprod C 
& C \lar[swap]{\pcoproj_2} & B \lar[swap]{l} \arrow[dll, dashed,"\pcoproj_B"] 
\\
& & \Img(f) \uar[swap]{i} & & 
\end{tikzcd}
\]
We claim that $\pcoproj_A$ and $\pcoproj_B$ make $\Img(f)$ a phased biproduct $A \pbiprod B$. To see the existence property, given $g \colon A \to D$ and $h \colon B \to D$, let $j \colon C \pbiprod C \to D$ with $j \circ \pcoproj_1 = g \circ k^{\dagger}$ and $j \circ \pcoproj_2 = h \circ l^{\dagger}$. Then $k := j \circ i$ has $k \circ \pcoproj_A = g$ and $k \circ \pcoproj_B = h$. 

We now show the uniqueness property. First, it is straightforward to show that $f^{\dagger}$ has the same composites with $\pcoproj_1$ and $\pcoproj_2$ as $f$. Then $f^{\dagger} \circ \pcoproj_1 \circ k^{\bot} = 0$ and so since $i = \img(f)$ we have $i^{\dagger} \circ \pcoproj_1 \circ k^{\bot} = 0$. Then since $k = \img(k)$ we have
\begin{equation*} 
i^{\dagger} \circ \pcoproj_1 
=
i^{\dagger} \circ \pcoproj_1 \circ k \circ k^{\dagger} 
=
i^{\dagger} \circ i \circ \pcoproj_A \circ k^{\dagger} 
=
\pcoproj_A \circ k^{\dagger} 
\end{equation*}

Now suppose that $m, p \colon \Img(f) \to D$ each have the same composites with $\pcoproj_A$ and $\pcoproj_B$. Let $q = m \circ i^{\dagger}$ and $r = p \circ i^{\dagger}$. Then 
\begin{align*}
q \circ \pcoproj_1 
 = 
m \circ i^{\dagger} \circ \pcoproj_1
 =
m \circ \pcoproj_A \circ k^{\dagger}
=
p \circ \pcoproj_A \circ k^{\dagger}
=
r \circ \pcoproj_1 
\end{align*}
and similarly for $\pcoproj_2$. So there is a phase $U$ on $C \pbiprod C$ with $q = r \circ U$. Then $m = p \circ u$ where 
\begin{equation} \label{eq:phonIm}
u = i^{\dagger} \circ U \circ i
\end{equation}
One may verify from the definitions of $\pcoproj_A$ and $\pcoproj_B$ that any such endomorphism $u$ preserves them, establishing the uniqueness property. Moreover, running the above argument with $p = \id{}$ shows that any phase on $\Img(f)$ is of the form~\eqref{eq:phonIm}. In particular, since phases on $C \pbiprod C$ are closed under the dagger, so are those on $\Img(f)$.
\end{proof}

The combination of phased biproducts and kernels will provide us with an axiomatization of $\HilbP$ in the next chapter.

\chapter{Reconstructing Quantum Theories} \label{chap:recons}

Quantum theory itself has long been the main motivation for the study of operational theories of physics. Over the years the far from clear physical interpretation of the Hilbert space formalism has led numerous physicists to an instrumentalist reading of the theory, and also to ask whether it could instead be derived from more operational principles. A major goal has thus (implicitly) been to answer this question: what conditions ensure that a given category is equivalent to $\Quant{}$?

Following the work of Hardy~\cite{Hardy2001QTFrom5}, the first fully operational reconstruction of finite-dimensional quantum theory was provided by by Chiribella, D'Ariano and Perinotti (CDP)~\cite{PhysRevA.84.012311InfoDerivQT,d2017quantum}, and since then many more have been presented (see~\cite{Hardy2011a,wilceRoyal,selby2018reconstructing} and Refs.~in the introduction). However, all of these results rely on the technical assumptions typical to probabilistic theories; that scalars are given by probabilities, and that finite tomography holds, making the vector space generated by each collection of processes finite-dimensional. 

The approach of this thesis makes it natural to ask whether a reconstruction of a purely process-theoretic nature, without these assumptions, might instead be possible. Indeed in 2011 Coecke and Lal stated the need for a reconstruction in the categorical framework, and suggested drawing on the CDP reconstruction~\cite{coecke2011categorical}. Now in Chapter~\ref{chap:principles} we already saw how their principles could be treated in a basic categorical setting, via the (approximate) correspondence:
\[
\begin{tabular}{|c|c|}
\hline
\textbf{CDP Axioms}             & \textbf{Categorical Features} \\ 
\hline
Causality                  & Discarding $\discard{}$                                \\
\hline
Atomicity of composition   & \multirow{2}{*}{Environment structure}                 \\ \cline{1-1}
Purification               &                                                        \\ \hline
Perfect distinguishability &  Kernels + \\ \cline{1-1}
Ideal compressions         &  pure exclusion                                                      \\ \hline
\multicolumn{2}{|c|}{Essential uniqueness}                                  \\ \hline
\end{tabular}
\]
Motivated by these relations, in this chapter we provide such a categorical reconstruction of quantum theory. 

We show that any dagger compact category with discarding $(\catC, \discard{})$ with suitable forms of the above features, along with a mild scalar condition, is equivalent to one of the form $\Quant{S}$ for a suitable ring $S$, generalising the case of $\Quant{}$ where $S = \mathbb{C}$. A further scalar condition makes $S$ resemble either $\mathbb{R}$ or $\mathbb{C}$, so that specialising to probabilistic theories we immediately obtain either $\Quant{}$ or more unusually the quantum theory $\Quant{\mathbb{R}}$ over \emps{real Hilbert spaces}~\cite{stueckelberg1960quantum}. Following this, the results of Chapter~\ref{chap:principles} allow us to deduce several further reconstruction theorems.

Beyond the above principles, our result is in fact based on a very general approach to reconstructions, drawing on our treatment of superpositions in Chapter~\ref{chap:superpositions}, which we describe first.

\paragraph{Setup}
Throughout this chapter, by a \indef{(compact) dagger theory} $(\catC, \otimes, \discard{}, \dagger)$ \index{dagger theory} \index{dagger theory!compact} we will simply mean a dagger symmetric monoidal (resp.~compact) category with discarding and zero morphisms. Note that unlike Chapter~\ref{chap:principles} we no longer require the rule~\eqref{eq:pos-eq}, though we will derive it in our main examples.
By an \indef{embedding} \index{embedding!of dagger theories} or \indef{equivalence} \index{equivalence!of dagger theories} of dagger theories we mean one of dagger symmetric monoidal categories with discarding and which also preserves zero morphisms.

\section{A Recipe for Reconstructions}

Beyond quantum theory itself, our results so far in fact provide us with an approach to reconstructing a whole class of quantum-like theories. In Section~\ref{sec:CPM} we saw how to generalise $\Quant{}$ using Selinger's construction $\CPM(\catA)$ for a dagger compact category $\catA$, motivated by the example
\begin{equation} \label{eq:all-equivalences}
\Quant{} \simeq \CPM(\FHilb) \simeq \CPM(\FHilbP)
\end{equation}
Noting the equivalence $\FHilb \simeq \Mat_{\mathbb{C}}$ suggests a generalisation. 

\begin{definition} \label{def:QuantS}
For each commutative involutive semi-ring $S$ we define a dagger theory \label{cat:QuantS}
\[
\Quant{S} := \CPM(\Mat_S)
\]
Explicitly, objects in this theory are natural numbers $n$ and morphisms $n \to m$ are $S$-valued matrices of the form ${\sum^k_{i=1} M^i_* \otimes M^i}$, where each $M^i$ is an $n \times m$ matrix over $S$, and $(M^i_*)_{j,k} := (M^i_{j,k})^\dagger$.
\end{definition}

\begin{examples}
Standard quantum theory is $\Quant{} \simeq \Quant{\mathbb{C}}$. Another physically interesting example is provided by the quantum theory $\Quant{\mathbb{R}}$\label{cat:QuantR} on real Hilbert spaces~\cite{stueckelberg1960quantum,Hardy2012Holism}; for more on generalised quantum theories see~\cite{fantastic}. Computational complexity in quantum theories over general semi-rings $S$ has been studied by de Beaudrap~\cite{de2014computation}.
\end{examples}

In Section~\ref{sec:CP} we saw that dagger theories $(\catC, \discard{})$ arising from the $\CPM$ construction were precisely those coming with an environment structure $\catC_\prepure$, generalising the purifications provided by the subcategory $\FHilbP$ in $\Quant{}$, with any such theory satisfying $\catC \simeq \CPM(\catC_\prepure)$. 

Now when $\catC_\prepure$ has the features of Chapter~\ref{chap:superpositions} we can say much more. Let us say that a dagger compact category $\catB$ has the \indef{superposition properties} \index{superposition properties} when it has finite phased dagger biproducts satisfying positive cancellation, and every object $A$ has a state $\psi \colon I \to A$ which is a local isometry, satisfying~\eqref{eq:local-isometric}. 

Firstly, we obtain the following generalisation of~\eqref{eq:all-equivalences}.

\begin{lemma} \label{lem:CPMsCoincide}
Let $\catB$ be a dagger compact category with the superposition properties. Then $\plusIdag{\catB}$ is dagger compact and the functor $[-] \colon \plusIdag{\catB} \to \catB$ extends to an equivalence of dagger theories
\[
 \CPM(\plusIdag{\catB})
 \simeq
\CPM(\catB)
\]
\end{lemma}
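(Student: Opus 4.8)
The plan is to build the desired equivalence from the functor $[-] \colon \plusIdag{\catB} \to \catB$ by applying the $\CPM$ construction functorially and then checking it is a dagger monoidal equivalence preserving discarding and zero morphisms. First I would recall that by Corollary~\ref{cor:daggerbiproducts} (and the dagger-compact refinements, i.e.~Proposition~\ref{prop:constr-compact}, which applies since every object of $\catB$ has a state which is a local isometry), $\plusIdag{\catB}$ is a well-defined dagger compact category, and $[-] \colon \plusIdag{\catB} \to \catB$ is a dagger symmetric monoidal functor. Since the $\CPM$ construction is functorial on dagger monoidal functors between dagger compact categories --- it sends $F \colon \catA \to \catA'$ to $\CPM(F)$ acting as $F$ on objects and on the representing morphisms --- we obtain a dagger symmetric monoidal functor $\CPM([-]) \colon \CPM(\plusIdag{\catB}) \to \CPM(\catB)$. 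It preserves discarding, since $[-]$ is (strict) monoidal and the discarding maps in each $\CPM$ category are built uniformly from the cups, caps and identities (Definition~\ref{def:CPM}), which are preserved; zero morphisms are preserved for the same reason. So the only real content is that $\CPM([-])$ is full, faithful, and essentially surjective up to the appropriate notion of unitary isomorphism.

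Essential surjectivity is immediate: $\CPM([-])$ is the identity on objects (both $\CPM$ categories have the same objects as $\plusIdag{\catB}$ and $\catB$ respectively, and $[-]$ is bijective on objects by construction). For fullness and faithfulness I would work directly with the explicit description of morphisms in $\CPM(\catA)$: a morphism $A \to B$ is of the form $(\id{A^*} \otimes g) \circ (\text{cup}) $-type picture, i.e.~one obtained by `doubling and bending' some $g \colon A \to C \otimes B$ in $\catA$ (see Definition~\ref{def:CPM} and~\eqref{eq:CPM-map}), and two such data $g, g'$ give the same $\CPM$-morphism precisely under the identification from~\eqref{eq:CPM-explicit}. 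So faithfulness of $\CPM([-])$ amounts to: if $g, g'$ in $\plusIdag{\catB}$ satisfy $[\text{doubled}(g)] = [\text{doubled}(g')]$ in $\CPM(\catB)$, then $\text{doubled}(g) = \text{doubled}(g')$ in $\CPM(\plusIdag{\catB})$. Since $[-]$ is faithful on $\plusIdag{\catB}$, and the $\CPM$-identification is generated by inserting isometries on the `environment' wire, the point is that the environment wire ambiguity is `the same downstairs and upstairs' --- this is where the superposition properties, in particular the fact (Lemma~\ref{lem:phasesAreScalars}) that every phase upstairs is a global phase $u \cdot \id{}$ with $u \in \mathbb{P}$, together with positive cancellation and the $\CPM$-identification killing exactly such global phases, come in. Fullness is the dual statement: any doubled-and-bent morphism in $\CPM(\catB)$ is $[-]$ of some doubled-and-bent morphism upstairs, which follows because $[-] \colon \plusIdag{\catB} \to \catB$ is essentially surjective on objects and sends phased dagger biproducts to phased dagger biproducts (so one can lift the witnessing $g \colon A \to C \otimes B$).

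The main obstacle I expect is the careful bookkeeping in the faithfulness argument: one must show that the `global phase' ambiguity built into the $\CPM$ identification in $\CPM(\catB)$ lifts to exactly the ambiguity (via phases) in $\plusIdag{\catB}$ that $\CPM$ then again quotients away, so that no genuine information is lost by $\CPM([-])$. Concretely, given $g, g' \colon A \to C \otimes B$ in $\plusIdag{\catB}$ with $[g]$ and $[g']$ related in $\CPM(\catB)$ by some isometry $v$ on the environment $[C]$, one needs to produce a corresponding environment isometry (or phase-adjusted isometry) $V$ upstairs with $[V] = v$ realising the identification of $g$ and $g'$ in $\CPM(\plusIdag{\catB})$; here the fact that in $\plusIdag{\catB}$ every object decomposes as $\obb{C} = C \pbiprod I$ and that $[-]$ reflects isometries up to phase (a consequence of positive-freeness of phases) is exactly what makes the lift possible. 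Once this lift is in hand the remaining verifications --- that $\CPM([-])$ is dagger, strong symmetric monoidal, and preserves the compact structure and discarding --- are routine applications of the graphical calculus together with the fact that $[-]$ already has all these properties, so I would state them briefly and defer to the diagrams rather than grinding through them.
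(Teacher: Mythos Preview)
Your overall structure is correct and your use of Proposition~\ref{prop:constr-compact} for dagger compactness matches the paper. Fullness and essential surjectivity are also fine: $[-]$ is wide, full, and dagger symmetric monoidal, so $\CPM([-])$ inherits these properties immediately.

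Where you diverge from the paper is in the faithfulness argument, and the divergence stems from a misreading of how $\CPM$ is defined here. In Definition~\ref{def:CPM}, a morphism in $\CPM(\catA)$ is literally a morphism of $\catA$ that happens to have the doubled form; it is \emph{not} an equivalence class of witnesses $g \colon A \to C \otimes B$ modulo environment isometries. So two $\CPM$-morphisms are equal precisely when they are equal as morphisms of $\catA$, and there is no ``$\CPM$-identification generated by inserting isometries on the environment wire'' to worry about. Your proposed lifting of environment isometries $v$ to $V$ upstairs is therefore unnecessary, and the obstacle you anticipate simply does not arise.

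The paper's argument is correspondingly much shorter: by compactness, bending wires turns any $\CPM$-morphism $A \to B$ into a positive morphism on $A^* \otimes B$ (and conversely), so faithfulness of $\CPM([-])$ is exactly the statement that for positive $p,q$ in $\plusIdag{\catB}$, $[p]=[q] \implies p=q$. This is precisely Lemma~\ref{lem:pos-cancellation}, and is the only place where positive cancellation (hence the superposition properties beyond what is needed for Proposition~\ref{prop:constr-compact}) is invoked. You mention positive cancellation in passing, but you do not isolate this reduction, which is the whole content of the faithfulness step.
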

\begin{proof}
$\plusIdag{\catB}$ is dagger compact by Proposition~\ref{prop:constr-compact}. Since $[-]$ is a wide full dagger symmetric monoidal functor and is surjective on objects up to unitary it lifts to such a functor $\CPM(\plusIdag{\catB}) \to \CPM(\catB)$. For faithfulness we require that
\[
\scalebox{0.8}{\input{./figures/CPM-mapsmallr.tikz}}
\sim
\scalebox{0.8}{\input{./figures/CPM-mapsmall2r.tikz}}
\implies
\scalebox{0.8}{\input{./figures/CPM-mapsmallr.tikz}}
=
\scalebox{0.8}{\input{./figures/CPM-mapsmall2r.tikz}}
\]
After bending wires this states precisely that for all positive morphisms $p, q \in \plusIdag{\catB}$ we have $p \sim q \implies p = q$. But this follows from positive cancellation by Lemma~\ref{lem:pos-cancellation}.
\end{proof}

This provides a general result for use in reconstructions, telling us when a given theory contains a copy of a quantum-like one. For any involutive monoid $(S,\dagger)$, as in a dagger category we call an element \indef{positive} \index{positive element} when it is of the form $s^\dagger \cdot s$ for some $s \in S$, denoting their collection by $S^\pos$\label{not:poselements}.

\begin{corollary} \label{cor:recipe}
Let $\catC$ be a dagger theory with an environment structure $\catC_{\prepure}$ which has the superposition properties. Then there is an embedding of dagger theories 
\begin{equation*} 
\Quant{S} \hookrightarrow \catC
\end{equation*}
for some commutative involutive semi-ring $S$ with $\catC_{\prepure}(I,I) \simeq S^{\pos}$ as monoids.
\end{corollary}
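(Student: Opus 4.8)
The plan is to apply the machinery assembled earlier: Corollary~\ref{cor:recipe} is exactly the composite of Lemma~\ref{lem:CPMsCoincide}, the characterisation of environment structures via $\CPM$, and the description of $\plusIdag{\catB}$ in terms of matrices. First I would invoke the proposition preceding Section~\ref{sec:CP} which says that, since $\catC$ carries an environment structure $\catC_\prepure$, there is an equivalence of dagger theories $\catC \simeq \CPM(\catC_\prepure)$. Since by hypothesis $\catC_\prepure$ is dagger compact with the superposition properties, Lemma~\ref{lem:CPMsCoincide} then gives an equivalence $\CPM(\catC_\prepure) \simeq \CPM(\plusIdag{\catC_\prepure})$, and Proposition~\ref{prop:constr-compact} tells us $\plusIdag{\catC_\prepure}$ is itself dagger compact. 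Composing, we obtain an equivalence of dagger theories $\catC \simeq \CPM(\plusIdag{\catC_\prepure})$.

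The next step is to identify $\plusIdag{\catC_\prepure}$ with a matrix category $\Mat_S$. Here the idea is that $\plusIdag{\catC_\prepure}$ has finite dagger biproducts by Lemma~\ref{lem:DagBiprodPlusI} (using Lemma~\ref{lem:deleter_cancellation_modified}, applied dually, to see $I$ is a phase generator, and positive-freeness of the phases, which follows from positive cancellation). Moreover $\plusIdag{\catC_\prepure}$ is a dagger compact category with distributive dagger biproducts, so as remarked in Section~\ref{sec:Addition-Examples}/Chapter~\ref{chap:totalisation} there is a full dagger monoidal embedding $\Mat_S \hookrightarrow \plusIdag{\catC_\prepure}$ where $S$ is the semi-ring of scalars $S = \plusIdag{\catC_\prepure}(\obb{I},\obb{I})$, with involution given by the dagger. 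The semi-ring $S$ is commutative since scalars in any monoidal category commute, and $\Mat_S$ surjects (up to unitary iso) onto $\plusIdag{\catC_\prepure}$ precisely when every object of the latter is a finite dagger bipower of $\obb{I}$; this holds because the objects of $\plusIdag{\catC_\prepure}$ are phased dagger biproducts $A \pbiprod I$ and — invoking that $\catC_\prepure$ has the superposition properties, hence (via Proposition~\ref{prop:kernelsGiveBiproducts}-style reasoning, or directly the local-isometry/biproduct hypotheses) enough structure — each such object decomposes appropriately. Applying $\CPM(-)$ to the resulting equivalence $\Mat_S \simeq \plusIdag{\catC_\prepure}$ yields $\Quant{S} = \CPM(\Mat_S) \simeq \CPM(\plusIdag{\catC_\prepure}) \simeq \catC$, giving in particular the embedding $\Quant{S} \hookrightarrow \catC$.

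It remains to compute the scalars and pin down the monoid isomorphism $\catC_\prepure(I,I) \simeq S^\pos$. By Lemma~\ref{lem:phasesAreScalars} (dualised in Corollary~\ref{cor:daggerbiproducts}) the functor $[-] \colon \plusIdag{\catC_\prepure} \to \catC_\prepure$ identifies $\catC_\prepure$ with the quotient $\plusIdag{\catC_\prepure}_\quotP$ by the global phases, and on the scalar monoid this realises $\catC_\prepure(I,I)$ as $S$ modulo the unitary scalars. On the $\CPM$ side, the scalars of $\CPM(\Mat_S)$ are the positive elements $S^\pos$, and the doubling functor $\Dbl{(-)}$ sends the scalar monoid of $\catC_\prepure$ onto that of $\catC$; chasing these identifications through the equivalences shows $\catC_\prepure(I,I) \cong S^\pos$ as monoids, since a scalar $s$ and its phase-translates all double to $s^\dagger s$. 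The main obstacle I anticipate is the middle step — verifying carefully that $\plusIdag{\catC_\prepure} \simeq \Mat_S$, i.e. that every object really is a dagger bipower of the unit and that the embedding $\Mat_S \hookrightarrow \plusIdag{\catC_\prepure}$ is essentially surjective; this is where the precise form of the superposition properties (local isometric states, positive cancellation, finite phased dagger biproducts) has to be used, rather than just the abstract $\CPM$-and-$\mathsf{GP}$ formalism, and one must be careful that the involution on $S$ is exactly the one making $\Quant{S}$ as in Definition~\ref{def:QuantS}.
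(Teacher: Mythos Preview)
Your overall architecture is right --- the chain $\Quant{S} = \CPM(\Mat_S) \to \CPM(\plusIdag{\catC_\prepure}) \simeq \CPM(\catC_\prepure) \simeq \catC$ is exactly what the paper uses --- but you have inflated the middle step into something false. You attempt to show $\plusIdag{\catC_\prepure} \simeq \Mat_S$, i.e.\ that every object of $\plusIdag{\catC_\prepure}$ is a finite dagger bipower of $\obb{I}$. There is no reason this should hold: the superposition properties give you finite phased dagger biproducts, positive cancellation, and local-isometric states, but none of these force an arbitrary object $A$ of $\catC_\prepure$ to decompose as $I \pbiprod \cdots \pbiprod I$. (Indeed the paper only obtains this later, in Lemma~\ref{lem:Boundedequivalence}, under the additional hypothesis that the positive scalars are bounded.) You yourself flag this as ``the main obstacle'', and it is a genuine gap, not a technicality.

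The fix is simply that the statement only asks for an \emph{embedding}. Once $\plusIdag{\catC_\prepure}$ is dagger compact with finite dagger biproducts (which you correctly deduce), its biproducts are automatically distributive, its scalars $S$ form a commutative involutive semi-ring, and one always has a full dagger monoidal embedding $\Mat_S \hookrightarrow \plusIdag{\catC_\prepure}$ onto the objects $n \cdot \obb{I}$. Applying $\CPM$ then yields the desired embedding $\Quant{S} \hookrightarrow \catC$; no essential surjectivity is needed.

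For the scalar identification $\catC_\prepure(I,I) \simeq S^\pos$, your sketch is in the right spirit but the paper's argument is cleaner and does not go through the phase quotient. Write $R = \catC_\prepure(I,I)$. The CP axiom applied to scalars (where $\discard{I} = \id{I}$) says $r^\dagger r = s^\dagger s \iff r = s$, so $r \mapsto r^\dagger r$ is a monoid isomorphism $R \to R^\pos$. Separately, positive cancellation (Lemma~\ref{lem:pos-cancellation}) says $[-] \colon S^\pos \to R^\pos$ is injective, and it is surjective since $[-]$ is full on scalars; hence $S^\pos \simeq R^\pos \simeq R$.
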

\begin{proof}

Since $\plusIdag{\catC_{\prepure}}$ is dagger compact, its biproducts are automatically distributive and so its scalars $S$ form a commutative involutive semi-ring, giving an embedding $\Mat_{S} \hookrightarrow \plusIdag{\catC_{\prepure}}$. Hence we obtain another embedding 
\[
\begin{tikzcd}
\CPM(\Mat_{S})
\rar[hook,swap]{}
&
\CPM(\plusIdag{\catC_{\prepure}}) 
\rar{\sim}[swap]{\ref{lem:CPMsCoincide}}
&
\CPM(\catC_{\prepure})
\rar{\sim}
&
\catC
\end{tikzcd}
\]
Finally, let $R=\catC_{\prepure}(I,I)$. By the CP axiom the map $R \to R^{\pos}$ sending $r \mapsto r^\dagger \circ r$ is a monoid isomorphism, and by Lemma~\ref{lem:pos-cancellation} so is the map $[-] \colon S^{\pos} \to R^{\pos}$.
\end{proof}

\begin{example}
Let $S$ be a commutative involutive semi-ring in which every non-zero element is invertible and for all positive elements $p$ we have $p^2 = 1 \implies p =1$. 
For example we may take $\mathbb{C}, \mathbb{R}, \Rpos$ or $\mathbb{B} = \{0,1\}$.  
Then the environment structure $\Dbl{\Mat_S}$ on $\Quant{S}$ is easily seen to have the superposition properties.
\end{example}

\section{The Operational Principles} \label{sec:principles}

To obtain a full reconstruction it remains to find further conditions making the embedding $\Quant{S} \hookrightarrow \catC$ an equivalence. As well as this it would be desirable to use principles of a more operational nature than the superposition properties. In fact we already explored several suitable such principles in Chapter~\ref{chap:principles}.

Firstly, let us call a pair of effects $d, e$ of an object $A$ \indef{total} when they satisfy
\[
\left(
\scalebox{0.8}{\input{./figures/total.tikz}} 
\ \ \text{ and } \ \ 
\scalebox{0.8}{\input{./figures/total2.tikz}}
\right)
\implies
 \scalebox{0.8}{\input{./figures/total3.tikz}}
\]
for all $f, g \colon B \to A$. For example in any operational theory in the sense of Chapter~\ref{chap:OpCategories} this will be the case whenever $d$ and $e$ form the outcomes of some binary test that we may perform on the system $A$. 

Also, recall that morphisms $f, g$ are said to be orthogonal when $f^\dagger \circ g = 0$. Let us call a pair of states $\ket{0}, \ket{1}$ \indef{orthonormal} \index{orthonormal states} when they are orthogonal isometries. 

\subsection{The principles}

We will consider dagger theories with the following properties, many of which we have met already, which we spell out in more detail after the definition.

\begin{definition}[\textbf{Operational Principles}] \label{def:op-principles-concise}
A dagger theory $(\catC, \discard{})$ satisfies the \deff{operational principles} \index{operational principles} when it is non-trivial and satisfies the following.
\begin{enumerate}[label=\arabic*., ref=\arabic*, itemsep=1pt]
\item \label{princ:strongpurif}
\deff{Strong Purification}: \index{purification!strong} The collection $\catC_\pure$ of $\otimes$-pure morphisms form an environment structure on $(\catC, \discard{})$. Moreover every non-zero object has a causal $\otimes$-pure state, and purifications are essentially unique.

\item \label{princ:pureexcl}
\deff{Pure exclusion} is satisfied. 

\item \label{princ:ker}
\deff{Kernels}: The category $\catC$ has dagger kernels, and these are \deff{causally complemented} \index{causally complemented} meaning that for all dagger kernels $k \colon K \to A$ the following pair of effects is total:
\begin{equation} \label{eq:causal-compl}
\scalebox{0.8}{\input{./figures/causal-comp-pair1.tikz}}
\quad
\text{ and }
\ 
\quad
\scalebox{0.8}{\input{./figures/causal-comp-pair2.tikz}}
\end{equation}
\item \label{princ:cond}
\deff{Conditioning}: \index{conditioning}
For every pair of orthonormal states $\ket{0}, \ket{1}$ of any object $A$ and states $\rho, \sigma$ of any $B$ there is a morphism $f \colon A \to B$ with
\[
\scalebox{0.8}{\input{./figures/conditioning1.tikz}}
\quad
\text{ and }
\qquad
\scalebox{0.8}{\input{./figures/conditioning2.tikz}}
\]
\end{enumerate}
\end{definition}

Let us go through these principles in detail.
From now on we will call any $\otimes$-pure morphism simply \emps{pure}. We already met the various aspects of strong purification in Chapter~\ref{chap:principles}. Recall that it means that every non-zero morphism $f$ in $\catC$ has a purification:
\[
\scalebox{0.8}{\input{./figures/purif-simple1a.tikz}}
\quad
\text{ where  } \ \ 
\quad
\scalebox{0.8}{\input{./figures/purif-simple2ia.tikz}}
\text{ for some causal $\rho$}
\]
and also that pure morphisms are closed under $\circ, \otimes$, $\dagger$, contain all identity morphisms, and satisfy the CP axiom~\eqref{eq:CP axiom} and essential uniqueness, which here are together equivalent to the rule 
\[
\scalebox{0.8}{\input{./figures/EU-CP-combine.tikz}}
\]
for some unitary $U$ on $B$, for all pure $f, g$. Note here that all unitaries are pure. 
Recall that pure exclusion states that any object $A$ with a pure state $\psi$ for which for all effects $e$ we have
\[
 \scalebox{0.8}{\begin{tikzpicture}
	\begin{pgfonlayer}{nodelayer}
		\node [style=dagpointadj] (0) at (0, 1) {$e$};
		\node [style=dagpoint] (1) at (0, -1) {$\psi$};
		\node [style=none] (2) at (1.5, -0) {$=$};
	\end{pgfonlayer}
	\begin{pgfonlayer}{edgelayer}
		\draw [style=none] (0) to (1);
	\end{pgfonlayer}
\end{tikzpicture}} \ 0 
 \implies
  \scalebox{0.8}{\begin{tikzpicture}
	\begin{pgfonlayer}{nodelayer}
		\node [style=dagpointadj] (0) at (5.25, 1) {$e$};
		\node [style=none] (1) at (6.75, -0) {$=$};
		\node [style=none] (2) at (5.25, -0.5) {};
	\end{pgfonlayer}
	\begin{pgfonlayer}{edgelayer}
		\draw [style=none] (2.center) to (0);
	\end{pgfonlayer}
\end{tikzpicture}} \ 0 
\]
is in fact trivial, meaning that $\discard{A}$ is an isomorphism (or equivalently a unitary) or $A$ is a zero object. Non-triviality of $\catC$ means there is some object $A$ for which neither is the case. As before the existence of dagger kernels means that every morphism $f$ comes with an isometry $\ker(f)$ satisfying 
\[
\scalebox{0.8}{\input{./figures/kernel-def1.tikz}} \ 0 
\iff
(\exists ! h) \ 
\scalebox{0.8}{\input{./figures/kernel-def2.tikz}}
\]

Let us recap some immediate consequences of these principles from Chapter~\ref{chap:principles}.

\begin{proposition} \label{prop:OpTheoryProperties}
Let $\catC$ be a non-trivial compact dagger theory with dagger kernels satisfying principles~\ref{princ:strongpurif} and~\ref{princ:pureexcl}. Then the following hold.
\begin{enumerate}[label=\arabic*., ref=\arabic*, itemsep=1pt]
\item \label{enumi:zero-monn}
$\discard{} \circ f = 0 \implies f = 0$ for all morphisms $f$.
\item \label{enumi:zero-cancell}
Zero-cancellativity: $f \otimes g = 0 \implies f = 0$ or $g = 0$, for all morphisms $f, g$.
\item \label{enumi:dag-ker-pure}
Every dagger kernel in $\catC$ is pure and causal and is a kernel in $\catC_{\pure}$.
\item \label{enumi:normalisation}
Normalisation: every non-zero state is a scalar multiple of a causal one.
\item \label{enumi:scalars-are-pure}
All scalars are pure and satisfy $r=r^\dagger$.
\item \label{enumi:caus-pure-state-kernel}
Every causal pure state is a kernel. 
\item \label{enumi:pairOfPure}
Every non-trivial non-zero object has an orthonormal pair of pure states.
\item \label{enumi:specialObjectC}
There is an object with a pair of causal pure states $\ket{0}, \ket{1}$ with $\ket{0} = \ket{1}^\bot$. 
\end{enumerate}
\end{proposition}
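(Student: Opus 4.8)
\textbf{Proof plan for Proposition~\ref{prop:OpTheoryProperties}.}
The plan is to establish the eight items roughly in the order listed, since the later ones lean on the earlier ones, and in most cases to simply invoke the corresponding general result from Chapter~\ref{chap:principles} after checking its hypotheses. First I would observe that item~\ref{enumi:zero-monn} is exactly the property~\eqref{eq:midproof}-style consequence of strong purification: since $\catC_\pure$ is an environment structure, $\catC \simeq \CPM(\catC_\pure)$, and for this the rule~\eqref{eq:pos-eq} holds in $\CPM(\catA)$ precisely because $\catA$ is dagger compact with dagger kernels (Example~\ref{ex:CPMD-kernels}, which needs Lemma~\ref{lem:dagzero}~\ref{enum:dag-law}); alternatively one argues directly that if $\discard{}\circ f = 0$ then bending wires and using Lemma~\ref{lem:dagzero} gives $f = 0$. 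Item~\ref{enumi:zero-cancell} then follows from Lemma~\ref{lem:dagzero}~\ref{enum:zero-mult}: every non-zero object has a causal pure state by principle~\ref{princ:strongpurif}, and such a state is an isometry once normalised, so zero-cancellativity holds.

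Next, item~\ref{enumi:dag-ker-pure}: by Lemma~\ref{lem:helpful-tensPureKer}~\ref{enum:ker-whole} any kernel satisfying~\eqref{eq:kernels} is $\otimes$-pure, and in a compact category kernels are automatically $\otimes$-compatible by Proposition~\ref{prop:kernels-in-compact}~\ref{enum:kernels-new-rule}; causality of dagger kernels is part of what we must check, but it follows because a dagger kernel is an isometry hence (being pure and split monic with the right adjoint behaviour) its discarding effect behaves correctly --- more carefully, I would deduce causality from principle~\ref{princ:ker}'s causal-complementation clause together with the fact that $\img(k)=k$, or simply note that in $\CPM(\catC_\pure)$ the kernel is $\Dbl{\ker(f)}$ for a dagger kernel in $\catC_\pure$, which is causal since its adjoint composed with discarding is discarding. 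That a dagger kernel in $\catC$ restricts to a kernel in $\catC_\pure$ is then immediate since the kernel is pure and the defining universal property only needs to be tested against pure morphisms (any $h$ with $f\circ h = 0$ can be dilated to a pure one). Item~\ref{enumi:normalisation} is Proposition~\ref{prop:kerpure}~\ref{enum:kerpure2}: $\catC$ has purification and pure exclusion, hence normalisation --- I only need that kernels are $\otimes$-compatible, which holds by compactness. Item~\ref{enumi:scalars-are-pure}: by normalisation every scalar is a scalar multiple of a causal scalar, i.e.\ of $\id{I}$, so every scalar is $r = r\cdot\id{I}$ with $\id{I}$ pure and causal, hence pure; and $r = r^\dagger$ follows because $r^\dagger\circ r = r\circ r^\dagger$ are both scalars, the CP axiom forces $r^\dagger\circ r$ to determine $r$ up to a unitary scalar, and a unitary scalar in the presence of normalisation is $\id{I}$ --- alternatively, since $\catC \simeq \CPM(\catC_\pure)$, a scalar of $\CPM$ has the form $z^\dagger z$ for $z\in\catC_\pure(I,I)$, which is manifestly self-adjoint.

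For item~\ref{enumi:caus-pure-state-kernel}: by pure exclusion (Lemma~\ref{lem:PureExclusion}) every pure state is kernel-pure, and combined with Lemma~\ref{lem:Pure-Excl-Is-Easy} --- whose hypotheses (compact, $\otimes$-compatible dagger kernels, zero-cancellative scalars from item~\ref{enumi:zero-cancell}, purification satisfying Proposition~\ref{prop:pure-lem}, the latter holding because $\catC_\pure$ is $\otimes$-complete by Lemma~\ref{lem:caustensclosedexamples}) --- we get that every causal pure state is a kernel. Item~\ref{enumi:pairOfPure}: for a non-trivial non-zero object $A$, pure exclusion gives a non-zero effect $e$ with $e\circ\psi = 0$ for some causal pure state $\psi$; by normalisation and purity of $e^\dagger$ (it is pure since $\catC_\pure$ is closed under $\dagger$) we can normalise $e^\dagger$ to a causal pure state $\phi$ orthogonal to $\psi$, and both are kernels hence isometries by item~\ref{enumi:caus-pure-state-kernel}, giving an orthonormal pair. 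Finally item~\ref{enumi:specialObjectC}: take the object $A$ witnessing non-triviality, which by item~\ref{enumi:pairOfPure} has orthonormal pure states $\ket{0},\ket{1}$; by item~\ref{enumi:caus-pure-state-kernel} each is a kernel, and we need $\ket{0} = \ket{1}^\bot$, i.e.\ $\ket{0}$ is a dagger kernel of $\bra{1}$ --- this may fail on $A$ itself if $A$ is "larger than a qubit", so instead I would form the phased dagger biproduct $I \pbiprod I$ (available since $\catC_\pure$ has the superposition properties, or directly using Proposition~\ref{prop:kernelsGiveBiproducts} with the orthogonal kernels $\ket{0},\ket{1}$ on $A$ restricted to their joint image $\Img(\ket{0}) \pbiprod \Img(\ket{1}) \simeq I \pbiprod I$), and take $\ket{0},\ket{1}$ to be its two coprojections, which by Lemma~\ref{lem:dag-ker-coproj} satisfy $\pcoproj_1 = \pcoproj_2^\bot$. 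The main obstacle I anticipate is item~\ref{enumi:specialObjectC}: pinning down the object with \emph{complementary} (not merely orthogonal) causal pure states requires building the two-dimensional system as a phased dagger biproduct and checking it is non-zero --- non-triviality of $\catC$ is needed here to ensure $I\pbiprod I \not\simeq I$, which in turn uses that a phase generator argument or a counting/dimension argument rules out collapse; this is where the bookkeeping between "orthogonal", "complementary", and "the ambient object" has to be done carefully, whereas items~\ref{enumi:zero-monn}--\ref{enumi:pairOfPure} are essentially citations of Chapter~\ref{chap:principles} with routine hypothesis-checking.
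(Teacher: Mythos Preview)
Your citations for items~\ref{enumi:zero-cancell}, \ref{enumi:normalisation}, \ref{enumi:scalars-are-pure}, and~\ref{enumi:caus-pure-state-kernel} are essentially the paper's own, but there are genuine gaps elsewhere.

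\textbf{Item~\ref{enumi:zero-monn}.} Neither of your suggestions quite works: the $\CPM$ route via Example~\ref{ex:CPMD-kernels} needs $\catC_\pure$ to already have dagger kernels, which is what item~\ref{enumi:dag-ker-pure} is meant to establish; and Lemma~\ref{lem:dagzero} gives $f^\dagger\circ f = 0 \implies f = 0$, not $\discard{}\circ f = 0 \implies f = 0$. The paper instead purifies $f$ to a pure $g$, observes $\discard{}\circ g = 0 = \discard{}\circ 0$ with $0$ pure, and applies essential uniqueness to get $g = U\circ 0 = 0$.

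\textbf{Item~\ref{enumi:dag-ker-pure}.} You invoke principle~\ref{princ:ker}'s causal-complementation clause, but the proposition assumes only principles~\ref{princ:strongpurif} and~\ref{princ:pureexcl}. The paper instead uses the CP axiom: once $k$ is known to be pure and $k^\dagger\circ k = \id{}$, the CP axiom gives $\discard{}\circ k = \discard{}\circ\id{} = \discard{}$.

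\textbf{Item~\ref{enumi:pairOfPure}.} Pure exclusion only gives a non-zero effect $e$ with $e\circ\psi = 0$; nothing says $e$ is pure, so normalising $e^\dagger$ need not produce a pure state. The paper's route is cleaner: pure exclusion makes $\Coker(\psi)$ a non-zero object, which by strong purification carries a causal pure state $\eta$, and then $\phi := \coker(\psi)^\dagger\circ\eta$ is automatically a causal pure isometry orthogonal to $\psi$.

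\textbf{Item~\ref{enumi:specialObjectC}.} This is the main gap. Both of your proposed constructions are circular: the superposition properties for $\catC_\pure$ are established in Corollary~\ref{cor:daggerphasedbiproducts}, which depends on Proposition~\ref{prop:PhBiprodII}, which in turn \emph{uses} item~\ref{enumi:specialObjectC}; and Proposition~\ref{prop:kernelsGiveBiproducts} already assumes binary phased dagger biproducts $A\pbiprod A$ exist. The paper instead uses only the orthomodular lattice $\DKer(A)$ from~\cite{heunen2010quantum}: given orthonormal pure states $\psi_0,\psi_1$ on $A$ (kernels by item~\ref{enumi:caus-pure-state-kernel}), take the join $i := \img(\psi_0)\vee\img(\psi_1)\colon B\to A$ in $\DKer(A)$, factor $\psi_j = i\circ\ket{j}$, and then the orthomodular law forces $\ket{0} = \ket{1}^\bot$ inside $B$. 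No phased biproducts are needed.
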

\begin{proof}

\ref{enumi:zero-monn}. Let $g$ be a purification of $f$. Then $\discard{} \circ g = 0 = \discard{} \circ 0$. Then since $0$ is pure by definition we have $g = U \circ 0$ for some unitary $U$. Then $g = 0$ and so $f = 0$ also. 

\ref{enumi:zero-cancell}. By strong purification every object has an isometric pure state. Then use Lemma~\ref{lem:dagzero}~\ref{enum:zero-mult}.


\ref{enumi:dag-ker-pure} holds by both statements of Lemma~\ref{lem:whatswhole}~\ref{enum:ker-whole}, with causality of kernels following from the CP axiom.
\ref{enumi:normalisation} and the first part of~\ref{enumi:scalars-are-pure} are equivalent statements and hold by Proposition~\ref{prop:kerpure}~\ref{enum:kerpure2}, and every (pure) scalar has $r = r^\dagger$ by the CP axiom. \ref{enumi:caus-pure-state-kernel} holds by Lemma~\ref{lem:PureExclusion}.

\ref{enumi:pairOfPure}.
Let $A$ be any non-trivial non-zero object, and $\psi$ any causal pure state of $A$. By pure exclusion, $\Coker(\psi)$ is non-zero and so has a causal pure state $\eta$. Then $\phi = \coker(\psi)^{\dagger} \circ \eta$ is also a causal pure state of $A$ and $\psi$ and $\phi$ are orthonormal.

\ref{enumi:specialObjectC}. Let $A$ have a pair of orthonormal pure states $\psi_0, \psi_1$, as in the previous part. Since the dagger kernels on $A$ form an orthomodular lattice~\cite{heunen2010quantum}, we may define 
\[
B = \Img(\psi_0) \vee \Img(\psi_1)
\] 
and $i := \img(\psi_0) \vee \img(\psi_1) \colon B \to A$. Then $\psi_0 = i \circ \ket{0}$ and $\psi_1 = i \circ \ket{1}$ for some unique pure isometric states $\ket{0}, \ket{1}$, which are kernels by pure exclusion. Furthermore these are orthogonal and by orthomodularity we have $\ket{0} = \ket{1}^{\bot}$. 
\end{proof}

Next, let us consider the two new conditions in the operational principles. 
\begin{itemize}
\item 
Firstly, causal completeness of dagger kernels is new here. It is natural if we imagine that for each kernel $k$ one may perform a test on the system $A$ with two outcomes given by the effects~\eqref{eq:causal-compl}, which intuitively aims to determine whether a state belongs to the image of $k$ or of its complement $k^\bot$. 
 \item 
 Conditioning is also a new property, but is an extremely mild one. We may think of it as asserting the ability to form a conditioned process $f$ which prepares either state $\rho$ or $\sigma$ depending upon receiving input $\ket{0}$ or $\ket{1}$, much like the controlled tests from Chapter~\ref{chap:OpCategories}. 
\end{itemize}

 In fact, in this setting, conditioning is equivalent simply to the ability to coarse-grain processes in our earlier sense. Recall that we say that $\catC$ has addition when it has an operation $+$ making it dagger monoidally enriched in commutative monoids.

\begin{proposition} \label{prop:coarse-graining}
In the presence of the other operational principles, $\catC$ satisfies conditioning iff it has a unique addition operation. Moreover, in a theory with addition, causal complementation holds iff all dagger kernels $k \colon K \to A$ are causal and satisfy 
\begin{equation} \label{eq:causally-complemented}
\scalebox{0.8}{\input{./figures/causal-comp.tikz}}
\end{equation}
\end{proposition}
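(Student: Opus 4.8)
The statement has two parts and I would handle them separately. For the first part (conditioning $\iff$ unique addition), the forward direction constructs an addition from conditioning, and the reverse extracts conditioning from addition. The second part characterises causal complementation in the presence of an addition.

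\textbf{First direction: conditioning gives addition.} Fix an object $C$ with causal pure states $\ket{0}, \ket{1}$ satisfying $\ket{0} = \ket{1}^\bot$, which exists by Proposition~\ref{prop:OpTheoryProperties}\ref{enumi:specialObjectC}. For morphisms $f, g \colon A \to B$, I would first form $\widehat{f}, \widehat{g} \colon A \to B \otimes C'$ purifications, then use conditioning applied to the pure isometric states $\ket{0},\ket{1}$ of the relevant object, together with the $\CPM$-style structure, to build a single morphism whose marginal recovers both; concretely, the intended definition is $f + g := \tinycap$-contraction of a conditioned pure dilation, i.e.\ one dilates $f$ and $g$ to pure maps, uses conditioning to combine them into a pure map out of $A$ controlled by $C$, and then plugs in the ``uniform'' effect $\bra{0} \ovee \bra{1}$ (using that $\ket{0}=\ket{1}^\bot$ and causal completeness of the relevant kernel pair to see this effect is $\discard{C}$-like). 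One then checks commutativity, associativity, unit law ($f + 0 = f$), bilinearity of $\circ$ and $\otimes$, and compatibility with $\dagger$ — most of these follow formally from functoriality of $\CPM$ (Proposition~\ref{prop:addition-in-CPM}) once the construction is pinned down, since $\catC \simeq \CPM(\catC_\pure)$ by strong purification and the environment structure, and $\catC_\pure$ has dagger biproducts built from the kernels $\ket{0},\ket{1}$ via Proposition~\ref{prop:kernelsGiveBiproducts}. Uniqueness of the addition: any dagger-monoidal CMon-enrichment is determined by its values on scalars and on the structure maps of biproducts, and in a $\CPM$ category arising from a dagger compact category with biproducts the addition is forced to be the one induced by those biproducts (Proposition~\ref{prop:addition-in-CPM}); so any two candidate additions agree.

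\textbf{Reverse direction: addition gives conditioning.} Given $+$, the object $C$ with $\ket{0} = \ket{1}^\bot$ yields (via Proposition~\ref{prop:kernelsGiveBiproducts} and causal completeness) a phased dagger biproduct $I \pbiprod I$, so in $\catC_\pure$ we get an honest dagger biproduct $\mathbb{C}$-like qubit, and $\ket{0},\ket{1}$ are its coprojections. For states $\rho, \sigma$ of $B$, set $f := \coproj_{\ket 0}^\dagger\text{-component}$... more precisely $f := (\rho \circ \bra{0}) + (\sigma \circ \bra{1}) \colon A \to B$ where $\bra{i} = \ket{i}^\dagger$; then $f \circ \ket{0} = \rho \circ \bra{0} \circ \ket{0} + \sigma \circ \bra{1}\circ\ket{0} = \rho + 0 = \rho$ using orthonormality and $f + 0 = f$, and similarly $f \circ \ket{1} = \sigma$. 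This is a short direct computation.

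\textbf{Second part: causal complementation in a theory with addition.} Here the equation \eqref{eq:causally-complemented}, namely $\discard{K}\circ k^\dagger + \discard{K^\bot}\circ (k^\bot)^\dagger = \discard{A}$, is just the unfolding of ``the pair \eqref{eq:causal-compl} is total'' once one observes that a pair of effects $d, e$ is total precisely when they are jointly monic \emph{and} $d + e$ is the discarding effect: totality as defined forces $d \circ f = d \circ g$ and $e \circ f = e \circ g$ to imply $f = g$, and in a compact theory with addition and kernels this jointly-monic condition for the kernel/cokernel pair is automatic (the images of $k$ and $k^\bot$ span $A$ in $\DKer(A)$, cf.\ orthomodularity in~\cite{heunen2010quantum} and Lemma~\ref{lem:dagzero}), while the content that survives is exactly that $d + e = \discard{A}$. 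So I would: (i) show in any such $\catC$ that $k \sqcup k^\bot$ being jointly monic always holds, reducing totality of \eqref{eq:causal-compl} to \eqref{eq:causally-complemented}; (ii) show \eqref{eq:causally-complemented} forces $\discard{K}\circ k^\dagger$ (which is $\discard{A}\circ (k\circ k^\dagger)$) plus its complement to be $\discard{A}$, and in particular that $\discard{K}$ composed with $k^\dagger$ is subcausal, then use $\Img(k) \vee \Img(k^\bot) = \id{A}$ to deduce $k$ is causal (its discard equals $\discard{A}$ minus the complement's, but one needs each separately causal — here the totality/jointly-monic direction gives that $k$, being a kernel of a causal-complemented pair, must itself be causal, essentially because $\discard{A}\circ \img(k)$ restricted along the biproduct decomposition is forced).

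\textbf{Main obstacle.} The genuinely delicate point is the \emph{uniqueness} of the addition and, dually, showing that conditioning pins down enough structure to reconstruct the biproduct-induced addition rather than merely \emph{some} partial combination — i.e.\ verifying that the morphism produced by conditioning is unique and bilinear, not just that it exists. I expect to lean heavily on essential uniqueness of purifications and the CP axiom to get uniqueness of the conditioned map, and on the identification $\catC \simeq \CPM(\catC_\pure)$ with $\catC_\pure$ carrying dagger biproducts (Propositions~\ref{prop:addition-in-CPM} and~\ref{prop:kernelsGiveBiproducts}) to transport the standard fact that biproduct-enrichment is unique. The second part is comparatively routine once one has set up the dictionary between ``total pair of effects'', ``jointly monic + sums to discard'', and orthomodularity of $\DKer(A)$.
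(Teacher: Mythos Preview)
Your reverse direction (addition $\Rightarrow$ conditioning) via $f = \rho \circ \bra{0} + \sigma \circ \bra{1}$ is exactly the paper's argument.

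The forward direction and the uniqueness clause, however, take a route that does not work as stated. You want to invoke $\catC \simeq \CPM(\catC_\pure)$ together with dagger biproducts in $\catC_\pure$ and Proposition~\ref{prop:addition-in-CPM}. But $\catC_\pure$ only acquires \emph{phased} dagger biproducts (via Proposition~\ref{prop:PhBiprodII} and Corollary~\ref{cor:daggerphasedbiproducts}), not genuine ones, so Proposition~\ref{prop:addition-in-CPM} does not apply to it. Passing to $\plusIdag{\catC_\pure}$ to obtain real biproducts and then using Theorem~\ref{thm:mainPropToQuantumCat} would be circular: that theorem's proof that the equivalence preserves addition explicitly cites the uniqueness clause of the present proposition.

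The paper instead argues elementarily. Given $f,g \colon A \to B$, conditioning (bent via compactness) yields some $h \colon A \to B \otimes C$ with $(\id{B} \otimes \bra{i}) \circ h$ equal to $f,g$ respectively; set $f+g := (\id{B} \otimes \discard{C}) \circ h$. Causal complementation for the kernel $\ket{0}$ is exactly joint monicity of $\bra{0},\bra{1}$, so $h$ and hence $f+g$ is well-defined; the commutative-monoid and bilinearity laws are then checked directly. Uniqueness is equally elementary: given \emph{any} addition, the second half of the proposition (which the paper establishes before uniqueness) gives~\eqref{eq:causally-complemented} for $k=\ket{0}$, i.e.\ $\bra{0} + \bra{1} = \discard{C}$; then any $h$ as above has marginal $(\id{} \otimes \discard{C}) \circ h = (\id{} \otimes \bra{0}) \circ h + (\id{} \otimes \bra{1}) \circ h = f + g$, so the two additions agree.

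For the second part, your unpacking of ``total'' is off: by the paper's definition a pair of effects is total precisely when it is jointly monic --- there is no extra ``sums to $\discard{}$'' clause, and joint monicity of the pair~\eqref{eq:causal-compl} is not automatic (it is the content of the principle). The paper's argument for the nontrivial direction is to set $p = k k^\dagger + k^\bot (k^\bot)^\dagger$, compute that $(\discard{} \circ k^\dagger) \circ p = \discard{} \circ k^\dagger$ and likewise for $k^\bot$, conclude by causal complementation that $p$ agrees with $\id{A}$ (in particular is causal), and then compose with $\discard{A}$ using causality of $k, k^\bot$ to obtain~\eqref{eq:causally-complemented}.
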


\begin{proof}
Suppose that $\catC$ has addition. Then conditioning follows automatically by setting 
\[
\scalebox{0.8}{\input{./figures/condition-dagger.tikz}}
\]
Conversely, suppose that $\catC$ satisfies the operational principles. By Proposition~\ref{prop:OpTheoryProperties}~\ref{enumi:specialObjectC} it contains an object $C$ with a pair of causal dagger kernel states $\ket{0}, \ket{1} \colon I \to C$ with $\ket{0} = \ket{1}^\bot$. Now given any $f, g \colon A \to B$, using conditioning and compactness let $h \colon A \to B \otimes C$ be any morphism with 
\begin{equation} \label{eq:cg-def1}
\scalebox{0.8}{\input{./figures/cg-1.tikz}}
\end{equation}
We then define 
\begin{equation} \label{eq:cg-def2}
\scalebox{0.8}{\input{./figures/cg-2i.tikz}}
\end{equation}
By causal completeness this is independent of our choice of $h$. Moreover it is straightforward to verify that it respects $\circ, \otimes$ and $\dagger$ and has unit $0$, and so indeed gives $\catC$ addition. For example to check that $j \circ (f + g) = j \circ f + j \circ g$ for all $j \colon B \to C$, note that 
\[
\scalebox{0.8}{\input{./figures/cg-3.tikz}}
\quad
\text{ and }
\quad
\scalebox{0.8}{\input{./figures/cg-3i.tikz}}
\]
Let us now note the second statement. Firstly~\eqref{eq:causally-complemented} is easily seen to ensure causal complementation. Conversely, for any kernel $k$ as above let $f = k \circ k^\dagger + k^{\bot} \circ k^{\bot \dagger} \colon A \to A$. Then we have 
\[
\scalebox{0.8}{\input{./figures/cg-8.tikz}}
\]
and so by causal complementation, $f$ is causal. But since all dagger kernels are causal we have $\discard{} \circ f 
= 
\discard{} \circ k^\dagger + \discard{} \circ k^{\bot \dagger}$
and so~\eqref{eq:causally-complemented} holds. 

Finally let us show that $+$ as defined above is unique. Indeed if $\catC$ comes with any other addition then by~\eqref{eq:causally-complemented} for any object $C$ as above we have 
\[
\scalebox{0.8}{\input{./figures/cg-7.tikz}}
\]
It follows that any morphism $h$ satisfying~\eqref{eq:cg-def1} will then automatically have marginal $f + g$, and so $+$ coincides with our definition above. In particular $+$ is independent of our choice of $C$.
\end{proof}

We may thus see conditioning as a convenient diagrammatic way of encoding coarse-graining, and in place of our pair of new conditions have equivalently required the presence of such an operation $+$ satisfying~\eqref{eq:causally-complemented}. 

\begin{examples} \label{examples:OfPrinciples}
$\Quant{\mathbb{C}}$ and $\Quant{\mathbb{R}}$ each satisfy the operational principles, as we will prove in Section~\ref{sec:PureProcessProperties}. 
\end{examples}

\section{Deriving Superpositions}

Let us now begin our reconstruction by using the operational principles to derive superposition-like features in our theory.

Our first result strengthens the observation that, by essential uniqueness, any pair of causal pure states of the same object are related by a unitary.

\begin{lemma} \label{lem:strong_homog}
In any dagger theory satisfying the operational principles, for any pairs $\{\ket{0}, \ket{1}\}$ and $\{\ket{0'}, \ket{1'}\}$ of orthonormal pure states of an object $A$, there is a unitary $U$ on $A$ with $U \circ \ket{0} = \ket{0'}$ and $U \circ \ket{1} = \ket{1'}$. 
\end{lemma}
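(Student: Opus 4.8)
The plan is to build the unitary $U$ in two stages: first move $\ket{0}$ to $\ket{0'}$, then fix up the second state without disturbing the first. By Proposition~\ref{prop:OpTheoryProperties}~\ref{enumi:caus-pure-state-kernel}, each of the four states is a causal pure state and hence a dagger kernel $I \to A$. Since both $\ket{0}$ and $\ket{0'}$ are causal pure states of the same object $A$, essential uniqueness of purifications (viewing a state as a dilation of the scalar $\id{I}$, or directly via the essential-uniqueness rule for pure morphisms) gives a unitary $V$ on $A$ with $V \circ \ket{0} = \ket{0'}$. Replacing $\ket{0}, \ket{1}$ by $V \circ \ket{0} = \ket{0'}$ and $V \circ \ket{1}$ — which is still pure, still an isometry, and still orthogonal to $\ket{0'}$ since $V$ is unitary — we reduce to the case $\ket{0} = \ket{0'}$, and must find a unitary fixing $\ket{0}$ and sending $\ket{1}$ to $\ket{1'}$.

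In this reduced situation, both $\ket{1}$ and $\ket{1'}$ are pure causal states orthogonal to $\ket{0}$, so both factor through $\coker(\ket{0})^\dagger = \ket{0}^\bot$: write $\ket{1} = \ket{0}^\bot \circ \phi$ and $\ket{1'} = \ket{0}^\bot \circ \phi'$ for states $\phi, \phi'$ of $\Coker(\ket{0})$. By Lemma~\ref{lem:helpful-tensPureKer}~\ref{enum:coker-whole} (cokernels are $\otimes$-pure) together with the final clause of Lemma~\ref{lem:helpful-tensPureKer}~\ref{enum:ker-whole} applied to the kernel $\ket{0}^\bot$, the states $\phi, \phi'$ are pure; they are also causal since $\ket{0}^\bot$ is a causal dagger kernel (Proposition~\ref{prop:OpTheoryProperties}~\ref{enumi:dag-ker-pure}) and $\ket{1}, \ket{1'}$ are causal, using that $\discard{} \circ \ket{0}^\bot$ composed with $\phi$ recovers $\discard{} \circ \ket{1} = \id{I}$ — here one uses Proposition~\ref{prop:coarse-graining}, which gives $\discard{} = \discard{}\circ \ket{0}\circ\ket{0}^\dagger + \discard{}\circ\ket{0}^\bot\circ\ket{0}^{\bot\dagger}$, so that $\discard{}\circ\ket{0}^\bot$ acts as $\discard{}$ does on the complementary summand. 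Applying essential uniqueness once more to the pure causal states $\phi, \phi'$ of $\Coker(\ket{0})$ yields a unitary $W$ on $\Coker(\ket{0})$ with $W \circ \phi = \phi'$.

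Finally I would assemble $U$ from $W$ and the identity on the $\ket{0}$-line. Using that $\catC$ has addition (Proposition~\ref{prop:coarse-graining}), set
\[
U := \ket{0}\circ\ket{0}^\dagger \;+\; \ket{0}^\bot\circ W\circ\ket{0}^{\bot\dagger}.
\]
Then $U \circ \ket{0} = \ket{0}$ since $\ket{0}^\dagger\circ\ket{0} = \id{I}$ and $\ket{0}^{\bot\dagger}\circ\ket{0} = 0$ (orthogonality of the kernel and its complement), while $U \circ \ket{1} = U \circ \ket{0}^\bot\circ\phi = \ket{0}^\bot\circ W\circ\phi = \ket{0}^\bot\circ\phi' = \ket{1'}$, using $\ket{0}^\dagger\circ\ket{0}^\bot = 0$ and $\ket{0}^{\bot\dagger}\circ\ket{0}^\bot = \id{}$. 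It remains to check $U$ is unitary: one computes $U^\dagger \circ U = \ket{0}\circ\ket{0}^\dagger + \ket{0}^\bot\circ W^\dagger\circ W\circ\ket{0}^{\bot\dagger} = \ket{0}\circ\ket{0}^\dagger + \ket{0}^\bot\circ\ket{0}^{\bot\dagger} = \id{A}$ by~\eqref{eq:causally-complemented} of Proposition~\ref{prop:coarse-graining}, and $U \circ U^\dagger = \id{A}$ symmetrically. Composing with the earlier $V$ gives the required unitary $U \circ V$ for the original states. The main obstacle I anticipate is the bookkeeping around causality of $\phi$ and the block-matrix identities — in particular making sure the orthogonal-kernel relations $\ket{0}^\dagger\circ\ket{0}^\bot = 0$ and the completeness relation~\eqref{eq:causally-complemented} are invoked correctly — rather than anything conceptually deep; once addition and causal completeness are in hand, the construction is the expected ``$2\times2$ block unitary''.
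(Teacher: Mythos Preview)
Your strategy is correct and reaches the goal, but it diverges from the paper's proof in the final stage. Both arguments first obtain a unitary sending $\ket{0}$ to $\ket{0'}$ by essential uniqueness, then factor the remaining two states through $k=\ket{0'}^\bot$ and find a unitary $W$ on $K=\Coker(\ket{0'})$ relating them. You then \emph{build} the required unitary on $A$ explicitly as the block $\ket{0'}\circ\ket{0'}^\dagger + k\circ W\circ k^\dagger$, using the addition from Proposition~\ref{prop:coarse-graining}. The paper instead avoids addition: it applies essential uniqueness a third time to the pure causal morphisms $k$ and $k\circ W$ (both with marginal $\discard{K}$) to obtain a unitary $W'$ on $A$ with $W'\circ k = k\circ W$, and then checks that $W'$ fixes $\ket{0'}$ by noting $W'^\dagger\circ\ket{0'}$ is orthogonal to $k$, hence equals $\ket{0'}\circ z$ for an isometric scalar $z$, forcing $z=\id{I}$ by the CP axiom. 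Your route is more concrete and anticipates the block-matrix picture used later; the paper's route stays entirely within the purification/CP toolkit and sidesteps the completeness relation below.

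There is one genuine gap in your argument: you justify $\ket{0}\circ\ket{0}^\dagger + \ket{0}^\bot\circ\ket{0}^{\bot\dagger}=\id{A}$ by citing~\eqref{eq:causally-complemented}, but that equation only says $\discard{A}=\discard{}\circ k^\dagger + \discard{}\circ k^{\bot\dagger}$, an equality of \emph{effects}. It does not give the endomorphism identity you need. The fix is to invoke the totality in Principle~\ref{princ:ker} directly: setting $f=k\circ k^\dagger + k^\bot\circ k^{\bot\dagger}$ one computes $k^\dagger\circ f = k^\dagger$ and $k^{\bot\dagger}\circ f = k^{\bot\dagger}$, so $(\discard{}\circ k^\dagger)\circ f = (\discard{}\circ k^\dagger)\circ\id{A}$ and likewise for $k^\bot$, whence $f=\id{A}$ by totality of the pair. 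With this correction your proof goes through.
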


\begin{proof}
By essential uniqueness there is a unitary $U$ on $A$ with $U \circ \ket{0} = \ket{0'}$. Since every causal pure state is a dagger kernel we may define a new dagger kernel $k = \ket{0'}^\bot \colon K \to A$. 

Since unitaries preserve orthogonality, $U \circ \ket{1}$ is orthogonal to $\ket{0'}$, so that $U \circ \ket{1}  = k \circ \psi$ for the causal pure state $\psi = k^\dagger \circ U \circ \ket{1}$. Similarly we always have $\ket{1'} = k \circ \phi$ for some causal pure state $\phi$. By essential uniqueness there is then a unitary $V$ on $K$ with $V \circ \psi = \phi$, and in turn a unitary $W$ on $A$ with $W \circ k = k \circ V$. 

One may then verify that $W^{\dagger} \circ \ket{0'}$ is orthogonal to $k$ and so factors over $k^{\bot} = \ket{0'}^{\bot \bot} = \ket{0'}$. Hence we have $W^{\dagger} \circ \ket{0'} = \ket{0'} \circ z$ for some scalar $z$. Then since $\ket{0'}$ is an isometry so is the scalar $z$, and so, since all scalars are pure, by the CP axiom we have $z = \id{I}$. Finally since $W$ preserves $\ket{0'}$ we have that $W \circ U$ is the desired unitary on $A$.
\end{proof}

In just the same way one may show that any orthonormal collections of pure states of the same size $\{\ket{i}\}^n_{i=1}$ and $\{\ket{i'}\}^n_{i=1}$ are related by a causal isomorphism; this is called \emps{strong symmetry} in~\cite{barnum2014higher}. The result also allows us to extend conditioning to pure morphisms as follows.

\begin{lemma} \label{lem:existence}
In any dagger theory satisfying the operational principles, for any orthonormal pure states $\ket{0}, \ket{1}$ of an object $A$ and pair of pure states $\psi, \phi$ of an object $B$ there is a pure $f \colon A \to B$ with $f \circ \ket{0} = \psi$ and $f \circ \ket{1} = \phi$. 
\end{lemma}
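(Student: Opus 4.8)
The plan is to deduce pure conditioning from the non-pure conditioning principle together with essential uniqueness and strong homogeneity (Lemma~\ref{lem:strong_homog}). First I would invoke the (non-pure) conditioning principle applied to the orthonormal states $\ket{0},\ket{1}$ on $A$ and to the states $\psi,\phi$ on $B$: this gives a morphism $g\colon A\to B$ with $g\circ\ket{0}=\psi$ and $g\circ\ket{1}=\phi$. This $g$ need not be pure, so the task is to replace it by a pure morphism with the same action on $\ket{0}$ and $\ket{1}$. To do this I would take a purification $h\colon A\to B\otimes C$ of $g$, so $h$ is pure and $\discard{C}\circ h = g$ after discarding $C$, i.e.\ $(\id{B}\otimes\discard{C})\circ h = g$.

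Next I would analyse the states $h\circ\ket{0}$ and $h\circ\ket{1}$ of $B\otimes C$. Since $\ket{0},\ket{1}$ and $h$ are pure and pure morphisms are closed under composition (strong purification), both are pure states of $B\otimes C$, and they dilate $\psi$ and $\phi$ respectively. By essential uniqueness of purifications, a purification of a pure state is trivial in the sense of~\eqref{eq:EU}: there is a causal pure state $\mu_0$ of $C$ with $h\circ\ket{0} = \psi\otimes\mu_0$, and likewise a causal pure state $\mu_1$ of $C$ with $h\circ\ket{1}=\phi\otimes\mu_1$ (here I am using that $\psi$ itself, viewed as a trivial dilation of $\psi$, together with $h\circ\ket{0}$, are two purifications of $\psi$, applying the form of essential uniqueness that allows different ancillas as in Lemma~\ref{lemma:purif_env_connect}). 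The remaining job is to unify $\mu_0$ and $\mu_1$: by Proposition~\ref{prop:OpTheoryProperties} every non-zero object has a causal pure state, and by essential uniqueness any two causal pure states of $C$ are related by a unitary; in fact I only need a single causal pure state $\mu$ of $C$ and unitaries sending $\mu$ to $\mu_0$ and to $\mu_1$.

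The key manoeuvre is then to pick a fixed causal pure state $\mu$ of $C$ and, using Lemma~\ref{lem:strong_homog} (or the plain essential uniqueness for a single state), absorb the discrepancy: there are unitaries $V_0,V_1$ on $C$ with $V_i\circ\mu=\mu_i$. But I want one morphism, so instead I would argue as follows. Form $f' := (\id{B}\otimes\phi')\circ(\text{something})$\,---\,more cleanly, consider the pure morphism $k := (\id{B}\otimes\discard{})\circ h$ is not pure, so that fails; rather, use that $h\circ\ket{0}=\psi\otimes\mu_0$ and $h\circ\ket{1}=\phi\otimes\mu_1$ and pick a causal pure effect $e$ on $C$ (the dagger of a causal pure state), then set
\[
f := (\id{B}\otimes e)\circ h \circ (\text{unitary correction}).
\]
Concretely, choose a causal pure state $\mu$ of $C$ with causal pure effect $\mu^\dagger$, and by Lemma~\ref{lem:strong_homog} applied on $C$ arrange unitaries so that after precomposing $h$ with a unitary $W$ on $A$ fixing nothing is needed\,---\,here is the clean version: since $\mu^\dagger\circ\mu_i$ is a scalar, and all scalars are pure with $r=r^\dagger$ and causal pure states give isometric scalars equal to $\id I$, one checks $\mu^\dagger\circ\mu_0$ and $\mu^\dagger\circ\mu_1$ need not be $\id I$ unless $\mu=\mu_i$. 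So I instead take $f := (\id{B}\otimes\mu_0^\dagger)\circ h$ when $\mu_0=\mu_1$; in general replace $\ket{1}$'s image discrepancy by applying the unitary $V$ with $V\circ\mu_1=\mu_0$ lifted (via essential uniqueness) to a unitary $\widetilde V$ on $B\otimes C$ with $\widetilde V\circ(h\circ\ket{1})=\phi\otimes\mu_0$ and $\widetilde V\circ(h\circ\ket{0})=\psi\otimes\mu_0$ simultaneously\,---\,this is exactly what Lemma~\ref{lem:strong_homog}'s proof technique delivers, by treating $\{h\circ\ket{0},h\circ\ket{1}\}$ and $\{\psi\otimes\mu_0,\phi\otimes\mu_0\}$ as orthonormal(-ish) pure configurations. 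Then $f := (\id{B}\otimes\mu_0^\dagger)\circ\widetilde V\circ h$ is pure (composite of pure morphisms, $\mu_0^\dagger$ pure since $\mu_0$ is), causal by construction, and satisfies $f\circ\ket{0}=\psi$, $f\circ\ket{1}=\phi$. The main obstacle is precisely this simultaneous unitary alignment step: one must be careful that $h\circ\ket{0}$ and $h\circ\ket{1}$ are not literally orthonormal, so the cleanest route is to apply essential uniqueness twice (once for each of the two states $\psi,\phi$) and then reconcile the two ancilla states $\mu_0,\mu_1$ on $C$ using a single unitary on $C$, checking via the CP axiom and Lemma~\ref{lem:dagzero} that the scalar bookkeeping closes up.
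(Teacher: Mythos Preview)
Your overall strategy matches the paper's: apply conditioning to get a (non-pure) $g$, purify to a pure $h\colon A\to B\otimes C$, observe that $h\circ\ket{i}$ splits as a product with a causal pure ancilla state, then use Lemma~\ref{lem:strong_homog} to align the two ancillas and finally post-compose with a pure effect on $C$. The splitting step is fine.

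The gap is in your alignment step. Neither of your two proposals works. Taking a unitary $V$ on $C$ with $V\circ\mu_1=\mu_0$ and setting $f:=(\id{B}\otimes\mu_0^\dagger)\circ(\id{B}\otimes V)\circ h$ gives $f\circ\ket{1}=\phi$ but $f\circ\ket{0}=(\mu_0^\dagger\circ V\circ\mu_0)\cdot\psi$, and the scalar $\mu_0^\dagger\circ V\circ\mu_0$ has no reason to be $\id{I}$ (in $\Quant{}$ it can even be $0$). Your alternative, applying Lemma~\ref{lem:strong_homog} on $B\otimes C$ to the pairs $\{\psi\otimes\mu_0,\phi\otimes\mu_1\}$ and $\{\psi\otimes\mu_0,\phi\otimes\mu_0\}$, is illegitimate because $\psi,\phi$ are arbitrary pure states, so these pairs are not orthonormal---your own ``orthonormal(-ish)'' hedge signals the problem, and the CP axiom and Lemma~\ref{lem:dagzero} do not repair it.

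The fix the paper uses is to apply Lemma~\ref{lem:strong_homog} on $A\otimes C$, where orthogonality is supplied by $\ket{0},\ket{1}$ rather than by the ancillas: for any causal pure state $\mu$ of $C$ the pairs $\{\ket{0}\otimes\mu,\ket{1}\otimes\mu\}$ and $\{\ket{0}\otimes\mu_0,\ket{1}\otimes\mu_1\}$ are genuinely orthonormal in $A\otimes C$, yielding a unitary $U$ on $A\otimes C$ that behaves as a \emph{controlled} correction. One then builds $f$ from $h$, $U$, the state $\mu$, and compactness. You should also treat the degenerate cases $\psi=0$ or $\phi=0$ separately (e.g.\ $f=\phi\circ\ket{1}^\dagger$ when $\psi=0$), since in those cases the splitting $h\circ\ket{i}=\psi\otimes\mu_i$ with $\mu_i$ causal need not hold.
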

\begin{proof}
If $\psi = 0$ then we may take $f = \phi \circ \ket{1}^\dagger$, and similarly if $\phi = 0$ the result is trivial. Otherwise assume that $\psi$ and $\phi$ are non-zero. Using conditioning, let $h$ be any morphism satisfying 
\[
\scalebox{0.8}{\input{./figures/purif-in-proof-program-depure.tikz}}
\]
and let $g$ be any purification of $h$ via some object $C$. Then since all morphisms involved are pure it follows that 
\[
\scalebox{0.8}{\input{./figures/purif-in-proof-program-2.tikz}}
\]
for some causal states $a, b$, which must be pure by Lemma \ref{lem:caustensclosedexamples}. Then by Lemma~\ref{lem:strong_homog} there is a unitary $U$ with 
\[
\scalebox{0.8}{\input{./figures/U-function.tikz}}
\]
Finally the pure morphism defined by
\[
\scalebox{0.8}{\input{./figures/proof-fancy-construction.tikz}}
\]
then has $f \circ \ket{0} = \psi$ and $f \circ \ket{1} = \phi$. 
\end{proof}

We are now able to show that $\catC_{\pure}$ has a qubit-like object.

\begin{proposition} \label{prop:PhBiprodII}
Let $\catC$ satisfy the operational principles. Then
$\catC_{\pure}$ has a phased dagger biproduct $B = I \pbiprod I$ for which all phases are unitary. 
\end{proposition}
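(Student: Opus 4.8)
The plan is to build the phased dagger biproduct $B = I \pbiprod I$ in $\catC_{\pure}$ directly from a qubit-like object produced by the operational principles, and then verify the defining properties of Definition~\ref{def:phased_biprod} using essential uniqueness together with the conditioning-strengthened results above. First I would invoke Proposition~\ref{prop:OpTheoryProperties}~\ref{enumi:specialObjectC} to obtain an object $B$ with a pair of causal pure states $\ket{0}, \ket{1} \colon I \to B$ which are orthonormal (orthogonal isometries) and satisfy $\ket{0} = \ket{1}^\bot$. Since each $\ket{i}$ is a pure causal state, it is a dagger kernel by Proposition~\ref{prop:OpTheoryProperties}~\ref{enumi:caus-pure-state-kernel}, and these are pure morphisms in $\catC_{\pure}$ by Proposition~\ref{prop:OpTheoryProperties}~\ref{enumi:dag-ker-pure}. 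I would then propose $\coproj_1 := \ket{0}$, $\coproj_2 := \ket{1}$ as coprojections and $\pproj_i := \coproj_i^\dagger$ as projections, so that $\pproj_i \circ \coproj_j = \ket{i}^\dagger \circ \ket{j} = \id{I}$ if $i = j$ and $= 0$ otherwise (orthonormality), giving the biproduct-style equations automatically.

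Next I would check the phased coproduct universal property in $\catC_{\pure}$. Given pure $f, g \colon I \to C$, Lemma~\ref{lem:existence} (conditioning extended to pure morphisms) produces a pure $h \colon B \to C$ with $h \circ \ket{0} = f$ and $h \circ \ket{1} = g$, which is exactly the existence condition. For uniqueness-up-to-phase: if $h, h'$ are both pure with $h \circ \ket{i} = h' \circ \ket{i}$ for $i = 0, 1$, I need a phase $U$ on $B$ (i.e.\ $U \circ \ket{i} = \ket{i}$) with $h' = h \circ U$. Here I would bend wires into the compact structure and use the combined essential-uniqueness/CP rule for pure morphisms: both $h$ and $h'$ are pure, and they agree after composing with the two orthogonal kernel states $\ket{0}, \ket{1}$, which jointly span $B$ in the sense captured by $\ket{0} = \ket{1}^\bot$ together with pure exclusion. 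More carefully, I would show $h^\dagger \circ h$ and $(h')^\dagger \circ h'$ agree on each $\ket{i}$ and hence coincide by the kernel/cokernel structure on $B$, then apply the CP axiom to conclude $h' = U \circ h$ wait --- rather $h' = h \circ U'$; I should be careful to land the unitary on the correct side, using that the snake equations convert a left-acting unitary on the bent diagram into a right-acting one on $B$. The dual (product) universal property follows by applying the dagger, since $\pproj_i = \coproj_i^\dagger$, the involutivity $h^{\dagger\dagger} = h$, and the fact that $\catC_{\pure}$ is closed under $\dagger$; and the two universal properties have the same phases precisely because $U$ is a phase for the coproduct iff $U^\dagger$ is a phase for the product, which matches Lemma~\ref{lem:ph-d-biprod-help}.

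Finally, to see every phase $U$ on $B$ is unitary: a phase satisfies $U \circ \ket{0} = \ket{0}$ and $U \circ \ket{1} = \ket{1}$. Since $\ket{0}, \ket{1}$ are pure and $U \circ \ket{i}$ agrees with $\id{B} \circ \ket{i}$, I would use Lemma~\ref{lem:strong_homog} applied to the pair $\{\ket{0}, \ket{1}\}$ with itself: the proof of that lemma already manufactures, from any morphism fixing two orthonormal pure states, a \emph{unitary} $W$ agreeing with it; tracing that argument shows $U$ itself must be unitary (the freedom is only a scalar $z$ which the CP axiom forces to be $\id{I}$). Alternatively, and perhaps more cleanly, $U \circ \ket{i} = \ket{i}$ forces $U$ to be a pure causal isomorphism --- causal because $\discard{B} \circ U \circ \ket{i} = \discard{B} \circ \ket{i}$ for the jointly-$\dagger$-cancelling pair $\ket{0}, \ket{1}$, and then the CP axiom for pure morphisms makes every causal pure isomorphism a unitary. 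The main obstacle I anticipate is the uniqueness-up-to-phase step: extracting the phase $U$ on the correct side of the equation via the compact structure, and justifying that agreement on the two orthonormal kernel states $\ket{0}, \ket{1}$ is genuinely enough to pin down a pure morphism out of $B$ up to a phase --- this is where pure exclusion (ensuring $B$ has no "extra" sectors beyond those picked out by $\ket{0}$ and $\ket{1} = \ket{0}^\bot$) and the orthomodular-lattice structure of dagger kernels on $B$ must be combined with essential uniqueness.
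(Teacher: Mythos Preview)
Your setup and the existence step via Lemma~\ref{lem:existence} match the paper. The phases-unitary argument is close but needs a small correction: from $U \circ \ket{i} = \ket{i}$ you should take daggers to get $\ket{i}^\dagger \circ U^\dagger = \ket{i}^\dagger$, then use causal complementation (which concerns the \emph{effects} $\ket{0}^\dagger, \ket{1}^\dagger$, since $\ket{0} = \ket{1}^\bot$) to conclude $U^\dagger$ is causal, whence unitary by essential uniqueness. Your version instead tries to use the states $\ket{i}$ as jointly epic on $U$, which is circular at this point.

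The genuine gap is the uniqueness-up-to-phase step. From $h \circ \ket{i} = h' \circ \ket{i}$ you do \emph{not} get $h^\dagger h \circ \ket{i} = (h')^\dagger h' \circ \ket{i}$: that would require $h^\dagger$ and $(h')^\dagger$ to agree on $h \circ \ket{i}$, which is not given. You only obtain $\ket{i}^\dagger (h^\dagger h) \ket{j} = \ket{i}^\dagger ((h')^\dagger h') \ket{j}$ for all $i,j$, and you cannot appeal to $\ket{0}, \ket{1}$ being jointly epic since that is the phased-coproduct property under proof. Even if $h^\dagger h = (h')^\dagger h'$ held, the CP/essential-uniqueness rule yields $h' = V \circ h$ with $V$ unitary on the \emph{codomain}, not a phase on $B$; bending wires does not move $V$ to the $B$-side. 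The paper's missing device is a pure ``copier'' $\delta \colon B \to B \otimes B$ with $\delta \circ \ket{i} = \ket{i} \otimes \ket{i}$ (built via Lemma~\ref{lem:existence}). One compares $(h \otimes \id{B}) \circ \delta$ with $(h' \otimes \id{B}) \circ \delta$: using $\ket{0} = \ket{1}^\bot$ and causal complementation, these have equal marginals on the extra $B$-output, so essential uniqueness produces a unitary $U$ on $B$; a further computation with zero-cancellativity and the CP axiom then shows $U$ fixes each $\ket{i}$, i.e.\ is a phase. This copier trick is precisely what lands the unitary on the correct side and is the key idea absent from your proposal.
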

\begin{proof}
Let $B$ be any object with a pair of pure causal states $\ket{0}, \ket{1}$ with $\ket{0} = \ket{1}^\bot$ as dagger kernels, as in Proposition~\ref{prop:OpTheoryProperties}~\ref{enumi:specialObjectC}. Then $\ket{0}, \ket{1} \colon I \to B$ satisfy the existence property of a phased coproduct by Lemma~\ref{lem:existence}. 

We now establish the uniqueness property. Let $\tinymultflip[whitedot] \colon B \to B \otimes B$ be a pure morphism with $\tinymultflip[whitedot] \circ \ket{i} = \ket{i} \otimes \ket{i}$ for $i = 0,1$. 
Then since $\ket{0} = \ket{1}^{\bot}$ and 
\[
\scalebox{0.8}{\input{./figures/copier-zero-0.tikz}} \  0 
\quad
\text{ we have }
\quad
\scalebox{0.8}{\input{./figures/copier-top-0.tikz}}
\]
along with the similar equation for $\ket{1}$. Now let $f, g \colon B \to A$ be pure with $f \circ \ket{i} = g \circ \ket{i}$ for $i= 0, 1$. If $f \circ \ket{0} = 0$ then  since $\ket{1} = \ket{0}^\bot$ we get $f = f \circ \ket{1} \circ \ket{1}^\dagger = g$, and similarly $f = g$ if $f \circ \ket{1} = 0$. So now suppose that $f \circ \ket{i} \neq 0$ for $i = 0,1$. By the above we have 
\[
\scalebox{0.8}{\input{./figures/copier-EUPurif-1a.tikz}}
\]
and so bending wires and using causal complementation we get
\[
\scalebox{0.8}{\input{./figures/copier-EUPurif-1.tikz}}
\quad
\text{ and so }
\quad
\scalebox{0.8}{\input{./figures/copier-EUPurif2.tikz}}
\]
for some unitary $U$ by essential uniqueness. But then 
\[
0 \ \ \scalebox{0.8}{\input{./figures/copier-fapply-merge.tikz}}
\]
Hence by zero-cancellativity we have $\ket{1}^\dagger \circ U \circ \ket{0} = 0$ and so $U \circ \ket{0} = \ket{0} \circ z$ for some scalar $z$. But then $z$ is an isometry and so by the CP axiom $z = \id{I}$. Hence $U$ preserves the states $\ket{0}$, and $\ket{1}$, i.e.~is a phase. Now letting $\tinycounit \colon B \to I$ be any pure effect with $\tinycounit \circ \ket{0} = \id{I} = \tinycounit \circ \ket{1}$ we have 
\begin{equation} \label{eq:witjcounit}
\scalebox{0.8}{\input{./figures/copier-last.tikz}} 
\end{equation}
where each of the endomorphisms of $B$ below $f, g$ above are also phases.

Finally note that any phase $W$ is unitary, since we have that $\ket{i}^\dagger \circ W^\dagger = \ket{i}^\dagger$ for $i =0,1$ and so $W^\dagger$ is causal by causal complementation and hence unitary by essential uniqueness. In particular this makes phases invertible, so that by~\eqref{eq:witjcounit} $f$ and $g$ are equal up to phase, making $B$ a phased coproduct, and closed under the dagger, so that $B$ is a phased dagger biproduct by Lemma~\ref{lem:ph-d-biprod-help}.     
\end{proof}

\begin{corollary} \label{cor:daggerphasedbiproducts}
Let $\catC$ satisfy the operational principles. Then $\catC_{\pure}$ has the superposition properties.
\end{corollary}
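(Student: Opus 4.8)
The plan is to assemble the three ingredients of the "superposition properties" — finite phased dagger biproducts, positive cancellation, and a local-isometric state on every object — one at a time, in each case reducing to something already established. Recall that by definition (Section \ref{sec:phbiprod}, and the definition preceding Lemma \ref{lem:CPMsCoincide}) $\catC_\pure$ has the superposition properties when it is dagger compact, has finite phased dagger biproducts satisfying positive cancellation, and every object carries a state which is a local isometry in the sense of \eqref{eq:local-isometric}. Dagger compactness of $\catC_\pure$ is immediate: by Principle \ref{princ:strongpurif} (strong purification) the pure morphisms form an environment structure, hence a dagger compact subcategory of $\catC$, so this needs no argument.

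First I would upgrade Proposition \ref{prop:PhBiprodII} from a single object $B = I \pbiprod I$ to \emph{all} finite phased dagger biproducts. Proposition \ref{prop:PhBiprodII} gives us $B = I \pbiprod I$ in $\catC_\pure$ with all phases unitary. By Lemma \ref{lem:dag-ker-coproj}, the coprojections $\ket 0, \ket 1 \colon I \to B$ are orthogonal dagger kernels; and by Proposition \ref{prop:OpTheoryProperties}\ref{enumi:specialObjectC} together with the construction in Proposition \ref{prop:PhBiprodII}, $I \pbiprod I$ exists. To get binary phased dagger biproducts $A \pbiprod A'$ for arbitrary $A, A'$ I would invoke Proposition \ref{prop:kernelsGiveBiproducts}: it suffices to have $A \pbiprod A$ for every $A$, and for every pair $A, A'$ an object $C$ with orthogonal dagger kernels $k \colon A \to C$, $l \colon A' \to C$. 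The existence of such orthogonal kernels should follow from strong purification and pure exclusion: embed causal pure states of $A$ and $A'$ (which exist by Principle \ref{princ:strongpurif}) into a common carrier, using that in a dagger category with dagger kernels the kernels on an object form an orthomodular lattice (as recalled in the proof of Proposition \ref{prop:OpTheoryProperties}\ref{enumi:specialObjectC}) and that $\catC_\pure$ shares these kernels by Proposition \ref{prop:OpTheoryProperties}\ref{enumi:dag-ker-pure}. For $A \pbiprod A$ itself I would run the same argument with $A' = A$. Finite phased dagger biproducts then follow by iterating binary ones, using Lemma \ref{lem:ph_biprod} or the dagger analogue stated just before it.

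For positive cancellation I would argue directly: let $p, q$ be positive diagonal endomorphisms of a phased dagger biproduct $A \pbiprod A'$ in $\catC_\pure$ with $p = q \circ U$ for a phase $U$. Since all phases here are unitary (this is the content of the uniqueness clause in Proposition \ref{prop:PhBiprodII}, and extends to general phased dagger biproducts because any phase $W$ satisfies $\ket i^\dagger \circ W^\dagger = \ket i^\dagger$, hence $W^\dagger$ is causal by causal complementation — Principle \ref{princ:ker} — hence unitary by essential uniqueness), we have $p^\dagger = p$, $q^\dagger = q$, and $U^\dagger = U^{-1}$; then $p = q U$ and $p = p^\dagger = U^\dagger q^\dagger = U^{-1} q$ give $q U = U^{-1} q$, i.e. $U q U = q$, and combining with positivity of $p, q$ (write $q = g^\dagger g$) and the CP axiom one forces $U$ to act trivially on the relevant images. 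The cleanest route is probably via Lemma \ref{lem:pos-cancellation}: it says positive cancellation in $\catC_\pure$ is equivalent to $[p] = [q] \implies p = q$ for positive $p, q$ in $\plusIdag{\catC_\pure}$, and that latter statement can be checked by composing with projections and coprojections and using that the scalars are pure with $r = r^\dagger$ (Proposition \ref{prop:OpTheoryProperties}\ref{enumi:scalars-are-pure}) together with the CP axiom, exactly as in the $\Hilb$ example following Lemma \ref{lem:pos-cancellation}. Finally, every object of $\catC_\pure$ has a causal pure state by Principle \ref{princ:strongpurif}, and a causal pure state is an isometry (it is a dagger kernel by Proposition \ref{prop:OpTheoryProperties}\ref{enumi:caus-pure-state-kernel}, and dagger kernels are isometries by definition), hence in particular a local isometry. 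This gives the third superposition property and completes the corollary.

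The main obstacle I expect is positive cancellation: the other two properties are essentially bookkeeping over results already proved, but pinning down that a phase unitary commuting appropriately with a positive diagonal map must be the identity on its support requires the CP axiom and the fact that the only pure isometric scalar is $\id I$ (Proposition \ref{prop:OpTheoryProperties}\ref{enumi:scalars-are-pure} and the CP axiom), applied carefully sector-by-sector; I would want to route it through Lemma \ref{lem:pos-cancellation} to keep the computation manageable rather than arguing with $U$ directly.
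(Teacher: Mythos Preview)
Your construction of $A \pbiprod A$ is circular. You correctly note that Proposition~\ref{prop:kernelsGiveBiproducts} requires $A \pbiprod A$ for every $A$ as a \emph{hypothesis}, and then propose to obtain $A \pbiprod A$ by ``running the same argument with $A' = A$''. But the proof of that proposition builds $A \pbiprod B$ as an image inside $C \pbiprod C$, so it cannot bootstrap the diagonal biproducts themselves. The paper fixes this with compactness: since $\catC_\pure$ is dagger compact, $A \otimes (I \pbiprod I)$ is a phased dagger biproduct $A \pbiprod A$ for every $A$, exactly as in Lemma~\ref{lem:comp-closed}~\ref{enum:distmonic}. The required orthogonal kernels $A \to D \leftarrow B$ are also produced via the tensor, not the orthomodular lattice: take causal pure (hence kernel) states $\psi \colon I \to A$, $\phi \colon I \to B$ and orthogonal kernel states $\ket 0, \ket 1$ on $I \pbiprod I$, and tensor to get orthogonal kernels into a common object using Proposition~\ref{prop:kernels-in-compact}~\ref{enum:kernels-compatible}. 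Your ``embed into a common carrier'' is too vague to stand in for this.

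Your route to positive cancellation is also off target. Going through Lemma~\ref{lem:pos-cancellation} presupposes positive-free phases and a well-defined $\plusIdag{\catC_\pure}$, and even then you are left to verify a condition whose proof you sketch only by analogy with the $\Hilb$ example (which uses subtraction, unavailable here). The paper instead argues directly in $\catC$: for a pure positive diagonal $f^\dagger \circ f$ on $A \pbiprod B$, one uses the image decomposition along $\img(f \circ \coproj_A)$ and its orthocomplement, and then \emph{causal complementation}---Principle~\ref{princ:ker}, which you invoke only to make phases unitary---to conclude that $\discard{} \circ f$ is determined by $\discard{} \circ f \circ \coproj_A$ and $\discard{} \circ f \circ \coproj_B$. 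The CP axiom then gives $f^\dagger \circ f = g^\dagger \circ g$ whenever the components agree. Causal complementation, not a manipulation like $UqU = q$, is the operative input.
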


\begin{proof}
By the previous result, $\catC_\pure$ has a phased dagger biproduct $I \pbiprod I$. Then just as in Lemma~\ref{lem:comp-closed}~\ref{enum:distmonic} by compactness $A \otimes (I \pbiprod I)$ is a phased dagger biproduct $A \pbiprod A$, for all objects $A$. Hence since all kernels in $\catC$ are also kernels in $\catC_\pure$, by Proposition~\ref{prop:kernelsGiveBiproducts} to show that $\catC_\pure$ has phased dagger biproducts it suffices to show for all objects $A$, $B$ that there are orthogonal kernels $k \colon A \to C$ and $l \colon B \to C$.

Now if ether $A$ or $B$ is a zero object the result is trivial. Otherwise let $\psi$ and $\phi$ be causal pure states of $A, B$ respectively, and let $C$ be an object with two orthogonal causal pure states $\ket{0}$, $\ket{1}$, such as $I \pbiprod I$. Then these states are all kernels and so by Proposition~\ref{prop:kernels-in-compact} so are the morphisms
\[
\scalebox{0.8}{\input{./figures/ker1.tikz}}
\]
which are indeed orthogonal. Hence $\catC_\pure$ has finite phased dagger biproducts. 

We now verify the positive cancellation property. First consider a pure positive endomorphism $f^{\dagger} \circ f$ of $A \pbiprod B$ which is diagonal so that $f_A := f \circ \pcoproj_A$ and $f_B := f \circ \pcoproj_B$ are orthogonal. Letting $c_A = \img(f_A)^{\dagger}$ and $c_B = {c_A}^{\bot}$ we have
\begin{align*}
f_A^\dagger \circ f_B = 0 
\implies 
c_A \circ f_B = 0
&\implies 
c_A \circ f = c_A \circ f \circ \coproj_A \circ \coproj_A^\dagger 
\\ 
c_B \circ f \circ \coproj_A
=
\coker(f_A) \circ f_A
=
0
&\implies
c_B \circ f = c_B \circ f \circ \coproj_B \circ \coproj_B^\dagger 
\end{align*}
using that $\pcoproj_A$ and $\pcoproj_B$ are dagger kernels by Lemma~\ref{lem:dag-ker-coproj}. Hence we have 
\begin{align} \label{eq:poscancel2a}
\discard{} \circ c_A \circ f 
&=
\discard{} \circ f_A \circ \coproj_A^{\dagger}
\\   \label{eq:poscancel2b}
\discard{} \circ c_B \circ f 
&=
\discard{} \circ f_B \circ \coproj_B^{\dagger}
\end{align}
Now if any other pure diagonal endomorphism $g$ has $f^{\dagger} \circ f = g^{\dagger} \circ g \circ U$ for some phase $U$, defining $g_A = g \circ \coproj_A$ and $g_B = g \circ \coproj_B$ we have that $f_A^{\dagger} \circ f_A = g_A^{\dagger} \circ g_A$, and the similar equation holds for $B$. Then by the CP axiom 
\begin{align*}
\discard{} \circ f_A = \discard{} \circ g_A 
& & 
\discard{} \circ f_B = \discard{} \circ g_B
\end{align*}
Since $c_B = {c_A}^{\bot}$, by causal complementation and~\eqref{eq:poscancel2a}, \eqref{eq:poscancel2b} we have $\discard{} \circ f = \discard{} \circ g$. Finally $f^{\dagger} \circ f = g^{\dagger} \circ g$ by the CP axiom again.
\end{proof}

Hence we can conclude that whenever $\catC$ satisfies the operational principles it comes with an embedding $\Quant{S} \hookrightarrow \catC$. However by studying the properties of $\catC_{\pure}$ in detail we will be able to say much more. 

\section{Properties of Pure Morphisms} \label{sec:PureProcessProperties}

Whenever $\catC$ is a dagger compact category satisfying the operational principles, we can capture the properties of $\catC_\pure$ and $\plusIdag{\catC_\pure}$ as follows. 

\begin{definition}
Consider dagger compact categories with dagger kernels satisfying the following: 
\begin{itemize}
\item 
\deff{state habitation}: \index{state habitation} every non-zero object has a non-zero state;
\item 
\deff{dagger normalisation}: \index{dagger normalisation} every non-zero state $\psi \colon I \to A$ has $\psi = \sigma \circ r$ for some isometry $\sigma \colon I \to A$ and scalar $r$;
\item 
\deff{homogeneity}: \index{homogeneity} for all $f, g \colon A \to B$ we have
\[
\scalebox{0.8}{\input{./figures/homog.tikz}}
\]
for some unitary $U$ on $B$.
\end{itemize}
A  \deff{pre-quantum category} \index{pre-quantum category} $\catB$ is one which furthermore has finite phased dagger biproducts with positive-free phases, and that $\id{I}$ is its only unitary scalar. 

Alternatively, a \deff{quantum category} \index{quantum category} $\catA$ is one which satisfies the above and has dagger biproducts. 
\end{definition}

In fact we will see that any pre-quantum category has the stronger property of positive-cancellation for phases. Now from the results of the previous section, essential uniqueness and the CP axiom, we immediately have the following. 

\begin{proposition} \label{prop:OpToquantWOGlobal}
Let $\catC$ satisfy the operational principles. Then $\catC_{\pure}$ is a pre-quantum category.
\end{proposition}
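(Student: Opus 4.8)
The plan is to verify, one by one, the defining properties of a pre-quantum category for $\catC_\pure$, pulling each from the results already established in the excerpt. Recall that a pre-quantum category is a dagger compact category with dagger kernels satisfying state habitation, dagger normalisation, homogeneity, having finite phased dagger biproducts with positive-free phases, and having $\id{I}$ as its only unitary scalar. Since $\catC$ satisfies the operational principles, Principle~\ref{princ:strongpurif} already tells us that $\catC_\pure$ is a dagger compact subcategory of $\catC$ (closed under $\circ$, $\otimes$, $\dagger$, containing identities and all unitaries), so it is a dagger compact category to begin with.

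First I would collect the kernel structure: by Proposition~\ref{prop:OpTheoryProperties}~\ref{enumi:dag-ker-pure} every dagger kernel in $\catC$ is pure and causal and is moreover a kernel in $\catC_\pure$. Since $\catC$ has dagger kernels (Principle~\ref{princ:ker}), this gives $\catC_\pure$ dagger kernels, and since $\catC_\pure$ is dagger compact these dagger kernels come with cokernels in the usual way. State habitation in $\catC_\pure$ follows directly from the clause of Principle~\ref{princ:strongpurif} that every non-zero object has a causal (hence non-zero) pure state. Dagger normalisation is Proposition~\ref{prop:OpTheoryProperties}~\ref{enumi:normalisation} combined with \ref{enumi:scalars-are-pure}: every non-zero state is a scalar multiple of a causal one, the causal pure state is an isometry (being a kernel, by \ref{enumi:caus-pure-state-kernel}), and the scalar is pure, so this normalisation takes place inside $\catC_\pure$. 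Homogeneity is exactly the conjunction of the CP axiom and essential uniqueness for pure morphisms, which is built into Principle~\ref{princ:strongpurif} (the rule displayed as \eqref{eq:EU-CP-combine}). The condition that $\id{I}$ is the only unitary scalar also comes from the CP axiom: any unitary scalar $z$ has $z^\dagger \circ z = \id{I}$, and since scalars are pure the CP axiom (applied to $z$ and $\id{I}$, both with discard-composite equal to themselves since scalars are their own discard-composites) forces $z = \id{I}$; this is precisely the argument used in the proof of Proposition~\ref{prop:PhBiprodII}.

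The substantive input is the phased dagger biproduct structure together with positive-free phases: this is supplied wholesale by Corollary~\ref{cor:daggerphasedbiproducts}, which states that under the operational principles $\catC_\pure$ has the superposition properties, i.e.\ finite phased dagger biproducts satisfying positive cancellation (which in particular implies positive-freeness of phases) and local-isometric states on every object. Thus I would simply cite Corollary~\ref{cor:daggerphasedbiproducts} for the phased dagger biproducts and positive-freeness, noting in passing that positive cancellation is the advertised stronger property mentioned in the remark after the definition. Assembling all of these — dagger compact, dagger kernels, state habitation, dagger normalisation, homogeneity, finite phased dagger biproducts with positive-free phases, unique unitary scalar — yields that $\catC_\pure$ is a pre-quantum category.

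I do not expect a genuine obstacle here: the proposition is essentially a bookkeeping statement, repackaging Proposition~\ref{prop:OpTheoryProperties} and Corollary~\ref{cor:daggerphasedbiproducts} against the new definition. The only point requiring a moment's care is making sure each property is verified \emph{internally} to $\catC_\pure$ rather than merely in $\catC$ — e.g.\ that the normalising scalar and the normalising isometry both lie in $\catC_\pure$, and that the kernels and cokernels used are the ones in $\catC_\pure$ — but each of these is already guaranteed by the cited results (purity of scalars, purity of kernels, closure of $\catC_\pure$ under the dagger and under composition). So the proof is a short citation-chain; the heavy lifting was done in establishing Corollary~\ref{cor:daggerphasedbiproducts} and Proposition~\ref{prop:OpTheoryProperties}.
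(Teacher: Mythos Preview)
Your proposal is correct and matches the paper's approach exactly: the paper's own ``proof'' is the single sentence ``Now from the results of the previous section, essential uniqueness and the CP axiom, we immediately have the following,'' and you have simply (and accurately) unpacked that sentence by walking through each clause of the pre-quantum definition and citing Proposition~\ref{prop:OpTheoryProperties}, Corollary~\ref{cor:daggerphasedbiproducts}, and the CP/essential-uniqueness rule~\eqref{eq:EU-CP-combine}. The one place worth tightening is dagger normalisation: you should make explicit that for a pure state $\psi$ the normalising causal state is $\img(\psi)$, which is a dagger kernel and hence pure and isometric---this is how Proposition~\ref{prop:kerpure}~\ref{enum:kerpure2} actually produces the factorisation, and it is what guarantees the decomposition lives in $\catC_\pure$ rather than merely in $\catC$.
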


Just as $\Hilb$ is typically studied in place of $\HilbP$, we will be able to learn more by passing from $\catC_\pure$ to a category with proper biproducts.

\begin{proposition} \label{prop:QuantWOGToQuantBiprod}
Let $\catB$ be a pre-quantum category. Then $\plusIdag{\catB}$ is a quantum category, with its canonical choice of global phases $\mathbb{P}$ consisting of its unitary scalars.
\end{proposition}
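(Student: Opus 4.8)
The plan is to verify each clause in the definition of a quantum category for $\plusIdag{\catB}$, using the earlier results on the $\plusIdag{-}$ construction together with the specific features of a pre-quantum category. First I would recall that by Corollary~\ref{cor:daggerbiproducts} (applied via Lemma~\ref{lem:DagBiprodPlusI}, noting that $I$ is a phase generator by Lemma~\ref{lem:deleter_cancellation_modified} and its dual) the category $\plusIdag{\catB}$ is a dagger monoidal category with distributive finite dagger biproducts, and that $[-] \colon \plusIdag{\catB} \to \catB$ is a strict dagger monoidal functor preserving biproducts and reflecting zero morphisms and isomorphisms. Since $\catB$ is compact with finite phased dagger biproducts, every object has a state which is a local isometry (any isometric state works, using dagger normalisation and state habitation), so Proposition~\ref{prop:constr-compact} applies and $\plusIdag{\catB}$ is dagger compact. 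It remains to transfer dagger kernels, state habitation, dagger normalisation and homogeneity from $\catB$ to $\plusIdag{\catB}$, and to identify the global phases with the unitary scalars.

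For dagger kernels: given $f \colon \obb{A} \to \obb{B}$ in $\plusIdag{\catB}$, it is diagonal in $\catB$ with $[f] \colon A \to B$, and I would show that a dagger kernel of $f$ is built from $\ker([f])$ in $\catB$ by the construction in the proof of Proposition~\ref{prop:kernelsGiveBiproducts} — more directly, since $f$ preserves $\pcoproj_I$, the object $\Ker([f]) \pbiprod I$ with the evident coprojection is a dagger kernel of $f$; the universal property in $\plusIdag{\catB}$ follows from that in $\catB$ after composing with $\pcoproj_I$ and using diagonality. For state habitation in $\plusIdag{\catB}$: a non-zero object $\obb{A} = A \pbiprod I$ has $\pcoproj_I \colon I \to \obb{A}$ as a non-zero state (it is non-zero since $[-]$ reflects zeroes and $[\pcoproj_I]$ is an iso $I \to A$ only when $A \simeq 0$; in general $\pcoproj_I$ is a split mono so $\pcoproj_I = 0$ would force $\obb{A} \simeq 0$). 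Dagger normalisation lifts because $[-]$ is a wide full dagger functor: given a non-zero state of $\obb{A}$, its image in $\catB$ normalises as $\sigma \circ r$, and I lift $\sigma$ to an isometric state of $\obb{A}$ (isometry is detected by $[-]$ since $[-]$ preserves daggers and is faithful on the relevant homsets modulo phases, and phases are unitary by positive-freeness) and $r$ to a scalar. Homogeneity for $\plusIdag{\catB}$ follows from homogeneity in $\catB$ by the same lifting argument, again using that a unitary in $\catB$ lifts to a unitary in $\plusIdag{\catB}$ and that the resulting equation in $\catB$, involving only diagonal morphisms preserving $\pcoproj_I$, reflects back along the faithful-up-to-phase functor $[-]$ — with the phase ambiguity absorbed into the choice of $U$, exactly as in the proof of Theorem~\ref{thm:DeduceCPM}\ref{enum:EUP}.

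Finally, for the identification of global phases: by Lemma~\ref{lem:phasesAreScalars} the canonical global phases of $\plusIdag{\catB}$ are precisely the phases on $\obb{I} = I \pbiprod I$ viewed as scalars, and by positive-freeness of phases in $\catB$ every such phase is unitary in $\plusIdag{\catB}$. Conversely, any unitary scalar $u$ of $\plusIdag{\catB}$ has $[u]$ a unitary scalar of $\catB$, hence $[u] = \id{I}$ since $\id{I}$ is the only unitary scalar of $\catB$; thus $u$ is a phase, so $u \in \mathbb{P}$. Hence $\mathbb{P}$ equals the set of unitary scalars of $\plusIdag{\catB}$, and in particular $\plusIdag{\catB}$ has dagger biproducts — making it a quantum category. (The positive-cancellation remark follows separately from Lemma~\ref{lem:pos-cancellation} once one observes, as in Corollary~\ref{cor:daggerphasedbiproducts}, that positive-freeness plus the CP-style cancellation available in a pre-quantum category yields positive cancellation; but this is not needed for the statement.)

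\textbf{Main obstacle.} I expect the delicate point to be the faithfulness-up-to-phase bookkeeping: the functor $[-]$ is not faithful, only faithful modulo phases, so every time I "reflect an equation back from $\catB$" I must check that the phase discrepancy is harmless — either because it is absorbed into a choice of unitary (homogeneity, kernel universal property) or because positive-freeness forces it trivial (detecting isometries and unitaries). Getting the quantifier order right in the homogeneity lift — producing a single unitary $U$ on $\obb{B}$ that works, rather than one that works only after an unspecified phase — is the step most likely to require care, and it is exactly the kind of argument already carried out in Theorem~\ref{thm:DeduceCPM}, which I would follow closely.
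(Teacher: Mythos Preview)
Your approach is essentially the same as the paper's: establish dagger compactness and dagger biproducts from the earlier results on $\plusIdag{-}$, then lift dagger kernels, dagger normalisation and homogeneity from $\catB$ using that $[-]$ is a full dagger monoidal functor which, thanks to positive-freeness, reflects isometries and unitaries; finally identify $\mathbb{P}$ with the unitary scalars exactly as you do. The paper's proof is terser but follows the same line, noting in particular that $[f^\dagger \circ f] = \id{} \implies f^\dagger \circ f = \id{}$ is the key consequence of positive-freeness, and that $[f] = 0 \implies f = 0$ is what makes the kernel lift work.

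One small correction: your state habitation argument is confused. The morphism $\pcoproj_I \colon I \to \obb{A}$ lives in $\catB$, not in $\plusIdag{\catB}$; a state of $\obb{A}$ in $\plusIdag{\catB}$ must have domain the monoidal unit $\obb{I} = I \pbiprod I$, so $\pcoproj_I$ is not a candidate. The fix is immediate and in the spirit of everything else you wrote: if $\obb{A}$ is non-zero then $A$ is non-zero (since $\obb{0}$ is the zero object), so by state habitation in $\catB$ there is a non-zero $\psi \colon I \to A$, and by fullness of $[-]$ this lifts to a state $f \colon \obb{I} \to \obb{A}$ with $[f] = \psi \neq 0$, hence $f \neq 0$. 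The paper does not spell this out, treating it (along with dagger normalisation and homogeneity) as ``straightforwardly'' inherited.
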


\begin{proof}
Biproducts in a compact category are distributive by Lemma~\ref{lem:comp-closed}. Hence by Corollary~\ref{cor:daggerbiproducts} and Proposition~\ref{prop:constr-compact} $\plusIdag{\catB}$ is dagger compact with dagger biproducts, and we may identify $\catB$ with its category $\plusIdag{\catB}_\mathbb{P}$ of equivalence classes $[-]$ under $f \sim g$ whenever $f = u \cdot g$ for $u \in \mathbb{P}$, with all such $u$ being unitary. In fact every unitary scalar $u$  in $\plusIdag{\catB}$ has that $[u]$ is unitary in $\catB$ and so $[u] = \id{I}$, giving $u \in \mathbb{P}$. 

Now since all phases are positive-free we have $[f^\dagger \circ f] = \id{} \implies f^\dagger \circ f = \id{}$ for all morphisms $f \in \plusIdag{\catB}$. In particular a morphism $f$ in $\plusIdag{\catB}$ is an isometry or unitary iff $[f]$ is in $\catB$. This lets one straightforwardly deduce dagger normalisation and homogeneity in $\plusIdag{\catB}$ using that they hold in $\catB$. Noting that $[f] = 0 \implies f = 0$ it follows that if $[f] = \ker([g])$ in $\catB$ then $f = \ker(g)$ in $\plusIdag{\catB}$, and so $\plusIdag{\catB}$ has dagger kernels.
\end{proof}

\begin{examples}
$\FHilbP$ is a pre-quantum category, with homogeneity easily seen to follow from the polar decomposition of a complex matrix. Hence by the previous result $\FHilb \simeq \plusIdag{\FHilbP}$ is a quantum category. 

Note that in contrast homogeneity fails in $\Hilb$; for example on $l^2(\mathbb{N})$ the shift operator $(a_0, a_1, \dots) \mapsto (0, a_0, \dots)$ is an isometry but not unitary.
\end{examples}

Quantum categories have a rich structure, generalising that of $\FHilb$, which we now explore. Recall that since they have biproducts they come with an addition $+$ on morphisms, generalising the superpositions in $\Hilb$. In fact they surprisingly also come with a notion of subtraction.

\begin{proposition} \label{prop:quant-cat-properties}
In any quantum category $\catA$, the following hold.
\begin{enumerate}[label=\arabic*., ref=\arabic*]
\item \label{enum:purepropnegatives}
Every morphism $f$ has an additive inverse $-f$;
\item \label{enum:daggerequalisers}
Every pair of morphisms $f, g$ have a \emps{dagger equaliser} in the sense of~\cite{vicary2011categorical};
\item \label{enum:dagmonodagkernel}
Every isometry is a kernel; 
\item \label{enum:decomp-property}
For every kernel $k \colon K \to A$ the morphism $[k, k^{\bot}] \colon K \biprod K^{\bot} \to A$ is unitary;
\item \label{enum:wellpted}
\emps{Well-pointedness}: ($f \circ \psi = g \circ \psi$ $\forall$ states $\psi$) $\implies$ $f =g$; 
\item \label{enum:bounded}
Every morphism $f \colon A \to B$ has a \emps{bound} in the sense of~\cite{heunen2009embedding}: a scalar $s$ such that for every state $\psi$ of $A$ we have $\psi^{\dagger} \circ f^{\dagger} \circ f \circ \psi = (s^{\dagger} \circ s) \circ (\psi^{\dagger} \circ \psi) + r$ for some positive scalar $r$.
\end{enumerate}
\end{proposition}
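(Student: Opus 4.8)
The plan is to establish each of the seven properties of a quantum category $\catA$ in roughly the order listed, since later parts will typically rely on earlier ones. The overall strategy is to bootstrap from the phased-biproduct structure in $\catB = \catA/\!\!\sim$ (or more directly from the dagger biproducts already present in $\catA$) together with dagger normalisation and homogeneity, mimicking the familiar arguments in $\FHilb$ but keeping everything abstract.

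First I would do \ref{enum:purepropnegatives}, the existence of additive inverses. The natural approach is to build a ``phase'' scalar playing the role of $-1$: using the object $B = I \pbiprod I$ (really $I \biprod I$ in $\catA$), consider the unitary $\sigma$ swapping the two copies and the morphism $\triangledown = [\id{I},\id{I}] \colon I \biprod I \to I$; then the endomorphism $u$ obtained by conjugating a reflection, or more simply the scalar arising from $\triangledown \circ \sigma' \circ \coproj$ for a suitable rotation $\sigma'$, should give a scalar $\iota$ with $\iota + \id{I} = 0$ after applying homogeneity and the biproduct equations. Concretely, in any category with biproducts and a ``rotation'' unitary on $I\biprod I$ one can manufacture $-\id{I}$; homogeneity is what guarantees the needed rotation exists as a unitary. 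Once $-\id{I}$ exists we set $-f := (-\id{}) \cdot f$ and the enrichment in commutative monoids upgrades to enrichment in abelian groups. I expect \emph{this} to be the main obstacle: producing a genuine additive inverse for the identity scalar from only biproducts, homogeneity, and the dagger is the crux, and it is what turns the semiring of scalars into a ring — everything downstream (dagger equalisers via kernels of differences, the bound computation) leans on it.

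With subtraction in hand, \ref{enum:daggerequalisers} is routine: the dagger equaliser of $f,g$ is $\ker(f - g)$, which exists and is a dagger kernel, and one checks it satisfies the dagger-equaliser universal property of \cite{vicary2011categorical} exactly as kernels satisfy the equaliser property for $(h,0)$. For \ref{enum:dagmonodagkernel}, that every isometry $i\colon A\to B$ is a kernel: write $i = \img(i) \circ e$ with $e$ zero-epic; since $i$ is an isometry, $e^\dagger \circ e = \img(i)^\dagger \circ \img(i)$, and using Lemma~\ref{lem:dagzero}\ref{enum:dag-law} together with the splitting one deduces $e$ is unitary, so $i$ is a dagger kernel up to unitary; alternatively use homogeneity to compare $i$ with the kernel $\img(i)$ directly. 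Part \ref{enum:decomp-property} then follows: $k$ and $k^\bot = \coker(k)^\dagger$ are orthogonal isometries, so $[k,k^\bot]$ is an isometry (here one invokes positive-freeness / the orthogonality computation showing $[k,k^\bot]^\dagger[k,k^\bot] = \id{}$), and it is also zero-epic since its cokernel kills both $k$ and $k^\bot = \ker(\coker k)^{\dagger\dagger}$ and these jointly generate; being a zero-epic isometry it is unitary by \ref{enum:dagmonodagkernel} and Lemma~\ref{lem:dagzero}.

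For \ref{enum:wellpted}, well-pointedness: by state habitation every non-zero object has a state, and by \ref{enum:purepropnegatives} it suffices to show $f \circ \psi = 0$ for all states $\psi$ implies $f = 0$; take the kernel $\ker(f)$, note $\coim(f)$ is then the whole object or, using that $\img$ of any non-zero state lands inside a non-zero subobject with its own state, iterate to force $\Img(f) \simeq 0$, hence $f = 0$. Finally \ref{enum:bounded}: fix $f \colon A \to B$ and consider $g := f^\dagger \circ f \colon A \to A$, a positive endomorphism; the claim is that there is a scalar $s$ with $\psi^\dagger \circ g \circ \psi - (s^\dagger\circ s)\cdot(\psi^\dagger\circ\psi)$ positive for all states $\psi$. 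The plan is to take $s$ to be (the square root of, if available — but note we only need $s^\dagger\circ s$, a positive scalar) the largest positive scalar dominated by $g$ in the sense of the order; concretely, use dagger normalisation to reduce to isometric $\psi$, so that $\psi^\dagger\circ\psi = \id{I}$ and the requirement becomes $\psi^\dagger \circ g \circ \psi = s^\dagger\circ s + r$; then pick $s^\dagger \circ s$ to be $\discard{}$-independent data extracted from $g$, e.g. via the kernel decomposition $A \simeq \Img(\text{something}) \biprod \dots$, or simply argue that since $\catA$ embeds (via the earlier machinery) into a $\Quant{S}$-like category one may take the diagonal of $g$ in a basis. The cleanest route is probably: $g$ positive means $g = h^\dagger \circ h$, and using the biproduct decomposition of $A$ together with well-pointedness reduce to the scalar case, where a positive scalar $t = \psi^\dagger g \psi$ trivially has the form $t = t + 0$ with $t$ positive, giving $s$ with $s^\dagger s = t$ by positivity of scalars — here I would lean on the fact, provable from homogeneity applied to a square-rooting argument, that every positive scalar is itself of the form $s^\dagger\circ s$, which is immediate from the definition of positive. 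I would flag that the precise extraction of a \emph{single} $s$ working uniformly over all $\psi$ is the delicate point of \ref{enum:bounded} and would be handled by passing to $\plusIdag{\catA}$ and using its matrix structure over the ring $S$.
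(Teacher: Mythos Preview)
Parts \ref{enum:purepropnegatives}--\ref{enum:wellpted} of your outline match the paper's approach. For \ref{enum:purepropnegatives} your description is still too vague to count as a proof: the paper's concrete construction normalises $\Delta = \langle \id{I}, \id{I} \rangle \colon I \to I \biprod I$ via dagger normalisation as $\Delta = \psi \circ s$ with $\psi$ isometric, uses homogeneity to obtain a unitary $U$ with $U \circ \coproj_1 = \psi$, and reads off $U \circ \coproj_2 = \langle a, b \rangle$; orthogonality of the $\coproj_i$ gives $a + b = \Delta^\dagger \circ U \circ \coproj_2 = s^\dagger \circ \pproj_1 \circ \coproj_2 = 0$, while isometry gives $a^\dagger a + b^\dagger b = \id{I}$, so $t := a^\dagger b + b^\dagger a$ satisfies $\id{I} + t = (a+b)^\dagger(a+b) = 0$. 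Your swap morphism alone produces nothing, and ``a suitable rotation $\sigma'$'' is not yet an argument --- the point is precisely that homogeneity applied to the normalised $\Delta$ manufactures the rotation for you.

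The genuine gap is part \ref{enum:bounded}. There is no discarding $\discard{}$ in a quantum category, and ``passing to $\plusIdag{\catA}$'' is confused --- that construction turns a pre-quantum category into a quantum one, and $\catA$ already has honest biproducts. More importantly you miss the actual mechanism: the uniform bound comes directly from \emph{compactness}. Reduce via dagger normalisation to $\psi$ isometric, hence a kernel by \ref{enum:dagmonodagkernel}; then \ref{enum:decomp-property} gives $\id{A} = \psi \circ \psi^\dagger + c^\dagger \circ c$ with $c = \coker(\psi)$. Applying the categorical trace (close the loop using the cup and cap on $A$) to $f^\dagger f$ against this decomposition exhibits a single fixed scalar, independent of $\psi$, as $\psi^\dagger f^\dagger f \psi$ plus the trace of $(f c^\dagger)^\dagger(f c^\dagger)$, and the latter is positive because the trace of a positive morphism in a dagger compact category is positive. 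That trace scalar is your $s^\dagger s$; nothing delicate is required. Your proposed routes via a ``largest positive scalar'' (no order has been set up), or via an embedding into a matrix category, are both unnecessary and, as stated, circular with respect to the reconstruction that this proposition is meant to feed into.
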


\begin{proof}
\ref{enum:purepropnegatives}. 
It suffices to find a scalar $t$ with $t + \id{I} = 0$, since then for all $f$ we have $f + (t \cdot f) = (\id{I} + t) \cdot f = 0$. 
As is standard we write $\langle a_1, a_2 \rangle \colon I \to I \biprod I$ for the unique state with $\pproj_i \circ \langle a_1, a_2 \rangle = a_i $ for $i=1,2$. Now let
\[
\begin{tikzcd}[column sep = huge]
I \rar{\Delta = \langle \id{I}, \id{I} \rangle} & I \biprod I
\end{tikzcd}
\]
  have $\Delta = \psi \circ s$ for some scalar $s$ and isometric state $\psi$. By homogeneity there is a unitary $U$ with $U \circ \pcoproj_1 = \psi$. Then let $\langle a, b \rangle = U \circ \pcoproj_2$. Since $U \circ \pcoproj_2$ is an isometry we have $a^\dagger \circ a + b^\dagger \circ b = \id{I}$ and also 
\begin{align*}
a + b 
&=
\Delta^{\dagger} \circ U \circ \pcoproj_2
\\ 
&= 
(s^{\dagger} \circ \pproj_1 \circ U^{\dagger}) \circ (U \circ \pcoproj_2)
\\ 
&= 
s^{\dagger} \circ (
\pproj_1^\dagger \circ \pcoproj_2 
)
= 
0
\end{align*}
Then $t = a^\dagger \circ b + b^\dagger \circ a$ is the required scalar since
\begin{align*}
\id{I} + t
&=
a^\dagger \circ a + b^\dagger \circ b + a^\dagger \circ b + b^\dagger \circ a 
\\
&=
(a + b)^{\dagger} \circ (a + b)
=
0
\end{align*}

\ref{enum:daggerequalisers}. 
This follows immediately with $f, g$ having dagger equaliser $\ker(f-g)$.

\ref{enum:dagmonodagkernel}. 
Thanks to~\ref{enum:purepropnegatives} a morphism $m$ is monic iff $\ker(m) = 0$. But then any isometry $i$ has $i = \img(i) \circ e$ with $\coker(e) = 0$, and so dually $e$ is an epimorphism. But since $i$ and $\img(i)$ are isometries, so is $e$, making it unitary and $i$ a kernel. 

\ref{enum:decomp-property}. 
$i = [k,k^{\bot}]$ is an isometry since $k$ and $k^{\bot}$ are orthogonal isometries. But if $f \circ i = 0$ then $f \circ k = 0$ and $f \circ k^{\bot} = 0$, so that $\img(f) = 0$ giving $f = 0$. Hence as in the previous part $i$ is epic, and so unitary. 

\ref{enum:wellpted}. 
Suppose that $f \circ \psi = g \circ \psi$ for all states $\psi$. Then $h = (f-g)$ has $h \circ \psi = 0$  and so $\coim(h) \circ \psi = 0$ for all states $\psi$. But if $\CoIm(h)$ is non-zero then it possesses a non-zero state $\phi$, and then $\coim(h) \circ \coim(h)^{\dagger} \circ \phi = \phi \neq 0$, a contradiction. Hence $\coim(h) = 0$ so that $h = 0$ and $f = g$.

\ref{enum:bounded}. 
Thanks to dagger normalisation it suffices to consider when $\psi$ is an isometry and hence a kernel. Then letting $c = \coker(\psi)$, by Proposition~\ref{prop:quant-cat-properties}~\ref{enum:decomp-property} we have $\id{A} = \psi \circ \psi^{\dagger} + c^\dagger \circ c$ so that:
\[
\scalebox{0.8}{\input{./figures/Trace-pic.tikz}}
\]
since the right-hand scalar is positive, we may take as $s$ the left-hand side scalar.
\end{proof}

\begin{remark}[Hilbert Categories]
Properties~\ref{enum:daggerequalisers},~\ref{enum:dagmonodagkernel} and~\ref{enum:bounded} and the presence of dagger biproducts make any quantum category a \emps{Hilbert category} in the sense of Heunen~\cite{heunen2009embedding}. By well-pointedness and~\cite[Theorem~4]{heunen2009embedding} this means that when $\catA$ is locally small and has that its ring of scalars $S$ is a field of at most continuum cardinality, there is a lax dagger monoidal embedding 
\[
\catA \hookrightarrow \Hilb
\]
up to some isomorphism of $S$. We will not rely on this result explicitly, but it would be interesting to further explore connections between our results and Heunen's.
\end{remark}

We can now in fact precisely characterise theories $\catC$ satisfying the operational principles in terms of quantum categories. Call a pre-quantum or quantum category \indef{non-trivial} \index{non-trivial!quantum category} \index{non-trivial!pre-quantum category} when it has $\id{I} \neq 0$.

\begin{proposition} \label{prop:quant-to-theory}
Let $\catA$ be a non-trivial quantum category. Then $\CPM(\catA)$ forms a dagger theory satisfying the operational principles. 
\end{proposition}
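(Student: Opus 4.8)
The plan is to verify that $\catC := \CPM(\catA)$ satisfies each of the four operational principles in Definition~\ref{def:op-principles-concise}, together with non-triviality, by transporting structure along the doubling functor $\Dbl{(-)} \colon \catA \to \CPM(\catA)$ and using the properties of quantum categories established in Proposition~\ref{prop:quant-cat-properties}. First I would record the basic setup: $\CPM(\catA)$ is dagger compact (Definition~\ref{def:CPM}) with discarding, and since $\catA$ has dagger biproducts, Proposition~\ref{prop:addition-in-CPM} gives $\CPM(\catA)$ an addition, hence zero morphisms; non-triviality of $\catA$ ($\id{I} \neq 0$) immediately gives non-triviality of $\CPM(\catA)$, since a doubled identity is nonzero. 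The subcategory $\Dbl{\catA}$ forms an environment structure on $\CPM(\catA)$ by Example~\ref{examples:Purif}~\ref{examples:purif-in-CPM} together with the discussion in Section~\ref{sec:CP}, so $\catC_\pure = \Dbl{\catA}$ and the CP axiom holds; this is most of principle~\ref{princ:strongpurif}. It remains to check within that principle that every non-zero object has a causal $\otimes$-pure state and that purifications are essentially unique.

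For the causal pure state, I would use state habitation in $\catA$: a non-zero object $A$ has a non-zero state, which by dagger normalisation factors as $\sigma \circ r$ for an isometry $\sigma$; then $\Dbl{\sigma}$ is a causal pure state of $A$ in $\CPM(\catA)$ (doubling an isometry gives a causal map, since $\discard{} \circ \Dbl{\sigma}$ corresponds to $\sigma^\dagger \circ \sigma = \id{}$). For essential uniqueness of purifications, I would invoke Theorem~\ref{thm:DeduceCPM}~\ref{enum:EUP}: one checks that $\catA$ has homogeneous kernels (immediate from homogeneity, since any two causal kernels $k,l$ of the same type satisfy $\discard{} \circ k = \discard{} \circ l$, i.e. $k^\dagger \circ k = \id{} = l^\dagger \circ l$, so homogeneity supplies a unitary $U$ with $U \circ k = l$), that $\catC$ satisfies the internal isomorphism property (using Proposition~\ref{prop:quant-cat-properties}~\ref{enum:decomp-property} and the structure of internal effects on $\Dbl{\catA}$), and that $\Dbl{\catA}$ has causal states; Theorem~\ref{thm:DeduceCPM} then yields that $\Dbl{\catA}$ is an environment structure with essential uniqueness. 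Principle~\ref{princ:ker} — dagger kernels in $\CPM(\catA)$ — follows from Example~\ref{examples:kernels}~\ref{ex:CPMD-kernels}, which shows $\CPM(\catA)$ has causal dagger kernels whenever $\catA$ is dagger compact with dagger kernels (note the hypothesis $f^\dagger \circ f = 0 \implies f = 0$ needed there is Lemma~\ref{lem:dagzero}~\ref{enum:dag-law}, valid since $\catA$ has dagger kernels); causal complementedness of these kernels then reduces, via Proposition~\ref{prop:coarse-graining}, to verifying~\eqref{eq:causally-complemented}, namely $\discard{} \circ \Dbl{k} + \discard{} \circ \Dbl{k^\bot} = \discard{}$, which is exactly the doubled form of $[k,k^\bot] \colon K \biprod K^\bot \to A$ being unitary (Proposition~\ref{prop:quant-cat-properties}~\ref{enum:decomp-property}).

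Principle~\ref{princ:cond}, conditioning, is the one I would handle last and expect to be the least mechanical: by Proposition~\ref{prop:coarse-graining}, in the presence of the other principles conditioning is equivalent to the existence of a (unique) addition operation, and $\CPM(\catA)$ already has the addition of Proposition~\ref{prop:addition-in-CPM}; alternatively one constructs the conditioned map directly from the biproduct structure of $\catA$ via $f \mapsto \Dbl{[f_0, f_1]}$-style formulas. Principle~\ref{princ:pureexcl}, pure exclusion, I would obtain from the second bullet of the examples following Lemma~\ref{lem:Pure-Excl-Is-Easy} (or its analogue): a pure state $\Dbl{\psi}$ of a non-trivial object is kernel-pure because in $\catA$ every non-zero state $\psi$ factors as $k \circ r$ for a dagger-kernel state $k$ (by dagger normalisation and Proposition~\ref{prop:quant-cat-properties}~\ref{enum:dagmonodagkernel}, which says every isometry is a kernel) with $r$ zero-epic (a nonzero scalar, hence zero-cancellative by zero-cancellativity of scalars in a quantum category); then $\Img(\Dbl{\psi})$ is trivial unless $A$ is, and Lemma~\ref{lem:PureExclusion} gives pure exclusion. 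The main obstacle I anticipate is assembling the hypotheses of Theorem~\ref{thm:DeduceCPM} for essential uniqueness — in particular cleanly deriving the internal isomorphism property for $\CPM(\catA)$ from the quantum-category axioms on $\catA$ — since this is where the interplay between the doubling functor, internal effects, and the decomposition $\id{A} = \psi\psi^\dagger + c^\dagger c$ must be made precise; everything else is a routine transport of identities through $\Dbl{(-)}$.
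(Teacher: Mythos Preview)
Your proposal has the right overall shape but contains two genuine gaps, and in one place it takes an unnecessarily hard route compared with the paper.

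First, the identity $\catC_\pure = \Dbl{\catA}$ is asserted as if it follows immediately from $\Dbl{\catA}$ being an environment structure; it does not. An environment structure is merely a subcategory into which all morphisms dilate and which satisfies the CP axiom --- nothing a priori forces it to coincide with the $\otimes$-pure morphisms. This identification is in fact the main work of the paper's proof: one uses Proposition~\ref{prop:pure-lem}, which requires showing that $\Dbl{\catA}$ is $\otimes$-complete, i.e.\ that whenever $\rho \otimes \sigma \in \Dbl{\catA}$ for $\sigma$ causal and $\rho$ non-zero, then $\rho \in \Dbl{\catA}$. The paper establishes this via well-pointedness in $\catA$ (Proposition~\ref{prop:quant-cat-properties}~\ref{enum:wellpted}) to find an effect $\psi$ with $\Dbl{\psi} \circ \sigma \neq 0$, combined with the fact that every state in $\catA$ is kernel-pure. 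You skip this entirely.

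Second, your non-triviality argument is incorrect: you need an object $A$ for which $\discard{A}$ is not an isomorphism, not merely a non-zero object. In $\CPM(\catA)$ the unit $I$ is trivial since $\discard{I} = \id{I}$. The paper instead shows that $A = I \biprod I$ is non-trivial, arguing that if $\discard{A}$ were unitary in $\CPM(\catA)$ one would obtain a unitary $I \biprod I \to I$ in $\catA$, forcing $\id{I} = 0$.

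Finally, your route to essential uniqueness via Theorem~\ref{thm:DeduceCPM} is a detour. That theorem requires the internal isomorphism property in $\CPM(\catA)$, which you correctly flag as an obstacle --- and indeed establishing it from scratch would be substantial. The paper bypasses this entirely: essential uniqueness of dilations with respect to $\Dbl{\catA}$ follows directly from homogeneity in $\catA$, since $\discard{} \circ \Dbl{f} = \discard{} \circ \Dbl{g}$ means $f^\dagger \circ f = g^\dagger \circ g$, whence $g = U \circ f$ for some unitary $U$. Once $\Dbl{\catA}$ is identified with the $\otimes$-pure morphisms (via the $\otimes$-completeness argument above), essential uniqueness of purifications follows immediately.
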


\begin{proof}
By Examples~\ref{examples:kernels}~\ref{ex:CPMD-kernels} and \ref{examples:Purif}~\ref{examples:purif-in-CPM} $\CPM(\catA)$ has dagger kernels and essentially unique dilations with respect to its environment structure $\Dbl{\catA}$, within which every object has a causal state by state habitation and dagger normalisation in $\catA$. To show that $\CPM(\catA)$ has strong purification, we need to show that a morphism belongs to $\Dbl{\catA}$ iff it is pure. 

By Proposition~\ref{prop:pure-lem} and compactness it suffices to show in $\CPM(\catA)$ that, for any non-zero state $\rho$ and causal state $\sigma$, that if $\rho \otimes \sigma \in \Dbl{\catA}$ then so does $\rho$. So suppose that this holds. It follows from well-pointedness in $\catA$ and the rule $f^\dagger \circ f = 0 \implies f = 0$ that there is some effect $\psi \in \catA$ for which 
\[
\scalebox{0.8}{\input{./figures/statesinA-alt.tikz}} \in \Dbl{\catA}
\]
is non-zero. Now by dagger normalisation and Proposition~\ref{prop:quant-cat-properties}~\ref{enum:dagmonodagkernel} in $\catA$ every state is kernel-pure, i.e.~of the form $k \circ s$ for some dagger kernel state $k$ and scalar $s$. So then in $\CPM(\catA)$ 
\[
\scalebox{0.8}{\input{./figures/statesinA3alt.tikz}}
\]
Since $\Dbl{k}$ is again a kernel in $\CPM(\catA)$ it follows from zero-cancellativity that $\img(\rho) = \Dbl{k}$, and so $\rho = \Dbl{k} \circ t$ for some scalar $t$. 
Then dagger normalisation in $\catA$ states that every scalar in $\CPM(\catA)$ belongs to $\Dbl{\catA}$. In particular so does $t$ and hence so does $\rho$, as required. 
Hence $\CPM(\catA)$ has strong purification.

In particular we've just seen that all scalars are pure, and so $\CPM(\catA)$ has normalisation, and by the CP axiom and Proposition~\ref{prop:quant-cat-properties}~\ref{enum:dagmonodagkernel} every causal pure state is a kernel. Hence by Lemma~\ref{lem:Pure-Excl-Is-Easy} pure exclusion holds also.

Next we show that non-triviality of $\catA$ ensures non-triviality of the dagger theory $\CPM(\catA)$. Let $A=I \biprod I$ in $\catA$. Then if $\discard{A}$ is an isomorphism in $\CPM(\catA)$ it is pure and hence unitary, giving a unitary $\psi = [a,b] \colon I \biprod I \to I$ in $\catA$. But $\psi$ being an isometry is equivalent to $a$ and $b$ being unitary scalars in $\catA$ with $a^{\dagger} \cdot b = 0$. But then $a= b =0$ and so $\id{I} = 0$, a contradiction.

Finally, by Proposition~\ref{prop:addition-in-CPM} the addition in $\catA$ provides $\CPM(\catA)$ with addition also. Moreover by Proposition~\ref{prop:quant-cat-properties}~\ref{enum:decomp-property} in $\catA$ all dagger kernels $k \colon K \to A$ satisfy 
\[
\scalebox{0.8}{\input{./figures/kersumrule.tikz}}
\]
which translates precisely to~\eqref{eq:causally-complemented} in $\CPM(\catA)$. Hence by Proposition~\ref{prop:coarse-graining} $\CPM(\catA)$ satisfies the remaining operational principles.
\end{proof}

\begin{theorem} \label{thm:mainPropToQuantumCat}
There are one-to-one correspondences between non-trivial:
\begin{itemize}
\item quantum categories $\catA$;
\item pre-quantum categories $\catB$;
\item dagger theories $\catC$ satisfying the operational principles;
\end{itemize}
up to equivalence, via $\catA = \plusIdag{\catB}$, $\catB = \catC_{\pure}$, $\catC = \CPM(\catA)$. Moreover, the equivalence $\catC \simeq \CPM(\catA)$ preserves addition.
\end{theorem}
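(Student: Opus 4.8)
The plan is to show the three assignments are mutually inverse up to equivalence by chaining the results already established. Well-definedness of each arrow is immediate: $\catB := \catC_\pure$ is a pre-quantum category by Proposition~\ref{prop:OpToquantWOGlobal}, $\catA := \plusIdag{\catB}$ is a quantum category by Proposition~\ref{prop:QuantWOGToQuantBiprod}, and $\CPM(\catA)$ satisfies the operational principles by Proposition~\ref{prop:quant-to-theory}; non-triviality is preserved throughout, since $\id{I}=0$ in any of these categories would force every morphism $f = \id{I}\cdot f$ to vanish and hence every object to be trivial. So it remains to check the round trips.

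For a theory $\catC$ satisfying the operational principles, I would argue as follows. Since $\catC_\pure$ is an environment structure on $\catC$ by principle~\ref{princ:strongpurif}, the cited result of~\cite{coecke2008axiomatic} gives an equivalence of dagger theories $\catC \simeq \CPM(\catC_\pure)$. By Corollary~\ref{cor:daggerphasedbiproducts} the category $\catC_\pure$ has the superposition properties, and it is dagger compact as an environment structure, so Lemma~\ref{lem:CPMsCoincide} gives $\CPM(\catC_\pure) \simeq \CPM(\plusIdag{\catC_\pure})$. Composing, $\catC \simeq \CPM(\plusIdag{\catC_\pure}) = \CPM(\catA)$, as required.

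For the round trip from a quantum category $\catA$, set $\catC := \CPM(\catA)$; the proof of Proposition~\ref{prop:quant-to-theory} shows that $\catC_\pure$ consists precisely of the morphisms of $\Dbl{\catA}$. I would then exhibit a dagger symmetric monoidal equivalence $G \colon \catA \to \plusIdag{\catC_\pure}$ sending $A$ to the phased dagger biproduct arising from $A \biprod I$ in $\catA$, with coprojections $\Dbl{\coproj_A}, \Dbl{\coproj_I}$, and $f$ to $\Dbl{f \biprod \id{I}}$. Faithfulness of $G$ uses that $\Dbl{f\biprod\id{I}} = \Dbl{g\biprod\id{I}}$ forces $f\biprod\id{I} = u\cdot(g\biprod\id{I})$ for a unitary scalar $u$ of $\catA$, and restricting to the $I$-summand yields $u = \id{I}$; fullness follows by adjusting a diagonal representative of a given morphism by a suitable diagonal scalar; and essential surjectivity is Lemma~\ref{lem:isoms}, since every object of $\plusIdag{\catC_\pure}$ is a phased dagger biproduct of $\Dbl{A}, \Dbl{I}$, hence isomorphic inside that category to $\Dbl{A \biprod I}$. (One could instead identify $\Dbl{\catA}$ with the quotient of $\catA$ by its unitary scalars and invoke Corollary~\ref{cor:daggerbiproducts}, but that route requires knowing those scalars form a \emph{positive-free} choice of global phases, which I do not expect to be obvious for a general quantum category — hence the preference for the explicit functor $G$.) The remaining correspondences, and the round trips involving pre-quantum categories, then drop out by composition; in particular $\CPM(\plusIdag{\catB})_\pure$ is the full image of doubling on $\plusIdag{\catB}$, which by Proposition~\ref{prop:QuantWOGToQuantBiprod} has unitary scalars equal to its canonical global phases, so it is the quotient $(\plusIdag{\catB})_\quotP \simeq \catB$ via Corollary~\ref{cor:daggerbiproducts}.

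Finally, the addition claim: $\CPM(\catA)$ carries the addition of Proposition~\ref{prop:addition-in-CPM} induced by the dagger biproducts of $\catA$, and transporting it along the equivalence $\CPM(\catA) \simeq \catC$ produces a commutative-monoid enrichment of $\catC$ compatible with $\circ$, $\otimes$ and $\dagger$; but by Proposition~\ref{prop:coarse-graining} the addition on any theory satisfying the operational principles is unique, so this transported addition is the given one and the equivalence preserves $+$. The main obstacle is the round trip from an abstract quantum category — concretely, establishing fullness and faithfulness of $G$ (equivalently, controlling how unitary scalars interact with biproducts) — while the rest of the argument is assembly of the cited propositions and lemmas.
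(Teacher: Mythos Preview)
Your proposal is correct; the main divergence from the paper is in the quantum-category round trip. The paper takes precisely the route you were wary of: it sets $\mathbb{P}$ to be all unitary scalars of $\catA$, uses homogeneity to show $\Dbl{f}=\Dbl{g}$ iff $f=u\cdot g$ for some unitary $u$ (so $\catA_\quotP \simeq \Dbl{\catA} = \CPM(\catA)_\pure$ as dagger monoidal categories), and then invokes Corollary~\ref{cor:daggerbiproducts} to conclude $\catA \simeq \plusIdag{\catA_\quotP}$. Your concern about positive-freeness is legitimate --- the paper leaves it implicit --- but it does hold in any non-trivial quantum category: if a unitary scalar $u$ has $u\cdot\id{A}$ positive with $A$ non-zero, composing with an isometric state shows $u$ itself is a positive scalar, hence self-adjoint, hence $u^2=1$; since the scalars form an integral domain with negatives (Proposition~\ref{prop:quant-cat-properties}~\ref{enum:purepropnegatives} and Lemma~\ref{lem:dagzero}) this forces $u=\pm 1$, and $u=-1$ would give $\langle 1,s\rangle^\dagger\langle 1,s\rangle=0$ and hence $1=0$. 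The paper's route is therefore shorter and delivers the pre-quantum case $\catB \simeq \catA_\quotP \simeq \CPM(\catA)_\pure$ as an immediate by-product, while your explicit functor $G$ trades that brevity for self-containedness. The $\catC$ round trip and the addition argument match the paper's.
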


\begin{proof}
The assignments are well-defined by Propositions~\ref{prop:OpToquantWOGlobal},~\ref{prop:QuantWOGToQuantBiprod} and~\ref{prop:quant-to-theory} along with the observation that if such a category $\catB$ is non-trivial then so is $\plusIdag{\catB}$.

 First let $\catA$ be a quantum category, and choose as global phases $\mathbb{P}$ all of its unitary scalars, writing $f \sim g$ when $f = u \cdot g$ for some $u \in \mathbb{P}$. Then by Corollary~\ref{cor:daggerbiproducts} we have $\catA \simeq \plusIdag{\catA_{\quotP}}$. On the other hand by homogeneity we in fact have $\Dbl{f} = \Dbl{g} \iff f \sim g$ since:
\[
\scalebox{0.8}{\input{./figures/Dbl-argument.tikz}}
\]
for some unitary scalar $u$. But as in the proof of Proposition~\ref{prop:quant-to-theory} we have $\Dbl{\catA} = \CPM(\catA)_\pure$, giving an equivalence $\catA_\quotP \simeq \CPM(\catA)_{\pure}$. Hence we obtain a dagger monoidal equivalence $\catA \simeq \plusIdag{\CPM(\catA)_{\pure}}$.  

Next, let $\catB$ be a pre-quantum category and consider the quantum category $\catA = \plusIdag{\catB}$. Then as above $\catB \simeq \catA_\quotP \simeq \CPM(\catA)_{\pure}$ as required. In particular by Corollary~\ref{cor:daggerphasedbiproducts} any pre-quantum category $\catB$ has the strong superposition properties.

Now if $\catC$ satisfies the operational principles, by Lemma~\ref{lem:CPMsCoincide} the functor $[-]$ extends to an equivalence of dagger theories $\CPM(\plusIdag{\catC_\pure}) \simeq \catC$. 

Finally we check that this equivalence preserves addition. Since we've seen that all kernels in $\CPM(\catA)$ are of the form $\Dbl{k}$ for a kernel $k$ in $\catA$, one may check that the addition in $\CPM(\catA)$ from Proposition~\ref{prop:addition-in-CPM} satisfies~\eqref{eq:causally-complemented} by Proposition~\ref{prop:quant-cat-properties}~\ref{enum:decomp-property}. Hence since this makes the operation unique by Proposition~\ref{prop:coarse-graining} it is preserved by any equivalence of dagger theories. 
\end{proof}

This is a strong result, since for general $\catC$ with an environment structure there may be many $\catA$ with $\catC \simeq \CPM(\catA)$. 

\subsection{The extended scalars}

Our characterisation of theories satisfying the operational principles motivates further study of the scalars in a quantum category $\catA$, which we describe as follows. 


\begin{definition} 
A \deff{phased ring} \index{phased ring} is a commutative involutive ring $(S, \dagger)$ 
which is an integral domain (with $a \cdot b = 0 \implies a = 0 \text{ or } b = 0$) 
such that $\forall a, b$
\[
  a^{\dagger} \cdot a + b^{\dagger} \cdot b = c^{\dagger} \cdot c
\] 
for some $c \in S$, with any such $c$ having $a = c \cdot d$ and $b = c \cdot e$ for some $d, e \in S$.
\end{definition}

\begin{examples}
$\mathbb{C}$ forms a phased ring, as does $\mathbb{R}$ under the trivial involution.
\end{examples}

\begin{proposition} \label{prop:scalars_are_phased_ring}
Let $\catA$ be a quantum category. Then $\catA(I,I)$ is a phased ring.
\end{proposition}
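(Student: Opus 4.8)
The plan is to verify each clause of the definition of a phased ring for $S := \catA(I,I)$, drawing on the structure already established for quantum categories in Proposition~\ref{prop:quant-cat-properties}. First, $S$ is a commutative monoid under composition (scalars always commute in any monoidal category), and it carries an involution $r \mapsto r^\dagger$ from the dagger structure on $\catA$; since $\catA$ has biproducts it has an addition $+$ on morphisms making each homset a commutative monoid, and by Proposition~\ref{prop:quant-cat-properties}~\ref{enum:purepropnegatives} every morphism, in particular every scalar, has an additive inverse, so $(S,+,\cdot)$ is a commutative involutive ring with $1 = \id{I}$. Distributivity of $\cdot$ over $+$ is just the enrichment axiom, and $\dagger$ is a ring involution since $(g\circ f)^\dagger = f^\dagger\circ g^\dagger$ and $(f+g)^\dagger = f^\dagger + g^\dagger$.

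Next I would show $S$ is an integral domain. Suppose $a \cdot b = 0$ for scalars $a,b$. This is exactly $a \otimes b = 0$ as a scalar (composition and tensor of scalars agree), and by Proposition~\ref{prop:quant-cat-properties} a quantum category is zero-cancellative — more directly, zero-cancellativity for quantum categories can be read off from the fact that every non-zero object has a non-zero state (state habitation) together with Lemma~\ref{lem:dagzero}~\ref{enum:zero-mult}, whose hypothesis is met because dagger normalisation turns any non-zero state into an isometric one. Hence $a = 0$ or $b = 0$, so $S$ has no zero divisors.

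The remaining clause — the ``phased'' condition that $a^\dagger\cdot a + b^\dagger\cdot b = c^\dagger\cdot c$ for some $c$, with any such $c$ dividing both $a$ and $b$ — is the substantive step, and it is where I expect the main work to lie. I would prove it by running the argument already used inside the proof of Proposition~\ref{prop:quant-cat-properties}~\ref{enum:purepropnegatives}: given scalars $a,b$, form the state $\langle a,b\rangle \colon I \to I\biprod I$; by dagger normalisation write $\langle a,b\rangle = \psi \circ c$ for an isometric state $\psi$ and scalar $c$; then $c^\dagger \cdot c = \langle a,b\rangle^\dagger\circ\langle a,b\rangle = a^\dagger\cdot a + b^\dagger\cdot b$, giving existence. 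For the divisibility clause, suppose $c$ is any scalar with $c^\dagger\cdot c = a^\dagger\cdot a + b^\dagger\cdot b$; I would use homogeneity applied to $\langle a,b\rangle$ and $\psi\circ c$ (which have equal ``$(-)^\dagger\circ(-)$''), obtaining a unitary $U$ on $I\biprod I$ with $\langle a,b\rangle = U\circ\psi\circ c$, whence composing with the biproduct projections $\pproj_1,\pproj_2$ yields $a = (\pproj_1\circ U\circ\psi)\cdot c$ and $b = (\pproj_2\circ U\circ\psi)\cdot c$, i.e. $c$ divides both. The one point needing care is matching up which morphism plays the role of $c$ in the homogeneity equation versus the normalisation factorisation — essentially reconciling that $c$ is determined only up to a unitary scalar, which is harmless since unitary scalars are invertible. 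Assembling these gives that $S$ is a phased ring.
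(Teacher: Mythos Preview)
Your proposal is correct and follows essentially the same approach as the paper: both establish the ring structure via biproducts and Proposition~\ref{prop:quant-cat-properties}~\ref{enum:purepropnegatives}, the integral domain property via Lemma~\ref{lem:dagzero}~\ref{enum:zero-mult}, and the phased condition by normalising $\langle a,b\rangle$ and then invoking homogeneity. The only minor variation is that for the divisibility of an arbitrary $c$ the paper applies homogeneity to the scalars $c$ and $e$ themselves (getting $e = c\cdot u$ for a unitary scalar $u$), whereas you apply it to the states $\langle a,b\rangle$ and $\psi\circ c$ on $I\biprod I$; both routes work and use the same ingredients.
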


\begin{proof}
$S = \catA(I,I)$ forms a commutative semi-ring since $\catA$ has dagger biproducts, and $S$ is a ring by Proposition~\ref{prop:quant-cat-properties}~\ref{enum:purepropnegatives} and an integral domain by Lemma~\ref{lem:dagzero}~\ref{enum:zero-mult}. 

Now given $a, b \in S$ let $\psi = \langle a, b \rangle \colon I \to I \biprod I$. Using normalisation let $\psi = \phi \circ c$ where $\phi$ is an isometry. Then 
 \[
 c^\dagger \cdot c = \psi^{\dagger} \circ \psi = a^\dagger \cdot a + b^\dagger \cdot b
 \]
 Furthermore $d = \pproj_1 \circ \phi \in S$ has $a = \pproj_1 \circ \psi = c \cdot d$, and similarly $c$ divides $b$. Moreover any other $e \in S$ with $e^{\dagger} \cdot e = c^{\dagger} \cdot c$ has $e = c \cdot u$ for a unitary $u$ by homogeneity, and so also divides $a$ and $b$.
\end{proof}

Phased rings provide us at last with our main examples of quantum categories. By a \indef{phased field} \index{phased field} we mean a phased ring which is also a field.

\begin{example} \label{example:MatofPhasedField}
Let $S$ be a phased field. Then $\Mat_S$ is a quantum category. 
In particular so are  $\Mat_{\mathbb{C}}$ and $\Mat_{\mathbb{R}}$. 
\end{example}

Hence, for any such $S$, the dagger theory $\Quant{S}$ satisfies the operational principles, as do $\Quant{\mathbb{C}}$ and $\Quant{\mathbb{R}}$.

\begin{proof}
We've seen that $\Mat_S$ is always dagger compact with dagger biproducts. We now establish dagger normalisation. 
For any state $\psi = (a_i)^n_{i=1} \colon 1 \to n$, since $S$ is a phased ring we have 
\[
\psi^\dagger \circ \psi = \sum^n_{i=1} a_i^\dagger \cdot a_i = a^\dagger \cdot a
\] 
for some $a \in S$. Then if $\psi \neq 0$ also $a \neq 0$ and so $\phi = (\frac{a_i}{a})^n_{i=1}$ is an isometry with $\psi = \phi \circ a$. We now show that $\Mat_S$ has dagger kernels. Note that the states on any object $n \in \mathbb{N}$ form the vector space $S^n$ and also come with the `inner product' 
\[
\langle \psi, \phi \rangle := \psi^{\dagger} \circ \phi
\] 
for $\psi, \phi \colon 1 \to n$. Since $S$ is a phased ring this satisfies $\langle \psi, \psi \rangle = 0 \implies \psi = 0$. 

Now for any morphism $M \colon n \to m$, the set $\{ \psi \mid M \circ \psi = 0 \}$ is a subspace of $S^n$ and so has a finite basis $\{\psi_i\}^r_{i=1}$ for some $r \leq n$. Using the well-known Gram-Schmidt algorithm (see e.g.~\cite[p.544]{cheney2009linear}) we may replace this by another basis $\{\phi_i\}^r_{i=1}$ which is orthonormal in that $\langle \phi_j^{\dagger}, \phi_i \rangle = \delta_{i,j}$. Then $k := (\phi_i)^r_{i=1} \colon r \to n$ in $\MatS$ is an isometry with $k = \ker(M)$. 

Next we verify homogeneity. Let $M, N \colon n \to m$ satisfy $M^{\dagger} \circ M = N^{\dagger} \circ N$. It follows that $\coim(M) = \coim(N)$ and so after restricting along these we may assume that $\ker(M) = \ker(N) = 0$. Now define a modified `inner product' by 
\[
\langle \psi, \phi \rangle' := \langle M \circ \psi, M \circ \phi \rangle = \langle N \circ \psi, N \circ \phi \rangle
\] 
Again this satisfies $\langle \psi, \psi \rangle' = 0 \implies \psi = 0$. Hence we may again apply the Gram-Schmidt algorithm to find an orthonormal basis $\{e_i\}^n_{i=1}$ with respect to $\langle - , - \rangle'$. Then $\{M \circ e_i\}^n_{i=1}$ and $\{N \circ e_i\}^n_{i=1}$ are each orthonormal collections of states of $m$ and so may be extended to orthonormal bases $\{\psi_i\}^m_{i=1}$ and $\{\phi_i\}^m_{i=1}$ respectively. Finally, any matrix $U \colon m \to m$ with $U \circ \psi_i = \phi_i$ for all $i$ is unitary and satisfies $U \circ M = N$.
\end{proof}

From the definition, we see that the positive elements $R=S^\pos$ of a phased ring are always closed under addition, forming a sub-semi-ring of $S$, and have nice properties: they have characteristic $0$, and that $a$ is divisible by $a + b$ for all $a, b$, hence coming with an embedding $\Qpos \hookrightarrow R$ of the positive rationals.

Under one extra assumption we obtain a converse to the above result, telling us when a quantum category $\catA$ arises as such a matrix category. Call a semi-ring $R$ \indef{bounded} \index{bounded semi-ring} when no element $r$ has that for all $n \in \mathbb{N}$ there is some $r_n \in R$ with $r = n + r_n$. For example $\Rpos$ and $\Qpos$ are certainly bounded. Boundedness is similar to the \emps{Archimedean} property for totally ordered groups~\cite{EncyMath}. 

\begin{lemma} \label{lem:Boundedequivalence}
Let $\catA$ be a quantum category and $S$ its ring of scalars. If $S^{\pos}$ is bounded then $\catA \simeq \Mat_{S}$.
\end{lemma}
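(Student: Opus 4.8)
The plan is to show that a quantum category $\catA$ with bounded positive scalars $R = S^{\pos}$ is equivalent to $\Mat_S$ by exhibiting a full, faithful, dagger symmetric monoidal functor $F \colon \Mat_S \to \catA$ which is essentially surjective. First I would fix, for the object $I \biprod I$ in $\catA$ (which exists since $\catA$ has dagger biproducts), an orthonormal pair of states $\ket{0}, \ket{1}$ given by the coprojections $\pcoproj_1, \pcoproj_2$, and more generally consider the bipowers $n \cdot I = I \biprod \dots \biprod I$. By Lemma~\ref{lem:comp-closed} biproducts in a compact category are distributive, so the scalars $S = \catA(I,I)$ form a commutative involutive semi-ring — in fact a phased ring by Proposition~\ref{prop:scalars_are_phased_ring} — and there is a full dagger monoidal embedding $\Mat_S \hookrightarrow \catA$ sending $n \mapsto n \cdot I$, just as recorded in Section~\ref{sec:Addition-Examples} and Chapter~\ref{chap:totalisation}. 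This functor is automatically faithful and, being monoidal and preserving biproducts, a dagger symmetric monoidal embedding. So the only thing left to prove is that \emph{every} object $A$ of $\catA$ is unitarily isomorphic to some bipower $n \cdot I$; equivalently, that $\catA$ has `enough dimension', i.e.\ every object decomposes as a finite dagger biproduct of copies of $I$.

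\textbf{Decomposing objects.} The key step is: for any non-zero object $A$, using state habitation pick a non-zero state $\psi \colon I \to A$, normalise it by dagger normalisation to an isometric (hence, by Proposition~\ref{prop:quant-cat-properties}~\ref{enum:dagmonodagkernel}, dagger-kernel) state $\sigma \colon I \to A$, and form $\sigma^{\bot} \colon A' \to A$. By Proposition~\ref{prop:quant-cat-properties}~\ref{enum:decomp-property} the morphism $[\sigma, \sigma^{\bot}] \colon I \biprod A' \to A$ is unitary, so $A \cong I \biprod A'$ with $A'$ of strictly smaller `size'. Iterating, one wants to conclude $A \cong n \cdot I$ after finitely many steps. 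This is exactly where boundedness of $R = S^{\pos}$ enters: I would define a `dimension' of $A$ as the scalar $\discard{A} \circ \discardflip{A} = \Tr(\id{A}) \in S$ — actually a positive scalar — and show it is additive over dagger biproducts, with $\dim(I) = \id{I}$, so $\dim(n \cdot I) = n \cdot \id{I} = \underbrace{\id{I} + \dots + \id{I}}_{n}$. If $A$ did not terminate as a finite bipower, the decomposition above would produce, for every $n$, a splitting $A \cong (n \cdot I) \biprod A_n$ with $A_n$ non-zero, whence $\dim(A) = n \cdot \id{I} + \dim(A_n)$ with $\dim(A_n) \in R$; boundedness of $R$ forbids this, giving termination. (One must check $\dim(A_n)$ is genuinely in $R$ and non-zero — non-zero follows since $A_n$ non-zero has a non-zero state by state habitation, forcing $\dim(A_n) = \sigma^\dagger\circ\sigma+\ldots$ to be non-zero by $f^\dagger\circ f = 0 \implies f = 0$, which holds in any quantum category via Lemma~\ref{lem:dagzero}~\ref{enum:dag-law}.)

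\textbf{The main obstacle.} The delicate point is making the `dimension' argument rigorous: I need that $\Tr(\id{A}) = \discard{A}\circ\discardflip{A}$ behaves additively, i.e.\ $\Tr(\id{A\biprod B}) = \Tr(\id A) + \Tr(\id B)$, which requires relating the discarding maps on a dagger biproduct to those on the summands — this should follow from the compact/dagger structure together with how $\discard{}$ distributes over $\biprod$ (cf.\ the biproduct examples in Section~\ref{sec:Addition-Examples}), but in a bare quantum category, where $\discard{}$ is extra data rather than canonical, I would instead argue more directly: use Proposition~\ref{prop:quant-cat-properties}~\ref{enum:bounded}, which gives every morphism a \emph{bound}, to control the scalars $\sigma_i^\dagger\circ\sigma_i = \id I$ accumulating in the decomposition, so that the obstruction to infinite iteration is precisely an unbounded positive scalar $\sum_{i=1}^{n}\id I$, contradicting boundedness of $R$. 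Once every object is a finite bipower of $I$, essential surjectivity of $\Mat_S \hookrightarrow \catA$ is immediate and the embedding upgrades to an equivalence $\catA \simeq \Mat_S$, preserving the dagger, the tensor, and the biproducts.
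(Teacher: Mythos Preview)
Your overall strategy matches the paper's exactly: embed $\Mat_S \hookrightarrow \catA$ via $n \mapsto n\cdot I$, then for any non-zero $A$ pick an isometric state (a dagger kernel by Proposition~\ref{prop:quant-cat-properties}~\ref{enum:dagmonodagkernel}), split $A \cong I \biprod A'$ via Proposition~\ref{prop:quant-cat-properties}~\ref{enum:decomp-property}, iterate, and invoke boundedness of $R=S^{\pos}$ to force termination.

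The one genuine gap is your ``dimension'' scalar. A quantum category carries no discarding, so $\discard{A}\circ\discardflip{A}$ is not available; and your fallback to Proposition~\ref{prop:quant-cat-properties}~\ref{enum:bounded} does not obviously produce a single element $r\in R$ with $r = n + r_n$ for every $n$. The fix is the notion you name but do not exploit: the \emph{compact} trace $\Tr(\id_A)$, i.e.\ the loop scalar $\varepsilon_A \circ \sigma_{A^*,A}\circ \eta_A$. By the dagger-dual axiom this equals $\varepsilon_A \circ \varepsilon_A^{\dagger}$, hence lies in $S^{\pos}$; and since dagger duals interact with dagger biproducts via $(A\biprod B)^* \cong A^* \biprod B^*$, one gets $\Tr(\id_{A\biprod B}) = \Tr(\id_A)+\Tr(\id_B)$ and $\Tr(\id_I)=\id_I$. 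Then $A \cong B_n \biprod (n\cdot I)$ yields $\Tr(\id_A)=\Tr(\id_{B_n})+n$ for all $n$, contradicting boundedness --- precisely the paper's computation.
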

\begin{proof}
Consider the full embedding $\MatS \hookrightarrow \catA$ given by $n \mapsto n \cdot I$. We now show that any object $A$ has a unitary $A \simeq n \cdot I$ for some $n \in \mathbb{N}$. If $A$ is a zero object we are done, otherwise there is an isometry $\psi \colon I \to A$, which is a kernel by Proposition~\ref{prop:quant-cat-properties}. Then by the same result, letting $B = \coker(\psi_1)$ the morphism $[\psi^{\bot}, \psi] \colon B \biprod I \simeq A$ is unitary. Setting $B_1 = B$ and proceeding similarly we get a sequence $B_1, B_2, \dots$ with $A \simeq B_n \biprod n \cdot I$ for each $n$. Then if $B_n \simeq 0$ for some $n$ we are done. Otherwise for all $n \in \mathbb{N}$
\[
\scalebox{0.8}{\input{./figures/dimAbig.tikz}}
\!
=
\ 
\scalebox{0.8}{\input{./figures/dimBnbig.tikz}}
= \  
\scalebox{0.8}{\input{./figures/dimBbig.tikz}}
+\  
\sum^n_{i=1}
\scalebox{0.8}{\input{./figures/dimSum.tikz}}
= \  
\scalebox{0.8}{\input{./figures/dimBbig.tikz}}
+
\
n
\]
contradicting boundedness.
\end{proof}

\paragraph{Towards real or complex structure}

Under another condition we can show that a phased ring $S$ resembles one of our motivating examples of $\mathbb{R}$ or $\mathbb{C}$. 

First, note that the semi-ring $R = S^\pos$ may be freely extended to a ring $\dring{R}$, the \indef{difference ring} \index{difference ring}\label{not:differencering} of $R$. Formally $\dring{R}$ consists of pairs $(a,b)$ of elements of $R$ after identifying
\[
(a,b) = (c,d) \iff a + d = b + c
\]
Addition and multiplication are defined in the obvious way when interpreting $(a,b)$ as `$a-b$'. For example $\dring{\Rpos} = \mathbb{R}$. Next, for any ring $S$ we write $S[i]$\label{not:adjoin-i} for the involutive ring with elements of the form
\[
a + b \cdot i
\]
for $a, b \in S$, where $1 = -i^2 = i \cdot i^{\dagger}$, and we define $a^{\dagger} = a$ for all $a \in S$. Now say that a semi-ring $R$ has \indef{square roots} \index{square roots} when every $a \in R$ has $a = b^2$ for some $b \in R$.

\begin{lemma} \label{lem:square roots}
Let $S$ be a phased ring for which $R = S^{\pos}$ has square roots. 
\begin{enumerate}
\item \label{enum:posAndUn}
Every non-zero $s \in S$ has $s = r \cdot u$ for a unique $r \in R$ and unitary $u \in S$.
\item \label{enum:TotOrdered}
$R$ is totally ordered under $a \leq b$ whenever $a + c = b$ for some $c \in R$.
\item \label{enum:equalsSa}
$\dring{R} \simeq S^{\sa} := \{s \in S \mid s^{\dagger} = s \}$\label{not:sadjoint}.
\item \label{enum:TwoOptions}
Either $S = S^{\sa}$ with trivial involution, or $S$ has square roots and $S = S^{\sa}[i]$.
\end{enumerate}
\end{lemma}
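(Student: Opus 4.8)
The plan is to prove the four parts in sequence, each feeding into the next, with the phased-ring axioms doing essentially all the work. I would begin by recording the key auxiliary fact: if $x\in R$ and $-x\in R$ then $x=0$. Writing $x=p^\dagger p$ and $-x=q^\dagger q$, the equation $p^\dagger p+q^\dagger q=0=0^\dagger 0$ together with the divisibility clause of the phased-ring axiom, applied to the (valid) witness $c=0$, forces $p=0$, hence $x=0$. Since $S$ is an integral domain it follows at once that $r^2=(r')^2$ with $r,r'\in R$ not both zero implies $r=r'$: from $(r-r')(r+r')=0$ the possibility $r=-r'$ is excluded by the auxiliary fact.

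For part \ref{enum:posAndUn}, given $s\neq 0$ I would use square roots in $R$ to write $s^\dagger s=r^2$ with $r\in R$, note $r\neq 0$ and $r^\dagger=r$, and apply the divisibility clause of the phased-ring axiom to the identity $r^\dagger r=r^2=s^\dagger s$ to obtain $s=ru$ for some $u\in S$; substituting back gives $r^2=u^\dagger u\,r^2$, and cancelling $r^2\neq 0$ shows $u$ is unitary. Uniqueness of the pair $(r,u)$ follows because any such presentation forces $s^\dagger s=r^2$, so the auxiliary fact pins down $r$ and hence $u$. Part \ref{enum:TotOrdered}: I would first establish trichotomy for $S^{\sa}$, namely that every self-adjoint $t$ lies in $R$ or has $-t\in R$. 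For $t\neq 0$ write $t=ru$ by part \ref{enum:posAndUn}; self-adjointness gives $ru^\dagger=ru$, so $u^\dagger=u$, whence the unitary $u$ satisfies $u^2=u^\dagger u=1$ and $u=\pm 1$, so $t=\pm r$. The relation $a\leq b\iff b-a\in R$ is then reflexive and transitive (as $R$ is closed under addition), antisymmetric by the auxiliary fact, and total by applying trichotomy to $a-b$. Part \ref{enum:equalsSa} is then formal: the map $\dring{R}\to S$, $(a,b)\mapsto a-b$, is a ring homomorphism landing in $S^{\sa}$, injective since its kernel is $\{(a,b):a=b\}$, and surjective by trichotomy; transporting the trivial involution makes it an isomorphism of involutive rings, which also carries over the total order.

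For part \ref{enum:TwoOptions}, if the involution on $S$ is trivial then $S=S^{\sa}$ and we are in the first case. Otherwise part \ref{enum:posAndUn} supplies a unitary $u$ with $u^\dagger\neq u$; then $t:=u-u^\dagger\neq 0$ is anti-self-adjoint, and writing $t=q\,u'$ by part \ref{enum:posAndUn} (so $q\in R$, $q\neq 0$, $u'$ unitary, and $q^2=t^\dagger t=-t^2$), cancelling $q$ in the identity $t^\dagger=-t$ forces $(u')^\dagger=-u'$, hence $(u')^2=-1$, so $i:=u'$ has $i^2=-1$ and $i^\dagger=-i$. Since $1$ is divisible by $1+1$ in $R$, the scalar $2$ is invertible in $S$, so every $s\in S$ decomposes as $\tfrac12(s+s^\dagger)+\bigl(-\tfrac{i}{2}(s-s^\dagger)\bigr)i$ with both coefficients self-adjoint, uniquely because an element which is simultaneously self-adjoint and anti-self-adjoint vanishes; thus $S=S^{\sa}[i]$ with the correct involution. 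Finally, to see $S$ has square roots: write $s=ru=\rho^2 u$ with $\rho\in R$ (part \ref{enum:posAndUn} and square roots in $R$) and decompose $u=x+yi$, so unitarity gives $x^2+y^2=1$ with $x\in S^{\sa}$; the total order of part \ref{enum:TotOrdered} yields $1\pm x\geq 0$ (their product $1-x^2=y^2$ being positive), so $\tfrac{1\pm x}{2}\in R$ have square roots $a,b\in R$, and a short check shows one of $a+bi$, $a-bi$ squares to $u$, giving $\rho(a\pm bi)$ as a square root of $s$.

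The step I expect to be the real obstacle is part \ref{enum:TwoOptions}: producing $i$ from a bare non-triviality assumption on the involution requires threading the square-root hypothesis on $R$ through the phased-ring divisibility in a slightly delicate way, and verifying that $S$ itself has square roots amounts to carrying out the classical "complex square root" construction internally, which is precisely where the total order from part \ref{enum:TotOrdered} and the invertibility of $2$ are needed. Parts \ref{enum:posAndUn}–\ref{enum:equalsSa} are comparatively routine once the auxiliary cancellation fact is in hand.
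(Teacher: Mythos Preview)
Your proof is correct, and parts \ref{enum:posAndUn}--\ref{enum:equalsSa} follow essentially the same line as the paper (your auxiliary fact ``$x\in R$ and $-x\in R$ implies $x=0$'' via the witness $c=0$ is exactly the ingredient the paper uses implicitly).

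The genuine difference is in part \ref{enum:TwoOptions}. The paper, following Vicary, finds a square root of an \emph{arbitrary} unitary $u$ directly: pick $s$ with $x:=s+u\,s^\dagger\neq 0$ (such $s$ exists since otherwise $u=-1$ and $S=S^{\sa}$), observe $x^\dagger=u^\dagger x$, and then the unitary part $v$ of $x$ from part \ref{enum:posAndUn} satisfies $v^\dagger=v\,u^\dagger$, hence $v^2=u$. Applying this with $u=-1$ produces $i$, and the decomposition $S=S^{\sa}[i]$ follows. Your route instead first constructs $i$ as the unitary part of $u-u^\dagger$ for some non-self-adjoint unitary $u$, deduces $S=S^{\sa}[i]$, and only then builds general square roots via the classical half-angle formula $a^2=\tfrac{1+x}{2}$, $b^2=\tfrac{1-x}{2}$. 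Both are valid; the paper's trick is more uniform (one argument for all unitaries, no case analysis), while yours is more elementary and makes the role of the total order on $R$ explicit. Your sketch does sweep the boundary cases $u=\pm 1$ (where $y=0$) under ``a short check'', but they are easily handled separately, so this is not a gap.
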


\begin{proof}
\ref{enum:posAndUn}
For uniqueness, suppose that $p \cdot u = l \cdot v$ for $p, l \in R$ and $u, v$ unitary. Then $p = l \cdot w$ where $w = v \cdot u^{-1}$ is unitary. So 
\[l \cdot w^{\dagger} = p^{\dagger} = p = l \cdot w\] Since $S$ is an integral domain, multiplication is cancellative so $w = w^{\dagger}$ and $w^2 = 1$. If $w = 1$ we are done, otherwise $w = -1$ and so $p + l = 0$. But then $p = l = 0$ by the definition of a phased ring. 
For existence, given $s \in S$ let $r = s^\dagger \cdot s \in S^\pos$. Since $S^\pos$ has square roots, $r = t^2$ for some $t \in S^\pos$. Then $s^\dagger \cdot s = t^\dagger \cdot t$, so $s = t \cdot u$ for some $u$, which is easily seen to be unitary.

\ref{enum:TotOrdered} Let $s \in S^{\sa}$ be non-zero with $s = t \cdot u$ as above. Then $t \cdot u = t \cdot u^{\dagger}$ and so $u =u^{\dagger}$ giving $u = \pm 1$. Hence either $s \in R$ or $-s \in R$. Then $\forall a, b \in R$ either 
\[a - b \in R\quad \text{ or } \quad b - a \in R\] making $R$ totally ordered in the above manner. 

\ref{enum:equalsSa} We may identify $\dring{R}$ with the set of elements $a-b \in S$ for $a, b \in R$. Then $\dring{R} \subseteq S^{\sa}$ always, but by the previous part $S^{\sa} \subseteq \dring{R}$. 

\ref{enum:TwoOptions} Suppose that $S \neq S^{\sa}$. We will show that $S$ has square roots  using techniques adapted from Vicary~\cite[Thm.~4.2]{vicary2011categorical}. Thanks to the first part it suffices to find a unitary square root of any unitary $u \in S$. Fix some unitary $u$. Suppose that for all $s \in S$ we have $s + u \cdot s^{\dagger} = 0$. Then putting $s = 1$ shows that $u = -1$, and so $S = S^{\sa}$, a contradiction. Hence there is $s \in S$ such that 
\[x := s + u \cdot s^{\dagger} \neq 0
\]
Then $x^\dagger = u^\dagger \cdot x$. Letting $x = r \cdot v$ for a unitary $v$ and $r \in R$ we have $r \cdot v^\dagger = r \cdot v \cdot u^\dagger$, and so $v^\dagger = v \cdot u^\dagger$ and hence $v^2 = u$ as desired. In particular, $-1$ has a unitary square root $i$. Finally note that $2$ is divisible in $R$ thanks to the embedding $\mathbb{Q}_{\geq 0} \hookrightarrow R$. Then for any $s \in S$ defining elements of $S^\sa$ by
\[
\mathsf{R}(s) := \frac{1}{2} \cdot (s + s^{\dagger}) \qquad \mathsf{I}(s) := \frac{i}{2}(s^{\dagger} - s)\]
we have $s = \mathsf{R}(s) + i \cdot \mathsf{I}(s)$ so that $S = S^{\sa}[i]$. 
\end{proof}

\paragraph{Phased fields}
To close in on our examples of $\mathbb{R}$ and $\mathbb{C}$ further, we may wonder when a phased ring $S$ is in fact a field. Indeed in any phased ring $S$ every element of the form $1 + s^{\dagger} \cdot s$ is invertible, and it is only for fields $S$ that we showed that $\Quant{S}$ satisfies our principles. 

We leave open the question of determining a phased ring $S$ which is not a field, or proving that none such exists, but note the following sufficient conditions for $S$ to be one. Recall that in any category a \emps{sub-object} of an object $B$ is an (isomorphism class of a) monic $m \colon A \rightarrowtail B$.


\begin{lemma} \label{lem:no_nontriv_subob}
Let $\catA$ be a quantum category, $S$ its ring of scalars and $R = S^{\pos}$. 
The following are equivalent:
\begin{enumerate}[label=\arabic*., ref=\arabic*]
\item \label{enum:R-divis}
$R$ is a semi-field;
\item \label{enum:S-divis}
$S$ is a field;
\item \label{enum:Subob}
In $\catA$ the only sub-objects of $I$ are $\{0, I\}$. 
\end{enumerate}
\end{lemma}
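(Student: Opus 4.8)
The plan is to prove $(1)\Leftrightarrow(2)$ by a direct computation in the scalar ring and $(2)\Leftrightarrow(3)$ using the structure of quantum categories from Proposition~\ref{prop:quant-cat-properties}. Throughout I may assume $\catA$ is non-trivial, i.e.\ $\id{I}\neq 0$; by Proposition~\ref{prop:scalars_are_phased_ring} the scalars $S=\catA(I,I)$ then form an integral domain, and $R=S^{\pos}$ is a sub-semi-ring with $1\neq 0$. For $(1)\Leftrightarrow(2)$: if $S$ is a field and $0\neq r\in R$, write $r=s^{\dagger}\cdot s$; then $s\neq 0$, so $s^{-1}$ exists and $r^{-1}:=(s^{-1})^{\dagger}\cdot s^{-1}\in R$ satisfies $r\cdot r^{-1}=(s\cdot s^{-1})^{\dagger}\cdot(s\cdot s^{-1})=1$, so $R$ is a semi-field. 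Conversely, if $R$ is a semi-field and $0\neq s\in S$, then $r:=s^{\dagger}\cdot s$ is a non-zero element of $R$, so it has an inverse $r^{-1}\in R\subseteq S$, and then $s\cdot(s^{\dagger}\cdot r^{-1})=r\cdot r^{-1}=1$, so $s$ is invertible in $S$.

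For $(3)\Rightarrow(2)$: let $0\neq s\colon I\to I$ be a scalar. First I claim $s$ is monic, equivalently (by Proposition~\ref{prop:quant-cat-properties}~\ref{enum:purepropnegatives}) that $\ker(s)=0$. Indeed $\ker(s)\colon K\to I$ is a dagger kernel; if $K\not\simeq 0$ then state habitation and dagger normalisation give an isometric state $\psi\colon I\to K$, and $\ker(s)\circ\psi\colon I\to I$ is then an isometric, hence unitary and non-zero, scalar with $s\circ(\ker(s)\circ\psi)=0$, contradicting that $S$ is an integral domain. So $s$ is a monic into $I$, and by $(3)$ its isomorphism class is that of $0\to I$ or that of $\id{I}$. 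The first possibility forces $I\simeq 0$, hence $s\in S=\{0\}$, impossible; so $s$ is an isomorphism. Thus every non-zero scalar is invertible and $S$ is a field.

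For $(2)\Rightarrow(3)$: I first observe that $\DKer(I)=\{0,\id{I}\}$ in any non-trivial quantum category. Given a dagger kernel $k\colon K\to I$, Proposition~\ref{prop:quant-cat-properties}~\ref{enum:decomp-property} makes $[k,k^{\bot}]\colon K\biprod K^{\bot}\to I$ unitary. If $K$ and $K^{\bot}$ were both non-zero, isometric states $\psi$ of $K$ and $\phi$ of $K^{\bot}$ would give unitary scalars $k\circ\psi$ and $k^{\bot}\circ\phi$ with $(k\circ\psi)^{\dagger}\circ(k^{\bot}\circ\phi)=\psi^{\dagger}\circ k^{\dagger}\circ k^{\bot}\circ\phi=0$, a product of non-zero scalars contradicting integrality; hence $K\simeq 0$ (giving $k=0$) or $K^{\bot}\simeq 0$ (giving $k=(0\to I)^{\bot}=\coker(0\to I)^{\dagger}=\id{I}$). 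Now take any monic $m\colon A\to I$ and factor $m=\img(m)\circ\bar{m}\circ\coim(m)$. Since $m$ is monic, $\ker(m)=0$, so $\coim(m)=\id{A}$, while $\img(m)$ may be taken to be a dagger kernel of $I$, hence $0$ or $\id{I}$. If $\img(m)=0$, then $m$ factors through a zero object; as $m$ is monic this forces $A\simeq 0$, so $m$ represents the subobject $0$. If $\img(m)=\id{I}$, then $m=\bar{m}$ is both monic and epic (epic because $\bar{m}$ is zero-epic and additive inverses exist, Proposition~\ref{prop:quant-cat-properties}~\ref{enum:purepropnegatives}). Then $m\circ m^{\dagger}\colon I\to I$ is non-zero — otherwise epicness of $m$ gives $m^{\dagger}=0$ and monicity of $m$ then forces $I\simeq 0$ — hence invertible since $S$ is a field, say with inverse $c$; then $(c\cdot m)\circ m^{\dagger}=\id{I}$, so $m^{\dagger}$ is split monic as well as epic, and therefore an isomorphism, whence so is $m$. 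Thus $m$ represents the subobject $\id{I}$, proving $(3)$.

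I expect the main obstacle to be the bookkeeping in $(2)\Rightarrow(3)$: reducing an arbitrary subobject of $I$, via its image factorisation, to the isomorphism class of a dagger kernel of $I$, and then disposing of the residual monic-and-epic map using invertibility of $m\circ m^{\dagger}$. Everything else is either elementary ring theory or a direct citation of Proposition~\ref{prop:quant-cat-properties} and the defining properties of a quantum category, but one must stay careful about the degenerate cases $A\simeq 0$ and $I\simeq 0$ at each step.
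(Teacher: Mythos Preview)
Your implications $(1)\Leftrightarrow(2)$ and $(3)\Rightarrow(2)$ are essentially the paper's own arguments, just written out in more detail.

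For $(2)\Rightarrow(3)$ your route is correct in spirit but much longer than needed, and has a small gap. The paper dispenses with the image factorisation and the classification of $\DKer(I)$ entirely: given a monic $m\colon A\rightarrowtail I$, it simply sets $r:=m\circ m^{\dagger}\in S$. If $r=0$ then Lemma~\ref{lem:dagzero}~\ref{enum:dag-law} applied to $m^{\dagger}$ gives $m^{\dagger}=0$, hence $m=0$, hence $A\simeq 0$. Otherwise $r$ is invertible (since $S$ is a field), so $m\circ(m^{\dagger}\cdot r^{-1})=\id{I}$ exhibits $m$ as split epic, and together with monicity this makes $m$ an isomorphism. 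That is the whole argument; your detour through $\DKer(I)=\{0,\id{I}\}$ and the $\img$--$\coim$ factorisation is valid but does real work only to arrive back at the same scalar $m\circ m^{\dagger}$.

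The gap is in your justification that $m\circ m^{\dagger}\neq 0$. You write ``epicness of $m$ gives $m^{\dagger}=0$'', but epicness of $m$ is right-cancellation, which does not apply to an expression of the form $m\circ m^{\dagger}$. The correct tool is exactly Lemma~\ref{lem:dagzero}~\ref{enum:dag-law}: since $m\circ m^{\dagger}=(m^{\dagger})^{\dagger}\circ m^{\dagger}$, its vanishing forces $m^{\dagger}=0$. After that, ``monicity of $m$ forces $I\simeq 0$'' is also not quite right: $m=0$ monic gives $A\simeq 0$, which contradicts $\img(m)=\id{I}$ (or alternatively, $m=0$ epic would give $I\simeq 0$). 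Either fix works, but note that once you invoke Lemma~\ref{lem:dagzero}~\ref{enum:dag-law} you have reproduced the paper's short argument and the rest of the scaffolding can be dropped.
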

\begin{proof}
\ref{enum:R-divis} $\iff$ \ref{enum:S-divis}: $R \subseteq S$ and an element $s \in S$ is invertible iff $s^{\dagger} \cdot s$ is. 

\ref{enum:S-divis} $\implies$ \ref{enum:Subob}:
Let $m \colon A \rightarrowtail I$ be monic. Then if $r: = m \circ m^\dagger$ is zero then $m = 0$ by Lemma~\ref{lem:dagzero}~\ref{enum:dag-law}. Otherwise $r$ is invertible and hence so is $m$. 

\ref{enum:Subob} $\implies$ \ref{enum:S-divis}: 
Thanks to zero-cancellativity, every non-zero scalar $r$ has $\ker(r) = 0$. Since $\catA$ has negatives by Proposition~\ref{prop:quant-cat-properties}, this makes $r$ monic and hence an isomorphism by assumption. 
\end{proof}


Let us say that an ordered semi-ring $R$ has \indef{no infinitesimals} if whenever $a \leq \frac{1}{n}$ for all $n \in \mathbb{N}$ we have $a = 0$. 

\begin{lemma} \label{lem:its_a_field}
Let $S$ be a phased ring and suppose that $S^\pos$ is totally ordered and has no infinitesimals. Then $S$ is a field.
\end{lemma}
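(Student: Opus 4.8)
The goal is to show that a phased ring $S$ whose positive cone $R = S^\pos$ is totally ordered with no infinitesimals is a field. By Lemma~\ref{lem:no_nontriv_subob}, it suffices to show that $R$ is a semi-field, i.e.\ every non-zero $r \in R$ is invertible. So fix $0 \neq r \in R$, and the task reduces to producing an inverse $r^{-1} \in R$. First I would observe that, since $R$ has characteristic $0$ and comes with an embedding $\Qpos \hookrightarrow R$ (as noted in the excerpt just before Lemma~\ref{lem:Boundedequivalence}), we may freely multiply and divide by positive rationals. The strategy is then a classical Dedekind-cut / Archimedean-completeness style argument: use the absence of infinitesimals together with total order to pin down $r^{-1}$ as a supremum, but done purely algebraically since we have no completeness hypothesis --- instead we exploit that $R$ sits inside a phased ring, which gives us the extra relation $a^\dagger a + b^\dagger b = c^\dagger c$ with a divisibility consequence.

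The key step is this. Consider the element $1 + r$; in a phased ring every element of the form $1 + s^\dagger s$ is invertible, and since $r \in R$ we have $r = t^\dagger t$ for some $t \in S$, so $1 + r = 1 + t^\dagger t$ is a unit. Write $u := (1+r)^{-1}$. One checks $u \in R$: from $1 = u(1+r)$ one gets $u^\dagger = u$ after using commutativity and the phased-ring structure, and positivity of $u$ follows because $u = u^\dagger u (1+r)$ lies in $R \cdot R \subseteq R$ once we know $u(1+r) \in R$, which is $1$. Now for each $n \in \mathbb N$, $n r$ is a non-zero element of $R$, so $1 + nr$ is likewise a unit with inverse $u_n := (1+nr)^{-1} \in R$; and total order lets us compare: since $1 + nr \geq nr$, we have $u_n \le \frac{1}{n} r^{-1}$ --- wait, we do not yet have $r^{-1}$. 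So instead I would argue directly that the sequence $(u_n)$ is "decreasing" and that $r \cdot u_n = 1 - u_n$, hence $u_n = 1 - r u_n \le 1$, and moreover $u_n \le \tfrac{1}{n} \cdot \tfrac{1}{r}\cdot(1+nr)u_n$... this circularity is the obstacle, so the cleaner route is: set $a := r u$ where $u = (1+r)^{-1}$, so $a \in R$ and $a + u = 1$, i.e. $u = 1 - a$ with $0 \le a \le 1$ and $0 \le u \le 1$ in the total order. Then $r = a/u = a(1-a)^{-1} \cdot \ldots$ --- the point being that $r$ is a unit iff $u = 1-a$ is, and $1-a$ is bounded below by something non-infinitesimal precisely because $a \neq 1$ (as $r \neq 0$ forces $u \neq 0$). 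Concretely: $u \neq 0$, so by "no infinitesimals" there is $n$ with $u \not\le \tfrac1n$, i.e.\ $u = \tfrac1n + v$ for some $v \in R$. This does not immediately give invertibility either, so the genuinely efficient argument is to invoke total order to split into cases $r \le 1$ or $r \ge 1$; by symmetry ($r \mapsto r^{-1}$ once we have it, or directly) assume $r \le 1$, write $1 = r + s$ with $s \in R$, $0 \le s \le 1$, and then run the geometric series: $(1-s)^{-1} = 1 + s + s^2 + \cdots$ is made sense of because $s^n \to 0$ (no infinitesimals, as $s^n \le s \le \cdots$ forces $s^n$ below every $\tfrac1n$ once $s$ is bounded away from $1$; and if $s$ is \emph{not} bounded away from $1$ then $r = 1-s$ is infinitesimal, contradiction), and the partial sums stabilise to an element $w \in R$ with $rw = 1$.

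I expect the main obstacle to be making the "geometric series converges" step rigorous without a completeness axiom: one must show the partial sums $w_N = \sum_{k=0}^N s^k$ actually \emph{equal} a fixed element of $R$ for $N$ large, not merely form a Cauchy-like sequence. The way around this is to note that $r \neq 0$ and total order with no infinitesimals gives a positive rational $q$ with $q \le r$, hence $s \le 1 - q$; then $s^N \le (1-q)^N$, and $(1-q)^N \le \tfrac1N \cdot q^{-1}\cdot\ldots$ --- and since $(1-q)^N$ for the \emph{rational} $q \in \Qpos$ tends to $0$ in $\Qpos \subseteq R$ and there are no infinitesimals, $s^N = 0$ eventually? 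No --- $s^N$ need not vanish, only become infinitesimal, hence $0$. That is the crux: $s^N \le (1-q)^N$ and the latter becomes smaller than every $\tfrac1m$, so $s^N \le \tfrac1m$ for all $m$ once $N$ is large enough relative to $m$; but "no infinitesimals" applied to the \emph{fixed} element $s^N$ only helps if $s^N \le \tfrac1m$ for \emph{all} $m$ simultaneously --- which fails. So the correct formulation is: define $w$ via the telescoping identity $(1-s)(1 + s + \cdots + s^N) = 1 - s^{N+1}$, so $r w_N = 1 - s^{N+1}$; then $1 - r w_N = s^{N+1}$ is non-zero positive but "arbitrarily small", and I claim this forces $r$ invertible by a direct cancellation argument in the phased ring rather than a limiting one --- e.g.\ $r$ is monic as a scalar morphism (zero-cancellativity) and $1 - rw_N$ lies in the ideal generated by $s$, whose intersection over $N$ is $\{0\}$ by the infinitesimal-freeness, making $r$ an isomorphism onto $I$ in $\catA$ via Lemma~\ref{lem:no_nontriv_subob}\ref{enum:Subob}. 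I would therefore present the final argument through Lemma~\ref{lem:no_nontriv_subob}: show the image of $r \colon I \to I$ generates a sub-object of $I$ that is not $0$ (as $r \neq 0$), conclude it is $I$, hence $r$ is split epic and, being also monic, an isomorphism.
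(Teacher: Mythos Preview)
Your proposal does not reach a valid proof. Each of the three routes you sketch stalls, as you yourself partially recognise: the manipulation with $u = (1+r)^{-1}$ goes in circles; the geometric-series argument genuinely fails because there is no completeness hypothesis and the partial sums $w_N$ need not stabilise in $R$ (your observation that $s^N$ becomes small but not zero is exactly the obstruction); and the final appeal to Lemma~\ref{lem:no_nontriv_subob} is circular, since the condition ``the only sub-objects of $I$ are $\{0,I\}$'' is \emph{equivalent} to $S$ being a field by that very lemma, and you give no independent argument for it. Also note that the statement is about a phased ring $S$, not a quantum category, so invoking sub-objects of $I$ requires extra justification anyway.

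The idea you are missing is a one-line consequence of the phased-ring axiom, already recorded in the paper just before Lemma~\ref{lem:Boundedequivalence}: for $a,b \in R = S^{\pos}$, the element $a$ is always divisible by $a+b$. Equivalently, $a \leq b$ in $R$ implies $b \mid a$. (Write $a = x^\dagger x$, $b-a = y^\dagger y$; the phased-ring axiom gives $b = z^\dagger z$ with $x = zd$, so $a = (d^\dagger d)\, b$.) With this in hand the proof is immediate: given non-zero $a \in R$, ``no infinitesimals'' plus total order yield $n$ with $1 \leq na$; hence $na$ divides $1$, so $na$ is a unit, so $a$ is. You actually brushed past this when you case-split on $r \leq 1$ versus $r \geq 1$ --- the case $r \geq 1$ gives invertibility \emph{directly} by this divisibility, and then the general case reduces to it by scaling by $n$. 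Your choice to pursue the harder branch $r \leq 1$ via geometric series was the wrong turn.
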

\begin{proof}
From the definition of a phased ring, we have for $a, b \in S^{\pos}$ that whenever $a \leq b$ then $a$ is divisible by $b$. If $a \neq 0$ then, by assumption on $S^\pos$, $1 \leq a \cdot n$ for some $n \in \mathbb{N}$. This makes $a \cdot n$ invertible and hence $a$ also. Hence every $s \in S$ is invertible since $s^{\dagger} \cdot s$ is. 
\end{proof}

\section{Reconstruction}

Let us now spell out our main result.

\begin{theorem} \label{thm:main-reconstruction}
Let $\catC$ be a dagger theory satisfying the operational principles and $R = \catC(I,I)$. Then there is an embedding of dagger theories
\begin{equation} \label{eq:main-embed-in-recons} 
\Quant{S} \hookrightarrow \catC
\end{equation}
which preserves addition, for some phased ring $S$ with $R \simeq S^{\pos}$ as semi-rings. Moreover if $R$ is bounded this is an equivalence of theories $\catC \simeq \Quant{S}$.
\end{theorem}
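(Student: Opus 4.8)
The plan is to reduce the statement to the structural correspondence of Theorem~\ref{thm:mainPropToQuantumCat} and the scalar analysis of Proposition~\ref{prop:scalars_are_phased_ring}. Since $\catC$ satisfies the operational principles it is in particular non-trivial, so Theorem~\ref{thm:mainPropToQuantumCat} applies: $\catA := \plusIdag{\catC_{\pure}}$ is a non-trivial quantum category and there is an equivalence of dagger theories $\catC \simeq \CPM(\catA)$ which moreover preserves addition. Put $S := \catA(I,I)$; by Proposition~\ref{prop:scalars_are_phased_ring} this is a phased ring, hence a commutative involutive semi-ring, so $\Quant{S} = \CPM(\Mat_S)$ is defined.

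Next I would produce the embedding. Since $\catA$ is a quantum category it has distributive dagger biproducts, so $n \mapsto n \cdot I$ is a full dagger symmetric monoidal embedding $\Mat_S \hookrightarrow \catA$, exactly as in the proof of Corollary~\ref{cor:recipe}. Applying the $\CPM$ construction, which lifts dagger symmetric monoidal functors and carries full faithful ones to full faithful ones, gives
\[
\Quant{S} = \CPM(\Mat_S) \hookrightarrow \CPM(\catA) \simeq \catC.
\]
That this composite preserves addition follows because the addition in a $\CPM$-category is the one induced by the ambient dagger biproducts (Proposition~\ref{prop:addition-in-CPM}), the embedding $\Mat_S \hookrightarrow \catA$ preserves dagger biproducts, and the final equivalence $\CPM(\catA) \simeq \catC$ preserves addition by Theorem~\ref{thm:mainPropToQuantumCat}.

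To identify the scalars I would compute both sides. The equivalence gives $R = \catC(I,I) \simeq \CPM(\catA)(I,I)$, and, as in the proof of Corollary~\ref{cor:recipe}, the CP axiom together with Lemma~\ref{lem:pos-cancellation} identifies $\CPM(\catA)(I,I)$ with the collection of finite sums $\sum_i r_i^{\dagger} \cdot r_i$ of elements of $S$; by the defining property of a phased ring this collection is closed under addition and coincides with $S^{\pos}$, and the identification is one of semi-rings. Hence $R \simeq S^{\pos}$. For the final claim, suppose $R$ is bounded; then so is the isomorphic semi-ring $S^{\pos}$, and Lemma~\ref{lem:Boundedequivalence} yields a dagger symmetric monoidal equivalence $\catA \simeq \Mat_S$. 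Applying $\CPM$ and composing with $\catC \simeq \CPM(\catA)$ gives an equivalence of dagger theories $\catC \simeq \CPM(\Mat_S) = \Quant{S}$, which preserves addition by the same reasoning as above.

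The high-level argument is essentially an assembly of results already established, so the point I would be most careful about is the coherence of addition-preservation along the whole chain of functors ($\Mat_S \hookrightarrow \catA$, the $\CPM$ construction, $\CPM(\catA) \simeq \catC$, and, in the bounded case, $\catA \simeq \Mat_S$), together with the fact that the scalar identification $R \simeq S^{\pos}$ respects the semi-ring structure and not merely the multiplicative monoid. Both reduce to observing that every functor involved preserves finite dagger biproducts, and to the uniqueness of the coarse-graining operation from Proposition~\ref{prop:coarse-graining}, which forces any equivalence of dagger theories to commute with $+$.
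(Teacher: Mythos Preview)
Your proposal is correct and follows essentially the same route as the paper: reduce to $\catC \simeq \CPM(\catA)$ for a quantum category $\catA$ via Theorem~\ref{thm:mainPropToQuantumCat}, set $S = \catA(I,I)$, use the full embedding $\Mat_S \hookrightarrow \catA$ (an equivalence under boundedness by Lemma~\ref{lem:Boundedequivalence}), and apply $\CPM$. The only minor difference is in justifying $R \simeq S^{\pos}$: the paper invokes dagger normalisation in $\catA$ directly (any state $f$ has $f^\dagger f = r^\dagger r$ for a scalar $r$), whereas you go via closure of positives under sums from the phased ring axioms; both are valid, though the paper's is slightly more direct and does not need to route through Corollary~\ref{cor:recipe}.
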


\begin{proof}
By Theorem~\ref{thm:mainPropToQuantumCat} there is an addition-preserving equivalence $\catC \simeq \CPM(\catA)$ where $\catA$ is a quantum category, so it suffices to assume that $\catC$ is of this form. But now $S = \catA(I,I)$ is a phased ring, and by dagger normalisation in $\catA$ we always have $\CPM(\catA)(I,I) \simeq S^\pos$.

The embedding $\Mat_{S} \hookrightarrow \catA$ is an equivalence when $R$ is bounded thanks to Lemma~\ref{lem:Boundedequivalence}, and it induces the respective embedding or equivalence~\eqref{eq:main-embed-in-recons}. Since the former preserves biproducts, the latter preserves addition. 
\end{proof}

We can often furthermore give the theory $\Quant{S}$ structure resembling real or complex quantum theory.

\begin{corollary} \label{cor:precise-reconstruction}
Let $\catC$ be a dagger theory satisfying the operational principles whose scalars $R$ have square roots and are bounded. Then $\catC$ is equivalent to 
\[
\Quant{\dring{R}} \quad \text{ or } \quad \Quant{\dring{R}[i]}
\]
\end{corollary}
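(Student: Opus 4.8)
The plan is to apply Theorem~\ref{thm:main-reconstruction} and then feed the hypothesis ``$R$ has square roots'' into the structural analysis of the phased ring $S$ carried out in Lemmas~\ref{lem:square roots}, \ref{lem:no_nontriv_subob} and~\ref{lem:its_a_field}. First I would invoke Theorem~\ref{thm:main-reconstruction}: since $R = \catC(I,I)$ is bounded, we obtain an equivalence of dagger theories $\catC \simeq \Quant{S}$ for some phased ring $S$ with $R \simeq S^{\pos}$ as semi-rings. It therefore suffices to identify $S$ up to involutive isomorphism, and to do so I would transfer the square-roots hypothesis from $R$ to $S^\pos$ via the isomorphism $R \simeq S^\pos$.

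Next I would apply Lemma~\ref{lem:square roots}\ref{enum:TwoOptions}, which (using precisely that $S^\pos$ has square roots) gives the dichotomy: either $S = S^{\sa}$ with trivial involution, or $S$ has square roots and $S = S^{\sa}[i]$. In either case Lemma~\ref{lem:square roots}\ref{enum:equalsSa} identifies $S^{\sa}$ with the difference ring $\dring{R}$ (again under $R \simeq S^\pos$), so the two cases become $S \simeq \dring{R}$ with trivial involution, respectively $S \simeq \dring{R}[i]$. To legitimately write the right-hand side as $\Quant{\dring R}$ or $\Quant{\dring R [i]}$ I must check these are phased \emph{fields}, so that Example~\ref{example:MatofPhasedField} applies and $\Quant{\dring R}$, $\Quant{\dring R[i]}$ genuinely satisfy the operational principles and are the categories named. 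For this I would use Lemma~\ref{lem:its_a_field}: by Lemma~\ref{lem:square roots}\ref{enum:TotOrdered}, $R \simeq S^\pos$ is totally ordered; boundedness of $R$ rules out infinitesimals (if $a \le \frac1n$ for all $n$ then $\frac1a$, if it existed, would violate boundedness — more directly, boundedness of $R$ together with total order and the embedding $\Qpos \hookrightarrow R$ forces no infinitesimals), so $S$ is a field by Lemma~\ref{lem:its_a_field}, and then $\dring R \simeq S^{\sa}$ is a field in the first case while $\dring R[i]$ is a field in the second (adjoining a square root of $-1$ to the field $\dring R$).

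Finally I would assemble the conclusion: in the first case the equivalence $\catC \simeq \Quant{S}$ becomes $\catC \simeq \Quant{\dring R}$; in the second it becomes $\catC \simeq \Quant{\dring R [i]}$; so $\catC$ is equivalent to $\Quant{\dring R}$ or $\Quant{\dring R[i]}$ as required. The main obstacle I anticipate is the bookkeeping around the involution and the semi-ring versus ring distinction: Theorem~\ref{thm:main-reconstruction} only gives $R \simeq S^\pos$ as semi-rings, so I must be careful that ``$R$ has square roots'' (a statement about the semi-ring $R$) transports correctly to ``$S^\pos$ has square roots'', and that the derived field $\dring R$ is canonically the difference ring of $R$ rather than of some other positive cone; likewise I should confirm that the naming $\Quant{\dring R[i]}$ matches Definition~\ref{def:QuantS}'s construction $\CPM(\Mat_{\dring R[i]})$ with the stated involution $a^\dagger = a$ on $\dring R$ and $i^\dagger = -i$. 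These are all routine once the dichotomy of Lemma~\ref{lem:square roots}\ref{enum:TwoOptions} and the field criterion of Lemma~\ref{lem:its_a_field} are in hand.
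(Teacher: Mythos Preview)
Your core approach is exactly the paper's: invoke Theorem~\ref{thm:main-reconstruction} to get $\catC \simeq \Quant{S}$ with $R \simeq S^{\pos}$, then use Lemma~\ref{lem:square roots} parts~\ref{enum:equalsSa} and~\ref{enum:TwoOptions} to conclude that $S$ is either $\dring{R}$ or $\dring{R}[i]$. The paper's proof is literally just ``Theorem~\ref{thm:main-reconstruction} and Lemma~\ref{lem:square roots}.''

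The extra paragraph where you argue that $S$ must be a \emph{field} via Lemmas~\ref{lem:no_nontriv_subob} and~\ref{lem:its_a_field} is unnecessary. Definition~\ref{def:QuantS} defines $\Quant{S} := \CPM(\Mat_S)$ for an arbitrary commutative involutive semi-ring $S$, so once you have $\catC \simeq \Quant{S}$ and $S \simeq \dring{R}$ or $S \simeq \dring{R}[i]$ as involutive rings, you are finished; there is nothing to ``legitimise''. The role of Example~\ref{example:MatofPhasedField} in the paper is the converse direction (showing that $\Quant{S}$ for a phased field satisfies the operational principles), which is not needed here since $\catC$ already satisfies them by hypothesis. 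Your concerns about transporting square roots along $R \simeq S^{\pos}$ and matching the involution on $\dring{R}[i]$ are legitimate bookkeeping but, as you say, routine.
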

\begin{proof}
Theorem~\ref{thm:main-reconstruction} and Lemma~\ref{lem:square roots}.
\end{proof}

Having reached this general categorical result, let us now consider the typical physical setting in which scalars correspond to (unnormalised) probabilities. 
\begin{definition} 
A dagger theory with addition $\catC$ is \deff{probabilistic} \index{dagger theory!probabilistic} when it comes with an isomorphism of semi-rings $\catC(I,I) \simeq \Rpos$.
\end{definition}
Note that this is a weaker definition than typical in the literature (such as~\cite{Barrett2007InfoGPTs,PhysRevA.84.012311InfoDerivQT}) since we have not made any assumptions relating to tomography, finite-dimensionality or topological closure. One may in fact identify such theories intrinsically, as in the following observation for which we thank John van de Wetering. 

\begin{lemma} \label{lem:prob}
Let $\catC$ be a dagger theory satisfying the operational principles. Then $\catC$ is probabilistic iff $R=\catC(I,I)$ has square roots, no infinitesimals, and that every bounded increasing sequence has a supremum.
\end{lemma}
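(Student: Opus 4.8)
The plan is to prove both directions separately. For the forward direction, assume $\catC$ is probabilistic, so $R \simeq \Rpos$. Then $R$ has square roots since every non-negative real does, has no infinitesimals since the reals are Archimedean, and every bounded increasing sequence of reals converges to its supremum; these are all elementary properties of $\Rpos$ transported along the semi-ring isomorphism. The substantive direction is the converse: assuming $\catC$ satisfies the operational principles and $R$ has square roots, no infinitesimals, and suprema of bounded increasing sequences, I must show $R \simeq \Rpos$.

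First I would invoke Theorem~\ref{thm:main-reconstruction} (via Theorem~\ref{thm:mainPropToQuantumCat}), which tells us $\catC \simeq \CPM(\catA)$ for a quantum category $\catA$, with $S := \catA(I,I)$ a phased ring and $R \simeq S^{\pos}$ as semi-rings. Since $R$ has square roots, Lemma~\ref{lem:square roots} applies: in particular $R = S^{\pos}$ is totally ordered (part~\ref{enum:TotOrdered}), and combined with the no-infinitesimals hypothesis, Lemma~\ref{lem:its_a_field} gives that $S$ is a field, hence $R$ is a semi-field. So every non-zero element of $R$ is invertible. We also already know $R$ has characteristic $0$ and contains a copy of $\Qpos$ via the embedding $\Qpos \hookrightarrow R$ noted after Proposition~\ref{prop:scalars_are_phased_ring}, and that $R$ is totally ordered compatibly with this embedding.

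The heart of the argument is then to show that a totally ordered semi-field $R$ with square roots, no infinitesimals, containing $\Qpos$, and in which every bounded increasing sequence has a supremum, must be isomorphic to $\Rpos$. I would argue as follows: the difference ring $\dring{R}$ is a totally ordered field containing $\mathbb{Q}$, with square roots of all non-negative elements, Archimedean (no infinitesimals), and Dedekind-complete in the sequential sense. A standard real-analysis argument shows any such field is isomorphic to $\mathbb{R}$: for $x \in \dring{R}$ positive, the set of rationals below $x$ determines $x$ by the no-infinitesimals property, giving an order-embedding $\dring{R} \hookrightarrow \mathbb{R}$; surjectivity follows since any real is a supremum of an increasing bounded rational sequence, which has a supremum in $\dring{R}$ mapping to it. This isomorphism $\dring{R} \simeq \mathbb{R}$ is necessarily a ring isomorphism (it fixes $\mathbb{Q}$ and preserves order, hence preserves $+$ and $\cdot$), and restricts to $R \simeq \Rpos$. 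Transporting the semi-ring isomorphism $R \simeq S^{\pos} \simeq \CPM(\catA)(I,I) \simeq \catC(I,I)$, we conclude $\catC(I,I) \simeq \Rpos$, so $\catC$ is probabilistic.

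The main obstacle is the final purely algebraic step — verifying carefully that a sequentially-complete Archimedean totally ordered semi-field with square roots and containing $\Qpos$ is exactly $\Rpos$ — and in particular checking that the order-completeness hypothesis phrased in terms of increasing sequences (rather than arbitrary bounded subsets) genuinely suffices; this works because $\Qpos$ is dense, so every bounded set has a supremum realized as the limit of an increasing sequence of rational approximants, but this density and the interaction with square roots needs to be spelled out. Everything else reduces to citing the reconstruction theorem and Lemmas~\ref{lem:square roots} and~\ref{lem:its_a_field} already established.
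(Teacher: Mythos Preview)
Your proposal is correct and follows essentially the same route as the paper: both invoke Lemmas~\ref{lem:square roots} and~\ref{lem:its_a_field} to deduce that $\dring{R}$ is a totally ordered Archimedean field, and then argue that sequential completeness forces $\dring{R} \simeq \mathbb{R}$. The only notable difference is in the final step: the paper first transfers sequential completeness from $R$ to $\dring{R}$ via the affine trick $x_n \mapsto \pm x_n + r$ (reducing an arbitrary bounded monotone sequence in $\dring{R}$ to one lying in $R$), and then cites a standard characterisation of $\mathbb{R}$ as the unique monotone-complete Archimedean ordered field; you instead sketch that characterisation directly via the rational-cut embedding, which has the advantage of only ever needing suprema of rational sequences and so sidesteps the transfer step entirely.
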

\begin{proof}
Clearly $\Rpos$ satisfies these properties. Conversely if they hold then, by Lemmas~\ref{lem:square roots} and~\ref{lem:its_a_field}, $\dring{R}$ is a totally ordered Archimedean field~\cite{hall2011completeness} with $R$ as its positive elements. Let $(x_n)^{\infty}_{n=1}$ be any bounded monotonic sequence in $\dring{R}$. Then for some $r \in R$ and $t=\pm 1$, the bounded sequence $(t x_n + r)^\infty_{n=1}$ is increasing and belongs to $R$, and so converges there. Hence $(x_n)^{\infty}_{n=1}$ also converges in $D(R)$, making the latter monotone complete. But then by~\cite[Theorem 3.11]{hall2011completeness} there is an isomorphism $D(R) \simeq \mathbb{R}$ and hence $R \simeq \Rpos$. 
\end{proof}

Now immediately our earlier reconstruction yields one for probabilistic theories. 

\begin{corollary} \label{cor:prob-reconstruction-1}
Any dagger theory which satisfies the operational principles and is probabilistic is equivalent to $\Quant{\mathbb{R}}$ or $\Quant{\mathbb{C}}$.
\end{corollary}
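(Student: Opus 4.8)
## Proof Proposal

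The plan is to reduce Corollary~\ref{cor:prob-reconstruction-1} to the already-established categorical reconstruction, Theorem~\ref{thm:main-reconstruction}, together with its refinement Corollary~\ref{cor:precise-reconstruction}, by verifying that a probabilistic theory has scalar semi-ring with all the good properties those results require. So suppose $\catC$ is a dagger theory satisfying the operational principles which is moreover probabilistic, so that $R := \catC(I,I) \simeq \Rpos$ as semi-rings. The first step is to observe that $\Rpos$ trivially has square roots (every non-negative real is a square) and that it is bounded in the sense of the definition preceding Lemma~\ref{lem:Boundedequivalence}: no $r \in \Rpos$ can satisfy $r = n + r_n$ with $r_n \in \Rpos$ for every $n \in \mathbb{N}$, since $r < n$ for large $n$. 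Hence the hypotheses of Corollary~\ref{cor:precise-reconstruction} are met.

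Applying Corollary~\ref{cor:precise-reconstruction} then gives that $\catC$ is equivalent, as a dagger theory, to $\Quant{\dring{R}}$ or to $\Quant{\dring{R}[i]}$, where $\dring{R}$ is the difference ring of $R$. The second step is simply to identify these rings: since $R \simeq \Rpos$ we have $\dring{R} \simeq \dring{\Rpos} = \mathbb{R}$, as noted in the excerpt when the difference ring is introduced, so $\Quant{\dring{R}} \simeq \Quant{\mathbb{R}}$; and $\dring{R}[i] \simeq \mathbb{R}[i] = \mathbb{C}$ with its standard involution, so $\Quant{\dring{R}[i]} \simeq \Quant{\mathbb{C}}$. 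Concluding, $\catC$ is equivalent to $\Quant{\mathbb{R}}$ or $\Quant{\mathbb{C}}$, which is exactly the statement.

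The only genuinely substantive point is checking the hypotheses of Corollary~\ref{cor:precise-reconstruction} are literally the ones satisfied by a probabilistic theory — and this is where I would be careful about one subtlety. Corollary~\ref{cor:precise-reconstruction} as stated requires the scalars to "have square roots," and the notion of square roots in the excerpt is defined for the \emph{semi-ring} $R = \catC(I,I)$, not for $\dring R$; since $\Rpos$ has square roots this is fine, and no appeal to anything deeper (such as monotone completeness or the characterisation in Lemma~\ref{lem:prob}) is needed for this direction. One could alternatively route the argument through Lemma~\ref{lem:prob} to first confirm intrinsically that $R \simeq \Rpos$, but here that identification is given by hypothesis, so the direct application is cleaner. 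I expect no real obstacle: the corollary is a near-immediate specialisation, and the main work has already been done in Theorem~\ref{thm:mainPropToQuantumCat}, Theorem~\ref{thm:main-reconstruction} and Corollary~\ref{cor:precise-reconstruction}.
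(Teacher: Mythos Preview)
Your proposal is correct and follows exactly the paper's approach: apply Corollary~\ref{cor:precise-reconstruction} after checking that $\Rpos$ has square roots and is bounded, then identify $\dring{\Rpos} \simeq \mathbb{R}$ and $\mathbb{R}[i] \simeq \mathbb{C}$. The paper's own proof is a one-line citation of Corollary~\ref{cor:precise-reconstruction} together with these two ring identifications; you have simply spelled out the hypothesis check in more detail.
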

\begin{proof}
By Corollary~\ref{cor:precise-reconstruction}, since $\dring{\Rpos} \simeq \mathbb{R}$ and $\mathbb{R}[i] \simeq \mathbb{C}$.
\end{proof}

To distinguish between real and complex quantum theory one should add an extra principle. 
An example which is known to be satisfied by $\Quant{\mathbb{C}}$ but not $\Quant{\mathbb{R}}$ is \emps{local tomography}, which asserts that any pair of bipartite states may be separated by product effects~\cite{Hardy2012Holism}:
\[
\left(
\scalebox{0.8}{\input{./figures/local-tomographylabels.tikz}}
\quad
{\forall a, b}
\right)
\quad
\implies
\quad
\scalebox{0.8}{\input{./figures/local-tomography2.tikz}}
\]
Note that for a compact theory this is simply equivalent to well-pointedness. Alternatively we may identify complex quantum theory without any tomography assumptions by postulating that in $\catC_\pure$ every phase of a phased biproduct has a square root. It would also be desirable to find a more generic categorical property separating these theories.

Recovering $\Quant{\mathbb{R}}$ is a pleasing consequence of our tomography-free approach, with most reconstructions ruling it out from the outset by assuming local tomography; an example of a probabilistic reconstruction which does recover both theories is~\cite{hohn2017quantum}.

\section{Further Reconstructions}

The operational principles were chosen to be as broad as possible while allowing for our main result to hold. The results of this thesis allow us to now also deduce some alternative sets of axioms for reconstructions.

\subsection{Using coarse-graining}

We saw that the operational principles provide a `coarse-graining' addition operation $f + g$ on morphisms. In fact this is surprisingly well-behaved.

\begin{proposition} \label{cor:cancellative}
In any dagger theory $\catC$ satisfying the operational principles, the following hold for all morphisms $f, g, h$: 
\begin{itemize}
\item
$f + g = f + h \implies g = h$;
\item
$f \otimes g = f \otimes h \implies f = 0 \text{ or } g = h$.
\end{itemize}
\end{proposition}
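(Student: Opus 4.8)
The plan is to reduce both statements to the category $\CPM(\catA)$ for a quantum category $\catA$, where additive inverses are available. By Theorem~\ref{thm:mainPropToQuantumCat} there is an equivalence of dagger theories $\catC \simeq \CPM(\catA)$ which preserves addition, with $\catA = \plusIdag{\catC_{\pure}}$ a quantum category. An equivalence of dagger theories induces monoid isomorphisms on hom-sets (it preserves $+$ and $0$) and preserves $\otimes$ up to coherent isomorphism, so it suffices to prove both cancellation laws in $\CPM(\catA)$; I would therefore assume $\catC = \CPM(\catA)$ from the outset.

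For the first claim I would recall, from Definition~\ref{def:CPM}, that a morphism $A \to B$ of $\CPM(\catA)$ is in particular a morphism of $\catA$ (of the relevant doubled type), and from Proposition~\ref{prop:addition-in-CPM} that the operation $+$ on $\CPM(\catA)(A,B)$ is just the restriction of the addition on the corresponding hom-monoid of $\catA$. Since $\catA$ is a quantum category, Proposition~\ref{prop:quant-cat-properties}~\ref{enum:purepropnegatives} equips every morphism of $\catA$ with an additive inverse, so each hom-monoid of $\catA$ is an abelian group, hence cancellative; therefore its sub-monoid $\CPM(\catA)(A,B)$ is cancellative as well, and $f + g = f + h$ yields $g = h$ by adding the inverse of $f$ (computed in $\catA$).

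For the second claim, given $F \colon A \to B$ and $G, H \colon C \to D$ with $F \otimes G = F \otimes H$, I would pass to the underlying $\catA$-morphisms $\hat F, \hat G, \hat H$. The monoidal product of $\CPM(\catA)$ is inherited from that of $\catA$ up to the canonical symmetry isomorphisms rearranging the doubled objects, so conjugating by these transports the hypothesis to $\hat F \otimes \hat G = \hat F \otimes \hat H$ in $\catA$. Working in $\catA$, which has addition and hence satisfies $\hat F \otimes (-\hat H) = -(\hat F \otimes \hat H)$, I obtain
\[
\hat F \otimes (\hat G - \hat H) = \hat F \otimes \hat G - \hat F \otimes \hat H = 0 .
\]
A quantum category satisfies zero-cancellativity — every non-zero object has a non-zero state by state habitation, which is an isometry after dagger normalisation, so Lemma~\ref{lem:dagzero}~\ref{enum:zero-mult} applies — giving $\hat F = 0$ or $\hat G - \hat H = 0$. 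Since a morphism of $\CPM(\catA)$ vanishes iff its underlying $\catA$-morphism does, this means $F = 0$ in the first case and $G = H$ in the second.

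The only delicate point, and the step to get right, is the bookkeeping around Definition~\ref{def:CPM} and Proposition~\ref{prop:addition-in-CPM}: one must check carefully that the hom-monoids and, up to canonical isomorphism, the monoidal products of $\CPM(\catA)$ genuinely are inherited from $\catA$, so that the abelian group structure on $\catA$'s hom-sets and zero-cancellativity in $\catA$ can be imported verbatim. Everything else is formal.
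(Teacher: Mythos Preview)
Your proposal is correct and follows essentially the same approach as the paper: reduce via Theorem~\ref{thm:mainPropToQuantumCat} to $\CPM(\catA)$ for a quantum category $\catA$, observe that the addition (and, up to symmetry, the tensor) in $\CPM(\catA)$ is inherited from $\catA$, and then use the existence of additive inverses in $\catA$ together with zero-cancellativity. The paper compresses all of this into two sentences, whereas you spell out the bookkeeping about hom-monoids, the tensor up to symmetry, and the appeal to Lemma~\ref{lem:dagzero}~\ref{enum:zero-mult}; but the substance is the same.
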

\begin{proof}
We have $\catC \simeq \CPM(\catA)$ for a quantum category $\catA$. But the definition of addition in $\CPM(\catA)$ is simply addition in $\catA$. Since $\catA$ has negatives $-f$ for all morphisms $f$ it satisfies both properties. 
\end{proof}

Under even milder assumptions we obtain another property of coarse-graining.

\begin{lemma} \label{lem:OpLemma-Redux}
In any non-trivial compact dagger theory with addition and dagger kernels satisfying strong purification and pure exclusion, all morphisms $f, g$ satisfy
\[
f + g = 0 \implies f = g = 0
\] 
\end{lemma}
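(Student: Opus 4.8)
The plan is to reduce the statement to the case of scalars, where it admits a short direct proof using a pair of orthonormal pure states. Since $\catC$ is a non-trivial compact dagger theory with dagger kernels satisfying strong purification and pure exclusion, Proposition~\ref{prop:OpTheoryProperties} applies and provides three ingredients: the rule $\discard{} \circ h = 0 \implies h = 0$ for all morphisms $h$; the fact that every causal pure state is a dagger kernel, hence an isometry; and an object $B$ carrying causal pure states $\ket{0}, \ket{1}$ with $\ket{0} = \ket{1}^\bot$. Because $\ket{0} = \coker(\ket{1})^\dagger$ we get $\ket{0}^\dagger \circ \ket{1} = 0$, and as dagger kernels both states are isometries, so $\{\ket{0}, \ket{1}\}$ is orthonormal. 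I would then use compactness to bend the input wire of an arbitrary $f \colon A \to B$ into an output, obtaining a state $\phi_f \colon I \to A^* \otimes B$; this assignment $h \mapsto \phi_h$ is a bijection that preserves $+$ (composition and $\otimes$ both do, by the monoid-enrichment of $\catC$) and sends $0$ to $0$. Hence from $f + g = 0$ we obtain $\phi_f + \phi_g = 0$, and applying $\discard{A^* \otimes B}$ turns this into an identity of scalars. So it suffices to prove that $r + s = 0 \implies r = s = 0$ for scalars $r, s$; granting this, $\discard{} \circ \phi_f = 0$, whence $\phi_f = 0$ and so $f = 0$ by the rule above, and likewise $g = 0$.

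For the scalar case I would use that $\catC$ has addition to form the state $\psi := r \cdot \ket{0} + s \cdot \ket{1} \colon I \to B$. Since $\ket{0}$ and $\ket{1}$ are causal, discarding gives $\discard{B} \circ \psi = r \cdot \id{I} + s \cdot \id{I} = r + s = 0$, so $\psi = 0$. Composing $\psi$ with $\ket{0}^\dagger$ and $\ket{1}^\dagger$ and using orthonormality of $\{\ket{0}, \ket{1}\}$ then yields $r = \ket{0}^\dagger \circ \psi = 0$ and $s = \ket{1}^\dagger \circ \psi = 0$, completing this case.

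I do not expect any serious difficulty. The one point deserving care is the choice of the object $B$: it is essential that $\ket{0}$ and $\ket{1}$ be not merely pure but orthonormal, which is precisely what Proposition~\ref{prop:OpTheoryProperties} (via $\ket{0} = \ket{1}^\bot$ together with the fact that causal pure states are dagger kernels) guarantees. The remaining verifications — that wire-bending is an additive bijection preserving $0$, and the elementary manipulations of scalar multiplication and discarding — are routine consequences of the definitions and the enrichment of $\catC$ in commutative monoids.
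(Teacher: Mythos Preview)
Your proof is correct and uses the same key idea as the paper: tag the two summands with an orthonormal pair of causal pure states $\ket{0},\ket{1}$, observe that discarding the resulting morphism gives $0$, invoke the rule $\discard{}\circ h=0\implies h=0$, and then separate the summands again using $\ket{0}^\dagger,\ket{1}^\dagger$.

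The only difference is organisational. You first reduce the general case to the scalar case via wire-bending and discarding, and then run the tagging argument on scalars $r,s$ by forming $\psi=r\cdot\ket{0}+s\cdot\ket{1}$. The paper instead runs the tagging argument directly on arbitrary $f,g\colon A\to B$, forming $h\colon A\to B\otimes C$ with (essentially) $h=f\otimes\ket{0}+g\otimes\ket{1}$, so that $\discard{}\circ h=\discard{}\circ f+\discard{}\circ g=\discard{}\circ(f+g)=0$, hence $h=0$, and then composing with $\ket{i}^\dagger$ on the $C$-wire recovers $f=0$ and $g=0$. Your detour through compactness is unnecessary but harmless; the paper's version is slightly shorter and does not actually use compactness at this step.
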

\begin{proof}
By Proposition~\ref{prop:OpTheoryProperties} all kernels are pure isometries and causal, and any non-trivial object $C$ has an orthonormal pair of pure states $\ket{0}, \ket{1}$. Now suppose that $f, g \colon A \to B$ have $f + g = 0$. Then
\[
\scalebox{0.8}{\input{./figures/pos-arg.tikz}}
\quad 
\text{ has }
\quad 
\scalebox{0.8}{\input{./figures/pos-arg2.tikz}}
\ 0 
\]
and so $h = 0$ by the same proposition. But then applying $\ket{0}$ we obtain $f = 0$, and similarly $g = 0$ also.
\end{proof}

If we instead take the operation $+$ as primitive, as is typical in the study of operational theories (see Chapters~\ref{chap:OpCategories} and~\ref{chap:totalisation}), several of our principles follow almost automatically. Recall that here the physically meaningful morphisms $f$ are those which are sub-causal, with $\discard{} \circ f + e = \discard{}$ for some effect $e$.  

\begin{lemma} \label{lem:caus-complemented}
Let $\catC$ be a dagger theory with addition satisfying
\begin{align}
\discard{} + e = \discard{} &\implies e = 0 \label{eq:effect-zero-law}
\\ 
d + e = 0 &\implies d = e = 0
\end{align}
for all effects $d, e$. 
Then all kernels satisfy~\eqref{eq:causally-complemented} iff all kernels and cokernels are sub-causal. Hence in this case they are causally complemented.
\end{lemma}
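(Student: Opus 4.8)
The plan is to reduce the statement to an equational identity between effects on an object and then exploit the hypotheses on addition. First I would unpack what the two sides mean. A kernel $k \colon K \to A$ is \emph{causally complemented} precisely when the pair of effects $\discard{} \circ k^\dagger$ and $\discard{} \circ k^{\bot\dagger}$ is total; by Proposition~\ref{prop:coarse-graining}, in the presence of addition this is equivalent to the single equation~\eqref{eq:causally-complemented}, namely $\discard{} \circ k^\dagger + \discard{} \circ k^{\bot\dagger} = \discard{}$. So it suffices to show: all dagger kernels satisfy~\eqref{eq:causally-complemented} if and only if all dagger kernels and all dagger cokernels are sub-causal.

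For the forward direction, suppose~\eqref{eq:causally-complemented} holds for all kernels. Then $\discard{} \circ k^\dagger$ is sub-causal, witnessed by the effect $e = \discard{} \circ k^{\bot\dagger}$, so $k^\dagger$ is sub-causal; but $k^\dagger = \coker(k^\dagger)^{\dagger\dagger}\circ(\cdots)$ — more simply, every dagger cokernel is of the form $k^\dagger$ for the dagger kernel $k$, and $k^{\bot\dagger}$ is itself a dagger cokernel while $k^{\bot}$ is a dagger kernel, so the same equation read with $k$ replaced by $k^\bot$ (using $k^{\bot\bot} = k$) shows $k^{\bot\dagger}$, hence every cokernel, is sub-causal. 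And $k$ itself is sub-causal: applying~\eqref{eq:causally-complemented} to the dagger kernel $k$ viewed through $\discard{} \circ k$... here one must be slightly careful, since $k \colon K \to A$ has codomain $A$, so $\discard{A}\circ k \colon K \to I$ is an effect on $K$; sub-causality of $k$ asks for $\discard{A}\circ k + d = \discard{K}$ for some effect $d$ on $K$. This follows from~\eqref{eq:causally-complemented} applied on the object $K$: writing $\id{K} = k^\dagger \circ k + (\text{orthogonal part})$ is not immediate without the decomposition lemma, so instead I would use that $k$ is an isometry together with $\discard{}\circ k^\dagger + \discard{}\circ k^{\bot\dagger} = \discard{A}$, compose on the right with $k$, and use $k^\dagger \circ k = \id{K}$ and $k^{\bot\dagger}\circ k = 0$ (orthogonality of $k$ and $k^\bot$, which holds for dagger kernels and their complements) to get $\discard{A}\circ k = \discard{K} + (\text{something})$... actually this gives $\discard{K} = \discard{A}\circ k$, i.e. $k$ is causal, hence sub-causal. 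The cleanest route: compose~\eqref{eq:causally-complemented} on the right with $k$.

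For the converse, suppose every dagger kernel and every dagger cokernel is sub-causal. Fix a dagger kernel $k \colon K \to A$ with complement $k^\bot \colon K^\bot \to A$, where $k^\bot = \coker(k)^\dagger$. By Proposition~\ref{prop:quant-cat-properties}~\ref{enum:decomp-property} — wait, that requires a quantum category; instead I will stay in $\catC$ and argue directly. Consider the morphism $h = k \circ k^\dagger + k^\bot \circ k^{\bot\dagger} \colon A \to A$ (this is where addition is used). Composing $h$ with $k$ gives $k$ (since $k^\dagger k = \id{}$, $k^{\bot\dagger}k = 0$), and composing with $k^\bot$ gives $k^\bot$; by well-pointedness-type reasoning, or more robustly by the fact that $[k,k^\bot]$ is a dagger iso onto $A$ in the relevant setting, $h = \id{A}$. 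But establishing $h = \id{A}$ in a bare dagger theory needs the decomposition into orthogonal kernels to span $A$, which is exactly Proposition~\ref{prop:quant-cat-properties}~\ref{enum:decomp-property} in the quantum-category case; since the lemma as stated only assumes a dagger theory with the two cancellation laws~\eqref{eq:effect-zero-law}, I would instead argue: $k$ and $k^\bot$ are sub-causal, say $\discard{}\circ k^\dagger + d = \discard{A}$ and $\discard{}\circ k^{\bot\dagger} + e = \discard{A}$; I need to show $d = \discard{}\circ k^{\bot\dagger}$ (equivalently $e = \discard{}\circ k^\dagger$), which then yields~\eqref{eq:causally-complemented}. Composing $\discard{}\circ k^\dagger + d = \discard{A}$ on the right with $k^\bot$ and using orthogonality $k^\dagger k^\bot = 0$ gives $d \circ k^\bot = \discard{}\circ k^\bot = \discard{K^\bot}$ (the last equality because $k^\bot$ is causal, being sub-causal with its own complement... here one loops back). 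The main obstacle, and the step I expect to be delicate, is precisely this: pinning down that the "leftover" effect $d$ of one kernel's sub-causality equals $\discard{}\circ k^{\bot\dagger}$, rather than merely some effect, \emph{without} invoking the biproduct/decomposition structure of a quantum category. I anticipate needing the two cancellation hypotheses~\eqref{eq:effect-zero-law} together with orthogonality of $k$ and $k^\bot$ and the universal property of $\img(k) = k$ and $\coker(k)$ to force uniqueness: one shows both $\discard{}\circ k^\dagger + d$ and $\discard{}\circ k^\dagger + \discard{}\circ k^{\bot\dagger}$ equal $\discard{A}$ after composing with a suitable epi (the cotuple $[k,k^\bot]$ or the comparison to $\id{A}$), then cancel. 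Once that identity is in hand, the equivalence with causal complementation is immediate from the definitions and the final sentence of the lemma follows.
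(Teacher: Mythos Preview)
Your forward direction is correct and matches the paper: composing \eqref{eq:causally-complemented} on the right with $k$ gives $\discard{K} = \discard{A}\circ k$, so every kernel is causal, while the equation itself witnesses each cokernel as sub-causal.

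For the converse there is a genuine gap. You write ``$k$ and $k^\bot$ are sub-causal, say $\discard{}\circ k^\dagger + d = \discard{A}$ and $\discard{}\circ k^{\bot\dagger} + e = \discard{A}$'', but these equations express sub-causality of the \emph{cokernels} $k^\dagger$ and $k^{\bot\dagger}$, not of $k$ and $k^\bot$. Sub-causality of the kernel $k\colon K\to A$ means $\discard{A}\circ k + a = \discard{K}$ for some effect $a$ on $K$, and you never invoke this. That omission is exactly why you hit the circularity you flag. The paper uses both hypotheses together: from $\discard{K}\circ k^\dagger + b = \discard{A}$ (cokernel sub-causal), compose on the right with the isometry $k$ to get $\discard{K} + b\circ k = \discard{A}\circ k$; substitute into $\discard{A}\circ k + a = \discard{K}$ (kernel sub-causal) to obtain
\[
\discard{K} = \discard{K} + b\circ k + a.
\]
Now the two addition hypotheses bite: \eqref{eq:effect-zero-law} forces $b\circ k + a = 0$, and the positivity law then forces $a = 0$ and $b\circ k = 0$. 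From $a=0$ every kernel, in particular $k^\bot$, is \emph{causal} rather than merely sub-causal; from $b\circ k = 0$ the effect $b$ factors through $\coker(k) = k^{\bot\dagger}$, and combining this with causality of $k^\bot$ a short computation gives $b = \discard{K^\bot}\circ k^{\bot\dagger}$, which is \eqref{eq:causally-complemented}. Your instinct that the two cancellation hypotheses must be what pin down the leftover effect was right, but the route runs through the kernel's own sub-causality, which your argument left unused.
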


\begin{proof}
The equation~\eqref{eq:causally-complemented} makes all cokernels sub-causal, and composing with any kernel $k$ shows that it is causal. Conversely let $k \colon K \to A$ be a kernel with 
\[
\scalebox{0.8}{\input{./figures/kersc.tikz}}  \qquad \qquad \scalebox{0.8}{\input{./figures/kersc2.tikz}}
\]
for some effects $a, b$. Then since $k$ is an isometry we obtain $\discard{K} = \discard{K} + b \circ k + a$ and so $b \circ k = 0 = 0$. Hence all kernels are causal. It follows that $c = k^{\bot \dagger}$ has
\begin{align*}
\discard{K^\bot} \circ c
&= \discard{A} \circ c^\dagger \circ c  & (\text{c causal}) \\ 
&= b \circ c^\dagger \circ c  & (c \circ k = 0) \\  
&= b  &  (b \circ k = 0)
\end{align*}
as required. The final statement is from Proposition~\ref{prop:coarse-graining}.
\end{proof}

Next, pure exclusion can also be deduced easily.

\begin{lemma} \label{lem:orthogonality}
Let $\catC$ be a compact dagger theory with strong purification, dagger kernels, normalisation, and addition satisfying 
\begin{equation} \label{eq:zero-scalar-law}
\scalebox{0.8}{\input{./figures/scalar-weird-law.tikz}} \ \ 0
\end{equation}
for all scalars $r$. Suppose also that $\psi^{\dagger}$ is sub-causal for every causal pure state $\psi$.
Then $\catC$ satisfies pure exclusion. 
\end{lemma}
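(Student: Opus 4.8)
The goal is to establish pure exclusion, i.e.~that no $\otimes$-pure state $\psi$ of a non-trivial object $A$ is zero-epic, or equivalently (using Lemma~\ref{lem:PureExclusion}, since kernels are $\otimes$-compatible in the compact setting by Proposition~\ref{prop:kernels-in-compact}) that every $\otimes$-pure state is kernel-pure. The idea is to use normalisation to reduce to causal pure states, and then use the hypothesis that $\psi^\dagger$ is sub-causal together with the scalar law~\eqref{eq:zero-scalar-law} to rule out a pure state having trivial cokernel unless its object is itself trivial.

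\textbf{First steps.} Let $\psi \colon I \to A$ be a non-zero $\otimes$-pure state; by normalisation write $\psi = \phi \circ r$ for a causal state $\phi$ and scalar $r$, and by Lemma~\ref{lem:caustensclosedexamples} (since $\otimes$-pure morphisms are $\otimes$-complete, applying the converse direction of its part~\ref{enum:wholecomp} after noting $r$ is a scalar) $\phi$ is again $\otimes$-pure. It suffices to show that $\Img(\phi)$ is trivial, since then $\img(\psi) = \img(\phi) \circ (\text{unit})$ and $\psi = \img(\phi) \circ (\text{scalar})$ shows $\psi$ is kernel-pure provided $r$ is zero-epic, which holds automatically once we know $\Img(\phi)$ is trivial and $\phi$ causal (then $\img(\phi)$ is a causal kernel into a trivial object). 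Writing $\phi = \img(\phi) \circ \chi$ where $\coker(\chi) = 0$, we have by Lemma~\ref{lem:helpful-tensPureKer}\ref{enum:ker-whole} that $\chi$ is $\otimes$-pure, and it is causal since $\img(\phi)$ is an isometry and $\phi$ causal. So we have reduced to: \emph{every causal $\otimes$-pure state $\chi$ with $\coker(\chi) = 0$ has trivial domain}.

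\textbf{The core argument.} Suppose $\chi \colon I \to A$ is such a state with $A$ non-trivial. By strong purification / Proposition~\ref{prop:OpTheoryProperties}\ref{enumi:pairOfPure} (or directly, since $\Coker(\chi)$ would be non-zero were $A$ non-trivial — but here $\coker(\chi) = 0$, so we cannot get an orthogonal state this way), the situation forces us instead to exploit sub-causality of $\chi^\dagger$. Since $\chi^\dagger \colon A \to I$ is sub-causal, $\discard{A} \circ (\text{something}) \ovee \chi^\dagger = \discard{A}$ in the sub-causal structure, i.e.~there is an effect $e$ with $\chi^\dagger + e = \discard{A}$ (using the total addition). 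Now bend wires: the state on $A^* \otimes A$ obtained from $\chi^\dagger$, or rather consider $\discard{A} \circ \chi = \chi^\dagger \circ \chi + e \circ \chi$; since $\chi$ is an isometry only if $\chi^\dagger \circ \chi = \id{I}$, and in general $\chi^\dagger \circ \chi$ is a scalar, the scalar law~\eqref{eq:zero-scalar-law} applied to $r = \chi^\dagger \circ \chi$ (once we check this scalar is non-zero, which follows from $\chi \neq 0$ and zero-cancellativity of scalars) will tell us the relevant effect vanishes. Tracking this through: $\chi^\dagger \circ \chi$ being a non-zero scalar, and $\coker(\chi) = 0$ meaning $\chi$ is zero-epic, forces $\chi$ to be (up to a zero-epic scalar) a causal isometric state, hence a kernel by Proposition~\ref{prop:OpTheoryProperties}\ref{enumi:caus-pure-state-kernel}; but a zero-epic kernel $\img(\psi) = \ker(0) = \id{A}$ up to causal isomorphism, making $\discard{A}$ a causal isomorphism and $A$ trivial, a contradiction.

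\textbf{Main obstacle.} The delicate point is the precise manipulation converting ``$\chi^\dagger$ sub-causal'' plus the scalar law~\eqref{eq:zero-scalar-law} into ``$\chi$ (rescaled) is causal/isometric'': one must correctly set up the bent-wire diagram so that~\eqref{eq:zero-scalar-law} applies — the law says a certain scalar-weighted effect composed with something is $0$ — and then use zero-cancellativity of scalars (Proposition~\ref{prop:OpTheoryProperties}\ref{enumi:zero-cancell}) and normalisation to peel off the scalar. I would want to mirror closely the structure of the proof of Lemma~\ref{lem:Pure-Excl-Is-Easy}, where a similar dance between normalisation, essential uniqueness and kernels is carried out. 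If the direct diagrammatic route proves awkward, the fallback is to invoke Lemma~\ref{lem:PureExclusion} in the contrapositive: assume a zero-epic pure state $\psi$ on non-trivial $A$ exists, normalise it to causal $\phi$, observe $\phi^\dagger$ sub-causal forces (via~\eqref{eq:effect-zero-law}-type cancellation available here through~\eqref{eq:zero-scalar-law}) that $\phi$ is already isometric, hence a kernel with $\img(\phi) = \id{A}$, contradicting non-triviality of $A$.
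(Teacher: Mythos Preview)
Your overall plan is sound---reduce to a causal pure state $\psi$ with $\coker(\psi)=0$ and show $\psi$ is unitary---and this matches the paper's strategy (which invokes Lemma~\ref{lem:Pure-Excl-Is-Easy} to reduce pure exclusion to ``every causal pure state is a kernel''). However, your execution has two genuine gaps.

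First, you treat the isometry property of $\psi$ as something to be \emph{derived} from the scalar law, writing ``$\chi$ is an isometry only if $\chi^\dagger\circ\chi=\id{I}$, and in general $\chi^\dagger\circ\chi$ is a scalar''. But $\psi$ is already an isometry: the CP axiom (part of strong purification) applied to the pure morphisms $\psi\colon I\to A$ and $\id{I}\colon I\to I$, both with $\discard{}\circ(-)=\id{I}$, gives $\psi^\dagger\circ\psi=\id{I}$. This is the key observation you are missing. With it, composing $\psi^\dagger+e=\discard{A}$ with $\psi$ yields $\id{I}+e\circ\psi=\id{I}$, and \emph{now} the scalar law~\eqref{eq:zero-scalar-law} gives $e\circ\psi=0$. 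Since $\coker(\psi)=0$ this forces $e=0$, so $\psi^\dagger=\discard{A}$; then $\psi\circ\psi^\dagger$ is causal pure with the same marginal as $\id{A}$, so essential uniqueness makes it unitary, hence $\psi$ is unitary.

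Second, your appeal to Proposition~\ref{prop:OpTheoryProperties}\ref{enumi:caus-pure-state-kernel} (``every causal pure state is a kernel'') is circular: that statement is proved in the paper \emph{using} pure exclusion (via Lemma~\ref{lem:PureExclusion}), which is exactly what you are trying to establish. The same applies to any attempt to invoke \ref{enumi:pairOfPure} for an orthogonal state. You must instead argue directly as above. (Minor slip: you write ``trivial domain'' where you mean trivial codomain.)
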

\begin{proof}
By Lemma~\ref{lem:Pure-Excl-Is-Easy} it remains to show that every causal pure state $\psi \colon I \to A$ is a kernel. It suffices to assume $\coker(\psi) = 0$ and show that $\psi$ is unitary. Then there is an effect $e$ for which 
\[
\scalebox{0.8}{\input{./figures/pureexcl1.tikz}} \quad \text{ and so } \quad \scalebox{0.8}{\input{./figures/pureexcl12.tikz}}
\] 
since $\psi$ is a causal isometry. Hence $e \circ \psi = 0$ and so $e = 0$ giving $\psi^{\dagger} = \discard{A}$. But then by essential uniqueness $\psi \circ \psi^{\dagger}$ is unitary, making $\psi$ unitary also.
\end{proof}

We can now present our principles in a new equivalent manner in terms of coarse-graining.

\begin{theorem} \label{thm:alternate-axioms}
A non-trivial dagger theory $\catC$ satisfies the operational principles iff it has the properties of Lemma~\ref{lem:orthogonality} and that every dagger cokernel is sub-causal.
\end{theorem}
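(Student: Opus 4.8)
The plan is to establish both directions separately, exploiting the reformulations of the individual operational principles already recorded in Lemmas~\ref{lem:orthogonality} and~\ref{lem:caus-complemented}, together with Proposition~\ref{prop:coarse-graining}, which tells us that in the presence of the other principles \emph{conditioning} is equivalent to the existence of a (unique) addition operation satisfying the causal complementation rule~\eqref{eq:causally-complemented}, and that causal complementation of kernels is itself equivalent to all dagger kernels being causal and satisfying~\eqref{eq:causally-complemented}.

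For the forward direction, suppose $\catC$ satisfies the operational principles. By Proposition~\ref{prop:coarse-graining} it carries a unique addition $+$ making it dagger monoidally enriched in commutative monoids, with all dagger kernels causal and satisfying~\eqref{eq:causally-complemented}; in particular composing~\eqref{eq:causally-complemented} with the discard effect shows every cokernel $k^\bot$ is sub-causal. By Proposition~\ref{cor:cancellative} and Lemma~\ref{lem:OpLemma-Redux} the addition is cancellative and satisfies $f + g = 0 \implies f = g = 0$, so in particular the scalar conditions \eqref{eq:effect-zero-law}, the second hypothesis of Lemma~\ref{lem:caus-complemented}, \eqref{eq:zero-scalar-law}, and the equation $f+g = 0 \implies f = g = 0$ of Lemma~\ref{lem:orthogonality} all hold. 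Normalisation holds by Proposition~\ref{prop:OpTheoryProperties}~\ref{enumi:normalisation}, and every causal pure state is a kernel by Proposition~\ref{prop:OpTheoryProperties}~\ref{enumi:caus-pure-state-kernel}, hence is an isometry and so $\psi^\dagger = \psi^\dagger \circ \psi \circ \psi^\dagger$ is sub-causal. Thus all the hypotheses of Lemma~\ref{lem:orthogonality} are met, which is automatic here since we already have pure exclusion; the point is rather that the theory visibly has the properties listed there and that every dagger cokernel is sub-causal, as required.

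For the converse, suppose $\catC$ is non-trivial, has the properties of Lemma~\ref{lem:orthogonality}, and every dagger cokernel is sub-causal. The properties of Lemma~\ref{lem:orthogonality} already include strong purification, dagger kernels, normalisation, and the scalar law~\eqref{eq:zero-scalar-law}, together with addition and the hypothesis that $\psi^\dagger$ is sub-causal for every causal pure state $\psi$; by that lemma, pure exclusion follows, giving principle~\ref{princ:pureexcl}. Strong purification is principle~\ref{princ:strongpurif}. It remains to produce conditioning and the causal completeness of dagger kernels, i.e.\ principles~\ref{princ:cond} and~\ref{princ:ker}. Since $\catC$ already has addition, Proposition~\ref{prop:coarse-graining} reduces both to verifying that all dagger kernels are causal and satisfy~\eqref{eq:causally-complemented}; by Lemma~\ref{lem:caus-complemented} this follows once we know that all kernels and cokernels are sub-causal, provided the scalar laws \eqref{eq:effect-zero-law} and $d + e = 0 \implies d = e = 0$ hold for effects. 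Cokernels are sub-causal by hypothesis; a kernel $k = k^{\bot\bot}$ is then sub-causal as the complement of the sub-causal cokernel $k^\bot$. The two effect laws follow from the general fact (Lemma~\ref{lem:OpLemma-Redux}, whose hypotheses — non-triviality, compactness, addition, dagger kernels, strong purification, pure exclusion — are all now in force) that $f + g = 0 \implies f = g = 0$, and from \eqref{eq:zero-scalar-law} applied with $r = \discard{} \circ e$ to deduce \eqref{eq:effect-zero-law}. Feeding this into Lemma~\ref{lem:caus-complemented} and then Proposition~\ref{prop:coarse-graining} yields causally complemented dagger kernels and conditioning, completing the operational principles.

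The main obstacle I anticipate is bookkeeping rather than conceptual: one must check carefully that the scalar/effect cancellation laws needed as hypotheses of Lemmas~\ref{lem:caus-complemented} and~\ref{lem:orthogonality} are genuinely available at the point they are invoked, since some of them (notably $f+g=0 \implies f=g=0$) are themselves consequences of pure exclusion via Lemma~\ref{lem:OpLemma-Redux}, so the order of deduction matters — pure exclusion must be established first from Lemma~\ref{lem:orthogonality}, then Lemma~\ref{lem:OpLemma-Redux} applied, then the effect laws extracted, then Lemma~\ref{lem:caus-complemented} used. A secondary subtlety is confirming that the addition supplied in the hypotheses coincides with the \emph{unique} one produced by Proposition~\ref{prop:coarse-graining}, which is immediate from the uniqueness clause of that proposition once~\eqref{eq:causally-complemented} is known.
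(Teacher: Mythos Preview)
Your overall strategy matches the paper's, and the forward direction is fine. The converse has a genuine gap in the step where you claim \eqref{eq:effect-zero-law} follows from \eqref{eq:zero-scalar-law} ``applied with $r = \discard{} \circ e$''. This does not typecheck: $e \colon A \to I$ is an arbitrary effect, not a scalar, and there is no direct way to reduce $\discard{A} + e = \discard{A}$ to a single scalar equation. The paper's argument is genuinely more involved: first compose with an arbitrary causal state $\rho$ to obtain $\id{I} + e \circ \rho = \id{I}$, whence $e \circ \rho = 0$ by \eqref{eq:zero-scalar-law}; then, using that kernels are causal (available via Proposition~\ref{prop:OpTheoryProperties} now that pure exclusion has been established), take any causal pure state $\psi$ of $\CoIm(e)$, observe that $\coim(e)^\dagger \circ \psi$ is a causal state so $e$ annihilates it, and deduce $\psi = 0$ since this state lies in $\ker(e)^\bot$. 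Hence $\CoIm(e)$ is a zero object and $e = 0$. This coimage argument is the missing idea.

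A secondary issue: your claim that a kernel $k = k^{\bot\bot}$ is sub-causal ``as the complement of the sub-causal cokernel $k^\bot$'' is circular, since $k^\bot$ is itself a kernel, not a cokernel. The clean route (which the paper takes) is to invoke Proposition~\ref{prop:OpTheoryProperties}~\ref{enumi:dag-ker-pure} to get that all dagger kernels are in fact causal, once strong purification and pure exclusion are in hand.
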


\begin{proof}
If $\catC$ satisfies the principles then~\eqref{eq:zero-scalar-law} holds by Corollary~\ref{cor:cancellative}, normalisation by Lemma~\ref{lem:Pure-Excl-Is-Easy}, and if $k$ is a kernel then $k^{\dagger}$ is sub-causal by~\eqref{eq:causally-complemented}. In particular if $\psi$ is a causal pure state then $\psi^{\dagger}$ is sub-causal.

Conversely, if these hold then by Lemma~\ref{lem:orthogonality} pure exclusion holds and by Proposition~\ref{prop:OpTheoryProperties} all kernels are causal. Hence by Lemmas~\ref{lem:OpLemma-Redux} and~\ref{lem:caus-complemented} it remains to check~\eqref{eq:effect-zero-law} for all effects $e$. But if $\discard{} + e = \discard{}$ then $e \circ \rho = 0$ for any causal state $\rho$. In particular, since kernels are causal, for any causal pure state $\psi$ of $\CoIm(e)$ we have $e \circ \coim(e)^{\dagger} \circ \psi = 0$ and so $\coim(e)^\dagger \circ \psi = 0$ giving $\psi = 0$. Hence $\CoIm(e) = 0$ and so $e = 0$. 
\end{proof}

This result lets us deduce a simpler reconstruction than Corollary~\ref{cor:prob-reconstruction-1} for probabilistic theories with coarse-graining.

\begin{corollary} \label{cor:simple-prob-reconstr}
Let $\catC$ be a compact dagger theory with addition and which is probabilistic. Suppose that $\catC$ has strong purification, dagger kernels, and that $f^{\dagger}$ is sub-causal for every dagger kernel or causal pure state $f$. Then $\catC$ is equivalent to $\Quant{\mathbb{R}}$ or $\Quant{\mathbb{C}}$.
\end{corollary}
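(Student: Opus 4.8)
The strategy is to reduce this corollary to our main reconstruction result Corollary~\ref{cor:prob-reconstruction-1}: once we verify that $\catC$ satisfies the operational principles, the conclusion follows immediately since $\catC$ is probabilistic. The heart of the matter is therefore to invoke Theorem~\ref{thm:alternate-axioms}, which characterises the operational principles (for a non-trivial dagger theory) in terms of the properties of Lemma~\ref{lem:orthogonality} together with every dagger cokernel being sub-causal. So the real task is to check each of those hypotheses from the weaker assumptions in this corollary.

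First I would dispense with non-triviality: since $\catC$ is probabilistic we have $\catC(I,I)\simeq\Rpos$, and in particular $\id{I}\neq 0$, so $\catC$ is non-trivial. Next, the properties of Lemma~\ref{lem:orthogonality} are: compact (given), strong purification (given), dagger kernels (given), normalisation, the scalar law \eqref{eq:zero-scalar-law}, and that $\psi^\dagger$ is sub-causal for every causal pure state $\psi$ (given directly in the hypotheses). Normalisation follows because in $\Rpos$ every non-zero scalar $r$ is of the form $r = s \cdot 1$ for $s = r \in \Rpos$, with $1 = \id{I}$ causal — so every non-zero state is a scalar multiple of a causal one; more carefully, one should check that dividing by a positive real keeps a state causal, which is routine in the probabilistic setting. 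The scalar law \eqref{eq:zero-scalar-law} — that $r + r = 0 \implies r = 0$ for scalars, or more precisely the displayed diagrammatic law — holds because $\Rpos$ is cancellative: if $r + r = 0$ in $\Rpos$ then $r = 0$. Then Lemma~\ref{lem:orthogonality} gives pure exclusion. Finally, the requirement that every dagger cokernel is sub-causal is exactly the remaining hypothesis of the corollary (dagger kernels $f$ have $f^\dagger$ sub-causal, and $f^\dagger$ for a dagger kernel $f$ is precisely a dagger cokernel, since $\coker(g) = \ker(g^\dagger)^\dagger$ in any dagger category with dagger kernels).

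With all hypotheses of Theorem~\ref{thm:alternate-axioms} verified, we conclude that $\catC$ satisfies the operational principles. Since $\catC$ is also probabilistic, Corollary~\ref{cor:prob-reconstruction-1} applies directly and yields $\catC \simeq \Quant{\mathbb{R}}$ or $\catC \simeq \Quant{\mathbb{C}}$, completing the proof.

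**Main obstacle.** The only genuinely delicate point is matching the hypothesis ``$f^\dagger$ is sub-causal for every dagger kernel or causal pure state $f$'' to the two distinct requirements in Theorem~\ref{thm:alternate-axioms}: the sub-causality of $\psi^\dagger$ for causal pure states $\psi$ feeds into Lemma~\ref{lem:orthogonality}, while the sub-causality of $k^\dagger$ for dagger kernels $k$ must be recognised as the statement ``every dagger cokernel is sub-causal''. One has to be careful that the notion of sub-causality here refers to the addition $+$ (the $\subcausal$ condition of Chapter~\ref{chap:totalisation}), and that the probabilistic identification $\catC(I,I)\simeq\Rpos$ is compatible with it — i.e.\ that the unit interval sits inside the scalars as expected. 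These are bookkeeping matters rather than conceptual ones, so the proof should be short.
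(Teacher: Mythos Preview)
Your approach is correct and matches what the paper intends: the corollary is stated without proof in the paper, immediately following Theorem~\ref{thm:alternate-axioms}, and is meant to follow exactly as you describe---verify the hypotheses of that theorem (i.e.\ the properties of Lemma~\ref{lem:orthogonality} together with sub-causality of dagger cokernels) using the probabilistic scalar structure, then apply Corollary~\ref{cor:prob-reconstruction-1}.

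One small point: your justification of non-triviality is not quite right. Having $\id{I}\neq 0$ does not by itself give a non-trivial object in the paper's sense (an object $A$ that is neither zero nor has $\discard{A}$ an isomorphism), since $I$ is always trivial. The paper itself glosses over this, so it is not a defect in your reduction; but if you want to be careful you should either note that non-triviality is being tacitly assumed, or observe that once the other hypotheses of Theorem~\ref{thm:alternate-axioms} are in place one obtains (via Proposition~\ref{prop:OpTheoryProperties}~\ref{enumi:specialObjectC}, whose proof uses only strong purification, dagger kernels and pure exclusion) an object with two orthonormal pure states, which cannot be trivial.
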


\subsection{Alternative notions of purity}

Rather than using $\otimes$-purity, one may wish to instead consider the more typical notion of purification in terms of morphisms which are $+$-pure, as for example used when the principle was introduced in~\cite{chiribella2010purification}.  

In fact by Proposition~\ref{prop:pure-lem} and Lemma~\ref{lem:caustensclosedexamples} for any dagger theory (with addition) $\catC$ it is equivalent to consider purifications satisfying the properties of Principle~\ref{princ:strongpurif} with respect to morphisms which are $\otimes$-pure, $+$-pure, or meet any of the other notions of purity we met in Chapter~\ref{chap:principles}. Moreover, in any theory satisfying the operational principles, all of these in fact coincide. 

\begin{lemma} \label{lem:OpPrinciplesPurityCollapse}
In any dagger theory $\catC$ satisfying the operational principles the classes of 
$\otimes$-pure, $+$-pure, copure and kernel-pure morphisms all coincide.
\end{lemma}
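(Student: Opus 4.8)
The plan is to reduce everything to the structure theorem already available, namely Theorem~\ref{thm:mainPropToQuantumCat}, which gives an addition-preserving equivalence $\catC \simeq \CPM(\catA)$ for a quantum category $\catA$, together with the fact (established in the proof of Proposition~\ref{prop:quant-to-theory}) that $\CPM(\catA)_\pure = \Dbl{\catA}$. So it suffices to prove the coincidence of the four classes inside a category of the form $\CPM(\catA)$. The $\otimes$-pure morphisms are by definition the environment-structure subcategory $\Dbl{\catA}$. The plan is then to show each of the other three classes equals $\Dbl{\catA}$ in turn.

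First I would treat $+$-purity. Since $\catC$ has addition, Proposition~\ref{prop:wholetoatomic} applies as soon as $\catC$ contains a pair of perfectly distinguishable causal states; such a pair exists because by Proposition~\ref{prop:OpTheoryProperties}\ref{enumi:specialObjectC} there is an object with causal pure states $\ket{0} = \ket{1}^\bot$, and these are perfectly distinguishable via the effects $\ket{0}^\dagger, \ket{1}^\dagger$ using causal complementation. Hence every $\otimes$-pure morphism is $+$-pure. For the converse I would invoke Lemma~\ref{lem:caustensclosedexamples}, which says the $+$-pure morphisms are $\otimes$-complete, together with Proposition~\ref{prop:pure-lem}: since $\catC$ has essentially unique dilations with respect to $\catC_\pure = \Dbl{\catA}$, and every non-zero object has a causal pure state, and $\Dbl{\catA}$ is closed under $\otimes$, any $\otimes$-complete class containing $\Dbl{\catA}$ and contained in the $\otimes$-pure morphisms must coincide with $\catC_\pure$ — but I need to check the $+$-pure class lies inside $\catC_\pure$, which follows directly from the first half. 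Actually the cleanest route is: by Proposition~\ref{prop:pure-lem} the $\otimes$-pure morphisms are exactly $\catC_\pure$, and then $\otimes$-purity $\implies$ $+$-purity is the substantive direction, while $+$-purity $\implies$ $\otimes$-purity follows because a $+$-pure morphism has a purification $g \in \catC_\pure$, hence is $\otimes$-pure (a marginal of a $\otimes$-pure morphism that is itself $+$-pure is forced to be a scalar multiple of it, making it $\otimes$-pure — this is essentially the argument of Proposition~\ref{prop:pure-lem} run in reverse). Copurity is handled similarly: Lemma~\ref{lem:copure} gives that every morphism in $\catC_\pure$ is copure, and Lemma~\ref{lem:caustensclosedexamples}\ref{enum:copure} gives that copure morphisms are $\otimes$-complete, so again by Proposition~\ref{prop:pure-lem} (applied with $\catC_\prepure$ the copure morphisms, noting they contain all isomorphisms and every non-zero object has a causal copure state since causal pure states are copure) the copure class coincides with $\catC_\pure$.

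For kernel-purity, the plan is to use Proposition~\ref{prop:kerpure}: $\catC$ has normalisation by Proposition~\ref{prop:OpTheoryProperties}\ref{enumi:normalisation}, so part~\ref{enum:kerpure} gives that every kernel-pure state is $\otimes$-pure, and since $\catC$ has purification and pure exclusion, Lemma~\ref{lem:PureExclusion} gives that every $\otimes$-pure state is kernel-pure; so the two notions agree on states. To extend to arbitrary morphisms, I would bend wires using compactness — a morphism $f \colon A \to B$ is kernel-pure (respectively $\otimes$-pure) iff its induced state on $A^* \otimes B$ via~\eqref{eq:state-bend} is, where for $\otimes$-purity this equivalence is part of the standard behaviour of $\otimes$-pure morphisms under bending (Lemma~\ref{lem:caustensclosedexamples} and the compact structure), and for kernel-purity it is Definition~\ref{def:kernelpure} itself. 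The main obstacle I anticipate is bookkeeping around the state-versus-morphism reductions: I must confirm that $\otimes$-purity of $f$ really is equivalent to $\otimes$-purity of the bent state in a compact theory (this uses that $\otimes$-pure morphisms are closed under $\otimes$, composition, and contain the cups and caps since all identities are $\otimes$-pure — the ``no leaks'' condition, which holds here because identities are pure), and likewise that kernel-purity is genuinely compatible with the compact structure in the sense needed; once those equivalences are in place, the four classes collapse onto one another on states and hence on all morphisms.
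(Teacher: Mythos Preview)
Your treatment of kernel-purity matches the paper's essentially exactly (Lemma~\ref{lem:PureExclusion} plus Proposition~\ref{prop:kerpure}\ref{enum:kerpure}, then wire-bending), and your direction $\otimes$-pure $\implies$ $+$-pure via Proposition~\ref{prop:wholetoatomic} is the same as the paper's. The detour through Theorem~\ref{thm:mainPropToQuantumCat} and $\CPM(\catA)$ is unnecessary --- the paper works directly in $\catC$ --- but it is not wrong.

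There are two real problems. First, for copurity you try to invoke Proposition~\ref{prop:pure-lem} with $\catC_\prepure$ the copure morphisms, but that proposition requires \emph{essentially unique dilations with respect to $\catC_\prepure$}, and you have not checked the uniqueness part for copure dilations (you only know it for $\otimes$-pure ones). The fix is much simpler than Proposition~\ref{prop:pure-lem}: the remark after Lemma~\ref{lem:copure} already says copure morphisms are $\otimes$-pure whenever identities are (the no-leaks condition), which holds here; that plus Lemma~\ref{lem:copure} itself gives both inclusions directly. Second, and more seriously, your argument for $+$-pure $\implies$ $\otimes$-pure is a handwave. Saying ``a $+$-pure morphism has a purification $g\in\catC_\pure$, hence is $\otimes$-pure'' is a non-sequitur: every morphism has a purification, so this would prove every morphism is $\otimes$-pure. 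Your parenthetical about ``a marginal of a $\otimes$-pure morphism that is itself $+$-pure is forced to be a scalar multiple of it'' does not make sense as stated, and running Proposition~\ref{prop:pure-lem} ``in reverse'' is not a proof. The paper supplies a genuine argument here: for a $+$-pure effect $e$ with purification $p$, pick a causal pure state $\psi$ of $\Img(p)$; then by causal complementation $\psi^\dagger + d = \discard{}$, so $e = \psi^\dagger\circ p + d\circ p$, and $+$-purity of $e$ forces $\psi^\dagger\circ p = r\cdot e$ for a nonzero scalar $r$. Since $\psi^\dagger\circ p$ is a composite of pure morphisms it is pure and hence kernel-pure, and zero-cancellativity gives $\CoIm(r\cdot e)=\CoIm(e)$, so $e$ is kernel-pure. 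This step is the one missing idea in your plan.
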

\begin{proof}
$\otimes$-purity coincides with copurity by Lemma~\ref{lem:copure}, with kernel-purity by Lemma~\ref{lem:PureExclusion} and Proposition~\ref{prop:kerpure}~\ref{enum:kerpure}, and by Proposition~\ref{prop:wholetoatomic} any $\otimes$-pure morphism is $+$-pure. Finally we show that any $+$-pure morphism is kernel-pure. Bending wires it suffices to consider the case of a $+$-pure effect $e \colon A \to I$. 

Let $p \colon A \to B$ be a purification of $e$. If $\Img(p)$ is a zero object then $p = 0$ and so $e = 0$. Otherwise it has some causal pure state $\phi$. Then $\psi := \img(p) \circ \phi$ is a causal pure state with $\psi^{\dagger} \circ p \neq 0$. 
Now since $\psi$ is a kernel, by~\eqref{eq:causally-complemented} we in particular have $\psi^{\dagger} + d = \discard{B}$ for some effect $d$. But then 
\[
\scalebox{0.8}{\input{./figures/atomic-arg-effect.tikz}}
\quad
\text{ and so  }
\quad
\scalebox{0.8}{\input{./figures/atomic-arg2-effect.tikz}}
\]
for some scalar $r$, since $e$ is $+$-pure. Then since $p$ and $\psi$ is pure, $r \cdot e$ is also pure and hence kernel-pure. But by zero-cancellativity we have $\CoIm(r \cdot e) = \CoIm(e)$ and so if $r \cdot e$ is kernel-pure then so is $e$. 
\end{proof}

\subsection{Principles on kernels}

One of the less clearly operationally motivated of our principles is the CP axiom~\eqref{eq:CP axiom}, which it would be desirable to replace with more physical assumptions. Indeed we explored this earlier in Section~\ref{sec:CP}, where we saw how to derive~\eqref{eq:CP axiom} instead from the  internal isomorphism property, along with homogeneity of kernels. 

We can use these to give alternative reconstruction principles, making no reference to the CP axiom or even purification. Several of these are in a `quantum logic style', referring to kernels and their associated orthomodular lattices.

\begin{definition}[\textbf{Kernel Principles}]
We say that a compact dagger theory $\catC$ satisfies the \deff{kernel principles} \index{kernel principles} when it satisfies the following.
\begin{enumerate}[label=\arabic*)]
\item \label{enum:caus-comp}
$\catC$ has dagger kernels which are causal, homogeneous and causally complemented.
\item Every non-zero object has a state which is a dagger kernel.
\item 
The internal isomorphism property holds.
\item 
Conditioning holds.
\item \label{enum:ker-dil}
Every causal morphism has a dilation which is a dagger kernel.
\item \label{enum:everypure}
Every identity morphism is $\otimes$-pure. 
\item \label{enum:cov-law}
Each lattice $\DKer(A)$ satisfies the covering law.
\end{enumerate}

Alternatively, we'll see that one may replace the final condition by the following:
\begin{enumerate}[label=\arabic*')]
\setcounter{enumi}{6}
\item \label{enum:pure-under-comp}
Whenever $f, g$ are $\otimes$-pure morphisms then so is $g \circ f$.
\end{enumerate}
\end{definition}

\begin{example}
$\Quant{S}$ satisfies the kernel principles, for any phased field $S$.
\end{example}
\begin{proof}
By Example~\ref{example:MatofPhasedField} and the next result it suffices to verify the internal isomorphism property. Since all kernels are pure, it is then sufficient to show that every morphism $f$ in $\Mat_S$ with $\coker(f) = 0$ and $\ker(f) = 0$ is an isomorphism, which follows from standard linear algebra.
\end{proof}

\begin{theorem} \label{prop:Compare-Isom-Principles}
For any compact dagger theory $\catC$, the following are equivalent:
\begin{enumerate}[label=\arabic*., ref=\arabic*]
\item \label{enum:ker-princ}
The kernel principles;
\item \label{enum:alt-princ}
The kernel principles, replacing condition \ref{enum:cov-law} with \ref{enum:pure-under-comp};
\item \label{enum:opandisomprop}
The operational principles along with the internal isomorphism property. 
\end{enumerate}
Moreover, when these hold and $\catC(I,I)$ is bounded we have $\catC \simeq \Quant{S}$ for some phased field $S$. 
\end{theorem}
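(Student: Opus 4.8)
The plan is to prove the cycle of implications \ref{enum:ker-princ}$\,\Rightarrow\,$\ref{enum:alt-princ}$\,\Rightarrow\,$\ref{enum:opandisomprop}$\,\Rightarrow\,$\ref{enum:ker-princ}, and then to read off the concluding reconstruction from Theorem~\ref{thm:main-reconstruction} together with a short argument upgrading the phased ring of scalars to a phased field. The two workhorses for the cycle are Theorem~\ref{thm:DeduceCPM}, which shows that the internal isomorphism property plus homogeneous kernels (and a little more) forces the class $\catC_\pure$ of $\otimes$-pure morphisms to be an environment structure with essentially unique purifications --- i.e.\ it manufactures strong purification out of quantum-logic-style data --- and Theorem~\ref{thm:closedUnderComp-Equiv}, which equates the covering law on the lattices $\DKer(A)$ with closure of the pure morphisms under composition.

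For \ref{enum:opandisomprop}$\,\Rightarrow\,$\ref{enum:ker-princ} (and simultaneously \ref{enum:opandisomprop}$\,\Rightarrow\,$\ref{enum:alt-princ}): assume the operational principles together with the internal isomorphism property. By Theorem~\ref{thm:mainPropToQuantumCat} we may take $\catC \simeq \CPM(\catA)$ for a quantum category $\catA$, and Proposition~\ref{prop:OpTheoryProperties} delivers most of the kernel principles at once --- dagger kernels are causal $\otimes$-pure isometries, causal pure states are kernels and every non-zero object carries one, and homogeneity of kernels follows from essential uniqueness applied to the pure isometries $\ker(f)$. Causal complementation, conditioning and $\otimes$-purity of identities are among the operational principles; ``every causal morphism has a dagger-kernel dilation'' is its purification after passing to the image; and for the covering law we use that by Lemma~\ref{lem:OpPrinciplesPurityCollapse} the various purity notions coincide and $\catC_\pure$ is a monoidal subcategory, so pure morphisms are closed under composition and Theorem~\ref{thm:closedUnderComp-Equiv} supplies the covering law. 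The differing final clause of \ref{enum:alt-princ}, namely that closure under composition, has thereby also been verified.

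For \ref{enum:ker-princ}$\,\Rightarrow\,$\ref{enum:alt-princ} and \ref{enum:alt-princ}$\,\Rightarrow\,$\ref{enum:opandisomprop}: both axiom sets share the conditions on kernels (causal, homogeneous, causally complemented), kernel states on every non-zero object, the internal isomorphism property, conditioning, dagger-kernel dilations of causal morphisms, and $\otimes$-purity of identities; they differ only in whether the covering law or closure of the pure morphisms under composition is imposed. First one teases the background of Chapter~\ref{chap:principles} --- normalisation, pure exclusion, and the coincidence of the $\otimes$-pure, kernel-pure and $\DKer$-pure classes (Definition~\ref{def:kernelpure}, Lemma~\ref{lem:PureExclusion}, Proposition~\ref{prop:kerpure}, Lemma~\ref{lem:orthogonality}, Lemma~\ref{lem:Pure-Excl-Is-Easy}) --- out of the shared conditions; under \ref{enum:ker-princ}, Theorem~\ref{thm:closedUnderComp-Equiv} then converts the covering law into closure of $\otimes$-pure morphisms under composition, giving \ref{enum:alt-princ}. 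Granting that closure, $\catC_\pure$ is a dagger compact subcategory containing all isomorphisms and all kernels (the latter by Lemma~\ref{lem:helpful-tensPureKer} with Proposition~\ref{prop:kernels-in-compact}) in which every causal morphism is an isometry, so Theorem~\ref{thm:DeduceCPM} certifies $\catC_\pure$ as an environment structure with essentially unique purifications: this is strong purification, and the remaining operational principles (pure exclusion, kernels, conditioning) are the shared conditions or their direct consequences, so \ref{enum:opandisomprop} holds.

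Finally, with \ref{enum:opandisomprop} established and $R = \catC(I,I)$ bounded, Theorem~\ref{thm:main-reconstruction} yields an addition-preserving equivalence $\catC \simeq \Quant{S}$ for a phased ring $S$ with $R \simeq S^{\pos}$; it remains to see $S$ is a field. By Proposition~\ref{prop:Int-Isom-Deduce} the internal isomorphism property provides strongly compatible minimal dilations for effects, which in the compact setting extend to all morphisms (cf.\ the remark after Definition~\ref{def:min-dilation}), so Lemma~\ref{lem:internal-ext-isom} applies and every zero-monic scalar is invertible; since every non-zero scalar is zero-monic by zero-cancellativity (Proposition~\ref{prop:OpTheoryProperties}), $R$ is a semi-field and hence $S$ is a field by Lemma~\ref{lem:no_nontriv_subob}. (That $\Quant{S}$ conversely satisfies all three principle-sets for a phased field $S$ is Example~\ref{example:MatofPhasedField} with Proposition~\ref{prop:quant-to-theory}, but only the stated direction is needed.) The main obstacle I anticipate lies in \ref{enum:ker-princ}/\ref{enum:alt-princ}$\,\Rightarrow\,$\ref{enum:opandisomprop}: before the CP axiom is available one must carefully reconcile the three notions of purity and, above all, verify the hypothesis of Theorem~\ref{thm:DeduceCPM} that every causal $\otimes$-pure morphism is an isometry from the kernel-principle list alone; causal complementation and the dagger-kernel-dilation axiom are the levers, but it is the book-keeping around these that is delicate, whereas everything else is assembly of results already in hand.
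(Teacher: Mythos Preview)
Your plan is correct and matches the paper's proof closely: the cycle runs through Theorem~\ref{thm:closedUnderComp-Equiv} for \ref{enum:ker-princ}$\,\Leftrightarrow\,$\ref{enum:alt-princ} and Theorem~\ref{thm:DeduceCPM} for \ref{enum:alt-princ}$\,\Rightarrow\,$\ref{enum:opandisomprop}, with the phased-field upgrade coming from the fact that the internal isomorphism property forces every non-zero scalar to be invertible. Two small corrections are worth making. First, Lemmas~\ref{lem:orthogonality} and~\ref{lem:Pure-Excl-Is-Easy} both assume strong purification and so cannot be invoked in the \ref{enum:ker-princ}/\ref{enum:alt-princ}$\,\Rightarrow\,$\ref{enum:opandisomprop} direction before that is derived; Lemma~\ref{lem:PureExclusion} together with Proposition~\ref{prop:kerpure} (which you also cite) already give what is needed, the key step for $\otimes$-pure $\Rightarrow$ kernel-pure being that a causal $\otimes$-pure state $\rho$ has a kernel dilation which, by $\otimes$-purity, splits as $\rho\otimes\sigma$, whence Proposition~\ref{prop:kerpure}~\ref{enum:ker-pure-2-out-3} applies. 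Second, in \ref{enum:opandisomprop}$\,\Rightarrow\,$\ref{enum:ker-princ} ``passing to the image'' of a purification $g$ produces $\img(g)\colon\Img(g)\to B\otimes C$, which is not a dilation of the original $f\colon A\to B$; the paper instead takes the reversible dilation of Corollary~\ref{cor:rev-dilation}, which is a kernel since causal pure states are (Proposition~\ref{prop:OpTheoryProperties}~\ref{enumi:caus-pure-state-kernel}) and tensoring with identities and composing with isomorphisms preserve kernels. The crux you flag --- that every causal $\otimes$-pure morphism is an isometry --- is handled by exactly the kernel-dilation-plus-splitting argument just described (causal complementation is not actually needed there).
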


\begin{proof}
\ref{enum:ker-princ} $\iff$ \ref{enum:alt-princ}: Suppose that $\catC$ satisfies principles \ref{enum:caus-comp} -- \ref{enum:ker-dil}. Then every non-zero scalar $r$ is invertible, since by zero-cancellativity (Lemma~\ref{lem:dagzero}~\ref{enum:zero-mult}) we have $\coker(r) = \ker(r) = 0$, and so by the internal isomorphism property $r$ is an isomorphism. Hence in particular $\catC$ has normalisation.

Then by Proposition~\ref{prop:kerpure} any kernel-pure morphism is $\otimes$-pure. Conversely, we claim that any $\otimes$-pure morphism is kernel-pure. Thanks to normalisation, it suffices to show that any causal $\otimes$-pure state $\rho$ is kernel-pure. But since $\rho$ has a kernel dilation this follows from Proposition~\ref{prop:kerpure}~\ref{enum:ker-pure-2-out-3}. Hence by Theorem~\ref{thm:closedUnderComp-Equiv} conditions \ref{enum:pure-under-comp} and \ref{enum:cov-law} are equivalent.

\ref{enum:alt-princ} $\implies$ \ref{enum:opandisomprop}:
We've just seen that a morphism is $\otimes$-pure iff it is kernel-pure, in particular making pure exclusion hold and all pure morphisms closed under $\otimes$. Since every identity is $\otimes$-pure, so is every dagger (co)kernel.

 Finally by Theorem~\ref{thm:DeduceCPM} to deduce strong purification it suffices to show that every $\otimes$-pure causal morphism $f \colon A \to B$ is an isometry. But since $f$ has a kernel dilation, for some causal state $\psi$ we have that $f \otimes \psi$ is a dagger kernel. Since every dagger kernel is kernel-pure, then so is $\psi$ by Proposition~\ref{prop:kerpure}~\ref{enum:ker-pure-2-out-3}. But then $\psi$ is a dagger kernel and so an isometry, and then it follows that $f$ is also.

\ref{enum:opandisomprop} $\implies$ \ref{enum:alt-princ}:
If $\catC$ satisfies the operational principles then since all identity morphisms and kernels are $\otimes$-pure, homogeneity of kernels is a special case of essential uniqueness of purification. Moreover as in Corollary~\ref{cor:rev-dilation} every causal morphism has a reversible dilation which is in particular a kernel since every causal $\otimes$-pure state is.

The final statement follows from Theorem~\ref{thm:main-reconstruction} and Lemma~\ref{lem:no_nontriv_subob} along with the above fact that $\catC(I,I)$ is a semi-field.
\end{proof}

Using the relations between the various principles we gathered in Chapter~\ref{chap:principles}, it will be possible to put together numerous other reconstructions in a similar vein.

\chapter*{Outlook}  \addcontentsline{toc}{chapter}{Outlook} \markboth{Outlook}{Outlook}

The aim of this project was to develop a categorical approach to the study of operational theories of physics. In particular we wished to show that many of the features typically associated with general probabilistic theories may in fact be treated and understood in this elementary categorical framework, without requiring any of the usual technical assumptions relating to ordered vector spaces. 

 We saw how the framework of operational theories can be captured by basic categorical properties (Chapter~\ref{chap:OpCategories}) and related these with the usual formalism of categorical quantum mechanics (Chapter~\ref{chap:totalisation}). Numerous principles considered in the study of probabilistic theories were found to treatable categorically and typically even in the basic language of diagrams (Chapter~\ref{chap:principles}), along with a more novel account of superpositions (Chapter~\ref{chap:superpositions}).

 Most convincingly, we were able to use these to provide a reconstruction of (finite-dimensional) quantum theory itself (Chapter~\ref{chap:recons}), with principles and proof both given in the basic setting of dagger compact categories with discarding. To our knowledge this is the first quantum reconstruction which does not rely on any vector space assumptions from the outset. Other comparable results are due to Soler~\cite{soler1995characterization}, who reconstructs infinite-dimensional Hilbert space from its lattice of subspaces but does not include any compositional or measurement-based features, and Heunen who axiomatizes $\Hilb$~\cite{heunen2009embedding} but in terms of its own features rather than those of more `operational' categories such as $\HilbP$ or $\Quant{}$.

 Our results suggest many new potential avenues of research in the categorical study of operational theories; let us close by discussing a few.

\subsection*{Extending the notion of operational theory}
As mentioned there, it would be interesting to extend our approach in Chapter~\ref{chap:OpCategories} beyond tests simply having finitely many outcomes, to allow for tests of various types, such as real-valued ones with infinitely many outcomes.
At risk of a high level of abstraction, we suggested that this could be possible by viewing tests as arrows in some form of generalised multicategory. This should at least allow us to unify our two approaches to tests based on varying $(f_\x \colon A \to B_\x)_{\x \in X}$ or non-varying $(f_\x \colon A \to B)_{\x \in X}$ output systems.

\subsection*{Categorifying probabilistic results}
Combining the results of Chapters~\ref{chap:OpCategories} and~\ref{chap:principles} it should be possible to adapt many existing results and proofs about probabilistic theories into simple categorical ones.
Though we did not go into this in detail, it is routine to translate most arguments from e.g.~\cite{chiribella2010purification,PhysRevA.84.012311InfoDerivQT} into the setting of Chapter~\ref{chap:OpCategories} or of more general categories with discarding. There has been a history of such simplified categorical proofs in the literature, such as the categorical form~\cite{CoeckeNoBroadcasting} of the `No-Broadcasting' theorem~\cite{Barnum2007NoBroadcast,barnum1996noncommuting}.

\subsection*{Superpositions in operational theories}
 We introduced phased coproducts mainly to allow us to define the category $\plusI{\catC}$ for use in our reconstructions in Chapter~\ref{chap:recons}. Their applications to the study of superpositions in physical theories are promising and remain to be explored fully. 

From a mathematical perspective, we did not yet find many well-motivated examples of non-monoidal categories with phased coproducts; if these can be found then the one-to-one correspondence between phased coproducts and coproducts from Corollary~\ref{cor:phcoprodcorrespon} should be extended to this setting. It would also be interesting to compare them with other weak notions of limit, such as those in 2-categories~\cite{lack20102}. 

\subsection*{Reconstruction principles}
The principles used in our reconstruction were chosen to be as weak as possible while allowing for the result. It should be possible to find a smaller and more natural, though potentially stronger, set of assumptions as we began exploring at the end of Chapter~\ref{chap:recons}. 

\subsection*{Including classical systems}

The notion of purification we have considered applies only to categories like $\Quant{}$ which are $\otimes$-pure in our sense, with all identity morphisms being $\otimes$-pure, ruling out the inclusion of classical systems and biproducts. Eventually we should extend our approach to include such systems, and so potentially reach a reconstruction of (some generalisation of) $\FCStar$, recovering our current reconstruction by restricting to such `pure' objects.

Notions of purification which hold classically can be found in our concept of minimal dilations, the definition of purity due to Cunningham and Heunen~\cite{cunningham2017purity}, and in Selby, Scandolo and Coecke's reconstruction~\cite{selby2018reconstructing}. 

\subsection*{Purifying objects}

Related to the previous goal, it would be interesting to extend purification to objects. Given a (finite-dimensional) C*-algebra $\bigoplus^n_{i=1} B(\hilbH_i)$ this would return its smallest extension to a purely quantum algebra $B(\bigoplus^n_{i=1} \hilbH_i)$. Rather than using phased coproducts we could then simply describe $B(\hilbH \oplus \hilbK)$ by purifying the algebra $B(\hilbH) \oplus B(\hilbK)$.

\subsection*{Removing daggers}

The most significant open area left from our reconstruction lies in its extensive use of the dagger in $\Quant{}$. Though the dagger has a direct operational meaning on pure states, it lacks one for more general mixed states and processes, and this is reflected in its failure to exist in infinite dimensions. For our reconstruction to be as truly operational as those of e.g.~\cite{PhysRevA.84.012311InfoDerivQT,Hardy2011a} it should thus not require the dagger from the outset, instead being given simply in the language of monoidal categories with discarding. This could be achieved in at least two ways.
\begin{enumerate}
\item 
Avoiding any use of the dagger in our theory $\catC$ itself, and merely establishing its existence in the extended ring of scalars $S$. This would still allow us to define $\Quant{S}$ and its embedding into $\catC$.
\item 
Deriving the presence of the dagger from other more operational principles. Equipped with a suitable characterisation of the dagger on pure states, it should be possible to use purification and compactness to extend the dagger to all morphisms. Alternatively, we could aim to find conditions on a sub-causal category $\catC$ which ensure that its totalisation $\To(\catC)$ has a dagger, as we found for compactness in Theorem~\ref{thm:CompactInterp}.
\end{enumerate}

\subsection*{Towards infinite dimensions}

As well as the dagger, it should in fact be possible to derive compactness itself from principles such as purification. An ideal reconstruction would apply simply to monoidal categories which come with discarding and also a distinguished `maximally mixed' state on each object, from which the cup states arise by (minimal) purification:
\[
\scalebox{0.8}{\input{./figures/derive-cup.tikz}}
\]
 Avoiding compactness from the outset in this way should also allow for a reconstruction involving only the physically meaningful sub-causal processes, applicable for example to effectuses.
 
Finally, no longer assuming the presence of such maximally mixed states should yield axioms which hold even in infinite-dimensional settings such as $\CStarop$ and $\vNop$. Drawing on results such as our own, developments from effectus theory~\cite{EffectusIntro}, and Heunen's axiomatization of $\Hilb$~\cite{heunen2009embedding}, one day we can hope to arrive at such a reconstruction in infinite dimensions. This would be a major success for the categorical framework, being the first such result of this kind even under the usual assumptions of general probabilistic theories. 

Ultimately, such totally new results will be necessary to demonstrate that categorical methods have a role to play in the physics of tomorrow.

\chapter*{Index of Categories}\addcontentsline{toc}{chapter}{Index of Categories}
  \markboth{Index of Categories}{Index of Categories}

\begin{tabular}{l l r}
\emph{Notation} & \emph{Description} & \emph{Page} \\ 

$\KlDT$ & Sets and $\Rpos$-distributions & \pageref{cat:KlDT}\\

$\KlSD$ & Sets and sub-distributions & \pageref{cat:KlsD}\\


$\CStarop$ & C*-algebras and completely positive maps (opposite direction) & \pageref{cat:cstarop} \\ 

$\CStarSUop$ & Subcategory of sub-unital morphisms in $\CStarop$ & \pageref{cat:cstarsu}\\ 

$\CStarUop$ & Subcategory of unital morphisms $\CStarop$ & \pageref{cat:cstaru}\\ 

$\DCM$ & Commutative monoids with specified downset & \pageref{cat:DCM} \\ 

$\Class$ & $\Rpos$-valued matrices (finite classical physics) & \pageref{cat:class} \\ 

$\FCStar$ & Finite-dimensional C*-algebras and completely positive maps & \pageref{cat:FCStar} \\ 

$\FHilb$ & Finite-dimensional Hilbert spaces and linear maps & \pageref{cat:fhilb} \\ 

$\FHilbP$ & $\FHilb$ modulo global phases & \pageref{cat:fhilbP} \\ 

$\FVeck$ & Finite-dimensional vector spaces over $\fieldk$ & \pageref{cat:FVeck}\\


$\Hilb$ & Hilbert spaces and continuous linear maps & \pageref{cat:hilb}\\ 

$\HilbP$ & $\Hilb$ modulo global phases & \pageref{cat:hilbP}\\

$\KlD$ & Kleisli category of distribution monad & \pageref{cat:KlD}\\

$\KlMn$ & Kleisli category of multiset monad & \pageref{cat:KlMn}\\

$\MatS$ & Matrices over semi-ring $S$ & \pageref{cat:Mats}\\

$\Mat_{S^{\leq 1}}$ & Matrices with values in $S^{\leq 1}$ & \pageref{cat:MatSless}\\ 

$\SpekOrMSpek$ & Spekkens toy model (resp.~including mixtures)& \pageref{cat:mspek}\\

$\OpCats$ & Operational categories & \pageref{cat:OpCats}\\


$\OTCatOrRep$ & (Representable) operational theories & \pageref{cat:OTCat}\\


$\PCM$ & Partial commutative monoids & \pageref{cat:PCM} \\ 

$\PCMD$ & Sub-causal categories & \pageref{cat:PCMD} \\  

$\PFun$ & Sets and partial functions & \pageref{cat:pfun} \\ 

$\VecProj$ & $\fieldk$-Vector spaces modulo global non-zero scalars & \pageref{cat:vecproj}\\

$\Quant{}$ & Finite-dimensional Hilbert spaces and completely positive maps & \pageref{cat:quant}\\

$\QuantSU$ & F.d.~Hilbert spaces and trace non-increasing c.p.~maps & \pageref{cat:quantsu}\\

$\Quant{S}$ & Quantum theory over involutive semi-ring $S$ & 
\pageref{cat:QuantS}\\


$\Rel$ & Sets and relations & \pageref{cat:rel}\\

$\Rel(\catC)$ & Relations in regular category $\catC$ & \pageref{cat:relC}\\

$\Rep(G)$ & Unitary representations of $G$ and intertwiners & \pageref{cat:repg}\\

$\Set$ & Sets and functions & \pageref{cat:set}\\


$\TestCats$ & Test categories & \pageref{cat:testcats}\\

$\CMonD$ & Categories with addition and discarding & \pageref{cat:CMonD} \\ 

$\Veck$ & Vector spaces over $\fieldk$ and linear maps & \pageref{cat:veck}\\

$\VecP$ & $\VecC$ modulo global phases & \pageref{cat:vecP}\\

$\vNop$ & von Neumann algebras, normal c.p.~maps (opposite direction) & \pageref{cat:vnop} \\ 

$\vNSUop$ & Subcategory of $\vNop$ of sub-unital morphisms & \pageref{cat:vnsu}\\

$\vNUop$ & Subcategory of $\vNop$ of unital morphisms & \pageref{cat:vnU} \\ 

\end{tabular}

\twocolumn

\chapter*{Index of Notation}  \addcontentsline{toc}{chapter}{Index of Notation}
\markboth{Index of Notation}{Index of Notation}

\noindent
$\catC, \catD \dots$, 
categories, \pageref{not:category} 

\noindent
$A, B, C \dots$, 
objects, \pageref{not:ob} 

\noindent
$f \colon A \to B$,
morphism, \pageref{not:morphism}  

\noindent
$\catC \simeq \catD$
equivalence of categories, \pageref{not:equivalence}

\noindent
$\otimes$, 
monoidal tensor, \pageref{not:moncat} 

\noindent
$I$, monoidal unit, \pageref{not:unitobject}  

\noindent
$\alpha, \rho, \lambda, \sigma$, coherence morphisms, \pageref{not:coherenceiso}, \pageref{not:swap} 

\noindent
$\discard{}$,  discarding effect, \pageref{not:discarding} 

\noindent
$\catC_\causal$,  causal subcategory ,\pageref{not:causalsubcat}

\noindent
$\ts{f_\x}_{\x \in \X}$, partial test, \pageref{not:test}, \pageref{not:test2}

\noindent
$0 \colon A \to B$,  zero morphism, \pageref{not:zeromorphism} 

\noindent
$\Events_\Theta$, category of events, \pageref{not:catevents}

\noindent
$0$, initial or zero object, \pageref{not:initialobject} 

\noindent
$1$,  terminal object, \pageref{not:terminal}  

\noindent
$! \colon 0 \to A$, initial object morphism, \pageref{not:initialobject}

\noindent
$! \colon A \to 1$, terminal object morphism, \pageref{not:terminal} 

\noindent
$f \ovee g$,  coarse-graining, \pageref{not:coarse-grain}, \pageref{not:coarse-grain2} 

\noindent
$f \ovee g$, PCM addition, \pageref{not:PCM}, \pageref{not:PCMenriched}   

\noindent
$X \cdot A$, $n \cdot A$,  copower, \pageref{not:copower}, \pageref{not:copowerOfI} 

\noindent
$A + B$,  coproduct, \pageref{not:coproduct} 

\noindent
$[f,g]$,  cotuple, \pageref{not:cotuple} 

\noindent
$f + g$,  diagonal morphism, \pageref{not:diagonal}  

\noindent
$f + g$,  addition of morphisms, \pageref{not:addition}  

\noindent
$\coproj_i$,  coprojection, \pageref{not:coprojection}, \pageref{not:phcoproj} 

\noindent
$\triangleright_i$,  projection from coproduct, \pageref{not:coprodproj} 

\noindent
$\PTest(\Theta)$, category of partial tests, \pageref{not:PtestTheta} 

\noindent
$\Test(\Theta)$,  category of tests, \pageref{not:testcat} 

\noindent
$\Theta^\otcplus$,  representable completion, \pageref{not:repcompletion} 

\noindent
$e^\bot$,  complement of effect $e$, \pageref{not:complements} 

\noindent
$k^\bot$,  complement of dagger kernel, \pageref{not:orthocomp} 

\noindent
$f \colon A \partialto B$,  morphism in $\Partial(\catB)$, \pageref{not:partialarrow}

\noindent
$A \biprod B$,  biproduct, \pageref{not:biproduct} 

\noindent
$\pproj_i$, projection, \pageref{not:projection}, \pageref{not:phbiprod} 

\noindent
$\catC^\oplus$,  biproduct completion, \pageref{not:biproductcompletion}  

\noindent
$\catC_\subcausal$,  sub-causal subcategory, \pageref{not:subcausalsubcat} 

\noindent
$\bot$,  summable elements of PCM, \pageref{not:compatPCM}  

\noindent
$\To(\catC)$,  totalisation of category, \pageref{not:totalisationcat} 

\noindent
$A^*$,  dual object, \pageref{not:dualobject} 

\noindent
$\varepsilon, \eta$,  dual pair, \pageref{not:cupcap} 

\noindent
$\dagger$,  dagger, \pageref{not:daggercat}

\noindent
$\dagger$, involution of monoid, semi-ring, \pageref{not:involsring} 

\noindent
$\discardflip{}$,  dagger of discarding, \pageref{not:upsidediscard} 

\noindent
$\CPM(\catC)$,  CPM construction, \pageref{not:CPM}

\noindent
$\Dbl{\catC}$,  `doubled' CPM subcategory, \pageref{not:doublingfunctor}, \pageref{not:doublesubcat} 

\noindent
$\leq$,  order in ordered theory, \pageref{not:ordertheory} 

\noindent
$\mdil{f}$,  minimal dilation, \pageref{not:mindil} 

\noindent
$\ker(f)$,  kernel, \pageref{not:kernel} 

\noindent
$\coker(f)$,  cokernel, \pageref{not:cokernel} 

\noindent
$\img(f)$,  image, \pageref{not:image} 

\noindent
$\coim(f)$,  coimage, \pageref{not:coimage} 

\noindent
$\DKer$,  dagger kernels, \pageref{not:dker}

\noindent
$\preceq_F$,  face pre-order, \pageref{not:facepreorder} 

\noindent
$\preceq_K$,  pre-order of kernel inclusion, \pageref{not:ker-preorder} 

\noindent
$\idob_\rho$, $\enc_\rho$, $\dec_\rho$,  ideal compression, \pageref{not:idealcomp} 

\noindent
$\catC_\prepure$, class of morphisms, \pageref{not:prepuresubcat}, 

\noindent
$\catC_\prepure$, environment structure,  \pageref{not:envstruc}

\noindent
$\catC_\pure$,  pure morphisms, \pageref{not:puresubcat} 

\noindent
$A \pcoprod B$,  phased coproduct, \pageref{not:phcoprod} 


\noindent
$A \pprod B$,  phased product, \pageref{not:phproduct} 

\noindent
$\TrI_{A}$,  group of trivial isomorphisms, \pageref{not:trivisom} 

\noindent
$[f]_\tc$,  equivalence class, \pageref{not:equivclass} 

\noindent
$\quot{\cta}{\sim}$,  quotient category, \pageref{not:quotcat} 

\noindent
$\plusI{\ctb}$,  GP construction, \pageref{not:GP} 

\noindent
$\plusIdag{\ctb}$,  dagger GP construction, \pageref{not:GPdag} 

\noindent
$\obb{A}, \obb{B}, \dots$,  objects in $\plusI{\catB}$, \pageref{not:GP}  

\noindent
$\mathbb{P}$,  global phase group, \pageref{not:globalphases} 

\noindent
$\cta_\quotP$,  quotient by global phases, \pageref{not:quotglobphase} 

\noindent
$A \pbiprod B$,  phased biproduct, \pageref{not:phbiprod} 

\noindent
$S^\pos$,  positive elements, \pageref{not:poselements} 

\noindent
$S^\sa$,  self-adjoint elements, \pageref{not:sadjoint} 

\noindent
$\dring{R}$,  difference ring, \pageref{not:differencering} 

\noindent
$R[i]$,  adjoin element $i$ to ring $R$, \pageref{not:adjoin-i}

\onecolumn

\addcontentsline{toc}{chapter}{Index of Subjects}
\printindex{}

\addcontentsline{toc}{chapter}{Bibliography}
\bibliography{thesis-bib}       
\bibliographystyle{alpha}

\end{document}